\title{\large{\bf The Functional Analysis of Quantum
  Information Theory} \\
  \vspace{4mm}
\small{\it --- a collection of notes based on lectures by \\
Gilles Pisier, K.~R.~Parthasarathy, Vern Paulsen and Andreas Winter}}
\author{Ved Prakash Gupta, Prabha Mandayam and V.~S.~Sunder}
\date{}
\newtheorem{thm}{Theorem}[section]
\newtheorem{lem}[thm]{Lemma}
\newtheorem{cor}[thm]{Corollary}
\newtheorem{prop}[thm]{Proposition}
\newtheorem{defn}[thm]{Definition}
\newtheorem{rem}[thm]{Remark}
\newtheorem{exer}[thm]{Exercise}
\newtheorem{example}[thm]{Example}
\newtheorem{exam}[thm]{Example}
\newtheorem{obs}[thm]{Observation}
\newtheorem{ques}[thm]{Question}
 \newenvironment{pf}{\noindent{\em Proof:}}{\hfill
   $\Box$\vspace*{1mm}}
 \newenvironment{proof}{\noindent{\em Proof:}}{\hfill
   $\Box$\vspace*{1mm}}
\newcommand{\comments}[1]{}
\newcommand{\ra}{\rightarrow}
\newcommand{\mbb}{\mathbb}
\newcommand{\mcal}{\mathcal}
\newcommand{\N}{\mathbb N}
\newcommand{\C}{\mathbb{C}}
\newcommand{\F}{\mathbb{F}}
\newcommand{\Id}{\mathbb{I}}
\newcommand{\cA}{\mathcal{A}}
\newcommand{\cB}{\mathcal{B}}
\newcommand{\cC}{\mathcal{C}}
\newcommand{\cD}{\mathcal{D}}
\newcommand{\cF}{\mathcal{F}}
\newcommand{\cH}{\mathcal{H}}
\newcommand{\cI}{\mathcal{I}}
\newcommand{\cJ}{\mathcal{J}}
\newcommand{\cK}{\mathcal{K}}
\newcommand{\cL}{\mathcal{L}}
\newcommand{\cM}{\mathcal{M}}
\newcommand{\cN}{\mathcal{N}}
\newcommand{\cO}{\mathcal{O}}
\newcommand{\cP}{\mathcal{P}}
\newcommand{\cS}{\mathcal{S}}
\newcommand{\cT}{\mathcal{T}}
\newcommand{\cU}{\mathcal{U}}
\newcommand{\tr}{\text{Tr}}
\newcommand{\ot}{\otimes}
\newcommand{\ptimes}[1]{\hat{\otimes}{#1}}
\renewcommand{\tilde}[1]{\widetilde{#1}}
\renewenvironment{frontmatter}{\pagenumbering{roman}}{\clearpage\pagenumbering{arabic}}
\begin{document}

\maketitle

\begin{frontmatter}

\chapter*{Preface}

\bigskip \noindent These notes grew out of a two-week workshop on {\em The Functional Analysis of
Quantum Information Theory} that was held at the Institute of
Mathematical Sciences, Chennai during 26/12/2011-06/01/2012. This was
initially the brain-child of Ed Effros, who was to have been one of
the four principal speakers; but as things unfolded, he had to pull
out at the eleventh hour due to unforeseen and unavoidable
circumstances. But he mentored us through the teething stages
with suggestions on possible and desirable substitutes. After a few
hiccups, we arrived at a perfectly respectable cast of characters,
largely owing to Prof. K.R. Parthasarathy agreeing to fill the breach to the extent
that his somewhat frail health would permit. While everybody else had
been asked to give five $90$ minute lectures each, he agreed gamely
to give three $60$ minute lectures, which were each as delightful as
one has expected his lectures to be.

\bigskip \noindent The three other speakers were Gilles Pisier, Vern Paulsen and Andreas
Winter. Given the impeccable clarity in their lectures, it was not
surprising that the workshop had a substantial audience (a mixed bag of
mathematicians, physicists and theoretical computer scientists) for the entire
two weeks, and several people wanted to know if the proceedings of the
workshop could be published. Given the wide scope of the problems
discussed here, we hope these notes will prove useful to students and research scholars across all three
disciplines.

\bigskip \noindent The quite non-trivial and ambitious task of trying to put
together a readable account of the lectures was taken upon by the
three of us, with periodic assistance from Madhushree Basu and Issan
Patri. When we finally finished a first draft some 28 months later and
solicited permission from the speakers for us
to publicise this account, they responded quite promptly and
positively, with Vern even offering to slightly edit and expand `his'
part and Gilles uncomplainingly agreeing to edit the last few pages of
`his' part.

\bigskip \noindent As even a casual reading of these notes will show, different parts
have been written by different people, and some material has been
repeated (the Kraus decomposition for instance). Further the schism in
the notational conventions of physicists and mathematicians (eg., in which of its two arguments the inner product is (conjugate) linear), reflects which convention which speaker followed! This schism was partly because we did not want to introduce new errors by trying to adopt a convention other than in the notes taken from the lectures.
Thus, this must be viewed not as a textbook in a polished and complete form, but rather as a transcription of notes derived from lectures, reflecting the spontaneity of live classroom interactions and conveying the enthusiasm and intuition for the subject. It goes without saying that any shortcomings in these notes are to be attributed to the scribes, while anything positive should be credited to the speakers.




\newpage
\begin{center}
{\bf Acknowledgements}
\end{center}

\bigskip \noindent Firstly, we should  express our gratitude to so many people in the institute (IMSc) who helped in so many ways to make the workshop as successful as it was,
not least our genial Director Balu for generously funding the whole enterprise. We should also thank our systems people who cheerfully videographed all the lectures, these videos being made available as a You-tube playlist at http://www.imsc.res.in/~sunder/faqit.html; and the reader is strongly urged to view these videos in parallel with reading these notes, to really get the full benefit of them.

\bigskip \noindent
PM was a post-doctoral fellow with the quantum information group at IMSc during the writing of this manuscript. She would like to thank IMSc and the quantum information group in particular for the opportunity to participate in the workshop and be a part of this effort.

\bigskip \noindent
VSS also wishes to explicitly thank Dr. S. Ramasami, the former Secretary of the Department of Science and Technology, for having gladly, promptly and unconditionally permitted the use of the generous grant (the J.C. Bose Fellowship) he receives from them in a wonderful example of the `reasonable accommodation' that the UN Convention on the Rights of Persons with Disability speaks about, which has permitted him to travel when needed and be a functional mathematician in spite of health constraints.

\tableofcontents

\end{frontmatter}


\chapter{Operator Spaces}

\bigskip \noindent
{\bf Chapter Abstract:}

 \bigskip \noindent
{\it Starting from the definitions of operator spaces and operator systems (the natural ambience to discuss the notions of complete boundedness and complete positivity, respectively) this chapter, which is based on lectures by Gilles Pisier, quickly gets to the heart of the matter by discussing several closely related theorems -  Stinespring's theorem on dilation of completely positive maps, Arveson's extension theorems (for completely bounded as well as completely positive maps), related results by Haagerup and Paulsen, Nachbin's Hahn-Banach theorem  - as more or less different perspectives of the same result; we also see the power/use of the clever matrix trick relating positivity of  self-adjoint $2 \times 2$ block matrices and operator norm relations between its entries. After discussing the classical row (R) and column (C) Hilbert spaces, Schur multipliers etc., we proceed with Ruan's axioms for an `abstract' operator space, applications of the formalism to highlight analogy with classical Banach space theory (e.g.: $R^* \cong C$), {\rm min } and {\rm max } operator space structures, tensor products of $C^*$-algebras, nuclear pairs (whose tensor products have a unique `$C^*$-cross-norm'), and the chapter concludes with Kirchberg's theorem on ($C^*(\F_n), B(H)$) being a nuclear pair.}

\section{Operator spaces}
\begin{defn}
A closed subspace $E \subset B(H)$ for some Hilbert space $H$ is
called an operator space\index{operator space}.
\end{defn}
The requirement of `closed'-ness is imposed because we want to think
of operator spaces as `quantised (or non-commutative) Banach
spaces'. This assumption ensures that operator spaces are indeed
Banach spaces with respect to the CB-norm (see Definition \ref{cbdef}
and the subsequent remarks).  Conversely every Banach space can be
seen to admit such an embedding.

\noindent ({\em Reason:} If $E$ is a Banach space, then the unit ball
of $E^*$, equipped with the weak$^*$-topology is a compact Hausdorff
space (by Alaoglu's theorem), call it $X$; and the Hahn-Banach theorem
shows that $E$ embeds isometrically into $C(X)$. Finally, we may use
an isometric representation of the $C^*$-algebra $C(X)$ on some
Hilbert space $H$; and we would finally have our isometric embedding
of $E$ into $B(H)$.)

Note that, for each Hilbert space $H$, there is a natural
identification between $M_n(B(H))$ and $B(H^{(n)})$, where $H^{(n)}:=
\underbrace{H \oplus \cdots \oplus H}_{n-\text{copies}}$. For each
operator space $E \subset B(H)$, and $n \geq 1$, let $$M_n(E)=\{ [
  a_{ij} ] : a_{ij} \in E, 1\leq i, j \leq n \} \subset M_n(B(H)) .$$
Then, each $M_n(E)$ inherits a norm, given by
\[
||a||_n := ||a||_{B(H^{(n)})} = \sup \left\{ \Big(\sum_{i=1}^n
\Big\| \sum_{j=1}^n a_{ij}(h_j)\Big\|^2\Big)^{1/2} : h_j \in H,
\sum_{j=1}^n||h_j||^2 \leq 1 \right\}
\]
for all $a=[a_{ij}] \in M_n(E)$. In particular, each operator space $E
\subset B(H)$ comes equipped with a sequence of norms $(||\cdot||_n ~(\mbox{on } M_n(E)),
n\geq 1)$.

\subsection{Completely bounded and completely positive maps}
For a linear map $u: E \rightarrow F$ between two operator spaces, for
each $n \geq 1$, let $u_n : M_n(E) \rightarrow M_n(F)$ be defined by
$u_n([a_{i,j}])=[u(a_{i,j})]$ for all $[a_{ij}] \in M_n(E)$.  .

\begin{defn}\label{cbdef}
A linear map $u: E \rightarrow F$ is called {\bf completely bounded (CB)} \index{completely! bounded (CB)}
if $||u||_{cb} := \sup_{n \ge 1} ||u_n|| < \infty$. Write $CB(E,F) =
\{u:E \xrightarrow{CB} F\}$ and equip it with the CB norm $\| \cdot
\|_{cb}$.
\end{defn}

Note that $||u|| = ||u||_1 \le ||u||_{cb}$; so, any Cauchy sequence in
$CB(E,F)$ is also a Cauchy sequence in $B(E,F)$ and hence one deduces
that $CB(E,F)$ is a Banach space (as $B(E,F)$ is a Banach space).
Complete boundedness has many properties similar to boundedness, for
example: $||vu||_{cb} \le ||v||_{cb}||u||_{cb}$. We shall see latter
that $CB(E,F)$ also inherits an appropriate operator space structure.

\subsection{Operator systems}

\begin{defn}
A (closed) subspace $S \subset B(H)$ for some Hilbert space $H$ is
called an {\bf operator system} \index{operator system}, if
\begin{itemize}
 \item $1 \in S$
 \item $x \in S \Rightarrow x^* \in S$.
\end{itemize}
\end{defn}

Note that if $S \subset B(H)$ is an operator system, then $S_+ := S
\cap B(H)_+$ linearly spans $S$ since if $a \in S_h = \{x \in S:
x=x^*\}$, then $a = ||a||1 - (||a||1 - a)$. Also, if $S$ is an
operator system, so also is $M_n(S) ~(\subset B(H^{(n)}))$.

\begin{defn}\label{def:k-positive}
If $S$ is an operator system, a linear mapping $u:S \rightarrow B(K)$
is said to be:
\begin{itemize}
 \item {\bf  positive} if $x \in S_+  \Rightarrow u(x) \in B(K)_+$.
 \item {\bf  $n$-positive} if $u_n:M_n(S) \rightarrow M_n(B(K))$ is positive.
 \item {\bf completely positive} (CP)  \index{completely! positive (CP)} if it is $n$-positive for all $n
   \in \mathbb{N}$.
\item {\bf completely contractive} \index{completely! contractive} if $\|u_n\| \le 1 ~\forall n\in \N$
\end{itemize}
\end{defn}

\subsection{Fundamantal Factorisation of CB maps}

Our goal, in this section, is to prove the following fundamental
factorisation theorem:

\begin{thm}[Fundamental Factorisation]\label{ff} \index{factorisation! fundamental} \index{Theorem! Stinespring's} Given an operator space $E$
 contained in a unital $C^*$-algebra $A$, the following conditions on
 a linear map $u:E \rightarrow B(K)$ are equivalent:
 \begin{enumerate}
 \item $u \in CB(E,B(K))$ and $||u||_{cb} \le 1$;
 \item there is a Hilbert space $\hat{H}$, a unital
   $\ast$-representation $\pi:A \rightarrow B(\hat{H})$ and linear
   maps $V, W :K \rightarrow \hat{H}$ with $||V||, ||W|| \le 1$ such
   that $ u(a)=V^*\pi(a)W$ for all $a \in E$.
\end{enumerate}
Moreover if $E$ is an operator system then $u$ is CP if and only if it
admits such a factorisation with $V=W$.
\end{thm}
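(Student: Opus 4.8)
The plan is to prove the two implications of the equivalence separately and then treat the ``moreover'' clause, the essential inputs being Stinespring's dilation theorem, Arveson's extension theorem, and the $2 \times 2$ matrix trick highlighted in the chapter abstract. The implication $(2) \Rightarrow (1)$ is a routine amplification estimate: if $u(a) = V^* \pi(a) W$, then for $a = [a_{ij}] \in M_n(E)$ one checks that $u_n(a) = (I_n \ot V)^* \pi_n(a) (I_n \ot W)$, where $\pi_n : M_n(A) \to M_n(B(\hat{H}))$ is the $\ast$-homomorphism $[x_{ij}] \mapsto [\pi(x_{ij})]$ and $I_n \ot V, I_n \ot W : K^{(n)} \to \hat{H}^{(n)}$ are the diagonal amplifications, of norm $\le 1$. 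Since $\pi_n$ is a unital $\ast$-homomorphism it is contractive, so $\|u_n(a)\| \le \|V\|\,\|a\|_n\,\|W\| \le \|a\|_n$, giving $\|u\|_{cb} \le 1$.

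For the hard implication $(1) \Rightarrow (2)$ I would use Paulsen's operator-system trick. Introduce the operator system
\[
S_E = \left\{ \begin{pmatrix} \lambda 1 & a \\ b^* & \mu 1 \end{pmatrix} : \lambda, \mu \in \C,\ a, b \in E \right\} \subseteq M_2(A),
\]
which contains the identity and is $\ast$-closed, and define $\Phi_u : S_E \to M_2(B(K)) = B(K \oplus K)$ by
\[
\Phi_u \begin{pmatrix} \lambda 1 & a \\ b^* & \mu 1 \end{pmatrix} = \begin{pmatrix} \lambda I_K & u(a) \\ u(b)^* & \mu I_K \end{pmatrix}.
\]
The crux of the whole argument is the lemma that $\Phi_u$ is completely positive if and only if $\|u\|_{cb} \le 1$. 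This is exactly where the matrix trick enters: after the canonical shuffle identifying $M_n(M_2(\,\cdot\,))$ with $M_2(M_n(\,\cdot\,))$, $n$-positivity of $\Phi_u$ becomes the assertion that a self-adjoint $2 \times 2$ block matrix with scalar diagonal blocks is positive, and by the block-positivity criterion this is controlled precisely by the norm of its off-diagonal corner, i.e. by $\|u_n\| \le 1$. Establishing this equivalence cleanly, tracking the shuffle at every amplification level, is the step I expect to be the main obstacle.

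Granting the lemma, $\Phi_u$ is a unital completely positive map on $S_E \subseteq M_2(A)$. By Arveson's extension theorem it extends to a completely positive $\tilde{\Phi} : M_2(A) \to M_2(B(K))$, and Stinespring's theorem on the $C^*$-algebra $M_2(A)$ yields a Hilbert space $\hat{\mathcal{H}}$, a unital $\ast$-representation $\Pi : M_2(A) \to B(\hat{\mathcal{H}})$ and $\mathcal{V} : K \oplus K \to \hat{\mathcal{H}}$ with $\tilde{\Phi}(X) = \mathcal{V}^* \Pi(X) \mathcal{V}$; since $\tilde{\Phi}$ is unital, $\mathcal{V}^*\mathcal{V} = I$, so $\mathcal{V}$ is an isometry. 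It remains to read off the factorisation from the corners. Writing $e_{ij}$ for the matrix units of $M_2$, the projections $P_i = \Pi(1 \ot e_{ii})$ split $\hat{\mathcal{H}} = \hat{H}_1 \oplus \hat{H}_2$, the operator $\Pi(1 \ot e_{12})$ restricts to a unitary $\hat{H}_2 \to \hat{H}_1$, and $\pi(a) := \Pi(a \ot e_{11})|_{\hat{H}_1}$ defines a unital $\ast$-representation of $A$ on $\hat{H} := \hat{H}_1$. Extracting the $(1,2)$-corner of $\tilde{\Phi}(a \ot e_{12}) = \mathcal{V}^* \Pi(a \ot e_{12}) \mathcal{V}$ and using $\Pi(a \ot e_{12}) = \Pi(a \ot e_{11})\Pi(1 \ot e_{12})$ then gives $u(a) = V^* \pi(a) W$ with $V = P_1 \mathcal{V} \iota_1$ and $W = \Pi(1 \ot e_{12}) P_2 \mathcal{V} \iota_2$ (where $\iota_i : K \to K \oplus K$ are the coordinate inclusions), both of norm $\le \|\mathcal{V}\| \le 1$.

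Finally, for the ``moreover'' clause, suppose $E$ is an operator system. If $u$ is completely positive then, bypassing the Paulsen trick, I would apply Arveson's extension theorem directly to extend $u$ to a completely positive map $\tilde{u} : A \to B(K)$ and then Stinespring to $\tilde{u}$, obtaining $u(a) = V^* \pi(a) V$ with a single $V$; here $\|V\|^2 = \|V^* V\| = \|u(1)\| \le \|u\|_{cb} \le 1$, so $\|V\| \le 1$. Conversely, a factorisation $u(a) = V^* \pi(a) V$ is manifestly completely positive, being the composition of the $\ast$-representation $\pi$ (hence completely positive) with the completely positive map $X \mapsto V^* X V$; this settles both directions of the equivalence.
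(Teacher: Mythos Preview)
Your proposal is correct and follows essentially the same route as the paper: the Paulsen $2\times 2$ operator-system trick (your lemma that $\Phi_u$ is CP iff $\|u\|_{cb}\le 1$ is exactly the paper's Theorem~\ref{cbcp2}), followed by Arveson extension to $M_2(A)$, Stinespring dilation, and extraction of the factorisation from the corners via the matrix units $e_{ij}$. The treatment of the ``moreover'' clause via direct Arveson extension and Stinespring on $A$ also matches the paper's one-line justification.
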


There are many names and a long history associated with this
fact. Initially, Stinespring established his version of this in 1955
as a factorisation of CP maps defined on $C^*$-algebras, making it
appear as an operator analogue of the classic GNS construction
associated to a state on a $C^*$-algebra. Then Arveson proved his
extension theorem  in 1969 - to the effect that a CP map on an operator
system can be extended to a CP map on a $C^*$-algebra containing it
(thereby generalizing Stinespring's theorem so as to be valid for CP
maps on operator sytems). The final full generalization to CB maps on
operator spaces, which may be attributed to several authors -
Wittstock, Haagerup, Paulsen - came around $1981$. We shall give a
proof of the theorem which exhibits it as an extension theorem. But
first, some examples:

\begin{example}
  The `transpose' mapping $M_n \xrightarrow{T} M_n$, given by
  $T(a)=a^t$, is CB with $||T||=1$ but $||T||_{cb}=n$.
\end{example}

\begin{proof}
 Let $e_{i,j}$ be the matrix units of $M_n$. Consider the matrix $a
 \in M_n(M_n)$ with $(i,j)$-th entry given by $e_{j,i}$. It is not
 hard to see that $a$ is a permutation matrix, hence unitary and
 $||a||=1$.

Also, $\frac{T_n(a)}{n}$ is seen to be the projection in $M_n(M_n)$
onto the one-dimensional subspace spanned by the vector $v=\ e_1 +
e_{n+2} + e_{2n+3} + \cdots + e_{n^2}$ in $\ell_2^{n^2}$, where
$\{e_i\}$ is the standard orthonormal basis of $\ell_2^{n^2}$. Hence
$||T_n(a)||=n$. In particular, $$||T||_{cb} \ge ||T_n|| \ge
\frac{||T_n(a)||}{||a||} = n.$$

Conversely, consider the subalgebra $\Delta_n \subset M_n$ of $n
\times n$ diagonal matrices. Let $E:M_n \rightarrow \Delta_n$ be the
       {\em conditional expectation} \index{conditional expectation} defined by
       $E(a)=E([a_{i,j}])=\mathrm{diag} (a_{1,1}, \ldots, a_{n,n})$ -
       which is easily verified to satisfy $E(dad') = d E(a) d'$ for
       all $d,d' \in \Delta_n$ and $a \in M_n$. Consider the
       permutation matrix
\[ u = \begin{pmatrix}
0 & 0 & \cdots  & 0 & 1\\
1 & 0 & \cdots & 0 & 0 \\
0 & 1 & \cdots & 0 & 0 \\
\vdots & \vdots & \ddots & \vdots & \vdots \\
0 & 0 & \cdots & 1 & 0
\end{pmatrix}~.\]
A little calculation shows that $E(au^k) =\mathrm{diag} (a_{1,k+1},
a_{2,k+2},\ldots, a_{n,n})$, and hence $E(au^k)u^{-k}$ has $a_{i,k+i}$
at $(i,k+i)^{th}$ position for each $1 \le k \le n, 1 \le i \le n$,
(indices are modulo $n$) as its only non-zero entries. Thus
$a=\sum_{k=1}^{n}E(au^k)u^{-k}$ and $a^t=\sum_{k=1}^nu^kE(au^k)$.

Define $u_l,u_r:M_n \rightarrow M_n$ by $u_l(a) = ua$ and $u_r(a) =
au$, respectively. Then $u_l^k$ and $u_r^k$ are isometries for all $k$.
As $(u_l)_n$ (resp., $(u_r)_n$) is seen to be the map of $M_n(M_n)$
given by left- (resp., right-) multiplication by the block-diagonal
unitary matrix with all diagonal blocks being given by $u$, we
conclude that $u_l,u_r$ are CB maps. (They are in fact {\em complete
  isometries}.) \index{complete! isometry}

Moreover, a conditional expectation is a CB map (since $E:M_n
\rightarrow \Delta_n$ being a conditional expectation implies that,
for all $k \in \mathbb{N}$, $E_k:M_k(M_n) \rightarrow M_k(\Delta_n)$
is also a conditional expectation whence a projection and $ ||E_k||
\le 1$ for all $k$).

Hence the equation $T = \sum_{k=1}^nu_r^k \circ E \circ u_l^k$
expresses $T$ as the sum of $n$ {\em complete contractions} thereby showing that also $\|T\|_{cb} \leq n$, whereby $\|T\|_{cb} = n$.
\end{proof}

\begin{example}
 It follows from the previous example that the transpose mapping, when
 thought of as a self-map $T:K(\ell^2) \rightarrow K(\ell^2)$ of the
 space of compact operators on $\ell^2$, is isometric but not CB
 (since all finite rank matrices are in $K(\ell_2)$).
\end{example}

\begin{thm}[Stinespring] \label{st} \index{Theorem! Stinespring's}
 Given a unital C*-algebra $A$, a Hilbert space $K$, and a unital CP
 map $u:A \rightarrow B(K)$, there exist a Hilbert space $H$, a
 $\ast$-representation $\pi:A \rightarrow B(H)$ and an isometric embedding
 $V:K \hookrightarrow H$ such that $u(a)=V^*\pi(a)V$ for all $a$ in
 $A$. Equivalently, if we identify $K$ via $V$ as a subspace of $H$,
 then $u(a)=P_K\pi(a)|_K$.
\end{thm}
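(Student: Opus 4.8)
The plan is to build $H$, $\pi$ and $V$ directly out of $u$ by a GNS-type construction on the algebraic tensor product $A \otimes K$. On this space I would define a sesquilinear form by setting, on elementary tensors,
\[
\langle a \otimes \xi,\, b \otimes \eta \rangle := \langle u(b^*a)\xi,\, \eta\rangle_K
\]
and extending linearly. The first---and really the only conceptual---step is to check that this form is positive semi-definite; this is precisely where complete positivity enters. Given $z = \sum_{i=1}^n a_i \otimes \xi_i$, one computes $\langle z, z\rangle = \sum_{i,j}\langle u(a_j^* a_i)\xi_i, \xi_j\rangle = \langle u_n([a_p^* a_q])\,\zeta,\,\zeta\rangle$, where $\zeta = (\xi_1,\dots,\xi_n) \in K^{(n)}$. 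Since the matrix $[a_p^* a_q]_{p,q} \in M_n(A)$ factors as $C^*C$ (with $C$ having $q$-th first-row entry $a_q$ and zeros elsewhere), it is positive, so $n$-positivity of $u$ forces $\langle z,z\rangle \ge 0$.

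Once positivity is in hand, I would pass to the quotient by the null space $N = \{z : \langle z,z\rangle = 0\}$ (a subspace, by Cauchy--Schwarz), equip $(A\otimes K)/N$ with the induced inner product, and let $H$ be its completion. I would then define $\pi(a)$ on the dense subspace by $\pi(a)(b\otimes \xi) = (ab)\otimes \xi$. The point needing care here is boundedness: for $z = \sum_i b_i \otimes \xi_i$ one has $\langle \pi(a)z, \pi(a)z\rangle = \langle u_n([b_p^* a^*a\, b_q])\zeta,\zeta\rangle$, and the operator inequality $a^*a \le \|a\|^2 1$ lifts (again via the $C^*C$ trick, writing $\|a\|^2 1 - a^*a = d^*d$) to $[b_p^* a^*a\, b_q] \le \|a\|^2[b_p^* b_q]$ in $M_n(A)$; applying $u_n$ gives $\|\pi(a)z\|^2 \le \|a\|^2\|z\|^2$. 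This simultaneously shows $\pi(a)$ preserves $N$ and extends to a bounded operator on $H$. The algebraic identities $\pi(a)\pi(b)=\pi(ab)$ and $\langle \pi(a)z,w\rangle = \langle z,\pi(a^*)w\rangle$ are immediate from $u((ab)^*c)=u(b^*a^*c)$, and $\pi(1)=I$ because $u$ is unital; so $\pi$ is a unital $*$-representation.

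Finally I would set $V\xi := 1\otimes \xi$ (its class in $H$). Isometry of $V$ is exactly unitality of $u$: $\|V\xi\|^2 = \langle u(1)\xi,\xi\rangle = \|\xi\|^2$. For the factorisation, for $\xi,\eta \in K$ we get $\langle V^*\pi(a)V\xi,\eta\rangle = \langle a\otimes \xi, 1\otimes \eta\rangle = \langle u(a)\xi,\eta\rangle$, i.e. $u(a)=V^*\pi(a)V$. The main obstacle is genuinely the first step---translating $n$-positivity into positivity of the tensor form---after which everything reduces to the single recurring trick that $[c_p^* c_q] = (\,\cdot\,)^*(\,\cdot\,)\ge 0$ together with the behaviour of $u_n$ on such matrices. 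Alternatively, since $A$ is an operator system and a unital CP map satisfies $\|u\|_{cb}=\|u(1)\|=1$, one could quote the CP case of Theorem \ref{ff} to obtain $u(a)=V^*\pi(a)V$ with $\|V\|\le 1$, and then observe that $V^*V = V^*\pi(1)V = u(1)=1$ makes $V$ an isometry; I prefer the self-contained construction, since Theorem \ref{ff} is the section's overall goal.
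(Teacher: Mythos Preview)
Your proposal is correct and follows essentially the same GNS-type construction as the paper: the same sesquilinear form on $A\otimes K$, the same use of the factorisation $[a_p^*a_q]=C^*C$ to invoke $n$-positivity, the same quotient-and-complete, and the same definitions of $\pi$ and $V$. If anything you supply more detail than the paper, which dismisses the verification of boundedness of $\pi(a)$, the $*$-representation axioms, and the isometry/factorisation as ``fairly painless''; your explicit check that $[b_p^*a^*a\,b_q]\le \|a\|^2[b_p^*b_q]$ is exactly the step one needs there.
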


\begin{proof}
 Let $\alpha =\sum_{i \in I} a_i \otimes k_i, \beta= \sum_{j \in J}
 b_j \otimes k_j \in A \otimes K$. The fact that $u$ is CP implies
 that the equation $\langle \alpha, \beta \rangle = \sum_{i \in I,j
   \in J} \langle u(b_i^{\ast}a_j)k_j,k_i \rangle_K$ defines a
 positive semi-definite sesquilinear form on $A \otimes K$.( {\em
   Reason:} $u_n$ is positive and if $I=\{1,\cdots,n\}$, then
\begin{align*}
\begin{pmatrix}
a_1^* & 0 & \cdots & 0 \\
a_2^* & 0 & \cdots & 0 \\
\vdots & \vdots & \ddots & \vdots \\
a_n^* & 0 & \cdots & 0
\end{pmatrix}
\begin{pmatrix}
a_1 & a_2 & \cdots & a_n \\
0 & 0 & \cdots & 0 \\
\vdots & \vdots & \ddots & \vdots \\
0 & 0 & \cdots & 0
\end{pmatrix} \ge 0 \Rightarrow  \langle \alpha, \alpha \rangle \ge 0.)
\end{align*}
 Define $H$ to be the result of `separation and completion of this
 semi-inner-product space'; i.e., $H$ is the completion of the
 quotient of $A \otimes K$ by the radical $N$ of the form $\langle
 \cdot, \cdot \rangle$. And it is fairly painless to verify that the
 equations $$\pi(a)(b \otimes k + N)=ab \otimes k + N \ \text{ and
 }\ V(k) = (1 \otimes k) + N$$ define a $\ast$-representation $\pi : A
 \rightarrow B(H)$ and an isometry $V: K \rightarrow H$ which achieve
 the desired results.
\end{proof}

Before we get to a CB version of Arveson's extension theorem, which
may be viewed as a non-commutative Hahn-Banach theorem, we shall pause
to discuss a parallel precursor , i.e., a (commutative) Banach space
version due to Nachbin, namely:

\begin{prop}[Nachbin's Hahn-Banach Theorem]\label{nachbext} \index{Theorem! Nachbin's Hahn-Banach}
 Given an inclusion $E \subset B$ of Banach spaces, any $u \in
 B(E,L^\infty(\mu)$)\footnote{When we write symbols such as $L^1(\mu)$
   or $L^\infty(\mu)$, it will be tacitly assumed that we are talking
   about (equivalence classes of almost everywhere agreeing)
   complex-valued functions on some underlying measure space
   $(\Omega,\mu)$; when we wish to allow vector-valued functions, we
   will write $L^1(\mu;X)$, etc.} admits an extension $\tilde{u} \in
 B(B,L^\infty(\mu))$ such that $||\tilde{u}|| = ||u||$.
\end{prop}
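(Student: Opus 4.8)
The content of the statement is that $L^\infty(\mu)$ is a \emph{$1$-injective} Banach space, and the plan is to prove it by a transfinite Hahn--Banach argument in which the special geometry of $L^\infty(\mu)$ does all the work. First I would set up the usual Zorn scaffolding: let $\mathcal{P}$ be the poset whose elements are pairs $(F,v)$ with $E \subseteq F \subseteq B$ a closed subspace and $v \in B(F,L^\infty(\mu))$ extending $u$ with $\|v\|=\|u\|$, ordered by extension. Every chain has an upper bound (take the directed union of the domains together with the common extension, which is norm-preserving on the union and hence on its closure), so Zorn's lemma yields a maximal element $(F_0,v_0)$. It then suffices to show $F_0=B$: if not, pick $b\in B\setminus F_0$ and extend $v_0$ to $F_0\oplus\C b$ with the same norm, contradicting maximality. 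Thus everything reduces to the \emph{one-dimensional extension step}.

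For that step, writing $C=\|u\|$ and using homogeneity in the scalar $\lambda$, producing a norm-preserving extension to $F_0\oplus\C b$ amounts to finding a single element $w\in L^\infty(\mu)$ with $\|v_0(x)-w\|_\infty\le C\,\|x-b\|$ for every $x\in F_0$; that is, $w$ must lie in the intersection $\bigcap_{x\in F_0}\overline{B}\bigl(v_0(x),\,C\|x-b\|\bigr)$ of closed balls in $L^\infty(\mu)$. These balls meet \emph{pairwise}, essentially for free: for $x_1,x_2\in F_0$ linearity and $\|v_0\|=C$ give $\|v_0(x_1)-v_0(x_2)\|_\infty\le C\|x_1-x_2\|\le C\|x_1-b\|+C\|x_2-b\|$, which is exactly the condition that the two balls of radii $C\|x_i-b\|$ intersect. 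So the whole problem collapses to showing that this pairwise-intersecting family of balls has a \emph{common} point.

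This is where the structure of $L^\infty(\mu)$ enters. Via the Gelfand transform one has $L^\infty(\mu)\cong C(X)$ with $X$ extremally disconnected (indeed hyperstonean), so that $C(X)_{\R}$ is a Dedekind-complete vector lattice. Over the reals this order completeness is precisely Kelley's \emph{binary intersection property}: any family of closed balls that meets pairwise has a common point (the required point can be written down as an explicit supremum of the relevant lower bounds, which exists in $C(X)$ exactly because $X$ is extremally disconnected). Choosing $w$ in the intersection then completes the one-dimensional step, and with it the whole induction, in the real-scalar case.

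The main obstacle is the passage to \emph{complex} scalars, which is genuinely more delicate and cannot be finessed by treating real and imaginary parts separately (the norm does not split). The binary intersection property itself fails for complex balls --- already three pairwise-intersecting disks in $\C$ need not share a common point --- so the real argument does not transcribe verbatim. I would overcome this by exploiting that the ball families arising from a \emph{linear} $v_0$ carry more than merely pairwise information, which is precisely the refinement due to Hasumi showing that complex $C(X)$ with $X$ extremally disconnected remains $1$-injective; alternatively one may invoke the fact that $L^\infty(\mu)$, being a commutative von Neumann algebra, is an injective object and hence norm-one complemented in any superspace, from which the equal-norm extension follows at once. Either way, this complex refinement is the step I expect to require the most care.
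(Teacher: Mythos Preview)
Your approach is correct but takes a genuinely different route from the paper's. You argue via the classical Nachbin--Goodner--Kelley line: Zorn reduces to a one-step extension, which becomes a ball-intersection problem, and the order completeness of $L^\infty(\mu)\cong C(X)$ with $X$ extremally disconnected supplies the binary intersection property (with Hasumi's refinement for complex scalars). The paper instead exploits that $L^\infty(\mu)$ is a \emph{dual} space and works through projective tensor products: using $B(E,F^*)\cong (E\ptimes F)^*$ and the isometry $L^1(\mu)\ptimes E\cong L^1(\mu;E)$, the map $u$ is identified with a bounded linear functional on $L^1(\mu;E)$, which sits isometrically inside $L^1(\mu;B)$; a single application of the \emph{scalar} Hahn--Banach theorem then extends this functional, and unwinding the identifications yields $\tilde u$.

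The trade-offs are clear. Your argument is geometric and explains \emph{why} $L^\infty(\mu)$ is $1$-injective in terms of its ball structure, but the complex case genuinely costs extra work (and your fallback of quoting that $L^\infty(\mu)$ is ``an injective object and hence norm-one complemented'' is essentially the statement to be proved, so only the Hasumi route is a real proof there). The paper's duality argument is shorter, handles real and complex scalars uniformly with no case distinction, and --- more importantly for what follows --- it is a deliberate warm-up: the very next theorem (Arveson's extension theorem) is proved by the same template, replacing $(E\ptimes L^1(\mu))^*$ with the dual of $(\overline K\otimes E\otimes H,\gamma)$ to realise $CB(E,B(K))$ as a dual space and then invoking scalar Hahn--Banach.
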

The proof of Nachbin's theorem relies on the following observation.

\begin{obs}\label{HBN}
{\rm For Banach spaces $E$ and $F$, we have isometric identifications:
\begin{eqnarray*}
  B(E,F^*) & := & \{\text{all bounded linear maps from }E \text{ to }F^*\}
  \\ & \simeq & Bil(E \times F) := \{\text{all bounded
    bilinear forms on }E \times F \}\\ & \simeq & (E \ptimes F)^*,
\end{eqnarray*}
where $\hat{\otimes}$ is the projective tensor product, \index{tensor product! projective}
i.e., the completion of the linear span of the elementary tensor
products with respect to the norm $||t||_{\wedge} := \inf \{\sum_i ||a_i|| \cdot
||b_i||: t=\sum_i a_i \otimes b_i\}$.

The first isomorphism follows from the identification
$B(E,F^*) \ni T \mapsto \psi_T \in Bil(E \times F)$ given by
$\psi_T(a,b)=T(a)(b)$ for $a \in E, b \in F$, since
\begin{align*}
 ||T||&= \sup \{||T(a)||:||a|| \le 1 \} \\
&= \sup\{|T(a)(b)|:||a||, ||b|| \le 1 \} \\
&=||\psi_T||,
\end{align*}
while the second isomorphism follows from the identification $Bil(E
\times F) \ni S \mapsto \phi_S \in (E \ptimes F)^*$ given by $\phi_S(a
\otimes b)= S(a,b)$. On one hand, $||\phi_S||= \sup \{|\phi_S(t)|:
||t||_{E \ptimes F} \le 1\}$. On the other hand, for each
representation $t=\sum_i a_i \otimes b_i \in E\ot F$,
\begin{eqnarray*}
 |\phi_S(t)|&=&|\sum_iS(a_i,b_i)| \\ &\leq & \sum_i |S(a_i, b_i)| \\ &
 \leq & ||S|| \sum_i||a_i||\cdot||b_i||.
\end{eqnarray*}
Taking infimum over all such representations of $t$, we obtain
$|\phi_S(t)| \le ||S||\, ||t||_{\wedge}$ implying $ ||\phi_S||
\le ||S||$. Conversely, since $\|a \otimes b\|_{E \hat{\otimes} F} = \|a\| \cdot \|b\|$ (as is easily verified),
\begin{align*}
 ||\phi_S|| & \ge \sup \{ |\phi_S(a \otimes b)| :
 ||a||\cdot||b||\le 1\} \\ &= \sup \{|S(a, b)| : ||a||\cdot||b||\le
 1\}\\ &= ||S||.
\end{align*}}
\end{obs}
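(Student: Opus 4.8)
The plan is to verify the two displayed isomorphisms one at a time, each by producing an explicit linear bijection together with its inverse and then checking that it is norm-preserving; composing them gives the full chain $B(E,F^*) \simeq Bil(E \times F) \simeq (E \ptimes F)^*$. For the first identification I would start from the assignment $T \mapsto \psi_T$, $\psi_T(a,b) = T(a)(b)$, which is linear in $T$ and bilinear in $(a,b)$. Its inverse sends a bounded bilinear form $\psi$ to the map $a \mapsto \psi(a,\cdot)$; the bound $|\psi(a,b)| \le \|\psi\|\,\|a\|\,\|b\|$ guarantees that $\psi(a,\cdot)$ lies in $F^*$ and that the resulting map is bounded, so the correspondence is a bijection. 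The isometry is then the short computation already recorded above: since $\|T(a)\| = \sup\{|T(a)(b)| : \|b\| \le 1\}$ by definition of the dual norm on $F^*$, a further supremum over $\|a\| \le 1$ yields $\|T\| = \|\psi_T\|$.

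For the second identification I would send $S \in Bil(E \times F)$ to the functional $\phi_S$ determined on elementary tensors by $\phi_S(a \otimes b) = S(a,b)$, extended linearly to the algebraic tensor product $E \otimes F$ and then, once boundedness is known, continuously to its completion $E \ptimes F$. The inequality $\|\phi_S\| \le \|S\|$ is the estimate above: for any representation $t = \sum_i a_i \otimes b_i$ one bounds $|\phi_S(t)| \le \|S\| \sum_i \|a_i\|\,\|b_i\|$ and takes the infimum over representations, which is exactly $\|t\|_\wedge$; this simultaneously shows that $\phi_S$ is $\|\cdot\|_\wedge$-continuous on $E \otimes F$ and hence extends to $E \ptimes F$. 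The reverse map sends $\phi \in (E \ptimes F)^*$ to $S_\phi(a,b) = \phi(a \otimes b)$, which is bilinear and bounded via $|S_\phi(a,b)| \le \|\phi\|\,\|a \otimes b\|_\wedge$, and the two assignments are mutually inverse on elementary tensors, hence everywhere by linearity and density.

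The single point that genuinely needs work — and the step I expect to be the main obstacle — is the reverse norm inequality $\|\phi_S\| \ge \|S\|$, which rests on the identity $\|a \otimes b\|_\wedge = \|a\|\cdot\|b\|$. The bound $\|a \otimes b\|_\wedge \le \|a\|\,\|b\|$ is immediate from the one-term representation, but the opposite direction is where Hahn–Banach enters: for nonzero $a,b$ I would choose norming functionals $f \in E^*$ and $g \in F^*$ with $\|f\| = \|g\| = 1$, $f(a) = \|a\|$, $g(b) = \|b\|$, and consider the functional $f \otimes g$ on $E \otimes F$. For any representation $t = \sum_i a_i \otimes b_i$ one has $|(f \otimes g)(t)| \le \sum_i \|a_i\|\,\|b_i\|$, so passing to the infimum gives $|(f\otimes g)(t)| \le \|t\|_\wedge$ and thus $\|f \otimes g\| \le 1$ on $E \ptimes F$; evaluating at $a \otimes b$ then forces $\|a \otimes b\|_\wedge \ge f(a)g(b) = \|a\|\,\|b\|$ (the cases $a=0$ or $b=0$ being trivial). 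Granting this identity, one obtains $\|\phi_S\| \ge \sup\{|\phi_S(a \otimes b)| : \|a\|\,\|b\| \le 1\} = \sup\{|S(a,b)| : \|a\|\,\|b\| \le 1\} = \|S\|$, which closes the second isometry and hence the whole chain of identifications.
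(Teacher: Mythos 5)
Your proposal is correct and follows essentially the same route as the paper: the same two identifications $T \mapsto \psi_T$ and $S \mapsto \phi_S$, with the same norm estimates in each direction. The only place you go beyond the paper's text is in actually proving the cross-norm identity $\|a \otimes b\|_{\wedge} = \|a\| \cdot \|b\|$ via Hahn--Banach norming functionals $f, g$ and the functional $f \otimes g$ --- a step the paper dismisses as ``easily verified'' --- so your write-up is, if anything, more complete on exactly the point where the real work lies.
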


\noindent{\em Proof of Theorem \ref{nachbext}:} Deduce from the above
observation that
\begin{align*}
B(E,L^\infty(\mu)) & \cong (E \ptimes L^1(\mu))^*\\
& \cong L^1(\mu, E)^*\\
& \subseteq L^1(\mu, B)^*\\
& \cong B(B, L^\infty(\mu))~,
\end{align*}
where the first and last isomorphisms follow from the Observation
\ref{HBN}, the inclusion is an isometric imbedding as a consequence of
the classical Hahn-Banach theorem, and the second isomorphism is
justified in two steps as follows:
\vspace*{2mm}

 \noindent {\em Step I}: If $X \ptimes Y$ denotes the
projective tensor product of Banach spaces $X$ and $Y$, if $D_X$
(resp. $D_Y$) is a dense subspace of $X$ (resp., $Y$), and if $\phi$
is any map from the algebraic tensor product $D_X \otimes D_Y$ to any
Banach space $Z$ such that $\|\phi(x \otimes y)\|_Z = \|x\| \|y\|~\forall x \in D_X, y \in D_Y$, then
$\phi$ extends uniquely to an element $\tilde{\phi} \in B((X \ptimes
Y),Z)$.

\noindent {\em Reason:} If $\sum_i x_i \otimes y_i = \sum_j u_j \otimes v_j$,
then, observe that
\begin{align*}
\| \phi(\sum_i x_i \otimes y_i)\| & = \|\sum_j \phi(u_j \otimes v_j)\|\\
& \le \sum_j \|\phi(u_j \otimes v_j)\|\\
& = \sum_j \|u_j\| \|v_j\|~.
\end{align*}
Taking the infimum over all such choices of $u_j, v_j$, we find that
\[
\| \phi(\sum_i x_i \otimes y_i)\|_Z \le \|\sum_i x_i \otimes
y_i\|_{\wedge},
\] and the desired conclusion is seen to easily follow..

\noindent {\em Step II}: The assignment
\[
L^1(\mu) \otimes Y \ni f \otimes y \stackrel{\phi}{\mapsto} f(.)y \in
L^1(\mu;Y)
\]
extends to an isometric isomorphism $\tilde{\phi}$ of $L^1(\mu)
\ptimes Y$ onto $L^1(\mu;Y)$ for any Banach space $Y$.

\noindent {\em Reason:} Let $X = L^1(\Omega,\mu)$, and let $D_X$
denote the subspace of measurable functions with finite range, and let
$D_Y =Y$. Note that any element of $D_X$ admits an essentially unique
expression of the form $\sum_i c_i 1_{E_i}$, for some Borel partition
$\Omega = \coprod_i E_i$. It follows that any element of $D_X \otimes
Y$ has an essentially unique expression of the form $\sum_i 1_{E_i}
\otimes y_i$, for some Borel partition $\Omega = \coprod_i E_i$ and
some $y_i \in Y$. It follows from Step I that there exists a
contractive operator $\tilde{\phi} \in B(L^1(\mu) \ptimes Y,
L^1(\mu;Y)$ such that $\tilde{\phi}(1_\Delta \otimes y) =
1_\Delta(.)y$. Observe that if $\sum_i 1_{E_i} \otimes y_i \in D_X
\otimes Y$, then
\begin{align*}
\|\tilde{\phi}(\sum_i 1_{E_i} \otimes y_i)\| & = \int \|\sum_i 1_{E_i}
(.) y_i\| \\ & = \sum_i \mu(E_i) \|y_i\| \\ & = \sum_i \|1_{E_i}\|
\|y_i\| \\ & \ge \|\sum_i 1_{E_i} \otimes y_i)\|_{L^1(\mu) \ptimes Y}~,
\end{align*}
thereby showing that $\tilde{\phi}$ is isometric on the dense subspace
$D_X \otimes Y$ and hence completing the proof.

\begin{thm}[Arveson's Extension Theorem]\label{arv-ext} \index{Theorem! Arveson's Extension}
Given an inclusion $E \subset A$ of an operator space into a unital
C*-algebra, and any CB map $u:E \rightarrow B(K)$ for some Hilbert
space $K$, there exists $\tilde{u}:A \xrightarrow{\text{CB}} B(K)$
extending $u$ such that $||\tilde{u}||_{cb}=||u||_{cb}$.
\end{thm}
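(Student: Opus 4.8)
The plan is to reduce the extension of a completely bounded map to a setting where we can either invoke the Stinespring-type dilation already established, or mimic the Banach-space Hahn--Banach argument of Nachbin at the ``completely bounded'' level. The cleanest route exploits the Fundamental Factorisation Theorem \ref{ff}, which is stated just before this result and which I may assume. First I would reduce to the normalised case $\|u\|_{cb} = 1$ by scaling, so that it suffices to extend a complete contraction $u : E \to B(K)$ to a complete contraction $\tilde u : A \to B(K)$. By Theorem \ref{ff} applied to the complete contraction $u$, there exist a Hilbert space $\hat H$, a unital $\ast$-representation $\pi : A \to B(\hat H)$, and contractions $V, W : K \to \hat H$ with $u(a) = V^*\pi(a)W$ for all $a \in E$. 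The crucial observation is that $\pi$ is defined on \emph{all} of $A$, not merely on $E$; this is exactly the feature that lets the factorisation do double duty as an extension theorem.

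The key step is then almost immediate: I would simply \emph{define} the extension by the same formula on the larger space, namely
\[
\tilde u(a) := V^*\pi(a)W \qquad \text{for all } a \in A.
\]
This is visibly linear, it agrees with $u$ on $E$ because the factorisation of Theorem \ref{ff} holds for every $a \in E$, and it maps into $B(K)$. It remains to check that $\tilde u$ is completely bounded with $\|\tilde u\|_{cb} \le 1$, which gives the reverse inequality for free since restriction cannot increase the cb-norm. For this I would appeal to the ``easy'' direction of Theorem \ref{ff}: a map of the sandwiched form $a \mapsto V^*\pi(a)W$ with $\pi$ a $\ast$-representation and $\|V\|, \|W\| \le 1$ is automatically a complete contraction. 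Concretely, for $[a_{ij}] \in M_n(A)$ one writes $\tilde u_n([a_{ij}]) = (V^*)^{(n)}\,\pi_n([a_{ij}])\,W^{(n)}$, where $\pi_n$ is the (still contractive, indeed isometric on the image) amplification $[a_{ij}] \mapsto [\pi(a_{ij})]$ and $V^{(n)}, W^{(n)}$ are the block-diagonal amplifications of $V, W$; since $\pi_n$ is contractive and the block-diagonal amplifications have the same norms as $V$ and $W$, one gets $\|\tilde u_n\| \le \|V\|\,\|W\| \le 1$ for every $n$, hence $\|\tilde u\|_{cb} \le 1$.

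Thus $\|\tilde u\|_{cb} \le 1 = \|u\|_{cb}$, while $\|u\|_{cb} \le \|\tilde u\|_{cb}$ trivially because $\tilde u$ extends $u$ and passing to a subspace can only shrink each $\|(\cdot)_n\|$; combining these yields the asserted equality $\|\tilde u\|_{cb} = \|u\|_{cb}$. The part I expect to be the genuine content, rather than the present bookkeeping, is entirely absorbed into Theorem \ref{ff}: it is precisely the forward implication $(1)\Rightarrow(2)$ there---the existence of a dilation $\pi$ defined on the ambient $C^*$-algebra $A$---that carries the weight, and once that dilation is in hand the extension is obtained with no further work. In effect Arveson's theorem here is not proved afresh but \emph{exhibited as a corollary of the factorisation}, which is exactly the organising principle the authors announced when they wrote that they would ``give a proof of the theorem which exhibits it as an extension theorem.'' The only mild subtlety to flag is ensuring that the amplification identities $\tilde u_n = (V^{(n)})^*\pi_n W^{(n)}$ and the norm computations $\|V^{(n)}\| = \|V\|$ hold as claimed; these are routine once one fixes the identification $M_n(B(\hat H)) \cong B(\hat H^{(n)})$ used throughout the chapter.
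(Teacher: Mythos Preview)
Your argument is logically circular in the context of this paper. You invoke Theorem \ref{ff} (the Fundamental Factorisation) to prove Theorem \ref{arv-ext}, noting that \ref{ff} ``is stated just before this result and which I may assume.'' But being \emph{stated} earlier is not the same as being \emph{proved} earlier. If you look ahead, the proof of Theorem \ref{ff} comes only after Corollary \ref{arvcpext} (the CP version of Arveson's extension theorem), and that corollary is proved by an explicit appeal to Theorem \ref{arv-ext}. So the dependency chain in the paper runs
\[
\text{Theorem \ref{arv-ext}} \;\Longrightarrow\; \text{Corollary \ref{arvcpext}} \;\Longrightarrow\; \text{Theorem \ref{ff}},
\]
and your proposal reverses the last arrow, closing a loop. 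The authors' remark that they will ``give a proof of the theorem which exhibits it as an extension theorem'' refers to exhibiting the \emph{factorisation} theorem as an extension theorem, not the other way around.

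The paper's actual proof of Theorem \ref{arv-ext} is a genuine Hahn--Banach argument at the cb-level, carried out directly without any dilation. One introduces a norm $\gamma$ on $\overline{K}\otimes E\otimes H$ (essentially a Haagerup-type tensor norm), proves that $(\overline{K}\otimes E\otimes H,\gamma)^* \cong CB(E,B(H,K))$ isometrically, and then shows that the inclusion $E\subset A$ induces an \emph{isometric} inclusion $(\overline{K}\otimes E\otimes H,\gamma)\subset(\overline{K}\otimes A\otimes H,\gamma)$. The classical Hahn--Banach theorem then extends the linear functional corresponding to $u$, and translating back gives $\tilde u$. The hard step is the isometry of the inclusion at the $\gamma$-norm level, which uses a linear-algebra trick (Observation \ref{pdhs}) to ``compress'' a representation of a tensor in the larger space down to one living in the smaller space. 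Your bookkeeping with amplifications is fine as far as it goes; what is missing is an independent source for the dilation $\pi$.
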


\begin{proof} Our strategy will involve proving that:
\begin{enumerate}
  \item If, for $t \in \overline{K} \otimes E \otimes H$, we define
    $$\gamma(t):= \inf \Big\{\Big(\sum||k_i||^2\Big)^{1/2}
    ||[a_{i,j}]||_{M_n(E)}\Big(\sum||h_j||^2\Big)^{1/2}: t = \sum_{i,j=1}^n
    \overline{k_i} \otimes a_{i,j} \otimes h_j\Big\},$$ then $\gamma$ is a
    norm;
 \item there is a natural isometric identification $CB(E,B(H,K)) \cong
   (\overline{K} \otimes E \otimes H)^*$, with $(\overline{K} \otimes E \otimes
   H)^*$ being given the norm dual to $\gamma$; and
   \item if $B$ is an `intermediate' operator space, meaning $E
     \subset B \subset A$, then $(\overline{K} \otimes E \otimes H,
     \gamma) \subset (\overline{K} \otimes B \otimes H, \gamma)$ is an
     isometric embedding; i.e.,
\[
\| \cdot \|_{\overline{K} \otimes E \otimes H} = (\| \cdot \|_{\overline{K}
  \otimes B \otimes H})_{|_{{\overline{K} \otimes E \otimes H}}}.
\]
\end{enumerate}
It is obvious that saying $\gamma(t) \ge 0$ amounts to showing that
$t$ admits an expression of the form $t = \sum_{i,j=1}^n \overline{k_i}
\otimes a_{i,j} \otimes h_j$. In fact, there is such an expression
with the matrix $[a_{i,j}]$ being diagonal: indeed, if $t = \sum_{j=1}^n
\overline{k_j} \otimes e_j \otimes h_j$, we can set $a_{ij} = \delta_{ij}
e_i$. For $t = \sum_{i,j=1}^n \overline{k_i} \otimes a_{i,j} \otimes h_j$, let us  use
the suggestive notation $t = \begin{pmatrix}\overline{k} \end{pmatrix}
\otimes a \otimes \begin{pmatrix}h \end{pmatrix}$.

We may assume that the infimum defining $\gamma$ may be taken only over
collections $(\bar{k}) \otimes(a) \otimes (h)$ which satisfy
\[
\|(a)\|_{M_n(E)} = 1 \mbox{ and } \|(\bar{k})\| = \sqrt{\sum_i
\|\bar{k}_i \|^2} = \|(h)\|.
\]

({\em Reason:} This may be achieved by `spreading scalars over the tensor factors'. More precisely, if $C^2 = \|(\bar{k}\| \cdot \|a\| \cdot \|(h)\|$ and if we define

\[ (\bar{k'}) = \frac{C}{\|(\bar{k})\|} (\bar{k}), (a') = \frac{1}{\|(a)\|} (a), (h') = \frac{C}{\|(h)\|} (h),\]

then $(\bar{k}') \otimes (a') \otimes (h')  = (\bar{k}) \otimes(a) \otimes (h)$ and
$\|(a')\|_{M_n(E)} = 1 \mbox{ and } \|(\bar{k}')\| = \|(h')\|$.)

For any other $t' = \begin{pmatrix}\overline{k}' \end{pmatrix}
\otimes a' \otimes \begin{pmatrix}h' \end{pmatrix}$, we find that
$t+t'=\begin{pmatrix} \overline{k} & \overline{k'} \end{pmatrix} \otimes \begin{pmatrix}
a & 0 \\
0 & a'
\end{pmatrix} \otimes
\begin{pmatrix}
h \\
h'
\end{pmatrix}$. Thus, $$ \gamma(t+t') \le
\Big(\sum||k_i||^2 +
\sum||k'_p||^2\Big)^{1/2}\Big(\sum||h_j||^2+\sum||h'_q||^2\Big)^{1/2}.$$

(Note the crucial use of the fact that the operator norm of a direct
sum of operators is the maximum of the norms of the summands - this is
one of the identifying features of these matrix norms, as will be
shown when we get to {\em Ruan's theorem}.)

For now, let us make the useful observation that:
\vspace*{1mm}

\noindent $\gamma(t) < 1$ iff there exists a decomposition $t =
(\overline{k}) \otimes (a) \otimes (h)$ with $|| a||_{M_n(E)} =
1$ and $\|(\overline{k})\| = \|(h)\|^2 < 1$. \hfill{($\dagger$)}
\vspace*{1mm}

 Now conclude from remark ($\dagger$) that the above decompositions
 may be chosen to satisfy $\|a\|_{M_n(E)} = 1 $ and
 $\|(\bar{k})\|= \|(h)\| \sim \sqrt{\gamma(t)}$
 and $\|(a')\|_{M_n(E)}=1$ and $\|(\bar{k'})\|= \|(h')\| \sim \sqrt{\gamma(t')}$; here $\sim$ denotes
 approximate equality. (We are being slightly sloppy here in the interest of sparing ourselves the agony of performing calisthenics with epsilons.)

Thus, $\gamma(t+t')\le(\gamma(t)+\gamma(t'))^{1/2}(\gamma(t) +
\gamma(t'))^{1/2} = \gamma(t)+\gamma(t')$. Hence the triangle
inequality holds for $\gamma$. It remains only to prove that
$\gamma(t) = 0 \Rightarrow t = 0$.

So, suppose $\gamma(t)=0$. Fix an arbitrary $(\phi,\chi,\psi) \in
(\overline{K})^* \times E^* \times H^*$ and $\epsilon > 0$. By
assumption, there exists a decomposition $t = \sum_{i,j=1}^n
\overline{k_i} \otimes a_{i,j} \otimes h_j$, with
$\|[a_{ij}]\|_{B(H^{(n)}} = 1$ and
$\|(\overline{k})\|_{\overline{K}^{(n)}} = \|(h)\|_{H^{(n)}} <
\epsilon$. Hence,
\begin{align*}
|(\phi \otimes \chi \otimes \psi)(t)| &= |\sum
\phi(\overline{k}_j)\chi(a_{ij}) \psi(h_j)|\\ & = \langle \chi_n(a) \begin{pmatrix} \psi(h_1) \\ \vdots \\ \psi(h_n) \end{pmatrix}  \begin{pmatrix} \phi(\bar{k}_1) \\ \vdots \\ \phi(\bar{k}_n) \end{pmatrix} \\ & \le \|\chi_n(a)\| (\sum_{j=1}^n |\psi(h_j)|^2)^\frac{1}{2} (\sum_{i=1}^n |\phi(\bar{k}_i)|^2)^\frac{1}{2} \\ & \le \|\chi_n)\| \|a\| \|\phi\| \|(\bar{k})\| \|\psi\| \|(h)\|\\ & \le \|\chi\|_n \|\phi\| \|\psi\| \epsilon^2 .
\end{align*}

Hence,
\[
(\phi \otimes \chi \otimes \psi)(t) = 0\ \forall\ \phi \in
(\overline{K})^*, \chi \in E^*, \psi \in H^*.
\]
But this is seen to imply that
$t=0$, thereby proving (1).
\vspace*{2 mm}

As for (2), first note that $B(H,K)$ has a natural operator space structure, by viewing it as the (2,1)-corner of $B(H \oplus K)$ or equivalently by identifying $M_n(B(H,K))$ with $B(H^{(n)},K^{(n)})$. Consider the mapping
\[
CB(E,B(H,K)) \ni u \mapsto \Phi_u \in (\overline{K} \otimes E \otimes H)^*
\]
defined by $\Phi_u(t)= \sum_{i,j} \langle u(a_{i,j})h_j,k_i \rangle$,
if $t=\sum_{i,j} \overline{k_i} \otimes a_{i,j} \otimes h_j$. Note
that
\begin{align*}
 ||\Phi_u||_{\gamma^*} \leq 1 &\Leftrightarrow \sup_{\gamma(t) < 1}
 |\Phi_u(t)| \le 1\\ & \Leftrightarrow \sup \left\{ |\sum_{i,j} \langle
 u(a_{i,j}) h_j, k_i \rangle| : \sum_j ||h_j||^2 = \sum_i ||k_i||^2 < 1,
 ||[a_{i,j}]||_{M_n(E)} = 1 \right\}\le 1 \\
& \Leftrightarrow    \sup \left\{ ||[u(a_{i,j})]||_{M_n(B(H,K))}  : [a_{i,j}] \in M_n(E), ||[a_{i,j}]||_{M_n(E)} = 1, n \ge 1 \right\} \le 1
   \\ & \Leftrightarrow ||u||_{cb} \le 1
\end{align*}
Hence $||\Phi_u||_{\gamma}=||u||_{cb}$ and the map $u \stackrel{\Phi}{\mapsto} \Phi_u$ is isometric.

We only need to verify now that $\Phi$ is surjective. Suppose $\Psi \in (\bar{K} \otimes E \otimes H)^\ast$. Fix $a \in E$. If $h \in H, k \in K$ and $k \mapsto \bar{k}$ is an antiunitary map, then consider the clearly sesquilinear form defined on $H \times K$ by $[h,k] = \Psi(\bar{k} \otimes a \otimes h)$. By definition of $\gamma$, we have $|[h,k]| \le \|\Psi\| \cdot \gamma(\bar{k} \otimes a \otimes h) \le \|\Psi\| \cdot \|a\|_E \|h\| \|k\|$, and hence $[\cdot , \cdot]$ is a bounded sesquilinear form and there exists $u \in B(H,K)$ such that $\Psi(\bar{k} \otimes a \otimes h) = \langle uh,k \rangle$. It is a routine application of the definition of $\gamma$ to verify that $u \in CB(H,K)$, thus completing the proof of (2).

As for $(3)$, it is clear that if $t \in \overline{K} \otimes E \otimes
H,$ it follows that $||t||_B \le ||t||_E$ - where we write $\| \cdot
\|_C = \| \cdot\|_{(\hat{K} \otimes C \otimes H, \gamma)}$ - since the
infimum over a larger collection is smaller.  In order to prove the
reverse inequality, we shall assume that $||t||_B < 1$, and show that
$||t||_E < 1$. The assumption $\|t\|_B < 1$ implies that there exists
a decomposition
\[t = \sum_{i=1}^m \sum_{j=1}^n \overline{k}_i \otimes a_{i,j} \otimes h_j =
(\overline{k}_1, \ldots,  \overline{k}_m) \otimes
[a_{i,j}]_{m \times n} \otimes
\begin{pmatrix}
h_1 \\
\vdots \\
h_n
\end{pmatrix}
\] with $a_{i,j} \in B$ such that
\[\Big(\sum_{i=1}^m \|\overline{k_i}\|^2\Big)^\frac{1}{2} \cdot \|[a_{i,j}]\|_{M_{m
    \times n}(B)} \cdot \Big(\sum_{j=1}^n \|h_j\|^2\Big)^\frac{1}{2} < 1~.
\]

It follows from the linear algebraic Observation \ref{pdhs} (discussed
at the end of this proof) that there exist $r \le m$, a linearly
independent set $\{\overline{k_p'}: 1 \le p \le r\}$ of vectors, and a
matrix $C \in M_{m \times r}$ (resp., $s \le n$, a linearly
independent set $\{h_q: 1 \le q \le s\} $ of vectors, and a matrix $D
\in M_{n \times s}$) such that $\|C\| \le 1$, $\overline{k_i} =
\sum_{p=1}^r \overline{c_{ip}}\,  \overline{k'_p}$, and $\sum_{i=1}^m
\|\overline{k_i}\|^2 = \sum_{p=1}^r \|\overline{k_p'}\|^2$ (resp., $\|D\| \le
1$, $h_j = \sum_{q=1}^s \overline{d_{jq}}h'_q$, and $\sum_{j=1}^n \|h_j\|^2
=\sum_{q=1}^s \|h_q'\|^2$).

Then, note that
\begin{align*}
t & = \sum_{i=1}^m \sum_{j=1}^n \bar{k_i} \otimes a_{ij} \otimes h_j\\
& = \sum_{i=1}^m \sum_{j=1}^n (\sum_{p=1}^r \bar{c_{ip}} \bar{k_p}) \otimes
a_{ij} \otimes (\sum_{q=1}^s \bar{d_{jq}} h'_q)\\
& = \sum_{p=1}^r \sum_{q =1}^s \bar{k'_p} \otimes a'_{pq} \otimes h'_q~,
\end{align*}
where $a'_{pq} = \sum_{i=1}^m \sum_{j =1}^n \bar{c_{ip}} a_{ij} \bar{d_{jq}}$,
i.e.,
$A' =: [a'_{pq}] = C^*AD$ and hence $\|A'\| \le \|A\| = 1$.

Observe that since $\{k'_p:1 \leq p \leq r\}$ and $\{(h_q': 1 \leq q
\leq s\}$ are linearly independent sets, then $a'_{pq}$
for all $p,q$ are forced to be in $E$, as $t \in \overline{K} \otimes E \otimes H$!
({\em Reason:} For each $p,q$ take $f_p \in (\overline{K})^*$ such that
$f_i(\overline{k_{i'}'})=\delta_{i,i'}$ and $g_q \in H^*$ such that
$g_q(h_{q'}')=\delta_{q,q'}$. Then $f_p \otimes Id_E \otimes g_q :\overline{K}
\otimes E \otimes H \rightarrow E$ and maps $t$ to $a'_{pq}$.)

Thus we find - since $\sum_{i=1}^n \|\overline{k_i}\|^2 =
\sum_{j=1}^n \|\overline{k_j'}\|^2$, $\sum_{j=1}^n \|h_j\|^2 =
\sum_{j=1}^n \|h_j'\|^2$, $A'\in M_{r \times s}(E)$ and
$\|A'\| = \|C^*AD\| \le \|A\|$ - that
\begin{align*}
\|t\|_E & \le (\sum_{p=1}^r \|\bar{k'_p}\|^2)^{\frac{1}{2}} \cdot \|A'\| \cdot
(\sum_{q=1}^s \|h'_q\|^2)^{\frac{1}{2}}\\
& \le (\sum_{i=1}^m \|\bar{k_i}
\|^2)^{\frac{1}{2}} \cdot \|A\| \cdot
(\sum_{j=1}^n \|h_j\|^2)^{\frac{1}{2}}\\
& < 1~,
\end{align*}
thus establishing that $\|t\|_B < 1 \Rightarrow \|t\|_E \le 1$; hence,  indeed $\|t\|_B = \|t\|_E$.

To complete the proof of the theorem, note that if $u \in CB(E,B(K))$,
and if $\Phi_u \in (\bar{K} \otimes E \otimes K)^*$ corresponds to $u$ as in the
proof of (2), then it is a consequence of (3) and the classical
Hahn-Banach theorem that there exists a $\tilde{\Phi} \in (\bar{K}
\otimes A \otimes K)^*$ which extends, and has the same norm, as
$\Phi_u$. Again, by (2), there exists a unique $\tilde{u} \in
CB(A,B(K))$ such that $\tilde{\Phi} = \Phi_{\tilde{u}}$. It follows
from the definitions that $\tilde{u}$ extends, and has the same
CB-norm as, $u$.
\end{proof}

Now for the `linear algebraic' observation used in the above proof:
\begin{obs}\label{pdhs}
If $h_1, \cdots , h_n$ are elements of a Hillbert space $H$, then
there exist $r \le n$, vectors $\{h_j': 1 \le j \le r\}$ in $H$ and a
rectangular matrix $C = [c_{ij}] \in M_{n \times r}$ such that
\begin{enumerate}
\item $CC^*$ is an orthogonal projection, and in particular,
  $\|[C]\|_{M_{n \times r}} \le 1$;
\item $h_i = \sum_{j=1}^r \bar{c}_{ij} h_j'$ for $1 \le i \leq n$;
\item $\sum_{i=1}^n \| h_i\|^2 = \sum_{j=1}^r \|h'_j\|^2$; and
\item $\{h'_j:1 \le j \le r\}$ is a linearly independent basis for the
  linear span of $\{h_i: 1 \le i \le n\}$.
\end{enumerate}
\end{obs}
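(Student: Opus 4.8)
The plan is to encode the data $h_1,\dots,h_n$ into a single \emph{synthesis operator} $T:\C^n \to H$ defined by $Te_i = h_i$ (where $e_1,\dots,e_n$ is the standard basis of $\C^n$), and then to read off all four conclusions from the singular value decomposition of $T$ — equivalently, from diagonalising the Gram matrix. Since the range of $T$ is $\mathrm{span}\{h_i\}$, the operator has finite rank $r := \dim(\mathrm{span}\{h_i\}) \le n$, so the positive matrix $G := T^*T \in M_n$ also has rank $r$. I would fix orthonormal eigenvectors $v_1,\dots,v_r$ of $G$ with strictly positive eigenvalues $\sigma_1^2,\dots,\sigma_r^2$ (and complete to an orthonormal basis $v_{r+1},\dots,v_n$ of $\ker G = \ker T$). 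Because $\mathrm{span}\{h_i\}$ is finite-dimensional, this is purely finite-dimensional linear algebra even when $H$ is infinite-dimensional.

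Next I would define the candidate basis by $h_j' := Tv_j$ for $1\le j\le r$, and the matrix $C=[c_{ij}]$ by declaring its $j$-th column to be $v_j$. A direct computation gives $\langle h_j', h_k'\rangle = \langle T^*T v_j, v_k\rangle = \sigma_j^2\,\delta_{jk}$, so the $h_j'$ are nonzero and mutually orthogonal, hence linearly independent; and since they lie in $\mathrm{span}\{h_i\}=\mathrm{ran}\,T$, a space of dimension exactly $r$, they form a basis of it — this is (4). Because the columns $v_j$ are orthonormal we have $C^*C = I_r$, so $CC^*$ is precisely the orthogonal projection of $\C^n$ onto $\mathrm{span}\{v_j\}$, whence $\|C\|\le 1$ — this is (1). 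Expanding $e_i = \sum_{j=1}^n \langle e_i, v_j\rangle v_j$ and applying $T$ (which kills $v_{r+1},\dots,v_n$) yields $h_i = \sum_{j=1}^r \langle e_i, v_j\rangle\, h_j'$, and since the $j$-th column of $C$ is $v_j$ we have $\oline{c_{ij}} = \langle e_i, v_j\rangle$, the conjugate sitting exactly as in the convention already used for $C$ in the proof of Arveson's theorem — this is (2). Finally $\sum_i \|h_i\|^2 = \tr(T^*T) = \sum_{j=1}^r \sigma_j^2 = \sum_{j=1}^r \|h_j'\|^2$, which is (3).

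I expect the only genuine subtlety to be arranging that (1) and (3) hold \emph{simultaneously}: merely picking an orthonormal basis of $\mathrm{span}\{h_i\}$ would break the norm-preservation (3), while scaling the raw vectors in an ad hoc way would break the projection property (1). The decisive choice is to let the $h_j'$ be the \emph{unnormalised} left singular vectors $Tv_j = \sigma_j e_j'$ (with $e_j' := \sigma_j^{-1}Tv_j$ an orthonormal family) while the matrix $C$ carries the orthonormal right singular vectors $v_j$; the singular values then reside entirely on the ``$h'$ side'', so that $\sum_j\|h_j'\|^2 = \sum_j\sigma_j^2 = \tr G$, whereas orthonormality of the $v_j$ is what forces $CC^*$ to be a projection. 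Everything else is routine bookkeeping — principally keeping the complex conjugates and the inner-product convention consistent so that the relation emerges in the stated form $h_i = \sum_j \oline{c_{ij}} h_j'$.
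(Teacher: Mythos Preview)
Your argument is correct and is essentially the paper's own proof: both encode the data via the operator $T:\C^n\to H$, $Te_i=h_i$, pick an orthonormal basis of $\ker^\perp T$, set $h_j'=Tv_j$, let the columns of $C$ be the $v_j$, and read off (1)--(4) (with (3) coming from $\|T\|_{HS}^2$ computed in two bases). The only difference is that you insist the $v_j$ be eigenvectors of $T^*T$, whereas the paper allows any orthonormal basis of $\ker^\perp T$; your extra specificity is harmless (it even makes the $h_j'$ orthogonal rather than merely independent) but not needed for any of the four conclusions.
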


\begin{proof}
Consider the linear operator $T:\ell^2_n \rightarrow H$ defined by
$Te_i = h_i$ for each $1 \le i \le n$, where of course $\{e_i:1 \le i
\le n\}$ is the standard orthonormal basis for $\ell^2_n$.  Let
$\{f_j: 1 \le j \le r\}$ denote any orthonormal basis for $M =
\mathrm{ker}^\perp(T)$ and let $P$ denote the orthogonal projection of
$\ell^2_n$ onto $M$. Set $h_j' = Tf_j$ for $1 \le j \le r$. Define
$c_{ij} = \langle f_j, e_i \rangle$ for $1 \le i \le n, 1 \le j \le r$
and note that the matrix $C = [c_{ij}] \in M_{n \times r}$ satisfies
\begin{align*}
(CC^*)_{ii'} & = \sum_{j=1}^r c_{ij}\bar{c}_{i'j}\\ & = \sum_{j=1}^r
  \langle f_j,e_i \rangle \langle e_{i'},f_j \rangle\\ & = \langle
  \sum_{j=1}^r \langle e_{i'},f_j \rangle f_j, \sum_{j=1}^r \langle
  e_{i},f_j \rangle f_j \rangle\\ & = \langle Pe_{i'}, Pe_i \rangle
  \\ & = \langle Pe_{i'}, e_i \rangle
\end{align*}
and so, $CC^*$ denotes the projection $P$.  Now observe that
\[
h_i = Te_i = TP e_i = \sum_j \langle e_i, f_j \rangle Tf_j= \sum_j
\bar{c}_{ij} h'_j.
\]

Finally observe that if $\{f_j: r < j \le n\}$ is an orthonormal basis for
$\mathrm{ker}(T)$, then,
\[ \sum_{j=1}^r \|h'_j\|^2  = \sum_{j=1}^r \|Tf_j\|^2
 = \|TP\|_{HS}^2
 = \|T\|_{HS}^2
 = \sum_{i=1}^n \|Te_i\|^2
 = \sum_{i=1}^n \| h_i\|^2 ~,
\]
thereby completing the proof of the observation.
\end{proof}

Now we discuss completely positive (CP) maps. We begin with a key
lemma relating positivity to norm bounds:

\begin{lem}\label{posbd}
For Hilbert spaces $H$ and $K$, let $a\in B(H)^+$, $b \in B(K)^+$ and
$x\in B(K, H)$. Then,
\begin{enumerate}
  \item $ \begin{pmatrix} 1 & x \\ x^* & 1 \end{pmatrix} \in B(H
    \oplus K)_ + \Leftrightarrow ||x|| \le 1$; and,
 \item more generally $\begin{pmatrix} a & x \\ x^* & b \end{pmatrix}
   \in B(H \oplus K)_ + \Leftrightarrow |\langle xk,h \rangle| \le
   \sqrt{\langle ah,h \rangle \langle bk,k \rangle} \ \forall\  h \in H,
   k \in K$.
 \end{enumerate}
\end{lem}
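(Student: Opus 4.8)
The plan is to reduce everything to the elementary fact that a self-adjoint operator is positive precisely when its associated quadratic form is nonnegative, and then to analyze that form directly. Since $a$ and $b$ are self-adjoint, the block matrix $M = \begin{pmatrix} a & x \\ x^* & b \end{pmatrix}$ is automatically self-adjoint on $H \oplus K$, so $M \in B(H \oplus K)_+$ iff $\langle M\xi, \xi\rangle \ge 0$ for every $\xi = (h,k)$. First I would compute this form explicitly: a short calculation gives $\langle M(h,k),(h,k)\rangle = \langle ah, h\rangle + \langle bk, k\rangle + 2\,\mathrm{Re}\,\langle xk, h\rangle$, the two off-diagonal contributions combining into twice a real part because $x$ and $x^*$ are mutually adjoint. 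The whole lemma then becomes a question about this scalar expression.

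For statement (2) I would treat the two implications separately. Sufficiency ($\Leftarrow$) is immediate from the hypothesis together with the arithmetic-geometric mean inequality $2\sqrt{AB} \le A+B$: writing $A = \langle ah,h\rangle \ge 0$ and $B = \langle bk,k\rangle \ge 0$, the assumed bound gives $2\,\mathrm{Re}\,\langle xk,h\rangle \ge -2|\langle xk,h\rangle| \ge -2\sqrt{AB} \ge -(A+B)$, so the form is nonnegative. For necessity ($\Rightarrow$), the key device is a one-parameter rescaling that leaves the cross term invariant: replacing $h$ by $sh$ and $k$ by $s^{-1}k$ for $s>0$ keeps $\langle xk,h\rangle$ fixed while sending $A \mapsto s^2 A$ and $B \mapsto s^{-2}B$, and replacing $h$ by a unimodular multiple lets me arrange $\mathrm{Re}\,\langle xk,h\rangle = -|\langle xk,h\rangle|$ without disturbing $A$ or $B$. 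Positivity of the form then reads $s^2 A + s^{-2} B \ge 2|\langle xk,h\rangle|$ for all $s>0$, and minimizing the left side over $s$ (its infimum being $2\sqrt{AB}$, attained at $s^2 = \sqrt{B/A}$) yields exactly $|\langle xk,h\rangle| \le \sqrt{AB}$.

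Statement (1) I would then obtain as the special case $a = 1_H$, $b = 1_K$, for which $A = \|h\|^2$ and $B = \|k\|^2$, so the criterion in (2) becomes $|\langle xk, h\rangle| \le \|h\|\,\|k\|$ for all $h,k$, which is precisely $\|x\| \le 1$. (Alternatively one can run the quadratic-form argument directly with $a=b=1$ and invoke Cauchy--Schwarz, which is where I would motivate the general computation.)

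The step I expect to require the most care is the necessity direction of (2), specifically the degenerate cases of the optimization. When $A=0$ the formula ``minimize $s^2 A + s^{-2} B$'' breaks down, and I would instead let $s \to \infty$ (respectively $s \to 0$ when $B = 0$) in the inequality $s^2 A + s^{-2} B \ge 2|\langle xk,h\rangle|$ to force $\langle xk,h\rangle = 0$, matching the right-hand side $\sqrt{AB} = 0$. These boundary cases, handled by a limiting argument rather than by setting a derivative to zero, are the only genuinely delicate point; the rest is the routine quadratic-form bookkeeping above.
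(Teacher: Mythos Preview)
Your proof is correct and follows essentially the same route as the paper: both compute the quadratic form $\langle ah,h\rangle + \langle bk,k\rangle + 2\,\mathrm{Re}\,\langle xk,h\rangle$ and use the one-parameter rescaling $(h,k)\mapsto(th,t^{-1}k)$ to convert the AM-bound into the geometric-mean bound. You are actually a bit more careful than the paper in handling the degenerate cases $A=0$ or $B=0$, which the paper's infimum formula glosses over.
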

\begin{proof}
 First note that $(2) \Rightarrow ||x|| \le \sqrt{||a|| \cdot ||b||}
 \Rightarrow (1)$

As for (2),
\begin{align*}
 \begin{pmatrix}
a & x \\ x^* & b \end{pmatrix} \ge 0 & \Leftrightarrow
 \langle \begin{pmatrix} a & x \\ x^* & b \end{pmatrix} \begin{pmatrix}
   h \\ k \end{pmatrix} , \begin{pmatrix}
   h \\ k \end{pmatrix} \rangle \ge 0 ~\forall h \in H, k \in
 K\\ &\Leftrightarrow \langle ah,h \rangle+ \langle bk, k \rangle + 2
 Re \langle xk,h \rangle \ge 0 ~\forall h \in H, k \in
 K\\ &\Leftrightarrow |\langle xk,h \rangle| \le \frac{\langle ah,h
   \rangle + \langle bk,k \rangle}{2} ~\forall h \in H, k \in
 K\\ &\Leftrightarrow \forall t > 0, |\langle xk, h \rangle| \le
 \frac{1}{2} (t \langle ah,h \rangle +\frac{1}{t}\langle bk,k \rangle)
 ~\forall h \in H, k \in K. \\
& \hspace*{50mm} (\text{on replacing } h \text{ by }
 \sqrt{t}h \text{ and } k \text{ by } \frac{k}{\sqrt{t}})
\end{align*}
Hence, $$|\langle xk, h \rangle| \le \inf_{t > 0}\  \frac{1}{2} \left(t
\langle ah,h \rangle +\frac{1}{t}\langle bk,k \rangle \right) = \sqrt{\langle
  ah,h \rangle \langle bk,k \rangle}$$ (using the fact that
$\frac{a+b}{2} \ge \sqrt{ab}$ and that $\frac{a+b}{2}=\sqrt{ab}
\Leftrightarrow a=b$).
\end{proof}

\begin{prop}\label{cbcp1}
 Let $A$ be a unital $C^*$-algebra, $S \subset A$ be an operator
 system and $K$ be any Hilbert space. Then
\begin{enumerate}
 \item $\forall u \in CP(S, B(K)), ||u||=||u||_{cb}=||u(1)||$.
 \item If $ u : S \ra B(K)$ is linear  with $u(1)=1$, then $||u||_{cb} \le
   1 \Leftrightarrow u$ is CP.
\end{enumerate}
\end{prop}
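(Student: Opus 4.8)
The plan is to deduce both statements from Lemma \ref{posbd} together with the elementary fact that a \emph{unital contraction} between operator systems is automatically positive; the bulk of the work will lie in establishing this latter fact.

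For part (1), I would first treat $\|u\| = \|u(1)\|$. Since $u$ is CP it is in particular $2$-positive, so for $a \in S$ with $\|a\| \le 1$ the positivity of $\begin{pmatrix} 1 & a \\ a^* & 1 \end{pmatrix}$ in $M_2(A)$ (Lemma \ref{posbd}(1), reading $A \subseteq B(H)$) is preserved by $u_2$, yielding $\begin{pmatrix} u(1) & u(a) \\ u(a)^* & u(1) \end{pmatrix} \ge 0$. Lemma \ref{posbd}(2) then gives $|\langle u(a)k, h\rangle| \le \sqrt{\langle u(1)h,h\rangle \langle u(1)k,k\rangle} \le \|u(1)\|\,\|h\|\,\|k\|$, i.e. $\|u(a)\| \le \|u(1)\|$; combined with the trivial bound $\|u(1)\| \le \|u\|$ this proves $\|u\| = \|u(1)\|$. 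To obtain $\|u\|_{cb}$ I would run the same argument with $u$ replaced by $u_n$: since $(u_n)_2$ is identified with $u_{2n}$, complete positivity of $u$ makes each $u_n$ $2$-positive, so the above applies verbatim to give $\|u_n\| = \|u_n(1_{M_n(S)})\|$. As the unit of $M_n(S)$ is the diagonal matrix $\mathrm{diag}(1,\ldots,1)$, which $u_n$ sends to $\mathrm{diag}(u(1),\ldots,u(1))$ of norm $\|u(1)\|$, taking the supremum over $n$ yields $\|u\|_{cb} = \|u(1)\|$.

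Part (2) in the direction ``CP $\Rightarrow \|u\|_{cb}\le 1$'' will be immediate from (1), since then $\|u\|_{cb} = \|u(1)\| = \|1\| = 1$. The reverse direction is the crux, and my strategy is to prove the auxiliary claim: \emph{any unital map $v\colon S \to B(L)$ with $\|v\| \le 1$ is positive.} I would prove this in three steps. First, that $v$ preserves self-adjointness: for self-adjoint $a$ with $\|a\|\le 1$, writing $v(a) = x + iy$ with $x,y$ self-adjoint, the normality of $a + it1$ gives $\|a + it1\| \le \sqrt{1+t^2}$, whence $\|y + t1\| \le \|v(a+it1)\| \le \sqrt{1+t^2}$ for every real $t$ (using $\|\mathrm{Im}(T)\| \le \|T\|$); the resulting estimate $(\lambda + t)^2 \le 1+t^2$ for all $\lambda \in \sigma(y)$ and all $t$ forces $\sigma(y) = \{0\}$, so $y = 0$. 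Second, such an $a$ then has $v(a)$ self-adjoint with $\|v(a)\| \le 1$, so $\sigma(v(a)) \subseteq [-1,1]$. Third, for $0 \le a \le 1$ I would apply the previous step to $1 - a$ (again self-adjoint of norm $\le 1$) to learn that $1 - v(a) = v(1-a)$ is self-adjoint of norm $\le 1$, whence $\sigma(v(a)) \subseteq [0,2]$ and in particular $v(a) \ge 0$.

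Finally I would feed this claim back into the matrix amplifications. Assuming $\|u\|_{cb} \le 1$ and $u(1) = 1$, each $u_n\colon M_n(S) \to B(K^{(n)})$ is again unital — it sends $\mathrm{diag}(1,\ldots,1)$ to the identity — and satisfies $\|u_n\| \le \|u\|_{cb} \le 1$. Applying the auxiliary claim to each $u_n$ (legitimate since each $M_n(S)$ is an operator system) shows that every $u_n$ is positive, which is precisely the assertion that $u$ is CP. I expect the self-adjointness step (the first step of the auxiliary claim, with its spectral argument on $y + t1$) to be the main technical obstacle; everything else is bookkeeping with the block-matrix positivity criterion of Lemma \ref{posbd} and the identification $(u_n)_m \cong u_{nm}$.
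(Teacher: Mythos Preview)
Your proof is correct. Part (1) and the easy direction of (2) coincide with the paper's argument essentially verbatim (the $2\times 2$ block-matrix trick via Lemma \ref{posbd}, then amplification).

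For the hard direction of (2), however, you take a genuinely different route. The paper reduces to the scalar case: for $x \in S_+$ and a unit vector $h$, it looks at the functional $a \mapsto \langle u(a)h,h\rangle$ on $S \cap C^*(1,x)$, extends it by Hahn--Banach to the commutative $C^*$-algebra $C^*(1,x)$, and then invokes (as a black box) the classical fact that a unital contractive functional on a commutative $C^*$-algebra is positive. Your argument instead proves directly, by the spectral trick with $a + it1$, that any unital contraction between operator systems is positive, and then applies this to each amplification $u_n$. Your approach is more self-contained --- no Hahn--Banach, no appeal to the commutative case --- and in fact establishes the more general statement ``unital contractive $\Rightarrow$ positive'' for operator-valued maps in one stroke. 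The paper's approach, on the other hand, makes transparent that the scalar (state) case is the essential content, and highlights the role of the commutative sub-$C^*$-algebra generated by a single positive element. Both are standard; yours is closer to the treatment in Paulsen's book, the paper's is closer in spirit to the GNS/state-space viewpoint.
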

\begin{proof}
(1) $u$ is CP $\Rightarrow u(S_+) \subset B(K)_+ \Rightarrow u(S_h)
  \subset B(K)_h \Rightarrow u(x^*)=u(x)^* \forall x \in S$ (by
  Carteasian decomposition). Hence by Lemma \ref{posbd}(1), we see
  that
\[
||x|| \le 1 \Rightarrow \begin{pmatrix} 1 & x \\ x^* &
    1 \end{pmatrix} \ge 0 \Rightarrow \begin{pmatrix} u(1) & u(x)
    \\ u(x^*) & u(1) \end{pmatrix} \ge 0.
\]
 Then, by Lemma \ref{posbd}(2),we find that $||u(x)|| \le ||u(1)||$
 and hence $||u||=||u(1)||$. Similarly, $x \in M_n(S)$ with $||x|| \le
 1$ implies $\begin{pmatrix} 1 & x \\ x^* & 1 \end{pmatrix} \ge 0 $
 and hence \[||u_n(x)|| \le ||u_n(1)|| = ||
 \mathrm{diag}_n(u(1),\ldots, u(1))|| = ||u(1)||.
\]
Hence, $||u||_{cb} = \sup_n \|u_n\| \le ||u(1)||$,
i.e., $||u||_{cb}=||u(1)||$, as desired.

\bigskip
(2) The proof of $(\Leftarrow)$ is an immediate consequence of part
(1) of this proposition and the assumed unitality of $u$.

We shall prove $(\Rightarrow)$ using the fact that if $\phi \in C^*$
for a commutative unital $C^*$-algebra $C$ (i.e., we may assume
$C=C(\Omega)$ for some compact $\Omega$), and if $\phi(1)=1$, then
$||\phi|| \le 1 \Leftrightarrow \phi \ge 0$. (It is a fact, which we
shall not go into here, that that fact is a special case of (2)!)

We first prove positivity of $u$, i.e., we need to verify that
$\langle u(x)h, h \rangle \ge 0, \ \forall\ x \in S_+, h \in K$.  We
may assume, without loss of generality that $||h|| =1$. Consider the
commutative unital $C^*$-subalgebra $A_0 = C^*(\{1,x\})$ of $A$. The
linear functional $\phi_0$ defined on $S \cap A_0$ by
$\phi_0(a):=\langle u(a)h,h \rangle$ is seen to be bounded with
$\|\phi_0\| = \phi_0(1) = 1$. Let $\phi \in A_0^*$ be a Hahn-Banach
extension of $\phi_0$. Since $\|\phi\| = 1 = \phi(1)$, it follows from
the fact cited in the previous paragraph that $\phi \ge 0$. Hence
$\langle u(x)h, h \rangle = \phi(x) \ge 0$, and the arbitrariness of
$h$ yields the positivity of $u(x)$. Thus, indeed $u$ is a positive
map.

To prove positivity of $u_n $, $n > 1$, we need to verify that
$\langle u_n(x)h, h \rangle \ge 0, ~\forall x \in M_n(S)_+, h \in
K^{(n)}$.  First deduce from the positivity of $u$ that $u(a^*) =
u(a)^* ~\forall a \in S$, from which it follows that $u_n(S_h) \subset
M_n(B(K))_h$. Also, note that $u(1) = 1 \Rightarrow u_n(1) = 1.$ The
assumption that $\|u_n\| = 1$ now permits us, exactly as in the case
$n=1$, (by now considering the commutative $C^*$-subalgebra $A_n :=
C^*(\{1, x\})$ of $M_n(A)$) that $u_n$ is also
positive. Thus, indeed $u$ is CP.
\end{proof}

\begin{cor}[Arveson's Extension Theorem - CP version]\label{arvcpext} \index{Theorem! Arveson's Extension (CP)}
If $S$ is an operator system contained in a $C^*$-algebra $A$, then
any CP map $u:S \rightarrow B(K)$ extends to a CP map $\tilde{u}:A
\rightarrow B(K)$.
\end{cor}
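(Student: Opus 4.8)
The plan is to bootstrap the completely positive statement out of the already-established completely bounded version of Arveson's theorem (Theorem \ref{arv-ext}), using the dictionary between complete positivity and complete contractivity for \emph{unital} maps recorded in Proposition \ref{cbcp1}. First I would dispose of a harmless technicality: since $S$ is an operator system it carries a unit, so by passing to the unitization $\tilde A = A + \C 1$ if necessary I may assume $A$ is unital and shares its unit $1$ with $S$. A CP map produced on $\tilde A$ restricts to a CP map on $A$, so it suffices to build the extension on the larger unital algebra.

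The heart of the matter is the unital case $u(1) = 1$. Here Proposition \ref{cbcp1}(1) gives $\|u\|_{cb} = \|u(1)\| = 1$, so $u$ is in particular CB, and Theorem \ref{arv-ext} supplies a CB extension $\tilde u : A \to B(K)$ with $\|\tilde u\|_{cb} = \|u\|_{cb} = 1$. Since $1 \in S$ and $\tilde u$ agrees with $u$ there, the extension is again unital: $\tilde u(1) = u(1) = 1$. I would then invoke the converse direction of Proposition \ref{cbcp1}(2) --- a unital map with $\|\tilde u\|_{cb} \le 1$ is automatically CP --- to conclude that $\tilde u$ is the desired CP extension (and in this case the norm is even preserved).

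To reduce the general case to the unital one, write $b := u(1)$, which lies in $B(K)_+$ because $u$ is positive and $1 \in S_+$. When $b$ is invertible this is immediate: the map $v := b^{-1/2} u(\cdot) b^{-1/2}$ is unital and CP (conjugation by a fixed operator preserves complete positivity), so it extends to a unital CP map $\tilde v$ by the previous paragraph, and $\tilde u := b^{1/2} \tilde v(\cdot) b^{1/2}$ is a CP extension of $u$.

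The main obstacle is precisely that $b = u(1)$ need not be invertible, so $b^{-1/2}$ is unavailable. I would resolve this by perturbation followed by a limiting argument. For $\epsilon > 0$ set $b_\epsilon := b + \epsilon 1 > 0$ and, fixing a state $\rho$ on $S$, form the unital CP map $w_\epsilon(x) := b_\epsilon^{-1/2} u(x) b_\epsilon^{-1/2} + \rho(x)(1 - b_\epsilon^{-1/2} b\, b_\epsilon^{-1/2})$; this is a sum of CP maps and is unital because the defect $1 - b_\epsilon^{-1/2} b\, b_\epsilon^{-1/2} = \epsilon\, b_\epsilon^{-1} \ge 0$ is scaled by the unital positive functional $\rho$. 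Applying the invertible/unital case extends $w_\epsilon$ to a unital CP map $\tilde w_\epsilon$ on $A$, and the CP maps $x \mapsto b_\epsilon^{1/2} \tilde w_\epsilon(x) b_\epsilon^{1/2}$ are uniformly bounded (since $\|\tilde w_\epsilon\| = 1$ and $\|b_\epsilon\| \le \|b\| + 1$) and restrict on $S$ to $u(x) + \epsilon\, \rho(x) 1$. Extracting a point-weak$^*$ convergent subnet as $\epsilon \to 0$ via Banach--Alaoglu yields a map $\tilde u$ on $A$ restricting to $\lim_\epsilon (u(x) + \epsilon\, \rho(x) 1) = u(x)$ on $S$. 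The one step demanding genuine care is checking that complete positivity survives this passage to the limit, which follows since point-weak$^*$ limits of uniformly bounded maps preserve positivity of each $(\tilde u)_n$, tested against finitely many vectors.
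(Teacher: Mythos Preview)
Your argument is correct, but the route you take to reduce from the general case to the unital one differs from the paper's. You renormalise by $b_\epsilon^{-1/2}$, patch up the defect with a state, extend each unital perturbation, and then pass to a point-weak$^*$ limit along a subnet; this works and all the steps you flag (complete positivity of the perturbation, uniform bounds, stability of CP under point-weak$^*$ limits) are sound. The paper instead avoids any limiting argument by a direct-sum trick: it defines $U:S\oplus S \to B(K)$ by $U(x\oplus y)=u(x)+\phi(y)(1-u(1))$ for a state $\phi$ on $A$, observes that $U$ is already unital and CP, extends $U$ to $A\oplus A$ via Theorem \ref{arv-ext} and Proposition \ref{cbcp1}(2), and finally sets $\tilde u(x)=\tilde U(x\oplus 0)$. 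Both approaches feed the same unital-case engine (Proposition \ref{cbcp1} plus Theorem \ref{arv-ext}); the paper's construction is cleaner in that it produces the extension in one shot without compactness, while yours is a perfectly standard perturbation-and-limit argument that trades elegance for directness.
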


\begin{proof} Normalise and assume $\|u(1)\| \le 1$, whence it follows
  from the positivity of $u$ that $0 \le u(1) \le 1$. Pick any state
  on $A$, i.e., a positive (=positivity-preserving element) of $A^*$
  such that $\phi(1) = 1$. (The existence of an abundance of states
  is one of the very useful consequences of the classical Hahn-Banach
  Theorem.)

Now, consider the map $U: S \oplus S ~(\subset A\oplus A)~ \rightarrow
B(K)$ defined by
\[
U\Big(\begin{pmatrix} x & 0\\ 0 & y \end{pmatrix}\Big) = u(x) + \phi(y) (1 -
u(1))
\]
and note that $U$ is unital and completely positive, in view of Lemma
\ref{cbcp1}(2); and hence (by Proposition \ref{cbcp1}(1))
$\|U\|_{cb} = 1$. Appeal to Theorem \ref{arv-ext} to find an
extension $\tilde{U} \in CB(A \oplus A,B(K))$ with $\|\tilde{U}\|_{cb}
= 1$. As $\tilde{U}$ inherits the property of being unital from $U$,
it follows by an application of Proposition \ref{cbcp1}(2) that
$\tilde{U}$ is CP. Finally, if we define $\tilde{u}(x) = \tilde{U}
\Big(\begin{pmatrix} x & 0\\ 0 & 0 \end{pmatrix}\Big)$ for $x \in A$, it is
clear that $\tilde{u}$ is a CP extension to $A$ of $u$.
\end{proof}

\begin{lem}\label{posfctcp} Let $A$ be a unital $C^*$-algebra. Then,
 $\phi \in A_+^* \Rightarrow \phi \in CP(A, \mathbb{C})$.
\end{lem}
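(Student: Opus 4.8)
The plan is to unwind the definition of complete positivity and reduce everything to the single defining property of a positive functional, namely that $\phi(A_+) \subset [0,\infty)$. Identifying $\C$ with $B(\C)$, the map $\phi_n : M_n(A) \to M_n(\C)$ is just $[a_{ij}] \mapsto [\phi(a_{ij})]$, and what must be shown is that this sends $M_n(A)_+$ into $M_n(\C)_+$ for every $n \in \N$. Since positivity of $\phi$ itself is exactly the case $n = 1$, the whole content lies in the passage to general $n$.

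So I would fix $n$ and a positive element $[a_{ij}] \in M_n(A)_+$. To test positivity of the scalar matrix $[\phi(a_{ij})] \in M_n(\C)$ I would pair it against an arbitrary vector $\lambda = (\lambda_1, \dots, \lambda_n)^t \in \C^n$ and compute
\[
\sum_{i,j} \overline{\lambda_i}\, \phi(a_{ij})\, \lambda_j = \phi\Big( \sum_{i,j} \overline{\lambda_i}\, a_{ij}\, \lambda_j \Big),
\]
using linearity of $\phi$ to pull it outside the (scalar) sum. Thus the proof reduces to showing that the single element $x := \sum_{i,j} \overline{\lambda_i}\, a_{ij}\, \lambda_j$ lies in $A_+$, for then positivity of $\phi$ gives $\phi(x) \ge 0$ and hence $[\phi(a_{ij})] \in M_n(\C)_+$, which is precisely the asserted $n$-positivity.

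The one genuine step — and the only place any structure of $M_n(A)_+$ is used — is verifying $x \in A_+$. Here I would invoke the fact that a positive element of the $C^*$-algebra $M_n(A)$ admits a factorisation $[a_{ij}] = B^*B$ with $B = [b_{ij}] \in M_n(A)$, so that $a_{ij} = \sum_{k} b_{ki}^* b_{kj}$. Substituting and regrouping the scalars, with $c_k := \sum_j \lambda_j b_{kj}$, I obtain
\[
x = \sum_{k} \Big( \sum_i \lambda_i b_{ki} \Big)^{\!*} \Big( \sum_j \lambda_j b_{kj} \Big) = \sum_k c_k^* c_k,
\]
which is manifestly in $A_+$. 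Equivalently, representing $A \subset B(H)$, one may feed vectors of the special form $\xi_i = \lambda_i \eta$ in $H^{(n)}$ into the inequality $\langle [a_{ij}]\,\xi,\xi\rangle \ge 0$ and read off $\langle x\eta, \eta\rangle \ge 0$ directly. Since $n$ and $\lambda$ were arbitrary, $\phi$ is $n$-positive for every $n$, i.e.\ completely positive. I do not anticipate any real obstacle: the result amounts to the observation that slicing a positive matrix over $A$ by a scalar vector returns a positive element of $A$, after which the defining positivity of $\phi$ finishes the argument.
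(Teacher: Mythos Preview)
Your proof is correct and follows essentially the same route as the paper: reduce $n$-positivity to showing $x = \sum_{i,j}\overline{\lambda_i}\,a_{ij}\,\lambda_j \in A_+$, then apply $\phi$. The paper dispatches that step in one line by writing $x = \begin{pmatrix}\lambda_1\\ \vdots\\ \lambda_n\end{pmatrix}^{\!*}[a_{ij}]\begin{pmatrix}\lambda_1\\ \vdots\\ \lambda_n\end{pmatrix}$ (conjugation of a positive matrix by a $1\times n$ scalar row), whereas you unpack this via the factorisation $[a_{ij}] = B^*B$; both are the same observation.
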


\begin{proof}
We need to show that if $[a_{i,j}] \in M_n(A)_+$, then we have
$\sum_{i,j} \phi(a_ {i,j})h_j \bar{h_i} \ge 0 ~\forall h \in
\mathbb{C}^n$. But that follows since $\phi$ is positive and
$\sum_{i,j}h_j\bar{h_i}a_{i,j} = {\begin{pmatrix} h_1 \\ \vdots
  \\ h_n \end{pmatrix}}^* [a_{i,j}] \begin{pmatrix} h_1 \\ \vdots
  \\ h_n \end{pmatrix} \in A_+$.
\end{proof}

We now discuss yet another useful $2 \times 2$ matrix trick; this one
also serves as a conduit from operator spaces to operator systems.

\begin{thm}\label{cbcp2}
 Suppose $E \subset B(H)$ is a subspace and $w:E \rightarrow B(K)$ is a linear map. Define $S =\left\{ \begin{pmatrix} \lambda1 & x \\ y^* & \mu
   1 \end{pmatrix}: x, y \in E \right.$, $\left. \lambda, \mu \in
 \mathbb{C}\right\} \subset M_2(B(H))$ and
 $W:S \rightarrow M_2(B(K))$ by $W \Big(\begin{pmatrix}
   \lambda1 & x \\ y^* & \mu 1
\end{pmatrix}\Big) =
\begin{pmatrix}
\lambda1 & w(x)
\\ w(y)^* & \mu 1
\end{pmatrix}$.
Then $E$ is an operator space and $\|w\|_{cb} \le 1 \Leftrightarrow S$
is an operator system and $W$ is CP.

(Here the operator space/system structure on $E/S$ is the natural one
induced from the other.)
\end{thm}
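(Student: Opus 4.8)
The plan is to dispose of the scaffolding half of the biconditional immediately and then reduce the genuine content to Lemma \ref{posbd} together with a conjugation trick. First I would note that $S$ contains $1=\begin{pmatrix}1&0\\0&1\end{pmatrix}$ and is closed under the adjoint (the adjoint of $\begin{pmatrix}\lambda 1 & x\\ y^* & \mu 1\end{pmatrix}$ is $\begin{pmatrix}\bar\lambda 1 & y\\ x^* & \bar\mu 1\end{pmatrix}\in S$), so $S$ is an operator system exactly when it is norm-closed, which holds iff $E$ is closed, i.e.\ iff $E$ is an operator space; thus the equivalence of ``$E$ is an operator space'' and ``$S$ is an operator system'' is automatic. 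Since $W(1)=1$, the map $W$ is unital, so it remains only to show $\|w\|_{cb}\le1 \Leftrightarrow W$ is CP.

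The organizing idea is to identify, via the shuffle $M_n(M_2(B(H)))\cong M_2(M_n(B(H)))$, the self-adjoint elements of $M_n(S)$ with block matrices $\begin{pmatrix}\Lambda\otimes1 & X\\X^*&M\otimes1\end{pmatrix}$, where $\Lambda,M\in M_n$ are Hermitian and $X\in M_n(E)$, on which $W_n$ acts by $X\mapsto w_n(X)$ while fixing the diagonal. For the easy direction ($W$ CP $\Rightarrow\|w\|_{cb}\le1$) I would feed in $\Lambda=M=I_n$: the element $\begin{pmatrix}1&X\\X^*&1\end{pmatrix}$ lies in $M_n(S)$ and, by Lemma \ref{posbd}(1), is positive iff $\|X\|\le1$; positivity of $W_n$ then forces $\begin{pmatrix}1&w_n(X)\\w_n(X)^*&1\end{pmatrix}\ge0$, i.e.\ $\|w_n(X)\|\le1$. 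Hence $\|w_n\|\le1$ for every $n$ and $\|w\|_{cb}\le1$.

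For the converse I would start from a positive $\begin{pmatrix}\Lambda\otimes1&X\\X^*&M\otimes1\end{pmatrix}\in M_n(S)_+$, where compressing to the corners forces $\Lambda,M\ge0$, and aim to show its $W_n$-image is positive. First I would perturb $\Lambda,M$ to $\Lambda+\epsilon I,\,M+\epsilon I$ (still a positive element of $M_n(S)$) to make them invertible, recovering the general case by letting $\epsilon\downarrow0$ and using that positivity is closed under limits. With $\Lambda,M>0$, conjugating by the invertible $\mathrm{diag}(\Lambda^{-1/2}\otimes1,\,M^{-1/2}\otimes1)$ shows the positivity is equivalent to $\big\|(\Lambda^{-1/2}\otimes1)\,X\,(M^{-1/2}\otimes1)\big\|\le1$; this compressed element still lies in $M_n(E)$ because multiplying $X$ left and right by scalar matrices only takes linear combinations of its entries, which $E$ absorbs. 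Then $\|w_n\|\le1$ together with the intertwining $w_n\big((A\otimes1)X(B\otimes1)\big)=(A\otimes1)\,w_n(X)\,(B\otimes1)$ for scalar matrices $A,B$ (immediate entrywise from linearity of $w$) gives $\big\|(\Lambda^{-1/2}\otimes1)\,w_n(X)\,(M^{-1/2}\otimes1)\big\|\le1$; applying Lemma \ref{posbd}(1) and conjugating back by $\mathrm{diag}(\Lambda^{1/2}\otimes1,\,M^{1/2}\otimes1)$ yields positivity of $\begin{pmatrix}\Lambda\otimes1&w_n(X)\\w_n(X)^*&M\otimes1\end{pmatrix}$, which is exactly $W_n$ of the original element, so $W$ is CP.

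The hard part is precisely this converse: the definition of CP forces us to control $W_n$ on positive elements whose corners $\Lambda\otimes1,\,M\otimes1$ are arbitrary positive operators rather than the identity, so the crux is the normalization by $\Lambda^{\pm1/2},M^{\pm1/2}$ that converts a general positive block matrix into the identity-diagonal form that Lemma \ref{posbd}(1) directly governs. The remaining ingredients --- the shuffle identification, unitality and well-definedness of $W$, and the $\epsilon$-perturbation to invertibility --- are routine. Alternatively, since $W$ is unital one could replace the final positivity bookkeeping by Proposition \ref{cbcp1}(2) (namely $W$ is CP iff $\|W\|_{cb}\le1$), but the direct positivity argument via Lemma \ref{posbd} seems cleanest.
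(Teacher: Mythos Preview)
Your proof is correct and follows essentially the same approach as the paper: both use the shuffle $M_n(M_2(B(H)))\cong M_2(M_n(B(H)))$ (the paper implements it via an explicit permutation matrix $P$), Lemma \ref{posbd}(1) for the easy direction, and the $\epsilon$-perturbation plus $\Lambda^{\pm1/2},M^{\pm1/2}$-conjugation trick for the hard direction. If anything, your write-up is slightly cleaner in handling the $n$-level case of $(\Rightarrow)$ directly, whereas the paper spells out $n=1$ with scalars $\lambda,\mu$ and then only sketches the general case.
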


\begin{proof}
 $(\Leftarrow)$ First note that if $S$ is an operator system and $W$
  is linear and necessarily unital, then $E$ (identified as the
  (1,2)-corner of $S$) is an operator space, while it follows from two
  applications of Lemma \ref{posbd}(1) that if $x \in E$, then
  $||x||_E \le 1 \Rightarrow X = \begin{pmatrix} 1 & x \\ x^* &
    1 \end{pmatrix} \in S_+$ and so  $W$ positive $\Rightarrow  W(X)
  = \begin{pmatrix} 1 & w(x) \\ w(x)^* & 1 \end{pmatrix} \in M_2(B(K))_+ \Rightarrow
  ||w(x)||_{B(K)} \le 1$.

Now suppose $x(n) \in M_n(E)$ and $\|x(n)\|_{M_n(E)} \le 1.$
Define $X(n) \in M_n(S)$ by $X(n)_{ij} = \begin{pmatrix} \delta_{ij} & x(n)_{ij} \\ (x(n)^*)_{ij} & \delta_{ij} \end{pmatrix}$, where the Kronecker symbol $\delta_{ij}$ denotes the $ij$-th entry of the identity matrix.
 It is not hard to see that there exists a permutation matrix $P \in M_{2n}$ - independent of $x(n)$ - such that $PX(n)P^* = \begin{pmatrix} 1 & x(n) \\ (x(n))^* & 1 \end{pmatrix}$.

Similarly, define $y(n) = w_n(x(n)) \in M_n(B(K))$ and $Y(n) \in M_n(S)$ by $Y(n)_{ij} = \begin{pmatrix} \delta_{ij} & y(n)_{ij} \\ (y(n)^*)_{ij} & \delta_{ij} \end{pmatrix}$ and deduce that with $P$ as above, we have $PY(n)P^* = \begin{pmatrix} 1 & y(n)) \\ (y(n))^* & 1 \end{pmatrix}$.

It follows from the definitions that
\begin{align*}
(y(n)^*)_{ij} &= (w_n(x(n))^*)_{ij}\\
&= (w(x(n)_{ji})^*\\
&= (w_n(x(n)^*))_{ij}
\end{align*}
and hence, $W_n(X(n)) = Y(n)$. Now,
\begin{align*}
\|x(n)\| \le 1 &\Rightarrow  PX(n)P^* \ge 0\\ & \Rightarrow  X(n) \ge 0 \\ & \Rightarrow  Y(n) = W_n(X(n)) \ge 0\\ & \Rightarrow PY(n)P^* \ge 0\\ & \Rightarrow  \|y(n)\| \le 1,\end{align*}
and hence $ \|w_n\| \le 1$.

$(\Rightarrow)$ It is clear that $S$ is naturally an operator system
if $E$ is an operator space. We first prove positivity of $W$. (That
of the $W_n$'s is proved similarly.)

We need to show that $A = \begin{pmatrix} \lambda & b \\ c^* &
  \mu \end{pmatrix} \in S_+ \Rightarrow W(A) \ge 0$. Since $A \ge 0$,
we have $b = c \in E$ and $\lambda, \mu \ge 0$. Also, $ A + \epsilon 1
\ge 0$ and since, by the definition of $W$, it is seen that $W(A) =
\lim_{\epsilon \downarrow 0} W(A + \epsilon 1)$, we may assume without
loss of generality that $\lambda , \mu > 0$.

Then, $A = \begin{pmatrix} \lambda1 & b \\ b^* & \mu 1 \end{pmatrix}
= \begin{pmatrix} \sqrt{\lambda} & 0 \\ 0 & \sqrt{\mu} \end{pmatrix}
\begin{pmatrix} 1 & x \\ x^* & 1 \end{pmatrix} \begin{pmatrix}
\sqrt{\lambda} & 0\\0 & \sqrt{\mu} \end{pmatrix}$, where $x =
b/\sqrt{\lambda \mu}$. Hence
\begin{align*}
 \begin{pmatrix}
\lambda1 & b \\ b^* & \mu 1 \end{pmatrix} \ge 0 &
 \Leftrightarrow \begin{pmatrix} 1 & x \\ x^* & 1 \end{pmatrix} \ge
 0\\& \Leftrightarrow ||x||\le 1\\ & \Rightarrow \|w(x)\| \le 1\\ &
 \Rightarrow \begin{pmatrix} 1 & w(x) \\ w(x^*) & 1 \end{pmatrix} \ge
 0\\ & \Rightarrow W (A) = \begin{pmatrix} \sqrt{\lambda} & 0 \\ 0 & \sqrt{\mu} \end{pmatrix} \begin{pmatrix} 1 & w(x) \\ w(x^*) & 1 \end{pmatrix} \begin{pmatrix} \sqrt{\lambda} & 0 \\ 0 & \sqrt{\mu} \end{pmatrix}
 \ge 0.
\end{align*}
Arguing with the permutation $P$ as in the proof of the reverse implication, and proceeding as above (in the $n=1$ case), we find that $\|w_n\| \le 1 \Rightarrow W_n$ is
positive.
\end{proof}

Now we are ready {\bf to prove the Fundamental Factorisation Theorem \ref{ff}.} \index{Theorem! Fundamental Factorisation}

\begin{proof} Normalise and assume $\|u\|_{cb} \le 1$.
Let the operator system $S$ and $U:S \rightarrow M_2(B(K)) =
B(K^{(2)})$ be the (unital) CP map associated to the operator space
$E$ and the completely contractive map $u:E \rightarrow B(K)$ (like
the $w \leftrightarrow W$) as in Theorem \ref{cbcp2}. As $E$ is a
subset of the $C^*$-algebra $A$ by assumption, we find that $S \subset
M_2(A)$

Then by the CP version (Corollary \ref{arvcpext}) of Arveson's
extension theorem, we can extend $U$ to a CP map $\tilde{U}: M_2(A)
\rightarrow B(K^{(2)})$ .

As $M_2(A)$ is a unital $C^*$-algebra and $\tilde{U}$ is a unital CP
map, we may conclude from Stinespring's theorem that there exists a
representation $\sigma:M_2(A) \rightarrow B(H)$ and an isometry $ T :K
\oplus K \rightarrow H$ such that
$\tilde{U}(\tilde{x})=T^*\sigma(\tilde{x}) T$ for all $\tilde{x} \in
M_2(A)$. Hence,
\[\begin{pmatrix} 0 & u(x) \\ 0 & 0 \end{pmatrix}
= U \begin{pmatrix} 0 & x \\ 0 & 0 \end{pmatrix}
=\tilde{U} \begin{pmatrix} 0 & x \\ 0 & 0 \end{pmatrix}
= T^*\sigma\Big(\begin{pmatrix} 0 & x \\ 0 & 0 \end{pmatrix}\Big)T
=T^* \sigma\Big(
\begin{pmatrix}
 x & 0 \\
 0 & 0
\end{pmatrix}
\begin{pmatrix}
 0 & 1 \\
 0 & 0
\end{pmatrix}\Big)T = T^* \pi(x)T'
,\]
where $T'=\sigma \Big(
\begin{pmatrix} 0 & 1 \\
0 & 0
\end{pmatrix}\Big) T$ and
$\pi(x)=\sigma\Big(\begin{pmatrix} x & 0 \\ 0 &
  0 \end{pmatrix}\Big)$. Now, consider the projection $P :=
\sigma\Big(\begin{pmatrix} 1 & 0 \\ 0 & 0\end{pmatrix}\Big)$ and define
  $\hat{H} = P(H), Vk = PT(k,0), Wk =
  \sigma\Big(\begin{pmatrix} 0 & 1 \\ 0 & 0 \end{pmatrix}\Big)T(k,0)$, and note that $P$ commutes with
  $\sigma\Big(\begin{pmatrix}x & 0\\0 & 0\end{pmatrix}\Big)$ for each
    $x \in A$, so that the equation $\pi(x) =
    \sigma\Big(\begin{pmatrix}x & 0\\0 &
      0\end{pmatrix}\Big)_{_{\hat{H}}}$ does define a (unital)
      representation of $A$ on $\hat{H}$; and we finally see that $V,W: K \rightarrow \hat{H}$ and indeed
\begin{equation}\label{cbfact}
u(x) = V^* \pi(x) W ~\mathrm{for\ all}~ x \in E~,
\end{equation}
with $\|V\|, \|W\| \le 1$, as desired.

In the converse direction, if $u$ admits the factorisation
(\ref{cbfact}), then it is seen that also
\[u_n([a_{ij}]) = \begin{pmatrix} V^*\\ \vdots \\ V^* \end{pmatrix}
 \pi_n([a_{ij} ])(W , \ldots, W) \] and as $\pi_n$ is as much of a
representation (and hence contractive) as $\pi$, it is seen that
$\|u\|_{cb} \le 1$.

The non-trivial implication in the final assertion of the theorem is a
consequence of Arveson's extension theorem \ref{arvcpext} and
Stinespring's theorem.
\end{proof}

\begin{rem}\label{factcp}
For $V=W, u(x)=V^* \pi(x) V \Rightarrow u$ is CP (since a
representation is obviously CP).
\end{rem}

\begin{cor}
  Any $u \in CB(S,B(K))$ can be decomposed as $u=u_1-u_2 + i(u_3 - u_4)$,
 with $u_j \in CP(S, B(K)), j=1,2,3,4$.
\end{cor}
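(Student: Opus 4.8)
The plan is to reduce the statement to the Fundamental Factorisation Theorem \ref{ff} together with the polarisation identity, converting the ``off-diagonal'' factorisation $V^*\pi(x)W$ into a linear combination of ``diagonal'' ones of the form $T^*\pi(x)T$, each of which is automatically CP by Remark \ref{factcp}.

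First I would normalise: replacing $u$ by $u/\|u\|_{cb}$ (when $u \neq 0$) costs nothing, since multiplying a CP map by a nonnegative scalar preserves complete positivity and the resulting decomposition simply rescales. Thus assume $\|u\|_{cb}\le 1$. Since $S$ is an operator system sitting inside a unital $C^*$-algebra $A$, Theorem \ref{ff} furnishes a Hilbert space $\hat{H}$, a unital $\ast$-representation $\pi:A\to B(\hat{H})$ and contractions $V,W:K\to\hat{H}$ with $u(x)=V^*\pi(x)W$ for all $x\in S$.

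Next I would apply the polarisation identity to the sesquilinear dependence of $V^*\pi(x)W$ on the pair $(V,W)$ (the assignment $(S,T)\mapsto S^*\pi(x)T$ being conjugate-linear in $S$, via the adjoint, and linear in $T$). Setting $T_k=V+i^kW$ for $k=0,1,2,3$, one has the identity
\[
V^*\pi(x)W=\frac14\sum_{k=0}^{3} i^{-k}\,(V+i^kW)^*\pi(x)(V+i^kW)=\frac14\sum_{k=0}^{3} i^{-k}\,T_k^*\pi(x)T_k,
\]
valid for every $x\in S$, which follows from expanding each summand and using $\sum_{k=0}^{3} i^{-k}=\sum_{k=0}^{3} i^{-2k}=0$ while $\sum_{k=0}^{3} 1=4$, so that only the genuine cross term survives. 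For each fixed $k$, the map $v_k(x):=T_k^*\pi(x)T_k$ has exactly the shape covered by Remark \ref{factcp} (a single operator $T_k$ appearing on both sides), and hence is CP.

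Finally I would collect terms according to the values $i^{-k}\in\{1,-i,-1,i\}$ for $k=0,1,2,3$, giving $u=\tfrac14 v_0-\tfrac14 v_2+i\bigl(\tfrac14 v_3-\tfrac14 v_1\bigr)$. Defining $u_1=\tfrac14 v_0$, $u_2=\tfrac14 v_2$, $u_3=\tfrac14 v_3$ and $u_4=\tfrac14 v_1$, each $u_j$ is a nonnegative multiple of a CP map and hence CP, which yields the desired $u=u_1-u_2+i(u_3-u_4)$. The only point demanding any care is the bookkeeping of the fourth-root-of-unity coefficients in the polarisation identity and the verification that the spurious terms cancel; everything else is a direct invocation of Theorem \ref{ff} and Remark \ref{factcp}.
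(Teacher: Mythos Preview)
Your proof is correct and follows exactly the approach the paper sketches: fundamental factorisation to get $u(x)=V^*\pi(x)W$, then the polarisation identity to express this as a combination of terms $T_k^*\pi(x)T_k$, each CP by Remark \ref{factcp}. You have simply spelled out in full the one-line proof the paper gives.
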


\begin{proof}
 The proof is a consequence of the fundamental factorisation, the
 polarisation identity and Remark \ref{factcp}.
\end{proof}


\section{More on CB and CP maps}

\begin{thm}\textbf{(Kraus Decomposition)} \label{kraus} \index{Kraus decomposition}

A linear map $u : M_n \ra M_m$ is CP if and only if there exists a

family $\{V_p : 1 \le p \le N \} \subset M_{n \times m}$ with $N \leq nm$

such that $u(a) = \sum_{p=1}^N V_p^* a V_p$ for all $a \in M_n$.

\end{thm}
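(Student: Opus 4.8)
The plan is to prove the two implications separately, carrying the substantive content of the CP $\Rightarrow$ Kraus-form direction through the \emph{Choi matrix}, which also delivers the bound $N \le nm$ for free.

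For the easy direction, suppose $u(a) = \sum_{p=1}^N V_p^* a V_p$. Each summand $\Phi_p : a \mapsto V_p^* a V_p$ is CP: viewing the inclusion $M_n = B(\C^n) \hookrightarrow B(\C^n)$ as the (trivial) unital $\ast$-representation $\pi = \mathrm{id}$, the formula $\Phi_p(a) = V_p^*\pi(a)V_p$ is exactly a factorisation of the type in Remark \ref{factcp}, so each $\Phi_p$ is CP. Since a finite sum of CP maps is manifestly CP (positivity of block matrices is preserved under addition), $u$ is CP.

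For the converse, suppose $u$ is CP. I would first record the positivity of the \emph{Choi matrix}
\[ C := [u(e_{ij})]_{1 \le i,j \le n} = u_n([e_{ij}]) \in M_n(M_m) \cong M_{nm}, \]
where $e_{ij}$ are the matrix units of $M_n$. The key point is that $[e_{ij}] \in M_n(M_n)$ is positive: under the identification $M_n(M_n) = B(\C^n \otimes \C^n)$ it equals $n$ times the rank-one projection onto $\Omega = \frac{1}{\sqrt n}\sum_k e_k \otimes e_k$, so $n$-positivity of $u$ forces $C = u_n([e_{ij}]) \ge 0$. Next, since $C \ge 0$ in $M_{nm}$, its spectral decomposition yields $C = \sum_{p=1}^N \xi_p \xi_p^*$ with $\xi_p \in \C^n \otimes \C^m \cong \C^{nm}$ and $N = \mathrm{rank}(C) \le nm$ — this already accounts for the asserted bound. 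Writing $\xi_p = \sum_{i=1}^n e_i \otimes v_p^{(i)}$ with $v_p^{(i)} \in \C^m$, the $(i,j)$ block of $\xi_p\xi_p^*$ is $v_p^{(i)}(v_p^{(j)})^*$, whence $u(e_{ij}) = \sum_p v_p^{(i)}(v_p^{(j)})^*$. I would then define $V_p \in M_{n \times m}$ to be the operator whose adjoint $V_p^*$ has $i$-th column $v_p^{(i)}$, i.e. $V_p^* e_i = v_p^{(i)}$; a direct computation gives $V_p^* e_{ij} V_p = (V_p^* e_i)(V_p^* e_j)^* = v_p^{(i)}(v_p^{(j)})^*$, so that $\sum_p V_p^* e_{ij} V_p = u(e_{ij})$ for every $i,j$. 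As both sides are linear in $a$ and the $e_{ij}$ span $M_n$, this yields $u(a) = \sum_{p=1}^N V_p^* a V_p$, as required.

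The only genuinely delicate step is the bookkeeping in the reshaping $\xi_p \leftrightarrow V_p$: one must pin down the vectorisation convention (here $V_p^* e_i = v_p^{(i)}$) so that the conjugations and indices in $V_p^* e_{ij} V_p$ reproduce precisely the blocks $v_p^{(i)}(v_p^{(j)})^*$ of $C$ — a mismatched convention introduces a transpose and breaks the identity. Everything else is routine: the positivity $C \ge 0$ is immediate from $n$-positivity, and the bound $N \le nm$ is automatic because $C$ is an $nm \times nm$ matrix. (One could instead route through Stinespring's Theorem \ref{st}, using that every representation of $M_n$ is a multiple $a \mapsto a \otimes I_{\mathcal K}$ of the identity representation and splitting the dilation isometry $V$ along an orthonormal basis of $\mathcal K$; but extracting the sharp bound $N \le nm$ is then slightly more fiddly, so I would favour the Choi-matrix route.)
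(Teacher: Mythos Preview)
Your proof is correct, but it takes a genuinely different route from the paper's. The paper proves this theorem via the fundamental factorisation (Stinespring) Theorem~\ref{ff}: it writes $u(a) = V^*\pi(a)V$ with $\pi : M_n \to B(\widehat H)$ a $\ast$-representation, invokes the structure of representations of $M_n$ to write $\widehat H = \ell_2^n \otimes H$ and $\pi(a) = a \otimes 1$, and then argues that $V(\ell_2^m)$ sits inside $\ell_2^n \otimes H_1$ for some subspace $H_1$ of dimension at most $nm$; decomposing $V$ along an orthonormal basis of $H_1$ yields the $V_p$'s. This is precisely the approach you flagged in your final parenthetical remark and set aside. Your Choi-matrix argument instead coincides with the proof of Choi's Theorem~\ref{choi} that follows immediately in the paper (and is reprised in Theorem~\ref{choi1}); it has the advantage that the bound $N \le nm$ drops out transparently as the rank of an $nm \times nm$ matrix, whereas in the Stinespring route one must work a little to carve out the finite-dimensional $H_1$. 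The paper's route, on the other hand, makes the connection to dilation theory explicit and generalises more readily beyond the matrix setting.
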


\begin{pf}

Suppose $u$ is CP. Then, by the fundamental factorization Theorem
\ref{ff}, there is a Hilbert space $\widehat{H}$, a
$\ast$-representation $\pi : M_n \ra B(\widehat{H})$ and a map $V :
\ell_2^m \ra \widehat{H}$ such that $u(a) = V^* \pi(a)V$. It is a
basic fact that for any representation of $M_n$, as above, there
exists a Hilbert space $H$ such that $\widehat{H} =(\cong) \ell_2^n
\otimes H$ and $\pi(a) = a \otimes 1$.

Further, it is also true that there exists a subspace $H_1 \subset H$
with $\text{dim}\, H_1 \leq mn$ such that $V(\ell_2^m) \subset
\ell_2^n \otimes H_1$.

({\em Reason:} If $\{e_j: 1 \le j \le m\}$ and $\{e_i: 1 \le i \le
n\}$ are orthonormal bases for $\ell_2^m$ and $\ell_2^n$,
respectively, we see that there must exist operators $T_i :\ell_2^m
\rightarrow H$ such that
\[ Ve_j = \sum_{i=1}^n e_i \otimes T_ie_j ~\forall 1 \leq j \leq n~.\]
Clearly, then, if $H_1 = span\{T_i e_j: 1 \le j \le m, 1 \le i \le n\}$, then dim$~H_1 \le nm$ and $V(\ell_2^m) \subset \ell_2^n \otimes H_1$.)

Therefore, it follows that if $\{e_p, 1 \le p \le N\}$ is an orthonormal basis for $H_1$, then $N \le nm$ and there exist $V_p:\ell_2^m \ra \ell_2^n, 1 \le p \le N$ such that $V(\xi) = \sum_p V_p(\xi) \otimes e_p$ for all $\xi \in \ell_2^m$.

Finally, it is seen that for all $a\in M_n$, $\xi, \eta \in \ell^m_2$, we have
\begin{eqnarray*}
\langle u(a)\xi, \eta\rangle & = & \langle (a \otimes 1) \Big( \sum_p V_p (\xi) \otimes e_p\Big)  , \sum_q V_{q} (\eta) \otimes e_q \rangle \\
& = & \sum_p \langle a V_{p} (\xi), V_{p} (\eta)\rangle \\
& = & \langle \sum_p V_{p}^{*} a V_{p} (\xi), \eta \rangle
\end{eqnarray*}

Conversely, it is not hard to see that any map admitting a
decomposition of the given form is necessarily CP.

\end{pf}

The above result may be regarded as one of the first links between
Operator Space Theory and Quantum Information Theory.

\begin{defn}
If a linear map $u : M_n \ra M_m$ is CP and preserves the trace, then
it is called a {\bf quantum channel}. \index{quantum channel}
\end{defn}

\begin{rem}
In the set up of Theorem \ref{kraus}, $u$ preserves the trace if and
only if $\sum_p V_p V_p^* = I$; while it is identity-preserving if and
only if $\sum_p V_p^*V_p = I$.
\end{rem}

\begin{thm}[Choi]\label{choi}
The following conditions on a linear map $u : M_n \ra B(K)$ are equivalent:
\begin{enumerate}
\item $u$ is CP.

\item $u$ is $n$-positive.

\item $[u(e_{ij})] \in M_n(B(K))_+$, where $\{e_{ij} : 1 \leq i, j
  \leq n\}$ is the canonical system of matrix units for $M_n$.

\end{enumerate}
\end{thm}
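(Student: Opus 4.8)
The plan is to prove the cycle $(1) \Rightarrow (2) \Rightarrow (3) \Rightarrow (1)$, of which only the last implication carries genuine content. The implication $(1)\Rightarrow(2)$ is immediate from the definition of complete positivity, which demands $k$-positivity for \emph{every} $k$, and in particular for $k=n$.

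For $(2)\Rightarrow(3)$ the key observation is that the ``big'' matrix $[e_{ij}]\in M_n(M_n)$ is itself positive. Indeed, under the identification $M_n(M_n)\cong M_n\otimes M_n$ acting on $\ell_2^n\otimes\ell_2^n$, the element whose $(i,j)$ entry is the matrix unit $e_{ij}$ corresponds to $\sum_{i,j}e_{ij}\otimes e_{ij}$, which is the positive rank-one operator $\Omega\Omega^*$ for the (unnormalised) maximally entangled vector $\Omega=\sum_k e_k\otimes e_k$; hence $[e_{ij}]\ge 0$. Applying the positive map $u_n$ and using $u_n([e_{ij}])=[u(e_{ij})]$ then yields $[u(e_{ij})]\in M_n(B(K))_+$, which is exactly $(3)$.

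The substantive implication is $(3)\Rightarrow(1)$, and here I would manufacture a Stinespring-type factorisation out of the positive Choi matrix. Writing $C=[u(e_{ij})]\in M_n(B(K))_+\cong B(K^{(n)})_+$, and letting $\iota_p:K\to K^{(n)}$ denote the inclusion of the $p$-th coordinate copy, I set $V_p:=C^{1/2}\iota_p:K\to K^{(n)}$. Then $V_p^*V_q=\iota_p^*C\iota_q$ is precisely the $(p,q)$ block of $C$, i.e.\ $V_p^*V_q=u(e_{pq})$. Assembling these into a single operator $V:K\to \ell_2^n\otimes K^{(n)}$ via $Vk=\sum_p e_p\otimes V_p k$ and taking the $\ast$-representation $\pi:M_n\to B(\ell_2^n\otimes K^{(n)})$, $\pi(a)=a\otimes 1$, a direct computation gives $V^*\pi(a)V=\sum_{p,q}a_{pq}V_p^*V_q=\sum_{p,q}a_{pq}u(e_{pq})=u(a)$ for all $a\in M_n$. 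Since $u(a)=V^*\pi(a)V$ with $\pi$ a representation, Remark \ref{factcp} (equivalently, the Fundamental Factorisation Theorem \ref{ff} with $V=W$) shows that $u$ is CP.

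I expect the main obstacle to be bookkeeping rather than conceptual: keeping the two sets of indices straight (the ``outer'' matrix index versus the $M_n$-entry index) both when identifying $[e_{ij}]$ with the maximally entangled projection in $(2)\Rightarrow(3)$, and when checking that the index transposition in the computation $V^*(a\otimes 1)V=\sum_{p,q}a_{pq}u(e_{pq})$ comes out correctly in $(3)\Rightarrow(1)$. Note that this route invokes only the positive square root of a bounded positive operator, so it requires no finiteness assumption on $K$; alternatively, when $K$ is finite-dimensional one could instead diagonalise $C=\sum_p w_p w_p^*$ and read off a Kraus decomposition directly, recovering Theorem \ref{kraus}.
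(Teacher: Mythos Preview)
Your proof is correct and follows the same cycle $(1)\Rightarrow(2)\Rightarrow(3)\Rightarrow(1)$ as the paper, with the same reasoning for the first two implications. For $(3)\Rightarrow(1)$ the only difference is packaging: the paper factors $[u(e_{ij})]=X^*X$ with $X=[x_{ij}]\in M_n(B(K))$, reads off a Kraus decomposition $u(a)=\sum_k V_k^*aV_k$ with $V_k\xi=\sum_j e_j\otimes x_{kj}\xi$, and appeals to the easy direction of Theorem~\ref{kraus}, whereas you take the positive square root and assemble a single Stinespring dilation $u(a)=V^*(a\otimes 1)V$ before invoking Remark~\ref{factcp}. These are the same argument up to whether one bundles the column operators of the square root into one isometry or keeps them as separate Kraus terms; your version has the minor advantage of not passing through Theorem~\ref{kraus} (stated there only for finite-dimensional $K$) and working verbatim for arbitrary $K$.
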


\begin{pf}

$(3) \implies (1).$ $[u(e_{ij})] \geq 0 \Rightarrow [u(e_{ij})] = X^*
  X$ for some $X = [x_{ij}] \in M_n(B(K))$. So, $u(e_{ij}) = \sum_k
  x_{ki}^* x_{kj}$. In particular,

\[
 u(a) = \sum_{ij} a_{ij} u(e_{ij}) = \sum_{i, j, k}
 x^*_{ki}a_{ij}x_{kj} = \sum_k ( \sum_{ij}x^*_{ki}a_{ij}x_{kj} )
\]
for all $a = [a_{ij}] \in M_n$.

Define $u_k(a)$ to be the element of $M_n(B(K))$ with $(i,j)$-th entry
equal to $\sum_{ij}x^*_{ki}a_{ij}x_{kj}$. Then, clearly,
\[
u_k(a) = \begin{pmatrix} x_{k1}^* & x_{k2}^* & \cdots &
  x_{kn}^* \end{pmatrix} \begin{pmatrix} a_{11} & a_{12} & \cdots &
  a_{1n}\\ a_{21} & a_{22} & \cdots & a_{2n} \\ \cdots \\ a_{n1} &
  a_{n2} & \cdots & a_{nn} \ \end{pmatrix} \begin{pmatrix}
    x_{k1}\\ x_{k2}\\ \cdots \\ x_{kn} \end{pmatrix} \]

Hence $u$ has the form $u(a) = \sum_{k=1}^n V_k^* a V_k$, where $V_k:K
\rightarrow \ell_2^n \otimes K$ is given by $V_k(\xi) = \sum_j e_j
\otimes x_{kj}(\xi)$ with $\{e_i\}$ denoting the the standard
o.n.b. of $\ell_2^n$. The desired implication follows now from Theorem
\ref{kraus}.

The implication $(1) \implies (2)$ is trivial while $(2) \implies (3)$ is a consequence of the fact that $[e_{ij}] \in (M_n)_+$.
\end{pf}

\begin{lem}[Roger Smith]\label{rrsmlem}
Fix $N \geq 1$ (resp., a compact Hausdorff space $\Omega$) and an
operator space $E$. Then every bounded linear map $u: E \ra M_N$
(resp., $u: E \ra C(\Omega, M_N)$) is CB with $||u||_{cb} = ||u_N:
M_n(E) \ra M_N(M_N)||$ (resp., $||u||_{cb} = ||u_N: M_n(E) \ra
M_N(C(\Omega, M_N))||$).  In particular, every bounded linear map $ u
: E \ra C(\Omega)$ is CB with $||u||_{cb} = ||u||$.
\end{lem}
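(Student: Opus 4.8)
The plan is to establish the two non-trivial equalities of the form $\|u\|_{cb} = \|u_N\|$; the inequality $\|u_N\| \le \|u\|_{cb}$ is immediate, since $\|u_N\|$ is one of the terms in the supremum defining $\|u\|_{cb}$, so everything reduces to showing $\|u_n\| \le \|u_N\|$ for every $n \ge 1$ (note that $u_N$ is automatically bounded, as any bounded $u$ satisfies the crude estimate $\|u_n(a)\| \le \|u\|\sum_{i,j}\|a_{ij}\| < \infty$). Concentrating first on a map $u : E \ra M_N$, I would fix $a = [a_{ij}] \in M_n(E)$ with $\|a\|_{M_n(E)} \le 1$ and compute the operator norm of $u_n(a) \in M_n(M_N) = B(\C^n \ot \C^N)$ by pairing against unit vectors $\xi, \eta \in \C^n \ot \C^N$.

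The crux is that such test vectors have small Schmidt rank. I would invoke the Schmidt (singular value) decomposition to write $\xi = (G \ot I_N)\xi'$ and $\eta = (F \ot I_N)\eta'$, where $\xi', \eta' \in \C^N \ot \C^N$ satisfy $\|\xi'\| = \|\xi\|$, $\|\eta'\| = \|\eta\|$, and $F, G : \C^N \ra \C^n$ are partial isometries (in particular contractions). This is possible precisely because the Schmidt rank of any vector in $\C^n \ot \C^N$ is at most $N$, so a fresh $N$-dimensional ancilla suffices; this is the single place where the target dimension $N$ enters, and recognizing it is the main conceptual step. I would then use the bimodule behaviour of the matrix norms: for scalar rectangular matrices one has $\|F^* a G\|_{M_N(E)} \le \|F^*\|\,\|a\|_{M_n(E)}\,\|G\| \le 1$, together with the amplification identity $u_N(F^* a G) = F^*\, u_n(a)\, G = (F^* \ot I_N)\, u_n(a)\, (G \ot I_N)$; both hold because $E \subset B(H)$ makes these genuine operator products (the norm inequality being an instance of the module structure axiomatized later by Ruan's theorem).

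Combining these, for any such $\xi, \eta$ I would estimate
\[
|\langle u_n(a)\xi, \eta\rangle| = |\langle u_N(F^* a G)\,\xi', \eta'\rangle| \le \|u_N\|\,\|F^* a G\|_{M_N(E)}\,\|\xi'\|\,\|\eta'\| \le \|u_N\|,
\]
and taking the supremum over $\xi, \eta$ and over $a$ in the unit ball yields $\|u_n\| \le \|u_N\|$, hence $\|u\|_{cb} = \|u_N\|$.

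For the $C(\Omega, M_N)$-valued case I would reduce to the one just settled by composing with the evaluation $\ast$-homomorphisms $\mathrm{ev}_\omega : C(\Omega, M_N) \ra M_N$, which are unital and completely contractive. Since $M_n(C(\Omega, M_N)) = C(\Omega, M_n(M_N))$ and the norm there is the supremum over $\omega \in \Omega$ of the $M_n(M_N)$-norms, one has $\|u_n\| = \sup_{\omega} \|(\mathrm{ev}_\omega \circ u)_n\|$; applying the scalar case to each $\mathrm{ev}_\omega \circ u : E \ra M_N$ and interchanging the two suprema gives $\|u\|_{cb} = \sup_\omega \|(\mathrm{ev}_\omega \circ u)_N\| = \|u_N\|$. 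The final assertion is the case $N = 1$, using $C(\Omega) = C(\Omega, M_1)$, for which $\|u_N\| = \|u_1\| = \|u\|$. The main obstacle is entirely the Schmidt-rank reduction of the test vectors together with the correct bookkeeping of the bimodule action; once those are in place the estimate is routine.
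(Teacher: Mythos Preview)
Your proof is correct and follows essentially the same approach as the paper's: the key step---compressing the test vectors from $\C^n \otimes \C^N$ down to $\C^N \otimes \C^N$ via contractions and using the bimodule inequality $\|F^* a G\|_{M_N(E)} \le \|a\|_{M_n(E)}$---is exactly what the paper does, only it packages the linear algebra through Observation~\ref{pdhs} rather than the Schmidt decomposition (these are the same fact). Your treatment of the $C(\Omega, M_N)$ case via evaluation maps is also the paper's argument, and in fact you carry it out for general $N$ whereas the paper only writes out the $N=1$ case explicitly.
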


\begin{pf}
Suppose first that $u \in B(E, M_n(\C))$. To prove that $\|u\|_{cb} =
\|u_N\|$, it clearly suffices to verify that $\|u_n\| \le \|u_N\|
~\forall n \ge N$. We need to verify that if $a = [a_{ij}] \in M_n(E)$
and $x_1, \cdots, x_n, y_1, \cdots, y_n \in \C^N$ satisfy
$\|a\|_{M_n(E)} \le 1$ and $\sum_{i=1}^n \|x_i\|^2 = \sum_{i=1}^n
\|y_i\|^2 = 1$, then
\[
|\sum_{i,j=1}^n \langle u(a_{ij})y_j, x_i \rangle | \le \|u_N\|~.
\]
For this, appeal first to Observation \ref{pdhs} to find $\alpha,
\beta \in M_{n \times N}(\C)$ and $x'_1, \cdots , x'_N, y'_1, \cdots ,
y'_N \in \C^N$ such that

\begin{enumerate}
\item $\sum_{l=1}^N \|x'_l\|^2 = \sum_{j=1}^N \|y'_j\|^2 = 1$;
\item $x_i = \sum_{l=1}^N \alpha_{il} x'_l$ and $y_j = \sum_{k=1}^N
  \beta_{jk} y'_k$ for $ 1 \le i,j \le n$, and
\item $\|\alpha\|_{M_{n \times N}(\C)}, \|\beta\|_{M_{n \times N}(\C)} \le 1$
\end{enumerate}

Deduce, then, that
\begin{eqnarray*}
|\sum_{i,j=1}^n \langle u(a_{ij})y_j, x_i \rangle | & = &
|\sum_{i,j=1}^n \sum_{k,l=1}^N \langle u(a_{ij}) \beta_{jk} y'_k,
\alpha_{il} x'_l \rangle |\\ &=& |\sum_{k,l=1}^N \langle
u(\alpha^*a\beta)_{lk} y'_k, x'_l \rangle|\\ &\le&
\|u_N(\alpha^*a\beta)\|\\ &\le& \|u_N \|\cdot
\|\alpha^*a\beta\|\\ &\le& \|u_N \| ~,
\end{eqnarray*}
as desired.

Next, suppose $u \in B(E, C(\Omega;M_N(\C))$. Let us introduce the
notation $\Omega \ni \omega \mapsto \phi_\omega \in E^*$ where
$\phi_\omega(x) = u(x)(\omega)$, and note that

\begin{eqnarray*}
\|\phi_\omega\| & = & \sup \{|\phi_\omega(x)| : \|x\| \le 1\}\\
&=& \sup \{|u(x)(\omega)| : \|x\| \le 1\}\\
&\le& \sup \{\|u(x)\| : \|x\| \le 1\}\\
&=& \|u\| ~.
\end{eqnarray*}

Now conclude that if  $[a_{ij}] \in M_n(E)$ and $u(a_{ij}) = f_{ij}$, then
\begin{eqnarray*}
\|u_n([a_{ij}])\| &=& \sup_{\omega \in \Omega} \|[f_{ij}(\omega)]\|\\
&=& \sup_{\omega \in \Omega} \|[\phi_\omega(a_{ij})]\|\\
&=& \sup_{\omega \in \Omega} \|(\phi_\omega)_n([a_{ij}])\|\\
&\le& \sup_{\omega \in \Omega} \|(\phi_\omega)_n\| \cdot \|[a_{ij}]\|_{M_n(E)}\\
&=& \sup_{\omega \in \Omega} \|\phi_\omega\| \cdot \|[a_{ij}]\|_{M_n(E)}\\
&\le&  \|u\| \cdot \|[a_{ij}]\|_{M_n(E)}~,
\end{eqnarray*}

where we have used the fact that $\|\phi_\omega\|_{cb} =
\|\phi_\omega\|$ as $\phi_\omega \in B(E,\C)$, and we hence find that
$\|u\|_{cb} \le \|u\|$.
\end{pf}

\begin{prop}
Let $E \subset B(H)$ and $F \subset B(K)$ be operator spaces and $u
\in B(E, F)$. Then $||u||_{cb} = ||u||$ in the following cases:
\begin{enumerate}

\item $Rank\,(u) \leq 1$.

\item $C^*(F)$ is commutative.

\item $E = F = R$ or $E = F= C$.

\end{enumerate}

\end{prop}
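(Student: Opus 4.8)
In each case the inequality $\|u\| \le \|u\|_{cb}$ is automatic, so the plan is to prove only $\|u\|_{cb} \le \|u\|$, and to do so by reducing everything to Roger Smith's Lemma \ref{rrsmlem}. For (2), I would first recall that a commutative $C^*$-algebra is isometrically $*$-isomorphic to some $C_0(\Omega)$, and, after adjoining a unit, embeds completely isometrically into $C(\Omega')$ for a compact Hausdorff space $\Omega'$; since an injective $*$-homomorphism of $C^*$-algebras is automatically a complete isometry, the chain $F \subset C^*(F) \hookrightarrow C(\Omega')$ is completely isometric, so the matrix norms on $M_n(F)$ coincide with those induced from $M_n(C(\Omega'))$. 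Thus $u$ may be regarded as a bounded map $E \to C(\Omega')$ without altering either $\|u\|$ or $\|u\|_{cb}$, and the last assertion of Lemma \ref{rrsmlem} gives $\|u\|_{cb} = \|u\|$ at once.

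For (1), assuming $u \neq 0$, I would write $u(x) = \phi(x) b_0$ with $\phi \in E^*$ and a fixed $b_0 \in F$. Then $u_n([a_{ij}]) = [\phi(a_{ij})] \otimes b_0$ inside $M_n(B(K)) = M_n \otimes B(K)$, and since the operator norm of an elementary tensor factors as $\|C \otimes b_0\| = \|C\|_{M_n}\|b_0\|$, one gets $\|u_n([a_{ij}])\| = \|\phi_n([a_{ij}])\|_{M_n}\|b_0\|$ and hence $\|u_n\| = \|\phi_n\|\,\|b_0\|$. Taking the supremum over $n$ gives $\|u\|_{cb} = \|\phi\|_{cb}\|b_0\|$; but $\phi\colon E \to \C = C(\{\mathrm{pt}\})$, so Lemma \ref{rrsmlem} yields $\|\phi\|_{cb} = \|\phi\|$, and therefore $\|u\|_{cb} = \|\phi\|\,\|b_0\| = \|u\|$. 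In effect (1) is the rank-one incarnation of (2).

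For (3), I would argue directly from the concrete realizations $C = B(\C, \ell_2)$ and $R = B(\ell_2, \C)$ of the column and row operator spaces. A bounded map $u\colon C \to C$ is, on the underlying Hilbert space, a single operator $T \in B(\ell_2)$ acting by post-composition $u(\xi) = T \circ \xi$, with $\|T\| = \|u\|$; under the identification $M_n(C) = B(\C^n, \ell_2^{(n)})$ the amplification $u_n$ is precisely post-composition by $\mathrm{diag}(T, \ldots, T)$, whose norm is $\|T\|$, so $\|u_n\| \le \|T\| = \|u\|$ for every $n$. The row case is symmetric, with $T$ acting by pre-composition under $M_n(R) = B(\ell_2^{(n)}, \C^n)$. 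I expect this to be the only part needing genuine unpacking, namely pinning down the matricial norm structure of $R$ and $C$ and recognizing $u_n$ as the ampliation of one fixed operator; once that is in hand the estimate is immediate, whereas (1) and (2) are essentially corollaries of Lemma \ref{rrsmlem}. The one small technical point to verify in (1) is the multiplicativity $\|C \otimes b_0\| = \|C\|\,\|b_0\|$ of the norm on elementary tensors.
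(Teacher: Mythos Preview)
Your proposal is correct and follows essentially the same route as the paper in all three parts: reducing (1) and (2) to Roger Smith's Lemma \ref{rrsmlem}, and handling (3) by recognizing that any $u\in B(C)$ corresponds to a fixed $T\in B(\ell_2)$ whose $n$-fold amplification $\mathrm{diag}(T,\dots,T)$ implements $u_n$. The only cosmetic difference is that in (1) you invoke the equality $\|C\otimes b_0\| = \|C\|_{M_n}\|b_0\|$ for elementary tensors, whereas the paper is content with the inequality $\|[\phi(a_{ij})f]\| \le \|\phi_n(a)\|\,\|f\|$; and in (2) you spell out the Gelfand embedding $F\subset C^*(F)\hookrightarrow C(\Omega')$ that the paper leaves implicit.
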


\begin{pf}

\begin{enumerate}
\item If $0 \neq f \in u(E)$, then, clearly there exists $\phi \in
  E^*$ such that $u(e) = \phi(e)f ~\forall e \in E$. Deduce from the
  case $N=1$ of Roger Smith's Lemma \ref{rrsmlem} that $\|\phi\|_{cb}
  = \|\phi\|$, and hence if $a =[a_{ij}] \in M_n(E)$, then $\|u_n(a)\|
  = \|[\phi(a_{ij})f]\| \le \|\phi_n(a)\| \cdot \|f\| \le
  \|\phi\|\cdot \|a\| \cdot \|f\| = \|u\| \cdot \|a\|$ and hence
  indeed $\|u\|_{cb} = \|u\|$.

\item This is an immediate consequence of Roger Smith's Lemma \ref{rrsmlem}.

\item Let us discuss the case of $C:= \overline{span}\{e_{i1} : i \geq
  1\} \subset B(\ell_2)$ which clearly admits the identification $C
  \cong \ell^2 \cong B(\C, \ell^2)$ thus:
\[
C(\xi) = \sum_{i=1}^\infty \lambda_i e_{i1} \leftrightarrow \xi =
\sum_{i=1}^\infty \lambda_i e_i \leftrightarrow \rho_\xi(\alpha) =
\alpha \xi.
\]

Observe now that if $u \in B(C)$ corresponds as above to $u' \in
B(\ell^2)$ and to $u''\in B(B(\C,\ell^2))$, then
\[
u(C(\xi)) = C(u'\xi) \mbox{ and } u"(\rho_\xi) =\rho_{u'\xi}~,
\]
 and hence it follows that $\|u\|_{cb} = \|u"\|_{cb} = \|u'\| =
 \|u\|$, where we have used the obvious fact that
\[
\|u"_n([\rho_{\xi_{ij}}]\| = \|[\rho_{u'\xi_{ij}}]\| = \|\left[\begin{array}{cccc} u" & 0 & \cdots & 0\\ 0 & u" & \cdots & 0\\ \vdots & \vdots & \ddots & \vdots\\
  0 & 0 & \cdots & u"\end{array} \right]
[\rho_{\xi_{ij}}]\| \le \|u"\| \cdot \|[\rho_{\xi_{ij}}]\|~.
\]

The case of $R$ is proved similarly.

Note that $R$ and $C$ also have Hilbert space structures with o.n.b.'s
$\{ e_{1j}\}$ and $\{e_{i1}\}$ respectively.
\end{enumerate}
\end{pf}

\begin{prop}
For all $u \in CB(R, C)$ (resp., $u \in CB(C, R)$, $||u||_{cb} = ||u||_{HS}$.
\end{prop}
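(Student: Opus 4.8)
The plan is to reduce both cases to a single operator‑norm inequality by computing directly with the matrix norms on $M_n(C)$ and $M_n(R)$. I will carry out the case $u \in CB(C,R)$ in detail; the case $u \in CB(R,C)$ is entirely symmetric (interchange the roles of $e_{k1}$ and $e_{1k}$). First I would fix the matrix of $u$: writing $u(e_{k1}) = \sum_j a_{jk}e_{1j}$ and $A = [a_{jk}]$, the identifications $C \cong \ell_2 \cong R$ give $\|u\|_{HS}^2 = \sum_k \|u(e_{k1})\|^2 = \sum_{j,k}|a_{jk}|^2 = \|A\|_{HS}^2$. Next I would record the two matrix‑norm formulas that drive everything: for $X_k, Y_j \in M_n$,
\[
\Big\|\sum_k X_k \otimes e_{k1}\Big\|_{M_n(C)}^2 = \Big\|\sum_k X_k^*X_k\Big\|, \qquad \Big\|\sum_j Y_j \otimes e_{1j}\Big\|_{M_n(R)}^2 = \Big\|\sum_j Y_jY_j^*\Big\|.
\]
Both come in one line from the definition of $\|\cdot\|_n$ as the norm inherited from $B(H^{(n)})$, by letting the matrix act on a tuple of vectors and optimising.

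With these in hand, for $x = \sum_k X_k \otimes e_{k1} \in M_n(C)$ one finds $u_n(x) = \sum_j Y_j \otimes e_{1j}$ with $Y_j = \sum_k a_{jk}X_k$, so that, with $G := A^*A \ge 0$,
\[
\|u_n(x)\|_{M_n(R)}^2 = \Big\|\sum_j Y_jY_j^*\Big\| = \Big\|\sum_{k,l} G_{lk}\,X_kX_l^*\Big\|.
\]
Thus the entire statement reduces to proving the operator inequality
\[
\Big\|\sum_{k,l} G_{lk}\,X_kX_l^*\Big\| \le \operatorname{tr}(G) = \|A\|_{HS}^2 \quad\text{whenever}\quad \Big\|\sum_k X_k^*X_k\Big\| \le 1,
\]
together with the fact that the value $\|A\|_{HS}^2$ is attained. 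The hard part is this upper bound.

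The key device I would use to resolve it is to \emph{diagonalise} $G$. Writing $G = U\operatorname{diag}(\lambda_1,\lambda_2,\dots)U^*$ with $\lambda_m \ge 0$ and $\sum_m \lambda_m = \operatorname{tr}(G)$, and setting $Z_m = \sum_k \overline{U_{km}}\,X_k$, short computations using unitarity of $U$ give simultaneously
\[
\sum_{k,l} G_{lk}\,X_kX_l^* = \sum_m \lambda_m Z_mZ_m^*, \qquad \sum_m Z_m^*Z_m = \sum_k X_k^*X_k.
\]
The hypothesis now reads $\|\sum_m Z_m^*Z_m\| \le 1$, which forces $Z_m^*Z_m \le 1$, hence $\|Z_m\| \le 1$, for every $m$. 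Therefore, for any unit vector $\xi$,
\[
\Big\langle \sum_m \lambda_m Z_mZ_m^*\,\xi,\xi\Big\rangle = \sum_m \lambda_m \|Z_m^*\xi\|^2 \le \sum_m \lambda_m = \|A\|_{HS}^2,
\]
so $\|u_n\| \le \|A\|_{HS}$ for every $n$, whence $\|u\|_{cb} \le \|u\|_{HS}$.

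For the reverse inequality I would exhibit a single good test element. Taking $X_k = e_{1k}$ gives $\sum_k X_k^*X_k = \sum_k e_{kk} = 1$, so $x = \sum_k X_k \otimes e_{k1}$ satisfies $\|x\|_{M_n(C)} = 1$; and then $Y_j = \sum_k a_{jk}e_{1k}$ yields $\sum_j Y_jY_j^* = \big(\sum_{j,k}|a_{jk}|^2\big)e_{11} = \|A\|_{HS}^2\,e_{11}$, so $\|u_n(x)\|_{M_n(R)} = \|A\|_{HS}$. In infinite dimensions one applies this to finite truncations and lets the index set exhaust, which simultaneously shows that a CB map is automatically Hilbert--Schmidt. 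Combining the bounds gives $\|u\|_{cb} = \|u\|_{HS}$, and the case $u \in CB(R,C)$ follows identically, the only change being that the two matrix‑norm formulas switch roles and one reduces instead to $\|\sum_m \lambda_m Z_m^*Z_m\| \le \sum_m \lambda_m$ under $\|\sum_m Z_mZ_m^*\| \le 1$.
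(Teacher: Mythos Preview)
Your argument is correct. The lower bound via the explicit test element is essentially identical to the paper's (the paper also plugs in matrix units and lets the truncation grow), but your upper bound takes a genuinely different route. The paper, working in the $CB(R,C)$ case, expands $\|\sum_j x_j\otimes u(e_{1j})\|_{M_n(C)}^2=\|\sum_i(\sum_j u_{ij}x_j)^*(\sum_j u_{ij}x_j)\|$, applies the triangle inequality to get $\sum_i\|\sum_j u_{ij}x_j\|^2$, and then bounds each summand by the matrix factorisation $\|\sum_j u_{ij}x_j\|^2\le\|\sum_j x_jx_j^*\|\cdot\sum_j|u_{ij}|^2$; summing over $i$ gives the factor $\|u\|_{HS}^2$. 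In effect the paper decomposes row by row in the coefficient matrix $A$. You instead pass to the positive matrix $G=A^*A$, diagonalise it, and reduce to the clean positivity estimate $\sum_m\lambda_m Z_mZ_m^*\le(\sum_m\lambda_m)I$ under $\sum_m Z_m^*Z_m\le I$. Your approach is a bit more structural and avoids the somewhat ad hoc factorisation trick; the paper's approach has the virtue of never invoking a spectral decomposition and works entirely with finite sums at each step. One small point worth making explicit in your write-up: for the upper bound you are implicitly working on the dense subspace of finite sums $\sum_k X_k\otimes e_{k1}$ (so the relevant $G$ is a finite truncation $A_N^*A_N$ with $\operatorname{tr}(A_N^*A_N)\le\|A\|_{HS}^2$), and then extending by continuity; the paper states this density reduction explicitly.
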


\begin{pf}
First consider $u \in CB(R,C)$. (The case of $u \in CB(C,R)$ is
treated analogously.) Notice that if
\[x_n = \left( \begin{array}{cccc}e_{11} & 0 & \cdots & 0\\e_{12} & 0 & \cdots &
  0\\\vdots & \vdots & \vdots & \vdots\\e_{1n} & 0 & \cdots & 0 \end{array}
\right) \in M_n(R)~, \]
then $\|x_n\| = 1$, and hence $\|u_n(x_n)\|_{M_n} \le \|u\|_{cb}$.

Hence,
\[\|u\|_{cb} \ge \|u_n(x_n)\| = \|\left( \begin{array}{cccc}u(e_{11}) & 0 & \cdots & 0\\u(e_{12}) & 0
  & \cdots &  0\\\vdots & \vdots & \vdots & \vdots\\u(e_{1n}) & 0 & \cdots & 0 \end{array}
\right)\| = \|\sum_{j=1}^n u(e_{1j}^*)u(e_{1j})\|^\frac{1}{2}~.\]

So, if $u(e_{1j}) = \sum_{i=1}^\infty u_{ij}e_{i1}$, then
\begin{eqnarray*}
\|\sum_{j=1}^n u(e_{1j}^*)u(e_{1j})\|^\frac{1}{2} &=& \|\sum_{j=1}^n \sum_{i,k = 1}^\infty (u_{ij} e_{i1})^*(u_{kj} e_{k1})\|^\frac{1}{2}\\
&=& \|\sum_{j=1}^n\sum_{i,k = 1}^\infty \overline{u_{ij}}u_{kj}e_{1i}e_{k1}\|^\frac{1}{2}\\
&=& \|\sum_{j=1}^n\sum_{i,k = 1}^\infty \overline{u_{ij}}u_{kj} \delta_{ik}e_{11}\|^\frac{1}{2}\\
&=&(\sum_{j=1}^n \sum_{k=1}^\infty |u_{kj}|^2)^\frac{1}{2}~.
\end{eqnarray*}

Letting $n \rightarrow \infty,$ we see that $\|u\|_{HS} \le \|u\|_{cb}$.

For the reverse inequality, since the linear span of the $e_{1j}$'s is
dense in $R$, deduce that $\{\sum_{j=1}^mx_j \otimes e_{1j}: m \in \N,
x_1, \cdots, x_m \in M_n(\C)\}$ is dense in $M_n(\C) \otimes R =
M_n(R)$, so it suffices to verify that
\[\| \sum_{j=1}^m x_j \otimes u(e_{1j})\|_{M_n(C)} \le \|u\|_{HS}~\cdot~\|\sum_{j=1}^m x_j \otimes e_{1j}\|_{M_n(R)}\]

As $R$ and $C$ are both isometric to Hilbert space, $u$ is in
particular a bounded operator between Hilbert spaces, and its matrix
with respect to the natural orthonormal bases for domain and range is
$((u_{ij}))$ as above. And we see that
\begin{eqnarray*}
\| \sum_{j=1}^m x_j \otimes u(e_{1j})\|_{M_n(C)}&=& \|\sum_{j=1}^m \left( x_j \otimes \sum_{i=1}^\infty   u_{ij}e_{i1}\right) \|_{M_n(C)}\\
&=& \| \sum_{i=1}^\infty \left( \sum_{j=1}^m u_{ij}x_j \right) \otimes e_{i1}\|_{M_n(C)}\\
&=& \|\left( \sum_{i=1}^\infty \left( \sum_{j=1}^m u_{ij}x_j \otimes e_{i1}\right)\right)^*\left( \sum_{k=1}^\infty \left( \sum_{j=1}^m u_{kj}x_j\right) \otimes e_{k1}\right)\|^\frac{1}{2}_{M_n(C)}\\
&=& \|\left(\sum_{i=1}^\infty \left( \sum_{j=1}^m \overline{u_{ij}}x_j^*\right) \otimes e_{1i}\right)
\left(\sum_{k=1}^\infty \left(\sum_{j=1}^m u_{kj}x_j \right) \otimes e_{k1} \right)\|^\frac{1}{2}_{M_n(C)}\\
&=& \|\sum_{i=1}^\infty \left( \sum_{j=1}^m \overline{u_{ij}}x_j^*\right) \left( \sum_{j=1}^m u_{ij}x_j \right)\|^\frac{1}{2}_{M_n(\C)}\\
&\le& \left( \sum_{i=1}^\infty \| \sum_{j=1}^m u_{ij}x_j\|^2 \right)^\frac{1}{2}~.
\end{eqnarray*}

On the other hand,
\begin{eqnarray*}
\| \sum_{j=1}^m u_{ij} x_j\| &=& \|\left[ \begin{array}{ccc}x_1 & \cdots & x_m\\0 & \cdots & 0\\ \vdots & \vdots & \vdots\\0 & \cdots & 0 \end{array} \right] \left[ \begin{array}{cccc}u_{i1} & 0 & \cdots & 0\\u_{i2} & 0 & \cdots & 0\\ \vdots & \vdots & \vdots & \vdots\\ u_{i2} & 0 & \cdots & 0 \end{array} \right] \|\\
& \le & \|\left[ \begin{array}{ccc}x_1 & \cdots & x_m\\0 & \cdots & 0\\ \vdots & \vdots & \vdots\\0 & \cdots
    & 0 \end{array} \right] \|\cdot \| \left[ \begin{array}{cccc}u_{i1} & 0 & \cdots & 0\\u_{i2} & 0 & \cdots & 0\\ \vdots & \vdots & \vdots & \vdots\\ u_{i2} & 0 & \cdots & 0 \end{array} \right] \|\\
&=& \| \sum_{j=1}^mx_jx_j^*\|^\frac{1}{2} \left( \sum_{j=1}^m  |u_{ij}|^2 \right)^\frac{1}{2}~,
\end{eqnarray*}

and hence we see that
\begin{eqnarray*}
\| \sum_{j=1}^m x_j \otimes u(e_{1j})\|_{M_n(C)}&\le& \left( \sum_{i=1}^\infty \|\sum_{j=1}^m x_jx_j^*\| \sum_{j=1}^m |u_{ij}|^2 \right)^\frac{1}{2}\\
& \le & \left( \|\sum_{j=1}^m x_jx_j^*\| \sum_{i,j=1}^\infty |u_{ij}|^2 \right)^\frac{1}{2}~\\
&=& \| \sum_{j=1}^m x_j \otimes e_{1j}\|_{M_n(R)} \|u\|_{HS}~.
\end{eqnarray*}
\end{pf}

\begin{defn}
Let $E$ and $F$ be operator spaces. A linear map $u : E \ra F$ is said to be a

\begin{enumerate}

\item complete isomorphism \index{complete! isomorphism} if $u$ is a linear isomorphism such that $u$ and $u^{-1}$ are both CB.

\item complete contraction \index{complete! contraction} if $u$ is CB with $||u||_{cb}\leq 1$.

\item complete isometry \index{complete! isometry} if $u_n$ is an isometry for all $n \geq 1$.

\end{enumerate}

\end{defn}

\begin{cor}
$R$ and $C$ are not isomorphic to each other as operator spaces.
\end{cor}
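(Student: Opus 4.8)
The plan is to argue by contradiction, leveraging the immediately preceding proposition, which computes the CB-norm of any map in $CB(R,C)$ or $CB(C,R)$ as its Hilbert--Schmidt norm. The crucial consequence to extract is that \emph{every} completely bounded map between $R$ and $C$ is Hilbert--Schmidt, hence compact, when regarded as an operator between the underlying Hilbert spaces.

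First I would suppose, toward a contradiction, that there exists a complete isomorphism $u : R \ra C$; by the definition just given, this means that $u$ is a linear isomorphism with $u \in CB(R,C)$ and $u^{-1} \in CB(C,R)$. Recall that both $R$ and $C$ are infinite-dimensional and carry Hilbert space structures (with orthonormal bases $\{e_{1j}\}$ and $\{e_{i1}\}$, as remarked at the end of the earlier proposition), each isometric to $\ell^2$. Viewing $u$ as a bounded operator between these copies of $\ell^2$, the preceding proposition gives $\|u\|_{HS} = \|u\|_{cb} < \infty$, so $u$ is a Hilbert--Schmidt operator, and in particular compact.

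Next I would exploit invertibility: since $u^{-1}$ is CB it is bounded, and the composite $u^{-1}u$ equals the identity operator $\mathrm{Id}_R$. But the composition of a compact operator with a bounded operator is again compact, so $\mathrm{Id}_R$ would be compact --- which is impossible, since the identity on an infinite-dimensional Hilbert space maps the closed unit ball onto itself, and that ball is not relatively norm-compact. This contradiction shows that no complete isomorphism $u: R \ra C$ can exist, i.e., $R$ and $C$ are not isomorphic as operator spaces.

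The only genuine content lies in the preceding proposition; once one knows that CB-maps between $R$ and $C$ are forced to be Hilbert--Schmidt, the remaining compactness argument is entirely routine, so I do not anticipate a serious obstacle. The single point demanding care is to invoke the \emph{compactness} of $u$ rather than merely its boundedness: boundedness alone yields nothing, since $R$ and $C$ are trivially isomorphic as Banach spaces (both being $\ell^2$). It is precisely the rigidity that completely bounded maps inherit from the operator space structure --- the identity $\|\cdot\|_{cb} = \|\cdot\|_{HS}$ --- that separates $R$ from $C$.
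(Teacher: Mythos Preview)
Your proof is correct and follows essentially the same approach as the paper: both argue by contradiction, invoke the preceding proposition to conclude that a putative complete isomorphism $u$ (and its inverse) must be Hilbert--Schmidt and hence compact, and then observe that this is impossible on an infinite-dimensional Hilbert space. Your version spells out the final contradiction slightly more explicitly via $u^{-1}u = \mathrm{Id}_R$, but the substance is identical.
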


\begin{pf}
Suppose $u: R \ra C$ and $u^{-1}:C \ra R$ are both CB maps. Then $u$
and $u^{-1}$ are Hilbert Schmidt and hence compact on an infinite
dimensional Hilbert space, a contradiction.
\end{pf}

It is a fact that even for  $n$-dimensional row and column spaces $R_n$ and $C_n$, we have
$$\inf\{ ||u||_{cb}||u^{-1}||_{cb} : u \in CB(R_n, C_n)\, \text{invertible}\} = n.$$
This is ``worst possible'' in view of the fact that for any two $n$-dimensional operator spaces $E$ and $F$, $$\inf\{ ||u||_{cb}||u^{-1}||_{cb} : u \in CB(F, F)\, \text{invertible}\} \leq n.$$

\begin{prop}
Let $A$ be a commutative $C^*$-algebra and $u \in B(A, B(K))$. If $u$ is positive, then $u$ is completely positive.
\end{prop}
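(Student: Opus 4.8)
The plan is to show directly that each amplification $u_n : M_n(A) \to M_n(B(K))$ is positive, which is exactly the assertion that $u$ is completely positive. By Gelfand's theorem I identify the commutative $C^*$-algebra $A$ with $C_0(\Omega)$ for a locally compact Hausdorff space $\Omega$; for clarity I present the argument when $\Omega$ is compact, i.e.\ $A = C(\Omega)$ is unital, the general case needing only that the partition of unity below be taken subordinate to a cover of the compact set $\{\omega : \|F(\omega)\| \ge \epsilon\}$. Under the identification $M_n(A) \cong C(\Omega, M_n)$, an element $F = [a_{ij}]$ is positive precisely when the scalar matrix $F(\omega) = [a_{ij}(\omega)]$ lies in $M_n(\C)_+$ for every $\omega \in \Omega$; this pointwise description is the feature of the commutative setting I intend to exploit.

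The heart of the argument is an approximation lemma: every $F \in C(\Omega, M_n)_+$ is a uniform limit of finite sums $\sum_\alpha g_\alpha \otimes P_\alpha$ with $g_\alpha \in C(\Omega)_+$ and $P_\alpha \in M_n(\C)_+$ constant matrices. I would obtain this from a partition of unity. Since $F$ is continuous and $\Omega$ compact, given $\epsilon > 0$ I can cover $\Omega$ by finitely many open sets $U_\alpha$ on which $F$ varies by less than $\epsilon$ in norm, choose points $\omega_\alpha \in U_\alpha$, and pick a subordinate partition of unity $\{g_\alpha\}$ with $g_\alpha \ge 0$ and $\sum_\alpha g_\alpha = 1$. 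Then $\sum_\alpha g_\alpha(\omega) F(\omega_\alpha)$ approximates $F(\omega)$ uniformly, because $g_\alpha(\omega) \neq 0$ forces $\omega \in U_\alpha$ and hence $\|F(\omega) - F(\omega_\alpha)\| < \epsilon$; moreover each $P_\alpha := F(\omega_\alpha)$ is a positive constant matrix.

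With the lemma in hand the conclusion is immediate. The map $u_n$ is bounded (indeed $\|u_n\| \le n^2\|u\|$ is immediate from the triangle inequality), hence continuous, so it suffices to check that $u_n(g_\alpha \otimes P_\alpha) \ge 0$. Since the entries of $P_\alpha$ are scalars, $u_n(g_\alpha \otimes P_\alpha) = P_\alpha \otimes u(g_\alpha)$ in $M_n \otimes B(K) = M_n(B(K))$. Here $u(g_\alpha) \in B(K)_+$ because $u$ is positive and $g_\alpha \ge 0$, while $P_\alpha \ge 0$, and the tensor product of two positive operators is positive: writing each as a square, $P_\alpha \otimes u(g_\alpha) = (P_\alpha^{1/2} \otimes u(g_\alpha)^{1/2})^2$ is the square of a self-adjoint operator. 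Summing over $\alpha$ and passing to the limit yields $u_n(F) \ge 0$, so $u_n$ is positive for every $n$ and $u$ is completely positive.

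I expect the only genuine obstacle to be the approximation lemma; everything afterward is a routine application of positivity together with the elementary inclusion $M_n(\C)_+ \otimes B(K)_+ \subseteq M_n(B(K))_+$. The partition-of-unity step is exactly where commutativity of $A$ is indispensable, since it is the pointwise positivity of elements of $M_n(C(\Omega))$ that lets a genuinely positive (rather than a priori completely positive) map be tested one point of $\Omega$ at a time.
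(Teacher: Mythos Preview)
Your proof is correct and follows essentially the same route as the paper: identify $A$ with $C(\Omega)$, use a partition of unity to approximate positive elements of $C(\Omega,M_n)$ by finite sums $\sum_\alpha g_\alpha \otimes P_\alpha$ with $g_\alpha \in C(\Omega)_+$ and $P_\alpha \in M_n(\C)_+$, observe that $u_n$ sends each such elementary tensor to a positive element, and pass to the limit by continuity. Your write-up is in fact more detailed than the paper's sketch, which simply asserts that ``an easy partition of identity argument'' furnishes the needed approximation.
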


\begin{pf}
We first indicate the proof for finite-dimensional $A$.

Thus, suppose $\text{dim}\, A =n$. Then
$$A \simeq \ell^{\infty}_n = \left\{ \left(\begin{array}{ccc}
  \lambda_1 & \cdots & 0\\ \vdots & \ddots & \vdots \\ 0 & \cdots &
  \lambda_n \end{array} \right) : \lambda_i \in \mbb{C} \right\}.$$

Now $u : \ell^{\infty}_n \ra B(K)$ is positive if and only if $a_j:=u(e_{j}) \geq 0, \forall\, j$. Finally, we get $$u(x) = u(\sum_j x_j e_j) \sim V^* x V ~\forall x \in \ell^{\infty}_n ~,$$
where $V= \left[ \begin{array}{c}a_1^{\frac{1}{2}}\\ \vdots \\ a_n^{\frac{1}{2}} \end{array} \right]$ and hence $u$ is CP.

We now give a sketch of the proof in the general case. It suffices to
show that if $X$ is a compact Hausdorff space, and $\phi:C(X)
\rightarrow B$ is a positive map into a unital $C^*$-algebra $B$, then
$\phi$ is $n$-positive for an arbitrarily fixed $n$. The first thing
to observe is that there exist natural identifications of $M_n(C(X))
\cong C(X;M_n(\mbb{C}))$ with $C^*$-completions of the algebraic tensor
product $C(X) \otimes M_n(\mbb{C})$ whereby the elementary tensor $f
\otimes T$ corresponding to $f(\cdot)T$; and further $\phi_n = \phi
\otimes \Id_{M_n(\mbb{C})}$ which maps elementary tensors $f \otimes
T$, with $f,T \ge 0$ to the element $\phi(f) \otimes T$ of $B \otimes
M_n(\C)$ which is positive, being the product of two commuting
positive elements. Finally, an easy partition of identity argument
shows that if $P(\cdot) \in C(X;M_n(\C)$ is positive, then $P$ is
approximable by elements of the form $\sum_i f_i \otimes T_i$, with
$f_i \in C(X)_+, T_i \in M_n(\C)_{+}$, so we may conclude that indeed
$\phi_n(P) \ge 0$.

\end{pf}

The corresponding fact for CB maps is false: i.e., if $u
:\ell^{\infty}_n \ra B(K)$ and $u(e_j) = a_j$ (with $\{e_j:1 \le j \le
n\}$ denoting the standard basis of $\ell^{\infty}_n$), then $||u|| (=
\sup \{ ||\sum_k z_j a_j || : |z_j| \leq 1, z_j \in \mbb{C} \} \neq
||u||_{cb}$ - it turns out that
\[
||u||_{cb} = \inf \{ ||\sum_j x_j x_j^*||^{1/2} ||\sum_j 
y_j^*y_j||^{1/2} : a_j = x_j y_j, x_j, y_j \in B(K)\}.
\]

\begin{prop}[Schur multipliers] \index{Schur multipliers}
\begin{enumerate}
\item Let $\varphi_{ij} \in \mbb{C}, 1 \leq i, j \leq n$ and consider
  $u_{\varphi} : M_n \ra M_n$ given by $[a_{ij}] \mapsto
  [a_{ij}\varphi_{ij}]$. Then,
\[
||u_{\varphi}||_{cb} = \inf \{ \sup_i ||x_i||_H\, \sup_j||y_j||_H :
\varphi_{ij} = \langle x_i, y_j \rangle_H, x_i, y_j \in H \}
\]
where the infimum runs over all such possible Hilbert spaces $H$ and
vectors $x_i, y_j$.Also, $u_\varphi$ is CP if and only if
$[\varphi_{ij}] \geq 0$.
\vspace*{1mm}

\item Let $G$ be a discrete group and $C^*_\lambda (G)$ \index{group
  $C^*$-algebra! reduced} denote the norm-closure in $B(\ell_2(G))$ of
  $\{\lambda_g : g \in G \}$, its `reduced group $C^*$-algebra' where
  $\lambda : G \ra B(\ell^2(G))$ is the left regular representation of
  $G$ given by $\lambda_g(\xi_h) = \xi_{gh}$, where $\{\xi_g : g\in G
  \}$ is the canonical orthonormal basis of $\ell_2(G)$. Let $\varphi
  : G \ra \mbb{C}$ be any function. Consider $T_\varphi : span\{
  \lambda_g : g\in G\} \ra B(\ell^2(G))$ given by $T_\varphi (\sum_g
  c_g \lambda_g) = \sum_g c_g \varphi(g) \lambda_g$. Then,
 \[
||T_\varphi||_{cb} = \inf \{ ||x||_{\infty} ||y||_{\infty} : x, y \in
\ell^{\infty}(G , H) \text{ such that } \langle x(t), y(s) \rangle_H =
\varphi (st^{-1} ), \forall\, s, t \in G \}
\]

\end{enumerate}
\end{prop}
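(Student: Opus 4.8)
The plan is to derive both parts from the Fundamental Factorisation Theorem \ref{ff}, the only genuine subtlety being how the group law enters in Part (2); throughout, the bookkeeping of which slot of the inner product is conjugate-linear is routine and I will suppress it. For Part (1) the guiding fact is that every $*$-representation of $M_n$ is, up to unitary equivalence, of the form $a \mapsto a \otimes 1_K$ on $\ell_2^n \otimes K$ (the ``basic fact'' already invoked in the proof of Theorem \ref{kraus}). To get the inequality $\|u_\varphi\|_{cb} \le \inf\{\cdots\}$, given a realisation $\varphi_{ij} = \langle x_i, y_j\rangle_H$ I would define $W,V : \ell_2^n \to \ell_2^n \otimes H$ by $We_i = e_i \otimes x_i$ and $Ve_j = e_j \otimes y_j$, and check the entrywise identity $\langle W^*(a \otimes 1)Ve_j, e_i\rangle = \varphi_{ij}a_{ij}$, i.e. $u_\varphi(a) = W^*(a\otimes 1)V$. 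Since $a \mapsto a\otimes 1$ is a $*$-representation and $\|W\| = \sup_i\|x_i\|$, $\|V\| = \sup_j\|y_j\|$, implication $(2)\Rightarrow(1)$ of Theorem \ref{ff} gives the bound; taking the infimum over realisations finishes this half. For the reverse inequality I would normalise $\|u_\varphi\|_{cb}\le 1$, apply $(1)\Rightarrow(2)$ of Theorem \ref{ff} to write $u_\varphi(a) = W^*\pi(a)V$, pass to the model $\pi(a) = a\otimes 1_K$, and read off the ``diagonal'' vectors $x_i,y_j$ from $We_i,Ve_j$; comparing entries against an arbitrary $a$ forces $\varphi_{ij} = \langle x_i,y_j\rangle$ with $\|x_i\|,\|y_j\|\le 1$.

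For the complete-positivity clause I would argue as follows. If $[\varphi_{ij}]\ge 0$, its Gram factorisation $\varphi_{ij} = \langle x_i,x_j\rangle$ lets me take $V = W$ in the construction above, so $u_\varphi(a) = V^*(a\otimes 1)V$ is in Stinespring form and hence CP by Remark \ref{factcp}. For the converse, $u_\varphi$ CP implies $u_\varphi$ positive; evaluating on the positive rank-one matrix $J = vv^*$ with $v = (1,\dots,1)^t$ (so $J_{ij}=1$) gives $u_\varphi(J) = [\varphi_{ij}]\ge 0$, which is the desired conclusion.

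For Part (2) I would run the very same factorisation scheme on $T_\varphi$, letting the group structure supply the splitting; the engine is that the matrix of $\lambda_g$ in the basis $\{\xi_h\}$ has $(s,t)$-entry $\delta_{g,\,st^{-1}}$. For the upper bound, given $x,y\in\ell^\infty(G,H)$ with $\langle x(t),y(s)\rangle = \varphi(st^{-1})$, I set $W\xi_t = \xi_t\otimes x(t)$, $V\xi_s = \xi_s\otimes y(s)$, $\rho(a) = a\otimes 1_H$, and a one-line entrywise computation gives $T_\varphi(\lambda_g) = V^*\rho(\lambda_g)W$, whence Theorem \ref{ff} yields $\|T_\varphi\|_{cb}\le\|x\|_\infty\|y\|_\infty$; the infimum gives $\le$. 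For the lower bound I would factor $T_\varphi(\lambda_g) = V^*\pi(\lambda_g)W$ with $\|V\|,\|W\|\le\sqrt{\|T_\varphi\|_{cb}}$ via Theorem \ref{ff} and set $U_g := \pi(\lambda_g)$, a unitary representation of $G$. Reading the $(s,t)$-entry with $g = st^{-1}$ and invoking the identity $U_{st^{-1}} = U_sU_t^*$ produces $\varphi(st^{-1}) = \langle U_t^*W\xi_t,\ U_s^*V\xi_s\rangle$, so that $x(t) := U_t^*W\xi_t$ and $y(s) := U_s^*V\xi_s$ are the required families, with $\|x\|_\infty\le\|W\|$ and $\|y\|_\infty\le\|V\|$.

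The hard part, and the only genuinely non-routine step, is this final construction: converting the operatorial factorisation of $T_\varphi$ into \emph{uniformly bounded vector families} indexed by $G$. Everything hinges on replacing the single operator $\pi(\lambda_{st^{-1}})$ by the product $\pi(\lambda_s)\pi(\lambda_t)^*$ of a vector depending only on $s$ against one depending only on $t$ — that is, on the fact that $g\mapsto\pi(\lambda_g)$ is a unitary representation, so the symbol factors separately in its two indices. This is precisely where the group law is used, distinguishing $T_\varphi$ from a Schur multiplier with an arbitrary symbol, and I expect the epsilon-free packaging of this step (together with fixing the inner-product conventions consistently across both parts) to be where most of the care is needed.
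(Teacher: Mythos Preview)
Your proposal is correct, and for Part~(1) it is close in spirit to the paper's proof but differs in execution. For the lower bound, both you and the paper invoke the Fundamental Factorisation Theorem~\ref{ff}; the paper, however, does not pass to the model $\pi(a)=a\otimes 1_K$ but instead extracts the vectors directly via the matrix-unit identity $e_{ij}=e_{i1}e_{1j}$, setting (up to conjugation conventions) $x_i=\pi(e_{1i})Ve_i$ and $y_j=\pi(e_{1j})We_j$, whose norms are bounded by $\|V\|$ and $\|W\|$ since each $\pi(e_{1k})$ is a partial isometry. For the upper bound the paper takes a genuinely different tack: rather than building the Stinespring factorisation $u_\varphi(a)=W^*(a\otimes 1)V$ explicitly as you do, it writes an arbitrary $a\in M_n$ in Gram form $a_{ij}=\langle u_i,v_j\rangle_K$ via polar decomposition and then tensors the two Gram representations to get $\varphi_{ij}a_{ij}=\langle x_i\otimes u_i,\,y_j\otimes v_j\rangle_{H\otimes K}$. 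Your route has the advantage of immediately producing a factorisation through a $*$-representation, so the implication $(2)\Rightarrow(1)$ of Theorem~\ref{ff} yields the $cb$-norm bound at once without a separate amplification argument. You also supply the CP clause (via Gram factorisation for one direction and evaluation on the all-ones matrix $J$ for the other), which the paper's proof omits.

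For Part~(2) the paper gives no argument at all, simply referring the reader to \cite[Theorem~8.3]{Pis03}; your sketch is the standard proof that one finds there, and your identification of the key step --- factoring $U_{st^{-1}}=U_sU_t^*$ so that the two indices separate --- is exactly right.
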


\begin{pf}
(1) Suppose $\phi_{ij} = \langle x_i, y_j \rangle_H, x_i, y_j \in
  H$. If $(a_{ij}) \in (M_n)$, we can - by polar decomposition and
  diagonalisation of positive operators - find vectors $u_i,v_j$ in
  some Hilbert space $K$ such that $a_{ij} = \langle u_i, v_j
  \rangle_K$ and $\|(a_{ij})\| = (\max_{i}\|u_i\|)
  (\max_{j}\|v_j\|)$. Then it follows that

\[\phi_{ij}a_{ij} = \langle x_i \otimes u_i, y_j \otimes v_j  \rangle_{H\otimes K}\]

and hence,
\[\|(\phi_{ij}a_{ij}))\| = (\max_{i}\|x_i\|)  (\max_{j}\|y_j\|)(\max_{i}\|u_i\|)  (\max_{j}\|v_j\|)\]

and we see that

\[ ||u_{\varphi}||_{cb} \leq \inf \{ \sup_i ||x_i||_H\, \sup_j||y_j||_H : \varphi_{ij} = \langle x_i, y_j \rangle_H, x_i, y_j \in H \}
\]

Conversely, by the fundamantal factorisation Theorem \ref{ff}, we may
find a representation $\pi:M_n \rightarrow B(K)$ and operators $V,W
\in B(\ell^2_n,K)$ for some Hilbert space $K$, with $\|V\| \cdot \|W\|
= \|u_{\varphi}\|_{cb}$, such that
\[u_{\varphi}(x) = V^*\pi(x)W~.\]

Then,

\begin{eqnarray*}
|\phi_{ij}| &=& |\langle u_\phi(e_{ij}) e_j, e_i \rangle |\\
&=& |\langle \pi(e_{ij})We_j, Ve_i \rangle |\\
&=& |\langle \pi(e_{i1})\pi(e_{1j})We_j, Ve_i \rangle |\\
&=& |\langle \pi(e_{1j})We_j, \pi(e_{1i})Ve_i \rangle |\\
& \leq & \|u_{\varphi}\|_{cb}~.
\end{eqnarray*}

\bigskip (ii) See Theorem 8.3 in \cite{Pis03}.
\end{pf}

It is true, although a bit non-trivial, that $||u_\varphi||_{cb} =
||u_\varphi||$, while, however, $||T_{\varphi}||_{cb} \neq
||T_{\varphi}||$.

\section{Ruan's Theorem and its applications}

\subsection{Ruan's Theorem}

Let us call a complex vector space $E$ an {\bf abstract operator
  space} \index{operator space! abstract} if it comes equipped with a
family of norms $||\cdot ||_n$ on $M_n(E), n\geq 1$,which satisfy the
properties:

\begin{enumerate}

\item $ \left\| \left[ \begin{array}{cc} x & 0\\ 0 & y \end{array} \right] \right\|_{m+n} = \text{max}\ \left( ||x||_n, ||y||_m \right)$ for all $x   \in M_n(E)$, $y \in M_m(E)$;

\item $||a x b ||_n \leq ||a||_{M_{n\times m}} ||x||_m ||b||_{M_{m \times n}}$ for all $a, b^* \in M_{n\times m}$, $x \in M_m(E)$.
\end{enumerate}

In contrast, any vector subspace of a $B(H)$ is called a {\bf concrete operator space} \index{operator space! concrete} (with respect to the norms $\| \cdots \|_n = \|\cdot \|_{M_n(B(H))}$).

The equivalence of these two notions is the content of {\bf Ruan's theorem} established in his PhD Thesis \cite{Rua87}.

\begin{thm}[Ruan]\label{ruan} \index{Theorem! Ruan's}
Any abstract operator space is completely isometrically isomorphic to a concrete operator space; i.e., for any abstract operator space $(E, \{\| \cdot \|_n : n \ge 1\})$, there is a Hilbert space $H$ and a linear map
$J : E \ra B(H)$ such that, for each $n \geq 1$ and $x = [x_{ij}] \in M_n(E)$, $||x||_n = \|[J(x_{ij})]\|_{M_n(B(H))}$.
\end{thm}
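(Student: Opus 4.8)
The plan is to realize the abstract operator space $E$ concretely by assembling, out of all ``sufficiently many'' completely contractive maps from $E$ into matrix algebras, a single representation on a big Hilbert space. The starting point is the observation that the two Ruan axioms are precisely what is needed to make the following construction well-behaved. For each $n \geq 1$ let $S_n$ denote the set of all linear maps $\phi : E \to M_n$ with $\|\phi\|_{cb} \leq 1$, where the CB-norm is computed using the \emph{given} abstract matrix norms on the domain. One then forms the direct sum Hilbert space $H = \bigoplus_{n \geq 1} \bigoplus_{\phi \in S_n} \ell_2^n$ and defines $J : E \to B(H)$ by letting $J(x)$ act as $\phi(x)$ on the summand indexed by $(n,\phi)$; that is, $J = \bigoplus_{n,\phi} \phi$. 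By construction each block is completely contractive, so $J$ and all its amplifications $J_m$ are contractions, giving immediately the easy inequality $\|[J(x_{ij})]\|_{M_m(B(H))} \leq \|x\|_m$ for every $x \in M_m(E)$.

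The real content is the reverse inequality, namely that $J$ is a \emph{complete isometry}, and this is where the two Ruan axioms must be used in force. What is needed is a supply of completely contractive maps $E \to M_n$ that detect the norm: given $x = [x_{ij}] \in M_m(E)$ with $\|x\|_m = 1$, one must produce some $n$ and some $\phi \in S_m$ (applied at level $m$) achieving $\|\phi_m(x)\| = 1$, or at least approaching it. The standard route is a Hahn-Banach separation argument: one considers, at each fixed level $m$, the convex set of matrices in $M_m(E)$ of norm $\leq 1$, together with the natural ``states'' on $E$ built from pairs of unit vectors, and one uses the two axioms to verify the hypotheses of a matrix-valued (Hahn--Banach / Effros--Winkler-type) separation theorem. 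Concretely, Axiom (1) (the block-diagonal norm is the max) is what guarantees that direct sums of admissible maps remain admissible, so that the index set $S_n$ is closed under the operations needed, and Axiom (2) (the ``$L^\infty$-module'' bimodule bound $\|axb\|_n \leq \|a\|\,\|x\|_m\,\|b\|$) is exactly what forces the separating functional to assemble into a genuinely completely contractive \emph{linear map} into some $M_n$ rather than merely a bounded one.

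The hard part is precisely this separation/Hahn--Banach step: one must show that whenever $\|x\|_m > 1$ there exists a completely contractive $\phi : E \to M_{N}$ with $\|\phi_m(x)\| > 1$. I would carry it out by fixing $x \in M_m(E)$ with $\|x\|_m > 1$, forming the appropriate matrix cone, and applying a matrix-ordered version of Hahn--Banach to separate $x$ from the ``matrix unit ball''; the two axioms are invoked to check that the resulting functional respects the matricial structure (self-adjointness on the operator-system level, and the module property) so that, after a GNS-type / minimal-Stinespring packaging, it yields the desired finite-dimensional completely contractive map. Once such $\phi$ exist for every $x$, they belong to the index set defining $J$, so $\|J_m(x)\|_{M_m(B(H))} \geq \|\phi_m(x)\| > 1$, which combined with the easy direction gives $\|J_m(x)\| = \|x\|_m$ for all $m$ and all $x$, i.e. $J$ is the required complete isometry. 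I expect the technical crux to be verifying that the separating object is genuinely matricial (invoking Axiom (2) to control the off-diagonal behaviour) rather than the final assembly, which is then routine.
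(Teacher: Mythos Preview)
Your overall architecture matches the paper's: build $J$ as a direct sum of completely contractive maps out of $E$, so that the inequality $\|J_m(x)\| \leq \|x\|_m$ is automatic and the content lies in producing, for each fixed $x \in M_n(E)$ of norm $1$, a single completely contractive map that detects its norm. The paper indexes the direct sum over pairs $(n,x)$ rather than over all CB contractions into matrix algebras, but that difference is cosmetic.

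Where you diverge is in the mechanism for producing the norm-detecting map, and here there is a real gap. The paper does \emph{not} run a matrix-convex (Effros--Winkler) separation at a fixed level $m$, and it does \emph{not} land in a finite matrix algebra $M_N$. Instead: Axiom (2) makes the corner embeddings $M_k(E)\hookrightarrow M_{k+1}(E)$ isometric, so ordinary scalar Hahn--Banach on $\bigcup_k M_k(E)$ gives a norm-one functional $\varphi$ with $\varphi(x)=1$. Axiom (1) (block-diagonals have norm equal to the max) then yields the bilinear estimate
\[
\Big|\sum_j \varphi(a_j^* y_j b_j)\Big| \;\leq\; \Big\|\sum_j a_j^* a_j\Big\|^{1/2}\,\Big\|\sum_j b_j^* b_j\Big\|^{1/2}
\]
for contractions $y_j$. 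The genuinely hard step, which the paper cites to Ruan's thesis rather than reproving, is a convexity/separation argument that upgrades the right-hand side to $(\sum_j f_1(a_j^*a_j))^{1/2}(\sum_j f_2(b_j^*b_j))^{1/2}$ for suitable \emph{states} $f_1,f_2$ on $B(\ell^2)$. Only then does a GNS construction produce $v:E\to B\big(L^2(f_2),L^2(f_1)\big)$ with $\|v\|_{cb}\leq 1$ and $\|v_n(x)\|\geq 1$; the target $H_{n,x}=L^2(f_1)\oplus L^2(f_2)$ is typically infinite-dimensional. So your plan to obtain finite-dimensional targets directly from a separation argument is unjustified, and your account of how the two axioms enter (Axiom (1) for ``closure of the index set under direct sums'', Axiom (2) to make the separating functional matricial) does not match what is actually needed: both axioms feed into the single scalar functional $\varphi$ to produce the bilinear estimate above, and the passage from operator norms to states is where the real work hides.
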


The proof of Ruan's theorem bears a strong similarity to the proof of Gelfand Naimark's Theorem that an abstract $C^*$-algebra is isomorphic to a norm-closed (concrete) $*$-subalgebra of $B(H)$ for some Hilbert space $H$.

\begin{pf} [(Ruan's) Theorem \ref{ruan}]
The proof of this theorem mainly depends on proving the following:

\noindent {\bf Claim:} For each $n \geq 1$ and $x \in M_n(E)$ with $||x||_n = 1$, there is a Hilbert space $H_{n, x}$ and an operator $J_{n, x}: E \ra B(H_{n, x})$ satisfying
\begin{enumerate}
\item $||[J_{n,x}(x_{ij})]||_n  = 1$

\item $\forall m$, $\forall y=[y_{ij}] \in M_m(E)$, $||[J_{n,x}(y_{ij})]\||_m \leq ||y||_m$.

\end{enumerate}

Indeed, once we have the above, taking $H := \oplus_{n, x} H_{n, x}$ and $J:=\oplus_{n, x} J_{n, x} : E \ra B(H)$ we observe that $J$ satisfies the properties stated in the theorem.

To prove the claim, fix $n \geq 1$ and an $x \in M_n(E)$ with $\|x\|_n = 1$. Observe first that $(M_k(E), ||\cdot||_k)$ embeds (as the north-east corner, with 0's on the $(n+1)$-th row and column) isometrically in
$(M_{k+1}(E), ||\cdot||_{k+1})$ for all $k \geq 1$ (thanks to condition (2) in the requirements for an abstract operator space). Thus, there exists a $\varphi \in \left( \cup_{k \geq 1} M_k(E) \right)^*$ satisfying $||\varphi || = 1$ and $\varphi (x) = 1$. Then, $\forall\, y \in M_m(E)$ with $||y||_m \leq 1$ and
$\forall\, a , b \in M_{m \times n}, n \ge 1$, we have $|\varphi (a^* y b) | \leq ||a||_{M_{m \times n}} ||b||_{M_{m \times n}}$.

We see, in particular that for each $m, N, n \geq 1$, we have $\forall\, a_j, b_j \in M_{m \times n}$,
$\forall\, y_j \in M_m(E)$ with $||y_j||_m\leq 1$, $1 \leq j \leq N$, we have

\begin{equation}\label{star}
| \sum_j \varphi(a_j^* y_j b_j) | \leq \|\sum_j a_j^*a_j \|^{1/2}\, \|\sum_jb_j^* b_j \|^{1/2}.
\end{equation}

This is true because taking $a = \left[ \begin{array}{c} a_1\\ \vdots \\ a_N \end{array} \right]$, $y$ to be the diagonal matrix $y = y_1 \oplus \cdots \oplus y_N \in M_N(M_m(E))$ and $b = \left[ \begin{array}{c} b_1\\ \vdots \\ b_N \end{array} \right]$, we clearly have $|\varphi (a^*yb)| \leq ||a||_{M_{mN\times n}} \, ||y||_N\, ||b||_{M_{mN \times n}}$, which is precisely the above inequality since $|| [b_1, \ldots, b_N]^T ||_{M_{mN \times n}} = || \sum_j b^*_j b_j||^{1/2}$.

In fact, by some suitable convexity and Hahn Banach separation arguments, the inequality in (\ref{star}) can be improved (see \cite{Rua87}) to:
There exist states $f_1, f_2$ on $B(\ell^2)$ such that
\begin{equation}
| \sum_j \varphi(a_j^* y_j b_j) | \leq (\sum_j f_1(a_j^*a_j))^{1/2} (\sum_j f_2(b_j^*b_j))^{1/2}
\end{equation}
for all $N$, $a_j, b_j $, $ y_j $ as above.

In particular, $\forall\, a, b \in M_{1 \times n}, e \in E \mbox{ with }~\|e\| \le 1$, we have
\[ |\varphi (a^*[e]b)| \leq f_1(a^*a)^{1/2}f_2(b^*b)^{1/2} = ||\hat{a}||_{H_1} ||\hat{b}||_{H_2},
\]

where $H_i := L^2(B(\ell^2), f_i), i = 1, 2$ and $\hat{a}$ represents $a$ as an element of $H_1$ and likewise $\hat{b}$.

Consider $v: E \ra B(H_2, H_1)$ given by $\langle v(e)(\hat{b}), \hat{a}\rangle_{H_1} =\varphi(a^*[e]b)$. Clearly $||v(e)|| \leq ||e||$ for all $e \in E$ and hence $||v|| \leq 1$. Also, for $y =[y_{ij}] \in M_m(E)$,

\begin{eqnarray*}
||v_m(y)|| &  = & \sup \{| \sum_j \langle v(y_{ij})\hat{b}_j, \hat{a}_j | \rangle : \sum_j||\hat{b}_j||^2 \leq 1, \sum_j||\hat{a}_j||^2 \leq | 1\}\\
&=& \sup \{   |\varphi \left([a_1^*, \ldots, a_m]^*\, y\,  [b_1, \ldots, b_m]^T  \right)|  : \sum_j||\hat{b}_j||^2 \leq 1, \sum_j||\hat{a}_j||^2 | \leq 1 \}\\
& \leq & ||y||_m.
\end{eqnarray*}

Let $H_{n, x}= H_1 \oplus H_2$ and $J_{n,x}:E \ra B(H_{n,x})$ be given by $J_{n,x}(e) = \left[\begin{array}{cc} 0 & v(e) \\ 0 & 0 \end{array}\right]$. We find that\[ 1 = \phi(x) = \phi([x_{ij}]) = \sum_{ij} \langle v(x_{ij}) \widehat{e_{1j}}, \widehat{e_{1i}} \rangle = \langle v_n(x) \left[ \begin{array}{c}\widehat{e_{11}}\\ \ddots \\ \widehat{e_{1n}} \end{array} \right], \left[ \begin{array}{c}\widehat{e_{11}}\\ \ddots \\ \widehat{e_{1n}} \end{array} \right] \rangle ,\]
where $\{e_{ij}\}$ is the system of matrix units in $M_n$. Since $\|\widehat{e_{1j}}\|_{H_i}^2 = \sum_j f_i(e_{jj}) \le f_i(1) = 1$, we find that
\[1 = !\phi(x)| \leq \|v_n(x)\| \cdot \left( \sum_j \| \widehat{e_{1j}} \|_{H_1}^2 \right)^{1/2} \cdot \left( \sum_j \| \widehat{e_{1j}} \|_{H_2}^2 \right)^{1/2} \leq \|v_n(x)\|.\]
Thus, $\|[J_{n,x}(x_{ij}]\| = 1$ and the proof is complete.

\end{pf}

\subsection{Some applications and some basic facts.}

$\bullet$ $CB(E, F).$ Let $E$ and $F$ be operator spaces and $G:=CB(E, F)$. For $[x_{ij}] = x \in M_n(G)$, let $||x||_n :=||\tilde{x} : E \ra M_n(F)||_{CB(E,M_n(F))}$, where $\tilde{x}(e):= [x_{ij}(e)]$, $e \in E$. Routine checking shows that the above sequence of norms satisfies  Ruan's axioms and, hence, by Theorem \ref{ruan}, $CB(E,F)$ admits an operator space structure.

$\bullet$ {\em Dual operator space. } \index{operator space! dual} In particular, the dual space $E^*=CB(E, \mbb{C}) = B(E, \mbb{C})$ also inherits an operator space structure. We now see that this operator space structure on $E^*$ is
the appropriate one, in the sense that many properties of Banach dual space structure carry over to this theory.

$\bullet$ {\em Adjoint operator.} \index{ adjoint! CB map} For $u \in CB(E, F)$, its usual adjoint $u^*$ is also a CB map from $F^* \ra E^*$ with $||u^*||_{cb} = ||u||_{cb}$.

$\bullet$ {\em Quotient spaces.} Let $E_2 \subset E_1 \subset B(H)$ be operator spaces. For $x = [x_{ij}] \in M_n(E_1/E_2)$, set $||x||_n = ||q_n(\hat{x})||_{M_n(E_1)/M_n(E_2)}$, where $ x_{ij} = q\hat{x}_{ij}$ for $\hat{x}_{ij} \in E_1$ and $\hat{x} =[\hat{x}_{ij}]$ and $q: E_1 \ra E_1/E_2$ is the canonical quotient map.

The above sequence of norms are seen to be well-defined and to satisfy Ruan's axioms and thus equip the quotient space $E_1/E_2$ with the structure of an operator space.

Note there is also an obvious operator space structure on $E_1/E_2$ via the embedding $E_1/E_2 \subset B(H)/E_2$; but this is not consistent with other properties.

\bigskip \noindent {\bf Analogy with Banach space properties.}

$\bullet$ With $E_1$ and $E_2$ as above,

\[ \left(E_1/E_2 \right)^* \simeq E_2^{\perp} \ \text{and}\ E_2^* \simeq E_1^*/E_2^{\perp} \]
as operator spaces, where of course $E_2^{\perp}\subset E_1^*$ is defined as $E_2^\perp = \{f \in E_1^* : E_2 \subset ~ker(f)\}$.

$\bullet$ The row and column operator spaces are dual to each other, i.e., $R^*\simeq C$ and $C^* \simeq R$ as operator spaces.

$\bullet$ The canonical embedding $E \subset E^{**}$ is a complete isometry - see \cite{BP91}.

\subsection{$\min$ and $\max$ operator space structures on a Banach space}

Given a Banach space $X$, two operator space structures attract
special attention, which are described by the adjectives $max$ and
$min$. These structures are characterised as follows.

\begin{prop}
There exist operator space structures $\max(X)$ \index{$\max(X)$} and $\min(X)$ \index{$\min(X)$} on a
Banach space $X$ such that $\max(X)$ and $\min(X)$ are isometric to
$X$ and for every operator space $Z$,
\begin{enumerate}

\item $CB(\max(X),\, Z) = B(X, Z)$.

\item  $CB(Z,\, \min(X)) = B(Z, X)$.

\end{enumerate}

Moreover, $(1)$ and $(2)$ characterise $\max(X)$ and $\min(X)$, respectively.
\end{prop}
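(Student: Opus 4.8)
The plan is to construct $\min(X)$ and $\max(X)$ as \emph{concrete} operator spaces --- each as a subspace of some $B(H)$ --- so that they are automatically operator spaces in the sense of the opening definition, with no need to verify Ruan's axioms. The real work then lies entirely in identifying their universal mapping properties, after which the characterizations fall out of a soft two-sided argument.

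For $\min(X)$ I would embed $X$ isometrically into $C(\Omega)$ with $\Omega = (B_{X^*}, w^*)$, exactly as in the ``Reason'' following the definition of operator space, and let $\min(X)$ carry the induced matrix norms $\|[x_{ij}]\|_{\min,n} = \sup_{\omega\in\Omega}\|[\langle\omega,x_{ij}\rangle]\|_{M_n}$. This is isometric to $X$ at level one since $\sup_{\omega\in\Omega}|\langle\omega,x\rangle| = \|x\|$ by Hahn--Banach. To prove property (2), take any operator space $Z$ and $u\in B(Z,\min(X))$ with $\|u\|\le 1$; for $z=[z_{ij}]\in M_n(Z)$ set $\psi_\omega := \omega\circ u\in Z^*$, a scalar functional of norm $\le 1$, so that $\|u_n(z)\|_{\min,n} = \sup_\omega\|(\psi_\omega)_n(z)\|_{M_n}$. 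The $N=1$ case of Roger Smith's Lemma \ref{rrsmlem} gives $\|\psi_\omega\|_{cb}=\|\psi_\omega\|\le 1$, whence $\|u_n(z)\|\le\|z\|$, i.e. $\|u\|_{cb}\le 1=\|u\|$; combined with the trivial $\|u\|_{cb}\ge\|u\|$ this yields $CB(Z,\min(X))=B(Z,X)$ isometrically.

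For $\max(X)$ I would let $\mathcal{F}$ be a set of representatives for the contractions $u:X\to B(H_u)$, put $H=\bigoplus_{u\in\mathcal{F}}H_u$, and define $J:X\to B(H)$ by $J(x)=\bigoplus_u u(x)$. Since the norm of a block matrix built from operators block-diagonal with respect to the $\mathcal{F}$-decomposition is the supremum of the block norms (the direct-sum fact already exploited in the proof of Arveson's theorem), the induced matrix norm is $\|[x_{ij}]\|_{\max,n}=\sup_{u\in\mathcal{F}}\|[u(x_{ij})]\|_{M_n(B(H_u))}$; this is finite (bounded by $(\sum_{ij}\|x_{ij}\|^2)^{1/2}$) and equals $\|x\|_X$ at level one by Hahn--Banach. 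Property (1) is then immediate: any $u:\max(X)\to Z\subseteq B(H')$ with $\|u\|\le 1$ is, after passing to representatives, a member of the defining family, so $\|u_n(x)\|_{M_n(Z)}\le\|x\|_{\max,n}$ for all $x\in M_n(X)$, giving $\|u\|_{cb}\le 1$ and hence $CB(\max(X),Z)=B(X,Z)$ isometrically.

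Finally I would deduce the characterizations by a purely formal argument. If $Y$ is any operator space structure on $X$, isometric at level one and satisfying (1), then applying the universal property of $Y$ with target $\max(X)$ shows $\mathrm{id}:Y\to\max(X)$ is completely contractive, while applying the universal property of $\max(X)$ with target $Y$ shows $\mathrm{id}:\max(X)\to Y$ is completely contractive; the two inequalities force the matrix norms to coincide, so $Y=\max(X)$ completely isometrically. The identical two-sided argument with (2) (using targets $\min(X)$ and $Y$) characterizes $\min(X)$. The one genuinely delicate point --- and the main obstacle --- is the set-theoretic legitimacy of the supremum defining $\max(X)$, which must be handled by restricting to a set of representatives (e.g. by bounding the cardinality of the relevant Hilbert spaces); every other step reduces either to the scalar case of Roger Smith's Lemma or to the elementary behaviour of operator norms under direct sums.
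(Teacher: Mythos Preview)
Your proof is correct and in fact more complete than the paper's, which only sketches the existence of $\max(X)$ and property~(1). The route is also genuinely different: the paper \emph{defines} $\|x\|_{M_n(\max(X))}$ by an explicit infimum over factorizations $x = a\,\mathrm{diag}(D_1,\ldots,D_N)\,b$ with $a\in M_{n\times N}$, $b\in M_{N\times n}$, $D_j\in X$, asserts that this sequence satisfies Ruan's axioms, and then reads off property~(1) directly from the factorization. You instead realize both $\min(X)$ and $\max(X)$ as \emph{concrete} subspaces of some $B(H)$ --- via $C(B_{X^*})$ and a direct sum of contractions respectively --- so that Ruan's theorem is never invoked and the universal properties are essentially built into the construction. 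The paper's approach buys an explicit intrinsic formula for the $\max$-norms, which is handy in calculations; yours buys a cleaner conceptual picture, treats $\min(X)$ and the uniqueness clauses (which the paper omits entirely), and reduces everything to the scalar case of Roger Smith's Lemma plus direct-sum bookkeeping. Your flagging of the set-theoretic issue in forming $\mathcal{F}$ is apt; restricting to contractions into $M_k(\C)$ for finite $k$ (via compressions $P_K u(\cdot)P_K$) already suffices and keeps the construction honest.
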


\begin{pf}

We discuss only the existence of $\max(X)$ and its property $(1)$. For $x\in M_n(X)$, set

\[ ||x||_{M_n(\max(X))} = \inf \{ ||a||_{M_{n \times N}} \sup_{j \leq N} ||D_j||_X ||b||_{M_{N \times n}}: a^*, b \in M_{N \times n}, D_j \in X, N \geq 1\}
\]

where the infimum runs over all possible decompositions $x = a \left[ \begin{array}{cccc}
D_1& 0 & \cdots& 0 \\ 0 & D_2 & \cdots &
    0\\ \vdots & \vdots & \ddots   & \vdots \\ 0 & 0 & \cdots & D_N  \end{array} \right]b$.

This sequence of norms satisfies Ruan's axioms and hence we have an operator space structure $\max(X)$ on $X$. Now, let $u \in B(X, Z)$ for an operator space $Z$ and $x \in M_n(\max(X))$. Assume that $||x||_{M_n(\max(X))} < 1$. Then there exists $N \geq 1$, $a^*, b \in M_{N \times n}, D_j \in X$, $1 \leq j \leq N$ such that $x = a \left[ \begin{array}{cccc}D_1& 0 & \cdots& 0 \\ 0 & D_2 & \cdots &
    0\\ \vdots & \vdots & \ddots & \vdots  \\ 0 & 0 & \cdots & D_N
 \end{array} \right]b$ -- and $||a||_{M_{n \times N}} \sup_{j \leq N} ||D_j||_X ||b||_{M_{N \times n}} < 1$.  Then note that $u_n(x) = a \left[ \begin{array}{cccc} u(D_1)& 0 & \cdots& 0 \\ 0 & (D_2) & \cdots
    & 0\\ \vdots & \vdots & \ddots & \vdots \\ 0 & 0 & \cdots & u(D_N)
 \end{array} \right]b$
-- and that
\[ ||u_n(x)||_{M_n(\max(X))} \leq ||u||\, ||a||_{M_{n \times N}} \sup_{j \leq N} ||D_j||_X ||b||_{M_{N \times n}} \le ||u||.\]

\end{pf}

$\bullet$ $(\min(X))^* \simeq \max(X^*)$ and $(\max(X))^* \simeq \min(X^*)$ as operator spaces - see \cite{BP91,Ble92}.

\section{Tensor products of Operator spaces}

We will be mainly interested in the injective and projective tensor products of operator spaces. \index{tensor product! operator spaces}

\subsection{Injective tensor product}

Let $E_i \subset B(H_i)$, $i = 1, 2$ be operator spaces. Then we have a natural embedding $E_1 \otimes E_2 \subset B(H_1 \otimes H_2)$ and the {\em minimal} tensor product \index{tensor product! minimal} of $E_1$ and $E_2$ is the space $E_1 \otimes_{\min} E_2 := \overline{E_1 \otimes E_2} \subset B(H_1 \otimes H_2)$.

The $\min$ tensor product is independent of the embeddings $E_i \subset B(H_i)$, $i = 1, 2$; it depends only upon the sequence of norms on $E_i$, $i = 1,2$. We see this as follows:
For simplicity, assume $H_1 = H_2 = \ell^2$ and let $t = \sum_{k=1}^r a_k \otimes b_k \in E_1 \otimes E_2$. Consider the natural embeddings $\ell_n^2 \subset \ell^2$, $n \geq 1$ - with respect to some choice
$\{\xi_n: n \ge 1\}$ of orthonormal basis for $\ell^2$ and the corresponding orthogonal projections $P_n: H_1 \otimes H_2 \ra \ell^2_n \otimes H_2$. Then $\overline{\cup_n \ell_n^2 \otimes H_2} = H_1 \otimes H_2$ and so $\|x\| = \sup_n \|P_nxP_n\| ~\forall x \in B(H_1 \otimes H_2)$; and hence,

\[ ||t||_{\min} = \sup_n ||P_n\,  t_{|_{\ell^2_n \otimes H_2}} ||_{B(\ell^2_n \otimes H_2)}.
\]

Suppose $\langle a_k \xi_j, \xi_i \rangle = a_k(i,j)$. Then it is not hard to see that $P_n\,  t_{|_{\ell^2_n \otimes H_2}}$ may be identified with the matrix $t_n \in M_n(E_2)$ given by  $t_n(i, j) = \sum_k (a_k)_{ij} b_k$. This shows that $||t||_{\min} = sup_n||t_n||_{M_n(E_2)}$, and hence that the operator space structure of $E_1 \otimes_{\min} E_2$ depends only on the operator space structure of $E_2$ and not on the embedding $E_2 \subset B(H_2)$. In an entirely similar manner, it can be seen that the operator space structure of $E_1 \otimes_{\min} E_2$ depends only on the operator space structure of $E_1$ and not on the embedding $E_1 \subset B(H_1)$.

More generally, the same holds for $||[t_{ij}]||_{M_n(B(H_1 \otimes H_2)}$ and hence $||\cdot||_{M_n(E_1 \otimes E_2)}$ depends only on the operator space structures of the $E_j$'s and not on the embeddings
$E_j \subset B(H_j)$.

\begin{prop}\label{uG}
Let $u:E\ra F$ be a CB map between two operator spaces and $G$ be any operator space. Let $u_G : G \otimes_{\min} E \ra G \otimes_{\min} F$ be the extension of the map $id_G\otimes u$. Then $||u||_{cb} = \sup_G ||u_G||.$
\end{prop}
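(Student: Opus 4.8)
The plan is to establish the identity by proving the two inequalities $\|u\|_{cb} \le \sup_G \|u_G\|$ and $\sup_G \|u_G\| \le \|u\|_{cb}$ separately. I will rely on two facts already available: first, the isometric identification $M_n \otimes_{\min} E \cong M_n(E)$ (and likewise for $F$), which comes from specialising the formula $\|t\|_{\min} = \sup_n \|t_n\|_{M_n(E_2)}$ derived just above to the finite-dimensional factor $E_1 = M_n$; and second, the description of the minimal tensor norm as a supremum over finite-rank compressions of the first tensor factor, namely $\|x\| = \sup_n \|P_n x P_n\|$ with $P_n = Q_n \otimes 1$, which was used in the discussion preceding this proposition.

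For the inequality $\|u\|_{cb} \le \sup_G \|u_G\|$ I would simply take $G = M_n$. Under the identification $M_n \otimes_{\min} E \cong M_n(E)$, the map $u_{M_n} = \mathrm{id}_{M_n} \otimes u$ is precisely $u_n : M_n(E) \to M_n(F)$, so that $\|u_n\| = \|u_{M_n}\| \le \sup_G \|u_G\|$. Taking the supremum over $n \ge 1$ then yields $\|u\|_{cb} = \sup_n \|u_n\| \le \sup_G \|u_G\|$.

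For the reverse (and harder) inequality, fix an arbitrary operator space $G \subset B(H)$ and an element $t = \sum_k g_k \otimes e_k$ of the algebraic tensor product $G \otimes E$. For any finite-rank projection $P$ on $H$ onto an $n$-dimensional subspace with orthonormal basis $\xi_1, \dots, \xi_n$, identify $P B(H) P$ with $M_n$ so that $(P \otimes 1)\,t\,(P \otimes 1)$ becomes the matrix $[t_{ij}] \in M_n(E)$ with $t_{ij} = \sum_k \langle g_k \xi_j, \xi_i \rangle e_k$. Then $(P \otimes 1)\,u_G(t)\,(P \otimes 1)$ is exactly $u_n([t_{ij}])$, whence
\[
\|(P \otimes 1)\,u_G(t)\,(P \otimes 1)\| = \|u_n([t_{ij}])\|_{M_n(F)} \le \|u\|_{cb}\,\|[t_{ij}]\|_{M_n(E)} \le \|u\|_{cb}\,\|t\|_{\min}.
\]
Since the compression description of the min-norm gives $\|u_G(t)\|_{\min} = \sup_P \|(P \otimes 1)\,u_G(t)\,(P \otimes 1)\|$, taking the supremum over all such $P$ yields $\|u_G(t)\|_{\min} \le \|u\|_{cb}\,\|t\|_{\min}$. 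As $G \otimes E$ is dense in $G \otimes_{\min} E$, this shows $\|u_G\| \le \|u\|_{cb}$ for every $G$, hence $\sup_G \|u_G\| \le \|u\|_{cb}$, and combining the two inequalities completes the proof.

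The point requiring the most care is the bookkeeping in the compression step: verifying that compressing the first tensor factor by $P$ genuinely lands the element in $M_n(E)$ with the stated entries, that $u_G$ intertwines this compression with $u_n$, and that the min-norm is recovered as the supremum of these compressions (the content of the paragraph preceding the proposition). Everything else is a formal consequence of these identifications together with the bound $\|u_n([t_{ij}])\| \le \|u\|_{cb}\,\|[t_{ij}]\|$.
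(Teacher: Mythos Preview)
Your proof is correct and follows essentially the same route as the paper. The only organisational difference is that the paper first proves $\|u_{B(H)}\| = \|u\|_{cb}$ via the compression argument and then deduces $\|u_G\| \le \|u_{B(H)}\|$ from the isometric inclusion $G \otimes_{\min} E \subset B(H) \otimes_{\min} E$, whereas you apply the compression argument directly to $G$; the underlying mechanism is identical.
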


\begin{pf}

For this, first observe that if $G = M_n$, then $M_n = B(\ell^2_n)\Rightarrow M_n \otimes_{min} E = M_n(E)$ and that $\|u\|_{M_n} = \|u_n\|$. (In fact the identification $M_n \otimes_{min} E = M_n(E)$ is not only isometric, it is even completely isometric.) It follows - by allowing $G$ to range over $\{M_n: n \in \N\}$ - that \[\sup\{\|u_G\|: G \mbox{ an operator space }\} \ge \|u\|_{cb}~.\]

The same argument that yielded the conclusion $M_n \otimes_{min} E = M_n(E)$ is also seen to show that if $E \subset B(K)$, then $B(H)  \otimes_{min} E = \overline{(B(H) \otimes E)} \subset B(H \otimes K)$ (as a Banach space); assuming, as before, that $H = \ell^2$, and choosing the projections $P_n = P_{\ell^2_n}\otimes K$, we may easily deduce from the strong convergence of $P_n$ to $\Id_{H \otimes K}$ that $\|u_{B(H)}\| = \|u\|_{cb}$. Finally, we may conclude that if $G \subset B(H)$, then $G \otimes_{min} E$ sits isometrically as a subspace of $B(H) \otimes_{min} E$, and that
\begin{eqnarray*}
\|u_G\| &=& \|u_{B(H)}|_{G \otimes_{min} E}\|\\
&\le& \|u_{B(H)}\|\\
&=& \|u\|_{cb}~,
\end{eqnarray*}
thereby establishing the desired equality, as $G$ was arbitrary.
\end{pf}

\begin{rem}\label{mintp}

\begin{enumerate}

\item  For two operator spaces $E$ and $F$ with $F$ being finite dimensional, we have $CB(E, F) \simeq E^* \otimes_{\min} F$ as operator spaces. This is analogous to the fact that $B(E, F) \simeq E^*
\otimes^{\vee} F$ isometrically for two Banach spaces $E$ and $F$.

\item $\min$ tensor product is an associative and commutative tensor product.



\item For operator spaces $E_i$, $F_i$, $i = 1,2$ and $u_j \in CB(E_j, F_j)$, the map $u_1 \otimes u_2$ extends to a CB map $u_1 \otimes u_2 : E_1 \otimes_{\min} E_2 \ra F_1 \otimes_{\min} F_2$ such
that $||u_1 \otimes u_2||_{cb} = ||u_1||_{cb} ||u_2||_{cb}$. Indeed, one easily shows that the composition of the extensions $id_{E_1} \otimes u_2 \in CB(E_1 \otimes_{min} E_2, E_1 \otimes_{min} F_2)$ and $u_1 \otimes id_{F_2} \in CB( E_1 \otimes_{min} F_2, E_2 \otimes_{min} F_2)$ give us a CB map $u_1 \otimes u_2 = (u_1 \otimes id_{F_2}) (id_{E_1} \otimes u_2)$ with required properties.

\item Another nice consequence of Propostion \ref{uG} and above observation about tensor products of CB maps is that the $\min$ tensor product is {\em injective}, \index{tensor product! injective} i.e., if $E_2\subset E_1$ is an inclusion of operator spaces and $G$ is any operator space, then the inclusion $E_2 \otimes_{min} G \subset E_1 \otimes_{min}  G$ completely isometrically.
\end{enumerate}
\end{rem}

\subsection{Projective tensor product}

Let $E_i \subset B(H_i)$ be operator spaces. For $t = [t_{ij}] \in M_n(E_1 \otimes E_2)$, set
\[ ||t||_n = \inf \{||a||_{M_{n \times N^2}} ||x^1||_{M_N(E_1)} ||x^2||_{M_N(E_2)} ||b||_{M_{N^2 \times n}} : a^*,b \in M_{N^2 \times n} , x^i \in M_n(E_i), i = 1,2\},\]
where the infimum runs over all possible decompositions of the form $t = a (x^1 \otimes x^2) b$. This sequence of norms satisfies Ruan's axioms and we obatain an operator space structure on $E_1 \otimes E_2$, which after completion is denoted $E_1 \otimes^{\widehat{}} E_2$ and is called the {\em projective} tensor product \index{tensor product! projective} of the operator spaces $E_1$ and $E_2$.

$\bullet$ For operator spaces $E_1$ and $E_2$, we have operator space isomorphisms $$(E_1\otimes^{\widehat{}} E_2)^* \simeq CB(E_1, E_2^*) \simeq CB(E_2, E_1^*).$$

\begin{rem}

\begin{enumerate}

\item Projective tensor product of operator spaces is not injective.

\item Injective tensor product of operator spaces is not projective, in the sense defined below.

\end{enumerate}
\end{rem}

\begin{defn}
Let $E$ and $F$ be operator spaces and $u \in CB(E, F)$. Then $u$ induces a canonical map $\tilde{u} : E{/\rm ker}\,u \ra F$ and $u$ is said to be a {\bf complete metric surjection} if $u$ is surjective and $\tilde{u}$ is completely isometric.
\end{defn}

\begin{exer}
$u \in CB(E, F)$ is a complete metric surjection if and only if $u^*:F^* \ra E^*$ is a completely isometric embedding.
\end{exer}

$\bullet$ $\otimes^{\widehat{}}$ is ``projective'' in the following sense.  For any two complete metric surjections $u_j : E_j \ra F_j$, $j = 1, 2$, the tensor map $u_1 \otimes u_2: E_1\otimes^{\widehat{}} E_2 \ra F_1\otimes^{\widehat{}} F_2 $ is also a complete metric surjection.

\vspace*{1mm}

Note that for any two Hilbert spaces $H$ and $K$, $B(H, K)$ has a canonical operator space structure via the embedding $B(H, K) \ni x \mapsto \left(\begin{array}{cc} 0 & 0 \\ x & 0 \end{array} \right) \in B(H \oplus K)$.

$\bullet$ Thus, given any Hilbert space $H$, we have two new operator spaces, namely,
\[ H_c:= B(\mbb{C}, H)\ \text{ and } H_r:= B(\overline{H}, \mbb{C}).
\]

$\bullet$ $(H_c)^* \simeq \overline{H}_r$ as operator spaces.

$\bullet$ If $H =\ell_2$, then $H_c = C$ and $H_r = R$.

$\bullet$ For any two Hilbert spaces $H$ and $K$, we have the following operator space isomorphisms:

\[ (\overline{K} \widehat{\otimes} H)^* \simeq B(H, K) \simeq CB(H_c, K_c) \simeq (K_c^* \otimes^{\widehat{}} H_c)^*
\]
where $\widehat{\otimes}$ is the Banach space projective tensor product.  Note that $\overline{K} \widehat{\otimes} H$ as well as $K_c^* \otimes^{\widehat{}} H_c$ are isometrically isomorphic to the space $S_1(H, K)$ of trace class operators.

\subsection{General remarks}\hspace*{1mm}

$\bullet$ Recall the $\gamma$-norm in the proof of Fundamental Factorization Theorem.  In fact, we have an isometric isomorphism
\[ (\overline{K} \otimes E \otimes H)_\gamma \simeq S_1(K, H) \otimes^{\widehat{}} E.
\]

$\bullet$ For any measure space $(X, \mu)$ (unlike in the next remark) the Banach space projective tensor product is injective, at least, if one of the tensor factors is $L^1(X, \mu)$ and the other is an operator space. This is true because for any inclusion of operator spaces $E \subset E_1$, we have
$$L^1(X, \mu) \widehat{\otimes} E = L^1(\mu, E) \subset L^1(\mu, E_1 ) = L^1(X, \mu) \widehat{\otimes} E_1.$$

$\bullet$ For a von Neumann algebra $M$, its predual $M_\ast$, usually called a {\em non-commutative $L^1$-space}, has a natural operator space structure via the embedding $M_\ast \subset M^* (=(M_*)^{\ast \ast}) $.  For an inclusion of operator spaces $E \subset E_1$, only when $M$ is injective, there is an isometric embedding $M_\ast \otimes^{\widehat{}} E \subset M_\ast \otimes^{\widehat{}} E_1$.

\subsection*{A passing remark on the Haagerup tensor product}

Apart from the above two tensor products of operator spaces, there is another extremely important tensor product of operator spaces, namely, the {\em Haagerup} tensor \index{tensor product! Haagerup} product, usually detoned $E\otimes_h F$. Unlike the above two tensor products, the Haagerup tensor product has no analogy in Banach space theory. It is known that the Haagerup tensor product of operator spaces is associative, injective and projective but it is not commutative. Also, $ \overline{K}_r \otimes_h E \otimes_h H_c \simeq (\overline{K} \otimes E \otimes H)_\gamma$ as Banach spaces. We will not get into the details in these notes.

\section{Tensor products of $C^*$-algebras}

\subsection{$\min$ and $\max$ tensor products of $C^*$-algebras}\hspace*{1mm}

\index{tensor product! $C^*$-algebras}
Let $A$ and $B$ be unital $C^*$-algebras. Then $A\otimes_{\min}B$ is canonically a $C^*$-algebra. For each $t = \sum_k a_k \otimes b_k \in A \otimes B$, define \index{tensor product! minimal}
\[
||t||_{\max}= \sup \{ ||\sum_k \pi(a_k) \sigma(b_k)|| : \pi:A \ra B(H), \sigma : B \ra B(H)\}
\]
where the infimum runs over all representations $\pi$ and $\sigma$
with commuting ranges.  Then the completion $A \otimes_{\max} B$ of $A
\otimes B$ with \index{tensor product! maximal} respect to this norm
makes it into a $C^*$-algebra. In general, we have $||t||_{\min} \leq
||t||_{\max}$ for all $t\in A \otimes B$.

\begin{defn}
For $C^*$-algebras $A$ and $B$, the pair $(A, B)$ is called a {\bf nuclear pair} \index{nuclear pair}  if $||t||_{\min} = ||t||_{\max}$ for all $t \in A \otimes B$.
\end{defn}

\begin{defn}
A $C^*$-algebra $A$ is is said to be nuclear if $(A, B)$ is a nuclear pair for all $C^*$-algebras $B$.
\end{defn}

\subsection{Kirchberg's Theorem}

\begin{thm}\cite{Kir93}\label{kirchberg}
Let $ 1 \leq n \leq \infty$. If $A = A_n=C^*(\mbb{F}_n)$, the full $C^*$-algebra on the free group with $n$-generators and $B=B(H)$, then $(A, B)$ is a nuclear pair. \index{Theorem! Kirchberg's}
\end{thm}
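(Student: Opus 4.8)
The plan is to prove the nontrivial inequality $\|t\|_{\max} \le \|t\|_{\min}$ for every $t \in A \otimes B$, since $\|t\|_{\min} \le \|t\|_{\max}$ always holds. I would first record the two structural facts that drive the argument. The first is that $B = B(H)$ is \emph{injective}: this is an immediate consequence of the CP form of Arveson's extension theorem (Corollary \ref{arvcpext}), for given any inclusion $B(H) \subseteq B(\cK)$, the identity map on $B(H)$ — viewed as a CP map out of the operator system $B(H) \subseteq B(\cK)$ — extends to a CP projection $\Phi : B(\cK) \to B(H)$; composing with $B(H) \hookrightarrow B(H)^{**}$ exhibits the weak expectation property (WEP). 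The second fact is that $A = C^*(\F_n)$ has the \emph{local lifting property} (LLP), and this is where the free structure and a $2 \times 2$ dilation enter, in the spirit of the matrix tricks (Lemma \ref{posbd}) used throughout the chapter.

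For the LLP I would use that a unital $*$-representation of $C^*(\F_n)$ is precisely a choice of $n$ unitaries (the images of the generators $g_1, \dots, g_n$), together with the \emph{unitary dilation}: every contraction $c$ is the $(1,1)$-corner of the unitary $\begin{pmatrix} c & (1-cc^*)^{1/2} \\ (1-c^*c)^{1/2} & -c^* \end{pmatrix}$ on two copies of the underlying space. Consequently, given any contractions $c_1, \dots, c_n$, dilating each assignment $g_i \mapsto c_i$ to a unitary produces a representation of $C^*(\F_n)$ whose corner compression is a unital CP map $\varphi$ (by Remark \ref{factcp}) with $\varphi(g_i) = c_i$. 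Thus, for any quotient $q : C \to C/I$ and any unital $*$-homomorphism $\psi : C^*(\F_n) \to C/I$ (that is, any choice of unitaries $\psi(g_i)$), one lifts each $\psi(g_i)$ to a contraction in $C$ and obtains a CP lift of $\psi$ agreeing with it on the generating operator system $\mathrm{span}\{1, g_i, g_i^*\}$ — the essence of the LLP.

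With these inputs in hand, I would compute $\|t\|_{\max}$ through the defining supremum over pairs of representations $\pi : C^*(\F_n) \to B(\cK)$ and $\sigma : B(H) \to B(\cK)$ with commuting ranges, so that the $U_i := \pi(g_i)$ are unitaries commuting with $\sigma(B(H))$, and then bound each value $\|\sum_k \pi(a_k)\sigma(b_k)\|$ by $\|t\|_{\min}$. The model case is when $\sigma$ is the amplification $b \mapsto b \otimes 1$ on $\cK = H \otimes \cL$: then the commutant forces $U_i = 1 \otimes W_i$, the expression becomes $(\mathrm{id}_{B(H)} \otimes \rho_W)(t)$ for the representation $\rho_W : C^*(\F_n) \to B(\cL)$ sending $g_i \mapsto W_i$, and its norm is at most $\|t\|_{\min}$. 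The whole difficulty is to reduce a general $\sigma$ to this situation, which I would carry out by invoking the WEP of $B(H)$ to produce a weak expectation onto $\sigma(B(H))$ and using the finite-dimensional CP lifts of the $U_i$ supplied by the LLP to ``disentangle'' the commuting pair into such a $\min$-representation, up to an approximation controlled by passing to $B(H)^{**}$.

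I expect the disentangling step to be the main obstacle and the genuine content of Kirchberg's theorem. An arbitrary representation $\sigma$ of $B(H)$ need not be spatial — it may, for instance, factor through the Calkin algebra — so one cannot directly split $\cK$ as a tensor product, and the naive reduction fails. Overcoming this requires the full interplay between injectivity of $B(H)$ on one side and the lifting property of $C^*(\F_n)$ on the other, together with a weak-$*$ limiting argument in $B(H)^{**}$ to convert the local (finite-dimensional) lifts into a single global norm bound. The elementary inputs — injectivity from Arveson's theorem and the dilation-based LLP — are comparatively routine; it is precisely their combination into the equality $\|t\|_{\max} = \|t\|_{\min}$ that is delicate.
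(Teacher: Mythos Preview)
Your strategy --- exploit injectivity of $B(H)$ and a lifting property of $C^*(\F_n)$ --- is a legitimate framework, but as you yourself concede, the ``disentangling step'' is the entire content of the theorem in that approach, and you have not carried it out. Naming WEP and LLP and gesturing at a weak-$*$ argument in $B(H)^{**}$ is not a proof; the implication ``LLP for $A$ $\Rightarrow$ $A \otimes_{\min} B(H) = A \otimes_{\max} B(H)$'' is itself a theorem of Kirchberg requiring real work, so your outline is circular at the crucial point.

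The paper's proof is entirely different and avoids all of this machinery. It reduces to showing that the identity map on the generating span $E = \mathrm{span}\{U_j \otimes b : 0 \le j \le n,\ b \in B(H)\}$ is a (complete) contraction from $E \subset A \otimes_{\min} B$ into $A \otimes_{\max} B$; a lemma on unital complete contractions sending unitaries to unitaries (Lemma~\ref{lemma1}) then upgrades this to a $*$-representation of $A \otimes_{\min} B = C^*(E)$. The core computation (Lemma~\ref{Uamin}) is: if $\|\sum_j U_j \otimes a_j\|_{\min} \le 1$, then each $a_j$ factors as $b_j^* c_j$ with $\|\sum_j b_j^* b_j\| \le 1$ and $\|\sum_j c_j^* c_j\| \le 1$. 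This follows because the assignment $e_j \mapsto a_j$ defines a complete contraction $\ell^\infty_{n+1} \to B(H)$ --- the universal property of $C^*(\F_n)$ lets one replace the $U_j$ by \emph{arbitrary} unitary matrices, hence by extreme points of the ball of $M_m$ --- and then the fundamental factorisation Theorem~\ref{ff} supplies $b_j, c_j$. Given such a factorisation, for any commuting pair $(\pi,\sigma)$ one rewrites $\sum_j \pi(U_j)\sigma(a_j) = \sum_j \sigma(b_j)^*\,\pi(U_j)\,\sigma(c_j)$ and bounds it by $1$ via the operator Cauchy--Schwarz inequality (Lemma~\ref{opcs}). No biduals, no weak-$*$ limits, no disentangling of representations --- the argument is a direct norm estimate on the generating layer.
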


The above result has great significance as both $C^*(\mbb{F}_{n})$ and $B(H)$ are universal objects in the sense that every $C^*$-algebra is a quotient of $C^*(\mbb{F}_n)$ and a $*$-subalgebra of $B(H)$ for suitable choices of $n$ and $H$.

A proof of this remarkable theorem involves a fair bit of analysis.

\begin{lem}\label{lemma1}
Let $A\subset B(H)$ be a $C^*$-algebra and $\{ U_i : i\in I\} \subset \mcal{U}(H)$ be a family of unitary operators on $H$ such that $1 \in E :=\overline{\text{span}}\{U_i : i \in I\}$ and $A = C^*(E)$. Let $u : E \ra B(K)$ be a unital c.b map satisfying $u(U_i) \in \mcal{U}(K)$ for all $i\in I$. If $||u||_{cb} \leq 1$, then there exists a $\ast$-representation $\sigma : A \ra B(K)$ such that $\sigma_{|_E} = u$.
\end{lem}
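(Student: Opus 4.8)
The plan is to realize $u$ via the Fundamental Factorisation Theorem and then to exploit the extra hypothesis---that $u$ carries the generating unitaries $U_i$ to unitaries---to show that the associated compression is actually multiplicative on all of $A$. First I would apply Theorem~\ref{ff} to the complete contraction $u:E\to B(K)$: since $E\subset A$ is an operator space in a unital $C^*$-algebra, there are a Hilbert space $\hat{H}$, a unital $*$-representation $\pi:A\to B(\hat{H})$, and contractions $V,W:K\to\hat{H}$ with $u(a)=V^*\pi(a)W$ for all $a\in E$. Because $1\in E$ and $u$ is unital, evaluating at $1$ gives $V^*W=u(1)=1_K$; combined with $\|V\|,\|W\|\le 1$ this forces $V=W$ and $V$ to be an isometry. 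Indeed, for every $k\in K$ the chain $\|k\|^2=\langle V^*Wk,k\rangle=\langle Wk,Vk\rangle\le\|Wk\|\,\|Vk\|\le\|k\|^2$ must be a string of equalities, whence $\|Vk\|=\|Wk\|=\|k\|$ and the equality case of Cauchy--Schwarz yields $Vk=Wk$.

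Thus $u(a)=V^*\pi(a)V$ on $E$ with $V$ an isometry; identifying $K$ with $VK\subseteq\hat{H}$ and writing $P=VV^*$ for the orthogonal projection onto it, I would define $\sigma:A\to B(K)$ by $\sigma(a)=V^*\pi(a)V=P\pi(a)|_{VK}$ for all $a\in A$. This $\sigma$ is a unital completely positive map (a compression of a representation) restricting to $u$ on $E$, so it remains only to prove multiplicativity. The mechanism is the Kadison--Schwarz inequality, immediate from the Stinespring form: for any $a\in A$,
\[
\sigma(a^*a)-\sigma(a)^*\sigma(a)=V^*\pi(a)^*(1-P)\pi(a)V\ge 0,
\]
since $1-P\ge 0$. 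Now fix a generator $U_i$. Because $\sigma(U_i)=u(U_i)$ is unitary and $\sigma(U_i^*U_i)=\sigma(1)=1=\sigma(U_i)^*\sigma(U_i)$, the displayed defect vanishes, forcing $(1-P)\pi(U_i)V=0$, i.e. $\pi(U_i)(VK)\subseteq VK$. Applying the same computation to $U_i^*$ (using $\sigma(U_i^*)=\sigma(U_i)^*$, again unitary, and $\sigma(U_iU_i^*)=1$) gives $\pi(U_i)^*(VK)\subseteq VK$, so $VK$ reduces every $\pi(U_i)$; equivalently $P$ commutes with each $\pi(U_i)$ and $\pi(U_i)^*$.

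Since the commutant $\{P\}'$ is a $C^*$-subalgebra of $B(\hat{H})$ containing every $\pi(U_i)$, it contains $\pi(C^*(\{U_i\}))=\pi(A)$, because $A=C^*(E)=C^*(\{U_i\})$. Hence $P$ commutes with $\pi(a)$ for all $a\in A$, and for $a,b\in A$ one computes $\sigma(ab)=V^*\pi(a)\pi(b)V=V^*\pi(a)P\pi(b)V=(V^*\pi(a)V)(V^*\pi(b)V)=\sigma(a)\sigma(b)$, inserting $P=VV^*$ by invariance of $VK$. Being unital, $*$-preserving (by complete positivity) and multiplicative, $\sigma$ is the desired $*$-representation extending $u$. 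I expect the main obstacle to be precisely this middle step: converting ``$u(U_i)$ is unitary'' into the reduction of $\pi(U_i)$ by $K$. The clean route is to pin down $V=W$ as an isometry from unitality and then read off invariance of $K$ from the vanishing of the Kadison--Schwarz defect; once the generators are reduced, propagating this to all of $A$ and deducing multiplicativity is routine.
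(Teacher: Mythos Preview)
Your proof is correct and follows essentially the same approach as the paper. The paper also obtains a compression $u(a)=P_K\pi(a)|_K$ with $K\subset\hat H$ (taking for granted, where you prove it, that unitality forces $V=W$ to be an isometry), and then shows $K$ reduces each $\pi(U_i)$ by writing $\pi(U_i)$ as a $2\times 2$ block matrix with respect to $\hat H=K\oplus K^\perp$ and reading off that the off-diagonal blocks vanish from unitarity of $\pi(U_i)$ together with unitarity of the $(1,1)$-corner $u(U_i)$; your Kadison--Schwarz defect computation is exactly the same calculation in different notation.
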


\begin{pf}

Since $u$ is unital and CB with $||u||_{cb} \leq 1$, there is a Hilbert space $\hat{H}$, a $\ast$-representation $\pi: A \ra B(\hat{H})$, a subspace $K \subset \hat{H}$ such that $u(a) = P_K \pi(a)_{|_K}$ for all $a \in
E$, where $P_K: \hat{H} \ra K$ is the orthogonal projection.

Suppose $\pi(U_i) = \left( \begin{array}{cc} a_i & b_i \\ c_i &   d_i \end{array} \right)$ with respect to $\hat{H} = K \oplus K^\perp$, so that $u(U_i)=a_i$. Unitarity of $U_i$ implies that $a_i^*a_i + b_i^*b_i = a_ia_i^* + c_ic_i^* = 1$; while the assumed unitarity of $a_i$ is seen to imply that $b_i = c_i = 0$. Therefore, $\pi(U_i) = \left( \begin{array}{cc} a_i & 0 \\ 0 &   d_i \end{array} \right)$, and we see that $K$ is an invariant subspace for $\pi(U_i)$ as well as for $ \pi(U_i^*)$ for all $i \in I$. Thus, $K$ is invariant under
$C^*(\{\pi(U_i) : i \in I\})$. Finally $\sigma : A \ra B(K)$ defined by $\sigma(a) = \pi(a)_{|_K}$, $a\in A$ is seen to be a $\ast$-representation which extends $u$.

\end{pf}

\begin{lem}\label{Uamin}
Let $U_0 = 1, U_1, \ldots, U_n$ be unitary generators of $A_n := C^*(\mbb{F}_n)$. If $\{a_j : 0 \leq j \leq n\} \subset B(H)$ is such that $\alpha:= ||\sum_0^n U_j \otimes a_j||_{A_n \otimes_{\min} B(H)} \leq 1$, then there exist $b_j, c_j \in B(H)$ such that $a_j = b_j^* c_j$ and $ ||\sum_j b_j^* b_j ||\leq 1$  , $||\sum_j c_j^* c_j ||\leq 1$.
\end{lem}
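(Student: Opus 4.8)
The plan is to read the hypothesis as a statement about \emph{all} representations of the full free group $C^*$-algebra and then to recover the factorisation by duality. First I would unwind the minimal tensor norm. Since the universal representation $\pi_u$ of $A_n$ is faithful and the identity representation of $B(H)$ is faithful, $\alpha = \|\sum_j \pi_u(U_j)\otimes a_j\|$; and because $\pi_u$ dominates every representation while every family of unitaries $V_0=\Id, V_1,\dots,V_n$ on a Hilbert space $\mathcal K$ integrates, by the universal property of $\F_n$, to a representation of $A_n$, the hypothesis $\alpha\le 1$ is \emph{equivalent} to $\|\sum_{j=0}^n V_j\otimes a_j\|_{B(\mathcal K\otimes H)}\le 1$ for every such unitary family. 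By injectivity of $\otimes_{\min}$ (Remark \ref{mintp}(4)) the desired conclusion concerns only the operator space $E=\mathrm{span}\{U_0,\dots,U_n\}\subset A_n$.

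To fix notation and see what must be proved, I would first check the reverse implication. Given $a_j=b_j^*c_j$ with $\|\sum_j b_j^*b_j\|\le1$ and $\|\sum_j c_j^*c_j\|\le1$, form the row $R=(V_0\otimes b_0^*,\dots,V_n\otimes b_n^*)$ and the column $C=(\Id\otimes c_0,\dots,\Id\otimes c_n)^{t}$. Since the $V_j$ are unitary, $RR^*=\Id\otimes\sum_j b_j^*b_j$ and $C^*C=\Id\otimes\sum_j c_j^*c_j$, so $\|R\|,\|C\|\le1$, while $RC=\sum_j V_j\otimes a_j$; hence $\|\sum_j V_j\otimes a_j\|\le1$ for every unitary family, i.e. $\alpha\le1$. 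The content of the lemma is the converse of this row--column factorisation.

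For the hard direction I would argue by duality. The set $\mathcal F$ of tuples $(b_j^*c_j)_j$ with $\|\sum_j b_j^*b_j\|\le1$ and $\|\sum_j c_j^*c_j\|\le1$ (allowing $b_j,c_j\in B(H,\hat H)$ and reabsorbing $\hat H\cong H$ at the end) is convex and balanced, since stacking operators averages two factorisations; after taking its weak-$*$ closure in $B(H)^{\,n+1}$ it suffices, by Hahn--Banach, to show that every normal functional $(\phi_j)$ with $\sup_{(x_j)\in\mathcal F}|\sum_j\phi_j(x_j)|\le1$ already satisfies $|\sum_j\phi_j(a_j)|\le1$. The separating condition reads $|\sum_j\phi_j(b_j^*c_j)|\le1$ for all contractive columns $(b_j),(c_j)$; writing each $\phi_j$ in Hilbert--Schmidt/GNS form $\phi_j=\langle(\,\cdot\,\otimes\Id)\zeta_j,\eta_j\rangle$, this is a constraint on the vector data $(\zeta_j,\eta_j)$. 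The crux is to manufacture from this data a Hilbert space $\mathcal K$, a family of \emph{unitaries} $V_0=\Id,V_1,\dots,V_n\in\mathcal U(\mathcal K)$, and unit vectors $\chi,\xi$ so that $\sum_j\phi_j(a_j)=\langle(\sum_j V_j\otimes a_j)\chi,\xi\rangle$; once this is done, the reformulation of the hypothesis immediately gives $|\sum_j\phi_j(a_j)|\le\alpha\le1$, closing the separation argument.

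The main obstacle is precisely this last construction: the separating data naturally produces \emph{contractions}, and one must dilate them to genuine unitaries. This is exactly where the \emph{full} free group $C^*$-algebra is indispensable --- its universal property lets an arbitrary unitary family occur, which is what powers the first step --- and it is also why the reduced algebra would fail: testing only against the regular representation yields merely $\|\sum_j a_j^*a_j\|\le1$ and $\|\sum_j a_ja_j^*\|\le1$, which are strictly weaker than the factorisation. Already for scalars $a_j\in\C$ the factorisation norm equals the $\ell_1$-norm $\sum_j|a_j|$ (attained in $\alpha$ by choosing unitaries that align all the $V_j\chi$), whereas the regular representation controls only the $\ell_2$-norm. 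Equivalently, the whole lemma may be phrased as the isometric identification of $\alpha$ with the Haagerup-type factorisation norm $\inf\{\|\sum_j b_j^*b_j\|^{1/2}\|\sum_j c_j^*c_j\|^{1/2}\}$ of $(a_j)$ (cf. the passing remark on the Haagerup tensor product), the nontrivial inequality being the one established above.
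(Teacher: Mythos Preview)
Your reformulation of the hypothesis --- that $\alpha\le1$ is equivalent to $\|\sum_j V_j\otimes a_j\|\le1$ for every unitary family $V_0=\Id,V_1,\dots,V_n$ on an arbitrary Hilbert space --- is correct and is exactly the observation the paper exploits. But your duality argument stops at the decisive step: you acknowledge that manufacturing genuine unitaries from the separating data is ``the main obstacle'' and do not carry it out. That dilation is not routine; in effect you would be reproving a form of the row--column factorisation theorem by hand, and without it the proposal is a sketch, not a proof.

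The paper's route is entirely different and much shorter. Rather than separating in $B(H)^{n+1}$, it introduces the linear map $u:\ell^\infty_{n+1}\to B(H)$, $u(e_j)=a_j$, and shows $\|u\|_{cb}\le\alpha$. The estimate goes exactly through your reformulation: for each $m\ge1$ one has $\|u_m\|=\sup\{\|\sum_j x_j\otimes a_j\|:\max_j\|x_j\|_{M_m}\le1\}$, and since the extreme points of the unit ball of $M_m$ are the unitaries this equals $\sup\{\|\sum_j V_j\otimes a_j\|:V_j\in\mathcal U(M_m)\}$; the universal property of $C^*(\mathbb F_n)$ then provides a $*$-homomorphism $A_n\to M_m$ sending $U_j\mapsto V_j$, and tensoring with $\mathrm{id}_{B(H)}$ on the min tensor product bounds each such term by $\alpha$. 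With $\|u\|_{cb}\le1$ in hand, the Fundamental Factorisation Theorem (Theorem~\ref{ff}) gives $a_j=u(e_j)=V^*\pi(e_j)W$ for a representation $\pi:\ell^\infty_{n+1}\to B(K)$ and contractions $V,W$. The point is that the $\pi(e_j)$ are \emph{mutually orthogonal projections summing to $I$}; setting $b_j=S\pi(e_j)V$, $c_j=S\pi(e_j)W$ for any isometry $S:K\to H$ immediately yields $b_j^*c_j=a_j$, $\sum_j b_j^*b_j=V^*V\le I$ and $\sum_j c_jc_j^*\le I$. No Hahn--Banach, no hand-built dilation: the commutative algebra $\ell^\infty_{n+1}$ and the orthogonality of its coordinate projections do all the work once the CB estimate is in place.
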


\begin{pf}
Consider $u : \ell^\infty_{n+1} \ra B(H)$ given by $u(e_j) = a_j$, $0 \leq j \leq n$, where $\{e_j : 0 \leq j \leq n\}$ is the standard basis of $\ell^\infty_{n+1}$.

\bigskip

{\bf Assertion:} $u$ is CB with $||u||_{cb} \leq \alpha$.
\bigskip

Under the identification $M_m(A) \ni [a_{ij}] \mapsto \sum_{ij}e_{ij}\otimes a_{ij} \in M_m \otimes A$ for any algebra $A$, we see that $u_m : M_m(\ell^\infty_{n+1}) \ra M_m(B(H)) $ is given by $u_m(\sum_{j=0}^n x_j \otimes e_j) = \sum_{j=0}^m x_j \otimes a_j$ for $x_j \in M_m$.

It follows that
\begin{eqnarray*}
||u_m|| &=& \sup\{ ||\sum_{j = 0}^n x_j \otimes a_j||_{M_m \otimes_{\min} B(H)} : \max_j ||x_j||_{M_m} \leq 1\}\\
&\leq & \sup\{ ||\sum_{j=0}^n V_j \otimes a_j|| : V_j \in M_m \text{ is unitary}\}~,
\end{eqnarray*}
where the first equality is because $\|\sum_j x_j \otimes e_j\|_{ M_m(\ell_\infty^{n+1})} = \max_j ||x_j||_{M_m}$, and the second equality is the consequence of the fact that the extreme points of the unit ball of $M_m(\C)$ are precisely the unitary matrices.

By definition of the \index{group $C^*$-algebra! full} full free group
$C^*$-algebra, there exists a $\ast$-homomorphism $\sigma:A_n
\rightarrow M_m$ such that $\sigma(U_j) = V_j, \forall 0 \leq j \leq
n$. Since any $\ast$-homomorphism is a complete contraction, it
follows from Remark \ref{mintp} (3) that $\sigma \otimes id_{B(H)}:
A_n \otimes_{min} B(H) \rightarrow M_n \otimes_{min} B(H)$ is also a
complete contraction, and consequently, we find that indeed
$||u||_{cb} =\sup ||u_m|| \leq \alpha \leq 1$. Thus $u$ is a complete
contraction, and we may deduce from the fundamental factorisation
theorem that there exists a representation $\pi : \ell^\infty_{n+1}
\rightarrow B(K)$ and contractions $V, W: H \rightarrow K$ such that
$(a_j =) u(e_j) = V^* \pi(e_j) W ~\forall j$. Pick some isometry $S:K
\rightarrow H$ and define $b_j = S\pi(e_j)V, c_j = S\pi(e_j)W$.

Then, notice that indeed $b_j^*c_j = V^* \pi(e_j) S^*S \pi(e_j) W = a_j$, and as $V$ and $W$ are contractions, we find that
\[\sum_j b_j^*b_j = V^*\left(\sum_j\pi(e_j)\right) V = V^*V \leq 1\]
and
\[\sum_j c_jc_j^* = \sum_j S \pi(e_j) W^*W \pi(e_j) S^* \leq \sum_j S\pi(e_j)S^* = SS^* \leq 1~,\]

thereby completing the proof of the Lemma.
\end{pf}

\bigskip

We shall need the following version of the Cauchy-Schwarz inequality. We say nothing about its  proof but it is easy and   may be found, for instance, in any treatment of Hilbert $C^*$-algebras.

\begin{lem}\label{opcs}
For elements $b_j, c_j$ of any $C^*$-algebra, we have
\[
\|\sum_j b_j^*c_j\|^2 \leq \|\sum_j b_j^*b_j\|  \cdot \|\sum_j c_j^*c_j\|
\]
\end{lem}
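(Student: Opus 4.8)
The plan is to reduce the claim to the $2\times 2$ block-positivity criterion of Lemma~\ref{posbd}(2). First I would gather the given data into a single rectangular matrix over the $C^*$-algebra: writing $N$ for the (finite) number of terms, set
\[
V = \begin{pmatrix} b_1 & c_1 \\ \vdots & \vdots \\ b_N & c_N \end{pmatrix} \in M_{N\times 2}(A),
\]
and observe that
\[
V^*V = \begin{pmatrix} \sum_j b_j^*b_j & \sum_j b_j^*c_j \\ \sum_j c_j^*b_j & \sum_j c_j^*c_j \end{pmatrix} \in M_2(A)
\]
is manifestly positive, being of the form $V^*V$. Abbreviating $B = \sum_j b_j^*b_j$, $C = \sum_j c_j^*c_j$ and $X = \sum_j b_j^*c_j$, this says precisely that the self-adjoint block matrix $\begin{pmatrix} B & X \\ X^* & C \end{pmatrix}$ is a positive element of $M_2(A)$.

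Next, since every $C^*$-algebra admits a faithful (hence isometric and order-preserving) representation on some Hilbert space $H$ by the Gelfand--Naimark theorem, I may assume $A \subset B(H)$, so that $B, C \in B(H)_+$, $X \in B(H)$, and the above block matrix is positive as an element of $M_2(B(H)) = B(H \oplus H)$. Now Lemma~\ref{posbd}(2), applied with $a = B$, $b = C$ and $x = X$, yields
\[
|\langle Xk, h\rangle| \le \sqrt{\langle Bh,h\rangle\,\langle Ck,k\rangle} \quad \forall\, h, k \in H.
\]

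Finally I would conclude with the elementary estimates $\langle Bh,h\rangle \le \|B\|\,\|h\|^2$ and $\langle Ck,k\rangle \le \|C\|\,\|k\|^2$, which give $|\langle Xk,h\rangle| \le \sqrt{\|B\|\,\|C\|}\,\|h\|\,\|k\|$; taking the supremum over unit vectors $h, k$ produces $\|X\| \le \sqrt{\|B\|\,\|C\|}$, that is, the asserted inequality $\|\sum_j b_j^*c_j\|^2 \le \|\sum_j b_j^*b_j\|\cdot\|\sum_j c_j^*c_j\|$. I do not expect any genuine obstacle here: the single idea that drives the proof is recognizing the block matrix as $V^*V$, after which the statement is a direct invocation of Lemma~\ref{posbd}(2). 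The only point that deserves a word of care is the passage to a faithful representation, so that the purely algebraic positivity can be paired against Hilbert-space vectors; this is standard and costs nothing, since faithful representations preserve both the order structure and the norm.
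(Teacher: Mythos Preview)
Your argument is correct. Note, however, that the paper does not actually give a proof of this lemma: it states explicitly that ``we say nothing about its proof but it is easy and may be found, for instance, in any treatment of Hilbert $C^*$-algebras.'' So there is no approach to compare against.

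That said, your route is a pleasant one in the context of this paper, since it recycles Lemma~\ref{posbd}(2) rather than importing Hilbert $C^*$-module machinery. One cosmetic remark: the positivity of $V^*V$ in $M_2(A)$ is already true at the abstract $C^*$-level (pad $V$ to a square matrix in $M_N(A)$ and compress), so you could, if you wished, invoke the faithful representation once at the outset and then run the whole argument inside $B(H)$; as written the logic is fine, but the phrase ``manifestly positive, being of the form $V^*V$'' is most transparent after you have already placed everything in $B(H)$.
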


\vspace*{2mm}

\noindent {\em Proof of Theorem \ref{kirchberg}:} We have $A = C^*(\mbb{F}_n)$ and $B = B(H)$. We need to show $|| A \otimes_{\min} B \ra A \otimes_{\max} B|| \leq 1$.  Let $E = span\{ U_j \otimes B(H) : 0 \leq j \leq n \} = span\{ U_j \otimes V : V \in \mcal{U}(H)\} \subset A \otimes_{\min} B$. Note that $1 \otimes 1 \in E$. Consider $u = id_E : E \ra E \subset A \otimes_{\max} B$.

We shall be done if we show that $||u|| \leq 1$. Suppose $t = \sum_j U_j \otimes a_j \in E$ with $||t||_{\min} \leq 1$. Then, pick $b_j, c_j \in B(H) $  as in Lemma \ref{Uamin} such that $a_j = b_j^*c_j$. For any two representations $\pi:A \ra B(K)$ and $\sigma : B \ra B(K)$ with commuting ranges, consider $\pi \cdot \sigma : E \ra B(K)$ given by $(\pi \cdot \sigma)(t) = \sum_j \pi(U_j)\sigma(a_j) = \sum_j \sigma(b_j^*) \pi(U_j) \sigma(c_j)$.

Then, by Lemma \ref{opcs}, we find that $|| (\pi \cdot \sigma)(t) || \leq ||\sum_j \sigma(b_j^*) \sigma(b_j) ||^{1/2} ||\sum_j \sigma(c_j^*) \sigma( c_j) ||^{1/2} \leq 1$. As $\pi$ and $\sigma$ were arbitrary, we find thus that $||t||_{\max} \leq 1$.  So, we indeed have $||u|| \leq 1$, thus completing the proof of Kirchberg's theorem.

\hfill$\Box$


\chapter{Entanglement in Bipartite Quantum States}

\bigskip \noindent
{\bf Chapter Abstract:}

 \bigskip \noindent
{\it Entanglement is a central feature of quantum mechanical systems that lies at the heart of most quantum information processing tasks. In this chapter, based on lectures by K R Parthasarathy, we examine the important question of quantifying entanglement in bipartite quantum systems. We begin with a brief review of the mathematical framework of quantum states and probabilities and formally define the notion of entanglement in composite Hilbert spaces via the Schmidt number and Schmidt rank. For pure bipartite states in a finite-dimensional Hilbert space, we study the question of how many subspaces exist in which all unit vectors are of a certain minimal Schmidt number k. For mixed states, we describe the role of k-positive maps in identifying states whose Schmidt rank exceeds k. Finally, we compute the Schmidt rank of an interesting class of states called generalized Werner states.}

\section{Quantum States, Observables and Probabilities}\label{sec:qstates}
We begin with a few introductory remarks on notation. We consider finite-level quantum systems labeled by $A, B, C$ etcetera.  Let $\cH_{A}, \cH_{B}, \cH_{C}$, denote the finite-dimensional Hilbert spaces associated with them. The elements of $\cH$  are called {\it ket vectors} denoted as $|u\rangle, |v\rangle$, while the elements of the dual $\cH^{*}$ are called {\it bra vectors} denoted by $\langle u |, \langle v |$. The {\it bra-ket} $\langle u | v\rangle$ denotes the sesquilinear form, which is linear in $|v\rangle$ and anti-linear in $|u\rangle$. If $\cH$ is a $n$-dimensional Hilbert space, $|u\rangle$ can be represented as a column vector and $\langle u|$ as a row vector, with complex entries.
\[|u\rangle  = \left[\begin{array}{c} z_{1} \\ z_{2} \\ \vdots \\ z_{n} \end{array}\right]; \;  \langle u | = \left[\begin{array}{cccc} \bar{z_{1}} & \bar{z_{2}} & \ldots & \bar{z_{n}} \end{array} \right]\]
The bra vector $\langle u| \equiv |u\rangle^{\dagger}$ is thus the {\it adjoint} of the ket vector. The adjoint of any operator $X$ is denoted as $X^{\dagger}$. Also note that for vectors $|u\rangle, |v\rangle$ and operator $X$, $\langle v | X u \rangle = \langle v| X|u\rangle = \langle X^{\dagger} v | u\rangle$. If $|u\rangle \in \cH_{1}$ and $|v\rangle \in \cH_{2}$ are vectors in two different Hilbert spaces $\cH_{1}$ and $\cH_{2}$ respectively, $|u\rangle\langle v|$ is an operator from $\cH_{2}\rightarrow\cH_{1}$. That is, for any $|w\rangle \in \cH_{2}$, $(|u\rangle\langle v|)|w\rangle = \langle v|w\rangle |u\rangle$.

Given a Hilbert space $\cH$ associated with some physical system, let $\cP(\cH)$ denote the set of all orthogonal projectors in $\cH$,  $\cB(\cH)$ denote the set of all bounded linear operators in $\cH$, $\cO(\cH)$ be the set of self-adjoint operators in $\cH$ and $\cS(\cH)$ denote the set of positive operators of unit trace in $\cH$. The elements of $\cS(\cH)$ are the allowed {\it states} of the system, the elements of $\cO(\cH)$ are the {\it observables}, and, the elements of $\cP(\cH)$ are {\it events}.

In classical probability theory, recall that a probability distribution $p$ over a finite set $\Omega$ is defined by assigning a probability $p(w) \geq 0$ to each $w\in \Omega$ such that $\sum_{w}p(w) = 1$.
\begin{defn}[Expectation Value of a Classical Random Variable]
The expectation value of any real-valued random variable $f(w)$ under the distribution $p$ is evaluated as $\mathbb{E}[f] = \sum_{w}f(w)p(w)$.
\end{defn}
The sample space $\Omega$ is simply the finite set of all possible outcomes, and an \emph{event} is any subset $E \subset \Omega$. The probability of event $E$ is then given by $P(E) = \sum_{w \in E} p(w)$.

In quantum probability theory, an event $E$ is a projection. For any $\rho \in \cS(\cH)$, $\tr[E \rho]$ gives the probability of $E$ in the state $\rho$. The states $\rho$ thus plays the role of the probability distribution in the quantum setting. Just as the set of all probability distributions forms a convex set, the set of all states $\cS(\cH)$ is a convex set. The extreme points of $\cS(\cH)$ are one-dimensional projections, which are called {\it pure states}. For any unit vector $|v\rangle$, the operator $|v\rangle\langle v|$ is a one-dimensional projection. It is, in fact, the projection on the subspace (one-dimensional ray) generated by $|v\rangle$. Up to multiplication by a scalar of unit modulus, the unit vector $|v\rangle$ is uniquely determined by the pure state $|v\rangle\langle v|$ and hence we often refer to $|v\rangle$ itself as the pure state. Note that $|v\rangle\langle v|$ is an operator of unit trace\footnote{In the bra-ket notation, $\tr[|u\rangle\langle v|] \equiv \langle v | u\rangle$}, usually
called the {\it density operator} corresponding to the state $|v\rangle$.

Just as quantum states are the analogues of classical probability distributions, observables correspond to random variables. We next define the concept of {\bf expectation value} of an observable. Since observables are self-adjoint operators, by the spectral theorem, every observable $X$ has a spectrum $\sigma(X)$ and a spectral decomposition:
\[X = \sum_{\lambda \in \sigma(X)} \lambda E_{\lambda},\]
where, $E_{\lambda}$ is the spectral projection\index{spectral projection} for $\lambda$. $E_{\lambda}$ is the event that the observable $X$ takes the value $\lambda$. The {\it values} of any observable are simply the points of its spectrum. The probability that the observable $X$ assumes the value $\lambda$ in state $\rho$ is $\tr[\rho E_{\lambda}]$. This naturally leads to the following definition.
\begin{defn}[Expectation Value of an Observable]\index{expectation value}
The expectation value of observable $X$ in state $\rho$ is given by
\begin{equation}
 \mathbb{E}_{\rho}(X) = \sum_{\lambda\in\sigma(X)}\lambda \tr[\rho E_{\lambda}] = \tr[\rho X].
\end{equation}
\end{defn}
The set of projections $\{E_{\lambda}\}$ constitutes a resolution of identity, that is, $\sum_{\lambda}E_{\lambda} = I$.

In a similar fashion, we can compute expectation of any real-valued function of observable $X$. If $f$ is a real-valued function, $f(X)$ is also a self-adjoint operator and it follows from the spectral theorem that, $f(X) = \sum_{\lambda\in \sigma(X)}f(\lambda) E_{\lambda}$. Therefore, $\mathbb{E}_{\rho}(f(X)) = \tr[\rho f(X)]$. In particular, the variance of an observable $X$ in state $\rho$ is given by
\[{\rm Var}_{\rho}X = \tr[\rho X^{2}] - (\tr[\rho X])^{2}.  \]
If $m = \tr[\rho X]$ denotes the mean value of observable $X$, the variance can also be written as ${\rm Var}_{\rho}X  = \tr[\rho (X - m I)^{2}]$. It is then easy to see that the variance of $X$ vanishes in state $\rho$ iff $\rho$ is supported on the eigenspace of $X$ corresponding to eigenvalue $m$.

Now, consider the variance of $X$ in a pure state $|\psi\rangle$.
\begin{eqnarray}
 {\rm Var}_{|\psi\rangle}X  &=& \tr[|\psi\rangle\langle\psi| (X-m I)^{2}]  \nonumber \\
&=& \langle\psi|(X-m I)^{2}|\psi\rangle = \parallel (X-m I)|\psi\rangle\parallel^{2}.
\end{eqnarray}
Thus, in any pure state $|\psi\rangle$, there will always exist an observable $X$ such that the variance of $X$ in $|\psi\rangle$ is non-zero! Contrast this with the situation in classical probability theory. The classical analogues of the pure states are extreme points of the convex set of probability measures on $X$, namely, the point measures. And the variance of any classical random variable vanishes at these extreme points. We thus have an important point of departure between classical and quantum probabilities. In the quantum case, corresponding to the extreme points, namely the pure states, we can always find an observable with non-vanishing variance. The indeterminacy is thus much stronger in the case of quantum probabilities. In fact, examining the variances of pairs of observables leads to a formulation of Heisenberg's uncertainty principle. For a more detailed study of finite-level quantum probability theory, see~\cite{KRPbook1}.

We conclude this section with a brief discussion on the concept of {\bf measurement} in quantum theory\index{quantum! measurement}. Let the eigenvalues of the observable $X$ be labeled as $\{\lambda_{1}, \lambda_{2}, \ldots, \lambda_{k}\}$, with associated projections $\{E_{\lambda_{i}}\}$. Then, the measurement process associated with an observable $X = \sum_{\lambda_{i} \in \sigma(X)} \lambda_{i} E_{\lambda_{i}}$ essentially specifies which value $\lambda_{i}$ is realized for a given state $\rho$. The probability that label ${i}$ is observed is $\tr[\rho E_{\lambda_{i}}]$. Furthermore, the measurement process transforms the state in a non-trivial fashion. The von Neumann collapse postulate states that the measurement process {\it collapses} the state $\rho$ to a state $\rho' = \sum_{i}E_{\lambda_{i}} \rho E_{\lambda_{i}}$. Note that $\rho'$ is also positive and has unit trace.

Associated with every observable $X$ is thus a {\it projection-valued measurement} characterized by the set of spectral projections $\{E_{\lambda_{i}}\}$. More generally, quantum measurement theory also considers {generalized measurements}.
\begin{defn}[Generalized Measurements]
 A generalized quantum measurement $\cL$ for a finite-level system is characterized by a finite family of operators $\{L_{j}\}$ with the property that $\sum_{j=1}^{k}L_{j}^\dagger L_{j} = I$. In state $\rho$, the label $j$ is observed with probability $\tr[\rho L^\dagger_{j}L_{j}]$ and the post measurement state is given by $\rho' = \sum_{j=1}^{k}L_{j}\rho L_{j}^\dagger$.
\end{defn}
The transformation $\rho \rightarrow \rho'$ effected by a generalized measurement is indeed a completely positive trace-preserving (CPTP) map.

\section{Entanglement}\label{sec:entanglement}\index{entanglement}

Given two quantum systems $\cH_{A}$ and $\cH_{B}$, the Hilbert space associated with the {\it composite} system $AB$ is denoted as $\cH_{AB}$. In classical probability theory, if $\Omega_{1}$ and $\Omega_{2}$ are the sample spaces for two experiments, the sample space for the joint experiment is given by the Cartesian product $\Omega_{1}\times\Omega_{2}$. The analogue of that in the quantum case is the tensor product of the two Hilbert spaces, that is, $\cH_{AB} = \cH_{A} \otimes \cH_{B}$. In other words, if $|u\rangle \in \cH_{A}$ and $|v\rangle \in \cH_{B}$, then, $|u\rangle\otimes|v\rangle \in \cH_{AB}$. The tensor symbol is often omitted and the elements of $\cH_{AB}$ are simply denoted as linear combinations of $|u\rangle|v\rangle$.

If ${\rm dim}(\cH_{A}) = d_{A}$ and ${\rm dim}(\cH_{B}) = d_{B}$, then, ${\rm dim}(\cH_{AB}) = d_{A}d_{B}$. Furthermore, if the vectors $\{|e_{1}^{A}\rangle, |e_{2}^{A}\rangle, \ldots, |e_{m}^{A}\rangle \}$ constitute an orthonormal basis for $\cH_{A}$ and $\{|f_{1}^{B}\rangle, |f_{2}^{B}\rangle, \ldots, |f_{n}^{B}\rangle\}$ form an orthonormal basis for $\cH_{B}$, then the vectors $\{|e_{i}^{A}\rangle \otimes |f_{j}^{B}\rangle\}$ constitute an orthonormal basis for $\cH_{AB}$. Once such a basis is fixed, the vectors are often simply denoted as $|e_{i}^{A}\rangle|f_{j}^{B}\rangle \equiv |ij\rangle$. Similarly, we can denote the composite system corresponding to several such Hilbert spaces $\cH_{A_{1}}, \cH_{A_{2}}, \ldots, \cH_{A_{N}}$ as $\cH_{A_{1}A_{2}\ldots A_{N}} = \cH_{A_{1}}\otimes\cH_{A_{2}}\otimes\ldots\otimes\cH_{A_{N}}$. The basis vectors of such a composite system are simply denoted as $|ijk...\rangle \equiv |e_{i}^{A_{1}}\rangle\otimes|f_{j}^{A_{2}}\rangle\otimes|g_{k}^{A_{3}}\rangle\otimes \ldots$.

For example, consider the two-dimensional complex Hilbert space $\mathbb{C}^{2}$. This corresponds to a {\bf one-qubit system} in quantum information theory. The two basis vectors of $\mathbb{C}^{2}$ are commonly denoted as
\[|0\rangle  = \left[\begin{array}{c} 0 \\ 1 \end{array}\right], \qquad  |1\rangle  = \left[\begin{array}{c} 1 \\ 0 \end{array}\right].\]
The $n$-fold tensor product space $(\mathbb{C}^{2})^{\otimes n}$ is the {\bf $n$-qubit} Hilbert space. The basis vectors of $(\mathbb{C}^{2})^{\otimes n}$ can thus be written in terms of binary strings, as $|x_{1}x_{2}\ldots x_{n}\rangle$, where $x_{i} \in \{0,1\}$.

Classically, the extreme points of the joint probability distributions over $\Omega_{1}\times\Omega_{2}$ are Dirac measures on the joint sample space, of the form, $\delta_{(\omega_{1}, \omega_{2})} = \delta_{\omega_{1}}\otimes\delta_{\omega_{2}}$, for points $\omega_{1} \in \Omega_{1}$ and $\omega_{2} \in \Omega_{2}$. The extreme points of the set of joint distributions are in fact products of Dirac measures on the individual sample spaces. The situation is however drastically different for composite Hilbert spaces: there exist pure states of a bipartite Hilbert space which are not product states, but are sums of product states. Such states are said to be {\bf entangled}. We will formalize this notion in the following section.

\subsection{Schmidt Decomposition}\label{sec:schmidt}\index{Schmidt! decomposition}
We first state and prove an important property of pure states in a bipartite Hilbert space $\cH_{AB}$.
\begin{thm}[Schmidt Decomposition]\label{thm:schmidt}\index{Theorem! Schmidt decomposition}
Every pure state $|\psi\rangle \in \cH_{AB}$ can be written in terms of non-negative real numbers $\{\lambda_{k} \, (k \leq r)\} $, and two orthonormal sets $\{|\phi_{k}^{A}\rangle\} \subset \cH_{A}$ and $\{|\psi_{k}^{B}\rangle\} \subset \cH_{B}$, as,
\begin{equation}
|\psi\rangle = \sum_{k=1}^{r}\lambda_{k}|\phi_{k}^{A}\rangle|\psi_{k}^{B}\rangle, \label{eq:schmidt}
 \end{equation}
where, $\lambda_{k}$ satisfy $\sum_{k=1}^{r}\lambda_{k}^{2}=1$.
\end{thm}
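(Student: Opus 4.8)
The plan is to recognize the Schmidt decomposition as nothing more than the singular value decomposition (SVD) transported across the natural identification of $\cH_{AB}$ with a space of matrices. First I would fix orthonormal bases $\{|e_i^A\rangle : 1 \le i \le m\}$ of $\cH_A$ and $\{|f_j^B\rangle : 1 \le j \le n\}$ of $\cH_B$, and write the given state uniquely as $|\psi\rangle = \sum_{i,j} c_{ij}\,|e_i^A\rangle|f_j^B\rangle$. This packages all the data of $|\psi\rangle$ into a single $m \times n$ complex matrix $C = [c_{ij}]$, and the normalization $\langle\psi|\psi\rangle = 1$ becomes $\sum_{i,j}|c_{ij}|^2 = \tr(C^\dagger C) = 1$.

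The heart of the argument is to invoke the SVD: there exist unitaries $U \in M_m$, $V \in M_n$ and non-negative reals $\lambda_1 \ge \cdots \ge \lambda_r > 0$, with $r = \mathrm{rank}(C)$, such that $c_{ij} = \sum_{k=1}^r U_{ik}\,\lambda_k\,\overline{V_{jk}}$. I would then define $|\phi_k^A\rangle := \sum_i U_{ik}|e_i^A\rangle$ and $|\psi_k^B\rangle := \sum_j \overline{V_{jk}}\,|f_j^B\rangle$ for $1 \le k \le r$. Because $U$ is unitary, $\{|\phi_k^A\rangle\}$ is orthonormal via the routine computation $\langle\phi_k^A|\phi_l^A\rangle = (U^\dagger U)_{kl} = \delta_{kl}$; because $V$, hence $\overline{V}$, is unitary, the same computation on the $B$ side gives orthonormality of $\{|\psi_k^B\rangle\}$. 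Substituting the SVD expression for $c_{ij}$ and collecting terms then yields exactly $|\psi\rangle = \sum_{k=1}^r \lambda_k\,|\phi_k^A\rangle|\psi_k^B\rangle$. Finally, since the product vectors $|\phi_k^A\rangle|\psi_k^B\rangle$ are themselves orthonormal in $\cH_{AB}$, the normalization condition reads $\sum_{k=1}^r \lambda_k^2 = 1$, as claimed.

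There is essentially no deep obstacle once the SVD is granted; the whole theorem is a coordinate-free repackaging of that linear-algebra fact, and every remaining step is bookkeeping. If one prefers not to cite the SVD as a black box, the honest alternative is to prove it en route via the spectral theorem: form the reduced density operator $\rho_A := \tr_B(|\psi\rangle\langle\psi|)$, which is positive with $\tr(\rho_A) = 1$, diagonalize it as $\rho_A = \sum_k \lambda_k^2\,|\phi_k^A\rangle\langle\phi_k^A|$ with $\lambda_k > 0$, and then set $|\psi_k^B\rangle := \lambda_k^{-1}(\langle\phi_k^A|\otimes\mathrm{id})|\psi\rangle$. Here the single point needing verification is orthonormality of $\{|\psi_k^B\rangle\}$, which falls out of $\langle\psi|(|\phi_k^A\rangle\langle\phi_l^A|\otimes\mathrm{id})|\psi\rangle = \langle\phi_l^A|\rho_A|\phi_k^A\rangle = \lambda_k^2\delta_{kl}$. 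Either route makes the same structural point, so I would present the SVD version for brevity and remark that $r$, the number of nonzero $\lambda_k$, is basis-independent, being the rank of $C$ (equivalently of $\rho_A$); this is what makes it a genuine invariant, the Schmidt rank.
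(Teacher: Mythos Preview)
Your proof is correct and follows essentially the same approach as the paper: expand $|\psi\rangle$ in product bases to obtain a coefficient matrix, apply the singular value decomposition, and read off the Schmidt vectors from the columns of the unitary factors, with orthonormality coming from unitarity and the normalization condition from $\sum_{i,j}|c_{ij}|^2=1$. Your write-up is in fact somewhat more explicit than the paper's (you spell out the orthonormality computation and offer the alternative via diagonalizing $\rho_A$), but the underlying argument is identical.
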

\begin{proof}
Consider a pure state $|\psi\rangle \in \cH_{AB}$. Let $\cH_{A}$ be of dimension $d_{A}$ and $\cH_{B}$ of dimension $d_{B}$. In terms of the orthonormal bases for $\cH_{A}$ and $\cH_{B}$, the state $|\psi\rangle$ can be written as
\[|\psi\rangle = \sum_{i=1}^{d_{A}}\sum_{j=1}^{d_{B}} a_{ij}|i_{A}\rangle|j_{B}\rangle. \]
The matrix of coefficients $[A]_{ij} = a_{ij}$ completely characterizes the pure state $|\psi\rangle$. Let $r = {\rm rank}(A)$. $A$ can be represented via singular value decomposition, as $A = UDV^{\dagger}$, where $U, V$ are unitaries of order $d_{A}$, $d_{B}$ respectively, and $D$ is a diagonal matrix of rank $r$, whose entries are the singular values ($\lambda_{k}$) of $A$. Thus,
\begin{eqnarray}
 |\psi\rangle &=& \sum_{i,j,k}u_{ik}\lambda_{k}v_{kj}|i_{A}\rangle|j_{B}\rangle \nonumber \\
&=& \sum_{k=1}^{r}\lambda_{k}|\phi_{k}^{A}\rangle|\psi_{k}^{B}\rangle,
\end{eqnarray}
where, the vectors $\{|\phi_{k}^{A}\rangle\}$ constitute an orthonormal set in $\cH_{A}$ and $\{|\psi_{j}^{B}\rangle\}$ constitute an orthonormal set in $\cH_{B}$, since $U$ and $V$ are unitary matrices.

Since the state $|\psi\rangle\in \cH_{AB}$ is normalized, the corresponding coefficient matrix $A$ is such that $\sum_{i,j}|a_{ij}|^{2} = 1$. This is turn implies that $\sum_{k=1}^{r}\lambda_{k}^{2} = 1$.
\end{proof}

Given any density operator (state) $\rho_{AB} \in \cH_{AB}$, the {\it reduced state} on $\cH_{A}$, denoted as $\rho_{A}$, is obtained by tracing out over an orthonormal basis in $\cH_{B}$: $\rho_{A} = \tr_{B}[\rho_{AB}]$. Thus, $\rho_A$ is the operator on $\cH_A$ which satisfies, for all $|u\rangle, |v\rangle \in \cH_{A}$, the condition
\[ \langle v|\rho_A|u \rangle = \sum_{j}  \langle v|\langle f_{j} |\rho_{AB} |u \rangle |f_j \rangle, \] for some (and in fact any) orthonormal basis $\{|f_j \rangle\}$ of $\cH_B$. Similarly, the reduced state $\rho_{B}$ on $\cH_{B}$, is obtained by tracing out over an orthonormal basis in $\cH_{A}$: $\rho_{B} = \tr_{A}[\rho_{AB}]$. The states $\rho_{A}, \rho_{B}$ are also called {\it marginal states}, since they are analogues of the marginal distributions of a joint distribution in classical probability theory.

Now, consider the marginals of a pure state $|\psi\rangle \in \cH_{AB}$. It is a simple exercise to show that the marginals of $|\psi\rangle$ are not pure in general, they are {\it mixed} states.
\begin{exer}
Show that the reduced states of the density operator $|\psi\rangle\langle\psi|$ corresponding to the pure state $|\psi\rangle = \sum_{k=1}^{r}\lambda_{k}|\phi_{k}^{A}\rangle|\psi_{k}^{B}\rangle$, are given by
\begin{equation}
 \rho_{A} = \sum_{k=1}^{r}\lambda_{k}^{2}|\phi_{k}^{A}\rangle\langle \phi_{k}^{A}|; \;  \rho_{B} = \sum_{k=1}^{r}\lambda_{k}^{2}|\psi_{k}^{B}\rangle\langle \psi_{k}^{B}| \label{eq:marginals}
\end{equation}
\end{exer}

Thus, the marginals of a pure state $|\psi\rangle \in \cH_{AB}$ are no longer pure, they are mixed states, of rank equal to the rank of the coefficient matrix $A$. Contrast this with the classical case, where the marginals of the extreme points of the set of joint distributions are in fact extreme points of the set of distributions over the individual sample spaces. This important departure from classical probability theory, leads naturally to the notion of {\bf entanglement}.

We can associate with any pure state $|\psi\rangle \in \cH_{AB}$ a probability distribution $\{\lambda_{1}^{2}, \lambda_{2}^{2}, \ldots, \lambda_{k}^{2}\}$, via the Schmidt decomposition proved in Theorem~\ref{thm:schmidt}. Recall that the Shannon entropy of a probability distribution $\{p_{1}, p_{2}, \ldots, p_{k}\}$ is defined as $H(p) = -\sum_{i}p_{i}\log p_{i}$. Correspondingly, the {\bf von Neumann entropy}\index{von Neumann entropy} of a quantum state is defined as $S(\rho) = -\tr\rho\log\rho$. We will study the properties of this and other quantum entropies in greater detail in Section~\ref{sec:SSA}. For now, it suffices to note that for a pure state $|\psi\rangle \in \cH_{AB}$ of a joint system, $S(|\psi\rangle\langle\psi|) = 0$, whereas for the marginals $\rho_{A}, \rho_{B}$ in Eq.~\eqref{eq:marginals},
\[ S(\rho_{A}) = S(\rho_{B}) = -\sum_{k=1}^{r}(\lambda_{k}^2)\log(\lambda_{k}^{2}). \]
Again, note the quantitative departure from classical probability theory: given a joint distribution over two sample spaces $A,B$, the Shannon entropy of the joint distribution is always greater than the Shannon entropy of the marginals, that is, $H(AB) \geq H(A)$. But in quantum systems, while the pure states of a joint system will always have zero entropy, their reduced states will have non-zero entropy in general.

For a bipartite pure state $|\psi\rangle$, the Schmidt decomposition provides a way of quantifying the the deviation of $|\psi\rangle$ away from a product pure state. The number of product states in the Schmidt decomposition and the relative weights assigned to the different product states, quantify the entanglement of state $|\psi\rangle$.

\begin{defn}[Schmidt Rank]\index{Schmidt! number, pure states}
Given a bipartite pure state $|\psi\rangle \in \cH_{AB}$ with a Schmidt decomposition $|\psi\rangle = \sum_{k=1}^{r}\lambda_{k}|\phi_{k}^{A}\rangle|\psi_{k}^{B}\rangle$,
\begin{itemize}
\item[(i)] The number $r$ of non-zero coefficients $\lambda_{k}$ is defined to be the {\bf Schmidt rank}\index{Schmidt! rank} of the state $|\psi\rangle$.
\item[(ii)] A bipartite pure state $|\psi\rangle$ is said to be {\bf entangled} if it has Schmidt rank greater than one.
\item[(iii)] $S(\rho_{A}) = S(\rho_{B})$ is a measure of the entanglement of the pure state $|\psi\rangle$, where $\rho_{A} = \tr_{B}[|\psi\rangle\langle\psi|]$.
\end{itemize}
\end{defn}

For a $d$-level system, the von Neumann entropy of any state is bounded from above by $\log d$. Therefore, the maximum entanglement of a bipartite pure state $|\psi\rangle \in \cH_{AB}$ is $\log\min(d_{A}, d_{B})$.
\begin{exer}
If $\min(d_{A}, d_{B}) = m$, the {\bf maximally entangled state} in $\cH_{AB}$ is
\begin{equation}
 |\psi\rangle = \frac{1}{\sqrt{m}}\sum_{i=1}^{m}|i_{A}\rangle|i_{B}\rangle, \label{eq:maxentangled}
\end{equation}
where $\{|i_{A}\rangle\}$ and $\{|i_{B}\rangle\}$ are orthonormal bases in $\cH_{A}$ and $\cH_{B}$.
\end{exer}

We have thus far restricted our discussion to {\it pure} bipartite states. We can formally define product and entangled states in general as follows: a state $\rho \in \cS(\cH_{AB})$ is called a {\it product state} if it is of the form $\rho = \rho_{A}\otimes\rho_{B}$, where $\rho_{A}\in \cS(\cH_{A})$ and $\rho_{B}\in \cS(\cH_{B})$; if not, $\rho$ is said to be {\it entangled}. There are several interesting questions that arise in this context. For example, given a pair of states $\rho_{A} \in \cS(\cH_{A})$, $\rho_{B}\in \cS(\cH_{B})$, consider the following convex set.
\begin{equation}
 \cC(\rho_{A}, \rho_{B}) = \{\rho \in \cS(\cH_{AB})\; \vert \; \tr_{B}[\rho] = \rho_{A}, \tr_{A}[\rho] = \rho_{B}\}.
\end{equation}
Then, what is the least value $\min \{S(\rho)|\rho \in \cC(\rho_{A}, \rho_{B})\}$ of the von Neumann entropy of states $\rho$ that belong to this convex set? The maximum value is attained when the systems $A$ and $B$ are unentangled, that is, $\rho = \rho_{A}\otimes\rho_{B}$, so that $S(\rho) = S(\rho_{A}) + S(\rho_{B})$. We know this is indeed the maximum possible value because the von Neumann entropy satisfies the {\bf strong subadditivity property}: $S(\rho) \leq S(\rho_{A}) + S(\rho_{B})$ for any $\rho \in \cH_{AB}$. The strong subadditivity of the von Neumann entropy is discussed in greater detail in~\ref{sec:SSA}. Estimating the minimum value $\min \{S(\rho)|\rho \in \cC(\rho_{A}, \rho_{B})\}$  is in general a hard problem, though some estimates have been obtained for special cases~\cite{KRP05}. Interestingly, the analogous problem in classical probability theory also remains an open!

\subsection{Unitary Bases, EPR States and Dense Coding}\label{sec:EPR_dense}

For a finite-dimensional Hilbert space $\cH$ of dimension $d$, consider $\cB(\cH)$, the set of all bounded linear operators on $\cH$. $\cB(\cH)$ is also a Hilbert space with the inner product between any two operators $X, Y \in \cB(\cH)$ defined as $\langle X | Y\rangle = \tr[X^{\dagger} Y]$. $\cB(\cH)$ has a {\it unitary orthogonal basis}\index{unitary basis}, that is, it admits a family of unitary operators $\{W_{\alpha}, \alpha = 1,2,3,\ldots, d^{2}\}$, such that $\tr[W^{\dagger}_{\alpha}W_{\beta}] = d \delta_{\alpha\beta}$. In coding theory, such a basis is called a unitary error basis.

Consider the bipartite system $\cH \otimes \cH$ composed of two copies of $\cH$. Let $|\psi\rangle$ denote a maximally entangled pure state in $\cH\otimes \cH$, which is written as per Eq.~\eqref{eq:maxentangled} as,
\[ |\psi\rangle = \frac{1}{\sqrt{d}}\sum_{i}|i\rangle|i\rangle.\]
Such a maximally entangled state is often referred to as an {\bf Einstein-Podolosky-Rosen (EPR) state}\index{EPR states}, after the famous trio - Einstein, Podolsky and Rosen - who were the first to note that states like these exhibit strange properties which are uniquely quantum. We now show how a unitary orthogonal basis for $\cB(\cH)$ can in fact be used to construct a basis of EPR states, which lie at the heart of several quantum communication and cryptographic tasks.

Consider the states $|\psi_{\alpha}\rangle$ generated by applying the unitaries $\{W_{\alpha}\}$ to one half of the maximally entangled state $|\psi\rangle$, as follows:
\begin{equation}
 |\psi_{\alpha}\rangle  = \frac{1}{\sqrt{d}}\sum_{i=1}^{d}(W_{\alpha}|i\rangle)|i\rangle, \; \alpha = 1, 2, \ldots, d^{2}. \label{eq:epr}
\end{equation}
In other words, the state $|\psi\rangle$ is being acted upon by the operators $\{W_{\alpha}\otimes I\}$. It is then a simple exercise to show that the states $|\psi_{\alpha}\rangle$ are mutually orthogonal and maximally entangled.

\begin{exer}[Basis of EPR states]
The states $\{|\psi_{\alpha}\rangle, \alpha = 1,2,\ldots, d^{2}\}$ defined in Eq.~\eqref{eq:epr} constitute an orthonormal basis -- called the basis of EPR states -- for $\cH\otimes \cH$. Furthermore, each $|\psi_{\alpha}\rangle$ is maximally entangled, that is,
\[\tr_{1,2}\left[|\psi_{\alpha}\rangle\langle\psi_{\alpha}|\right] = \frac{I}{d}, \; \forall \alpha,\] whether the partial trace is taken over the first or the second Hilbert space.
\end{exer}

A remark on uniqueness and existence of unitary orthogonal bases: note that a given unitary orthogonal basis is invariant under scalar multiplication (with a scalar of unit modulus), permutation, and conjugation by a fixed operator ($W_{\alpha} \rightarrow \Gamma W_{\alpha} \Gamma^{\dagger}$). It is an interesting question to classify the distinct unitary orthogonal bases upto this equivalence. A simple construction of such a unitary error basis in any dimension $d$ is as follows. Consider the group $\mathbb{Z}_{d}$ and identify $\cH \equiv \ell_{2}(\mathbb{Z}_{d})$. Then, consider the translation and rotation operators $U_{a}, V_{b}$ ($a,b \in \mathbb{Z}_{d}$), namely,
\[ U_{a}\delta_{x} = \delta_{x +a} ; \quad V_{b}\delta_{x} = b(x) \delta_{x} ,\]
where $\{\delta_{x} \; | \; \forall x \in \mathbb{Z}_{d}\}$ is the standard orthonormal basis for $\cH$ and $b$ is a character of the group $\mathbb{Z}_{d}$. The products $\{U_{a}V_{b} | \; a,b \in \mathbb{Z}_{d} \}$, which are the Weyl operators of the abelian group $\mathbb{Z}_{d}$, form a unitary error basis.

{\bf Superdense coding}\index{superdense coding} is a simple yet important example of a quantum communication task that is made possible using EPR states. The goal of the task is for one party $A$ ({\it Alice}) to communicate $d^{2}$ classical messages to another party $B$ ({\it Bob}), by physically sending across just one quantum state. Say $\cH_{A}$ is the $d$-dimensional Hilbert space associated with Alice's system and $\cH_{B}$ the $d$-dimensional Hilbert space corresponding to Bob's system. We assume that Alice and Bob share a bipartite EPR state, namely, $|\psi_{AB}\rangle = \frac{1}{\sqrt{d}}\sum_{i=1}^{d}|i_{A}\rangle|i_{B}\rangle$. In other words, the joint system $\cH_{AB}$ is in state $|\psi_{AB}\rangle$. Now, if Alice wants to send the message $\alpha \in \{1,2,\ldots, d^{2}\}$, she simply applies the {\it unitary gate}\footnote{Since deterministic state changes are effected by unitary operators in quantum theory, unitaries are often referred to as {\it quantum gates} in the quantum computing literature.} $W_{\alpha}$ on
her half of the state. The joint state of $\cH_{AB}$  is then transformed to $|\psi_{\alpha}\rangle$.

Alice then sends across her half of the state to Bob. To {\it decode} the message, Bob has to perform a measurement and obtain a classical output. As discussed in Sec.~\ref{sec:qstates}, corresponding to the orthonormal basis $\{|\psi_{\alpha}\rangle\}$, we have a projective measurement characterized by the collection of one-dimensional projections $\{|\psi_{\alpha}\rangle\langle\psi_{\alpha}|\}$. When Bob performs this measurement on any state $\rho$, the probability of obtaining outcome $\alpha$ is $\tr[\rho |\psi_{\alpha}\rangle\langle\psi_{\alpha}|]$. Thus, given the state $|\psi_{\alpha}\rangle$, Bob will correctly decode the message $\alpha$ with probability one. The idea of dense coding was first proposed in~\cite{BW92}; for recent pedagogical discussions, see~\cite{NCbook,KRPbook2}.

As an aside, another interesting application of the unitary group is in proving {\it universality} of a set of quantum gates. For a composite system made up of $k$ finite-dimensional Hilbert spaces $\cH_{1}\otimes\cH_{2}\otimes\ldots\otimes\cH_{k}$, consider the unitary group $\cU(\cH_{1}\otimes\cH_{2}\otimes\ldots\otimes\cH_{k})$. Let $\cU_{i,j}$ be the set of unitary operators $\{U_{ij}\}$ that act only on Hilbert spaces $\cH_{i}\otimes \cH_{j}$, leaving the other Hilbert spaces unaffected. $\cU_{i,j}$ is a subgroup of $\cU$. The {\it universality theorem} states that every unitary operator $U \in \cU$ can be decomposed as a product $U = U_{1}U_{2}\ldots U_{N}$, where each $U_{m}$ is an element of $\cU_{i,j}$ for some $i,j \in 1, \ldots, k.$ That is, any unitary operator acting on the composite system of $k$ Hilbert spaces can be written as a product of unitaries that act non-trivially only on two of the $k$ Hilbert spaces.

\section{Schmidt rank of bipartite entangled states}\index{Schmidt! rank}
Using the concepts defined in Sec.~\ref{sec:schmidt}, we will now explore a few interesting problems relating to the Schmidt rank of bipartite entangled states.

\subsection{Subspaces of minimal Schmidt rank}
Given a pair of finite-dimensional Hilbert spaces $\cH_{1}\otimes \cH_{2}$, a question of interest is to construct subspaces $\cS \subset \cH_{1}\otimes\cH_{2}$ such that every pure state $|\psi\rangle \in \cS$ has Schmidt rank $\geq k$. In particular, what is the maximum dimension of a subspace $\cS$ whose pure states are all of Schmidt rank greater than or equal to $k$? Note that any state that has support on such a subspace $\cS$ will necessarily be a highly entangled state. Since entangled states are communication resources (as seen in the case of dense coding, for example), this question assumes importance in quantum information theory.

Let $\mathbb{M}(m,n)$ denote the set of $m\times n$ complex matrices. For elements $X, Y \in \mathbb{M}(m,n)$, the inner product is defined as $\langle X|Y\rangle = \tr[X^{\dagger}Y]$. Suppose the Hilbert spaces $\cH_{1}$ and $\cH_{2}$ are of dimensions $m$ and $n$ respectively, there is a natural identification between $\cH_{1}\otimes\cH_{2}$ and $\mathbb{M}(m,n)$. To see the explicit correspondence between elements of $\cH_{1}\otimes\cH_{2}$ and $\mathbb{M}(m,n)$, define a {\it conjugation} $J$ on $\cH_{2}$. For product vectors $|u\rangle|v\rangle \in \cH_{1}\otimes\cH_{2}$, the conjugation acts as follows:
\begin{equation}
 \Gamma(J): |u\rangle|v\rangle \rightarrow |u\rangle\langle J v|. \label{eq:correspondence}
\end{equation}
Since the product vectors $|u\rangle|v\rangle$ are total in $\cH_{1}\otimes\cH_{2}$, $\Gamma(J)$ defines the correspondence $\cH_{1}\otimes\cH_{2} \rightarrow \mathbb{M}(m,n)$. The corresponding operators $|u\rangle\langle Jv|$ span $\mathbb{M}(m,n)$. Furthermore, $\Gamma(J)$ is inner product preserving, and is therefore a unitary isomorphism between $\cH_{1}\otimes \cH_{2}$ and $\mathbb{M}(m,n)$. The following exercise is a simple consequence of this isomorphism.
\begin{exer}
Show that the Schmidt rank of any pure state $|\psi\rangle \in \cH_{1}\otimes\cH_{2}$ is equal to the rank of $\Gamma(J)(|\psi\rangle)$.
\end{exer}

The problem of finding subspaces for which every vector has a Schmidt rank greater than or equal to $k$, thus reduces to the following matrix-theoretic problem: Construct subspaces $\cS \subset \mathbb{M}(m,n)$, with the property that every non-zero element of these subspaces is of rank greater than or equal to $k$. In particular, the problem is to find the maximum dimension of the subspace $\cS\subset \mathbb{M}(m,n) $, whose non-zero elements are all of rank greater than or equal to $k$. While the problem remains open for a general $k$, we will present an example of such a construction for $k=2$.

Let $\{W_{\alpha}, \alpha=1,\ldots,n^{2}\}$ denote a unitary orthogonal basis for the space of square matrices $\mathbb{M}_{n} \equiv \mathbb{M}(n,n)$. Consider a positive, self-adjoint matrix $\Phi \in \mathbb{M}_{n}$ with the spectral resolution,
\[ \Phi = \sum_{\alpha=1}^{n^{2}}p_{\alpha}|W_{\alpha}\rangle\langle W_{\alpha}| ,\]
with distinct eigenvalues $p_{\alpha} > 0$. Now, we can construct a subspace $\cS \subset \mathbb{M}_{2n}$, of square matrices in $\mathbb{M}_{2n} \equiv \mathbb{M}(2n,2n)$, whose non-zero elements are of rank $k \geq 2$.
\begin{thm}\label{thm:subspace2}
Consider $\cS \subset \mathbb{M}(2n,2n)$ defined as follows:
\begin{equation}
 \cS = \left\{ X \equiv \left(\begin{array}{c|c}
A & \Phi(B)\\
\hline
B & A
\end{array}\right), \; A, B \in \mathbb{M}_{n}\right\}. \label{eq:rank2}
\end{equation}
The elements $X \in \cS$ satisfy ${\rm rank}(X) \geq 2$, for arbitrary $A,B \in \mathbb{M}_{n}$.
\end{thm}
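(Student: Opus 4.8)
The plan is to prove the equivalent statement that the only element of $\cS$ of rank at most $1$ is $0$; this immediately yields $\mathrm{rank}(X)\ge 2$ for every nonzero $X\in\cS$. So I suppose $X\in\cS$ has rank $\le 1$ and write it as a single dyad $X=|u\rangle\langle v|$ (the case $X=0$ being trivial). Partitioning $|u\rangle=\binom{|u_1\rangle}{|u_2\rangle}$ and $|v\rangle=\binom{|v_1\rangle}{|v_2\rangle}$ into $n$-dimensional blocks and comparing with the defining block form of $X$, I read off the four equations
\begin{equation*}
A=|u_1\rangle\langle v_1|=|u_2\rangle\langle v_2|,\qquad B=|u_2\rangle\langle v_1|,\qquad \Phi(B)=|u_1\rangle\langle v_2|.
\end{equation*}
Everything then hinges on analysing these dyadic identities.

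First I would dispose of the degenerate case $A=0$. Then $|u_1\rangle\langle v_1|=0$ and $|u_2\rangle\langle v_2|=0$, so in each pair one factor vanishes; a short enumeration of the four resulting patterns, together with the fact that $\Phi$ is invertible (all eigenvalues $p_\alpha>0$, so $\Phi(B)=0\Rightarrow B=0$), forces every block of $X$ to vanish, i.e. $X=0$.

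The substantive case is $A\neq 0$. Since $A=|u_1\rangle\langle v_1|=|u_2\rangle\langle v_2|$ is a nonzero dyad, all four vectors are nonzero, and comparing ranges and co-ranges gives $|u_2\rangle=\lambda|u_1\rangle$ and $\langle v_1|=\lambda\langle v_2|$ for a single scalar $\lambda\neq 0$. Substituting, $B=|u_2\rangle\langle v_1|=\lambda A$, while the target value $|u_1\rangle\langle v_2|=\lambda^{-1}A$; hence $\Phi(B)=\lambda^{-1}A$. Using linearity and $B=\lambda A$, this reads $\lambda\,\Phi(A)=\lambda^{-1}A$, that is
\begin{equation*}
\Phi(A)=\lambda^{-2}A.
\end{equation*}
Thus the \emph{rank-one} matrix $A$ is an eigenvector of $\Phi$ in the Hilbert space $\mathbb{M}_n$.

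Here lies the crux, and the step that genuinely uses the hypotheses. Because the eigenvalues $p_\alpha$ are \emph{distinct}, each eigenspace of $\Phi$ is one-dimensional, namely $\mathbb{C}W_{\alpha_0}$ for the index $\alpha_0$ with $p_{\alpha_0}=\lambda^{-2}$; therefore $A$ is a nonzero scalar multiple of the unitary $W_{\alpha_0}$ and so has full rank $n$. For $n\ge 2$ this contradicts $\mathrm{rank}(A)=1$, so the case $A\neq 0$ is impossible and we conclude $X=0$ in all cases. I expect the main obstacle to be nothing computational, but rather the recognition that distinctness of the $p_\alpha$ is exactly what prevents the offending eigenspace from containing a rank-one matrix: if $\Phi$ were instead a multiple of the identity, then $X=\bigl(\begin{smallmatrix}A&A\\A&A\end{smallmatrix}\bigr)$ with $A$ rank one would be a genuine rank-one element of $\cS$. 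Pinning down this use of the spectral hypothesis, and separately handling the degenerate $A=0$ branch via invertibility of $\Phi$, are the only points that require care.
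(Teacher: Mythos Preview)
Your proof is correct and follows essentially the same route as the paper's: assume a rank-one $X$, write it as a single dyad, partition into $n$-blocks, use the two expressions for $A$ to relate the block vectors by a scalar, and conclude that a rank-one matrix would be an eigenvector of $\Phi$, contradicting the fact that the (simple) eigenspaces are spanned by unitaries. The only cosmetic differences are that the paper organises the degenerate cases by which of $A,B$ vanishes (showing $\mathrm{rank}\ge 2$ directly) rather than by whether $A=0$, and that you make explicit the need for $n\ge 2$ and the role of the distinct-eigenvalue hypothesis, both of which the paper leaves implicit.
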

\begin{proof}
Suppose $B=0$ and $A\neq 0$. Then, the off-diagonal blocks are zero, but the diagonal blocks are non-zero and ${\rm rank}(X) = 2$ for all $A \in \mathbb{M}_{n}$. If $A = 0$ but $B \neq 0$, then, since $\Phi$ is non-singular, $\Phi(B) \neq 0$. Therefore, ${\rm rank}(X) \geq 2$, $\forall \, B \in \mathbb{M}_{n}$.

Now, it remains to show that ${\rm rank}(X) \geq 2$, for all $A,B \neq 0$ in $ \mathbb{M}_{n}$. Suppose ${\rm rank}(X(A,B)) =1$. Such a matrix $X$ is of the form \[\left[\begin{array}{c}
                                                                                                                                                                   |u\rangle \\
|v\rangle
                                                                                                                                                                                     \end{array}\right]\left[\begin{array}{cc} |u'\rangle |v'\rangle \end{array}\right] = \left(\begin{array}{cc}
|u\rangle\langle u| & |u\rangle\langle v'| \\
|v\rangle\langle u'| & |v\rangle\langle v'| \end{array}\right),
\]
where, $|u\rangle, |v\rangle$ are column vectors of length $n$. Comparing with the definition of $X$ in Eq.~\eqref{eq:rank2}, we see that,
\begin{equation}
|u\rangle\langle u'| = |v\rangle\langle v'| = A, \qquad \Phi(|v\rangle\langle u'|) = |u\rangle\langle v'|.  \label{eq:2equality}
\end{equation}
The first equality implies that there exists scalars $c,c' \neq 0$ such that  $|v\rangle = c|u\rangle$ and $|v'\rangle = c'|u'\rangle$. The second equality in Eq.~\eqref{eq:2equality} thus becomes
\[ \Phi(|v\rangle\langle u'|) = c'c^{-1}|v\rangle\langle u'|.\]
This implies that $|v\rangle\langle u'|$ an eigenvector of $\Phi$. However, the eigenvectors of $\Phi$ belong to the unitary error basis, and thus $\Phi$ cannot have such a rank-one matrix as its eigenvector. Thus, the assumption that $X(A,B)$ as defined in Eq.~\eqref{eq:rank2} if of rank one leads to a contradiction. This proves the claim.
\end{proof}

The above construction in fact leads to a more interesting property of the set of matrices in $\mathbb{M}_{2n}$.
\begin{thm}
 The space of $2n\times2n$  matrices $\mathbb{M}_{2n}$ is a direct sum of the form $\mathbb{M}_{2n} = \cS \oplus \cS^{\perp}$, where, ${\rm rank}(X) \geq 2$, for every $ 0 \neq X \in \cS\cup\cS^{\perp}$.
\end{thm}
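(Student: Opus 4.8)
The decomposition $\mathbb{M}_{2n} = \cS \oplus \cS^{\perp}$ is automatic: $\mathbb{M}_{2n}$ is a finite-dimensional Hilbert space under $\langle X | Y \rangle = \tr[X^{\dagger} Y]$, so $\cS^{\perp}$ is simply the orthogonal complement. The substance of the statement is therefore twofold --- first, to identify $\cS^{\perp}$ concretely, and second, to show that every nonzero element of $\cS^{\perp}$ has rank at least $2$; the analogous assertion for $\cS$ is exactly Theorem~\ref{thm:subspace2}, which may be quoted as is.

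To pin down $\cS^{\perp}$, I would pair a general block matrix $Y = \left(\begin{smallmatrix} P & Q \\ R & S \end{smallmatrix}\right)$ against a typical $X = \left(\begin{smallmatrix} A & \Phi(B) \\ B & A \end{smallmatrix}\right) \in \cS$. A direct trace computation, using the self-adjointness of $\Phi$ as an operator on the Hilbert space $\mathbb{M}_{n}$ to rewrite $\tr[\Phi(B)^{\dagger} Q] = \tr[B^{\dagger}\Phi(Q)]$, gives
\[
\langle X | Y \rangle = \tr[A^{\dagger}(P+S)] + \tr[B^{\dagger}(R + \Phi(Q))].
\]
Since $A$ and $B$ range independently over all of $\mathbb{M}_{n}$ and the Hilbert--Schmidt form is nondegenerate, $Y \in \cS^{\perp}$ if and only if $S = -P$ and $R = -\Phi(Q)$. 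Thus
\[
\cS^{\perp} = \left\{ \begin{pmatrix} P & Q \\ -\Phi(Q) & -P \end{pmatrix} : P, Q \in \mathbb{M}_{n} \right\},
\]
which has the expected dimension $2n^{2}$.

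The rank bound for $\cS^{\perp}$ then follows the same rank-one exclusion used in Theorem~\ref{thm:subspace2}. Assume some nonzero $Y \in \cS^{\perp}$ has rank one, so that $Y = \left(\begin{smallmatrix} |u\rangle \\ |v\rangle \end{smallmatrix}\right)\left(\langle u'| \;\; \langle v'|\right)$; matching this against the block description of $\cS^{\perp}$ yields $P = |u\rangle\langle u'| = -|v\rangle\langle v'|$, $Q = |u\rangle\langle v'|$ and $\Phi(Q) = -|v\rangle\langle u'|$. If $P \neq 0$, the first pair of equalities forces $|v\rangle = c|u\rangle$ and $|v'\rangle = c'|u'\rangle$ with $c\bar{c'} = -1$, and feeding this into the relation for $\Phi(Q)$ collapses to $\Phi(P) = c^{2} P$ --- i.e.\ the rank-one matrix $P$ is an eigenvector of $\Phi$. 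This is impossible, since the eigenvectors of $\Phi$ are scalar multiples of the unitary basis elements $W_{\alpha}$, each of which has full rank $n \geq 2$. If instead $P = 0$, then the vanishing of the relevant outer products together with the nonsingularity of $\Phi$ forces $Q = 0$, whence $Y = 0$, contradicting $Y \neq 0$.

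I expect no conceptual obstacle here; the work is bookkeeping. The two places that demand care are the self-adjointness rewriting in the computation of $\cS^{\perp}$ (placing $\Phi$ on the correct factor) and the tracking of the scalars $c, c'$ through the sign changes so that the eigenvalue equation $\Phi(P) = c^{2}P$ emerges cleanly. Once that equation is in hand the contradiction is immediate from the very property of the unitary error basis --- its members are full-rank unitaries, hence never rank one --- that already drove the proof of Theorem~\ref{thm:subspace2}.
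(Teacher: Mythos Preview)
Your proposal is correct and matches the paper's proof essentially line for line: both compute $\cS^{\perp}$ by pairing a generic block matrix against $X \in \cS$, invoke the self-adjointness of $\Phi$ on $(\mathbb{M}_n,\langle\cdot|\cdot\rangle)$ to move $\Phi$ across the trace, and arrive at $\cS^{\perp} = \{\left(\begin{smallmatrix} P & Q \\ -\Phi(Q) & -P \end{smallmatrix}\right)\}$. The paper then simply says ``by the same argument as in the proof of Theorem~\ref{thm:subspace2}'' for the rank bound, whereas you spell out the rank-one exclusion explicitly and track the scalars to the eigenvector equation $\Phi(P) = c^{2}P$; this is just a more detailed rendering of the identical idea.
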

\begin{proof}
Consider an element of $\cS^{\perp}$, the orthogonal complement of $\cS$, written in block matrix form as:
\[ Y = \left(\begin{array}{c|c}
K & L \\
\hline
M & N  \end{array}\right).\]
Since $\tr[Y^\dagger X] = 0$ for any $X \in \cS$ and $Y \in \cS^{\perp}$, we have,
\begin{eqnarray}
 \tr\left[\left(\begin{array}{c|c}
        K^\dagger & M^\dagger \\
\hline
L^{\dagger} & N^{\dagger}
       \end{array}\right)\left(\begin{array}{c|c}
A & \Phi(B)\\
\hline
B & A
\end{array}\right)\right] &=& 0 \nonumber \\
\Rightarrow \tr [K^{\dagger}A + M^{\dagger}B + L^{\dagger}\Phi(B) + N^{\dagger}A] &=& 0, \; \forall \; A, B. \label{eq:adjoint}
\end{eqnarray}
Setting $B=0$, we have,
\[\tr[(K^{\dagger} + N^{\dagger})A] = 0, \; \forall \; A \; \Rightarrow K + N =0 .\]
Similarly, setting $A=0$, we have,
\[\tr[(M + \Phi(L))^{\dagger}B] = 0, \; \forall B \; \Rightarrow M + \Phi(L) = 0. \]
Thus, every element of the orthogonal complement of $\cS$ is of the form,
\[ Y = \left(\begin{array}{c|c}
              K & L \\
\hline
- \Phi(L) & -K
             \end{array}\right).
\]
By the same argument as in the proof of Theorem~\ref{thm:subspace2}, ${\rm rank}(Y) \geq 2$, for all $Y \in \cS^{\perp}$.
\end{proof}

The above result for $\mathbb{M}_{2n}$ naturally leads to the following problem for the more general matrix spaces $\mathbb{M}(m,n)$.
\begin{ques}
Identify the quadruples $(m,n,r,s)$ for which it is possible to find a direct sum decomposition $\mathbb{M}(m,n) = \cS \oplus \cS^{\perp}$ of $\mathbb{M}(m,n)$, into a direct sum of subspace $\cS$ and its orthogonal complement $\cS^{\perp}$, such that,
\[{\rm rank}(X) \geq r, \; \forall \; 0 \neq X \in \cS, \;\; {\rm rank}(Y) \geq s, \; \forall \; 0 \neq Y \in \cS^{\perp}.\]
\end{ques}

Finally, we return to the question of the maximum dimension of the subspace $\cS$ whose elements have Schmidt rank greater than or equal to $k$.
\begin{thm}\label{thm:minRank}
 Let $\cL \subset \mathbb{M}(m,n)$ be such that $\forall \, 0\neq X \in \cL$, ${\rm rank}(X) \geq k + 1$. Then, ${\rm dim}(\cL) \leq (m-k)(n-k)$.
\end{thm}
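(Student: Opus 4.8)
The plan is to recast the hypothesis geometrically and then read off the dimension bound from a standard intersection fact about projective varieties over $\mathbb{C}$. First I would observe that the hypothesis ``every nonzero $X \in \cL$ has $\mathrm{rank}(X) \ge k+1$'' says exactly that $\cL$ meets the determinantal locus
\[ D_k := \{ X \in \mathbb{M}(m,n) : \mathrm{rank}(X) \le k \} \]
only at the origin. Since $\cL$ is a linear subspace and $D_k$ is a cone (stable under scalar multiplication), I would projectivize: inside $\mathbb{P}(\mathbb{M}(m,n)) = \mathbb{P}^{N}$ with $N = mn-1$, the subspace $\cL$ gives a linear $\mathbb{P}(\cL)$ of dimension $\dim\cL - 1$, while $D_k$ gives a projective variety $\mathbb{P}(D_k)$, and the condition $\cL \cap D_k = \{0\}$ becomes the \emph{disjointness} $\mathbb{P}(\cL) \cap \mathbb{P}(D_k) = \emptyset$.

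The two ingredients I would invoke are: (i) the determinantal variety $D_k$ is irreducible of dimension $k(m+n-k)$, so $\dim \mathbb{P}(D_k) = k(m+n-k) - 1$; and (ii) the projective dimension theorem, which asserts that over an algebraically closed field any two nonempty closed subvarieties $V, W \subseteq \mathbb{P}^N$ with $\dim V + \dim W \ge N$ must intersect. For (i) I would use the dominant parametrization $\mathbb{M}(m,k) \times \mathbb{M}(k,n) \to D_k$, $(P,Q) \mapsto PQ$, whose generic fibre over a rank-$k$ matrix is a single $GL_k$-orbit of dimension $k^2$, giving $\dim D_k = mk + kn - k^2 = k(m+n-k)$.

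Combining these, disjointness forces (assuming the trivial cases $\dim\cL = 0$ and $k=0$ are dispatched first, where the bound is immediate) that $\dim \mathbb{P}(\cL) + \dim \mathbb{P}(D_k) < N$, i.e.
\[ (\dim \cL - 1) + \big(k(m+n-k) - 1\big) < mn - 1 . \]
Rearranging and using integrality of $\dim\cL$ yields $\dim \cL \le mn - k(m+n-k) = (m-k)(n-k)$, which is the assertion.

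The step carrying all the weight — and the main obstacle — is the appeal to the projective dimension theorem, which is precisely where algebraic closedness of the scalar field is used. Over $\mathbb{R}$ the statement is simply false: for instance $\begin{pmatrix} a & -b \\ b & a \end{pmatrix}$ describes a $2$-dimensional real subspace of $\mathbb{M}(2,2)$ all of whose nonzero members are invertible (rank $\ge 2$), violating the bound $(2-1)(2-1)=1$; so the argument genuinely relies on working over $\mathbb{C}$. If one wished to avoid citing the dimension theorem as a black box, the self-contained route would be to prove the affine-cone version directly — that two subvarieties of $\C^{mn}$ through the origin whose dimensions sum to more than $mn$ must meet away from $0$ — which reduces, via the equations cutting out the linear space $\cL$, to Krull's principal ideal theorem; this is the only genuinely nontrivial input and is where I would expect to spend the real effort.
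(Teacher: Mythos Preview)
Your argument is correct and is essentially the same as the paper's: both use that the determinantal variety of matrices of rank at most $k$ has dimension $k(m+n-k)$ and then invoke a projective/affine intersection-dimension result (the paper cites it as a generalization of Azoff's theorem from \cite{DMR}, you phrase it as the projective dimension theorem) to conclude $\dim\cL + k(m+n-k) \le mn$. Your write-up is a bit more self-contained in sketching the dimension count for $D_k$ and in flagging where algebraic closedness is essential, but the route is the same.
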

\begin{proof}
Let $S_{k}$ denote the {\it variety} of matrices of rank less than or equal to $k$ in $\mathbb{M}(m,n)$. Then, it is known ~\cite{Harris} that ${\rm dim}(S_{k}) = k(m+n-k)$. Let $\cL$ be a subspace ($\cL \subset \mathbb{M}(m,n)$) such that for all $0 \neq X \in \cL$, ${\rm rank}(X) \geq k+1$. Note that, $S_{k}\cap \cL = {0}$. Using a generalization of Azoff's Theorem in algebraic geometry~\cite{DMR}, we have,
\[{\rm dim}(S_{k}) + {\rm dim}(\cL) \leq mn.\]
The dimension of any such subspace $\cL$ is therefore given by
\begin{equation}
 {\rm dim}(\cL) \leq (m-k)(n-k).
\end{equation}
\end{proof}

Furthermore, we can in fact explicitly construct a subspace $\cL_{0}$ whose dimension is exactly equal to the maximum value $(m-k)(n-k)$. Consider polynomials $P$ such that ${\rm deg}(P)\leq p-k-1$. Construct diagonal matrices of order $p\times p$ with entries $D = {\rm diag}(P(z_{1}, P(z_{2}), \ldots, P(z_{p}))$, where $z_{1}, z_{2}, \ldots, z_{p}$ are all distinct. The linear space of such diagonal matrices, denoted by $\cD_{p,k}$ is of dimension $p-k$. At most $p-k-1$ entries of such a diagonal matrix $D$ can be zero. Therefore, ${\rm rank}(D) \geq k+1, \; \forall \; D \in \cD_{p,k}$. The diagonal complementary space $\cD_{p,k}^{\perp}$ is of dimension $k$.

Now the construction proceeds as follows. Any matrix $X \in \cL_{0}$ has $m+n-1$ diagonals. Fill all diagonals of length less than or equal to $k$ with zeros. Let the least length of a non-vanishing diagonal be $p$. Choose the entries of this diagonal of length $p$ from some matrix $D \in \cD_{p,k}$. Consider the $p\times p$ minor of such an $X$. This is either a lower triangular or an upper triangular matrix. The diagonal of this $p\times p$ minor has the property that at most $p-k-1$ entries are zero. Therefore, rank of such a $p\times p$ minor is greater than or equal to $k+1$. By this construction, every non-zero element $X \in \cL_{0}$ is of rank at least $k+1$. Therefore,
\[{\rm dim}(\cL_{0}) = mn - k(k+1) - (m+n-1 -2k)k = (m-k)(n-k), \]
where the second term enumerates the zero entries of the diagonals of length $k$ and the final term enumerates the zero entries of the non-vanishing diagonals.

This construction and the result of Theorem~\ref{thm:minRank} thus imply the following result for the quantum information theoretic problem of finding subspaces of high entanglement.
\begin{thm}
Given a composite system $\cH_{1}\otimes\cH_{2}$ with ${\rm dim}(\cH_{1}) = m$ and ${\rm dim}(\cH_{2}) = n$ and subspaces $\cS\subset \cH_{1}\otimes\cH_{2}$,
\begin{equation}
\max\left\{{\rm dim}(\cS) \; \vert \; {\rm Schmidt \; rank}(|\psi\rangle) \geq k, \; \forall \; \psi \in \cS \right\} = (m-k)(n-k).
\end{equation}
\end{thm}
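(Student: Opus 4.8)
The plan is to deduce this theorem as an essentially immediate corollary of the two facts already established: the matrix-theoretic bound of Theorem~\ref{thm:minRank} together with the explicit construction $\cL_0$ achieving it, and the unitary isomorphism $\Gamma(J): \cH_1 \otimes \cH_2 \to \mathbb{M}(m,n)$ from Eq.~\eqref{eq:correspondence}. The key conceptual bridge is the exercise stating that the Schmidt rank of a pure state $|\psi\rangle$ equals the ordinary matrix rank of $\Gamma(J)(|\psi\rangle)$. Thus the problem of finding subspaces of $\cH_1 \otimes \cH_2$ all of whose nonzero pure states have high Schmidt rank is transported, verbatim and without loss, into the purely linear-algebraic problem about subspaces of $\mathbb{M}(m,n)$ whose nonzero elements have high matrix rank.

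First I would set up the translation carefully. Given a subspace $\cS \subset \cH_1 \otimes \cH_2$, its image $\cL := \Gamma(J)(\cS) \subset \mathbb{M}(m,n)$ is a subspace of the same dimension, since $\Gamma(J)$ is a (unitary, hence linear) isomorphism. The condition ``$\mathrm{Schmidt\ rank}(|\psi\rangle) \geq k$ for all $0 \neq |\psi\rangle \in \cS$'' is then, by the cited exercise, exactly equivalent to ``$\mathrm{rank}(X) \geq k$ for all $0 \neq X \in \cL$.'' So the quantity we wish to compute, namely the maximum of $\dim(\cS)$ over subspaces with minimal Schmidt rank $\geq k$, equals the maximum of $\dim(\cL)$ over subspaces of $\mathbb{M}(m,n)$ whose nonzero elements all have rank $\geq k$.

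Next I would assemble the two halves of the bound, noting a small index shift. Theorem~\ref{thm:minRank} is phrased for subspaces with rank $\geq k+1$, giving the upper bound $(m-k)(n-k)$; applied with the threshold set to our $k$ (i.e.\ replacing $k$ by $k-1$ in that statement), it yields that any subspace with all nonzero elements of rank $\geq k$ has dimension at most $(m-k+1)(n-k+1)$. This makes me want to double-check the intended normalization of the final statement against Theorem~\ref{thm:minRank}: the theorem as displayed writes the answer as $(m-k)(n-k)$ for Schmidt rank $\geq k$, which matches Theorem~\ref{thm:minRank} only if the final theorem's ``$\geq k$'' is read as the ``$\geq k+1$'' case, i.e.\ the two are stated with a consistent convention. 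Granting the convention of the source, the upper bound is precisely $(m-k)(n-k)$. For the matching lower bound, the explicit construction $\cD_{p,k}$-based subspace $\cL_0$ preceding the theorem exhibits a subspace of $\mathbb{M}(m,n)$ with $\dim(\cL_0) = (m-k)(n-k)$ all of whose nonzero elements have rank $\geq k+1$; pulling $\cL_0$ back through $\Gamma(J)^{-1}$ produces a subspace $\cS_0 \subset \cH_1 \otimes \cH_2$ of the same dimension in which every pure state has Schmidt rank meeting the threshold.

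The main obstacle here is not any genuine analytic difficulty but bookkeeping: ensuring the rank threshold in the final statement, the ``$\geq k+1$'' in Theorem~\ref{thm:minRank}, and the ``$\geq k$'' in the construction are all aligned so that upper and lower bounds refer to the same quantity. I would state the correspondence lemma explicitly, invoke Theorem~\ref{thm:minRank} for the upper bound, invoke the construction of $\cL_0$ for the lower bound, and conclude that the maximum equals $(m-k)(n-k)$ exactly because both bounds coincide. Everything else (that $\Gamma(J)$ preserves dimension and that rank transports to Schmidt rank) has already been supplied, so the proof is genuinely a two-line synthesis once the conventions are pinned down.
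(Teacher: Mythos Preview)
Your proposal is correct and is exactly the paper's approach: the theorem is presented there not with a fresh argument but as an immediate consequence of Theorem~\ref{thm:minRank}, the explicit construction of $\cL_0$, and the rank--Schmidt-rank correspondence via $\Gamma(J)$. Your flag about the index shift is well taken: the paper itself is inconsistent, stating Theorem~\ref{thm:minRank} and the construction for rank $\geq k+1$ yielding $(m-k)(n-k)$, while the final theorem writes ``Schmidt rank $\geq k$'' with the same bound $(m-k)(n-k)$; one of the two thresholds is off by one, and your instinct to align them before declaring the bounds match is the right one.
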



\section{Schmidt Number of Mixed States}

We now move beyond pure states and study entanglement in mixed states. Recall that pure states are the extreme points of the set $\cS(\cH)$ of positive operators of unit-trace. A general state $\rho \in \cS(\cH)$ is thus a mixture of pure states; it is a mixed state, with $\rho \geq 0$ and $\tr[\rho] =1$. For the composite system $\cH_{1}\otimes \cH_{2}$, with ${\rm dim}(\cH_{1}) = m$ and ${\rm dim}(\cH_{2}) = n$, consider the sets
\begin{eqnarray}
 \cS_{k} &=& \left\{ |\psi\rangle\langle\psi| \in \cS(\cH_{1}\otimes \cH_{2}) \; \vert \; {\rm Schmidt \; No.}(|\psi\rangle) \leq k \right\}, \label{eq:Sk_defn} \\
\tilde{\cS}_{k} &=& \left\{\int_{{\rm Schmidt \;  No.}(\psi)\leq k} |\psi\rangle\langle\psi| \; \mu \; d(\psi)\right\},
\end{eqnarray}
where $\mu$ is a probability distribution. $\tilde{\cS}_{k}$ is thus the set of all mixed states that are convex combinations of the pure states in $\cS_{k}$. We know from the Schmidt decomposition that for $|\psi\rangle \in \cH_{1}\otimes\cH_{2}$, $1 \leq k \leq {\rm min}(m,n) $. Note that $\tilde{\cS}_{k}$ is a compact convex set in the real linear space of Hermitian operators on $\cH_{1}\otimes\cH_{2}$, of dimension $m^{2}n^{2}$. It then follows from Carath\'eodory's Theorem that every $\rho \in \tilde{\cS}_{k}$ can be expressed as
\[ \rho = \sum_{j=1}^{m^{2}n^{2}+1}p_{j}|\psi_{j}\rangle\langle\psi_{j}|, \]
with $|\psi_{j}\rangle \in \cS_{k}$, $p_{j}\geq 0$, and $\sum_{j} p_{j}=1$. It is therefore enough to consider finite convex combinations of pure states to represent the elements of $\tilde{\cS}_{k}$.

\begin{defn}[Schmidt Number]\index{Schmidt! number, mixed states}
A state $\rho$ is defined to have Schmidt number $k$ if $\rho \in \tilde{\cS}_{k+1}\setminus\tilde{\cS}_{k}$.
\end{defn}
Estimating the Schmidt number for arbitrary mixed states is in general a hard problem. A related question of interest is whether it is possible to construct a {\it test} to identify states $\rho \notin \tilde{\cS}_{k}$. Such a test would be a special case of the class of {\bf entanglement witnesses}~\cite{Hor1}\index{entanglement witness} which are well-studied in the quantum information literature (see~\cite{Hor2} for a recent review).

In the following Section, we prove the Horodecki-Terhal criterion~\cite{TH}\index{Horodecki-Terhal criterion} which shows that $k$-positive maps can act as entanglement witnesses for states with Schmidt number exceeding $k$. Such a result was first proved for $k=1$~\cite{Hor1} and later extended to general $k$.

\subsection{Test for Schmidt number $k$ using $k$-positive maps}\index{$k$-positive maps}

By the geometric Hahn-Banach Theorem, in the real linear space of Hermitian operators we can construct a linear functional $\Lambda$ such that there exists a hyperplane corresponding to $\Lambda(.) = c$ (for some constant $c$) that separates $\rho$ and the convex set $\tilde{\cS}_{k}$. In other words, there exists a linear functional $\Lambda$ such that,
\begin{equation}
\Lambda(\rho) < c \leq \Lambda(\sigma), \; \forall \; \sigma \in \tilde{\cS}_{k}. \label{eq:ranktest}
\end{equation}
Any linear functional $\Lambda$ on the Banach space of Hermitian operators can be written as $\Lambda(X) = \tr[XA]$, where $A$ is a Hermitian operator. Therefore Eq.~\eqref{eq:ranktest} implies, for any $\rho \notin \tilde{\cS}_{k}$,
\begin{equation}
 \tr[\rho A] < c \leq \tr[\sigma A] \; \Rightarrow \; \tr[\rho H] < 0 \leq \tr[\sigma H], \; \forall \; \sigma \in \tilde{\cS}_{k}, \label{eq:Htest}
\end{equation}
where, $H = A - c I$, is a Hermitian operator on $\cH_{1}\otimes\cH_{2}$. Defining the map $\Gamma_{H}: \cB(\cH_{1})\rightarrow \cB(\cH_{2})$ such that,
\[ \Gamma_{H}(X) = \tr_{\cH_{1}}[H(X\otimes I_{\cH_{2}})], \; \forall \; X \in \cB(\cH_{1}),\]
we can rewrite $H$ as follows~\cite{Hor1}.
\begin{prop}
Given an orthonormal basis of rank-$1$ operators $\{E_{\alpha}\}$ in $\cB(\cH_{1})$, $H$ can be written as,
\begin{equation}
  H = \sum_{\alpha} E_{\alpha}\otimes \Gamma_{H}(E_{\alpha}^{\dagger}). \label{eq:Hdefn}
\end{equation}
\end{prop}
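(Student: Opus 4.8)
The claim is a representation formula: given a Hermitian operator $H$ on $\cH_1 \otimes \cH_2$ and the map $\Gamma_H(X) = \tr_{\cH_1}[H(X \otimes I)]$, we want to show $H = \sum_\alpha E_\alpha \otimes \Gamma_H(E_\alpha^\dagger)$ where $\{E_\alpha\}$ is an orthonormal basis of $\cB(\cH_1)$. Let me think about what this is really saying.

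We have $H \in \cB(\cH_1) \otimes \cB(\cH_2)$. The key point is that $\{E_\alpha\}$ being an orthonormal basis for $\cB(\cH_1)$ (with respect to $\langle X|Y\rangle = \tr[X^\dagger Y]$) means that any operator on $\cH_1 \otimes \cH_2$ can be expanded in the "first factor" as $H = \sum_\alpha E_\alpha \otimes H_\alpha$ for some uniquely determined operators $H_\alpha \in \cB(\cH_2)$. So the entire content of the proposition is identifying $H_\alpha = \Gamma_H(E_\alpha^\dagger)$.

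**My plan.**

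First I would establish the expansion $H = \sum_\alpha E_\alpha \otimes H_\alpha$. Since $\{E_\alpha : \alpha = 1, \dots, m^2\}$ is an orthonormal basis of the $m^2$-dimensional Hilbert space $\cB(\cH_1)$, and $\cB(\cH_1 \otimes \cH_2) \cong \cB(\cH_1) \otimes \cB(\cH_2)$, every element $H$ expands uniquely as $\sum_\alpha E_\alpha \otimes H_\alpha$ with $H_\alpha \in \cB(\cH_2)$; this is just writing $H$ in the basis $\{E_\alpha\}$ of the first tensor factor, treating $\cB(\cH_2)$ as the scalar field. The coefficients are recovered by the orthogonality relation.

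Second — and this is the crux — I would compute $\Gamma_H(E_\beta^\dagger)$ using this expansion and confirm it equals $H_\beta$. Inserting the expansion,
\[
\Gamma_H(E_\beta^\dagger) = \tr_{\cH_1}\Big[\Big(\sum_\alpha E_\alpha \otimes H_\alpha\Big)(E_\beta^\dagger \otimes I)\Big] = \sum_\alpha \tr_{\cH_1}[E_\alpha E_\beta^\dagger \otimes H_\alpha].
\]
The partial trace over $\cH_1$ acts only on the first factor, so this is $\sum_\alpha \tr_{\cH_1}[E_\alpha E_\beta^\dagger] \, H_\alpha = \sum_\alpha \tr[E_\beta^\dagger E_\alpha]\, H_\alpha$ by cyclicity of the (full, first-factor) trace. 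By the orthonormality $\tr[E_\beta^\dagger E_\alpha] = \delta_{\alpha\beta}$, this collapses to $H_\beta$, exactly as desired. Substituting $H_\beta = \Gamma_H(E_\beta^\dagger)$ back into the expansion yields the formula.

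**The main obstacle.**

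There is no deep obstacle here; the result is essentially bookkeeping about how the partial trace interacts with an orthonormal operator basis. The one point demanding care is the precise convention for $\Gamma_H$ and the partial trace: one must check that $\tr_{\cH_1}[(A \otimes B)] = \tr[A]\, B$ and that $E_\alpha$ is inserted on the correct side so that the conjugate/adjoint and the orthonormality relation $\tr[E_\beta^\dagger E_\alpha] = \delta_{\alpha\beta}$ align. The appearance of $E_\alpha^\dagger$ (rather than $E_\alpha$) inside $\Gamma_H$ is exactly what is needed to make the inner product $\langle E_\beta | E_\alpha \rangle = \tr[E_\beta^\dagger E_\alpha]$ come out as the Kronecker delta, so the only real task is to be scrupulous about which factor the adjoint and the trace land on.
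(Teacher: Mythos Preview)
Your proof is correct and follows essentially the same approach as the paper: expand $H$ in the orthonormal basis $\{E_\alpha\}$ of the first tensor factor and use the orthonormality relation $\tr[E_\beta^\dagger E_\alpha] = \delta_{\alpha\beta}$ together with the definition of $\Gamma_H$ via the partial trace. The only cosmetic difference is that the paper also introduces an orthonormal basis $\{F_\beta\}$ of $\cB(\cH_2)$ and writes the full expansion $H = \sum_{\alpha,\beta}\tr[H(E_\alpha^\dagger\otimes F_\beta^\dagger)]\,E_\alpha\otimes F_\beta$ before collapsing the $\beta$-sum, whereas you keep the second-factor coefficients $H_\alpha$ abstract and verify $H_\alpha = \Gamma_H(E_\alpha^\dagger)$ directly.
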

\begin{proof}
Choose orthonormal bases $\{E_{\alpha}\} \in \cB(\cH_{1})$ and $\{F_{\beta}\} \in \cB(\cH_{2})$. Then, $\{E_{\alpha}\otimes F_{\beta}\}$ constitutes an orthonormal basis for $\cB(\cH_{1}\otimes\cH_{2})$. Therefore, the operator $H$ can be written as,
\begin{eqnarray}
H &=& \sum_{\alpha,\beta}\tr\left[H(E_{\alpha}^\dagger \otimes F_{\beta}^{\dagger})\right](E_{\alpha}\otimes F_{\beta}) \nonumber \\
&=& \sum_{\alpha, \beta} \tr_{\cH_{2}}\left[\tr_{\cH_{1}}[H (E_{\alpha}^{\dagger}\otimes I_{\cH_{2}})]F_{\beta}^{\dagger}\right](E_{\alpha}\otimes F_{\beta})  = \sum_{\alpha}E_{\alpha} \otimes \Gamma_{H}(E_{\alpha}^\dagger) . \nonumber
\end{eqnarray}
\end{proof}

We now show (following~\cite{TH}) that the separation in Eq.~\eqref{eq:Htest} between the set $\tilde{\cS}_{k}$ and states $\rho \notin \tilde{\cS}_{k}$ can be realized using a $k$-positive map from $\cB(\cH_{1})\rightarrow \cB(\cH_{2})$.
\begin{thm}\label{thm:ranktest_main}
There exists a $k$-positive map $\Lambda_{0}: \cB(\cH_{1})\rightarrow \cB(\cH_{2})$ such that~\footnote{A remark on notation: we use $\Id$ to denote the identity map $\Id: \cB(\cH)\rightarrow \cB(\cH)$, with $\Id(\rho) = \rho$ for any $\rho \in \cB(\cH)$.}
\begin{eqnarray}
\tr\left[(\Id\otimes \Lambda_{0})(\sum_{\alpha}E_{\alpha}\otimes E_{\alpha}^{\dagger})\rho\right] &<& 0, \; \rho \notin \tilde{\cS}_{k}, \label{eq:k-positivity1} \\
\tr\left[(\Id\otimes \Lambda_{0})(\sum_{\alpha}E_{\alpha}\otimes E_{\alpha}^{\dagger})\sigma\right] &\geq& 0, \; \forall \; \sigma \in \tilde{\cS}_{k}, \label{eq:k-positivity2}
\end{eqnarray}
where $\{E_{\alpha}\}$ is an orthonormal basis of rank-$1$ operators in $\cB(\cH_{1})$.
\end{thm}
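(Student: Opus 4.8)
The plan is to take $\Lambda_0$ to be exactly the map $\Gamma_H$ attached to the separating Hermitian operator $H = A - cI$ produced by the Hahn--Banach separation in Eq.~\eqref{eq:Htest}, where $\Gamma_H(X) = \tr_{\cH_1}[H(X\otimes I_{\cH_2})]$. With this choice the preceding Proposition, which expresses $H$ in the form $\sum_\alpha E_\alpha\otimes\Gamma_H(E_\alpha^\dagger)$, gives at once
\[
(\Id\otimes\Lambda_0)\Big(\sum_\alpha E_\alpha\otimes E_\alpha^\dagger\Big) = \sum_\alpha E_\alpha\otimes\Gamma_H(E_\alpha^\dagger) = H .
\]
Consequently the two asserted inequalities \eqref{eq:k-positivity1}--\eqref{eq:k-positivity2} are literally $\tr[H\rho] < 0$ for $\rho\notin\widetilde{\cS}_k$ and $\tr[H\sigma]\geq 0$ for $\sigma\in\widetilde{\cS}_k$, both of which are precisely the content of \eqref{eq:Htest}. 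Thus the entire substance of the theorem is reduced to a single point: that this candidate $\Lambda_0 = \Gamma_H$ is genuinely $k$-positive.

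The first step toward that is to convert the trace inequality into a pointwise one. Since $\widetilde{\cS}_k$ is, by construction (and Carath\'eodory), the convex hull of the rank-one projections $|\psi\rangle\langle\psi|$ with Schmidt rank at most $k$, the condition $\tr[H\sigma]\geq 0$ for all $\sigma\in\widetilde{\cS}_k$ is equivalent to
\[
\langle\psi|H|\psi\rangle \geq 0 \quad\text{for every } |\psi\rangle\in\cH_1\otimes\cH_2 \text{ of Schmidt rank} \leq k .
\]
So what must be proved is the Choi--Jamio\l kowski-type equivalence that $\Lambda_0=\Gamma_H$ is $k$-positive if and only if $H$ is positive on every such Schmidt-rank-$\leq k$ vector. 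I would establish this by testing $(\Lambda_0)_k$ on a rank-one positive element $|v\rangle\langle v|\in M_k(\cB(\cH_1))_+$, which suffices by linearity of $(\Lambda_0)_k$ and the fact that a positive element is a sum of such. Writing $|v\rangle=\sum_{p=1}^k |p\rangle\otimes|a_p\rangle$ and pairing against $|w\rangle=\sum_{p=1}^k |p\rangle\otimes|b_p\rangle$, a direct expansion of $\Gamma_H$ in an orthonormal basis of $\cH_1$ reduces $\langle w|(\Lambda_0)_k(|v\rangle\langle v|)|w\rangle$ to a quadratic form in $H$ built from the $k$ product vectors $|a_p\rangle\otimes|b_p\rangle$, a configuration of Schmidt rank at most $k$.

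The remaining step is to identify this quadratic form with $\langle\psi|H|\psi\rangle$ for a Schmidt-rank-$\leq k$ vector $|\psi\rangle$. Here I would use the isomorphism $\Gamma(J)$ of Eq.~\eqref{eq:correspondence} together with the earlier observation that Schmidt rank corresponds to matrix rank: ranging $|v\rangle,|w\rangle$ (equivalently the families $\{a_p\},\{b_p\}$) over all choices sweeps out exactly the vectors of Schmidt rank $\leq k$, so nonnegativity of all these forms is equivalent to $H$ being positive on that set, hence to $k$-positivity of $\Lambda_0$. Combining with the previous paragraph yields $\Lambda_0$ $k$-positive, completing the proof.

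I expect the genuine obstacle to be the bookkeeping in this last identification: keeping the conjugations straight (which is what the conjugation $J$ is there to handle) so that the quadratic form produced by $(\Lambda_0)_k$ is matched with $\langle\psi|H|\psi\rangle$ for a vector $|\psi\rangle$ of the correct Schmidt rank, rather than with a partial transpose of $H$ tested against a conjugated vector. Once the correspondence is set up with the conjugation built in correctly, everything else is either immediate from the Proposition or a routine convexity argument; the delicate part is precisely verifying that the map dual to $H$ under the flip operator $\sum_\alpha E_\alpha\otimes E_\alpha^\dagger$ inherits $k$-positivity from the block-positivity of $H$.
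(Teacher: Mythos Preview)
Your plan is structurally identical to the paper's: separate $\rho$ from $\tilde{\cS}_k$ via Hahn--Banach to get $H$, recover $H$ as $(\Id\otimes\Lambda_0)$ applied to a fixed bipartite operator, then establish $k$-positivity of $\Lambda_0$ by testing on rank-one positives and reducing to $\langle\psi|H|\psi\rangle\ge0$ for Schmidt-rank-$\le k$ vectors $|\psi\rangle$.

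The gap is exactly where you suspect it, and it is not merely bookkeeping. With your choice $\Lambda_0=\Gamma_H$, expanding $\langle w|(\Gamma_H)_k(|v\rangle\langle v|)|w\rangle$ for $|v\rangle=\sum_p|p\rangle|a_p\rangle$ and $|w\rangle=\sum_p|p\rangle|b_p\rangle$ yields
\[
\sum_{p,p'}\tr\Bigl[H\,\bigl(|a_p\rangle\langle a_{p'}|\otimes|b_{p'}\rangle\langle b_p|\bigr)\Bigr],
\]
with the second-factor indices \emph{swapped}; this is not $\langle\psi|H|\psi\rangle$ for any vector $|\psi\rangle$. Equivalently, the Choi matrix of $\Gamma_H$ is the partial transpose of $H$ on the first factor rather than $H$ itself, and partial transpose does not preserve $k$-block-positivity for $k\ge2$. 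So $\Gamma_H$ is only $k$-\emph{co}-positive, and your final ``delicate verification'' actually fails. The paper's remedy is to take $\Lambda_0=\Gamma_H\circ T$, the composition with the transpose on $\cB(\cH_1)$: then the Choi matrix of $\Lambda_0$ is $H$ itself, and the analogous expansion gives $\langle\psi|H|\psi\rangle$ with $|\psi\rangle=\sum_p|J a_p\rangle|b_p\rangle$ of Schmidt rank $\le k$ --- this is precisely the conjugation $J$ producing the vectors $|u_s'\rangle=|Ju_s\rangle$ in the paper's Lemma~\ref{lem:k-positivity}. With this corrected $\Lambda_0$, one must apply $\Id\otimes\Lambda_0$ to $\sum_\alpha E_\alpha\otimes E_\alpha=mP$ (the unnormalized maximally-entangled projector) rather than to the flip $\sum_\alpha E_\alpha\otimes E_\alpha^\dagger$; the paper's Lemma~\ref{lem:ranktest} and the computation opening Lemma~\ref{lem:k-positivity} in fact do exactly this, so the displayed formula in the theorem and its proof are not fully consistent on that point.
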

\begin{proof}
First we show that the operator $H$ that defines the separation in Eq.~\eqref{eq:Htest} can be written in terms of an operator $\Lambda_{0}: \cB(\cH_{1})\rightarrow \cB(\cH_{2})$.
\begin{lem}\label{lem:ranktest}
If $H$ is an operator with the property
\begin{equation}
\tr[\rho H] < 0 \leq \tr[\sigma H], \; \forall \; \sigma \in \tilde{\cS}_{k}, \label{eq:ranktest2}
\end{equation}
then $H$ is of the form
\begin{equation}
 H = (\Id \otimes \Lambda_{0})(\sum_{\alpha}E_{\alpha}\otimes E_{\alpha}^{\dagger}) = (\Id\otimes\Lambda_{0})(mP), \label{eq:lambda_0}
 \end{equation}
for some operator $\Lambda_{0}: \cB(\cH_{1})\rightarrow \cB(\cH_{2})$, and a projection $P$.
\end{lem}
\begin{proof}
Suppose we consider an orthonormal basis $\{|e_{i}\rangle\}$ for $\cH_{1}$ and choose the rank-$1$ operators $E_{\alpha}$ to be $E_{\alpha} = |e_{i}\rangle\langle e_{j}|$. Then, $E_{\alpha}^{\dagger} = |e_{j}\rangle\langle e_{i}|$. Let $T$ denote the {\it transpose} operation with respect to the $\{|e_{i}\rangle\}$ basis, then, $E_{\alpha}^{\dagger} = T(E_{\alpha})$. Therefore, for this specific choice of basis, Eq.~\eqref{eq:Hdefn} implies,
\begin{equation}
H = \sum_{\alpha} E_{\alpha} \otimes (\Gamma_{H}\circ T)(E_{\alpha}) = (\Id \otimes \Lambda_{H})(\sum_{\alpha}E_{\alpha}\otimes E_{\alpha}^{\dagger}) ,
\end{equation}
where we have defined $\Lambda_{0}(.) \equiv (\Gamma_{H} \circ T)(.)$. It is now a simple exercise to check that $\sum_{\alpha}E_{\alpha}\otimes E_{\alpha}^{\dagger} = mP$, where $P$ is a projection operator.
\begin{exer}
Define the rank-$1$ operators $E_{\alpha} = |e_{i}\rangle\langle e_{j}| \in \cB(\cH_{1})$, where $\{|e_{i}\rangle\}$ is an orthonormal basis for the space $\cH_{1}$. Then, the operator
\[ P  =\frac{1}{m}\sum_{\alpha} E_{\alpha}\otimes E_{\alpha}\]
is a projection ($P^{\dagger} = P = P^{2}$).
\end{exer}
\end{proof}

\noindent The theorem is proved once we show that the operator $\Lambda_{0}$ defined above is indeed  $k$-positive.
\begin{lem}\label{lem:k-positivity}
The operator $\Lambda_{0}$ defined in Eq.~\eqref{eq:lambda_0} corresponding to a $H$ that satisfies Eq.~\eqref{eq:ranktest2}, is $k$-positive on $\cB(\cH_{1})\rightarrow \cB(\cH_{2})$.
\end{lem}
\begin{proof}
Lemma~\ref{lem:ranktest} implies that for a chosen basis $\{ |e_{i}\rangle\} \in \cH_{1}$, any state $\sigma \in \tilde{\cS}_{k}$ satisfies,
\[ \tr\left[(\Id\otimes\Lambda_{0})\left( \sum_{i,j}|e_{i}\rangle\langle e_{j}| \otimes |e_{i}\rangle\langle e_{j}| \right) \; \sigma \right] \geq 0.   \]
If we choose $\sigma$ to be a pure state $\sigma \equiv |\psi\rangle\langle\psi|$,  the corresponding vector has the Schmidt decomposition $|\psi\rangle = \sum_{r=1}^{k}|u_{r}\rangle|v_{r}\rangle$. Then, the above condition becomes,
\begin{eqnarray}
&& \sum_{i,j} \tr\left[|e_{i}\rangle\langle e_{j}| \otimes \Lambda_{0}(|e_{i}\rangle\langle e_{j}|) \; \left(\sum_{r,s=1}^{k}|u_{r}\rangle\langle u_{s}| \otimes |v_{r}\rangle\langle v_{s}| \right) \right] \nonumber \\
&=& \sum_{i,j}\sum_{r,s}\langle e_{j}|u_{r}\rangle \langle u_{s}|e_{i}\rangle \;  \langle v_{s} |\Lambda_{0}(|e_{i}\rangle\langle e_{j}|)| v_{r}\rangle \geq 0 . \label{eq:ranktest3}
\end{eqnarray}
Now,  note that $\sum_{i} \langle u_{s}|e_{i}\rangle |e_{i}\rangle = \sum_{i}|e_{i}\rangle\langle e_{i}| J \, u_{s}\rangle$, where $J$ is a conjugation. Then, since $\{|e_{i}\rangle \}$ is an orthonormal basis, $\sum_{i} \langle u_{s}|e_{i}\rangle |e_{i}\rangle = |J\, u_{s}\rangle$. Defining $|u'_{s}\rangle = |J \, u_{s}\rangle$, the inequality in Eq.~\eqref{eq:ranktest3} becomes
\begin{equation}
\sum_{r,s=1}^{k}\langle v_{s}| (\Lambda_{0}| u'_{s}\rangle\langle u_{r}'|) |v_{r}\rangle \geq 0,
\end{equation}
for all such sets of vectors $\{|u_{r}\rangle\}, \{|v_{r}\rangle\}$. Recalling the definition of $k$-positivity (see Defn.~\ref{def:k-positive}), this is true if and only if $\Lambda_{0}$ is $k$-positive.
\end{proof}
\end{proof}

Finally, Theorem~\ref{thm:ranktest_main} implies the existence of a $k$-positive map that can act as an {\it entanglement witness} for Schmidt number $k$ in the following sense~\cite{TH}.
\begin{thm}[Schmidt Number Witness]\index{Theorem! Schmidt number witness}
\begin{itemize}
\item If $\Lambda : \cB(\cH_{2})\rightarrow \cB(\cH_{1})$ is a $k$-positive operator, then for any $\sigma \in \tilde{\cS}_{k}$,
\begin{equation}
  (\Id_{\cH_{1}}\otimes \Lambda)(\sigma) \geq 0.
\end{equation}
\item If $\rho \notin \tilde{\cS}_{k}$, there exists a $k$-positive map $\Lambda : \cB(\cH_{2}) \rightarrow \cB(\cH_{1})$, such that,
\begin{equation}
  (\Id_{\cH_{1}}\otimes \Lambda)(\rho) \ngeq 0. \label{eq:non-positive}
\end{equation}
Such a map $\Lambda$ is an {\bf entanglement witness} for Schmidt number $k$.
\end{itemize}
\end{thm}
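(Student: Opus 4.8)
The plan is to treat the two bullets separately: the first is the ``soundness'' of the witness and follows by a direct positivity computation, while the second is its ``completeness'' and will be extracted from Theorem~\ref{thm:ranktest_main}.

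For the first bullet, I would begin with a convexity reduction. Since $\Id_{\cH_1}\otimes\Lambda$ is linear and the cone $\cB(\cH_1\otimes\cH_1)_+$ is convex, and since (by the Carath\'eodory argument preceding the Schmidt-number definition) every element of $\tilde{\cS}_k$ is a convex combination of pure states of Schmidt rank $\le k$, it suffices to prove $(\Id_{\cH_1}\otimes\Lambda)(|\psi\rangle\langle\psi|)\ge 0$ for a single $|\psi\rangle\in\cH_1\otimes\cH_2$ of Schmidt rank $\le k$. Writing a Schmidt decomposition $|\psi\rangle=\sum_{r=1}^{k}|u_r\rangle|v_r\rangle$ with $|u_r\rangle\in\cH_1$ and $|v_r\rangle\in\cH_2$, one has $(\Id\otimes\Lambda)(|\psi\rangle\langle\psi|)=\sum_{r,s}|u_r\rangle\langle u_s|\otimes\Lambda(|v_r\rangle\langle v_s|)$. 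The key observation is that the operator matrix $[\,|v_r\rangle\langle v_s|\,]_{r,s=1}^{k}$ lies in $M_k(\cB(\cH_2))_+$, since its quadratic form on $(w_1,\dots,w_k)$ equals $|\sum_s\langle v_s|w_s\rangle|^2\ge 0$; hence $k$-positivity of $\Lambda$ forces $[\Lambda(|v_r\rangle\langle v_s|)]\in M_k(\cB(\cH_1))_+$. Testing against an arbitrary $|\eta\rangle\in\cH_1\otimes\cH_1$ and setting $|\eta_r\rangle:=(\langle u_r|\otimes\Id_{\cH_1})|\eta\rangle$, the expectation collapses to $\sum_{r,s}\langle\eta_r|\Lambda(|v_r\rangle\langle v_s|)|\eta_s\rangle$, which is $\ge 0$ precisely because that matrix is positive. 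This settles the first bullet.

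For the second bullet, I would invoke Theorem~\ref{thm:ranktest_main}: as $\rho\notin\tilde{\cS}_k$, there is a $k$-positive map $\Lambda_0:\cB(\cH_1)\to\cB(\cH_2)$ with $H:=(\Id_{\cH_1}\otimes\Lambda_0)(mP)$ satisfying $\tr[\rho H]<0$, where $mP=\sum_\alpha E_\alpha\otimes E_\alpha=|\Omega\rangle\langle\Omega|$ and $|\Omega\rangle=\sum_i|e_i\rangle|e_i\rangle$ is the unnormalized maximally entangled vector in $\cH_1\otimes\cH_1$. Since $\Lambda_0$ acts in the direction opposite to the one required, I would pass to the Hilbert--Schmidt adjoint $\Lambda:=\Lambda_0^{*}:\cB(\cH_2)\to\cB(\cH_1)$, using the standard fact that the adjoint of a $k$-positive map is again $k$-positive (because $\Id_k\otimes\Lambda_0^{*}=(\Id_k\otimes\Lambda_0)^{*}$ and the adjoint of a positive map is positive by self-duality of the positive cone). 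Then the adjoint relation $\tr[(\Id\otimes\Lambda)(\rho)\,X]=\tr[\rho\,(\Id\otimes\Lambda_0)(X)]$, applied with $X=|\Omega\rangle\langle\Omega|$, yields $\langle\Omega|(\Id_{\cH_1}\otimes\Lambda)(\rho)|\Omega\rangle=\tr[\rho H]<0$, whence $(\Id_{\cH_1}\otimes\Lambda)(\rho)\ngeq 0$, as desired.

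The routine parts are the convexity reduction and the partial-inner-product bookkeeping in the first bullet. The one place to be careful — and what I regard as the main obstacle — is the direction-reversal in the second bullet: correctly matching the map $\Lambda_0:\cB(\cH_1)\to\cB(\cH_2)$ delivered by Theorem~\ref{thm:ranktest_main} (along with the transpose conventions buried in its construction) to a $k$-positive $\Lambda:\cB(\cH_2)\to\cB(\cH_1)$ acting on the correct tensor leg, and verifying that passage to the adjoint preserves $k$-positivity. Once the adjoint identity is set up cleanly against $|\Omega\rangle$, the non-positivity of the witness is immediate from $\tr[\rho H]<0$.
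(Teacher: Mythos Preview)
Your proposal is correct and follows essentially the same route as the paper. The paper leaves the first bullet as an exercise, so your explicit Schmidt-decomposition argument simply fills that in; for the second bullet, the paper does exactly what you do—passes from $\Lambda_0$ to its adjoint $\Lambda_0^{\dagger}:\cB(\cH_2)\to\cB(\cH_1)$ (invoking the exercise that adjoints preserve $k$-positivity), moves the map across the trace in Eq.~\eqref{eq:k-positivity1}, and reads off $(\Id\otimes\Lambda_0^{\dagger})(\rho)\ngeq 0$ from the resulting negative expectation against $\sum_\alpha E_\alpha\otimes E_\alpha^{\dagger}$, which is your $|\Omega\rangle\langle\Omega|$.
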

\begin{proof}
The first statement of the theorem is easily proved, and is left as an exercise.

\noindent To prove the second statement, we note that if a map $\Lambda$ is $k$-positive, then so is its adjoint.
\begin{exer}
If $\Lambda: \cB(\cH_{1})\rightarrow \cB(\cH_{2})$ is $k$-positive, then, $\Lambda^{\dagger}: \cB(\cH_{2})\rightarrow \cB(\cH_{1})$ is also $k$-positive.
\end{exer}
Since we know from Lemma~\ref{lem:k-positivity} that the map $\Lambda_{0}$ defined in Eq.~\eqref{eq:lambda_0} is $k$-positive, the above exercise implies that the map $\Lambda_{0}^\dagger: \cB(\cH_{2})\rightarrow \cB(\cH_{1})$ is also $k$-positive. Then, it follows that
\begin{equation}
 (\Id_{\cH_{1}} \otimes \Lambda_{0}^{\dagger})(\sigma) \geq 0, \;  \forall \; \sigma \in \tilde{\cS}_{k}.
\end{equation}
Going back to Eq.~\eqref{eq:k-positivity1} and taking the adjoint of the operator in the trace, we have,
\[ \tr\left[(\Id\otimes\Lambda^{\dagger}_{0})(\rho)\left(\sum_{\alpha}E_{\alpha}\otimes E_{\alpha}^{\dagger} \right)\right] < 0.\]
This implies, $(\Id \otimes \Lambda_{0}^{\dagger})(\rho) \ngeq 0.$ We have thus constructed a $k$-positive map $\Lambda_{0}^{\dagger}$ with the desired property.
\end{proof}

In order to check if a given state $\rho$ is of Schmidt number strictly greater than $k$, we need a $k$-positive map that satisfies Eq.~\eqref{eq:non-positive}. Finding such $k$-positive maps is indeed a hard problem. There is however a vast body of work on positive but not completely positive maps, which can act as entanglement witnesses in the quantum information literature~\cite{Hor2}. The classical example of such a test for entanglement is using the transpose operation which is positive, but not completely positive~\cite{Peres}.

\subsection{Schmidt Number of generalized Werner States}\label{sec:Werner}\index{Schmidt! number, Werner states}

Given a $d$-dimensional Hilbert space $\cH$, consider the bipartite system $\cH\otimes\cH$ formed using two copies of $\cH$. The {\bf Werner states}~\cite{Wer}\index{Werner states} which we define below, are an interesting single parameter family of bipartite states in $\cH\otimes \cH$.

For any unitary operator $U$ in $\cH$, let $\Pi(U) : U \rightarrow U \otimes \bar{U}$ be a representation of the unitary group $\cU(\cH)$.
\begin{prop}
Any state $\rho$ commuting with every $U\otimes\bar{U}$ has the form
\begin{equation}
 \rho = \rho_{F} \equiv F|\psi_{0}\rangle\langle\psi_{0}| + (1-F) \frac{\left(I - |\psi_{0}\rangle\langle\psi_{0}|\right)}{d^{2}-1}, \label{eq:werner}
\end{equation}
where $|\psi_{0}\rangle = \frac{1}{\sqrt{d}}\sum_{i=1}^{d}|ii\rangle$ is the maximally entangled state in $\cH\otimes \cH$.
\end{prop}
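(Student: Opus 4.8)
The plan is to diagonalise the problem by transporting the representation $U \mapsto U\ot\bar U$ to the conjugation action on matrices, where its irreducible decomposition is classical, and then to read off the commutant via Schur's lemma. First I would use the unitary isomorphism $\Gamma(J):\cH\ot\cH\to\mathbb{M}_d$ of Eq.~\eqref{eq:correspondence} (with $\cH_1=\cH_2=\cH$, $m=n=d$), under which $|u\rangle|v\rangle \mapsto uv^T$. A one-line computation gives $\Gamma(J)\big((U\ot\bar U)(|u\rangle|v\rangle)\big)=(Uu)(\bar U v)^T = U(uv^T)\bar U^T = U(uv^T)U^*$, so that $\Gamma(J)(U\ot\bar U)\Gamma(J)^{-1}$ is exactly the conjugation superoperator $\mathrm{Ad}_U:X\mapsto UXU^*$ on $\mathbb{M}_d$, viewed as a Hilbert space with the Hilbert--Schmidt inner product. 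Moreover $\Gamma(J)|\psi_0\rangle=\tfrac1{\sqrt d}\sum_i e_i e_i^T=\tfrac1{\sqrt d}I$, so the line $\mbb{C}|\psi_0\rangle$ corresponds to $\mbb{C}I$. Since $\Gamma(J)$ is unitary, $\rho$ commutes with every $U\ot\bar U$ if and only if $\tilde\rho:=\Gamma(J)\rho\,\Gamma(J)^{-1}$ commutes with every $\mathrm{Ad}_U$.

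The crux is then to determine the commutant of $\{\mathrm{Ad}_U:U\in\cU(\cH)\}$ acting on $\mathbb{M}_d$. Here $\mathbb{M}_d$ splits as the $\mathrm{Ad}$-invariant orthogonal sum $\mbb{C}I\oplus(\text{traceless matrices})$, the first summand being the trivial subrepresentation and the traceless block being irreducible --- this is precisely the irreducibility of the adjoint representation of $SU(d)$, equivalently the standard fact that the only superoperators commuting with all $\mathrm{Ad}_U$ are the linear combinations of $X\mapsto X$ and $X\mapsto \tr(X)I/d$. I expect this irreducibility to be the main obstacle; I would either cite it or prove it by hand, averaging $X\mapsto\tilde\rho(X)$ against the Haar measure on $\cU(\cH)$ and invoking Schur to force $\tilde\rho$ to act as a scalar on the traceless block and (separately) on $\mbb{C}I$. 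Granting this, the commutant is two-dimensional, spanned by the two spectral projections $P_0$ (orthogonal projection onto $\mbb{C}I$, i.e. $P_0(X)=\tr(X)I/d$) and $\mathrm{id}-P_0$; since the trivial and adjoint representations are inequivalent for $d\ge 2$, there are no off-diagonal intertwiners.

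Finally I would transport back. The projection $P_0$ pulls back under $\Gamma(J)^{-1}$ to $|\psi_0\rangle\langle\psi_0|$, since $|\psi_0\rangle\leftrightarrow I/\sqrt d$ is the unit vector spanning the trivial line, so $\tilde\rho=a\,P_0+b\,(\mathrm{id}-P_0)$ yields $\rho=a\,|\psi_0\rangle\langle\psi_0|+b\,(I-|\psi_0\rangle\langle\psi_0|)$ for scalars $a,b$. Imposing $\tr\rho=1$ together with $\tr|\psi_0\rangle\langle\psi_0|=1$ and $\tr(I-|\psi_0\rangle\langle\psi_0|)=d^2-1$ gives $a+b(d^2-1)=1$; writing $F:=a=\langle\psi_0|\rho|\psi_0\rangle$ then forces $b=(1-F)/(d^2-1)$, which is exactly the asserted form in Eq.~\eqref{eq:werner}.
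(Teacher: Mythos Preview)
Your proposal is correct and follows essentially the same route as the paper: both transport the representation $U\mapsto U\otimes\bar U$ to the conjugation action $X\mapsto UXU^{\dagger}$ on $\mathbb{M}_d$ via the isomorphism $\Gamma(J)$ of Eq.~\eqref{eq:correspondence}, identify the commutant there as two-dimensional (spanned by $\mathrm{id}$ and $X\mapsto\tr(X)I$), and pull back to $\mathrm{span}\{I_{\cH\otimes\cH},|\psi_0\rangle\langle\psi_0|\}$. The paper simply asserts the two-dimensionality of the commutant, whereas you spell out the Schur/irreducibility argument; otherwise the arguments are the same.
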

\begin{proof}
This follows from the isomorphism between $\cH\otimes \cH$ and $\cB(\cH)$. Let $\cJ$ denote conjugation with respect to a chosen basis $\{|i\rangle\} \in \cH$. Defining the map,
\[\Gamma(\cJ) : |u\rangle|v\rangle \rightarrow |u\rangle\langle \cJ v|, \]
we see that,
\[ \Gamma(\cJ)\Pi(U)\Gamma(\cJ)^{-1}(.) = U(.)U^{\dagger}. \]
Thus the action of the $U\otimes \bar{U}$ corresponds to the unitary action $X \rightarrow UXU^{\dagger}$ for all $X \in \cB(\cH)$. We denote this representation of the unitary group  as $\tilde{\Pi}(U)$.

Now note that the commutant $\{\tilde{\Pi}(U), \; U \in \cU(\cH)\}'$ is of dimension two, and is spanned by $I$ and $X \rightarrow \tr(X) I = \langle I |X\rangle |I\rangle$. In turn, the commutant $\{\Pi(U), U\in \cU(\cH)\}'$ is spanned by $I_{\cH\otimes\cH}$ and $|\psi_{o}\rangle\langle\psi_{0}|$, where $ |\psi_{0}\rangle = \frac{1}{\sqrt{d}}\sum_{i=1}^{d}|ii\rangle$, is the maximally entangled pure state in the canonical basis. This in turn implies that a state $\rho$ that commutes with all the unitaries $\Pi(U)$ must be a convex combination as given in Eq.~\eqref{eq:werner}.
\end{proof}

The state $\rho_{F}$ is the generalized {\bf Werner State}. When $F \geq \frac{1}{d^{2}}$, $\rho_{F}$ can be rewritten as
\[ \rho_{F} = p |\psi_{0}\rangle\langle\psi_{0}| + (1-p)\frac{I}{d^{2}}, \]
where $I$ is the identity operator on $\cH\otimes\cH$. Note that while the Schmidt rank of the maximally entangled state is $d$, the Schmidt rank of the {\it maximally mixed state} $I/d^{2}$ is $1$. Such a mixture of $|\psi_{0}\rangle\langle\psi_{0}|$ and $I/d^{2}$ can be realized by the averaging the action of the unitary group $\Pi(U)$ on state $\rho$, that is, by performing the following operation:
\[ \int dU (U \otimes \bar{U}) \rho (U \otimes \bar{U})^{\dagger}. \]

Our goal is to now evaluate the Schmidt number of $\rho_{F}$. We first note that when $F=1$, $\rho_{F}$ has Schmidt number $d$. Furthermore, for any state $|\psi\rangle \in \cH\otimes\cH$,
\begin{equation}
 \langle\psi|\rho_{F}|\psi\rangle = \frac{1}{d^{2}-1}\left[(d^{2}F-1)|\langle\psi_{0}|\psi\rangle|^{2} + 1-F\right].
\end{equation}
It is then an easy exercise to evaluate the maximum value of $\langle\psi|\rho_{F}|\psi\rangle$.
\begin{exer}\label{exer:rhoF_max}
 If $F \geq \frac{1}{d^{2}}$,
\[ \max_{|\psi\rangle \in \cH\otimes\cH} \langle\psi|\rho_{F}|\psi\rangle = F, \] which is attained when $|\psi\rangle = |\psi_{0}\rangle$. If $F \leq \frac{1}{d^{2}}$,  \[ \max_{|\psi\rangle \in \cH\otimes\cH}\langle\psi|\rho_{F}|\psi\rangle  = \frac{1-F}{d^{2}-1},\] and this value is attained when $|\psi\rangle$ is orthogonal to $|\psi_{0}\rangle$.
\end{exer}

The central result we prove in this section is the following result due to Terhal-Horodecki~\cite{TH}.
\begin{thm}\label{thm:FSchmidt}
Let $\frac{k-1}{d} \leq  F \leq \frac{k}{d}$, with $F \geq \frac{1}{d^{2}}$. Then, the Schmidt number of $\rho_{F}$ is equal to $k$.
\end{thm}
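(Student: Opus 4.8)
The plan is to pin down the Schmidt number of $\rho_F$ by bracketing it from both sides: I would show that $\rho_F\in\widetilde{\cS}_k$ (so its Schmidt number is at most $k$) and that $\rho_F\notin\widetilde{\cS}_{k-1}$ (so it is at least $k$); together these say exactly that $\rho_F$ has Schmidt number $k$. The single numerical invariant driving everything is the overlap $\langle\psi_0|\rho_F|\psi_0\rangle$, which equals $F$ — this drops out of the displayed formula for $\langle\psi|\rho_F|\psi\rangle$ on setting $|\psi\rangle=|\psi_0\rangle$ (equivalently, it is the $F$-attaining case of Exercise~\ref{exer:rhoF_max}). The second ingredient I need is the elementary bound that any pure state $|\psi\rangle$ of Schmidt rank $\le r$ satisfies $|\langle\psi_0|\psi\rangle|^2\le r/d$.

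For the lower bound I would first establish that overlap bound using the isomorphism $\Gamma(\cJ):\cH\otimes\cH\to\mathbb{M}(d,d)$ from the proof of the Werner decomposition: under it $|\psi_0\rangle$ corresponds to $d^{-1/2}I$, so that $\langle\psi_0|\psi\rangle=d^{-1/2}\tr(M)$ where $M=\Gamma(\cJ)(|\psi\rangle)$ has $\|M\|_{HS}=1$ and rank equal to the Schmidt rank of $|\psi\rangle$. Writing $M$ in its singular value decomposition and applying Cauchy--Schwarz to $\tr(M)=\sum_l\sigma_l\langle v_l|u_l\rangle$ gives $|\tr(M)|\le\sqrt{r}$, hence $|\langle\psi_0|\psi\rangle|^2\le r/d$. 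Since $\sigma\mapsto\langle\psi_0|\sigma|\psi_0\rangle$ is affine, every $\sigma\in\widetilde{\cS}_{k-1}$ has overlap at most $(k-1)/d$, whereas $\rho_F$ has overlap $F>(k-1)/d$; thus $\rho_F\notin\widetilde{\cS}_{k-1}$. This is precisely the Schmidt-number witness of Theorem~\ref{thm:ranktest_main} in action, the separating functional here being $\sigma\mapsto\langle\psi_0|\sigma|\psi_0\rangle-(k-1)/d$.

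For the upper bound I would exploit the $U\otimes\bar U$-invariance built into the Werner family. Let $T(\cdot)=\int_{\cU(\cH)}(U\otimes\bar U)(\cdot)(U\otimes\bar U)^\dagger\,dU$ be the twirl over Haar measure; averaging lands in the commutant of $\{U\otimes\bar U\}$, so by the Proposition characterising the Werner form (Eq.~\eqref{eq:werner}), $T(\rho)$ is always a Werner state, and since $(U\otimes\bar U)|\psi_0\rangle=|\psi_0\rangle$ its parameter is $\langle\psi_0|T(\rho)|\psi_0\rangle=\langle\psi_0|\rho|\psi_0\rangle$. Applying this to $|\phi\rangle=k^{-1/2}\sum_{i=1}^k|ii\rangle$, which has Schmidt rank $k$ and overlap $|\langle\psi_0|\phi\rangle|^2=k/d$, I get $T(|\phi\rangle\langle\phi|)=\rho_{k/d}$; but $T(|\phi\rangle\langle\phi|)=\int|(U\otimes\bar U)\phi\rangle\langle(U\otimes\bar U)\phi|\,dU$ is a Haar mixture of the pure states $(U\otimes\bar U)|\phi\rangle$, each of Schmidt rank $k$ because $U\otimes\bar U$ is a product of local unitaries. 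Hence $\rho_{k/d}\in\widetilde{\cS}_k$. Since $\rho_{1/d^2}=I/d^2\in\widetilde{\cS}_1\subset\widetilde{\cS}_k$ and $F\mapsto\rho_F$ is affine, for $1/d^2\le F\le k/d$ the state $\rho_F$ is a convex combination of $\rho_{1/d^2}$ and $\rho_{k/d}$, so $\rho_F\in\widetilde{\cS}_k$ by convexity and closedness of $\widetilde{\cS}_k$.

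I expect the main obstacle to be the upper-bound step, and specifically the claim that the twirl genuinely produces a \emph{convex} (integral) combination of fixed-Schmidt-rank pure states lying in the closed set $\widetilde{\cS}_k$ — one must invoke that $\widetilde{\cS}_k$ is closed under Haar-averaging (which is exactly why $\widetilde{\cS}_k$ was defined via an arbitrary probability measure $\mu$) and is topologically closed, together with the fact that local unitaries preserve Schmidt rank. A secondary subtlety is the boundary value $F=(k-1)/d$, where the witness inequality is no longer strict; there $\rho_F$ may drop to Schmidt number $k-1$, so strictly the clean conclusion holds on $(k-1)/d<F\le k/d$, matching the stated range up to its left endpoint.
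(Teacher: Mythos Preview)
Your proposal is correct and follows essentially the same route as the paper: the overlap bound $|\langle\psi_0|\psi\rangle|^2\le r/d$ (your SVD/Cauchy--Schwarz argument is the matrix-picture version of the paper's Lemma~\ref{lem:A} and Corollary~\ref{cor:A}) yields the lower bound via Proposition~\ref{prop:minSchmidt}, and the upper bound comes from twirling $|\psi_k\rangle$ to produce $\rho_{k/d}\in\widetilde{\cS}_k$ (the paper's Exercise preceding Proposition~\ref{prop:werner_convex}) followed by the convex interpolation of Proposition~\ref{prop:werner_convex}. Your observation about the left endpoint $F=(k-1)/d$ is well taken---the paper's final paragraph invokes Proposition~\ref{prop:minSchmidt} with a non-strict inequality, which that proposition does not literally support, and indeed $\rho_{(k-1)/d}$ has Schmidt number $k-1$ by the same argument applied one step down.
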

Thus, even though the parameter $F$ changes continuously, the Schmidt number of $\rho_{F}$ changes at discrete values of $F$, remaining constant for the intermediate values.

Before venturing to prove Theorem~\ref{thm:FSchmidt}, we first note the following general result which proves a lower bound on the Schmidt number of a general state $\rho \in \cH\otimes\cH$, given a bound on its expectation value with a maximally entangled state.
\begin{prop}\label{prop:minSchmidt}
 Suppose for a state $\rho \in \cH\otimes\cH$ and a maximally entangled state $|\psi\rangle \in \cH\otimes\cH$, $\langle\psi|\rho|\psi\rangle > \frac{k}{d}$. Then,
\begin{equation}
{\rm Schmidt \; number} (\rho) \geq k + 1.
\end{equation}
\end{prop}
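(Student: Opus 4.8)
The plan is to prove the contrapositive: I will show that if $\rho \in \tilde{\cS}_{k}$, then $\langle\psi|\rho|\psi\rangle \le \frac{k}{d}$ for every maximally entangled $|\psi\rangle$, so that the hypothesis $\langle\psi|\rho|\psi\rangle > \frac{k}{d}$ forces $\rho \notin \tilde{\cS}_{k}$, which is exactly the assertion that the Schmidt number of $\rho$ is at least $k+1$. By the definition of $\tilde{\cS}_{k}$ (together with the Carath\'eodory reduction already invoked above), any $\rho \in \tilde{\cS}_{k}$ is a finite convex combination $\rho = \sum_{j} p_{j}|\phi_{j}\rangle\langle\phi_{j}|$ with each $|\phi_{j}\rangle$ a unit vector of Schmidt rank at most $k$. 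Since $\langle\psi|\rho|\psi\rangle = \sum_{j}p_{j}\,|\langle\psi|\phi_{j}\rangle|^{2}$ and $\sum_{j}p_{j}=1$, it suffices to establish the pure-state estimate
\[ |\langle\psi|\phi\rangle|^{2} \le \frac{k}{d} \]
for every unit vector $|\phi\rangle$ of Schmidt rank $\le k$ and every maximally entangled $|\psi\rangle$.

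To prove this key estimate I would pass to the matrix picture via the unitary isomorphism $\Gamma(J):\cH\otimes\cH \to \mathbb{M}(d,d)$ of Section~\ref{sec:schmidt}, under which the Schmidt rank of a vector equals the rank of its associated matrix and the inner product becomes $\langle X|Y\rangle = \tr[X^{\dagger}Y]$. A maximally entangled state has all Schmidt coefficients equal to $1/\sqrt{d}$, so $|\psi\rangle$ corresponds to $\frac{1}{\sqrt{d}}U$ for some unitary $U \in \mathbb{M}(d,d)$ (it sends an orthonormal basis to an orthonormal basis), while $|\phi\rangle$ corresponds to a matrix $M$ with $\mathrm{rank}(M)\le k$ and $\tr[M^{\dagger}M]=1$. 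Hence
\[ \langle\psi|\phi\rangle = \tfrac{1}{\sqrt{d}}\,\tr[U^{\dagger}M] = \tfrac{1}{\sqrt{d}}\,\tr[N], \qquad N := U^{\dagger}M, \]
where $N$ again has rank $\le k$ and $\tr[N^{\dagger}N] = \tr[M^{\dagger}M] = 1$.

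The heart of the matter is therefore the elementary bound $|\tr[N]| \le \sqrt{k}$ valid for any $N$ with $\mathrm{rank}(N)\le k$ and $\|N\|_{\mathrm{HS}}=1$. To get it, let $Q$ be the orthogonal projection onto the range of $N$, so that $N = QN$ and $\tr[Q] = \mathrm{rank}(N)\le k$; then by the Cauchy--Schwarz inequality for the Hilbert--Schmidt inner product,
\[ |\tr[N]| = |\tr[QN]| = |\langle Q|N\rangle| \le \|Q\|_{\mathrm{HS}}\,\|N\|_{\mathrm{HS}} = \sqrt{\tr[Q]} \le \sqrt{k}. \]
Combining the last two displays yields $|\langle\psi|\phi\rangle|^{2} \le k/d$, and averaging over the convex decomposition of $\rho$ completes the argument.

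I expect no serious obstacle here; the only points needing care are the identification of maximally entangled states with scalar multiples of unitary matrices under $\Gamma(J)$ (which follows from the equal-Schmidt-coefficient characterization of maximal entanglement) and the clean statement of the trace-versus-rank inequality, which is the genuine analytic core of the proposition. A variant avoiding the isomorphism would expand $|\langle\psi|\phi\rangle|$ directly in the Schmidt bases of $|\psi\rangle$ and $|\phi\rangle$ and apply Cauchy--Schwarz to the $k$ nonzero coefficients, but the matrix formulation makes the rank bound most transparent.
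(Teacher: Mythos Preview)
Your proof is correct and follows the same overall architecture as the paper: prove the contrapositive, reduce by convexity to the pure-state overlap estimate $|\langle\psi|\phi\rangle|^{2}\le k/d$, and obtain that estimate by a Cauchy--Schwarz argument.

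The one genuine difference lies in how the pure-state bound is established. The paper first proves an exact formula (Lemma~\ref{lem:A}): for $|\phi\rangle$ with Schmidt coefficients $\lambda_{1},\ldots,\lambda_{r}$, the maximum of $|\langle\psi|\phi\rangle|$ over all maximally entangled $|\psi\rangle$ equals $\frac{1}{\sqrt{d}}\sum_{j}\lambda_{j}$, and then (Corollary~\ref{cor:A}) bounds this sum by $\sqrt{k}$ via the scalar Cauchy--Schwarz inequality. You instead pass to the matrix model $\Gamma(J)$, identify $|\psi\rangle$ with $d^{-1/2}U$ for a unitary $U$, and bound $|\tr[U^{\dagger}M]|$ directly using Cauchy--Schwarz for the Hilbert--Schmidt inner product against the range projection of $M$. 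Your route is shorter and bypasses the explicit computation of the optimal $|\psi\rangle$; the paper's route yields the sharper intermediate statement that the maximal overlap is \emph{exactly} $d^{-1/2}\sum_{j}\lambda_{j}$, which is of independent interest but not needed for the proposition itself. The variant you mention at the end---expanding in Schmidt bases and applying Cauchy--Schwarz to the $k$ nonzero coefficients---is precisely the paper's Corollary~\ref{cor:A}.
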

The proof follows from the following Lemma and its corollary discussed below.
\begin{lem}\label{lem:A}
Let $|\psi\rangle = \sum_{j=1}^{d}\lambda_{j}|x_{j}\rangle|y_{j}\rangle$ be a state in Schmidt form. Then,
\begin{equation}
 \max\{ |\langle\psi|\psi'\rangle|, \; |\psi'\rangle : {\rm max. \; entangled \; state}\} = \frac{1}{\sqrt{d}}\sum_{j}\lambda_{j}.
\end{equation}
\end{lem}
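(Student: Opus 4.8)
The plan is to transport the entire problem from $\cH\otimes\cH$ to the matrix space $\mathbb{M}(d,d)$ via the unitary isomorphism $\Gamma(J)$ of \eqref{eq:correspondence}, where maximizing the overlap becomes a standard trace-norm optimization.

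First I would record what $\Gamma(J)$ does to the two kinds of states involved. The Schmidt-form vector $|\psi\rangle = \sum_{j=1}^{d}\lambda_{j}|x_{j}\rangle|y_{j}\rangle$ is sent to the matrix $A = \sum_{j=1}^{d}\lambda_{j}|x_{j}\rangle\langle J y_{j}|$; since $J$ is a conjugation (antiunitary), the set $\{|Jy_{j}\rangle\}$ is orthonormal, so this is precisely a singular value decomposition of $A$ with singular values $\lambda_{j}$. On the other hand, a unit vector $|\psi'\rangle$ is maximally entangled exactly when all of its Schmidt coefficients equal $d^{-1/2}$, which translates under $\Gamma(J)$ into $\Gamma(J)(|\psi'\rangle) = d^{-1/2}W$ for a unitary $W \in \mathbb{M}(d,d)$. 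Because $\Gamma(J)$ preserves inner products and $\langle X|Y\rangle = \tr[X^{\dagger}Y]$, I then obtain
\[ |\langle \psi | \psi'\rangle| = |\langle A \mid d^{-1/2}W\rangle| = d^{-1/2}\,|\tr(A^{\dagger}W)|, \]
so that $\max\{|\langle\psi|\psi'\rangle| : |\psi'\rangle \text{ max. entangled}\} = d^{-1/2}\max_{W \in \cU(\cH)}|\tr(A^{\dagger}W)|$.

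Next I would prove the matrix identity $\max_{W}|\tr(A^{\dagger}W)| = \sum_{j}\lambda_{j}$ (i.e. that the maximum is the trace norm $\|A\|_{1}$). Writing $A^{\dagger} = \sum_{j}\lambda_{j}|Jy_{j}\rangle\langle x_{j}|$ gives $\tr(A^{\dagger}W) = \sum_{j}\lambda_{j}\langle x_{j}|W|Jy_{j}\rangle$; the triangle inequality together with $|\langle x_{j}|W|Jy_{j}\rangle| \le \|x_{j}\|\,\|W(Jy_{j})\| = 1$ yields the upper bound $\sum_{j}\lambda_{j}$. For the reverse, the assignment $|Jy_{j}\rangle \mapsto |x_{j}\rangle$ between two orthonormal bases extends to a unitary $W_{0}$, and for this choice each $\langle x_{j}|W_{0}|Jy_{j}\rangle = 1$, so $\tr(A^{\dagger}W_{0}) = \sum_{j}\lambda_{j}$ and the bound is attained. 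Combining this with the previous display gives the claimed value $d^{-1/2}\sum_{j}\lambda_{j}$.

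The only point needing a little care is the dictionary between the two descriptions of maximally entangled states (equal Schmidt coefficients versus $d^{-1/2}$ times a unitary matrix), together with the harmless bookkeeping when some $\lambda_{j}=0$: in that case one simply extends the partial families $\{|x_{j}\rangle\}$ and $\{|Jy_{j}\rangle\}$ to full orthonormal bases before defining $W_{0}$. I do not expect a genuine obstacle here, since the heart of the argument is just the observation that $\max_{W}|\tr(A^{\dagger}W)|$ equals the trace norm $\|A\|_{1}=\sum_{j}\lambda_{j}$.
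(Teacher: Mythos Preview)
Your proposal is correct. Both your argument and the paper's reduce to the same core fact---that the overlap is maximized when each term $\lambda_j\langle x_j|\cdot\rangle$ contributes its full modulus---but you package it differently. The paper works directly in $\cH\otimes\cH$: it writes $|\psi'\rangle = d^{-1/2}\sum_l|\xi_l\rangle|\eta_l\rangle$, introduces unitaries $U,V$ sending $|\xi_l\rangle,|\eta_l\rangle$ to a fixed basis $|l\rangle$, and a conjugation $C$ in that basis, then manipulates the double sum $\sum_{j,l}\lambda_j\langle x_j|\xi_l\rangle\langle y_j|\eta_l\rangle$ into $\sum_j\lambda_j\langle Ux_j|CVy_j\rangle$ before applying the triangle inequality and constructing the optimizer by hand. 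You instead invoke the isomorphism $\Gamma(J)$ from \eqref{eq:correspondence} (which the paper set up earlier but did not use here) to translate the question into $d^{-1/2}\max_W|\tr(A^\dagger W)|$ over unitaries $W$, recognizing this immediately as the trace norm $\|A\|_1=\sum_j\lambda_j$. Your route is more conceptual and shorter; the paper's is a direct unpacking of the same computation without naming the trace norm.
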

\begin{proof}
Choose $|\psi'\rangle = \frac{1}{\sqrt{d}}\sum_{j=1}^{d}|\xi_{j}\rangle|\eta_{j}\rangle$, where, $\{|\xi_{j}\rangle\}$ and $\{|\eta_{j}\rangle\}$ are any two orthonormal bases. Then,
\[ \langle\psi|\psi'\rangle = \frac{1}{\sqrt{d}}\sum_{j,l}\lambda_{j}\langle x_{j}|\xi_{l}\rangle\langle y_{j}|\eta_{l}\rangle .\]
Choose unitaries $U, V$, such that $U|\xi_{l}\rangle = |l\rangle$, $V|\eta_{l}\rangle = |l\rangle$. Let $C$ denote conjugation with respect to the standard basis $\{|l\rangle\}$. Then,
\begin{eqnarray}
 \langle \psi|\psi'\rangle &=& \frac{1}{\sqrt{d}}\sum_{j,l}\lambda_{j}\langle U \, x_{j}|l\rangle\langle V \, y_{j}|l\rangle \nonumber \\
&=& \frac{1}{\sqrt{d}}\sum_{j,l}\lambda_{j} \langle U \, x_{j}|l\rangle\langle l |CV \, y_{j}\rangle \nonumber \\
&=& \frac{1}{\sqrt{d}}\sum_{j} \lambda_{j}\langle U \, x_{j}|CV \, y_{j}\rangle.
\end{eqnarray}
Note that, since $|U \, x_{j}\rangle, |CV \, y_{j}\rangle$ are unit vectors and $\lambda_{j}\geq 0$, $|\langle \psi|\psi'\rangle | \leq \frac{1}{\sqrt{d}}\sum_{j}\lambda_{j}$.

Now, let $|\xi_{l}\rangle = |l\rangle$, for all $l=1, \ldots, d$, so that $U \equiv I$. We choose the unitary $V$ that satisfies $VC |y_{j}\rangle = |x_{j}\rangle$, for all $j=1,\ldots, d$. Then, we have,
\begin{eqnarray}
\langle\psi|\psi'\rangle &=& \frac{1}{\sqrt{d}} \sum_{j,l}\lambda_{j}\langle x_{j}|l\rangle\langle C \, \eta_{l}|C \, y_{j}\rangle \nonumber \\
&=& \frac{1}{\sqrt{d}} \sum_{j,l}\lambda_{j}\langle x_{j}|l\rangle\langle VC \, \eta_{l}|VC \, y_{j}\rangle \nonumber \\
&=& \frac{1}{\sqrt{d}} \sum_{j,l} \lambda_{j}\langle x_{j}|l\rangle\langle l |x_{j}\rangle = \frac{1}{\sqrt{d}} \sum_{j=1}^{d} \lambda_{j},
\end{eqnarray}
where we have used, $C|\eta_{l}\rangle = V^{\dagger}|l\rangle $, so that, $|\eta_{l}\rangle = CV^{\dagger}|l\rangle$.
\end{proof}

We now have the following Corollary to Lemma~\ref{lem:A}
\begin{cor}\label{cor:A}
Let ${\rm Schmidt \; Number}(|\psi\rangle) = k$. Then,
\begin{equation}
\max\{|\langle \psi|\psi'\rangle|, \; |\psi'\rangle : {\rm max. \; entangled \; state}\} \leq \sqrt{\frac{k}{d}}.
\end{equation}
\end{cor}
\begin{proof}
$|\psi\rangle = \sum_{j=1}^{k}\lambda_{j}|u_{j}\rangle|v_{j}\rangle$, in Schmidt form, with $\lambda_{j} = 0$, if $j\geq k+1$. By Lemma~\ref{lem:A},
\begin{eqnarray}
\max\{|\langle \psi|\psi'\rangle|, \; |\psi'\rangle : {\rm max. \; entangled \; state}\} &\leq& \frac{1}{\sqrt{d}}\left(\lambda_{1} + \lambda_{2}+ \ldots + \lambda_{k}\right) \nonumber \\
&\leq& \frac{1}{\sqrt{d}} k^{1/2}\left(\lambda_{1}^{2} + \lambda_{2}^{2} + \ldots + \lambda_{k}^{2}\right)^{1/2}  = \sqrt{\frac{k}{d}}.
\end{eqnarray}
\end{proof}

It is now easy to prove Prop.~\ref{prop:minSchmidt}.
\begin{proof}
Assume, to the contrary, that the Schmidt number of $\rho$ is less than $k$. Recall that such a state $\rho$ can be written as a convex combination of pure states $|\phi\rangle\langle\phi| \in \cS_{k}$, where the convex set $\cS_{k}$ was defined in Eq.~\eqref{eq:Sk_defn}, as follows:
\[ \rho = \int_{\cS_{k}} |\phi\rangle\langle\phi| \mu d(\phi) \; .\]
If $|\psi\rangle$ is maximally entangled, then, Corollary~\ref{cor:A} implies
\begin{equation}
 \langle\psi|\rho|\psi\rangle = \int_{\cS_{k}} |\langle\phi|\psi\rangle|^{2} \mu d\phi \leq \frac{k}{d}.
\end{equation}
Therefore, if $\rho$ has the property that $\langle \psi|\rho|\psi\rangle \geq \frac{k}{d}$ for any maximally entangled state $|\psi\rangle$, \[ {\rm Schmidt \; number}(\rho) \geq k \].
\end{proof}

Now, consider the Werner state $\rho_{F}$ with $F=\frac{k}{d}$ for some $k\leq d$. It is left as an exercise for the reader to check that  such a state can be obtained by averaging the action of the unitary group $\Pi(U)$ on an  entangled state of Schmidt rank $k$.
\begin{exer}
Show that for $|\psi_{k}\rangle = \frac{1}{\sqrt{k}}\sum_{i=1}^{k}|ii\rangle$,
\begin{equation}
 \int (U\otimes\bar{U})|\psi_{k}\rangle\langle\psi_{k} |(U\otimes \bar{U})^{\dagger} = \rho_{k/d}, \label{eq:rhok}
\end{equation}
where $\rho_{k/d}$ is the Werner state with $F = \frac{k}{d}$.
\end{exer}
Clearly, ${\rm Schmidt \; Number}(\rho_{k/d}) \leq k$. We will now prove that the Schmidt number of $\rho_{k/d}$ is in fact exactly equal to $k$.

\begin{prop}
The state $\rho_{k/d}$ defined in Eq.~\eqref{eq:rhok} has Schmidt number equal to $k$.
\end{prop}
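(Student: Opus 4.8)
The plan is to prove the two inclusions $\rho_{k/d}\in\tilde{\cS}_k$ and $\rho_{k/d}\notin\tilde{\cS}_{k-1}$, which together say precisely that $\rho_{k/d}$ has Schmidt number $k$; both follow quickly from results already established.

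For the membership $\rho_{k/d}\in\tilde{\cS}_k$ I would use the averaging identity of Eq.~\eqref{eq:rhok}, which exhibits $\rho_{k/d}$ as the group average $\int(U\otimes\bar U)\,|\psi_k\rangle\langle\psi_k|\,(U\otimes\bar U)^\dagger\,dU$ of the rank-$k$ state $|\psi_k\rangle=\frac{1}{\sqrt k}\sum_{i=1}^k|ii\rangle$. Since $U\otimes\bar U$ is a product of local unitaries it preserves Schmidt rank, so each integrand is a pure state of Schmidt rank $k$, i.e.\ an element of $\cS_k$; hence $\rho_{k/d}$ is a convex combination of elements of $\cS_k$ and therefore lies in $\tilde{\cS}_k$. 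This is the easy half and gives Schmidt number$(\rho_{k/d})\le k$.

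The non-membership $\rho_{k/d}\notin\tilde{\cS}_{k-1}$ is the substantive step, and here I would apply Proposition~\ref{prop:minSchmidt}. The input is the value of $\rho_{k/d}$ on the maximally entangled vector $|\psi_0\rangle$: by Exercise~\ref{exer:rhoF_max}, whose hypothesis $F\ge 1/d^2$ is satisfied since $F=k/d\ge 1/d\ge 1/d^2$, one has $\langle\psi_0|\rho_{k/d}|\psi_0\rangle=F=k/d$ (equivalently, set $\psi=\psi_0$ in the explicit formula for $\langle\psi|\rho_F|\psi\rangle$). Because $|\psi_0\rangle$ is maximally entangled and $k/d>(k-1)/d$, applying Proposition~\ref{prop:minSchmidt} \emph{with its parameter taken to be $k-1$} gives Schmidt number$(\rho_{k/d})\ge(k-1)+1=k$, which is exactly the statement $\rho_{k/d}\notin\tilde{\cS}_{k-1}$.

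Combining the two steps yields Schmidt number$(\rho_{k/d})=k$. I expect the only delicate point to be the index bookkeeping in the last step: Proposition~\ref{prop:minSchmidt} converts a \emph{strict} excess of $\langle\psi|\rho|\psi\rangle$ over $k'/d$ into the lower bound $k'+1$, so to turn the \emph{equality} $\langle\psi_0|\rho_{k/d}|\psi_0\rangle=k/d$ into the sharp bound $k$ one must invoke it with $k'=k-1$ and use the genuine strict inequality $k/d>(k-1)/d$ rather than the vacuous $k/d>k/d$. The standing hypotheses ($k\le d$, so that $|\psi_k\rangle$ exists, and $F=k/d\ge 1/d^2$) are immediate, so no further obstacle arises.
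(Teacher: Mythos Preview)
Your proof is correct and follows essentially the same route as the paper: the upper bound comes from the averaging representation of Eq.~\eqref{eq:rhok} (noted just before the proposition), and the lower bound from combining Exercise~\ref{exer:rhoF_max} with Proposition~\ref{prop:minSchmidt}. You are in fact more careful than the paper about the index shift, correctly applying Proposition~\ref{prop:minSchmidt} with parameter $k-1$ and the strict inequality $k/d>(k-1)/d$, whereas the paper writes only $k/d\ge(k-1)/d$.
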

\begin{proof}
We know from Ex.~\ref{exer:rhoF_max}, that the state $\rho_{k/d}$ satisfies
\[ \max\{\; \langle \psi|\rho_{k/d}|\psi\rangle, \; |\psi\rangle : \; {\rm max. \; entangled \; state}\} = \frac{k}{d} \geq \frac{k-1}{d}. \]
Then, it is a direct consequence of Prop.~\ref{prop:minSchmidt} that,
\[ {\rm Schmidt \; number}(\rho) \geq k. \]
But we already know that ${\rm Schmidt \; number}(\rho) \leq k$, thus proving that ${\rm Schmidt \; number}(\rho) = k.$
\end{proof}

The final ingredient in proving Theorem~\ref{thm:FSchmidt} is the following property of Werner states with $\frac{1}{d^{2}} \leq F \leq \frac{k}{d}$.
\begin{prop}\label{prop:werner_convex}
For $\frac{1}{d^{2}} \leq F \leq \frac{k}{d}$, the Werner state $\rho_{F}$ is a convex combination of $\frac{I}{d^{2}}$ ($I$ being the identity operator on $\cH\otimes\cH$) and $\rho_{k/d}$. \end{prop}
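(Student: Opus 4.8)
The plan is to exploit the fact that the assignment $F \mapsto \rho_F$ is \emph{affine} in the real parameter $F$, so that a convex combination of two Werner states is again a Werner state whose parameter is the correspondingly averaged value. Once this is recognized, the proposition collapses to a one-variable interpolation.

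First I would regroup the defining formula \eqref{eq:werner}, collecting the coefficients of the rank-one projection $P := |\psi_0\rangle\langle\psi_0|$ and of the identity $I$ on $\cH \otimes \cH$. Starting from $\rho_F = F P + (1-F)\tfrac{I-P}{d^2-1}$ and rearranging yields
\[
\rho_F = \frac{1-F}{d^2-1}\, I + \frac{d^2 F - 1}{d^2-1}\, P,
\]
which displays both coefficients as affine functions of $F$; hence $F \mapsto \rho_F$ is affine. Substituting $F = \tfrac{1}{d^2}$ makes the coefficient of $P$ vanish and leaves coefficient $\tfrac{1}{d^2}$ on $I$, so that $\rho_{1/d^2} = \tfrac{I}{d^2}$. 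This identifies the maximally mixed state as precisely the Werner state at parameter $1/d^2$.

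Next, using affineness, for any $\lambda \in [0,1]$ I would observe that the coefficients of $I$ and of $P$ match on both sides of
\[
\lambda\, \rho_{1/d^2} + (1-\lambda)\, \rho_{k/d} = \rho_{\lambda/d^2 + (1-\lambda)k/d}.
\]
It then remains to solve $F = \lambda/d^2 + (1-\lambda)k/d$ for $\lambda$, which gives
\[
\lambda = \frac{k/d - F}{k/d - 1/d^2}.
\]
Since $k \ge 1 \ge 1/d$, we have $1/d^2 \le k/d$, so the denominator is positive, and the hypothesis $\tfrac{1}{d^2} \le F \le \tfrac{k}{d}$ forces $\lambda \in [0,1]$. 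Replacing $\rho_{1/d^2}$ by $\tfrac{I}{d^2}$ then exhibits $\rho_F$ as the asserted convex combination of $\tfrac{I}{d^2}$ and $\rho_{k/d}$.

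There is no genuine obstacle here: the argument is elementary once the affine dependence on $F$ is isolated. The only two points meriting care are the algebraic verification of the regrouped formula (immediate) and the confirmation that the interpolation weight $\lambda$ actually lies in $[0,1]$ throughout the stated range of $F$, which is an immediate consequence of the inequalities $\tfrac{1}{d^2} \le F \le \tfrac{k}{d}$.
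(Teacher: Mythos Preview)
Your proof is correct and follows essentially the same approach as the paper: both arguments recognize that $\rho_F$ depends affinely on $F$, write the candidate convex combination, and solve for the weight (your $\lambda$ is the paper's $1-\theta$). Your version is slightly more streamlined in that you explicitly identify $\rho_{1/d^2} = I/d^2$ up front and invoke affineness directly, whereas the paper expands the combination and reads off the parameter; but the mathematical content is identical.
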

\begin{proof}
Let,
\[ \rho_{F} = (1-\theta) \frac{I}{d^{2}} + \theta \rho_{k/d}. \]
Recall that $\rho_{k/d}$ is itself a convex combination of the maximally entangled state $|\psi_{0}\rangle\langle\psi_{0}|$ and the state orthogonal to it, as stated in Eq.~\eqref{eq:werner}. Therefore,
\begin{eqnarray}
\rho_{F} &=& (1-\theta)\frac{I}{d^{2}} + \theta\left[\left(1- \frac{k}{d}\right)\frac{I - |\psi_{0}\rangle\langle\psi_{0}|}{d^{2}-1} + \frac{k}{d}|\psi_{0}\rangle\langle\psi_{0}| \right] \nonumber \\
&=& \left[\frac{1-\theta}{d^{2}} + \theta\frac{1 - k/d}{d^{2}-1}\right](I - |\psi_{0}\rangle\langle\psi_{0}|)  + \left[\frac{1-\theta}{d^{2}} + \theta\frac{k}{d}\right]|\psi_{0}\rangle\langle\psi_{0}| .
\end{eqnarray}
Comparing with the canonical form of the Wener state given in Eq.~\eqref{eq:werner}, we see that the parameter $F$ corresponding to $\rho_{F}$ is,
\[ F  = \frac{1-\theta}{d^{2}} + \theta \frac{k}{d}.\]
In other words, the parameter $\theta$ satisfies
\[ \theta = \frac{F - \frac{1}{d^{2}}}{\frac{k}{d} - \frac{1}{d^{2}}}.\]
Clearly, $0 \leq \theta \leq 1$ iff
\[ \frac{1}{d^{2}} \leq F \leq \frac{k}{d}.\]
\end{proof}

Finally, we note an important corollary of Prop.~\ref{prop:werner_convex}.
\begin{cor}\label{cor:werner}
${\rm Schmidt \; number}(\rho_{F}) \leq k$ if $\frac{1}{d^{2}} \leq F \leq \frac{k}{d}$.
\end{cor}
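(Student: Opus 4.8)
The plan is to recognize that the assertion ``${\rm Schmidt\; number}(\rho_F) \leq k$'' is precisely the statement that $\rho_F \in \tilde{\cS}_k$, and then to exploit the convexity of $\tilde{\cS}_k$ together with the segment decomposition furnished by Proposition~\ref{prop:werner_convex}. By that proposition there is a $\theta \in [0,1]$ with
\[ \rho_F = (1-\theta)\frac{I}{d^2} + \theta\, \rho_{k/d}, \]
so it suffices to check that both endpoints of this segment already lie in $\tilde{\cS}_k$ and to invoke convexity.

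First I would verify that $\rho_{k/d} \in \tilde{\cS}_k$: this is exactly the content of the preceding proposition, which established ${\rm Schmidt\; number}(\rho_{k/d}) = k$, hence in particular membership in $\tilde{\cS}_k$. Next I would handle the maximally mixed state by writing it explicitly as
\[ \frac{I}{d^2} = \frac{1}{d^2} \sum_{i,j=1}^{d} |i\rangle\langle i| \otimes |j\rangle\langle j|, \]
a uniform convex combination of the product pure states $|i\rangle|j\rangle$, each of which has Schmidt rank one; thus $I/d^2 \in \tilde{\cS}_1 \subseteq \tilde{\cS}_k$.

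Finally I would invoke the convexity of $\tilde{\cS}_k$, which is immediate from its very definition as the collection of all convex combinations (integral mixtures) of pure states drawn from $\cS_k$: a convex combination of two such mixtures is again such a mixture. Since both $I/d^2$ and $\rho_{k/d}$ lie in $\tilde{\cS}_k$ and $\rho_F$ is a convex combination of them, it follows that $\rho_F \in \tilde{\cS}_k$, i.e.\ ${\rm Schmidt\; number}(\rho_F) \leq k$.

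The only genuine ``obstacle'' here is bookkeeping: one must confirm the convexity of $\tilde{\cS}_k$ and note the trivial rank-one decomposition of $I/d^2$. Both are routine, so the corollary follows with essentially no new work beyond Proposition~\ref{prop:werner_convex} and the Schmidt-number computation for $\rho_{k/d}$ that precedes it.
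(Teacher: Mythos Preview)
Your argument is correct and is exactly the reasoning the paper intends: the corollary is stated without proof precisely because it follows immediately from the convex decomposition in Proposition~\ref{prop:werner_convex}, the fact that $I/d^2 \in \tilde{\cS}_1 \subseteq \tilde{\cS}_k$, the earlier observation that $\rho_{k/d} \in \tilde{\cS}_k$, and the convexity of $\tilde{\cS}_k$. There is nothing to add.
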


The proof of Theorem~\ref{thm:FSchmidt} now follows easily. Recall that the Werner state $\rho_{F}$ for $\frac{k-1}{d}\leq F\leq \frac{k}{d}$ satisfies (Ex.~\ref{exer:rhoF_max})
\[ \langle\psi|\rho_{F}|\psi\rangle \geq \frac{k-1}{d}.\]
Then, Prop.~\ref{prop:minSchmidt} implies
\[ {\rm Schmidt \; number}(\rho_{F}) \geq k .\]
However, Prop.~\ref{prop:werner_convex} and Cor.~\ref{cor:werner} imply that
\[ {\rm Schmidt \; number}(\rho_{F}) \leq k , \]
thus proving Theorem~\ref{thm:FSchmidt}.

An interesting open question is to estimate the Schmidt number of the Werner state $\rho_{F}$ for the parameter range $0\leq F \leq \frac{1}{d^{2}}$.


\chapter{Operator Systems}

\bigskip \noindent
{\bf Chapter Abstract:}

{\it In the first two sections of this chapter, which is based on
  lectures by Vern Paulsen, we will take a closer look at the results
  of Choi and how they explain some results on quantum error
  correction. In particular, we will look at the Knill-Laflamme result
  and Shor's code. One novel aspect of our approach, is that we will
  introduce Douglas' Factorization Theorem, which can be used to
  replace many calculations.

A quantum channel \index{quantum channel} is always defined as a
completely positive, trace preserving mapping. The natural setting to
discuss completely positive mappings is an {\em operator system}. So
our last three sections will be an introduction to some topics in the
theory of operator systems that we believe are important for anyone
interested in quantum information theory. The chapter concludes with
some remarks on the equivalence of the Connes Embedding Problem,
Kirchberg's conjecture on $C^*(\F_\infty)$ having the Weak Expectation
Property, and other statements, including even a nuclearity assertion
about some finite dimensional operator systems.}

\section{Theorems of Choi}

\noindent {\bf Notations and Conventions}

Vectors $v \in \mbb{C}^n$ will be treated as column vectors $v =
\left[ \begin{array}{c} \alpha_1 \\ \vdots \\ \alpha_n\end{array}
    \right]$. In short, $\mbb{C}^n\equiv M_{n\times 1}$. And for such a
column vector $v$, its adjoint gives a row vector $v^*:=
[\overline{\alpha}_1, \ldots, \overline{v}_n]$.

\noindent {\em Bra and Ket Notations:} For $v, w \in \mbb{C}^n$, we shall,
following the physicists, write $|v\rangle := v$ and $\langle w | : =
w^*$. Thus, if $v =
\left[ \begin{array}{c} \alpha_1 \\ \vdots \\ \alpha_n\end{array}
    \right]$ and $w
    = \left[ \begin{array}{c} \beta_1 \\ \vdots \\ \beta_n\end{array} \right]$,
    then $v^*w = \sum_j
\overline{\alpha}_i \beta_i = \langle v | w \rangle$ and $wv^*=
         [\beta_i \overline{\alpha}_j ]=: | w \rangle \langle v|.$ (In
         particular, our inner products will be linear in the second
         variable and conjugate linear in first.)

\subsection{Douglas Factorization}

\bigskip We know by the spectral theorem that for positive semidefinite matrix $P \in M_n^+$, if $u_1,\cdots,u_r$ are non-zero eigenvectors
($r = rank (P)$) with eigenvalues $p_1,\cdots, p_r$ and if
$v_i=\sqrt{p_i}u_i$, then $P=\sum_{i=1}^r v_iv_i^*$. This is known as
the {\em spectral decomposition,} \index{spectral decomposition} but
there are many ways to decompose a positive semi definite matrix as a
sum of rank-1 matrices. Let $P=\sum_{i=1}^m w_iw_i^*$ be another such.
To find a relation between the $v_i$'s and $w_i$'s, we need the
following proposition:

\begin{prop}[Douglas' Factorization Theorem] \index{Theorem! Douglas' Factorisation}
 Let $A,B$ be two bounded operators on a Hilbert space $H$ such that
 $B^*B \le A^*A$. Then there exists a unique $C \in B(R(A) , R(B))$
 such that $||C|| \le 1$ and $CA=B$ (where $R(A)$ is range of $A$).
\end{prop}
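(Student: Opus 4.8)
The plan is to define $C$ on the range of $A$ by the only formula that can possibly work, namely $C(Ax) := Bx$, and then check that the hypothesis $B^*B \le A^*A$ is exactly what makes this well-defined and contractive. First I would observe that for every $x \in H$ we have
\[
\|Bx\|^2 = \langle B^*Bx, x\rangle \le \langle A^*Ax, x\rangle = \|Ax\|^2.
\]
This single inequality does two jobs at once. If $Ax = Ax'$, i.e. $A(x-x')=0$, then the inequality forces $\|B(x-x')\|^2 \le \|A(x-x')\|^2 = 0$, so $Bx = Bx'$; hence the assignment $Ax \mapsto Bx$ is a genuinely well-defined map on $R(A)$, independent of the chosen preimage $x$. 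The same inequality shows that this map is contractive: $\|C(Ax)\| = \|Bx\| \le \|Ax\|$, so $\|C\| \le 1$ on $R(A)$.

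Next I would promote this to a bounded operator on the closure $\overline{R(A)}$: since $C$ is contractive and linear on the dense subspace $R(A)$, it extends uniquely by continuity to a contraction on $\overline{R(A)}$, and its range lands in $\overline{R(B)}$ because each $C(Ax) = Bx \in R(B)$ and continuity preserves the closure. (If one prefers $C \in B(R(A),R(B))$ literally, one works on the ranges themselves and notes $\|C\|\le 1$ is retained.) By construction $CA = B$, which is the identity we wanted. Linearity of $C$ is immediate from linearity of $A$ and $B$.

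For uniqueness, suppose $C'$ is another operator with domain $R(A)$ (or its closure) satisfying $C'A = B$. Then for every $x$, $C'(Ax) = Bx = C(Ax)$, so $C$ and $C'$ agree on all of $R(A)$; if both are defined on $\overline{R(A)}$ and continuous they agree there too. This pins down $C$ uniquely as an operator with the specified domain $R(A)$, which is why the statement restricts the domain to $R(A)$ rather than all of $H$ (on the orthogonal complement of $R(A)$ there is no constraint, so uniqueness can only be asserted after fixing the domain).

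I do not expect a serious obstacle here; the proof is essentially the observation that the operator-inequality $B^*B \le A^*A$ is precisely a disguised statement about pointwise norms via $\|Tx\|^2 = \langle T^*Tx,x\rangle$. The one point deserving care is the well-definedness step: one must check $C$ does not depend on the representative $x$ chosen for a given vector $Ax \in R(A)$, and it is exactly the $\ker A \subseteq \ker B$ consequence of the hypothesis that guarantees this. Everything else (contractivity, the extension by density, the range inclusion, uniqueness) then follows mechanically, so the main conceptual content is concentrated in that first displayed inequality.
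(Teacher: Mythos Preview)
Your proof is correct and follows exactly the same approach as the paper: define $C(Ah) = Bh$ and observe that the hypothesis $B^*B \le A^*A$ gives $\|Bh\| \le \|Ah\|$, from which well-definedness, contractivity, and the remaining assertions follow. The paper's proof is a one-line sketch of this argument; you have simply filled in the details.
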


Observe that the conclusion in the above proposition of Douglas says
loosely that $B^*B \le A^*A $ implies $ A$ ``divides'' $B$ (in the above sense).

\begin{proof}
Define $C(Ah)=Bh$ for $h \in H$. Then well-definedness of the map $C$
and the other assertions of the proposition follow from the inequality in the
hypothesis.
\end{proof}

\begin{prop}\label{changeofbasis}
Let $P \in M_n^+$ with $rank(P)=r$ and representations
    $P=\sum_{i=1}^r v_iv_i^*= \sum_{i=1}^m w_iw_i^*$. Then:
 \begin{enumerate}
  \item  there is an isometric `change of
    basis', that is, there exists a unique isometry $ U=[u_{i,j}]_{m \times
    r}$, i.e., $ U^*U=1_r$, such that $w_i=\sum_{i=1}^ru_{i,j}v_j$ for all $1 \leq i \leq
    m$.
 \item $span \{v_1,\cdots,v_r\}=span\{w_1,\cdots,w_m\}=R(P)$.
 \end{enumerate}
\end{prop}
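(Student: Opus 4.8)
The plan is to package the two decompositions as matrices and feed them into Douglas' factorization theorem. Assemble the vectors into $V = [v_1 \cdots v_r] \in M_{n\times r}$ and $W = [w_1 \cdots w_m] \in M_{n\times m}$, so the hypotheses become $VV^* = WW^* = P$. I would first dispose of $(2)$: for any $x$ we have $Px = \sum_i (v_i^*x)\,v_i \in \operatorname{span}\{v_i\}$, so $R(P) \subseteq \operatorname{span}\{v_1,\dots,v_r\}$; since there are exactly $r = \operatorname{rank}(P) = \dim R(P)$ vectors $v_i$, they must form a linearly independent basis of $R(P)$. For the $w_i$, the inequality $w_iw_i^* \le \sum_j w_jw_j^* = P$ forces $\ker P \subseteq \ker(w_iw_i^*)$ (because $x^*Px = \sum_j \|w_j^*x\|^2$), and hence $\operatorname{span}\{w_i\} = R(w_iw_i^*) \subseteq R(P)$, while the computation above gives the reverse inclusion. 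Thus $\operatorname{span}\{w_i\} = R(P)$ as well. This settles $(2)$ and, crucially, records the linear independence of the $v_j$, which I will reuse for uniqueness in $(1)$.

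For $(1)$, set $A = V^*$ and $B = W^*$, so that $A^*A = VV^* = P = WW^* = B^*B$. Applying Douglas' factorization theorem with $B^*B \le A^*A$ yields a unique contraction $C \in B(R(A),R(B))$ with $CA = B$, i.e. $CV^* = W^*$, equivalently $W = VC^*$; applying it with the roles of $A$ and $B$ reversed produces a contraction $C'$ with $C'W^* = V^*$. Since $V$ has full column rank $r$, $R(A) = R(V^*) = \C^r$, so $A$ is surjective onto $\C^r$; from $C'CA = C'B = A$ I then get $C'C = 1_r$. The norm chain
\[
\|x\| = \|C'Cx\| \le \|C'\|\,\|Cx\| \le \|Cx\| \le \|C\|\,\|x\| \le \|x\|
\]
forces every inequality to be an equality, so $\|Cx\| = \|x\|$ for all $x$, i.e. $C^*C = 1_r$ and $C$ is an isometry. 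Reading off columns, $W = VC^*$ says precisely $w_i = \sum_{j=1}^r u_{i,j} v_j$ with $U := \overline{C} = [u_{i,j}] \in M_{m\times r}$, and taking entrywise conjugates in $C^*C = 1_r$ gives $U^*U = 1_r$. Uniqueness of $U$ is then immediate from the linear independence of $\{v_1,\dots,v_r\}$ established in $(2)$ (equivalently, from the uniqueness clause of Douglas' theorem).

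The genuinely substantive step is upgrading the Douglas contraction $C$ to an isometry: a single application only delivers $\|C\| \le 1$, and it is the interplay of both factorization directions together with the surjectivity of $V^*$ that pins down $C'C = 1_r$ and hence $C^*C = 1_r$. I expect that to be the main obstacle. The remaining work — the spanning argument for $(2)$ and the bookkeeping of adjoints versus entrywise conjugates in passing between $W = VC^*$ and $w_i = \sum_j u_{i,j} v_j$ — is routine but mildly error-prone, and I would take care to keep the conjugation conventions straight.
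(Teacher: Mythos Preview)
Your proposal is correct and follows essentially the same route as the paper: package the decompositions as $V,W$, apply Douglas' factorization in both directions to $V^*$ and $W^*$, use surjectivity of $V^*$ to get that the composite is $1_r$, upgrade the contraction to an isometry, and take the entrywise conjugate to obtain $U$. The only cosmetic difference is that you establish $(2)$ directly at the outset (and use it for uniqueness), whereas the paper deduces $(2)$ at the end from the isometry property of $U$.
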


\begin{proof} (1)
Let $V=[ v_1\  \vdots\ v_2 \ \vdots\  \cdots\   \vdots\   v_r ]$ and
$W= [ w_1\ \vdots \ w_2\ \vdots\  \cdots\  \vdots\  w_m ]$.  Note that each
vector $v_i$ (or $w_i$) is an $n \times 1$ column vector, thus
$V \in M_{n \times r}$ and $ W \in M_{n \times m}$.

The set $\{v_i: i=1,\cdots,r\}$ is linearly independent since $\mathrm{rank}(P)=r$; in
particular, $V:\mathbb{C}^r \rightarrow \mathbb{C}^n$ is injective and
$V^*:\mathbb{C}^n \rightarrow \mathbb{C}^r$ is thus surjective.

Also $P=VV^*=WW^*$; so, by Douglas' factorization, there exists a
contraction $U_1 : R(V^*)=\mathbb{C}^r \rightarrow R(W^*) \subset
\mathbb{C}^m$ such that $W^*=U_1V^*$. For the same reason, there also
exists a contraction $C:R(W^*) \rightarrow R(V^*) = \mbb{C}^r$ such that
$V^*=CW^*$. Hence $(CU_1)V^*=V^*$. As $V^*$ is surjective, this shows that $
CU_1=1_r$ everywhere.

Hence $C$ and $U_1$ must be partial isometries and $CU_1 = 1_r$
implies the injectivity of $U_1$; we, therefore, see that $U_1$ must be an
isometry. The complex conjugate matrix $U = \bar{U_1}$ is seen to
satisfy the requirements of the proposition.

(2) follows from the fact that $U$ as above is an isometry.
\end{proof}

\subsection{Choi-Kraus Representation and Choi Rank}

\begin{thm} [Choi's first theorem] \label{choi1} \index{Theorem! Choi's first }
 Let $\Phi:M_n \rightarrow M_d$ be linear. The following conditions on $\Phi$ are equivalent:
\begin{enumerate}
 \item $\Phi$ is completely positive.
 \item $\Phi$ is $n$-positive.
 \item $P_{\Phi}=(\Phi(E_{i,j})) \in M_n(M_d)^+$ where $E_{i,j}$ are the standard matrix units of $M_n$.
 \item $\Phi(X)= \sum_{i=1}^r A_iXA_i^*$ for some $r,d$ and $A_i \in M_{d \times n}$ (Choi-Kraus representation). \index{Choi-Kraus! representation}
\end{enumerate}
\end{thm}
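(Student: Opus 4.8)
The theorem asserts the equivalence of four characterizations of complete positivity for a linear map $\Phi:M_n \to M_d$. The cleanest strategy is a cycle of implications that minimizes work: $(1)\Rightarrow(2)$ is trivially true by definition, so the real content lives in $(2)\Rightarrow(3)\Rightarrow(4)\Rightarrow(1)$. This mirrors Choi's Theorem \ref{choi} already proved in the excerpt, but now the range is a matrix algebra $M_d$ and we want the explicit Choi-Kraus form, so the \emph{Douglas factorization} machinery (Proposition \ref{changeofbasis}) becomes the decisive tool for extracting the operators $A_i$.

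\textbf{The key object.} Everything revolves around the Choi matrix $P_\Phi = (\Phi(E_{ij})) \in M_n(M_d)$. First I would establish $(2)\Rightarrow(3)$: the matrix $[E_{ij}] = \sum_{i,j} E_{ij}\otimes E_{ij} \in M_n(M_n)$ is (a scalar multiple of) a rank-one projection, hence positive, so $n$-positivity of $\Phi$ applied to it yields $P_\Phi = \Phi_n([E_{ij}]) \geq 0$ in $M_n(M_d)^+$. This is exactly the argument behind $(2)\Rightarrow(3)$ in Theorem \ref{choi}, and I would simply invoke that reasoning.

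\textbf{The main step: $(3)\Rightarrow(4)$.} This is where Douglas factorization earns its keep and is the part I expect to be the genuine obstacle, since it must produce the $A_i$ concretely. Since $P_\Phi \in M_n(M_d)^+ \cong B(\mathbb{C}^n \otimes \mathbb{C}^d)^+$ is positive semidefinite, Proposition \ref{changeofbasis} (or the spectral decomposition preceding it) lets me write $P_\Phi = \sum_{i=1}^r \eta_i \eta_i^*$ as a sum of $r = \mathrm{rank}(P_\Phi)$ rank-one positives, where each $\eta_i \in \mathbb{C}^n \otimes \mathbb{C}^d$. The trick is to decode each vector $\eta_i$ into a matrix $A_i \in M_{d\times n}$ via the canonical isomorphism $\mathbb{C}^n \otimes \mathbb{C}^d \cong M_{d\times n}$ (the same $\Gamma(J)$-type correspondence between $\mathcal H_1\otimes\mathcal H_2$ and rectangular matrices used in the entanglement chapter). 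One then checks, by comparing the $(i,j)$-blocks on both sides, that $\Phi(E_{ij})$ equals the $(i,j)$-block of $\sum_k \eta_k\eta_k^*$, which unwinds precisely to $\sum_k A_k E_{ij} A_k^*$; linearity in $X$ then gives $\Phi(X) = \sum_{k=1}^r A_k X A_k^*$. Here $r = \mathrm{rank}(P_\Phi)$ is the \emph{Choi rank}, and this forced identification is the computational heart of the proof. I would carry out the block-comparison carefully for the matrix units $E_{ij}$ and invoke linearity for general $X$.

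\textbf{Closing the cycle: $(4)\Rightarrow(1)$.} This is the easy direction and can be dispatched quickly. Given $\Phi(X) = \sum_i A_i X A_i^*$, each summand $X \mapsto A_i X A_i^*$ is completely positive (it is of the form $V^*\pi(\cdot)V$ after an adjoint, or directly: for any $m$, $(\Phi\otimes\mathrm{id}_m)$ applied to a positive $[X_{pq}]$ gives $\sum_i (A_i\otimes 1)[X_{pq}](A_i\otimes 1)^* \geq 0$, being a sum of elements of the form $Y^* Z Y$ with $Z\geq0$). A finite sum of CP maps is CP, so $\Phi$ is CP, closing the loop. I would present this compression-conjugation positivity computation in one or two lines and note that it also re-proves the converse half of the Kraus decomposition Theorem \ref{kraus}.
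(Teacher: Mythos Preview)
Your proposal is correct and follows essentially the same route as the paper: the easy implications $(1)\Rightarrow(2)\Rightarrow(3)$ and $(4)\Rightarrow(1)$ are dismissed, and the substantive step $(3)\Rightarrow(4)$ is carried out by writing $P_\Phi=\sum_{i=1}^r v_i v_i^*$ with $r=\mathrm{rank}(P_\Phi)$, partitioning each $v_i\in\mathbb{C}^{nd}$ into $n$ blocks in $\mathbb{C}^d$ to form the columns of $A_i\in M_{d\times n}$, and verifying $\Phi(E_{ij})=\sum_l A_l E_{ij}A_l^*$ block by block. One minor remark: what you actually invoke here is just the spectral (rank-one) decomposition of $P_\Phi$, not the full Douglas/Proposition~\ref{changeofbasis} machinery, which only becomes essential for the uniqueness statement in Choi's second theorem.
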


\begin{proof}
 It is easy to see that $(1) \Rightarrow (2) \Rightarrow (3)$ and $(4)
 \Rightarrow (1)$. So we will only prove $(3) \Rightarrow (4)$.

Let $r=\mathrm{rank}(P_{\Phi})$. Since, $P_{\Phi} \in
M_n(M_d)^+=M_{nd}^+$, as above, there exist vectors $v_i \in
\mathbb{C}^{nd} , 1 \leq i \leq r $ such that $P_{\Phi} = \sum_{i=1}^r
v_iv_i^*$. Suppose $v_i=\begin{pmatrix} \alpha_1^i \\ \vdots
\\ \alpha_n^i \end{pmatrix} \in \mathbb{C}^{nd}$, where each $ \alpha_j^i
= \begin{pmatrix} \alpha_j^i(1) \\ \vdots
  \\ \alpha_j^i(d) \end{pmatrix} \in \mathbb{C}^d$.

Let $A_i=[ \alpha_1^i \ \vdots\ \cdots\
\vdots \ \alpha_n^i ]_{d \times n}$. It is now easy to check that
$\Phi(E_{i,j}) = \sum_l \alpha_i^l
(\alpha_j^l)^*= \sum_{l=1}^r A_lE_{i,j}A_l^*$ for all $1 \leq i,j \leq n$. Hence (4) holds.
\end{proof}

\begin{rem}\label{choi1conv}
If we write $P_{\Phi}$ as $\sum_{i=1}^m w_iw_i^*$, where $w_i
= \begin{pmatrix} \beta_1^i \\ \vdots \\ \beta_n^i \end{pmatrix} \in
\mbb{C}^{nd}$, with $\beta^i_j \in \mbb{C}^d$, then
$\Phi(X)=\sum_{i=1}^m B_iXB_i^*$, where $B_i:= [
  \beta_1^i\ \vdots\ \cdots\ \vdots\ \beta_n^i ]_{d \times n}$. Conversely, if $\Phi(X)=\sum_{i=1}^m B_iXB_i^*$ for some $B_i= [
  \beta_1^i \ \vdots \ \cdots \ \vdots\ \beta_n^i ]_{d \times n}$, and
if we let $w_i = \begin{pmatrix} \beta_1^i \\ \vdots
  \\ \beta_n^i \end{pmatrix}$, then $P_{\Phi}=\sum_{i=1}^m w_iw_i^*$.
\end{rem}

We now apply the Choi-Kraus result to characterize quantum channels.

\begin{prop}
Let $\mcal{E}:M_n \rightarrow M_n$ be CPTP \index{CPTP} (i.e., CP and
trace preserving). Then there exist $E_i
\in M_n$, $1 \leq i \leq r$ satisfying $\sum_{i=1}^rE_i^*E_i=1_n$ such
that $ \mcal{E}(X)=\sum_{i=1}^rE_iXE_i^*$ for all $X \in M_n$.
\end{prop}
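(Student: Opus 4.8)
The plan is to obtain the representation directly from Choi's first theorem and then extract the normalization condition $\sum_i E_i^*E_i = 1_n$ from trace-preservation. First I would note that $\mcal{E}$ is a completely positive linear map $M_n \to M_n$, so the equivalence $(1) \Leftrightarrow (4)$ in Theorem \ref{choi1} (applied with $d = n$) immediately furnishes operators $E_1, \ldots, E_r \in M_n$ and the Choi-Kraus representation $\mcal{E}(X) = \sum_{i=1}^r E_i X E_i^*$ for all $X \in M_n$, where $r = \mathrm{rank}(P_{\mcal{E}})$ is the Choi rank. This settles the form of the map without any further work; the only remaining task is to show that the trace-preservation hypothesis forces the stated identity on the Kraus operators.

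For this, I would simply compute the trace of $\mcal{E}(X)$ using the cyclicity of the trace: for every $X \in M_n$,
\[
\tr[\mcal{E}(X)] = \sum_{i=1}^r \tr[E_i X E_i^*] = \sum_{i=1}^r \tr[E_i^* E_i X] = \tr\Big[\Big(\sum_{i=1}^r E_i^* E_i\Big) X\Big].
\]
Since $\mcal{E}$ is trace-preserving, the left-hand side equals $\tr[X] = \tr[1_n X]$ for all $X$. Hence $\tr\big[(\sum_i E_i^* E_i - 1_n) X\big] = 0$ for every $X \in M_n$.

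The last step --- and the only point that requires a word of justification --- is the non-degeneracy of the trace pairing $(A,X) \mapsto \tr[AX]$ on $M_n$: if $\tr[AX] = 0$ for all $X$, then $A = 0$ (take, for instance, $X = A^*$, so that $\tr[AA^*] = 0$ forces $A = 0$). Applying this with $A = \sum_i E_i^* E_i - 1_n$ yields $\sum_{i=1}^r E_i^* E_i = 1_n$, completing the argument. I do not anticipate any genuine obstacle here: all the substantive content is packaged inside Theorem \ref{choi1}, and what remains is a one-line trace manipulation together with the standard faithfulness of the trace form on $M_n$.
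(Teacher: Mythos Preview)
Your proof is correct and follows essentially the same route as the paper: invoke Theorem \ref{choi1} for the Kraus form, then use cyclicity of the trace and non-degeneracy of the trace pairing to extract $\sum_i E_i^*E_i = 1_n$ from trace-preservation. The paper phrases the last step via the Hilbert--Schmidt inner product $\langle \sum_i E_i^*E_i, X^* \rangle = \langle 1_n, X^* \rangle$ for all $X$, which is exactly your argument in different notation.
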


\begin{proof}
 Since $\mcal{E}$ is CP, from Theorem \ref{choi1}, we know of the
 existence of matrices $E_i \in M_n$ such that $\mcal{E}$ is decomposed as a
 sum as above. However, $\mcal{E}$ is trace preserving $\Rightarrow
 tr(\mcal{E}(X))=tr(X)\ \forall\  X \Rightarrow tr(\sum_{i=1}^rE_iXE_i^*)=tr(X)\
 \forall\ X \Rightarrow tr(\sum_{i=1}^rE_i^*E_iX)=tr(X)\ \forall\  X
 \Rightarrow \langle \sum_{i=1}^rE_i^*E_i, X^* \rangle = \langle 1_n, X^*
 \rangle, \forall X$ and hence $\sum_{i=1}^rE_i^*E_i=1_n$.
\end{proof}

\begin{rem}
 It is easy to see that conversely, $\sum_{i=1}^rE_i^*E_i=1_n
 \Rightarrow \mcal{E}$ is trace preserving.
\end{rem}

\begin{defn}[Choi rank of a CP map] \index{Choi rank} Let $\Phi:M_n \rightarrow
 M_d$ be CP. Then, by the above theorem, $\Phi(X) = \sum_{i=1}^q
 B_iXB_i^*$ for some matrices $B_i \in M_{d,n}$, $1 \leq i \leq q$.
The Choi rank of $\Phi$ is given by $\mathrm{cr}(\Phi) := \min\{q: \Phi(X)
 = \sum_{i=1}^q B_iXB_i^*\}$.
\end{defn}

\begin{prop}[Choi]\label{choirank}
In the above set up, $\mathrm{cr}(\Phi)=rank(P_{\Phi})$.
\end{prop}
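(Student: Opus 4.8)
The plan is to reduce the claim to the standard linear-algebra fact, already essentially contained in Proposition \ref{changeofbasis}, that the least number of rank-one terms needed to express a positive semidefinite matrix as a sum $\sum w_i w_i^*$ equals its rank. The bridge from Choi--Kraus decompositions of $\Phi$ to rank-one decompositions of $P_\Phi$ is exactly the correspondence recorded in Remark \ref{choi1conv}, so the whole argument amounts to transporting the minimisation across that dictionary.

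First I would pin down the dictionary precisely. By Remark \ref{choi1conv}, a representation $\Phi(X) = \sum_{i=1}^m B_i X B_i^*$ with $B_i = [\,\beta_1^i\ \vdots\ \cdots\ \vdots\ \beta_n^i\,]_{d \times n}$ is equivalent to a representation $P_\Phi = \sum_{i=1}^m w_i w_i^*$, where $w_i = (\beta_1^i, \ldots, \beta_n^i)^T \in \mathbb{C}^{nd}$ is obtained by stacking the columns of $B_i$. This is a bijection between the two kinds of decomposition that preserves the number of summands $m$. Hence the least $q$ for which $\Phi$ admits a Choi--Kraus representation with $q$ terms coincides with the least $m$ for which $P_\Phi$ can be written as a sum of $m$ rank-one positive matrices $w_i w_i^*$.

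Second I would evaluate this latter minimum and show it equals $r := \mathrm{rank}(P_\Phi)$. The spectral decomposition already furnishes $P_\Phi = \sum_{i=1}^r v_i v_i^*$ with exactly $r$ terms, giving $q \le r$. For the reverse inequality, take any decomposition $P_\Phi = \sum_{i=1}^m w_i w_i^*$; applying Proposition \ref{changeofbasis}(2) with the spectral decomposition as the $v_i$'s and these $w_i$'s gives $\mathrm{span}\{w_1, \ldots, w_m\} = R(P_\Phi)$, whose dimension is $r$, so $m \ge r$. Thus the minimal number of summands is precisely $r$, and combining with the first step yields $\mathrm{cr}(\Phi) = r = \mathrm{rank}(P_\Phi)$.

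The proof is largely bookkeeping, and the only point I expect to need genuine care — the main (if modest) obstacle — is to be sure the correspondence of Remark \ref{choi1conv} is truly term-for-term, so that no shorter Choi--Kraus representation can be smuggled in via a decomposition of $P_\Phi$ of some other shape. This is settled by the observation that every rank-one element of $M_{nd}^+$ is of the form $w w^*$, so restricting to pure sums of squares $\sum w_i w_i^*$ loses nothing; with that remark in place the two minimisations are genuinely the same.
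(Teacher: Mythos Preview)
Your proof is correct and follows essentially the same route as the paper, which simply cites Theorem \ref{choi1}, Remark \ref{choi1conv} and Proposition \ref{changeofbasis}; you have just unpacked how these three ingredients combine. The only cosmetic difference is that you invoke the spectral decomposition directly for the upper bound $\mathrm{cr}(\Phi)\le r$, whereas the paper points to Theorem \ref{choi1}, whose proof of $(3)\Rightarrow(4)$ is exactly that spectral argument.
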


\begin{proof}
 This follows from Theorem \ref{choi1},  Remark \ref{choi1conv} and  Proposition \ref{changeofbasis}.
\end{proof}

\begin{thm} [Choi's second theorem]\label{choi2} \index{Theorem! Choi's second}
 Let $\Phi \in CP( M_n ,M_d)$ with $\mathrm{cr}(\Phi) =
 r$. Suppose $\Phi(X)= \sum_{i=1}^r A_i X A_i^*= \sum_{i=1}^m B_i X B_i^*$
 are two Choi-Kraus representations of $\Phi$.  Then there exists a
 unique matrix $U=(u_{i,j}) \in M_{m\times r}$ such that $U^*U=1_r,
 B_i=\sum_{j=1}^r u_{i,j}A_j$ and  $span\{A_1,\cdots,A_r\}=span\{B_1,\cdots,B_m\}$.
\end{thm}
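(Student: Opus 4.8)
The plan is to transport everything to the Choi matrix $P_\Phi$ and then invoke Proposition \ref{changeofbasis}, which is precisely the assertion that two rank-one sum decompositions of a positive semidefinite matrix — one of them minimal — are related by a unique isometry. The whole theorem should reduce to a dictionary translation of that linear-algebra fact, with no genuinely new work.

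First I would recall the correspondence underlying Theorem \ref{choi1} and Remark \ref{choi1conv}. Writing each $A_i = [\alpha_1^i\ \cdots\ \alpha_n^i]_{d\times n}$ with columns $\alpha_j^i \in \mathbb{C}^d$, define $v_i = (\alpha_1^i,\ldots,\alpha_n^i)^t \in \mathbb{C}^{nd}$ by stacking columns; similarly write $B_i = [\beta_1^i\ \cdots\ \beta_n^i]_{d\times n}$ and set $w_i = (\beta_1^i,\ldots,\beta_n^i)^t$. The content of Theorem \ref{choi1}/Remark \ref{choi1conv} is exactly that $\Phi(X) = \sum_{i=1}^r A_i X A_i^*$ forces $P_\Phi = \sum_{i=1}^r v_i v_i^*$, while $\Phi(X) = \sum_{i=1}^m B_i X B_i^*$ forces $P_\Phi = \sum_{i=1}^m w_i w_i^*$. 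Thus the two given Choi--Kraus representations yield two rank-one decompositions of the same positive matrix $P_\Phi \in M_{nd}^+$. The key structural observation is that the column-stacking map $\mathrm{vec} : M_{d\times n} \to \mathbb{C}^{nd}$, $A_i \mapsto v_i$, is a linear isomorphism.

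Next I would record minimality: by Proposition \ref{choirank}, $\mathrm{rank}(P_\Phi) = \mathrm{cr}(\Phi) = r$, so the decomposition $P_\Phi = \sum_{i=1}^r v_i v_i^*$ has exactly $\mathrm{rank}(P_\Phi)$ summands — which is precisely the hypothesis needed to apply Proposition \ref{changeofbasis}. That proposition then supplies a unique isometry $U = [u_{i,j}]_{m\times r}$ with $U^*U = 1_r$ and $w_i = \sum_{j=1}^r u_{i,j} v_j$ for all $i$, together with $\mathrm{span}\{v_1,\ldots,v_r\} = \mathrm{span}\{w_1,\ldots,w_m\} = R(P_\Phi)$.

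Finally, I would translate back through $\mathrm{vec}^{-1}$. Since $\mathrm{vec}$ is linear and injective, the relation $w_i = \sum_j u_{i,j} v_j$ is equivalent to $B_i = \sum_j u_{i,j} A_j$, and a linear isomorphism carries the span equality for the $v$'s and $w$'s to the corresponding equality $\mathrm{span}\{A_1,\ldots,A_r\} = \mathrm{span}\{B_1,\ldots,B_m\}$; uniqueness of $U$ is inherited verbatim from Proposition \ref{changeofbasis} (which itself rests on the $v_i$'s forming a basis of $R(P_\Phi)$, guaranteed by minimality). The only point requiring care — and where a careless argument could slip — is the bookkeeping of the stacking bijection: one must verify that $\sum_i v_i v_i^*$ genuinely reassembles into $(\Phi(E_{i,j}))$ exactly as in Theorem \ref{choi1}, so that linear combinations and spans really do correspond across the two pictures. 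This is routine but is the heart of why invoking Proposition \ref{changeofbasis} suffices.
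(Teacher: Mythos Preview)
Your proposal is correct and follows exactly the paper's own route: the paper's proof is the single line ``This follows from Proposition \ref{changeofbasis}, Theorem \ref{choi1} and Remark \ref{choi1conv},'' and you have simply unpacked that invocation in detail, including the (implicit) use of Proposition \ref{choirank} to match $r$ with $\mathrm{rank}(P_\Phi)$.
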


\begin{proof}
 This follows from Proposition \ref{changeofbasis}, Theorem \ref{choi1} and Remark \ref{choi1conv}.
\end{proof}

\subsubsection{Example : Binary Case  Quantum Error Detection/Correction}

This is an introductory binary example to motivate the quantum error
correction/detection of the next section.  Let us take a binary string
of $0$s and $1$s of length, say 5, e.g. $s=(0,1,0,1,1)
\in \mathbb{Z}_2^5$. We want to transmit this. Some errors
may occur in transmission. We want to detect/correct that error.

One way to detect/correct error is to encode the given string into a
larger vector.  One  famous binary error correcting codes
is the \textbf{Majority Rule Code}: \index{Majority Rule Code}

We start with our original vector $s$ of length $r$ (here $r=5$) and
encode it within a vector of length $nr$ for some odd $n$, where each
digit gets repeated $n$ times consecutively.

Say $n=3$ and $s=(0,1,0,1,1)$ then this vector gets encoded into
\[s_1=(0,0,0;1,1,1;0,0,0;1,1,1;1,1,1)\] (note that the encoded vectors
form a 5-dimensional subspace of a 15-dimensional space).

After transmission, suppose the output vector turns out to be, say
\[s_2=(1,0,0;1,1,1;0,0,1;0,0,1;1,1,0).\]

We decode/recover the string by choosing the digit (0 or 1) that
appears as a majority among each block of $n(=3)$ consecutive
digits. Thus, in each block of three consecutive digits, the majority
rules.  Here the recovered string will be $s_3=(0,1,0,0,1)$.

Schematically:
$s \xrightarrow{encode} s_1 \xrightarrow{transmit} \ldots \xrightarrow{receive} s_2
\xrightarrow{decode} s_3.$

Note that there is one error and in fact to have one error in the
output there has to be at least two (i.e., more than n/2) errors within some block of $n$
digits.  Knowing probabilities of those errors will help to understand
how effective a code is.

Alternatively, instead of encoding cleverly, one could pick a clever
5-dimensional subspace of the 15-dimensional space and then any
embedding of $\mbb{Z}_2^5$ onto that subspace would be the encoding..

Note that this code requires that we ``clone'' each digit three
times. Thus, it violates the ``no cloning'' rule and could not be
implemented on a quantum machine.

In the next section, we will present Shor's code, which is a quantum
code that is robust to many types of errors.


\section{Quantum error correction}

If we assume that any errors that occur must also be the result of some quantum event, then it is natural to assume that errors are
also the result of the action of a CPTP map acting on the states. Thus,
an {\em error operator} \index{error operator} will be a CPTP map $\mcal{E}$. The usual strategy/protocol in quantum error correction
is the following:

As in the above example, we don't expect to be able to correct all
errors that occur.  If $v \in V$ is a state in a `protected' subspace
$V$, we want to correct $\mcal{E}(|v\rangle \langle v|)$, which is the
error that has happened to the state. To do this, we seek a recovery
operator $\mcal{R} :M_n
\xrightarrow{CPTP} M_n$ such that $\mcal{R}(\mcal{E}(|v \rangle \langle
v|))=|v\rangle \langle v|, \forall v \in V$.

The approach appears somewhat naive since it appears that it assumes
that we know the error explicitly.  However using Choi's theorem it
turns out, that correcting $V$ against one assumed error leads to its
correction against a whole family of errors. I believe that this is
the key element of Knill-Laflamme error correction.

\subsection{Applications of Choi's Theorems to Error Correction}

\begin{thm}[Knill-Laflamme] \label{krill1} \index{Theorem! Knill - La Flamme}
Let $V \subset \mathbb{C}^n$ be a subspace and $P:\mathbb{C}^n
\rightarrow V$ be its orthogonal projection.  Let $\mcal{E}:M_n
\rightarrow M_n$ be CPTP (an error map) given by
$\mcal{E}(X)=\sum_{i=1}^m E_iXE_i^*$ for all $X \in M_n$.  Then there
exists a (recovery) \index{recovery operator} CPTP map $\mcal{R}:M_n \rightarrow M_n$ such that
$\mcal{R}(\mcal{E}(PXP))=PXP,\ \forall\ X \in M_n$ if, and only if,  $
PE_i^*E_jP=\alpha_{i,j}P$ for some $\alpha_{i,j} \in \mathbb{C}$, $ 1
\leq i, j \leq m$.
\end{thm}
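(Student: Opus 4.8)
The plan is to prove both implications by reducing everything to the uniqueness part of the Choi--Kraus representation, i.e.\ to Choi's second theorem (Theorem \ref{choi2}), together with the trace-preservation identity $\sum_j R_j^* R_j = 1_n$ for the recovery's Kraus operators. Throughout I would fix the isometry $J : V \hookrightarrow \mbb{C}^n$ with $P = JJ^*$ and $J^*J = 1_V$, and write $\mcal{R}(Y) = \sum_j R_j Y R_j^*$ for a Choi--Kraus form of any candidate recovery map.

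For the forward implication ($\Rightarrow$), suppose such an $\mcal{R}$ exists. The key move is to \emph{not} compress the output: consider the CP map $\mcal{G} : M_{\dim V} \to M_n$ given by $\mcal{G}(\sigma) = \mcal{R}(\mcal{E}(J\sigma J^*))$. The hypothesis $\mcal{R}(\mcal{E}(PXP)) = PXP$ says exactly that $\mcal{G}(\sigma) = J\sigma J^*$, a CP map of Choi rank $1$ whose minimal single Kraus operator is the isometry $J$. On the other hand, composing the Choi--Kraus forms of $\mcal{E}$ and $\mcal{R}$ expresses the \emph{same} map as $\mcal{G}(\sigma) = \sum_{i,j}(R_j E_i J)\sigma (R_j E_i J)^*$. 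Applying Choi's second theorem to these two representations forces every operator of the second to be a scalar multiple of $J$, i.e.\ $R_j E_i J = c_{ij} J$ for scalars $c_{ij}$; right-multiplying by $J^*$ yields the sharp relation $R_j E_i P = c_{ij} P$ (note this already kills any ``leakage'' component outside $V$). Finally I would insert $\sum_r R_r^* R_r = 1_n$:
\[ P E_i^* E_j P = P E_i^* \Big(\sum_r R_r^* R_r\Big) E_j P = \sum_r (R_r E_i P)^*(R_r E_j P) = \Big(\sum_r \overline{c_{ir}} c_{jr}\Big) P, \]
which is precisely the asserted condition with $\alpha_{i,j} = \sum_r \overline{c_{ir}} c_{jr}$.

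For the converse ($\Leftarrow$), assume $P E_i^* E_j P = \alpha_{i,j} P$. First I would record that $A := (\alpha_{i,j})$ is Hermitian (take adjoints) and positive semidefinite: for scalars $c_k$ one has $\big(\sum_{k,l} \overline{c_k} c_l\, \alpha_{k,l}\big) P = P\big(\sum_k c_k E_k\big)^*\big(\sum_l c_l E_l\big)P \ge 0$, and $P \neq 0$. Using the unitary freedom in the Choi--Kraus representation (again Choi's second theorem, or the change-of-basis Proposition \ref{changeofbasis}) I would diagonalize $A$, passing to a representation $\mcal{E}(X) = \sum_k F_k X F_k^*$ with $P F_k^* F_l P = d_k \delta_{kl} P$, $d_k \ge 0$. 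For $d_k = 0$ one gets $F_k P = 0$; discard these. For $d_k > 0$ set $W_k = d_k^{-1/2} F_k P$, so that $W_k^* W_l = \delta_{kl} P$: the $W_k$ are partial isometries with initial space $V$ and \emph{mutually orthogonal} ranges, and $\mcal{E}(PXP) = \sum_{k} d_k W_k (PXP) W_k^*$. Trace-preservation of $\mcal{E}$, compressed to $V$, forces $\sum_k d_k = 1$. I would then define the recovery by
\[ \mcal{R}(Y) = \sum_k W_k^* Y W_k + (1-Q) Y (1-Q), \qquad Q := \sum_k W_k W_k^*, \]
which is manifestly CP and is trace-preserving since $\sum_k W_k W_k^* + (1-Q) = 1_n$; a direct computation using $W_l^* W_k = \delta_{lk} P$ and $\sum_k d_k = 1$ (the $(1-Q)$ term drops out because $\mcal{E}(PXP)$ is supported in $\mathrm{ran}\,Q$) gives $\mcal{R}(\mcal{E}(PXP)) = PXP$.

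I expect the main obstacle to be the forward direction, specifically the realization that one must keep the \emph{uncompressed} composite $\sigma \mapsto \mcal{R}(\mcal{E}(J\sigma J^*))$ so that it is genuinely the single-Kraus-operator map $J\sigma J^*$. This is what lets Choi's second theorem produce the pointwise relation $R_j E_i P = c_{ij} P$ rather than a weaker compressed version $P R_j E_i P = c_{ij} P$, and it is exactly this strengthening (vanishing of the leakage terms) that makes the final summation over $r$ collapse to a scalar multiple of $P$. The diagonalization of $A$ in the converse is routine linear algebra that I would not belabor.
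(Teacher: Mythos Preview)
Your proposal is correct and follows essentially the same approach as the paper's proof: both directions hinge on Choi's second theorem, with the forward direction extracting $R_j E_i P = c_{ij} P$ from the rank-one Choi--Kraus representation of the composite and then summing against $\sum_r R_r^* R_r = 1_n$, and the converse diagonalizing $(\alpha_{i,j})$ to obtain orthogonal partial isometries $W_k = d_k^{-1/2} F_k P$ and defining $\mcal{R}(Y) = \sum_k W_k^* Y W_k + (1-Q)Y(1-Q)$. Your use of the isometry $J$ in place of the projection $P$ as the single Kraus operator in the forward direction is a cosmetic refinement (it makes the domain honestly $M_{\dim V}$ rather than $M_n$), but the mechanism is identical to the paper's.
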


\begin{proof}
 $(\Rightarrow)$ Suppose a recovery map $\mcal{R}$ has a
  representation $\mcal{R}(W)=\sum_{l=1}^q A_lWA_l^*$ for all $ W \in
  M_n$ with $ \sum A_l^*A_l =1$.  Since $\mcal{R}$ and $\mcal{E}$ are
  both CPTP, so is $\mcal{R} \circ\mcal{ E}$. Further, the equality
  $(\mcal{R}\circ \mcal{E})(PXP) = \mcal{R}(\mcal{E}(PXP)) =
  \sum_{i=1}^m \sum_{l=1}^q (A_lE_iP)X(PE_i^*A_l^*) = PXP$, gives two Choi-Kraus
  representations for $\mcal{R}\circ \mcal{E}$, and $PXP$ is of
  minimal length. Hence, by Theorem \ref{choi2}, there
  exists a unique isometry $U =[\beta_{ij}]_{mq \times 1}$ such that
  $A_lE_iP=\beta_{l i}P$. Since $U=\begin{pmatrix} \beta_{11} \\ \vdots
  \\ \beta_{1m} \\ \beta_{21} \\ \vdots \\ \beta_{2m} \\ \vdots
  \\ \beta_{q1} \\ \vdots \\ \beta_{qm} \end{pmatrix}$ is an isometry,
  we have $\sum_{l,i} |\beta_{li}|^2= U^*U= 1$. In particular, $
  (PE_i^*A_l^*)(A_lE_jP)=(\bar{\beta}_{li}P)(\beta_{lj}P)=\bar{\beta}_{li}\beta_{lj}P$
  and, using $\sum_l A_l^* A_l = 1_n$, we have
\[
\sum_{l=1}^q \bar{\beta_{li}}\beta_{lj}P=\sum_{l=1}^q
PE_i^*(A_l^*A_l)E_jP=PE_i^*E_jP.
\]
Simply take $\alpha_{i,j} = \sum_l \bar{\beta}_{li}\beta_{lj}$ for the
conclusion.
\vspace*{2mm}

\noindent $(\Leftarrow)$ We may clearly assume $P \neq 0$ so $tr(P) > 0$, as the
Theorem is vacuously true in the contrary case!  Suppose there exist
$\alpha_{i,j} \in
\mathbb{C}$, $ 1 \leq i, j \leq m$ such that
$PE_i^*E_jP=\alpha_{i,j}P$ for all $ 1 \leq i, j \leq m$. Then
$(\alpha_{i,j}P)=(PE_i^*E_jP)=\begin{pmatrix} PE_1^* \\ \vdots
\\ PE_m^* \end{pmatrix} \begin{pmatrix} E_1P, & \ldots, &
  E_mP \end{pmatrix} \ge 0$. This implies $(\alpha_{i,j}) \in M_m^+$
(since $[\alpha_{i,j}] = \frac{1}{tr_n(P)} (id \otimes tr_n)
([\alpha_{i,j}]\otimes P)$); also, $\sum_i \alpha_{i,i}P = \sum_i
PE_i^*E_iP= P^2 = P \Rightarrow tr_m((\alpha_{i,j})) =  \sum_i \alpha_{i,i}
=1$, i.e., $[\alpha_{ij}]$ is a density matrix.

Thus, $(\alpha_{i,j})$ is unitarily diagonalizable, i.e., there exists
a unitary $U=(u_{i,j}) \in M_m$ such that $U(\alpha_{i,j})U^* = D
=\ \mathrm{diag}(d_{11},\cdots, d_{mm})$, with $d_{ii} \ge 0 ~\forall i$ and $\sum_{i=1}^m d_{ii} = 1$. Set $F_i=\sum_{j=1}^m
\bar{u}_{i,j}E_j$, $ 1\leq i \leq m$. Hence, by Choi (or by direct
calculation), $\sum_{i=1}^m F_iXF_i^*= \mcal{E}(X)$ for all $X \in
M_n$.  Also, note that
\begin{eqnarray*}
PF_i^*F_jP &=& P (\sum_{k,l} u_{i,k}E_k^* \bar{u}_{j,l}E_l )P
\\
&=& \sum_{k,l}
u_{i,k} \bar{u}_{j,l}PE_k^*E_lP\\ & = & \sum_{k,l}
u_{i,k}\alpha_{k,l}\bar{u}_{j,l}P\\
&=& d_{ij}P~.
\end{eqnarray*}
So we see that if we define
\[V_i = \left\{ \begin{array}{ll} 0 & \mbox{if } d_{ii} = 0\\
\frac{1}{\sqrt{d_{ii}}} F_iP & \mbox{otherwise}
\end{array} \right. ~,\]
we find that the $V_i$'s are partial isometries with $V_i^*V_j = \delta_{ij} P ~\forall i,j$ (i.e., all non-zero ones among them having initial space equal to $V$)  and with $V_iV_i^* = F_i(V)$ (i.e., with pairwise orthogonal final spaces). Hence we see that
$R = \sum_i V_iV_i^* = \sum \frac{1}{d_{ii}}F_iPF_i^*$ is an
orthogonal projection. Let $Q=1-R$.

Define $\mcal{R}:M_n \rightarrow M_n$ by $\mcal{R}(X)=\sum V_i^*XV_i +
QXQ$. Clearly,  $\mcal{R}$ is a unital CPTP map.

We want to show that $\mcal{R}(\mcal{E}(PXP))=PXP ~ \forall\ X \in
M_n$. To see this, it is enough to show that for $v \in V,
\mcal{R}(\mcal{E}(vv^*))=vv^*$ (since $\{PXP : X \in M_n\}=$ all matrices living
on $V$ while $\{vv^*:v \in V\}=$ all rank-$1$ projections living on
$V$). For this, compute thus:
\begin{eqnarray*}
\mcal{R}(\mcal{E}(vv^*)) &=& \mcal{R}(\mcal{E}(Pvv^*P))\\
&=& \sum_i V_i^* \left( \sum_j F_jPvv^*PF_j^* \right) V_i + Q \left( \sum_j F_jPvv^*PF_j^* \right)  Q\\
&=& \sum_{i,j} V_i^*F_jP vv^* PF_j^*V_i^* + \sum_j QF_jPvv^*PF_jQ\\
&=& \sum_{i,j} d_{jj} V_i^*V_j vv^*V_i^* V_j + \sum_j d_{jj} QV_jPvv^*PV_jQ\\
&=& \sum_j d_{jj} Pvv^*P + 0\\
&=& vv^*~,
\end{eqnarray*}
as  desired.
\end{proof}

\begin{thm}\label{krill2}
 Let $\mcal{R}$ be the recovery operator constructed as in Theorem
 \ref{krill1} and $\tilde{\mcal{E}}:M_n \rightarrow M_n$ be another
 error operator admitting a representation
 $\tilde{\mcal{E}}(X)=\sum_{i=1}^p\tilde{E}_iX\tilde{E}_i^*$ with
 $\tilde{E}_i \in \mathrm{span}\{E_1,\ldots, E_m\}$, $1 \leq i \leq
 p$. Then, also  $\mcal{R} \tilde{\mcal{E}}(PXP)=PXP,\ \forall X \in M_n$.
\end{thm}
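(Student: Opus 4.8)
The plan is to reduce everything to a single computation involving the operators $F_a = \sum_j \bar{u}_{aj} E_j$ manufactured in the proof of Theorem~\ref{krill1}, since these are precisely the operators on which the recovery map $\mcal{R}$ acts cleanly. First I would observe that because $U = (u_{ij})$ is unitary, hence invertible, we have $\mathrm{span}\{E_1,\ldots,E_m\} = \mathrm{span}\{F_1,\ldots,F_m\}$. The hypothesis $\widetilde{E}_i \in \mathrm{span}\{E_1,\ldots,E_m\}$ then lets me write $\widetilde{E}_i = \sum_{a=1}^m b_{ia} F_a$ for suitable scalars $b_{ia}$, $1 \le i \le p$. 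The whole idea is that rewriting the new error in the $F_a$-basis converts the problem into one the recovery map was essentially designed to solve.

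The key step is the identity
\[
\mcal{R}(F_a \, PXP \, F_b^*) = d_{aa}\,\delta_{ab}\, PXP \qquad \forall\, X \in M_n,\ 1 \le a,b \le m.
\]
To establish it I return to the formulas from Theorem~\ref{krill1}: $\mcal{R}(W) = \sum_i V_i^* W V_i + QWQ$, with $V_i = d_{ii}^{-1/2} F_i P$ (or $0$) and $Q = 1 - \sum_i V_iV_i^*$, together with $PF_i^*F_jP = d_{ij}P = d_{ii}\delta_{ij}P$. From these one computes $V_i^* F_a P = \sqrt{d_{ii}}\,\delta_{ia}\,P$, and, taking adjoints, $PF_b^* V_i = \sqrt{d_{ii}}\,\delta_{ib}\,P$; substituting into the sum $\sum_i V_i^*(\cdot)V_i$ yields $\sum_i d_{ii}\delta_{ia}\delta_{ib}PXP = d_{aa}\delta_{ab}PXP$. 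For the $Q$-term I would check that $RF_aP = F_aP$, so that $QF_aP = 0$ and the term vanishes: indeed $\sum_i V_iV_i^* F_a P = \sqrt{d_{aa}}\,V_a P = F_aP$ when $d_{aa} > 0$, while $d_{aa}=0$ forces $F_aP = 0$ via $\|F_aP\|^2 = \|PF_a^*F_aP\| = 0$.

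With this identity in hand I expand
\[
\mcal{R}(\widetilde{\mcal{E}}(PXP)) = \sum_{i=1}^p \sum_{a,b=1}^m b_{ia}\overline{b_{ib}}\,\mcal{R}(F_a\,PXP\,F_b^*) = \Big(\sum_{i=1}^p\sum_{a=1}^m |b_{ia}|^2 d_{aa}\Big) PXP,
\]
and it only remains to show the scalar equals $1$. This is where the (tacit) assumption that $\widetilde{\mcal{E}}$ is CPTP enters: trace preservation gives $\sum_i \widetilde{E}_i^* \widetilde{E}_i = 1_n$, and compressing this by $P$ on both sides while using $PF_a^*F_bP = d_{aa}\delta_{ab}P$ yields $(\sum_{i,a}|b_{ia}|^2 d_{aa})P = P$, so the scalar is $1$ (as $P \neq 0$). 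I expect the only genuine obstacle to be bookkeeping: keeping the two Kraus representations and their index ranges ($p$ versus $m$) straight, and handling the degenerate case $d_{aa}=0$ so that one never actually divides by $\sqrt{d_{aa}}$. Everything else is linear algebra inherited directly from Theorem~\ref{krill1}.
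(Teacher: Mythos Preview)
Your proof is correct and follows essentially the same route as the paper: rewrite the $\widetilde{E}_i$ in the $F_a$-basis, use $V_i^*F_aP = \sqrt{d_{ii}}\,\delta_{ia}P$, and extract the scalar $\sum_{i,a}|b_{ia}|^2 d_{aa}=1$ from $\sum_i \widetilde{E}_i^*\widetilde{E}_i = 1$ by compressing with $P$. If anything, your treatment is more complete than the paper's, which silently drops the $Q(\cdot)Q$ term of $\mcal{R}$ in its computation, whereas you verify $QF_aP=0$ (and handle the degenerate case $d_{aa}=0$) explicitly.
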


\begin{proof}
We have $\tilde{E_i} \in \mathrm{span} \{E_1,\cdots, E_m\}$, $1 \leq i
\leq p$ and since $\tilde{\mcal{E}}$ is CPTP, they must satisfy $\sum
\tilde{E_i}^* \tilde{E_i}=1$. Recall that - with the $F_i$ as in the proof of Theorem \ref{krill1} -
we may deduce from Theorem \ref{choi2} that $\mathrm{span}\{E_1,\cdots,E_m\} = \mathrm{span}\{F_1,\cdots, F_m\}$
and $ PF_i^*F_jP= \delta_{i,j} d_{ii}P$, $ 1 \leq i, j \leq m$; so,
\begin{equation}\label{*}
V_i^* F_j P= \delta_{i,j} \sqrt{d_{ii}} P.
\end{equation}

Write $\tilde{E}_i=\sum_{l=1}^m \beta_{i,l} F_l$.  Then, $1=\sum_k
\tilde{E}_k^* \tilde{E}_k = \sum_k (\sum_{l=1}^m \bar{\beta}_{k,l}
F_l^*)(\sum_{j=1}^m \beta_{k,j} F_j)$. So,
\begin{eqnarray*}
P &=& P1P   =   P(\sum_k (\sum_{l=1}^m \beta_{k,l} F_l^*)(\sum_{j=1}^m \beta_{k,j} F_j))P\\
& = & \sum_k \sum_{l,j} \bar{\beta_{k,l}} \beta_{k,j} PF_l^* F_j P\\
& = &  \sum_k \sum_{j=1}^m |\beta_{k,j}|^2d_{jj}P.
\end{eqnarray*}
Hence,
\begin{align}\label{**}
\sum_k \sum_{j=1}^m |\beta_{k,j}|^2 d_{jj} = 1.
\end{align} Hence,
\begin{eqnarray*}
\mcal{R}(\tilde{\mcal{E}}(PXP))&=& \sum_{i,j} V_i^* \tilde{E}_jP X P \tilde{E}_j^*V_i\\
&=& \sum_{i,j,k,l} \overline{\beta_{lj}}\beta_{kj}V_i^*F_l P X P F_k^* V_i\\
&=& \sum_{i,j} |\beta_{ij}|^2 d_{ii} PXP \ \ \ (\text{by Eqn.}\ (\ref{*})\\
& =& PXP ~\forall X \in M_n\ \ \ \ \ (\text{by Eqn.}\ (\ref{**}).
 \end{eqnarray*}
\end{proof}

\subsection{Shor's Code : An Example}

We consider the `protected subspace' $V \subset \mathbb{C}^2 \otimes
\cdots \otimes \mathbb{C}^2 \text{ (9 copies)}$ with
$\mathrm{dim}(V)=2$ given by \index{Shor's Code}
\[
V = span \{|0_L \rangle, |1_L \rangle\},
\]
where $0_L= \frac{1}{2 \sqrt{2}}((|000 \rangle + |111 \rangle) \otimes
(|000\rangle + |111\rangle ) \otimes (|000 \rangle + |111 \rangle))$
and $1_L = \frac{1}{2 \sqrt{2}}((|000 \rangle - |111 \rangle) \otimes
((|000 \rangle - |111 \rangle) \otimes (|000 \rangle - |111
\rangle))$. (Notation: Fixing an orthonormal basis $\{|0\rangle, |1\rangle\}$ for
$\mbb{C}^2$, we write $|000 \rangle$ for $|0 \rangle \otimes |0
\rangle \otimes |0 \rangle$ and so on.)

We consider the Pauli basis of $\mathbb{C}^2 \otimes \cdots \otimes
\mathbb{C}^2 \text{ (9 copies)}$ constructed as follows:

Take the basis of $M_2$ (which is orthonormal with respect to the
  normalised trace-inner-product) consisting of $X = \begin{pmatrix} 0
  & 1 \\ 1 & 0 \end{pmatrix}, Y= \begin{pmatrix} 0 & i \\ -i &
  0 \end{pmatrix}, Z= \begin{pmatrix} 1 & 0 \\ 0 & -1 \end{pmatrix}$
  and $1_2$. (These are regarded as maps on $\mbb{C}^2$ with respect
  to the orthonormal basis $\{|0\rangle,|1\rangle\}$.

1-Pauli elements: For $i=1, \cdots,9$, these are the the $2^9 \times
2^9$ unitary self-adjoint matrices defined by
\begin{align*}
 & 1 = 1_2 \otimes 1_2 \otimes \cdots \otimes 1_2 \\ &X_i = 1_2 \otimes
  1_2 \otimes \cdots \otimes X \otimes \cdots \otimes 1_2\  (X \text{ at }
  i^{th} \mathrm{ position})\\ &Y_i = 1_2 \otimes 1_2 \otimes \cdots \otimes
  Y \otimes \cdots \otimes 1_2\  (Y \text{ at } i^{th} \mathrm{
    position})\\ &Z_i = 1_2 \otimes 1_2 \otimes \cdots \otimes Z \otimes
  \cdots \otimes 1_2\  (Z \text{ at } i^{th} \mathrm{ position}).
\end{align*}
Let us list the above $1$-Paulis \index{$1$-Paulis} as $U_1,\ldots, U_{28}$. Define
$\mcal{E}: M_{2^9} \rightarrow M_{2^9} $ by
$\mcal{E}(X)=\frac{1}{28}\sum_{i=1}^{28}U_iXU_i^*$. Then, it is easily seen
that $\mcal{E}$ is a CPTP map (being an average of $\ast$-automorphisms).

\begin{prop}
With this notation, we have that
\[V, X_1V,\ldots,X_9V,Y_1V,\ldots,Y_9V,Z_1V\] are
all mutually orthogonal and $Z_{3k+i\big|_V} = Z_{3k +j \big|_V}$ for $k = 0,1,2,$ and $0 \le i,j < 3$.
\end{prop}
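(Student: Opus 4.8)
The plan is to push every computation down into a single three-qubit block. Write the unnormalised vectors $\alpha := |000\rangle + |111\rangle$ and $\beta := |000\rangle - |111\rangle$ in $(\mathbb{C}^2)^{\otimes 3}$; they are orthogonal with $\|\alpha\|^2 = \|\beta\|^2 = 2$, and I set $\gamma_0 := \alpha$, $\gamma_1 := \beta$. Then $|0_L\rangle = \tfrac{1}{2\sqrt2}\,\alpha\otimes\alpha\otimes\alpha$ and $|1_L\rangle = \tfrac{1}{2\sqrt2}\,\beta\otimes\beta\otimes\beta$, so $\{|0_L\rangle, |1_L\rangle\}$ is orthonormal. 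The crucial point is that every Pauli element in the list, and every product of two of them, factorises across the three blocks as $W = W^{(0)}\otimes W^{(1)}\otimes W^{(2)}$ (up to a global phase), whence for $a,b\in\{0,1\}$ one has the product formula $\langle a_L | W | b_L\rangle = \tfrac18\prod_{k=0}^{2}\langle \gamma_a | W^{(k)} | \gamma_b\rangle$. To make a matrix element vanish it then suffices to kill a single block factor.

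\textbf{The equality $Z_{3k+i}|_V = Z_{3k+j}|_V$.} I would do this first, as it is immediate. Applying $Z$ at any one of the three qubits of a fixed block sends $\alpha\mapsto\pm\beta$ and $\beta\mapsto\pm\alpha$, with the sign \emph{independent of which qubit is chosen}: in each summand $|000\rangle$ (resp. $|111\rangle$) all three qubits occupy the same one-qubit state, so $Z$ acts by the same eigenvalue regardless of position. Hence $Z_{3k+i}$ and $Z_{3k+j}$ agree on $\alpha$ and on $\beta$ in block $k$, and tensoring with the identity on the other two blocks shows they agree on $|0_L\rangle$ and $|1_L\rangle$, i.e. on all of $V$. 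Note this is insensitive to whichever of $|0\rangle,|1\rangle$ is chosen as the $+1$-eigenvector of $Z$.

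\textbf{Mutual orthogonality.} Each Pauli $P$ in the list is self-adjoint and unitary, so $\{P|0_L\rangle, P|1_L\rangle\}$ is an orthonormal basis of $PV$ and $P^*Q = PQ$; thus it suffices to show $\langle a_L | EF | b_L\rangle = 0$ for all $a,b$ whenever $E,F$ are \emph{distinct} elements of the list. Each $W := EF$ is a Pauli string on at most two qubits and, as $E\ne F$, is not a scalar multiple of the identity. I split into two cases. If $W$ flips at least one qubit (has an $X$- or $Y$-factor), pick a block $k_0$ containing a flipped qubit; since $W$ has at most two Pauli factors in all, block $k_0$ carries one or two flips, so $W^{(k_0)}$ sends each of $|000\rangle,|111\rangle$ to a computational basis vector of Hamming weight in $\{1,2\}$, orthogonal to both $\alpha$ and $\beta$ (any extra $Z$-factors only supply signs and do not change the support). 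Then $\langle\gamma_a | W^{(k_0)} | \gamma_b\rangle = 0$, and the product formula finishes the case. The only flip-free products are $W\propto Z_1$ (from $\{I,Z_1\}$) and $W\propto Z_j$ (from a pair $\{X_j,Y_j\}$, using $XY\propto Z$); for such a single $Z_j$ the block containing $j$ contributes $\langle\gamma_a|Z_j|\gamma_b\rangle$, which vanishes when $a=b$ (since $Z_j$ interchanges $\alpha,\beta$ up to sign and $\langle\alpha|\beta\rangle=0$), while each spectator block contributes $\langle\gamma_a|\gamma_b\rangle$, which vanishes when $a\ne b$; either way $\langle a_L | Z_j | b_L\rangle = 0$.

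\textbf{Where the care lies.} The per-block facts — that an $X$ or $Y$ carries $\mathrm{span}\{|000\rangle,|111\rangle\}$ into the weight-$1,2$ subspace, and that $Z$ interchanges $\alpha$ and $\beta$ — are short once conventions are fixed; the real work is the combinatorial bookkeeping of the classification. The subtle point I expect to be the main obstacle is that flips can \emph{cancel}: $X_jY_j\propto Z_j$ must be recognised as falling into the flip-free (single-$Z$) case rather than the flip case, and one must confirm that no admissible distinct pair $\{E,F\}$ produces either a two-qubit pure-$Z$ string or the identity, and that the per-block flip count never exceeds two. Verifying that every distinct pair lands in exactly one of the two cases is what makes the argument complete; the remaining computations are immediate from the product formula.
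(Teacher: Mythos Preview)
Your proof is correct and complete. The paper itself gives no proof of this proposition --- it simply says ``Exercise'' --- so you have supplied a full argument where the authors left none. Your block-factorisation approach is the natural one: reducing every inner product to a product over the three three-qubit blocks, then showing that for each distinct pair $E,F$ from the list at least one block factor vanishes. The case split (flip versus pure-$Z$) is correctly identified, and you have handled the one genuine subtlety, namely that $X_jY_j\propto Z_j$ must be treated as a single-$Z$ string rather than a two-flip string. The verification that no distinct pair produces the identity or a two-qubit $Z$-string is straightforward (the list contains only one $Z$-operator, $Z_1$, so any pure-$Z$ product must come from either $\{I,Z_1\}$ or $\{X_j,Y_j\}$, both yielding a single $Z$), and your argument covers it.
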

\begin{proof}  Exercise.
\end{proof}

$\bullet$ Notice that $PU_i^*U_jP \in \{0,P\} ~\forall 1 \leq i,j \leq
28$ and hence Theorem \ref{krill1} ensures the existence of a recovery
operator $\mcal{R}$ satisfying $\mcal{R} \circ \mcal{E} (PXP) = PXP
~\forall X \in M_n$.

By Theorem \ref{krill2}
above, for this protected space $V$ and error map $\mcal{E}$, we have
$\mcal{R}(\tilde{\mcal{E}}(PXP))=PXP$ for any error map
$\tilde{\mcal{E}} : M_n \to M_n$ given by $ \tilde{\mcal{E}}(X)=\sum
\tilde{E}_iX\tilde{E}_i^*$, where $\tilde{E}_i \in
\mathrm{span}\{1$-Pauli basis$\}$.

$\bullet$ The 1-Paulis contain in their span any operator of the form
$1 \otimes \cdots \otimes 1 \otimes A \otimes 1 \cdots \otimes 1$. So
while the Shor code may not fix all errors, it does fix all errors in
${span}\{1$-Paulis$\}$. Thus, if each term in the error operator acts
on only one of the qubits, then this subspace will be protected from
this error and the decoding map will recover the original encoded
qubit.

\begin{rem}
See the work of David Kribs, et al for various generalizations of the
Knill-Laflamme theory, including infinite dimensional versions of this
theory.

A more refined version of the Knill-Laflamme theory than we have
stated is to consider protected {\it operator subsystems} of the
matrices and encodings of states into such subsystems.

In the next lecture we will introduce this concept.
\end{rem}


\section{Matrix ordered systems and Operator systems}
Recall that a typical quantum channel \index{quantum channel} on $M_n$
looks like $\mcal{E}(X) = \sum_{i = 1 }^r E_i X E_i^*$ for some
matrices $E_i \in M_n$, which are not unique. However, given any such
representation of $\mcal{E}$, the space $\mcal{S}:= \text{span }\{E_j
: 1\leq j \leq r \}$ remains the same. Moreover, the space $\mcal{S}$
contains $1$ and is closed under taking adjoints.

Duan, Severini and Winter have argued that various concepts of quantum
capacities of the channel $\mcal{E}$ really only depend on this
subspace $\mcal{S},$ i.e., if two channels generate the same subspace
then their capacities should be the same. Thus, capacities are
naturally functions of such subspaces.

Moreover, in the extensions of the Knill-Laflamme theory, it is
exactly such subspace that are the protected subspaces, i.e., the
subspaces that one wants to encode states into so that they can be
recovered after the actions of some error operators.

A subspace of $M_n$(or more generally, $B(\mcal{H})$) that contains
$1$ and is closed under the taking of adjoints is called an {\it
  operator system.} \index{operator system} These are also the natural
domains and ranges of completely positive maps.

Thus, the concept of an operator system plays an important role in the
study of completely positive maps and, in particular, in QIT. For this
reason we want to introduce their general theory and axiomatic
definitions.

Finally, when we study $M_n= B(\mbb{C}^n)$ we know that the positive
operators of rank one, represent the states of the underlying space
and that positive operators of trace one represent the mixed
states. But when we focus on a more general operator system, what
exactly is it the states of?  One viewpoint is to just regard it as a
restricted family of states of the underlying space.  But this is very
problematical since many operator subsystems of $M_n$ have no rank one
positives and others that have plenty of rank one positives, still
have trace one positives that cannot be written as sums of rank ones!
The correct answer to what is an operator system the states of
involves introducing the concept of {\it (ordered) duals.}

Motivated by these issues and some structural properties of
$M_n$, we introduce a bunch of abstract definitions.

 \begin{defn}
A $\ast$-vector space \index{$\ast$-vector space} is a complex vector space $\mcal{V}$ with a map $\ast :
\mcal{V} \to \mcal{V}$ satisfying
\begin{enumerate}
\item $(v + w)^* = v^* + w^*$;
\item $(\lambda v)^* = \bar{\lambda} v^*$; and
\item $(v^*)^* = v$
\end{enumerate}
for all $v, w \in \mcal{V}$, $\lambda \in \mbb{C}$. The self adjoint
elements of such a space is denoted by $$\mcal{V}_h = \{ v \in
\mcal{V}: v = v^*\}.$$
\end{defn}
$\bullet$ As usual, we have a {\em Cartesian decomposition}
 \index{Cartesian decomposition} in a $\ast$-vector space $\mcal{V}$
 given by $v = \frac{v + v^*}{2} + i \frac{v-v^*}{2i}$ for all $v \in
 \mcal{V}$ and we call these the {\it real} and {\it imaginary} parts
 of $v.$

$\bullet$ Given a $\ast$-vector space $\mcal{V}$ and $n \geq 1$, its
 $n$th-amplification $M_n(\mcal{V}),$ which is just the set of $n
 \times n$ matrices with entries from $\mcal{V}$ inherits a canonical
 $\ast$-vector space structure and is naturally identified with $M_n
 \otimes \mcal{V}$.

\begin{defn}
A matrix ordering \index{matrix ordering} on a $\ast$-vector space
$\mcal{V}$ is a collection $\mcal{C}_n \subset M_n(\mcal{V})_h$, $ n
\geq 1$ satisfying:
\begin{enumerate}
\item $\mcal{C}_n$ is a cone for all $n \geq 1$;
\item $\mcal{C}_n \cap (-\mcal{C}_n) = (0)$ for all $ n \geq 1$; and
\item $A P A^* \in \mcal{C}_{k}$ for all $A \in M_{k, n}, P \in
  \mcal{C}_n$, $k,n \geq 1$.
\end{enumerate}
A $\ast$-vector space $\mcal{V}$ with a matrix ordering as above is
called a matrix ordered space.
\end{defn}

\begin{rem}
Some authors also add the following axiom in the definition of a
matrix ordering:

$ (\tilde{4})\quad \mcal{C}_n - \mcal{C}_n = M_n(V)_h. $

\noindent However, we will abstain from its use because of reasons
that will get clear while discussing `dual of an operator system'.
\end{rem}
\begin{exer}
Let $P \in \mcal{C}_n$ and $Q \in \mcal{C}_k$. Then
$\left[ \begin{array}{cc} P & 0 \\ 0 & Q \end{array}\right] \in \mcal{C}_{n+k}$.
\end{exer}

\begin{exam}
\begin{enumerate}\label{example}
\item $\mcal{V} = B(H)$ with usual adjoint structure and $\mcal{C}_n :=
  M_n(B(H))^+= B(H \otimes \mbb{C}^n)^+$, $n \geq 1$ is matrix ordered.
\item Any subspace $\mcal{V} \subset B(H)$ such that $\mcal{V}$ is
  closed under taking adjoints provides $\mcal{V}$ with the natural induced
  matrix ordering $\mcal{C}_n := M_n(B(H))^+ \cap M_n(\mcal{V})$, $n
  \geq 1$.
 \end{enumerate}
\end{exam}

\begin{defn}\label{def:OS}
An operator system \index{operator system} is a subspace $\mcal{S}
\subset B(H)$ that is closed under taking adjoints and contains the
unit $e_{\mcal{S}}:=id_H$, for some Hilbert space $H$, together with
the matrix ordering given in Example \ref{example}(2).
\end{defn}

$\bullet$ Usually, whenever the matrix ordering is clear from the
context, we simply write $M_n(\mcal{V})^+$ for $\mcal{C}_n$, $n \geq 1$.

\begin{defn}
Given matrix ordered spaces $\mcal{V}$ and $\mcal{W}$, a linear map
$\varphi : \mcal{V} \to \mcal{W}$ is said to be $n$-positive if its
$n$th-amplication $\varphi^{(n)}: M_n(\mcal{V}) \to M_n(\mcal{W})$,
$[v_{ij}] \mapsto [\varphi(v_{ij})]$ is positive, i.e., $\varphi^{(n)}
(M_n(\mcal{V})^+) \subset M_n(\mcal{W})^+$. $\varphi$ is said to be
completely positive (in short, CP) if $\varphi$ is $n$-positive for all $n \geq 1$.
\end{defn}

$\bullet$ Clearly, if we include Axiom $(\tilde{4})$ in the axioms of
matrix ordering, then every CP map is $\ast$-preserving.

\begin{defn}
Two matrix ordered spaces $\mcal{V}$ and $\mcal{W}$ are said to be
completely order isomorphic \index{completely! order isomorphic} if
there is a completely positive linear isomorphism $\varphi : \mcal{V}
\to \mcal{W}$ such that $\varphi^{-1}$ is also completely positive.
\end{defn}

Given two operator systems $\mcal{S}_1, \mcal{S}_2$ on possibly
different Hilbert spaces, we identify them as the ``same'' operator
system when they are completely order isomorphic via a unital complete
order isomorphism.

\subsection{Duals of Matrix ordered spaces}
Let $\mcal{V}$ be a matrix ordered space. Let $\mcal{V}^d$ be the
space of linear functionals on the vector space $\mcal{V}$.

{\em $\ast$-structure:} For each $f \in \mcal{V}^d$ define $f^* \in
\mcal{V}^d$ by $f^*(v) = \overline{f(v^*)},\ v \in \mcal{V}$. This
makes $\mcal{V}^d$ into a $\ast$-vector space.

{\em Matrix ordering:} Given a matrix of linear functionals $[f_{ij}]
\in M_n(\mcal{V}^d)$, identify it with the map $\Phi: \mcal{V} \to
M_n$ given by $\Phi (v) = [f_{ij}(v)]$, $v \in \mcal{V}$. Let
\[
\mcal{C}_n = \{[f_{ij}] \in M_n(\mcal{V}^d) : \Phi: \mcal{V} \to M_n\   is\
  CP\}.
\]
Then, $\mcal{V}^d$ together with above cones forms a matrix ordered
space.

This matrix ordered space is what is meant by the {\it matrix-ordered
  dual} \index{matrix ordered dual} of $\mcal{V}.$

$\bullet$ There are many other ways of making $\mcal{V}^d$ into a
matrix ordered space. However, the above structure has better
compatibility with respect to some important operations on matrix ordered
spaces.

\begin{rem}
In general, if we require $\mcal{V}$ to also satisfy axiom $(\tilde{4})$ in the definition of a
matrix ordering, then it can still be the case that $\mcal{V}^d$ does not satisfy $(\tilde{4}).$
\end{rem}

An immediate compatibility of the above dual structure is seen in the
following:
\begin{prop}
Let $\varphi : \mcal{V} \to \mcal{W}$ be a CP map between two matrix
ordered spaces. Then the usual dual map \index{adjoint! CP map} $\varphi^d :
\mcal{W}^d \to \mcal{V}^d$ is also CP.
\end{prop}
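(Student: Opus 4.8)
The plan is to unwind the definition of the matrix-ordered dual structure and verify the complete positivity of $\varphi^d$ directly. Recall that for a matrix of functionals $[g_{ij}] \in M_n(\mcal{W}^d)$, positivity in the dual cone $\mcal{C}_n^{\mcal{W}^d}$ means precisely that the associated map $\Psi: \mcal{W} \to M_n$ given by $\Psi(w) = [g_{ij}(w)]$ is completely positive. So I would begin by fixing an arbitrary $n \geq 1$ and an element $[g_{ij}] \in M_n(\mcal{W}^d)^+$; the goal is to show that $[\varphi^d(g_{ij})] = [g_{ij} \circ \varphi] \in M_n(\mcal{V}^d)^+$, i.e., that the map $\Phi: \mcal{V} \to M_n$ defined by $\Phi(v) = [(g_{ij} \circ \varphi)(v)] = [g_{ij}(\varphi(v))]$ is completely positive.

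The key observation is that $\Phi$ factors as a composition: $\Phi = \Psi \circ \varphi$, where $\Psi: \mcal{W} \to M_n$ is the CP map witnessing that $[g_{ij}] \in M_n(\mcal{W}^d)^+$. First I would make this factorization explicit by noting that for each $v \in \mcal{V}$,
\[
\Phi(v) = [g_{ij}(\varphi(v))] = \Psi(\varphi(v)) = (\Psi \circ \varphi)(v).
\]
Then the whole matter reduces to the standard fact that the composition of two completely positive maps is completely positive. Since $\varphi$ is CP by hypothesis and $\Psi$ is CP by the assumption that $[g_{ij}]$ lies in the dual positive cone, the composite $\Psi \circ \varphi$ is CP, and hence $[\varphi^d(g_{ij})] \in M_n(\mcal{V}^d)^+$, as required.

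The only step needing a brief justification is that composition preserves complete positivity, which itself follows immediately from the definitions: for any $m$, the $m$th amplification of a composite satisfies $(\Psi \circ \varphi)^{(m)} = \Psi^{(m)} \circ \varphi^{(m)}$, and since $\varphi^{(m)}$ maps $M_m(\mcal{V})^+$ into $M_m(\mcal{W})^+$ and $\Psi^{(m)}$ maps $M_m(\mcal{W})^+$ into $M_m(M_n)^+$, the composite carries $M_m(\mcal{V})^+$ into $M_m(M_n)^+$. I do not anticipate a genuine obstacle here; the result is essentially a formal consequence of how the dual matrix ordering was defined, namely so that positivity at level $n$ in $\mcal{W}^d$ is \emph{synonymous} with complete positivity of the associated $M_n$-valued map. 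The content of the proposition is really that this definition of the dual cone was the right one — it is tailored precisely to make duality functorial with respect to CP maps, and the verification amounts to checking that the identity $\varphi^d(g) = g \circ \varphi$ interacts correctly with that definition.
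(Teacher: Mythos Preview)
Your proof is correct and follows exactly the same approach as the paper: fix $[g_{ij}]\in M_n(\mcal{W}^d)^+$, observe that the associated map $\mcal{V}\to M_n$ given by $v\mapsto [g_{ij}(\varphi(v))]$ is the composite of the CP maps $\varphi:\mcal{V}\to\mcal{W}$ and $[g_{ij}]:\mcal{W}\to M_n$, and conclude by closure of CP maps under composition. The paper's proof is just a more compressed version of what you wrote.
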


\begin{proof}
Let $n \geq 1$ and $[f_{ij}] \in M_n(\mcal{W}^d)^+$. Then
$(\varphi^d)^{(n)} ([f_{ij}]) = [f_{ij}\circ \varphi]$. Now,
$[f_{ij}\circ \varphi] :\mcal{V} \to M_n$ is given by $v \mapsto
[f_{ij}\circ \varphi(v)]$, which being a composite of the CP maps
$\mcal{V}\stackrel{\varphi}{\to} \mcal{W} \stackrel{[f_{ij}]}{\to}
M_n$ is again CP. Thus, $(\varphi^d)^{(n)} ([f_{ij}]) \in
M_n(\mcal{V}^d)^+$. In particular, $\varphi^d$ is CP.
\end{proof}

Consider the matrix algebra operator system $\mcal{V} =
M_p=\mcal{L}(\mbb{C}^p)$. Let $\{ E_{ij} : 1 \leq i, j \leq p\}$ be
the system of matrix units for $\mcal{V}$. Via this basis, we can
infact identify $\mcal{V}^d$ with $\mcal{V}$ itself. Formally, let
$\{\delta_{ij}\} \subset \mcal{V}^d$ be the dual basis for
$\{E_{ij}\}$.  Given $A = [a_{ij}] \in \mcal{V}$, define $f_A \in
\mcal{V}^d$ by $f_A = \sum_{ij}a_{ij} \delta_{ij}$. Thus, $a_{ij} =
f_A(E_{ij})$ and we see that $A$ is the usual ``density'' matrix of
the linear functional $f_A.$ Note that, for $B = [b_{ij}] \in
\mcal{V}$, we have $f_A(B) = \sum_{ij} a_{ij}b_{ij} =
tr_p(A^tB)$. Clearly, $f_A (B) \geq 0$ for all $B \geq 0$ if and only
if $A \geq 0$. Define $\Gamma : \mcal{V} \to \mcal{V}^d$ by $\Gamma(A)
= f_A$. Then, we have the following:

\begin{thm}\cite[Theorem~6.2]{PTT09}
The map $\Gamma : M_p(\mbb{C}) \to M_p(\mbb{C})^d$ as constructed
above is a complete order isomorphism.
\end{thm}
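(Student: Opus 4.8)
The plan is to unwind both the definition of the matrix ordering on the dual $\mcal{V}^d$ and the content of Choi's first theorem (Theorem \ref{choi1}), and to observe that these match up under $\Gamma$ after a harmless shuffling of tensor factors. Since $\{\delta_{ij}\}$ is by construction the dual basis to $\{E_{ij}\}$, the assignment $\Gamma(A) = f_A$ is visibly a linear bijection of $M_p$ onto $M_p^d$; the whole point is therefore to check that $\Gamma$ and $\Gamma^{-1}$ are completely positive, i.e. that for each $n$ the amplification $\Gamma^{(n)}$ and its inverse carry positive elements to positive elements.

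First I would fix $n \geq 1$ and an element $[A_{kl}]_{k,l=1}^n \in M_n(M_p)$, so that $\Gamma^{(n)}([A_{kl}]) = [f_{A_{kl}}] \in M_n(\mcal{V}^d)$. By the definition of the dual ordering, this matrix lies in $M_n(\mcal{V}^d)^+$ precisely when the associated map $\Phi : M_p \to M_n$, $\Phi(B) = [f_{A_{kl}}(B)]_{k,l} = [\mathrm{tr}_p(A_{kl}^t B)]_{k,l}$, is completely positive. The key computation is the elementary identity $\mathrm{tr}_p(A^t E_{ij}) = A_{ij}$ (the $(i,j)$ entry of $A$), which gives $\Phi(E_{ij})_{kl} = (A_{kl})_{ij}$. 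It is worth remarking here that the transpose in the definition of $f_A$ is exactly what is needed: without it one would obtain $(A_{kl})_{ji}$, hence the transpose map, which is positive but not completely positive, and the naive pairing would fail.

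Next I would invoke Theorem \ref{choi1}, which says $\Phi$ is completely positive if and only if its Choi matrix $P_\Phi = [\Phi(E_{ij})]_{i,j=1}^p \in M_p(M_n)$ is positive. By the computation above, the two $np \times np$ matrices $P_\Phi \in M_p(M_n)$ and $[A_{kl}] \in M_n(M_p)$ carry exactly the same scalar entries $\{(A_{kl})_{ij}\}$, merely organized with respect to the two orderings $\{e_i \otimes e_k\}$ and $\{e_k \otimes e_i\}$ of the basis of $\mbb{C}^p \otimes \mbb{C}^n$. Consequently $P_\Phi = \Sigma [A_{kl}] \Sigma^*$, where $\Sigma$ is the (permutation, hence unitary) flip of the two tensor factors, so that $P_\Phi \geq 0 \Leftrightarrow [A_{kl}] \geq 0$. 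Stringing these equivalences together gives $\Gamma^{(n)}([A_{kl}]) \in M_n(\mcal{V}^d)^+ \Leftrightarrow \Phi \text{ is CP} \Leftrightarrow P_\Phi \geq 0 \Leftrightarrow [A_{kl}] \in M_n(M_p)^+$; since $n$ was arbitrary, both $\Gamma$ and $\Gamma^{-1}$ are completely positive and $\Gamma$ is a complete order isomorphism.

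The routine parts are the trace identity and the tensor-flip isomorphism; the one step demanding genuine care — and the main potential pitfall — is the index bookkeeping in identifying $P_\Phi$ with $\Sigma[A_{kl}]\Sigma^*$, together with tracking the transpose so as to land on authentic complete positivity rather than on the transpose map, which is positive but not CP.
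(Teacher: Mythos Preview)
Your argument is correct. The paper itself does not supply a proof of this theorem; it merely cites \cite[Theorem~6.2]{PTT09} and moves on, so there is no in-paper proof to compare against. That said, your approach is exactly the standard one and is essentially what appears in the cited reference: identify the Choi matrix of the associated map $\Phi$ with the original block matrix $[A_{kl}]$ up to the tensor flip, and invoke Theorem~\ref{choi1}. Your remark about the transpose being essential (so that one lands on genuine complete positivity rather than on the transpose map) is apt and is precisely why the pairing is defined via $f_A(B) = \mathrm{tr}_p(A^t B)$ rather than $\mathrm{tr}_p(AB)$.
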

A natual question to ask at this stage would be whether any other
basis for $M_p$ works equally well or not? And, quite surprisingly,
the answer is not very clear!

Let $\mcal{B} = \{ B_{rs} : 1 \leq r,s \leq p\}$ be any other basis
for $M_p$ and $\{ \eta_{rs} : 1\leq r, s \leq p\} $ its dual
basis. Define $\Gamma_{\mcal{B}} : M_p \to (M_p)^d$ by
$\Gamma_{\mcal{B}}(A) = \sum_{rs} a_{rs}\eta_{rs}$, where $A =
\sum_{rs}a_{rs}B_{rs}$. Then, it is not difficult to find a basis $\mcal{B}$
such that $\Gamma_{\mcal{B}}$ is not a complete order
isormorphism. However, it will be interesting to answer the following:

\begin{ques}
\begin{enumerate}
\item What are the necessary and sufficient conditions on the basis
  $\mcal{B}$ such that the map $\Gamma_{\mcal{B}}$ as above is a complete
  order isomorphism?
\item If the map $\Gamma_{\mcal{B}}$ is positive, then is it automatically
  completely positive?
\end{enumerate}
\end{ques}

See \cite{PS2} for some work on this problem.

 Note that, under the above complete order isomorphism $\Gamma$, we
 have $\Gamma (1) = tr_p$. We will soon see that $1$ and $tr_p$ have
 some interesting significane on the structures of $M_p$ and
 $(M_p)^d$, respectively.

\subsection{Choi-Effros Theorem}

Arveson introduced the concept of an operator system in
1969. Choi-Effros were the first to formally axiomatize the
theory. Their axiomatic characterization follows.

\begin{thm}\cite[Theorem $4.4$]{CE77} \index{Theorem! Choi-Effros}
Let $\mcal{V}$ be a matrix ordered space with an element $e\in
\mcal{V}_h$ satisfying:
\begin{enumerate}
\item For each $n \geq 1$ and $H \in M_n(\mcal{V})_h$, there exists an
  $ r > 0$ such that $r\, \mathrm{diag}(e, e, \ldots, e)_{n\times n} + H \in
  M_n(\mcal{V})^+$. (Such an $e$ is called a matrix order unit.)
\item If $H \in M_n(\mcal{V})_h$ satisfies $r\, \mathrm{diag}(e, e,
  \ldots, e)_{n\times n} + H \in M_n(\mcal{V})^+$ for all $r > 0$,
  then $H \in M_n(\mcal{V})^+$. (Such a matrix order unit is called an
  Archimedean matrix order unit.) \index{Archimedean matrix order unit}
\end{enumerate}
Then there exists a Hilbert space $H$ and a CP map $\varphi : \mcal{V}
\to B(H)$ such that $\varphi (e)= id_H$ and $\varphi$ is a complete
order isomorphism onto its range.
\end{thm}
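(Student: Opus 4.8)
The plan is to mimic the Gelfand--Naimark/GNS strategy invoked earlier for Ruan's theorem: produce a large supply of completely positive maps from $\mcal{V}$ into matrix algebras and form their direct sum. The key simplification over the GNS setting is that a CP map $\varphi:\mcal{V}\to M_k$ is \emph{already} a CP map into $B(\mbb{C}^k)$, so no auxiliary Hilbert-space construction is needed. Concretely, let $\mscr{F}$ be the family of all \emph{unital} CP maps $\varphi:\mcal{V}\to M_{k}$ (those with $\varphi(e)=I_k$), over all $k\ge 1$, put $H=\bigoplus_{\varphi\in\mscr{F}}\mbb{C}^{k_\varphi}$ and define $\Psi=\bigoplus_{\varphi\in\mscr{F}}\varphi:\mcal{V}\to B(H)$. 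Then $\Psi$ is CP and $\Psi(e)=id_H$, and the theorem reduces to the single assertion that $\Psi$ \emph{reflects} positivity, i.e. for every $n$ and every $P\in M_n(\mcal{V})_h$ one has $P\in\mcal{C}_n$ whenever $\Psi^{(n)}(P)\ge 0$. Injectivity then follows, since a self-adjoint $v$ with $\Psi(v)=0$ is annihilated by every state and hence forced to be $0$, the general case reducing via Cartesian decomposition.

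The first ingredient is a separation theorem at the scalar level. Axioms (1)--(2) say precisely that $(\mcal{V}_h,\mcal{C}_1,e)$ is an \emph{Archimedean order unit space}: $e$ dominates every self-adjoint element up to scaling, and $\mcal{C}_1$ is closed in the order-unit seminorm $\|v\|:=\inf\{r>0: -re\le v\le re\}$. The Archimedean axiom (2) is exactly what makes $\mcal{C}_1$ closed in this seminorm, so for $v\notin\mcal{C}_1$ a Hahn--Banach separation, followed by normalization against $e$, yields a state $s$ (positive, with $s(e)=1$) satisfying $s(v)<0$. Thus $v\in\mcal{C}_1$ iff $s(v)\ge 0$ for all states $s$. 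Here I would first check that $\|\cdot\|$ is genuinely a norm (using $\mcal{C}_1\cap(-\mcal{C}_1)=(0)$) and that $e$ is an interior point of $\mcal{C}_1$, so that the separation applies.

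The second, and main, ingredient is to promote this to every matrix level. The plan is to verify that for each $n$ the amplification $M_n(\mcal{V})$, with cones arising from the canonical identification $M_m(M_n(\mcal{V}))\cong M_{mn}(\mcal{V})$ and with order unit $e_n:=\mathrm{diag}(e,\dots,e)$, \emph{again} satisfies the Choi--Effros axioms; the matrix-ordering axiom (3) applied to the compression matrices $A\in M_{k,n}$ is what delivers $e_n$ as an Archimedean matrix order unit. Granting this, the scalar separation above, applied inside $M_n(\mcal{V})$, separates any $P\notin\mcal{C}_n$ by a state $s$ on $M_n(\mcal{V})$ with $s(P)<0$. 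Finally I would invoke the standard correspondence between positive linear functionals on $M_n(\mcal{V})$ and completely positive maps $\mcal{V}\to M_n$ (the ``Choi matrix'' dictionary, which is exactly the matrix-ordered dual construction discussed above): the functional $s$ produces a CP map $\varphi_s:\mcal{V}\to M_n$ with $\varphi_s^{(n)}(P)\not\ge 0$. After rescaling so that $\varphi_s(e)=I_n$, this $\varphi_s$ lies in $\mscr{F}$, hence appears as a direct summand of $\Psi$, giving $\Psi^{(n)}(P)\not\ge 0$ --- the reverse implication, completing the proof.

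The hard part is this second ingredient: establishing that $M_n(\mcal{V})$ inherits the full Archimedean-matrix-order-unit structure, and that the functional-to-CP-map correspondence is an \emph{order} bijection carrying $e_n$ to unitality. Both are essentially bookkeeping with the matrix-ordering axioms, but they are where positivity at \emph{all} levels simultaneously gets encoded, and the unitality normalization requires care in the case where $s$ annihilates part of $e_n$ --- a degeneracy one dispatches by the Archimedean axiom, perturbing $\varphi_s(e)$ to $\varphi_s(e)+\epsilon I$ and letting $\epsilon\downarrow 0$.
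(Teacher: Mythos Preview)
The paper does not actually prove this theorem; it only states it with a citation to \cite[Theorem~4.4]{CE77} and then remarks that the converse is clear. So there is no proof in the paper to compare against.

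Your outline is the standard Choi--Effros argument and is correct. The crucial technical point you identify --- that a positive functional $s$ on $(M_n(\mcal{V}),\mcal{C}_n)$ corresponds to a \emph{completely} positive map $\varphi_s:\mcal{V}\to M_n$, not merely a positive one --- is exactly right, and it is axiom~(3) of the matrix ordering (compatibility under $A(\cdot)A^*$) that does the work: for $[v_{kl}]\in\mcal{C}_m$ and any $\xi\in\mbb{C}^{mn}$, the scalar $\langle\varphi_s^{(m)}([v_{kl}])\xi,\xi\rangle$ equals $s(A[v_{kl}]A^*)$ for the matrix $A\in M_{n,m}$ with $A_{ik}=\overline{\xi_{ki}}$, and $A[v_{kl}]A^*\in\mcal{C}_n$ by axiom~(3). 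Your identification of $s(P)$ with $\langle\varphi_s^{(n)}(P)\eta,\eta\rangle$ for the ``maximally entangled'' vector $\eta=\sum_k e_k\otimes e_k$ is also correct, so $s(P)<0$ indeed forces $\varphi_s^{(n)}(P)\not\ge 0$.

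The unitality repair you sketch works: add $\epsilon\tau(\cdot)I_n$ for any fixed state $\tau$ to make $\varphi_s(e)$ strictly positive, then conjugate by $\varphi_s(e)^{-1/2}$; since conjugation by an invertible matrix preserves non-positivity, the witness survives for small $\epsilon$. Injectivity of $\Psi$ follows as you say, via $\mcal{C}_1\cap(-\mcal{C}_1)=(0)$ once positivity-reflection is established.
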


$\bullet$ The converse of this theorem is clear: Every operator system
clearly is matrix ordered with $e_{\mcal{S}}$ as an Archimedean matrix
order unit. And, the above theorem of Choi and Effros allows us to
realize every Archimedean matrix ordered space with an operator
system. We will thus use the terminology operator system for an
Archimedean matrix ordered space and vice versa.

\begin{thm}\cite[$\S 4$]{CE77}\label{unit}
Let $\mcal{S} \subset B(H)$ be a finite dimensional operator
system. Then,
\begin{enumerate}
\item there exists an $f \in (\mcal{S}^d)^+$ such that $f(p) > 0$ for
  all $ p \in \mcal{S}^+ \setminus \{ 0\}$; and
\item any such $f$ is an Archimedean matrix order unit for the matrix
  ordered space $\mcal{S}^d$.
\end{enumerate}
\end{thm}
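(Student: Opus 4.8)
The plan is to prove the two assertions separately, reducing the second to a statement about positive functionals on the amplified operator system $M_n(\mcal{S})$, where complete positivity collapses into a single positivity condition. Throughout I will use that for functionals on $\mcal{S}$ positivity and complete positivity coincide: if $f\in\mcal{S}^d$ satisfies $f(\mcal{S}^+)\subset[0,\infty)$, then for $[a_{ij}]\in M_m(\mcal{S})^+$ and $h=(h_1,\dots,h_m)\in\mbb{C}^m$ one computes $\langle f_m([a_{ij}])h,h\rangle = f\big(\sum_{ij}\bar h_i h_j a_{ij}\big)=f(h^*[a_{ij}]h)\ge 0$, the argument lying in $\mcal{S}^+$ by axiom (3) of a matrix ordering; hence $f\in(\mcal{S}^d)^+$. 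So for part (1) it suffices to produce a positive functional strictly positive on $\mcal{S}^+\setminus\{0\}$. I would do this by compactness: the set $B=\{p\in\mcal{S}^+:\|p\|=1\}$ is compact (as $\mcal{S}$ is finite dimensional and $\mcal{S}^+$ closed) and misses $0$; for each $p\in B$ choose a unit vector $\xi_p$ with $\langle p\xi_p,\xi_p\rangle>0$ and set $\omega_p(\cdot)=\langle\,\cdot\,\xi_p,\xi_p\rangle|_{\mcal{S}}$, a positive functional. The open sets $\{q\in B:\omega_p(q)>0\}$ cover $B$, so finitely many $\omega_{p_1},\dots,\omega_{p_k}$ suffice, and $f=\sum_l\omega_{p_l}$ is positive with $f>0$ on $B$, hence on all of $\mcal{S}^+\setminus\{0\}$ by homogeneity.

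For part (2), fix such an $f$; being positive it is self-adjoint, so $f\in(\mcal{S}^d)_h$. I would invoke the standard Arveson–Paulsen identification (cf. \cite{PTT09}) of $M_n(\mcal{S}^d)$ with $(M_n(\mcal{S}))^d$, under which $[f_{ij}]$ corresponds to the functional $[v_{kl}]\mapsto\sum_{kl}f_{kl}(v_{kl})$ on $\mcal{T}:=M_n(\mcal{S})$, and $[f_{ij}]\in M_n(\mcal{S}^d)^+$ (i.e.\ $v\mapsto[f_{ij}(v)]$ is CP) if and only if that functional is positive on $\mcal{T}$. Under this correspondence $\mathrm{diag}(f,\dots,f)$ maps to $g:=f\circ\mathrm{tr}_n$, where $\mathrm{tr}_n([v_{kl}])=\sum_k v_{kk}$, and an arbitrary $H=[h_{ij}]\in M_n(\mcal{S}^d)_h$ maps to a self-adjoint functional $\psi$ (real on $\mcal{T}_h$). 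Thus both the matrix-order-unit and Archimedean conditions for $f$ become assertions about the single functional $g$ and the arbitrary self-adjoint $\psi$ on $\mcal{T}$.

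The crux is that $g=f\circ\mathrm{tr}_n$ is again a \emph{faithful} positive functional on $\mcal{T}$. Positivity: for $P=[v_{kl}]\in\mcal{T}^+$ each $v_{kk}$ is a compression of $P$, hence in $\mcal{S}^+$, so $\mathrm{tr}_n(P)\in\mcal{S}^+$ and $g(P)\ge 0$. Faithfulness: if $g(P)=0$ with $P\ge 0$, then $\sum_k v_{kk}=0$ forces each $v_{kk}=0$ by pointedness of $\mcal{S}^+$, and then Lemma \ref{posbd}(2) applied to each $2\times 2$ compression $\left(\begin{smallmatrix} v_{kk}&v_{kl}\\ v_{lk}&v_{ll}\end{smallmatrix}\right)\ge0$ gives $v_{kl}=0$, whence $P=0$. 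Given faithfulness, the base $B'=\{P\in\mcal{T}^+:g(P)=1\}$ is compact (faithfulness bounds $\|P\|$ on $B'$), so the real functional $\psi$ attains a finite minimum $-r_0$ there; for $r\ge r_0$ and any $0\neq P\in\mcal{T}^+$, writing $\hat P=P/g(P)\in B'$ gives $(rg+\psi)(P)=g(P)\big(r+\psi(\hat P)\big)\ge 0$, so $rg+\psi\ge 0$. Translating back, $r\,\mathrm{diag}(f,\dots,f)+H\in M_n(\mcal{S}^d)^+$, which is the matrix-order-unit property.

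The Archimedean condition is then immediate from closedness: $(M_n(\mcal{S}))^{d,+}$ is a closed cone in a finite-dimensional space, so $rg+\psi\ge 0$ for all $r>0$ yields $\psi=\lim_{r\to 0^+}(rg+\psi)\ge 0$, i.e.\ $H\in M_n(\mcal{S}^d)^+$. I expect the main obstacle to be the reduction itself: complete positivity of $r\,\mathrm{diag}(f)+H$ is an infinite family of conditions (one per amplification level), and the entire force of the argument is that passing to functionals on $\mcal{T}=M_n(\mcal{S})$ replaces this family by the single positivity condition, provided the correspondence and the faithfulness of the induced $g$ are pinned down correctly; once they are, the order-unit and Archimedean properties fall out from routine compactness and closedness.
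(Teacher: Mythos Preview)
The paper does not prove this theorem; it is stated with a citation to \cite{CE77} and no argument is given. Your proof is correct and self-contained: part (1) via the compactness/covering argument is standard, and for part (2) the key reduction---identifying $M_n(\mcal{S}^d)^+$ with positive functionals on $M_n(\mcal{S})$ so that $\mathrm{diag}(f,\dots,f)$ becomes the faithful functional $f\circ\mathrm{tr}_n$---is exactly the right move, after which the order-unit and Archimedean properties follow from compactness of the base and closedness of the cone as you indicate. The faithfulness step (diagonal entries vanish by pointedness of $\mcal{S}^+$, then off-diagonals vanish by Lemma~\ref{posbd}) and the identification $[f_{ij}]\in M_n(\mcal{S}^d)^+ \Leftrightarrow \big([v_{kl}]\mapsto\sum_{kl}f_{kl}(v_{kl})\big)\in(M_n(\mcal{S})^d)^+$ are both sound; the latter is the standard correspondence between $CP(\mcal{S},M_n)$ and states on $M_n(\mcal{S})$, and you have correctly noted that positivity of a scalar functional is automatically complete positivity.
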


\begin{rem}
\begin{enumerate}
\item When $\mcal{S} \subset M_p$ is an operator system for some $p
  \geq 1$, then $tr_p$ also works as an Archimedean matrix order unit
  for the matrix ordering on $\mcal{S}^d$; thus, $(\mcal{S}^d,
  tr_p)$ is an operator system. However, the above theorem becomes
  important as, quite surprisingly, there exist finite dimensional
  operator systems (\cite[$\S 7$]{CE77}) that cannot be embedded in
  matrix algebras as operator sub-systems.
\item In general, for an infinite dimensional operator system
  $\mcal{S}$, it is not clear whether $\mcal{S}^d$ admits an
  Archimedean matrix order unit or not.
\end{enumerate}
\end{rem}

For a finite dimensional operator system $\mcal{S}$, it is easily
checked that $(\mcal{S}^d)^d = \mcal{S}.$ Thus, returning to our
motivating question: if we start with an operator system, then the
object that it is naturally the set of states on is $\mcal{S}^d.$


\section{Tensor products of operator systems}

In the usual axioms for quantum mechanics, if Alice has a quantum
system represented as the states on a (finite dimensional) Hilbert
space $H_A$ and Bob has a quantum system represented as the states on
a Hilbert space $H_B$ then when we wish to consider the combined
system it has states represented by the Hilbert space $H_A \otimes_2
H_B$ where we've introduced the subscript 2 to indicate that this is
the unique Hilbert space with inner product satisfying,
\[ \langle h_A \otimes h_B | k_A \otimes k_B \rangle = \langle h_A | k_A \rangle \cdot \langle h_B \otimes k_B \rangle.\]
As vector spaces we have that
\[ \mcal{B}( H_A \otimes_2 H_B) = \mcal{B}(H_A) \otimes \mcal{B}(H_B),\]
and since the left hand side is an operator system, this tells us
exactly how to make an operator system out of the two operator systems
appearing on the right hand side.

If $P \in \mcal{B}(H_A)^+$ and $Q \in \mcal{B}(H_B)^+$ then $P \otimes
Q \in \mcal{B}(H_A \otimes_2 H_B)^+.$ But there are many positive
operators in $\mcal{B}(H_A \otimes_2 H_B)^+,$ even of rank one, i.e.,
vector states, that can not be expressed in such a simple fashion and
this is what leads to the important phenomenon known as {\it
  entanglement} which you've undoubtedly heard about in other
lectures.

Now suppose that we are in one of the scenarios, such as in coding
theory or capacity theory, where Alice and Bob do not both have all
the operators on their respective Hilbert spaces but instead are
constrained to certain operator subsystems, $\mcal{S}_A \subseteq
\mcal{B}(H_A)$ and $\mcal{S}_B \subseteq \mcal{B}(H_B).$ When we wish
to consider the bivariate system that includes them both then as a
vector space it should be $\mcal{S}_A \otimes \mcal{S}_B,$ but which
elements should be the states? More importantly, since we want to
study quantum channels on this bivariate system, we need to ask: What
should be the operator system structure on this bivariate system?

There is an easy answer, one could identify
\[
\mcal{S}_A \otimes \mcal{S}_B \subseteq \mcal{B}(H_A \otimes_2 H_B),
\]
and when one does this we see that it is an operator subsystem, i.e.,
it contains the identity operator and is closed under the taking of
adjoint. This operator system is denoted by $\mcal{S}_A \otimes_{sp}
\mcal{S}_B$ and is called their {\it spatial tensor product.} \index{tensor product! spatial}

Unfortunately, in general,
\[
\big( \mcal{S}_A \otimes_{sp} \mcal{S}_B \big)^d \ne \mcal{S}_A^d
\otimes_{sp} \mcal{S}_B^d,
\]
so we need at least one other tensor product to explain what are the
states on a tensor product.

Attempts by researchers such as Tsirelson \cite{tsirelson1980,
  tsirelson1993} to determine the sets of density matrices that are
the outcomes of various multipartite quantum settings, and various
works in operator algebras, argue for several other ways to form the
tensor product of operator systems.

Thus, we are lead to consider more general ways that we can form an
operator system out of a bivariate system.

Given operator systems $(\mcal{S}, e_{\mcal{S}})$ and $(\mcal{T},
e_{\mcal{T}})$, we wish to take the vector space tensor product
$\mcal{S} \otimes \mcal{T}$, endow it with a matrix ordering
$\{\mcal{C}_n \subset M_n(\mcal{S} \otimes \mcal{T}) : n \geq 1\}$
such that $\mcal{S} \otimes \mcal{T}$ together with these cones and
$e_{\mcal{S}}\otimes e_{\mcal{T}}$ forms an operator system.
\begin{defn}

Given operator systems $(\mcal{S}, e_{\mcal{S}})$ and $(\mcal{T},
e_{\mcal{T}})$, by an \index{tensor product! operator systems} operator
system tensor product $\tau$ we mean a family of cones
$\mcal{C}_n^{\tau} \subset M_n(\mcal{S} \otimes \mcal{T})$, $ n \geq
1$ such that $(\mcal{S} \otimes \mcal{T}, \{\mcal{C}_n^\tau\},
e_{\mcal{S}}\otimes e_{\mcal{T}})$ is an operator system satisfying:
\begin{enumerate}
\item $P \otimes Q= [p_{ij}\otimes q_{kl}] \in \mcal{C}_{nm}^{\tau}$ for all
  $P =[p_{ij}] \in M_n(\mcal{T})^+$, $Q =[q_{kl}]\in M_m(\mcal{T})^+$,
  $n, m \geq 1$.
\item $\varphi \otimes \psi \in CP(\mcal{S}\otimes_\tau \mcal{T},\,
  M_{n} \otimes M_{m} = M_{nm})$ for all $\varphi \in CP(\mcal{S},
  M_{n})$, $\psi \in CP(\mcal{T}, M_{m}) $, $n, m \geq 1$.
\end{enumerate}
 \end{defn}

\begin{rem}
Condition $(2)$ in the above definition is analogous to the
reasonableness axiom of Grothendieck for Banach space tensor products.
\end{rem}

\begin{defn} A tensor product $\tau$ of operator systems is said to be
\begin{enumerate}
\item \underline{functorial} if \begin{enumerate}
\item it is defined for any two operator systems; and
\item $\varphi \ot \psi \in CP(\mcal{S}_1 \otimes_{\tau} \mcal{S}_2,
  \mcal{T}_1 \otimes_{\tau} \mcal{T}_2)$ for all $\varphi \in
  CP(\mcal{S}_1 , \mcal{T}_1 )$, $ \psi \in CP( \mcal{S}_2,
  \mcal{T}_2)$.
\end{enumerate}
\item \underline{associative} if $(\mcal{S}_1 \otimes_{\tau}
  \mcal{S}_2)\otimes_{\tau}\mcal{S}_3$ is canonically completely order
  isomorphic to $\mcal{S}_1 \otimes_{\tau}
  (\mcal{S}_2\otimes_{\tau}\mcal{S}_3)$ for any three operator systems
  $\mcal{S}_i$, $i = 1, 2, 3$.
\item \underline{symmetric} if the flip map gives a complete order
  isomorphism $\mcal{S}\otimes_\tau \mcal{T} \simeq
  \mcal{T}\otimes_\tau \mcal{S}$ for any two operator systems
  $\mcal{S}$ and $ \mcal{T}$.
\end{enumerate}
\end{defn}

$\bullet$ Suppose a $\ast$-vector space $\mcal{W}$ has two matrix
orderedings $\{\mcal{C}_n\}$ and $\{\mcal{C}'_n\}$; then $(\mcal{W},
\{\mcal{C}_n\})$ is thought of as to be ``bigger'' than $(\mcal{W},
\{\mcal{C}_n'\})$ if the identity map $id_{\mcal{W}} : (\mcal{W},
\{\mcal{C}_n\}) \to (\mcal{W}, \{\mcal{C}_n'\})$ is CP; or,
equivalently, if $\mcal{C}_n \subset \mcal{C}_n'$ for all $n \geq 1$.

Note that this notion is parallel to the fact that if $||\cdot||_1$ and
$||\cdot||_2$ are two norms on a complex vector space $X$, then
$||\cdot||_1 \leq ||\cdot||_2$ if and only if the (closed) unit balls with
respect to these norms satisfy $B_1(X, ||\cdot||_2) \subset B_1(X,
||\cdot||_1) $.

\subsection{Minimal tensor product of operator systems}

Let $(\mcal{S}, e_{\mcal{S}})$ and $(\mcal{T}, e_{\mcal{T}})$ be two
operator systems. For each $p \geq 1$, set
\[
\mcal{C}^{\min}_p = \Big\{ [u_{ij}] \in M_p(\mcal{S} \otimes \mcal{T}) :
     [(\varphi\otimes\psi)(u_{ij})] \in M_{nmp}^+,\ \forall \varphi \in
     CP(\mcal{S}, M_n),\, \psi\in CP(\mcal{T}, M_m), n, m \geq 1 \Big\}.
\]
\begin{thm}\cite{KPTT11}\index{tensor product! minimal (operator systems)}
With above setup,
\begin{enumerate}
\item $\{ \mcal{C}^{\min}_p\}$ is an operator system tensor product on
  $\mcal{S}\ot\mcal{T}$ and we denote the consequent operator system
  by $\mcal{S}\ot_{\min}\mcal{T}$.
\item $\ot_{\min}$ is the smallest operator system tensor product in
  the sense that, if $\{ \mcal{C}^{\tau}_p\}$ is any other operator
  system tensor product on $\mcal{S}\ot\mcal{T}$, then $ \mcal{C}^{\tau}_p
  \subset \mcal{C}^{\min}_p$ for all $p \geq 1$.
\item if $\mcal{S} \subset B(H)$ and $\mcal{T} \subset B(K)$ for some
  Hilbert spaces $H$ and $K$, then the spatial tensor product
  $\mcal{S}\ot_{sp}\mcal{T} \subset B(H \ot K)$ is completely order
  isomorphic to the minimal tensor product
  $\mcal{S}\ot_{\min}\mcal{T}$.
\item $\ot_{\min}$ is functorial, associative and symmetric.
\item if $A$ and $B$ are unital $C^*$-algebras, then their minimal
  tensor product as operator systems is completely order isomorphic to
  the image of $A \ot B$ in $A \ot_{C^*\text{-}\min}B$.
\end{enumerate}
\end{thm}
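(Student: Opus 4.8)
The plan is to deduce this from part (3), which already does the essential analytic work of identifying $\ot_{\min}$ with the spatial product. First I would fix faithful unital representations $A \subseteq B(H)$ and $B \subseteq B(K)$. Part (3), applied to the operator systems $A$ and $B$, then asserts that $A \ot_{\min} B$ is completely order isomorphic to the spatial tensor product $A \ot_{sp} B$, i.e., to the algebraic tensor product $A \ot B$ sitting inside $B(H \ot K)$ and equipped, at each level $p$, with the cone $M_p(A \ot B) \cap M_p(B(H \ot K))^+$. So the theorem reduces to showing that this spatial operator system is precisely the image of $A \ot B$ in $A \ot_{C^*\text{-}\min} B$, with the matrix ordering inherited from that $C^*$-algebra.

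Next I would recall that the minimal $C^*$-tensor product $A \ot_{C^*\text{-}\min} B$ is, by definition, the norm closure of $A \ot B$ inside $B(H \ot K)$; by Takesaki's theorem the resulting norm (and hence the closure) does not depend on the chosen faithful representations, so this is unambiguous. In particular $A \ot_{C^*\text{-}\min} B$ is a \emph{unital $C^*$-subalgebra} of $B(H \ot K)$ containing $A \ot B$ as a dense self-adjoint unital subspace, and the image of $A \ot B$ carries the operator system structure whose $p$-th cone is $M_p(A \ot B) \cap M_p(A \ot_{C^*\text{-}\min} B)^+$.

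The key step is then to check that these two matrix orderings on $A \ot B$ coincide, i.e.
\[
M_p(A \ot B) \cap M_p(A \ot_{C^*\text{-}\min} B)^+ = M_p(A \ot B) \cap M_p(B(H \ot K))^+
\]
for every $p \geq 1$. This is immediate from the standard fact that positivity is invariant under unital inclusions of $C^*$-algebras: since $M_p(A \ot_{C^*\text{-}\min} B) = M_p \ot (A \ot_{C^*\text{-}\min} B)$ is a unital $C^*$-subalgebra of $M_p(B(H \ot K)) = B((H \ot K)^{(p)})$, an element of the smaller algebra has the same spectrum, hence the same positivity, whether computed in the subalgebra or in the ambient $B((H \ot K)^{(p)})$. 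Thus the spatial operator system and the image of $A \ot B$ in $A \ot_{C^*\text{-}\min} B$ are \emph{literally the same} operator system, and combining this with the complete order isomorphism from part (3) finishes the proof.

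The main obstacle is essentially bookkeeping: one must be sure that the abstract complete order isomorphism produced by part (3) matches the concrete spatial picture inside $B(H \ot K)$ from which the $C^*$-minimal tensor product is built. If one instead wished to bypass part (3) and verify the cone equality $\mcal{C}^{\min}_p = M_p(A \ot B) \cap M_p(A \ot_{C^*\text{-}\min} B)^+$ directly, the easy inclusion (into the minimal operator system cone) follows because each $\varphi \ot \psi$ with $\varphi \in CP(A, M_n)$, $\psi \in CP(B, M_m)$ extends to a CP map on $A \ot_{C^*\text{-}\min} B$; the genuinely harder inclusion would be the reverse, which requires showing that the finite-dimensional maps $\varphi \ot \psi$ detect positivity in $A \ot_{C^*\text{-}\min} B$, typically via approximating vectors of $(H \ot K)^{(p)}$ by elementary tensors and compressing $A$ and $B$ to finite-dimensional subspaces.
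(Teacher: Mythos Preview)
The paper does not supply its own proof of this theorem: it is stated with the citation \cite{KPTT11} and then the text moves directly to the maximal tensor product, so there is nothing in the paper to compare against line by line. That said, your argument for part (5) is correct and is essentially the canonical one. Reducing to part (3) is exactly the right move, and the remaining identification of the spatial operator system on $A \ot B$ with its image in $A \ot_{C^*\text{-}\min} B$ is precisely the spectral-permanence fact you invoke: for a unital inclusion $C \subseteq D$ of $C^*$-algebras, an element of $C$ is positive in $C$ iff it is positive in $D$, and this passes to $M_p(\cdot)$ at every level. Your closing paragraph about the ``bookkeeping obstacle'' is also accurate: the complete order isomorphism in part (3) is in fact the identity on the underlying vector space $A \ot B$ (both orderings live on the same algebraic tensor product), so there is no hidden identification to worry about. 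The alternative direct route you sketch---showing that finite-dimensional $\varphi \ot \psi$'s detect positivity in $A \ot_{C^*\text{-}\min} B$---is indeed how one would prove part (3) itself in \cite{KPTT11}, so you are right that going through (3) is the economical choice here.
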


\subsection{Maximal tensor product of operator systems}

Let $(\mcal{S}, e_{\mcal{S}})$ and $(\mcal{T}, e_{\mcal{T}})$ be two
operator systems. For each $n \geq 1$, consider
\[
\mcal{D}^{\max}_n = \Big\{ X^*([p_{ij}]\otimes [q_{kl}])X : [p_{ij}] \in
M_r(\mcal{S})^+ , [q_{kl}]\in M_s(\mcal{T})^+, X\in M_{rs,
  n}(\mbb{C}), r, s \geq 1 \Big\}.
\]
It can be seen that $\{ \mcal{D}^{\max}_n\}$ gives a matrix ordering
on $\mcal{S}\ot\mcal{T}$ and that $e_{\mcal{S}}\ot e_{\mcal{T}}$ is a
matrix order unit for this ordering. However, there exist examples where
$e_{\mcal{S}}\ot e_{\mcal{T}}$ fails to be an Archimedean matrix order
unit. We, therefore, Archimedeanize the above ordering, by
considering:
\[
\mcal{C}^{\max}_n =\Big\{ [u_{ij}] \in M_n(\mcal{S}\ot \mcal{T}) :
\delta\, \mathrm{diag}(e_{\mcal{S}}\ot e_{\mcal{T}}, e_{\mcal{S}}\ot
e_{\mcal{T}}, \ldots, e_{\mcal{S}}\ot e_{\mcal{T}}) + [u_{ij}] \in
\mcal{D}^{\max}_n, \ \forall \delta > 0 \Big\}.
\]
\begin{thm}\cite{KPTT11} With above set up, \index{tensor product! maximal}
\begin{enumerate}
\item $\{ \mcal{C}^{\max}_n\}$ is an operator system tensor product on
  $\mcal{S}\ot\mcal{T}$ and we denote the consequent operator system
  by $\mcal{S}\ot_{\max}\mcal{T}$.
\item $\ot_{\max}$ is the largest operator system tensor product in the sense
  that if $\{ \mcal{C}^{\tau}_n\}$ is any other operator system tensor product
  on $\mcal{S}\ot\mcal{T}$, then $ \mcal{C}^{\max}_n \subset
  \mcal{C}^{\tau}_n$ for all $n \geq 1$, i.e. $\max$ tensor product is
  the largest operator system tensor product.
\item $\ot_{\max}$ is functorial, associative and symmetric.
\item if $A$ and $B$ are unital $C^*$-algebras, then their maximal
  tensor product as operator systems is completely order isomorphic to
  the image of $A \ot B$ in $A \ot_{C^*\text{-}\max}B$.
\end{enumerate}
\end{thm}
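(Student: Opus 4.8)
The overall plan is to handle the four assertions in turn, leaning on the Choi--Effros theorem to promote the abstract cone data $\{\mcal{C}^{\max}_n\}$ to a genuine operator system and on the already-established minimal tensor product to supply the one feature the construction does not make visible, namely properness of the cones. Throughout I use that, by the setup, $\{\mcal{D}^{\max}_n\}$ is already a matrix ordering with matrix order unit $e_{\mcal{S}}\ot e_{\mcal{T}}$, and that the passage to $\{\mcal{C}^{\max}_n\}$ is the standard Archimedeanization, which automatically makes $e_{\mcal{S}}\ot e_{\mcal{T}}$ an \emph{Archimedean} matrix order unit and enlarges each cone ($\mcal{D}^{\max}_n\subseteq\mcal{C}^{\max}_n$). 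For (1) the only Choi--Effros axiom requiring real argument is properness $\mcal{C}^{\max}_n\cap(-\mcal{C}^{\max}_n)=\{0\}$, which I get by comparison with $\ot_{\min}$: for $\varphi\in CP(\mcal{S},M_p)$, $\psi\in CP(\mcal{T},M_q)$, complete positivity gives $(\varphi\ot\psi)\big(X^*(P\ot Q)X\big)=X^*\big(\varphi_r(P)\ot\psi_s(Q)\big)X\ge 0$, so $\mcal{D}^{\max}_n\subseteq\mcal{C}^{\min}_n$; since the minimal cones are Archimedean-closed, this forces $\mcal{C}^{\max}_n\subseteq\mcal{C}^{\min}_n$, whence properness is inherited from $\mcal{S}\ot_{\min}\mcal{T}$. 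Choi--Effros then yields an operator system; tensor axiom (1), $P\ot Q\in\mcal{C}^{\max}_{nm}$, is immediate by taking $X$ a coordinate-reindexing isometry, and tensor axiom (2) is exactly the positivity computation just displayed.

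Assertion (2) is the cleanest. Let $\tau$ be any operator system tensor product. Given a generator $X^*(P\ot Q)X\in\mcal{D}^{\max}_n$ with $P\in M_r(\mcal{S})^+$, $Q\in M_s(\mcal{T})^+$, tensor axiom (1) for $\tau$ puts $P\ot Q\in\mcal{C}^{\tau}_{rs}$, and the compression axiom of the matrix ordering $\{\mcal{C}^{\tau}_n\}$ gives $X^*(P\ot Q)X\in\mcal{C}^{\tau}_n$; hence $\mcal{D}^{\max}_n\subseteq\mcal{C}^{\tau}_n$. For $u\in\mcal{C}^{\max}_n$ and $\delta>0$ we then have $\delta\,(e_{\mcal{S}}\ot e_{\mcal{T}})^{\oplus n}+u\in\mcal{D}^{\max}_n\subseteq\mcal{C}^{\tau}_n$, and letting $\delta\downarrow 0$ the Archimedean property of $\mcal{C}^{\tau}$ yields $u\in\mcal{C}^{\tau}_n$, i.e.\ $\mcal{C}^{\max}_n\subseteq\mcal{C}^{\tau}_n$.

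For (3), symmetry is immediate because the flip carries $\mcal{D}^{\max}(\mcal{S},\mcal{T})$ onto $\mcal{D}^{\max}(\mcal{T},\mcal{S})$. For functoriality, given $\varphi_i\in CP(\mcal{S}_i,\mcal{T}_i)$ one computes $(\varphi_1\ot\varphi_2)\big(X^*(P\ot Q)X\big)=X^*\big((\varphi_1)_r(P)\ot(\varphi_2)_s(Q)\big)X\in\mcal{D}^{\max}(\mcal{T}_1,\mcal{T}_2)$; the Archimedean slack is absorbed using $\varphi_1(e)\ot\varphi_2(e)\le C\,e_{\mcal{T}_1}\ot e_{\mcal{T}_2}$ with $C=\|\varphi_1(e)\|\,\|\varphi_2(e)\|$, followed by the Archimedean property of the target, exactly as in (2). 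Associativity I would prove by identifying the cones of both $(\mcal{S}_1\ot_{\max}\mcal{S}_2)\ot_{\max}\mcal{S}_3$ and $\mcal{S}_1\ot_{\max}(\mcal{S}_2\ot_{\max}\mcal{S}_3)$ with the Archimedeanization of the ternary cone $\{X^*(P_1\ot P_2\ot P_3)X\}$: the containment of triple products in each iterated cone uses tensor axiom (1) twice, while the reverse direction unfolds a two-fold generator $Z^*(R\ot P_3)Z$ by writing the positive $R\in M_t(\mcal{S}_1\ot_{\max}\mcal{S}_2)^+$, up to $\delta$-slack, as a generator $W^*(P_1\ot P_2)W$ and amalgamating $W,Z,X$ into a single scalar matrix; the canonical identity is then a complete order isomorphism.

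Finally (4) is where the genuine work lies. The easy inclusion is that each generator $X^*(P\ot Q)X$ is positive in $A\ot_{C^*\text{-}\max}B$, since $P\ot Q=(P\ot 1)(1\ot Q)$ is a product of two \emph{commuting} positive elements of $M_{rs}(A\ot_{C^*\text{-}\max}B)$ and compressions preserve positivity; as the $C^*$-max cone is Archimedean-closed this gives $\mcal{C}^{\max}_n\subseteq M_n(A\ot_{C^*\text{-}\max}B)^+\cap M_n(A\ot B)$, so the inclusion map is completely positive. The hard direction is that this inclusion is a complete order \emph{embedding}: given $u\in M_n(A\ot B)$ positive in the $C^*$-max algebra and $\delta>0$, I must realize $u+\delta\,(e\ot e)^{\oplus n}$ inside $\mcal{D}^{\max}_n$. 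The mechanism is the Gram identity $w^*w=X^*\big([a_i^*a_j]\ot[b_k^*b_l]\big)X$, which exhibits squares of elements $w=\sum_i a_i\ot b_i$ of the algebraic tensor product as generators of $\mcal{D}^{\max}$; the difficulty is that a $C^*$-positive element of the \emph{dense} subalgebra $A\ot B$ need not be a finite sum of such squares. I expect to close the gap by combining the universal commuting-representation property of $\ot_{C^*\text{-}\max}$ with the strict positivity $u+\delta(e\ot e)\gg 0$ and the density of $A\ot B$, approximating within the slack afforded by the Archimedeanization. This reverse inclusion, bridging the purely order-theoretic cone $\mcal{D}^{\max}$ and the analytic completion defining $\ot_{C^*\text{-}\max}$, is the main obstacle of the whole theorem.
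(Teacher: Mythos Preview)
The paper does not give a proof of this theorem; it is stated with a citation to \cite{KPTT11} and nothing more. So there is no ``paper's own proof'' to compare against, and your proposal must be judged on its own merits and against the original source.

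Your treatment of parts (1)--(3) is essentially the standard argument from \cite{KPTT11}: properness via the inclusion $\mcal{D}^{\max}_n\subseteq\mcal{C}^{\min}_n$, maximality by absorbing the Archimedean slack into any given tensor cone, functoriality and symmetry by direct computation on generators, and associativity via the ternary cone. These are correct and complete as outlined.

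For part (4), you have correctly identified both the easy direction and the genuine obstacle, and your Gram-identity observation $w^*w=X^*([a_i^*a_j]\ot[b_k^*b_l])X$ is the right structural ingredient. What is missing is not deep but does need to be written: given $u\in M_n(A\ot B)$ positive in $M_n(A\ot_{C^*\text{-}\max}B)$ and $\delta>0$, the element $u+\delta(e\ot e)^{\oplus n}$ is strictly positive in the $C^*$-algebra, hence has a positive square root there; approximate that square root in norm by some $w$ in the dense $*$-subalgebra $M_n(A\ot B)$, so that $w^*w$ approximates $u+\delta(e\ot e)^{\oplus n}$ within, say, $\delta/2$ in norm. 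Then $u+2\delta(e\ot e)^{\oplus n}-w^*w$ has norm at most $3\delta/2$ and hence lies in $\mcal{D}^{\max}_n$ (being dominated by a multiple of the unit), while $w^*w\in\mcal{D}^{\max}_n$ by your Gram identity; adding gives $u+2\delta(e\ot e)^{\oplus n}\in\mcal{D}^{\max}_n$, and since $\delta>0$ was arbitrary, $u\in\mcal{C}^{\max}_n$. Your phrase ``I expect to close the gap'' should be replaced by this square-root-plus-density argument; once that is done the proof is complete and matches the approach in \cite{KPTT11}.
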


\begin{rem}
For any $C^*$-algebra $A \subset B(H)$, the matrix ordering that it
inherits does not depend (upto complete order isomorphism) upon the
embedding or the Hilbert space $H$.
\end{rem}

$\bullet$ At this point, we must mention that there is a big
difference between the operator space maximal tensor product and the
operator system maximal tensor product. This can be illustrated by an
example:

For $n, m \geq 1$, the operator system maximal tensor product of $M_n$
and $M_m$ equals their $C^*$-maximal tensor product, whereas their
operator space maximal tensor product does not.

\begin{thm}[CP Factorisation Property]\cite{HP11} \label{cpap} \index{CP Factorisation Property}
Let $\mcal{S} \subset B(H)$ be an operator system. Then
$\mcal{S}\ot_{\min} \mcal{T} = \mcal{S}\ot_{\max} \mcal{T}$ for all
operator systems $\mcal{T}$ if and only if there exist nets of UCP
maps $\varphi_{\lambda} : \mcal{S} \to M_{n_\lambda}$ and
$\psi_{\lambda} : M_{n_\lambda}\to \mcal{S}$, $\lambda \in \Lambda$
such that
\[
||\psi_{\lambda} \circ \varphi_{\lambda} (s) -s || \to 0,\ \forall \,
s \in \mcal{S}.
\]
\end{thm}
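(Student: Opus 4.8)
The statement is the operator-system analogue of the characterisation of nuclearity of a $C^*$-algebra by the completely positive approximation property, and the plan is to prove the two implications by exploiting, respectively, the functoriality of the tensor products and the self-duality encoded in $\mcal{S}^d$. Throughout I use that $\ot_{\max}$ is the largest and $\ot_{\min}$ the smallest operator system tensor product, so that $\mcal{C}^{\max}_n \subset \mcal{C}^{\min}_n$ always holds; consequently the formal identity $\mcal{S}\ot_{\max}\mcal{T}\to\mcal{S}\ot_{\min}\mcal{T}$ is automatically CP, and the equality $\mcal{S}\ot_{\min}\mcal{T}=\mcal{S}\ot_{\max}\mcal{T}$ is equivalent to the reverse identity $\mcal{S}\ot_{\min}\mcal{T}\to\mcal{S}\ot_{\max}\mcal{T}$ being CP, i.e. to $\mcal{C}^{\min}_p\subset\mcal{C}^{\max}_p$ for all $p$.

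For the easy direction $(\Leftarrow)$, fix a net as in the statement and consider, for each $\lambda$, the maps $\varphi_\lambda \ot \mathrm{id}_\mcal{T}$ and $\psi_\lambda\ot\mathrm{id}_\mcal{T}$. By functoriality of $\ot_{\min}$ the first is CP from $\mcal{S}\ot_{\min}\mcal{T}$ into $M_{n_\lambda}\ot_{\min}\mcal{T}$, and by functoriality of $\ot_{\max}$ the second is CP from $M_{n_\lambda}\ot_{\max}\mcal{T}$ into $\mcal{S}\ot_{\max}\mcal{T}$. The crucial intermediate fact is that matrix algebras are nuclear as operator systems: for any operator system tensor product one has $M_{n}\ot_{\min}\mcal{T}=M_{n}(\mcal{T})=M_{n}\ot_{\max}\mcal{T}$, so the two maps compose to a CP map $(\psi_\lambda\circ\varphi_\lambda)\ot\mathrm{id}_\mcal{T}:\mcal{S}\ot_{\min}\mcal{T}\to\mcal{S}\ot_{\max}\mcal{T}$. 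Given $[u_{ij}]\in\mcal{C}^{\min}_p$, each matrix $[((\psi_\lambda\circ\varphi_\lambda)\ot\mathrm{id})(u_{ij})]$ then lies in $\mcal{C}^{\max}_p$; since $\psi_\lambda\circ\varphi_\lambda\to\mathrm{id}_\mcal{S}$ in the point-norm topology, these converge in norm to $[u_{ij}]$, and the Archimedean property of the order unit $e_\mcal{S}\ot e_\mcal{T}$ (which forces $\mcal{C}^{\max}_p$ to be closed under such norm limits, via $[u_{ij}]+\delta\,\mathrm{diag}(e_\mcal{S}\ot e_\mcal{T})\in\mcal{C}^{\max}_p$ for all $\delta>0$) yields $[u_{ij}]\in\mcal{C}^{\max}_p$. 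Hence $\mcal{C}^{\min}_p\subset\mcal{C}^{\max}_p$ and equality follows.

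For the converse $(\Rightarrow)$, I would first treat the finite-dimensional case and then reduce the general case to it. Assume $\dim\mcal{S}<\infty$, so that by Theorem \ref{unit} the dual $\mcal{S}^d$ is again an operator system, and take $\mcal{T}=\mcal{S}^d$. Under the canonical identification of $\mcal{S}\ot\mcal{S}^d$ with the space of linear maps on $\mcal{S}$, the identity map $\mathrm{id}_\mcal{S}$ corresponds to a distinguished element $\omega\in\mcal{S}\ot\mcal{S}^d$. The first key step is to show $\omega\in\mcal{C}^{\min}_1(\mcal{S}\ot\mcal{S}^d)$: this is the ``reasonable'' positivity of the identity, verified by checking $(\varphi\ot\psi)(\omega)\ge 0$ for all CP $\varphi:\mcal{S}\to M_n$ and $\psi:\mcal{S}^d\to M_m$ directly from the definition of the dual matrix ordering. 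By the hypothesis $\mcal{S}\ot_{\min}\mcal{S}^d=\mcal{S}\ot_{\max}\mcal{S}^d$ we then have $\omega\in\mcal{C}^{\max}_1$, so for every $\delta>0$ the element $\omega+\delta\,e_\mcal{S}\ot e_{\mcal{S}^d}$ lies in $\mcal{D}^{\max}_1$, i.e. equals $X^*(P\ot Q)X$ with $P\in M_r(\mcal{S})^+$, $Q\in M_s(\mcal{S}^d)^+$ and $X\in M_{rs,1}$. The second key step is to unpack this factorisation: via Choi's correspondence $P$ encodes a CP map $M_r\to\mcal{S}$, $Q$ encodes a CP map $\mcal{S}\to M_s$, and $X$ glues them into a CP map $\mcal{S}\to M_{N}\to\mcal{S}$ whose associated element approximates $\omega$, that is, which approximates $\mathrm{id}_\mcal{S}$ in the point-norm topology as $\delta\downarrow 0$.

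Two points then remain, and I expect them to be the main obstacle. First, the maps so obtained are merely completely positive and must be perturbed to unital ones; I would do this by the standard device of adding a small multiple of a fixed state and rescaling, using the Archimedean order unit to control the error so that the resulting UCP maps still satisfy $\|\psi_\lambda\circ\varphi_\lambda(s)-s\|\to 0$. Second, for infinite-dimensional $\mcal{S}$ the dual $\mcal{S}^d$ need not be an operator system, so the above does not apply verbatim; here I would run the argument over the directed family of finite-dimensional operator subsystems $\mcal{S}_0\subset\mcal{S}$, checking first that $\mcal{S}_0\ot_{\min}\mcal{T}=\mcal{S}_0\ot_{\max}\mcal{T}$ descends to subsystems (using injectivity of $\ot_{\min}$ together with functoriality), applying the finite-dimensional conclusion to each $\mcal{S}_0$, and assembling the local factorisations into a single net indexed by pairs $(\mcal{S}_0,\delta)$. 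The reverse direction is essentially formal; the genuine difficulty is the forward direction --- specifically the min-positivity of the identity element $\omega$ and the passage from the abstract factorisation $X^*(P\ot Q)X$ to honest UCP maps through $M_N$ with controlled point-norm error.
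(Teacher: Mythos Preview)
Your easy direction $(\Leftarrow)$ is correct and is exactly the paper's argument: functoriality of $\ot_{\min}$ and $\ot_{\max}$, nuclearity of matrix algebras, and the Archimedean property to close up under the point-norm limit.

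In the hard direction your plan for finite-dimensional $\mcal{S}$ is essentially the paper's: the element $\omega=\sum_i\delta_i\ot v_i$ is min-positive (this is Lemma~\ref{lemmas}(2)), hence max-positive by hypothesis, and the $\mcal{D}^{\max}_1$ factorisation unpacks (after the normal-form reduction of Lemma~\ref{lemmas}(1)) into a CP map $\Phi:\mcal{S}\to M_n$ coming from $[f_{ij}]\in M_n(\mcal{S}^d)^+$ and a CP map $\Psi:M_n\to\mcal{S}$ coming from $[p_{ij}]\in M_n(\mcal{S})^+$, with $\Psi\circ\Phi(v)=v+\delta f(v)e_\mcal{S}$.

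The genuine gap is your reduction to the finite-dimensional case. You propose to check that ``$\mcal{S}_0\ot_{\min}\mcal{T}=\mcal{S}_0\ot_{\max}\mcal{T}$ descends to subsystems'', but this is false in general: $(\min,\max)$-nuclearity does \emph{not} pass to operator subsystems (just as nuclearity of $C^*$-algebras does not pass to subalgebras). Injectivity of $\ot_{\min}$ gives $\mcal{S}_0\ot_{\min}\mcal{T}\hookrightarrow\mcal{S}\ot_{\min}\mcal{T}$, and functoriality of $\ot_{\max}$ gives a CP map $\mcal{S}_0\ot_{\max}\mcal{T}\to\mcal{S}\ot_{\max}\mcal{T}$, but neither yields the reverse CP map $\mcal{S}_0\ot_{\min}\mcal{T}\to\mcal{S}_0\ot_{\max}\mcal{T}$: a max-factorisation in $\mcal{S}\ot\mcal{T}$ uses $P\in M_r(\mcal{S})^+$, which need not lie in $M_r(\mcal{S}_0)^+$.

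The paper avoids this by \emph{not} reducing to nuclearity of $\mcal{F}$. Instead, for each finite-dimensional $\mcal{F}\subset\mcal{S}$ it applies the hypothesis directly to $\mcal{S}$ with the test system $\mcal{T}=\mcal{F}^d$: the identity element $\sum_i\delta_i\ot v_i$ lies in $\mcal{F}^d\ot_{\min}\mcal{F}\subset\mcal{F}^d\ot_{\min}\mcal{S}=\mcal{F}^d\ot_{\max}\mcal{S}$ (the inclusion by injectivity of $\ot_{\min}$, the equality by hypothesis). The resulting factorisation then yields $[f_{ij}]\in M_n(\mcal{F}^d)^+$ and $[p_{ij}]\in M_n(\mcal{S})^+$, so $\Psi:M_n\to\mcal{S}$ already lands in $\mcal{S}$, while $\Phi:\mcal{F}\to M_n$ is only defined on $\mcal{F}$ and is extended to a UCP map $\tilde{\Phi}:\mcal{S}\to M_n$ by Arveson's extension theorem. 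The net is then indexed by pairs $(\mcal{F},\delta)$. Your sketch of the UCP perturbation at the end is correct in spirit; the paper also contains a separate support-projection argument to arrange that $\Phi$ itself is unital before invoking Arveson.
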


\begin{rem}\label{norm}
 One can even avoid the above embedding $\mcal{S} \subset B(H)$, and
 give a characterization for $\mathrm{CPFP}$, alternately, by
 considering the norm
\[
||s|| : =\inf \Big\{ r > 0: \begin{pmatrix} re & s \\ s^* &
  re \end{pmatrix} \in M_2(\mcal{S})^+ \Big\},\ s \in \mcal{S}.
\]
\end{rem}

$\bullet$ We had remarked earlier that there exist finite dimensional
operator systems which can not be embedded in matrix algebras. The
surprise continues as, unlike $C^*$-algebras, not all finite
dimensional operator systems are $(\min, \max)$-nuclear in the sense
of $\S$\ref{lattice} - \cite[Theorem $5.18$]{KPTT11}. However, we have
the following useful fact.
\begin{lem}\cite{HP11, KPTT11}
Matrix algebras are $(\min, \max)$-nuclear as operator systems, i.e.,
$M_n \ot_{\min} \mcal{S} = M_n\ot_{\max} \mcal{T}$ for any operator
system $\mcal{T}$.
\end{lem}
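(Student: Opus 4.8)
The plan is to deduce the lemma directly from the CP Factorisation Property (Theorem \ref{cpap}), applied with $\mcal{S} = M_n$. Recall that Theorem \ref{cpap} asserts that an operator system $\mcal{S} \subset B(H)$ satisfies $\mcal{S} \ot_{\min} \mcal{T} = \mcal{S} \ot_{\max} \mcal{T}$ for every operator system $\mcal{T}$ if and only if the identity map on $\mcal{S}$ can be approximated in point-norm by compositions $\psi_\lambda \circ \varphi_\lambda$ of UCP maps factoring through matrix algebras $M_{n_\lambda}$. Since $M_n = B(\mbb{C}^n)$ is itself a matrix algebra, this approximation condition should be available essentially for free.

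Concretely, first I would observe that $M_n \subset B(\mbb{C}^n)$ is an operator system, so Theorem \ref{cpap} applies with $\mcal{S} = M_n$. Next I would exhibit the required net in the most economical way possible: index it by a one-point (hence trivially directed) set, put $n_\lambda = n$, and take $\varphi_\lambda = \psi_\lambda = id_{M_n}$. The identity map on $M_n$ is a unital $\ast$-homomorphism, hence UCP (any $\ast$-homomorphism is completely positive, and the identity plainly preserves the unit), so both legs of the factorisation are legitimate UCP maps into and out of a matrix algebra.

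With this choice one has $\psi_\lambda \circ \varphi_\lambda = id_{M_n}$, whence $\|\psi_\lambda \circ \varphi_\lambda(s) - s\| = 0$ for every $s \in M_n$; in particular the point-norm convergence demanded by Theorem \ref{cpap} holds trivially. Invoking the forward (``if'') direction of that theorem then yields $M_n \ot_{\min} \mcal{T} = M_n \ot_{\max} \mcal{T}$ for every operator system $\mcal{T}$, which is exactly the assertion that $M_n$ is $(\min, \max)$-nuclear.

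There is essentially no genuine obstacle here: the full content of the lemma is carried by Theorem \ref{cpap}, and the only point to emphasise is that matrix algebras sit at the ``small'' end of the factorisation hierarchy, so the factorisation through matrices that the criterion requires is witnessed by the identity itself. The only items worth a line of verification are that $id_{M_n}$ is genuinely UCP when regarded as a map into and out of the auxiliary matrix algebra, and that a singleton index set is admissible as a net --- both of which are immediate.
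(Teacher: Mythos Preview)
Your argument is circular. The direction of Theorem \ref{cpap} you invoke --- that the CP factorisation property implies $(\min,\max)$-nuclearity --- is itself proved in the paper \emph{using} the nuclearity of matrix algebras. Look at the displayed chain in the proof of Theorem \ref{cpap}:
\[
\mcal{S}\ot_{\min}\mcal{T} \stackrel{\varphi_{\lambda} \ot\, id_{\mcal{T}}}{\longrightarrow} M_{n_{\lambda}} \ot_{\min} \mcal{T} = M_{n_{\lambda}} \ot_{\max} \mcal{T} \stackrel{\psi_{\lambda} \ot\, id_{\mcal{T}}}{\longrightarrow} \mcal{S} \ot_{\max} \mcal{T}.
\]
The equality in the middle is precisely the statement of the present lemma. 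Without it, the factorisation through $M_{n_\lambda}$ does not let you pass from the $\min$ tensor product to the $\max$ one, and the argument collapses. So you cannot bootstrap the lemma from Theorem \ref{cpap}; the lemma is logically prior.

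The paper instead proves the lemma by a direct identification of cones: one uses the canonical isomorphism $M_k(M_n \ot \mcal{T}) \cong M_{kn}(\mcal{T})$ and checks by hand that $\mcal{D}_k^{\max}(M_n \ot \mcal{T}) = M_{kn}(\mcal{T})^+ = \mcal{C}_k^{\min}(M_n \ot \mcal{T})$. This is the ``serious calculation'' alluded to in the sketch, and it must be done independently of any CPFP-type characterisation. Your instinct that the lemma should be easy is correct --- the identity factorisation is morally why matrix algebras are nuclear --- but the formal route through Theorem \ref{cpap} is unavailable here.
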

\begin{proof} (Sketch!)
One basically identifies $M_k(M_n \ot \mcal{S})$ naturally with
$M_{kn} \ot \mcal{S}$ and then with some serious calculation shows
that $\mcal{D}_k^{\max}(M_n\ot\mcal{S}) = M_{kn}(\mcal{S})^+ =
\mcal{C}_k^{\min}(M_n\ot\mcal{S})$.
\end{proof}

\begin{rem}\cite{HP11}
In fact, the only finite dimensional $(\min, \max)$-nuclear operator
systems are the matrix algebras and their direct sums.
\end{rem}

\begin{rem}\label{bidual}
For a finite dimensional opertor system $\mcal{S}$, the canonical
isomorphism $\mcal{S} \ni x \mapsto \hat{x} \in (\mcal{S}^d)^d$, where
$\hat{x}(f) :=f(x)$ for all $f \in \mcal{S}^d$, is a complete order
isomorphism and $\widehat{e}_{\mcal{S}}$ is an Archimedean matrix
order unit for $(\mcal{S}^d)^d$.
\end{rem}
The requirement for the complete order isomorphism of the above map is
that $[x_{ij}] \in M_n(\mcal{S})^+$ if and only if $[\widehat{x}_{ij}]
\in M_n((\mcal{S}^d)^d)^+$; and this can be deduced readily from the
following fact:

\begin{lem}\cite[Lemma $4.1$]{KPTT11}\label{pos-lemma} For any
operator system $\mcal{S}$ and $P \in M_n(\mcal{S})$, $P \in
M_n(\mcal{S})^+$ if and only if $\varphi^{(n)}(P) \in M_{nm}^+$ for
all $\varphi \in \mathrm{UCP}(\mcal{S}, M_m)$ and $m \geq 1$.
\end{lem}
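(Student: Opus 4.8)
The plan is to prove the two implications separately. The forward implication ($\Rightarrow$) is immediate: if $P \in M_n(\mcal{S})^+$ then for any $\varphi \in \mathrm{UCP}(\mcal{S}, M_m)$ complete positivity of $\varphi$ means exactly that $\varphi^{(n)}$ is positive, whence $\varphi^{(n)}(P) \in M_n(M_m)^+ = M_{nm}^+$. All the content lies in the reverse implication, for which I would manufacture sufficiently many UCP maps by compressing to finite-dimensional subspaces.

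For ($\Leftarrow$), I would first fix a concrete realization $\mcal{S} \subset B(H)$ as in Definition~\ref{def:OS}, so that $M_n(\mcal{S})^+ = M_n(B(H))^+ \cap M_n(\mcal{S})$ and proving $P \in M_n(\mcal{S})^+$ reduces to showing that $P$, viewed in $M_n(B(H)) = B(H \ot \mbb{C}^n)$, satisfies $\langle P\xi, \xi\rangle \geq 0$ for every column vector $\xi = (\xi_1, \ldots, \xi_n)$ with $\xi_j \in H$. The key construction is a family of compression maps: for a finite-dimensional subspace $V \subset H$ with inclusion isometry $W : V \hookrightarrow H$ (so $W^*W = \mathrm{id}_V$), set $\varphi_V(s) := W^* s W \in B(V) \cong M_{\dim V}$. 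A routine check shows $\varphi_V$ is UCP: it is unital because $\varphi_V(\mathrm{id}_H) = W^*W = \mathrm{id}_V$ (using $\mathrm{id}_H \in \mcal{S}$), and completely positive because $\varphi_V^{(k)}([s_{ij}]) = \tilde{W}^* [s_{ij}] \tilde{W}$ with $\tilde{W} = \mathrm{diag}(W, \ldots, W)$, a map that manifestly preserves operator positivity.

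I would then recover operator positivity vector by vector. Given $\xi = (\xi_1, \ldots, \xi_n)$, take $V = \mathrm{span}\{\xi_1, \ldots, \xi_n\}$, so that each $\xi_j \in V$ and $W\xi_j = \xi_j$. The hypothesis applied to $\varphi_V$ gives $\varphi_V^{(n)}(P) = [W^* P_{ij} W] \geq 0$, and evaluating its quadratic form at $\tilde\xi = (\xi_1, \ldots, \xi_n) \in V^{(n)}$ yields
\[ 0 \leq \langle \varphi_V^{(n)}(P)\,\tilde\xi, \tilde\xi\rangle = \sum_{i,j}\langle W^* P_{ij} W \xi_j, \xi_i\rangle = \sum_{i,j}\langle P_{ij}\xi_j, \xi_i\rangle = \langle P\xi, \xi\rangle . \]
Since $\xi$ is arbitrary, $P \geq 0$ as an operator, and being in $M_n(\mcal{S})$ it lies in $M_n(\mcal{S})^+$.

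The main obstacle here is conceptual rather than computational: the point is to recognise that the concrete compressions $\varphi_V$ — rather than abstractly-given UCP maps — already furnish enough matrix-valued targets $M_m$ to detect operator positivity one vector at a time, together with the verification that each $\varphi_V$ genuinely lands in $\mathrm{UCP}(\mcal{S}, M_m)$. (As a side observation, since $\dim V \leq n$ in the construction above, it in fact suffices to test against $\varphi \in \mathrm{UCP}(\mcal{S}, M_m)$ with $m \leq n$.)
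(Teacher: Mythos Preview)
Your proof is correct. The paper does not supply its own proof of this lemma; it merely cites \cite[Lemma~4.1]{KPTT11}. Your compression argument---manufacturing UCP maps $\varphi_V$ by compressing a concrete realization $\mcal{S}\subset B(H)$ to finite-dimensional subspaces $V$ and then testing positivity one vector at a time---is the standard route, and your side observation that $m\le n$ suffices is also correct and worth noting.
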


For vector spaces $\mcal{S}$ and $\mcal{T}$ with $\mcal{S}$ finite
dimensional, we have an identification between $\mcal{S}\ot \mcal{T}$
and the space of all linear maps from $\mcal{S}^d$ into $\mcal{T}$ by
identifying the element $u = \sum_i s_i \ot t_i \in \mcal{S}\ot
\mcal{T}$ with the map $ \mcal{S}^d \ni f \stackrel{L_u}{\mapsto}
\sum_i f(s_i)t_i \in \mcal{T}$.

\begin{lem}\label{C-min}\cite[Lemma $8.4$]{KPTT11}
Let $\mcal{S}$ and $\mcal{T}$ be operator systems with $\mcal{S}$
finite dimensional and let $[u_{ij}] \in M_n( \mcal{S}\ot
\mcal{T})$. Then $[u_{ij}] \in M_n( \mcal{S}\ot_{\min} \mcal{T})^+$ if
and only if the map $\mcal{S}^d \ni f \mapsto [L_{{u}_{ij}}(f)] \in
M_n(\mcal{T})$ is CP.
\end{lem}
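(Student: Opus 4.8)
The plan is to unwind both sides of the claimed equivalence into statements about completely positive test maps into matrix algebras and then match them term by term, using the dual matrix ordering on $\mcal{S}^d$ to read off completely positive maps $\mcal{S} \to M_p$ and Lemma \ref{pos-lemma} to reduce positivity in $M_{pn}(\mcal{T})$ to such test maps. Throughout, the finite dimensionality of $\mcal{S}$ is what makes the identification $u \leftrightarrow L_u$ of $\mcal{S} \otimes \mcal{T}$ with linear maps $\mcal{S}^d \to \mcal{T}$ available, and what makes the duality between $CP(\mcal{S}, M_p)$ and $M_p(\mcal{S}^d)^+$ legitimate.

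First I would record two dictionaries. By the very definition of the matrix ordering on $\mcal{S}^d$, an element $[f_{kl}] \in M_p(\mcal{S}^d)$ is positive if and only if the associated map $\varphi: \mcal{S} \to M_p$, $v \mapsto [f_{kl}(v)]$, is completely positive; thus $M_p(\mcal{S}^d)^+$ is in bijection with $CP(\mcal{S}, M_p)$. Consequently the map $\Phi: \mcal{S}^d \to M_n(\mcal{T})$, $f \mapsto [L_{u_{ij}}(f)]$, is CP precisely when, for every $p$ and every $[f_{kl}] \in M_p(\mcal{S}^d)^+$, the element $\Phi^{(p)}([f_{kl}]) \in M_p(M_n(\mcal{T})) = M_{pn}(\mcal{T})$ is positive. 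By Lemma \ref{pos-lemma}, positivity in $M_{pn}(\mcal{T})^+$ is detected exactly by the UCP maps $\psi: \mcal{T} \to M_q$, namely $\psi^{(pn)}(\Phi^{(p)}([f_{kl}])) \geq 0$ for all such $\psi$ and all $q$; and this is equivalent to testing against all CP $\psi$, since UCP $\subseteq$ CP while any CP map preserves a genuinely positive element.

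The crux is then a direct computation matching these applied test maps with the defining cone of $\otimes_{\min}$. Writing $u_{ij} = \sum_a s^{ij}_a \otimes t^{ij}_a$ and $\varphi \leftrightarrow [f_{kl}]$ as above, I would check that the $(k,l)$-block of $(\varphi \otimes \psi)(u_{ij}) \in M_p \otimes M_q$ equals $\sum_a f_{kl}(s^{ij}_a)\,\psi(t^{ij}_a) = \psi\bigl(\Phi(f_{kl})_{ij}\bigr)$. Hence the array $[(\varphi \otimes \psi)(u_{ij})]_{ij} \in M_n(M_p \otimes M_q)$ and the array $\psi^{(pn)}(\Phi^{(p)}([f_{kl}])) \in M_{pn}(M_q)$ have identical entries up to swapping the roles of the outer index $(i,j)$ and the index $(k,l)$, i.e. up to the canonical shuffle of the tensor legs $M_n \otimes M_p \otimes M_q \cong M_p \otimes M_n \otimes M_q$. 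Since this shuffle is conjugation by a fixed permutation unitary, it preserves positivity in $M_{pqn}$. Therefore the condition ``$[(\varphi \otimes \psi)(u_{ij})] \in M_{pqn}^+$ for all CP $\varphi,\psi$'' — which, after identifying $\varphi$ with $[f_{kl}]$, is exactly $[u_{ij}] \in \mcal{C}^{\min}_n$ — coincides with ``$\psi^{(pn)}(\Phi^{(p)}([f_{kl}])) \geq 0$ for all $[f_{kl}] \in M_p(\mcal{S}^d)^+$ and all CP $\psi$'', which by the previous paragraph says precisely that $\Phi$ is CP.

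I expect the main obstacle to be purely bookkeeping: pinning down the shuffle isomorphism $M_n(M_p \otimes M_q) \cong M_{pn}(M_q)$ so that the blockwise identity above becomes an honest equality of positivity-testable matrices, and confirming the harmless interchange between UCP and CP test maps $\psi$ (one inclusion is trivial, and the other uses that Lemma \ref{pos-lemma} already certifies genuine positivity, after which positivity is stable under every CP map). No genuinely new estimate is required; once the dictionaries and the blockwise computation are in place, the equivalence falls out.
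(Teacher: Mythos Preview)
Your proposal is correct and follows essentially the same route as the paper: identify $M_p(\mcal{S}^d)^+$ with $CP(\mcal{S},M_p)$, reduce positivity in $M_{pn}(\mcal{T})$ to matrix-valued UCP tests via Lemma~\ref{pos-lemma}, and then carry out the blockwise computation showing that $[(\varphi\otimes\psi)(u_{ij})]$ and $\psi^{(pn)}\bigl(\Phi^{(p)}([f_{kl}])\bigr)$ agree up to the canonical shuffle $M_n\otimes M_p\otimes M_q \cong M_p\otimes M_n\otimes M_q$. The only cosmetic difference is that you run both implications at once through the single matching identity, whereas the paper treats $(\Rightarrow)$ and $(\Leftarrow)$ separately and in the forward direction invokes functoriality of $\otimes_{\min}$ (that $id\otimes\varphi$ is UCP) to certify an intermediate positive element; your direct symmetric matching avoids that detour but is otherwise the same argument.
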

\begin{proof}
($\Rightarrow$) Let $[u_{ij}] \in M_n( \mcal{S}\ot_{\min}
  \mcal{T})^+$, $k \geq 1$ and $[f_{rs}] \in
  M_k(\mcal{S}^d)^+$. Suppose $u_{ij} = \sum_p s^{ij}_p \ot
  t^{ij}_p$. We need to show that $X:=[L_{u_{ij}}]^{(k)}([f_{rs}]) \in
  M_k(M_n(T))^+$. We will again appeal to Lemma \ref{pos-lemma}.  Let
  $m \geq 1$ and $\varphi \in \mathrm{UCP}(\mcal{T}, M_m)$.  Then, for
  each $ 1 \leq k, l \leq m$, there exists a unique $\varphi_{kl} \in
  \mcal{T}^d$ such that $\varphi(t) = [\varphi_{kl}(t)]$ for all $ t
  \in \mcal{T}$; and, thus
\begin{eqnarray*}
\varphi^{(kn)} (X) & = & \left[ [\varphi\circ L_{{u}_{ij}}(f_{rs})
  ]_{ij} \right]_{rs}\\ & = & \left[ \left[ [\varphi_{kl}\circ
      L_{{u}_{ij}}(f_{rs})]_{kl} \right]_{ij} \right]_{rs}\\ & = &
\left[ \left[ \left[\varphi_{kl}(\sum_p f_{rs}(s^{ij}_p) t^{ij}_p
      )\right]_{kl} \right]_{ij} \right]_{rs}\\ & = & \left[ \left[
    [f_{rs}(\sum_p s^{ij}_p \varphi_{kl}(t^{ij}_p) )]_{kl}
    \right]_{ij} \right]_{rs}\\ & = & \tau ([f_{rs}]^{(nm)}
\left(\left[ \left[\sum_p s^{ij}_p \varphi_{kl}(t^{ij}_p) \right]_{kl}
  \right]_{ij} \right)),
\end{eqnarray*}
where $ [f_{rs}]^{(nm)} $ denotes the $nm$-amplification of the CP map
$\mcal{S}\ni s \mapsto [f_{rs}(s)] \in M_k$ and $\tau$ is the
canonical flip $\ast$-isomorphism $M_{nmk}\simeq M_n\ot M_m \ot M_k
\simeq M_k \ot M_n\ot M_m \simeq M_{knm}$. Next, since $\ot_{\min}$ is
functorial, $id \ot \varphi : \mcal{S}\ot_{min} \mcal{T} \to
M_m(\mcal{S})$ is UCP, and, note that under the complete order
isomorphism $\theta : M_n \ot M_m \ot_{min} \mcal{S} \simeq M_n
\ot_{min} \mcal{S} \ot M_m$, $\theta ( \left[ \left[\sum_p s^{ij}_p
    \varphi_{kl}(t^{ij}_p) \right]_{kl} \right]_{ij}) = (id \ot
\varphi)^{(n)} ([u_{ij}])$. Thus, $\left[ \left[\sum_p s^{ij}_p
    \varphi_{kl}(t^{ij}_p) \right]_{kl} \right]_{ij} \in
M_{nm}(\mcal{S})^+$ and we conclude that $\varphi^{(kn)}(X) \geq
0$. In particular, $\mcal{S}^d \ni f \mapsto [L_{u_{ij}}(f)] \in
M_n(\mcal{T})$ is CP.

($\Leftarrow$) Conversely, suppose the map $\mcal{S}^d \ni f \mapsto
[L_{u_{ij}}(f)] \in M_n(\mcal{T})$ is CP. Let $k, m \geq 1$, $\varphi
\in \mathrm{UCP}(\mcal{S}, M_k)$ and $\psi \in \mathrm{UCP}(\mcal{T},
M_m)$. We need to show that $(\varphi \ot \psi)^{(n)} ([u_{ij}]) \in
M_{nkm}^+$. As above, there exist $\varphi_{rs} \in \mcal{S}^d$,
$\psi_{uv} \in \mcal{T}^d$ for $ 1 \leq r, s \leq k$ and $ 1 \leq u, v
\leq m$ such that $\varphi(s) = [\varphi_{rs}(s)]$ and
$\psi(t)=[\psi_{uv}(t)]$ for all $s \in \mcal{S}, t \in
\mcal{T}$. Also, since $\varphi$ and $\psi$ are UCP, we have
$[\varphi_{rs}] \in M_k(\mcal{S}^d)^+$ and $[\psi_{uv}] \in
M_m(\mcal{T}^d)^+$. Suppose $u_{ij} = \sum_p s^{ij}_p \ot
t^{ij}_p$. Then

\begin{eqnarray*}
(\varphi \ot \psi)^{(n)} ([u_{ij}]) & = & \left[ \sum_p
    \varphi(s^{ij}_p) \ot \psi(t^{ij}_p) \right]_{ij}\\ & = & \left[
    \left[ \left[ \sum_p \varphi_{rs}(s^{ij}_p) \psi_{uv}(t^{ij}_p)
        \right]_{uv} \right]_{rs} \right]_{ij}\\ & = & \left[ \left[
      \left[ \psi_{uv}(\sum_p \varphi_{rs}(s^{ij}_p)t^{ij}_p)
        \right]_{uv} \right]_{rs} \right]_{ij}\\ & = & \left[ \left[
      \left[ \psi_{uv}( L_{u_{ij}}(\varphi_{rs})) \right]_{uv}
      \right]_{rs} \right]_{ij}\\ & = & \left[ \left[ [
        \psi_{uv}](L_{u_{ij}}(\varphi_{rs})) \right]_{rs}
    \right]_{ij}\\ & = & [ \psi_{uv}]^{(kn)} \left[ \left[
      (L_{u_{ij}}(\varphi_{rs})) \right]_{rs} \right]_{ij}\\ & = & [
    \psi_{uv}]^{(kn)} \circ \tau \left(
  [L_{u_{ij}}]^{(k)}([\varphi_{rs}])\right),
\end{eqnarray*}
where $\tau$ is the canonical complete order isomorphism $ \tau : M_k
\ot M_n \ot_{min} \mcal{T} \simeq M_n \ot M_k \ot_{min}
\mcal{T}$.

Thus, with all the data that we have at our disposal, we
immediately conclude that $(\varphi \ot \psi)^{(n)} ([u_{ij}]) \geq 0$
and hence $[u_{ij}] \in M_n(\mcal{S}\ot_{min}\mcal{T})^+$.
\end{proof}

\begin{lem}\label{lemmas}
\begin{enumerate}
\item Let $\mcal{S}$ and $\mcal{T}$ be operator systems and $u \in
  \mcal{D}^{\max}_1$. Then, there exist $n \geq 1$, $[p_{ij}] \in
  M_n(\mcal{S})^+$ and $[q_{ij}] \in M_n(\mcal{T})^+$ such that $u =
  \sum_{i,j=1}^n p_{ij}\otimes q_{ij}$.
\item Let $\mcal{F}$ be a finite dimensional operator system and $\{
  v_1, \ldots, v_n\}$ be a basis of $\mcal{F}$ with a dual basis $\{
  \delta_1, \ldots, \delta_n\}$. Then $\sum_i \delta_i \otimes v_i \in
  (\mcal{F}^d \otimes_{\min} \mcal{F})^+$.
\end{enumerate}
\end{lem}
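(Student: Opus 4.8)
The plan is to handle the two parts separately, using only the definition of $\mcal{D}^{\max}_1$ together with the compression axiom of a matrix ordering for (1), and Lemma \ref{C-min} together with the bidual identification of Remark \ref{bidual} for (2). For (1), I would first unwind the definition: an element $u \in \mcal{D}^{\max}_1$ has the form $u = X^*([p_{ij}]\otimes[q_{kl}])X$ with $[p_{ij}]\in M_r(\mcal{S})^+$, $[q_{kl}]\in M_s(\mcal{T})^+$, and $X \in M_{rs,1}(\mbb{C})$ a column vector whose entries I denote $x_{ik}$ ($1\le i\le r$, $1\le k\le s$). Expanding the matrix product gives
\[
u = \sum_{i,j=1}^r\sum_{k,l=1}^s \overline{x_{ik}}\,x_{jl}\,(p_{ij}\otimes q_{kl}).
\]
Reindexing by the pairs $a=(i,k)$ and $b=(j,l)$ and setting $n=rs$, this is exactly $u = \sum_{a,b} P_{ab}\otimes Q_{ab}$, where (to avoid clashing with the data) I write $P_{(i,k),(j,l)} = p_{ij}$ and $Q_{(i,k),(j,l)} = \overline{x_{ik}}\,x_{jl}\,q_{kl}$. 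It then remains only to check that $[P_{ab}]\in M_n(\mcal{S})^+$ and $[Q_{ab}]\in M_n(\mcal{T})^+$.

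Both positivity checks reduce to the third axiom of a matrix ordering, namely that $A C A^* \in M_k(\mcal{V})^+$ whenever $C \in M_n(\mcal{V})^+$ and $A \in M_{k,n}(\mbb{C})$. Writing $\mathbf{1}_s \in \mbb{C}^s$ for the all-ones vector and $B = \mathbf{I}_r\otimes \mathbf{1}_s \in M_{rs,r}(\mbb{C})$, a direct entrywise computation shows $[P_{ab}] = B[p_{ij}]B^*$, which is positive by the axiom; the point is simply that the matrix constant in $(k,l)$ is an ampliation of $[p_{ij}]$. For $[Q_{ab}]$, I would first form the ampliation $\widetilde{q} = C[q_{kl}]C^*$ with $C = \mathbf{1}_r\otimes \mathbf{I}_s$, which has entries $\widetilde{q}_{(i,k),(j,l)} = q_{kl}$, and then compress by the diagonal scalar matrix $\Delta = \mathrm{diag}(x_{ik}) \in M_{rs}(\mbb{C})$ to obtain $[Q_{ab}] = \Delta^*\widetilde{q}\,\Delta$, again positive by the axiom. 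The only real subtlety — the step I would flag as the crux of (1) — is the bookkeeping that lets one push all the scalar weights $x_{ik}$ onto the $\mcal{T}$-side through a single diagonal compression while the $\mcal{S}$-side remains a plain ampliation; once the indices are organized this way the positivity is automatic.

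For (2), I would apply Lemma \ref{C-min} with $\mcal{S} = \mcal{F}^d$ (which is finite dimensional and an operator system by Theorem \ref{unit}), $\mcal{T} = \mcal{F}$, $n=1$, and $u = \sum_i \delta_i\otimes v_i$. By that lemma, $u \in (\mcal{F}^d\otimes_{\min}\mcal{F})^+$ if and only if the associated map $L_u: (\mcal{F}^d)^d \to \mcal{F}$, $f \mapsto \sum_i f(\delta_i)v_i$, is completely positive. The key observation is that $L_u$ is nothing but the inverse of the canonical bidual embedding $\iota$ of Remark \ref{bidual}: for $f = \widehat{x}$ with $x\in\mcal{F}$, so that $\widehat{x}(\delta_i) = \delta_i(x)$, duality of the bases gives $L_u(\widehat{x}) = \sum_i \delta_i(x)v_i = x$, whence $L_u\circ\iota = \mathrm{id}_{\mcal{F}}$. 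Since $\iota$ is a complete order isomorphism, so is $L_u = \iota^{-1}$; in particular $L_u$ is completely positive, and the lemma yields the claim. Here the only thing requiring care is the identity $L_u\circ\iota = \mathrm{id}_{\mcal{F}}$, which is immediate from the defining property of the dual basis, so (2) is essentially a formal consequence of the machinery already in place.
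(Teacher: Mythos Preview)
Your proof is correct and follows essentially the same route as the paper's: for (1) you expand $u=X^*(P\otimes Q)X$, reindex by pairs, and realise the resulting $n\times n$ matrices as ampliations and diagonal compressions of $P$ and $Q$ (the paper does the same, placing the scalar weights on the $\mcal{S}$-side rather than the $\mcal{T}$-side and first padding so that $r=s$, but this is cosmetic); for (2) both you and the paper invoke Lemma~\ref{C-min} and identify $L_u$ with the inverse of the bidual isomorphism of Remark~\ref{bidual}.
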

\begin{proof} (1)
By defintion, there exist $n, m \geq 1$, $X \in M_{1, nm}$,
$P=[p_{ij}] \in M_n(\mcal{S})^+$ and $Q = [q_{rs}] \in
M_m(\mcal{T})^+$ such that $u = X(P \ot Q)X^*$. We first note that,
adding suitable zeros to $X$ and $P$ or $Q$ according as $n$ or $m$ is
smaller among them, we can assume that $m = n$. Thus, we have $X =
[x_{11}, x_{12}, \ldots, x_{1n},x_{21}, \ldots, x_{nn}] \in M_{1,
  n^2}$, $P=[p_{ij}] \in M_n(\mcal{S})^+$ and $Q = [q_{rs}] \in
M_n(\mcal{T})^+$ such that $u = X(P \ot Q)X^*$. Consider $\tilde{P}
\in M_{n^2}(\mcal{S})$ and $\tilde{Q}\in M_{n^2}(\mcal{T})$ given by
$\tilde{P}_{(i,r),(j,s)} = x_{ir}p_{ij}\bar{x}_{js}$ and
$\tilde{Q}_{(i,r),(j,s)} = q_{rs}$, $1 \leq i, j , r , s \leq
n$. Clearly $\tilde{P} \in M_{n^2}(\mcal{S})^+$ and $\tilde{Q} \in
M_{n^2}(\mcal{T})^+$ as under the identification $M_{n^2}(\mcal{S})
\simeq M_n(\mcal{S}) \ot M_n $, we have $\tilde{P} = \tilde{X}(P \ot
J_n )\tilde{X}^*$, where $\tilde{X} = \text{diag}( x_{11}, \ldots,
x_{1n}, x_{21}, \ldots, x_{nn}) \in M_{n^2}$, $J_n \in M_n$ is the
positive semi-definite matrix with entries $(J_n)_{ij}= 1 $ for all $
1 \leq i, j \leq n$; and, under the identification $M_{n^2}(\mcal{T})
\simeq M_n \ot M_n(\mcal{T})$, we have $\tilde{Q} = J_n \ot
Q$. Finally
\[
\sum_{(i,r),(j,s)}\tilde{P}_{(i,r),(j,s)} \ot \tilde{Q}_{(i,r),(j,s)}
= \sum_{i,j,r,s,} x_{ir}p_{ij}\bar{x}_{js} \ot q_{rs}= \sum_{i,j,r,s,}
x_{ir}(p_{ij} \ot q_{rs}) \bar{x}_{js}= X(P \ot Q)X^* = u.
 \]
(2) Let $ u = \sum_i \delta_i \ot v_i$. By Lemma \ref{C-min}, we just
 need to show that the map $$(\mcal{F}^d)^d \ni \hat{x} \mapsto
 L_u(\hat{x}) = \sum_i \hat{x}(\delta_i)  v_i = \sum_i
 \delta_i(x)v_i = x \in \mcal{F}$$ is CP, which is precisely the
 complete order isomorphism in Remark \ref{bidual}.
\end{proof}
\vspace*{2mm}

\noindent{\em Proof of Theorem \ref{cpap}:} ($\Rightarrow$) By
functoriality of the tensor products $\ot_{\min}$ and $\ot_{\max}$,
and by nuclearity of matrix algebras, for any operator system
$\mcal{T} \subset B(K)$, we have CP maps
\[
\mcal{S}\ot_{\min}\mcal{T} \stackrel{\varphi_{\lambda} \ot \,
  id_{\mcal{T}}}{\longrightarrow} M_{n_{\lambda}} \ot_{\min} \mcal{T}
= M_{n_{\lambda}} \ot_{\max} \mcal{T} \stackrel{\psi_{\lambda} \ot \,
  id_{\mcal{T}}}{\longrightarrow} \mcal{S} \ot_{\max} \mcal{T}.
\]
In particular, their composition $(\psi_{\lambda} \circ
\varphi_{\lambda}) \ot id_{\mcal{T}}: \mcal{S}\ot_{\min}\mcal{T} \to
\mcal{S}\ot_{\max}\mcal{T} $ is CP. Then, by the characterization of
the norm on an operator system as given in Remark \ref{norm}, one sees
that the norm $||\cdot||_{\mcal{S}\ot_{\max}\mcal{T}}$ induced on
$\mcal{S}\ot\mcal{T}$ by the operator system
$\mcal{S}\ot_{\max}\mcal{T}$ is a sub-cross norm and thus
$(\psi_{\lambda} \circ \varphi_{\lambda}) \ot id_{\mcal{T}} (z)$
converges to $z$ for all $z \in \mcal{S}\ot_{\min}\mcal{T}$. In
particular, $id: \mcal{S}\ot_{\min}\mcal{T} \to
\mcal{S}\ot_{\max}\mcal{T}$ is CP and we obtain
$\mcal{S}\ot_{\min}\mcal{T} = \mcal{S}\ot_{\max}\mcal{T}$.
\vspace*{1mm}

Conversely, suppose $\mcal{S}\ot_{\min} \mcal{T} = \mcal{S}\ot_{\max}
\mcal{T}$ for all operator systems $\mcal{T}$. Let $\mcal{F} \subset
\mcal{S}$ be a finite dimensional operator sub-system ($1_{\mcal{F}} =
1_{\mcal{S}}$). Using the fact that the tensor products $\ot_{\min}$
and $\ot_{\mathrm{sp}}$ coincide, we have
\[
\mcal{F}^d\ot_{\min} \mcal{F} \subset\mcal{F}^d \ot_{\min}\mcal{S} =
\mcal{F}^d\ot_{\max}\mcal{S}.\ \
\]
In particular, for a basis $\{ v_i\}$ of $\mcal{F}$ with dual basis
$\{\delta_i\}$, we see,  by Lemma \ref{lemmas}(2),  that $\sum_i
\delta_i \otimes v_i \in (\mcal{F}^d \ot_{\max} \mcal{S})^+ =
\mcal{C}_1^{\max} (\mcal{F}^d\ot \mcal{S})$.

By Theorem \ref{unit}, fix an $f \in \mcal{F}^d$ that plays the role
of an Archimedean matrix order unit for $\mcal{F}^d$. Since $r f$ is
also an Archimedean matrix order unit for any $r > 0$, we can assume
that $f(e_{\mcal{S}} ) = 1 $. So, for all $\delta > 0$, $$\delta (f
\otimes e_{\mcal{S}}) + \sum_i \delta_i \otimes v_i \in
\mcal{D}_1^{\max},$$ which implies, by Lemma \ref{lemmas}(1), that for
each $\delta > 0$ there exist $n \geq 1$, $[f_{ij}] \in M_n(\mcal{F}^d)^+$ and
$[p_{ij}] \in M_n(\mcal{S})^+$ such that $$ \delta(f\ot e_{\mcal{S}})
+ \sum_i \delta_i \otimes v_i = \sum_{ij}f_{ij} \otimes p_{ij}.$$

 Recall that $[f_{ij}] \in M_n(\mcal{F}^d)^+$ if and only if the map $
 \mcal{F} \ni v \stackrel{\Phi}{\mapsto} [f_{ij}(v)] \in M_n$ is CP.

\noindent \underline{\em Claim:} We can choose $[f_{ij}]$ in such way that the
corresponding map $\Phi$ is UCP.

\noindent \underline{\em Proof of claim:} Let $Q = [q_{ij}] \in M_n$
be the support projection of the positive semi-definite matrix
$[f_{ij}(e_{\mcal{S}})] \in M_n$, i.e., $Q: \mbb{C}^n \to \mbb{C}^n$
is the orthogonal projection onto the subspace
$[f_{ij}(e_{\mcal{S}})]\mbb{C}^n \subset \mbb{C}^n$. Note that
$Y:=[f_{ij}(e_{\mcal{S}})]$ is invertible in $QM_nQ$. Let $Y^{-1}$
denote the inverse of $[f_{ij}(e_{\mcal{S}})]$ in $QM_nQ$. Also, if $p
= \mathrm{rank}\, Q$, let $U^*QU = \mathrm{diag}(I_p, O)$ be the
diagonalization of $Q$, where $U$ is a unitary matrix. Let $X =
[x_{11}, x_{12}, \ldots, x_{1n},x_{21}, \ldots, x_{nn}] \in M_{1,
  n^2}$, be given by $X_{ij} = \delta_{i,j}$. Then,
\begin{eqnarray*}
\sum_{ij}f_{ij} \otimes p_{ij} & = & X([f_{ij}] \ot [p_{ij}]) X^* \\ &
= & X \left(Y^{1/2}U\left(\begin{array}{c} I_p \\ 0 \end{array}
\right) \ot I_n\right) \cdot \left[(I_p, 0) U^* Y^{-1/2} [f_{ij}]
  Y^{-1/2}U \left(\begin{array}{c} I_p \\ 0 \end{array} \right) \ot
  [p_{ij}]\right]\cdot \\ & & \left( Y^{1/2}U\left(\begin{array}{c}
  I_p \\ 0 \end{array} \right)\ot I_n \right)^*X^*
\end{eqnarray*}
and \[
(I_p, 0) U^* Y^{-1/2} [f_{ij}(e_{\mcal{S}})] Y^{-1/2}U
    \left(\begin{array}{c} I_p \\ 0 \end{array} \right) = I_p.
\]
Hence the claim.

So, by Arveson's extension Theorem, there exists a UCP map
$\tilde{\Phi} : \mcal{S} \to M_n$ such that
$\tilde{\Phi}_{|_{\mcal{S}}} = \Phi$.

Now, consider the linear map $\Psi : M_n \to \mcal{S}$ sending $E_{ij}
\mapsto p_{ij}$ for all $1 \leq i, j \leq n$. By (Choi's) Theorem
\ref{choi}, this map is CP. Then, for $v \in \mcal{F}$, we have
\[
\Psi \circ \Phi (v) =
\Psi ([f_{ij}(v)]) = \sum_{ij}f_{ij}(v)p_{ij}.
\]
On the other hand, via the canonical identification between
$\mcal{F}^d \ot \mcal{S} $ and the space of linear transformations
from $\mcal{F}^d$ into $\mcal{S}$, we have $\sum_{ij}f_{ij}(v)p_{ij} =
(\sum_{ij} f_{ij} \ot p_{ij} )(v) = \delta (f \ot e_{\mcal{S}}) (v)+
(\sum_i \delta_i \ot v_i)(v) = \delta f(v) e_{\mcal{S}} + v$ for all $
v \in \mcal{F}$.  Also, since $f \in (\mcal{F}^d)^+$, we have $f :
\mcal{F} \to \mbb{C}$ is UCP (by our choice); so, $||f||_{cb} = 1$
and $ |f(v)| \leq ||v||$ for all $v \in \mcal{F}$.  Consider the
directed set $$\Lambda = \{(\mcal{F}, \delta): \mcal{F} \subset
\mcal{S}, \mathrm{operator
  \ sub\text{-}system\ with}\ \mathrm{dim}\,\mcal{F} < \infty, \delta
>0 \}$$ with respect to the partial order $\leq$ given by
\[
(\mcal{F}_1, \delta_1) \leq (\mcal{F}_2, \delta_2) \Leftrightarrow
 \mcal{F}_1 \subseteq \mcal{F}_2\  \mathrm{and}\ \delta_1 > \delta_2.
\]

Thus, for each $\lambda= (\mcal{F}, \delta) \in \Lambda$, there exist
$n_{\lambda}\geq 1$, $\varphi_{\lambda} \in UCP(\mcal{S},
M_{n_{\lambda}})$ and $\psi_{\lambda}' \in CP( M_{n_{\lambda}},
\mcal{S})$ satisfying $||\psi_{\lambda}' \circ \varphi_{\lambda} (v) -
v|| \leq \delta ||v||$ for all $ v \in\mcal{F}$. In particular,
$||\psi_{\lambda}' \circ \varphi_{\lambda} (v) - v|| \to 0$ for every
$v \in \mcal{S}$. Since each $\varphi_{\lambda}$ is unital,
$\psi_{\lambda}'(I_{n_{\lambda}})$ converges to $e_{\mcal{S}}$. Fix a
state $\omega_{\lambda}$ on $M_{n_{\lambda}}$ and set
\[
\psi_{\lambda} (A) = \frac{1}{||\psi_{\lambda}'||} \psi_{\lambda}'(A)
+ \omega_{\lambda} (A) (e_{\mcal{S}} - \frac{1}{||\psi_{\lambda}'||}
\psi_{\lambda}'(I_{n_{\lambda}})). \] Now, $\psi_{\lambda}\in UCP
(M_{n_{\lambda}}, \mcal{S})$ for all $\lambda \in \Lambda$ and we
still have $||\psi_{\lambda} \circ \varphi_{\lambda} (v) - v|| \to 0$
for every $v \in \mcal{S}$.  \hfill $\Box$
\vspace*{3mm}

\noindent {\bf An example of a nuclear operator system that is not a
  $C^*$-algebra\cite{HP11}}

Let $\mcal{K}_0 = \overline{\mathrm{span}}\{ E_{i,j} : (i, j ) \neq
(1, 1)\} \subset B(\ell^2)$, where $E_{i,j}$ are the standard matrix
units. Consider
\[
\mcal{S}_0 := \{ \lambda I + T : \lambda \in \mbb{C}, T \in \mcal{K}_0\}
\subset B(\ell^2),
\]
the operator system spanned by the identity operator and $\mcal{K}_0$.
In \cite{HP11}, it has been proved that $\mcal{S}_0$ is a $(\min,
\max)$-nuclear operator system and is not unitally completely order
isomorphic to any unital $C^*$-algebra.


\section{Graph operator systems}

Graphs, especially the confusability graph, play an important role in
Shannon's information theory.  In the work of \cite{DSW} on quantum
capacity, they associate an operator system with a graph and show that
many of Shannon's concepts have quantum interpretations in terms of
these {\it graph operator systems.} \index{graph operator system} Many
concepts that we wish to deal with become much more transparent in
this setting.

Given a (finite) graph $G$ with vertices $V=\{1, 2, \ldots, n\}$ and edge set
$\mcal{E} \subset V \times V$ (edges are not ordered; thus, $(i, j)
\in \mcal{E} \Rightarrow (j, i)\in  \mcal{E}$), we have an operator system
$\mcal{S}_G$ given by
\[
\mcal{S}_G:= \text{span}\left\{ \{E_{ij} : (i,j)\in \mcal{E}\} \cup \{
E_{ii} : 1 \leq i \leq n\}\right\}
\subset M_n.
\]

Note that $\mcal{S}_G$ does not depend upon the number of bonds
between two edges and the loops, if any. Thus, we assume that the
graphs that we consider are simple, i.e., they have no loops or
parallel edges.
\vspace*{2mm}

\noindent {\em Dual of a graph operator system.} \index{graph operator
  system! dual} As a vector space, the operator system dual of a graph
operator system $\mcal{S}_G$ can be identified with a vector subspace
of $(M_n)^d$ as
\[
\mcal{S}_G^d = \text{span} \left\{ \{\delta_{ij} : (i,j) \in
\mcal{E}\} \cup \{ \delta_{ii} : 1 \leq i \leq n\} \right\} \subset
(M_n)^d.
\]
But it is not clear what the matrix order should be on this subspace.
We will show that it is {\bf not} the induced order.  That is, while
this is a natural vector space inclusion as operator systems,
\[ \mcal{S}_G^d \not\subset M_n^d  \, \, !\]

\begin{exam}
If we consider the graph $G$ with vertex set $V=\{ 1, 2, \ldots,
n\}$ and edge set $\mcal{E} = \{ (1,2), (2,3), (3,4), \ldots, (n-1,
n), (n, n-1), \ldots, (3,2), (2,1)\}$, then
\[
\mcal{S}_G = \{ \mathrm{tridiagonal\ matrices}\} = \{ [a_{ij}]\in M_n :
  a_{ij} = 0 \ \mathrm{for}\  |i-j| > 1\}.
\]
\end{exam}

\noindent {\bf Observation:} For the above graph $G$, given $f =
\sum_{1 \leq i, j \leq n, |i-j|\leq 1} b_{ij} \delta_{ij} \in
\mcal{S}_G^d$, setting $b_{ij} = 0$ for $|i-j|>1$, one would like to
know the conditions on the tridiagonal matrix $B=[b_{ij}]$ such that
$f \in (\mcal{S}_G^d)^+$.  It turns out that $f \in (\mcal{S}_G^d)^+$
if and only if the matrix $B$ is {\it partially positive,} i.e., we can
choose the off-tridiagonal entries of $B$ to get a positive
semi-definite matrix:

  Indeed, if $f \in (\mcal{S}_G^d)^+$, then $f :\mcal{S}_G \to
  \mbb{C}$ is CP if and only if (by Arveson's extension theorem) it
  extends to a CP map $\tilde{f} : M_n \to \mbb{C}$. Suppose
  $\tilde{f} = \sum_{i,j} \tilde{b}_{ij} \delta_{ij}$ for some
  $\tilde{b}_{ij} \in \mbb{C}$. Then, we have $[\tilde{b}_{ij}] =
  \tilde{f}^{(n)}([E_{ij}]) \in (M_n)^+$ and, since
  $\tilde{f}_{|_{\mcal{S}_G}} = f$, we also see that $\tilde{b}_{ij} =
  b_{ij}$ for all $|i-j| \leq 1$.

Conversely, if $\tilde{B}=[\tilde{b}_{ij}]$ is a positive
semi-definite matrix with $\tilde{b}_{ij} = b_{ij}$ for all $|i-j|
\leq 1$, then, by Theorem \ref{choi}, the Schur multiplication map
\index{Schur multipliers} $S_{\tilde{B}} : M_n \ni [x_{ij}] \mapsto
      [\tilde{b}_{ij}x_{ij}] \in M_n $ is CP. By Theorem \ref{choi}
      again, the map $M_n \ni [x_{ij}] \mapsto \sum_{i,j}x_{ij} \in
      \mbb{C}$ is also CP; hence, the map $M_n \ni [x_{ij}]\mapsto
      \sum_{i,j}\tilde{b}_{ij}x_{ij} \in \mbb{C}$ is CP, i.e.,
      $\tilde{f}:=\sum_{i,j}\tilde{b}_{ij} \delta_{ij} \in (M_n^d)^+$,
      and, since $\tilde{f}_{|_{\mcal{S}_G}} = f$, we have $f \in
      (\mcal{S}_G^d)^+$.

 In general, a tridiagonal matrix $B$ can be extended to a positive
 semi-definite matrix if and only if each $2\times2$ submatrix of $B$
 is positive semi-definite \cite{PPS}. Thus, we have a very clear
 picture in this case of which linear functionals are positive.

\hfill $\Box$

The situation for a general graph is as follows:

Given a graph $G$ then a functional $f: \mcal{S}_G \to \mbb{C}$ has
the form $f = \sum_{(i,j) \in \mcal{E} \text{ or } i=j} b_{i,j}
\delta_{i,j}.$ Set $B= \sum_{(i,j) \in \mcal{E} \text{ or } i=j}
b_{i,j} E_{i,j}.$ Then $f \in (S_G^d)^+$ if and only if there exists an $ M
\in \mcal{S}_G^{\perp}$ such that $B + M \in M_n^+$.

This result can be put into the language of {\it partially defined
  matrices.} \index{partially defined matrices} Notice that when we
have $f: \mcal{S}_G \to \mbb{C}$ and we try to form the density matrix
of this functional, $\big( f(E_{i,j}) \big),$ then because only some
of the matrix units $E_{i,j}$ belong to the space $\mcal{S}_G$ we only
have some of the entries of our matrix specified.  This is what is
meant by a partially defined matrix, i.e., a matrix where only some
entries are given and the rest are viewed as free variables.  Choosing
the matrix $M$ above is tantamount to choosing values for the
unspecified entries.  In the language of partially defined matrices,
this is called {\it completing the matrix.}

Thus, what we have shown is that each functional gives rise to a
partially defined density matrix and that the positive functionals on
$\mcal{S}_G$ are precisely those whose density matrices can be
completed to positive semidefinite matrices.

\section{Three more operator system tensor products}

There are at least three more operator system tensor products that are
important in the operator algebras community and are likely to have
some importance for quantum considerations as well.  In particular,
the one that we call the commuting tensor product is important for the
study of the Tsirelson conjectures \cite{tsirelson1993, jnppsw}.

Moreover, there are many important properties of operator systems that
are equivalnet to the behaviour of the operator system with respect to
these tensor products. We give a very condensed summary of this theory
below.

\subsection{The commuting tensor product $\ot_{\text{c}}$.} \index{tensor product! commuting}
 Let $\mcal{S}$ and
$\mcal{T}$ be operator systems and $\varphi :\mcal{S} \to B(H)$, $\psi
: \mcal{T} \to B(H)$ be UCP maps with commuting ranges, i.e.,
$\varphi(s) \psi (t) = \psi(t) \varphi(s)$ for all $s \in \mcal{S}$,
$t \in \mcal{T}$. Define $\varphi \odot \psi : \mcal{S} \ot \mcal{T}
\to B(H)$ by $(\varphi \odot \psi )(s \ot t) = \varphi(s) \psi (t)$, $
s \in \mcal{S}$, $t \in \mcal{T}$. Consider

$ \mcal{C}^c_n = \{ [u_{ij}] \in M_n(\mcal{S}\ot \mcal{T}) : (\varphi
\odot \psi)^{(n)} ([u_{ij}]) \in B(H^{(n)})^+ , \text{ Hilbert spaces
} H,$ \\ \hspace*{2mm} \hfill $ \varphi \in \mathrm{UCP}(\mcal{S}, B(H)), \psi
\in \mathrm{ UCP}(\mcal{T}, B(H)) \text{ with commuting ranges} \} .  $

\begin{thm} With above set up,
\begin{enumerate}
\item $\{\mcal{C}_n^{\mathrm{c}}\}$ is an operator tensor on $\mcal{S} \ot
  \mcal{T}$ and the consequent operator sytem is denoted by
  $\mcal{S}\ot_{\mathrm{c}}\mcal{T}$.
\item $\ot_{\mathrm{c}}$ is functorial and symmetric.
\end{enumerate}
\end{thm}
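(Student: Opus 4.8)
The plan is to realize the commuting cones $\{\mcal{C}_n^{\mathrm{c}}\}$ as being sandwiched between the maximal and minimal operator system tensor cones constructed above, namely $\mcal{C}_n^{\max} \subseteq \mcal{C}_n^{\mathrm{c}} \subseteq \mcal{C}_n^{\min}$ for every $n$, and to deduce essentially all of part (1) from this sandwich. The verifications that each $\mcal{C}_n^{\mathrm{c}}$ is a cone, that the family is compatible (i.e.\ $A\,\mcal{C}_n^{\mathrm{c}}\,A^* \subseteq \mcal{C}_k^{\mathrm{c}}$ for scalar $A \in M_{k,n}$), and that $e_{\mcal{S}}\ot e_{\mcal{T}}$ is an \emph{Archimedean} matrix order unit are all direct: each $\mcal{C}_n^{\mathrm{c}}$ is an intersection over the defining family of pairs $(\varphi,\psi)$ of preimages of the genuine positive cones $B(H^{(n)})^+$ under the linear maps $(\varphi\odot\psi)^{(n)}$, and these maps satisfy $(\varphi\odot\psi)^{(k)}(AxA^*) = A\,(\varphi\odot\psi)^{(n)}(x)\,A^*$ and $(\varphi\odot\psi)(e_{\mcal{S}}\ot e_{\mcal{T}}) = \varphi(e_{\mcal{S}})\psi(e_{\mcal{T}}) = I$ since $\varphi,\psi$ are UCP; the Archimedean property follows by letting $r\downarrow 0$ in $rI + (\varphi\odot\psi)^{(n)}(u)\ge 0$ and using norm-closedness of $B(H^{(n)})^+$.

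The two inclusions are the real engine. For $\mcal{C}_n^{\mathrm{c}}\subseteq \mcal{C}_n^{\min}$, I would observe that if $\mcal{S}\subseteq B(H_1)$ and $\mcal{T}\subseteq B(H_2)$ concretely, then $\varphi(s) = s\ot I$ and $\psi(t) = I\ot t$ form a UCP pair with commuting ranges in $B(H_1\ot H_2)$, and $\varphi\odot\psi$ is exactly the spatial embedding; hence membership in $\mcal{C}_n^{\mathrm{c}}$ forces positivity in $M_n(\mcal{S}\ot_{sp}\mcal{T}) = M_n(\mcal{S}\ot_{\min}\mcal{T})$. This gives pointedness ($\mcal{C}_n^{\mathrm{c}}\cap(-\mcal{C}_n^{\mathrm{c}})\subseteq \mcal{C}_n^{\min}\cap(-\mcal{C}_n^{\min}) = \{0\}$) and, composing the CP identity $\mcal{S}\ot_{\mathrm{c}}\mcal{T}\to\mcal{S}\ot_{\min}\mcal{T}$ with the CP maps $\varphi\ot\psi$ supplied by the minimal tensor product, it gives axiom (2). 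For $\mcal{C}_n^{\max}\subseteq\mcal{C}_n^{\mathrm{c}}$, the point is that a UCP pair with commuting ranges induces a CP map on $\ot_{\max}$: applying $\varphi\odot\psi$ to a generator $X^*([p_{ij}]\ot[q_{kl}])X$ of $\mcal{D}_1^{\max}$ produces $X^*[\varphi(p_{ij})\psi(q_{kl})]X$, and the matrix $[\varphi(p_{ij})\psi(q_{kl})]$ factors as the product of the two commuting positive matrices $\varphi^{(r)}([p_{ij}])\ot I_s$ and $I_r\ot\psi^{(s)}([q_{kl}])$; a product of commuting positive operators is positive, so the image is positive. Archimedeanizing and amplifying yields $\mcal{C}_n^{\max}\subseteq\mcal{C}_n^{\mathrm{c}}$, which supplies the matrix order unit property and axiom (1) (since $P\ot Q\in\mcal{C}^{\max}\subseteq\mcal{C}^{\mathrm{c}}$).

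For part (2), symmetry is immediate: the flip carries $(\varphi,\psi)$ to $(\psi,\varphi)$ and, because the ranges commute, $(\varphi\odot\psi)(s\ot t) = \varphi(s)\psi(t) = \psi(t)\varphi(s) = (\psi\odot\varphi)(t\ot s)$, so the defining families match and the flip is a complete order isomorphism. Functoriality I would reduce, via $\varphi_1\ot\varphi_2 = (\varphi_1\ot id)\circ(id\ot\varphi_2)$, to one-sided maps, and then to the key lemma: if $\varphi'\colon\mcal{S}\to B(H)$ is CP and $\gamma\colon\mcal{R}\to B(H)$ is UCP with commuting ranges, then $\varphi'\odot\gamma$ is CP on $\mcal{S}\ot_{\mathrm{c}}\mcal{R}$ (testing the target cone against a UCP commuting pair $(\alpha,\gamma)$ reduces $\alpha\circ(\varphi_1\ot id)\odot\gamma$ to the composite $(\alpha\varphi_1)\odot\gamma$, which is of this form).

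The main obstacle is precisely the non-unitality in this key lemma, since the $c$-cone is \emph{defined} only through \emph{unital} commuting pairs. I would resolve it by unitizing $\varphi'$ while preserving the commutation: writing $p = \varphi'(e_{\mcal{S}})\le I$ (after scaling), the defect $I-p$ commutes with $\gamma(\mcal{R})$ (being a polynomial in the element $p\in\varphi'(\mcal{S})$), so the Arveson-style unitization $U(x\oplus y) = \varphi'(x) + \omega(y)(I-p)$ of Corollary \ref{arvcpext} (with $\omega$ a state on $\mcal{S}$) is a UCP map whose range commutes with $\gamma(\mcal{R})$. Then $U\odot\gamma$ is CP on $(\mcal{S}\oplus\mcal{S})\ot_{\mathrm{c}}\mcal{R}$ by definition, and $\varphi'\odot\gamma$ is its composition with the corner embedding; identifying $(\mcal{S}\oplus\mcal{S})\ot_{\mathrm{c}}\mcal{R}$ with $(\mcal{S}\ot_{\mathrm{c}}\mcal{R})\oplus(\mcal{S}\ot_{\mathrm{c}}\mcal{R})$ renders that corner embedding manifestly CP, closing the loop. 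The delicate steps to watch are the preservation of the commuting-range condition through unitization and the distributivity of $\ot_{\mathrm{c}}$ over finite direct sums; an alternative, cleaner but heavier, route is to realize $\mcal{S}\ot_{\mathrm{c}}\mcal{T}$ completely order isomorphically inside $C^*_u(\mcal{S})\ot_{\max}C^*_u(\mcal{T})$ and import functoriality of the $C^*$-maximal tensor product.
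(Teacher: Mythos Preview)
The paper does not actually prove this theorem; it is stated as a result from \cite{KPTT11} with no argument given in the notes. So there is no ``paper's proof'' to compare against, and I will simply assess your proposal on its own merits.

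Your treatment of part (1) is correct and is exactly the natural argument: the sandwich $\mcal{C}_n^{\max}\subseteq\mcal{C}_n^{\mathrm{c}}\subseteq\mcal{C}_n^{\min}$, together with the observation that each $\mcal{C}_n^{\mathrm{c}}$ is an intersection of preimages of $B(H^{(n)})^+$ under the linear maps $(\varphi\odot\psi)^{(n)}$, immediately yields the cone axioms, pointedness, compatibility, the Archimedean order-unit property, and both tensor-product axioms. Symmetry in part (2) is likewise immediate, as you note.

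The gap is in functoriality. First, a remark: in \cite{KPTT11} functoriality is formulated for \emph{UCP} maps, and with that reading your ``key lemma'' is trivially true (if $\varphi'$ is UCP then $(\varphi',\gamma)$ is itself a UCP commuting pair), so the unitization detour is unnecessary. If, however, you take the notes' definition literally and aim for functoriality under arbitrary CP maps, then your primary route has a genuine circularity. You reduce to showing that the corner embedding $\iota\ot id\colon \mcal{S}\ot_{\mathrm{c}}\mcal{R}\to(\mcal{S}\oplus\mcal{S})\ot_{\mathrm{c}}\mcal{R}$ is CP, and you propose to deduce this from the identification $(\mcal{S}\oplus\mcal{S})\ot_{\mathrm{c}}\mcal{R}\cong(\mcal{S}\ot_{\mathrm{c}}\mcal{R})\oplus(\mcal{S}\ot_{\mathrm{c}}\mcal{R})$. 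But checking that $(u_1,0)$ lies in the $c$-cone of the left side amounts to verifying $(\alpha_1\odot\gamma)^{(n)}(u_1)\ge 0$ for every UCP commuting pair $(\alpha,\gamma)$ on $(\mcal{S}\oplus\mcal{S})\times\mcal{R}$, where $\alpha_1(s)=\alpha(s\oplus 0)$ is merely CP---precisely the non-unital statement you set out to prove. (Note that $e\oplus 0$ and $0\oplus e$ are \emph{not} sent to orthogonal projections by a general UCP $\alpha$; already for $\mcal{S}=\C$ one has $C^*_u(\ell^\infty_2)\cong C[0,1]$, not $\C\oplus\C$, so there is no cheap algebraic splitting to exploit.) Your ``heavier'' alternative---embedding $\mcal{S}\ot_{\mathrm{c}}\mcal{T}$ completely order isomorphically into $C^*_u(\mcal{S})\ot_{\max}C^*_u(\mcal{T})$ and importing functoriality of the $C^*$-max tensor product---is indeed the route taken in \cite{KPTT11}, and is how the result is actually established.
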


\begin{rem}
Explicit examples showing non-associativity of the tensor product
$\ot_{\text{c}}$ are not known yet.
\end{rem}

\subsection{The tensor products $\ot_{\text{el}}$ and $\ot_{\text{er}}$.} \index{tensor product! left enveloping}

Let $\mcal{S} \subset B(H)$ and $\mcal{T} \subset B(K)$ be operator
systems. Then the inclusion $\mcal{S} \ot \mcal{T} \subset
B(H)\ot_{\max} \mcal{T}$ induces an operator system on $\mcal{S} \ot
\mcal{T}$, which is refered as {\em enveloping on the left} and is
denoted by $\mcal{S} \ot_{\text{el}} \mcal{T}$. Likewise, the inclusion
$\mcal{S} \ot \mcal{T} \subset \mcal{S}\ot_{\max}B(K) $ induces an
operator system on $\mcal{S} \ot \mcal{T}$ called {\em enveloping on
  the right} and is denoted by $\mcal{S} \ot_{\text{er}} \mcal{T}$. \index{tensor product! right enveloping}

It requires some work to establish that the operator system tensor
products $\ot_{\text{el}}$ and $\ot_{\text{er}}$ of $\mcal{S}$ and $
\mcal{T}$ do not depend (upto complete order isomorphisms) on the
embeddings $\mcal{S} \subset B(H)$ and $\mcal{T} \subset B(K)$.

$\bullet$ $\ot_{\text{el}}$ and $\ot_{\text{er}}$ are both
functorial. However, it is not clear whether they are associative or
not?

$\bullet$ $\mcal{S} \ot_{\text{el}} \mcal{T} \simeq
\mcal{T} \ot_{\text{er}} \mcal{S} $ as operator systems.

\subsection{Lattice of operator system tensor products}\label{lattice}
We have the following lattice structure among the above five operator
sytem tensor products:
\[
\min \leq \text{el, er} \leq \text{c} \leq \max.
\]

Given functorial operator system tensor products $\alpha$ and $\beta$,
an operator system $\mcal{S}$ is said to be $(\alpha, \beta)$-nuclear
provided $\mcal{S} \ot_{\alpha} \mcal{T} \simeq \mcal{S} \ot_{\beta}
\mcal{T}$ as operator spaces for all operator systems $\mcal{T}$.

Recall that a $C^*$-algebra is nuclear if and only if it satisfies
completely positive approximation property (CPAP). We saw in Theorem
\ref{cpap}, that this generalizes to the context of operator system as
well, i.e., an operator system $\mcal{S}$ is $(\min, \max)$-nuclear if
and only if it satisfies completely positive factorization property
(CPFP).  In general, $(\alpha, \beta)$-nuclearity of operator systems
have some futher analogous structural characterizations.

\section{Some characterizations of operator system tensor products}

\subsection{Exact operator systems.} Analogous to the notion of \index{operator system! exact}
exactness for $C^*$-algebras and operator systems, there is a notion
of exactness for operator systems as well - see \cite{KPTT10}.

\begin{thm}\cite[Theorem $5.7$]{KPTT10}
An operator system $\mcal{S}$ is $1$-exact if and only if it is $(\min,
\mathrm{el})$-nuclear.
\end{thm}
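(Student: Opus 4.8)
The plan is to read both conditions as statements about the positive cones of $\mcal{S}\ot\mcal{T}$ and to exploit the very different character of the two implications. Recall that $(\min,\mathrm{el})$-nuclearity of $\mcal{S}$ asserts that for every operator system $\mcal{T}$ the identity on $\mcal{S}\ot\mcal{T}$ is a complete order isomorphism $\mcal{S}\ot_{\min}\mcal{T}\to\mcal{S}\ot_{\mathrm{el}}\mcal{T}$, whereas $1$-exactness asserts that for every unital C*-algebra $\mcal{A}$ and ideal $\mcal{I}\subset\mcal{A}$ the canonical map
\[
(\mcal{S}\ot_{\min}\mcal{A})/\overline{\mcal{S}\ot\mcal{I}}\;\longrightarrow\;\mcal{S}\ot_{\min}(\mcal{A}/\mcal{I})
\]
is a complete order isomorphism (see \cite{KPTT10}). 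From the lattice $\min\le\mathrm{el}\le\max$ of $\S\ref{lattice}$ the inclusion of cones $\mcal{C}^{\mathrm{el}}_n\subset\mcal{C}^{\min}_n$ is automatic, so in the nuclearity statement only the reverse containment is ever in doubt.

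The implication $(\min,\mathrm{el})$-nuclear $\Rightarrow$ $1$-exact is essentially formal. The single structural input is that $\ot_{\mathrm{el}}$ is projective in its right variable: since $\ot_{\max}$ respects complete order quotients and $B(H)$ is injective, restricting the quotient $B(H)\ot_{\max}\mcal{A}\to B(H)\ot_{\max}(\mcal{A}/\mcal{I})$ to the $\mcal{S}$-corner yields, for any ideal $\mcal{I}$,
\[
\mcal{S}\ot_{\mathrm{el}}(\mcal{A}/\mcal{I})\;\cong\;(\mcal{S}\ot_{\mathrm{el}}\mcal{A})/\overline{\mcal{S}\ot\mcal{I}}.
\]
Applying the nuclearity hypothesis to the two operator systems $\mcal{A}/\mcal{I}$ and $\mcal{A}$ to replace every $\ot_{\mathrm{el}}$ by $\ot_{\min}$, this reads precisely
\[
\mcal{S}\ot_{\min}(\mcal{A}/\mcal{I})\;\cong\;(\mcal{S}\ot_{\min}\mcal{A})/\overline{\mcal{S}\ot\mcal{I}},
\]
which is the definition of $1$-exactness. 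I would carry out the ``corner'' restriction carefully, checking that a complete order quotient map restricts to a complete order quotient onto its image.

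For the hard direction, $1$-exact $\Rightarrow$ $(\min,\mathrm{el})$-nuclear, I would use the equivalent finite-dimensional form of exactness from \cite{KPTT10}: every finite-dimensional operator subsystem $\mcal{F}\subset\mcal{S}$ admits, for each $\varepsilon>0$, unital completely positive maps $\varphi:\mcal{F}\to M_n$ and $\psi:M_n\to\mcal{S}$ with $\|\psi\circ\varphi-\iota_{\mcal{F}}\|_{cb}<\varepsilon$, where $\iota_{\mcal{F}}$ is the inclusion and $\psi$ is produced via Arveson's extension Theorem~\ref{arv-ext}. Fix $\mcal{T}$ and a $\min$-positive $u=[u_{ij}]\in M_k(\mcal{S}\ot\mcal{T})$; its finitely many $\mcal{S}$-entries lie in such an $\mcal{F}$. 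Pushing $u$ forward by $\varphi\ot id_{\mcal{T}}$ lands a $\min$-positive element in $M_k(M_n\ot_{\min}\mcal{T})$; since matrix algebras are $(\min,\max)$- and hence $(\min,\mathrm{el})$-nuclear (the Lemma of \cite{HP11,KPTT11}) this element is $\mathrm{el}$-positive, and applying the $\mathrm{el}$-functorial map $\psi\ot id_{\mcal{T}}$ returns an $\mathrm{el}$-positive element $(\psi\varphi\ot id)(u)$ of $M_k(\mcal{S}\ot_{\mathrm{el}}\mcal{T})$ that is within $O(\varepsilon)$ of $u$. Letting $\varepsilon\downarrow 0$ and invoking the Archimedean property of the order unit $e_{\mcal{S}}\ot e_{\mcal{T}}$ (so that an element which is $\mathrm{el}$-positive up to arbitrarily small multiples of the unit is genuinely $\mathrm{el}$-positive) shows $u\in\mcal{C}^{\mathrm{el}}_k$, giving the missing cone inclusion.

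The main obstacle is this last direction, and within it the step requiring the most care is the passage from the abstract quotient form of $1$-exactness to the concrete local factorisation $\mcal{F}\xrightarrow{\varphi}M_n\xrightarrow{\psi}\mcal{S}$ through matrix algebras; this is where the weight of the hypothesis actually sits, and it is exactly the feature that separates $(\min,\mathrm{el})$-nuclearity from the stronger $(\min,\max)$-nuclearity of Theorem~\ref{cpap} (which would instead demand a single net approximating $id_{\mcal{S}}$ globally rather than one inclusion at a time). The subsidiary obstacle is the Archimedeanisation limit: one must confirm that the $\varepsilon$-approximate $\mathrm{el}$-positivity survives passage to the limit, which uses that $\ot_{\mathrm{el}}$ is a genuine operator system tensor product with closed cones and that $\psi\circ\varphi\to\iota_{\mcal{F}}$ in $cb$-norm controls the error after tensoring with $id_{\mcal{T}}$.
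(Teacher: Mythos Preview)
The paper does not prove this theorem; it is merely stated with the citation to \cite[Theorem~5.7]{KPTT10} and no argument is given. So there is no in-paper proof to compare against.

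That said, your outline follows the strategy of \cite{KPTT10}, but there is a genuine slip in the hard direction that is worth flagging because it is exactly the crux of the result. You write $\psi:M_n\to\mcal{S}$ and say $\psi$ is produced by Arveson's extension theorem. Arveson extends the (approximate) inverse $\varphi^{-1}:\varphi(\mcal{F})\to\mcal{F}\subset\mcal{S}\subset B(H)$ to a UCP map $\psi:M_n\to B(H)$; there is no reason for its range to land back in $\mcal{S}$. If $\psi$ really did map into $\mcal{S}$, your local factorisation would be equivalent to CPFP (extend each $\varphi$ from $\mcal{F}$ to $\mcal{S}$ using injectivity of $M_n$, and you have the global net of Theorem~\ref{cpap}), and your argument would prove $(\min,\max)$-nuclearity---which is false for a general $1$-exact operator system. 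The correct mechanism is that $(\psi\ot id_{\mcal{T}})$ carries $M_n\ot_{\max}\mcal{T}$ into $B(H)\ot_{\max}\mcal{T}$, and since $\mcal{S}\ot_{\mathrm{el}}\mcal{T}$ is \emph{by definition} the order inherited from $B(H)\ot_{\max}\mcal{T}$, approximate positivity of $(\psi\varphi\ot id)(u)$ in the ambient $B(H)\ot_{\max}\mcal{T}$ is precisely approximate $\mathrm{el}$-positivity of $u$ in $\mcal{S}\ot\mcal{T}$. This is where $\mathrm{el}$ (rather than $\max$) enters, and your stated distinction---``global net versus one inclusion at a time''---misidentifies the point; the real distinction between exactness and CPFP is the range of $\psi$.

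Your easy direction is fine in outline: projectivity of $\ot_{\max}$ and injectivity of $B(H)$ do yield the quotient behaviour of $\ot_{\mathrm{el}}$ on the right, and replacing $\mathrm{el}$ by $\min$ via the hypothesis gives exactness. The ``corner restriction'' step you flag does require care, as you note.
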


For an operator system $\mcal{S}$, we consider its Banach space dual
$\mcal{S}^*$ and endow it with a matrix ordering as we did for
$\mcal{S}^d$ above. We repeat the process to endow its double dual
$\mcal{S}^{**}$ also with a matrix ordering. It is not very difficult
to see that the canonical embedding $\mcal{S} \subset \mcal{S}^{**}$
is a complete order isomorphism onto its image. Also, it is a fact -
\cite[Proposition $6.2$]{KPTT10}- that $\widehat{e}_{\mcal{S}}$ is an
Archimedean matrix order unit for $\mcal{S}^{**}$.

\subsection{Weak Expectation Property (WEP).}
\begin{lem}\cite[Lemma $6.3$]{KPTT10}
Let $\mcal{S}$ be an operator system. Then the following are
equivalent:
\begin{enumerate}
\item There exists an inclusion $\mcal{S} \subset B(H)$ such that the
  canonical embedding $\iota : \mcal{S} \to \mcal{S}^{**}$ extends to
  a CP map $\tilde{\iota} : B(H) \to \mcal{S}^{**}$.
\item For every operator system inclusion $\mcal{S} \subset \mcal{T}$,
  the map $\iota : \mcal{S} \rightarrow \mcal{S}^{**}$ extends to a CP map
  $\widetilde{\iota} : \mcal{T} \rightarrow \mcal{S}^{**}$
\item The
canonical embedding $\iota : \mcal{S} \subset \mcal{S}^{**}$ factors
through an injective operator system by UCP maps, i.e., there
is an injective operator system $\mcal{T}$ and UCP maps $\varphi_1 :
\mcal{S} \to \mcal{T}$ and $\varphi_2 : \mcal{T} \to \mcal{S}^{**}$
such that $\iota = \varphi_2 \circ \varphi_1$.
\end{enumerate}
\end{lem}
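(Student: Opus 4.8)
The plan is to prove the three conditions equivalent via the cyclic chain $(2) \Rightarrow (1) \Rightarrow (3) \Rightarrow (2)$, relying on two facts already at our disposal: that $B(H)$ is an \emph{injective} operator system, which is exactly Arveson's extension theorem in its CP form (Corollary \ref{arvcpext}), and that the canonical embedding $\iota : \mcal{S} \to \mcal{S}^{**}$ is a complete order isomorphism onto its image sending $e_{\mcal{S}}$ to the Archimedean matrix order unit $\widehat{e}_{\mcal{S}}$ of $\mcal{S}^{**}$ (the cited \cite[Proposition $6.2$]{KPTT10}). In particular $\iota$ is UCP, and since an operator system inclusion shares its unit, any CP extension of $\iota$ is automatically unital.

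The implication $(2) \Rightarrow (1)$ is a mere specialization: fix any concrete realization $\mcal{S} \subset B(H)$ (one exists by the definition of an operator system) and apply the hypothesis of (2) to the inclusion $\mcal{S} \subset \mcal{T} := B(H)$. For $(1) \Rightarrow (3)$ I would take the injective operator system to be $B(H)$ itself, for the inclusion $\mcal{S} \subset B(H)$ furnished by (1); its injectivity is Arveson extension. Set $\varphi_1$ to be the inclusion $\mcal{S} \hookrightarrow B(H)$, which is UCP, and set $\varphi_2 := \tilde{\iota}$, the CP extension provided by (1). The point to verify is that $\varphi_2$ is unital: since $\tilde{\iota}$ extends $\iota$ and $e_{\mcal{S}} = \mathrm{id}_H$ is the unit of $B(H)$, we get $\tilde{\iota}(\mathrm{id}_H) = \iota(e_{\mcal{S}}) = \widehat{e}_{\mcal{S}}$, so $\varphi_2$ is UCP and $\iota = \varphi_2 \circ \varphi_1$ by construction.

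Finally, $(3) \Rightarrow (2)$ is where injectivity does the real work. Given UCP maps $\varphi_1 : \mcal{S} \to \mcal{T}_0$ and $\varphi_2 : \mcal{T}_0 \to \mcal{S}^{**}$ through an injective $\mcal{T}_0$ with $\iota = \varphi_2 \circ \varphi_1$, and an arbitrary operator system inclusion $\mcal{S} \subset \mcal{T}$, I would invoke injectivity of $\mcal{T}_0$ to extend $\varphi_1$ to a UCP map $\widetilde{\varphi_1} : \mcal{T} \to \mcal{T}_0$, and then set $\widetilde{\iota} := \varphi_2 \circ \widetilde{\varphi_1} : \mcal{T} \to \mcal{S}^{**}$. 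This composite is UCP, and restricting to $\mcal{S}$ recovers $\varphi_2 \circ \varphi_1 = \iota$, so it is the desired extension.

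The argument is essentially formal once the two background facts are in place; the only genuine subtlety, and the step I would watch most carefully, is the unitality bookkeeping in $(1) \Rightarrow (3)$ — one must confirm that the CP extension $\tilde{\iota}$ carries the unit of $B(H)$ to the order unit of $\mcal{S}^{**}$, so that the factorization is through UCP (and not merely CP) maps as condition (3) demands. Everything else reduces to a direct appeal to Arveson extension and the definition of injectivity.
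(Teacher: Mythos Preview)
The paper does not supply a proof of this lemma; it is merely stated with a citation to \cite[Lemma~6.3]{KPTT10}. Your argument is correct and is the standard one: the cycle $(2)\Rightarrow(1)\Rightarrow(3)\Rightarrow(2)$ works exactly as you describe, with the only substantive input being the injectivity of $B(H)$ (Arveson extension) and the definition of an injective operator system for the final implication. Your unitality bookkeeping is sound --- since an operator system inclusion preserves the unit, any CP extension of a UCP map along such an inclusion is automatically UCP --- so the factorization in $(1)\Rightarrow(3)$ and the extension in $(3)\Rightarrow(2)$ are both through UCP maps as required.
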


\begin{defn} \index{operator system! Weak Expectation Property (WEP)}
An operator system $\mcal{S}$ is said to have weak expectation
property (WEP) if it satisfies any of the equivalent conditions above.
\end{defn}

\begin{thm}\cite{KPTT10, Han}
Let $\mcal{S}$ be an operator system. Then $\mcal{S}$ possesses
$\mathrm{WEP}$ if and only if it is $(\mathrm{el}, \max)$-nuclear.
\end{thm}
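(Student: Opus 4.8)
The plan is to prove both implications using the defining feature of $\otimes_{\mathrm{el}}$: for any inclusion $\mcal{S}\subset B(H)$ one has a complete order embedding $\mcal{S}\otimes_{\mathrm{el}}\mcal{T}\subset B(H)\otimes_{\max}\mcal{T}$, together with functoriality of $\otimes_{\max}$ and injectivity of $B(H)$. Since $\mathrm{el}\le\max$ in the lattice always holds, the identity map $\mcal{S}\otimes_{\max}\mcal{T}\to\mcal{S}\otimes_{\mathrm{el}}\mcal{T}$ is automatically CP; thus $(\mathrm{el},\max)$-nuclearity is equivalent to the single assertion that every $\mathrm{el}$-positive element of $\mcal{S}\otimes\mcal{T}$ is $\max$-positive, for every operator system $\mcal{T}$. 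Throughout I would use that $\mcal{S}\hookrightarrow\mcal{S}^{**}$ is a complete order isomorphism onto its image and that $\widehat{e}_{\mcal{S}}$ is an Archimedean matrix order unit for $\mcal{S}^{**}$, as recorded in the exactness subsection.

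For the forward implication, assume $\mcal{S}$ has WEP and invoke condition (1) of the WEP characterization: an inclusion $\mcal{S}\subset B(H)$ and a CP extension $\tilde\iota:B(H)\to\mcal{S}^{**}$ of $\iota$. Fix $\mcal{T}$. By functoriality of $\otimes_{\max}$, the map $\tilde\iota\otimes\id_{\mcal{T}}:B(H)\otimes_{\max}\mcal{T}\to\mcal{S}^{**}\otimes_{\max}\mcal{T}$ is CP, and precomposing with the complete order embedding $\mcal{S}\otimes_{\mathrm{el}}\mcal{T}\subset B(H)\otimes_{\max}\mcal{T}$ yields a CP map $\mcal{S}\otimes_{\mathrm{el}}\mcal{T}\to\mcal{S}^{**}\otimes_{\max}\mcal{T}$ acting by $s\otimes t\mapsto\iota(s)\otimes t$. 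On the other hand, functoriality of $\otimes_{\max}$ applied to $\iota$ gives a CP map $J:\mcal{S}\otimes_{\max}\mcal{T}\to\mcal{S}^{**}\otimes_{\max}\mcal{T}$ with the identical action on elementary tensors. If $J$ is a complete order embedding, then every $u\in(\mcal{S}\otimes_{\mathrm{el}}\mcal{T})^+$ has $J(u)\in(\mcal{S}^{**}\otimes_{\max}\mcal{T})^+$, whence $u\in(\mcal{S}\otimes_{\max}\mcal{T})^+$; this supplies the missing cone inclusion and gives nuclearity. The technical heart of this direction is therefore the auxiliary lemma that the bidual inclusion induces a complete order embedding $\mcal{S}\otimes_{\max}\mcal{T}\hookrightarrow\mcal{S}^{**}\otimes_{\max}\mcal{T}$; because $\otimes_{\max}$ is not injective in general, this step genuinely uses the special structure of the bidual and I would establish it by an Archimedeanization together with a weak-$*$ density argument.

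For the converse, assume $\mcal{S}\otimes_{\mathrm{el}}\mcal{T}=\mcal{S}\otimes_{\max}\mcal{T}$ for all $\mcal{T}$ and verify condition (3) of the WEP characterization, namely that $\iota$ factors through an injective operator system by UCP maps. I would follow the template of the proof of Theorem \ref{cpap}, replacing $\otimes_{\min}$ by $\otimes_{\mathrm{el}}$ and matrix algebras by injective systems $B(K)$. Concretely: reduce to finite-dimensional subsystems $\mcal{F}\subset\mcal{S}$, and for a basis $\{v_i\}$ of $\mcal{F}$ with dual basis $\{\delta_i\}$ form the canonical element $\sum_i\delta_i\otimes v_i$ representing the inclusion $\mcal{F}\hookrightarrow\mcal{S}$. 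The $\mathrm{el}$-analogue of Lemma \ref{C-min} identifies the $\mathrm{el}$-positivity of this element with the factorization of $\mcal{F}\hookrightarrow\mcal{S}$ through the injective envelope $B(H)$, so it is $\mathrm{el}$-positive; the hypothesis $\mathrm{el}=\max$ then promotes it to $\max$-positive, and a Lemma \ref{lemmas}(1)-type decomposition $\delta(f\otimes e_{\mcal{S}})+\sum_i\delta_i\otimes v_i=\sum_{ij}f_{ij}\otimes p_{ij}$ produces UCP data from $\mcal{S}$ into an injective system and back into $\mcal{S}^{**}$, after an Arveson extension and the unitalization trick used in Theorem \ref{cpap}. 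Taking a point-weak-$*$ limit over the directed set of pairs $(\mcal{F},\delta)$ assembles these into the required UCP factorization of $\iota$, i.e.\ WEP. I expect the main obstacle to be precisely the $\mathrm{el}$-analogue of Lemma \ref{C-min} (the correspondence between $\mathrm{el}$-positive tensor elements and CP maps extendable to the ambient injective system), since this is exactly the point at which the enveloping-on-the-left structure of $\otimes_{\mathrm{el}}$ must be converted into an extension statement, and it is the mirror of the bidual-$\max$ embedding lemma needed in the forward direction.
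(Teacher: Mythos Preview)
The paper does not supply its own proof of this theorem; it is stated with a citation to \cite{KPTT10, Han} and no argument is given. There is therefore nothing in the paper to compare your proposal against directly.

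That said, your outline tracks the approach of the cited references quite closely. Both technical hinges you isolate are exactly the ones that carry the argument there: for the forward direction, the fact that $\mcal{S}\otimes_{\max}\mcal{T}\hookrightarrow\mcal{S}^{**}\otimes_{\max}\mcal{T}$ is a complete order embedding (this is a named lemma in \cite{KPTT10}, proved by a weak-$*$ approximation and Archimedeanization as you suggest); and for the converse, the $\mathrm{el}$-analogue of Lemma~\ref{C-min}, which identifies $\mathrm{el}$-positive elements of $\mcal{F}^d\otimes\mcal{S}$ with CP maps $\mcal{F}\to\mcal{S}$ that extend to UCP maps on the ambient injective $B(H)$. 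Once those two lemmas are in hand, the remainder is the CPFP-style machinery you describe, mirroring the proof of Theorem~\ref{cpap}. Your identification of where the real work lies is accurate, and the proposal is a correct sketch of the standard proof.
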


\begin{thm}\cite[Theorem $6.11$]{KPTT10}
Let $\mcal{S}$ be a finite dimensional operator system. Then the
following are equivalent:
\begin{enumerate}
\item  $\mcal{S}$ possesses $\mathrm{WEP}$;
\item  $\mcal{S}$ is $(\mathrm{el}, \max)$-nuclear;
\item  $\mcal{S}$ is $(\min, \max)$-nuclear;
\item  $\mcal{S}$ is completely order isomorphic to a $C^*$-algebra; and
\item $\mcal{S} \ot_{\mathrm{el}} \mcal{S}^* = \mcal{S}
  \ot_{\max} \mcal{S}^*$.
\end{enumerate}
\end{thm}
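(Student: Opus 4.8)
The plan is to establish the scheme $(1)\Leftrightarrow(2)\Leftrightarrow(3)\Leftrightarrow(4)$ together with $(3)\Leftrightarrow(5)$, arranging matters so that every arrow except $(5)\Rightarrow(3)$ is either a direct quotation of a result already proved or an immediate consequence of the lattice $\min\le\mathrm{el}\le\max$ from $\S\ref{lattice}$. The equivalence $(1)\Leftrightarrow(2)$ is precisely the theorem proved just above, that $\mathrm{WEP}$ coincides with $(\mathrm{el},\max)$-nuclearity, and holds for arbitrary operator systems. For $(2)\Leftrightarrow(3)$ I would use the (standard) fact that a finite-dimensional operator system is automatically $1$-exact, hence $(\min,\mathrm{el})$-nuclear by the exactness theorem above \cite{KPTT10}; granting this, $\mcal{S}\ot_{\min}\mcal{T}=\mcal{S}\ot_{\mathrm{el}}\mcal{T}$ for all $\mcal{T}$, so $(\mathrm{el},\max)$- and $(\min,\max)$-nuclearity coincide, while $(3)\Rightarrow(2)$ needs only $\min\le\mathrm{el}\le\max$. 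For $(3)\Leftrightarrow(4)$ the forward direction imports the classification of \cite{HP11}, that the only finite-dimensional $(\min,\max)$-nuclear operator systems are direct sums of matrix algebras, which are $C^*$-algebras; conversely a finite-dimensional $C^*$-algebra is $\bigoplus_k M_{n_k}$, each $M_{n_k}$ is $(\min,\max)$-nuclear (the matrix-algebra lemma above), finite direct sums of nuclear systems are nuclear, and nuclearity is a complete-order-isomorphism invariant. Finally $(3)\Rightarrow(5)$ is the lattice once more, specialised to $\mcal{T}=\mcal{S}^*$.

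The substance is $(5)\Rightarrow(3)$: equality of the cones at the \emph{single} test object $\mcal{S}^*$ must already force full $(\min,\max)$-nuclearity. First I would again invoke finite-dimensional exactness to replace $\mathrm{el}$ by $\min$, turning the hypothesis into $\mcal{S}\ot_{\min}\mcal{S}^*=\mcal{S}\ot_{\max}\mcal{S}^*$. The engine is the canonical element $\omega=\sum_i v_i\ot\delta_i\in\mcal{S}\ot\mcal{S}^*$, where $\{v_i\}$ is a basis of $\mcal{S}$ and $\{\delta_i\}$ the dual basis; under the identification of $\mcal{S}\ot\mcal{S}^*$ with linear maps on $\mcal{S}$ it represents $\mathrm{id}_{\mcal{S}}$, and by Lemma \ref{lemmas}(2) together with symmetry of $\ot_{\min}$ (equivalently by Remark \ref{bidual} via Lemma \ref{C-min}) it lies in $(\mcal{S}\ot_{\min}\mcal{S}^*)^+$. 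The hypothesis then places $\omega$ in $\mcal{C}^{\max}_1=(\mcal{S}\ot_{\max}\mcal{S}^*)^+$, so for every $\delta>0$ we have $\delta(e_{\mcal{S}}\ot e_{\mcal{S}^*})+\omega\in\mcal{D}^{\max}_1$, and Lemma \ref{lemmas}(1) produces $n\ge1$, $[p_{ij}]\in M_n(\mcal{S})^+$ and $[q_{ij}]\in M_n(\mcal{S}^*)^+$ with $\delta(e_{\mcal{S}}\ot e_{\mcal{S}^*})+\omega=\sum_{ij}p_{ij}\ot q_{ij}$.

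From this decomposition I would read off a factorisation of the identity through a matrix algebra. By the very definition of the dual matrix ordering, $[q_{ij}]\in M_n(\mcal{S}^*)^+$ says that $\Psi:\mcal{S}\to M_n$, $\Psi(s)=[q_{ij}(s)]$, is CP; and $[p_{ij}]\in M_n(\mcal{S})^+$ makes $\Phi:M_n\to\mcal{S}$, $\Phi(E_{ij})=p_{ij}$, CP by Theorem \ref{choi}. Since $a\ot\phi\in\mcal{S}\ot\mcal{S}^*$ corresponds to the map $s\mapsto\phi(s)a$, the element $\sum_{ij}p_{ij}\ot q_{ij}$ corresponds to $\Phi\circ\Psi$, whereas $\omega$ and $\delta(e_{\mcal{S}}\ot e_{\mcal{S}^*})$ correspond to $\mathrm{id}_{\mcal{S}}$ and to $s\mapsto\delta f(s)e_{\mcal{S}}$, where $f=e_{\mcal{S}^*}$ is the Archimedean order unit of $\mcal{S}^*$ from Theorem \ref{unit}, a faithful state with $\|f\|=1$. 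Hence $\|\Phi\circ\Psi(s)-s\|\le\delta\|s\|$. Normalising $\Phi$ and $\Psi$ to be unital exactly as in the converse half of the proof of Theorem \ref{cpap}, and letting $\delta\downarrow0$ along a net, I obtain $\mathrm{UCP}$ maps $\varphi_\lambda:\mcal{S}\to M_{n_\lambda}$ and $\psi_\lambda:M_{n_\lambda}\to\mcal{S}$ with $\|\psi_\lambda\circ\varphi_\lambda(s)-s\|\to0$ for all $s$; this is the completely positive factorisation property, so Theorem \ref{cpap} delivers $(\min,\max)$-nuclearity, completing $(5)\Rightarrow(3)$.

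The hard part is genuinely $(5)\Rightarrow(3)$, and it rests on two imported ingredients that I would flag explicitly: the automatic $1$-exactness of finite-dimensional operator systems (used to trade $\mathrm{el}$ for $\min$ both here and in $(2)\Leftrightarrow(3)$), and the \cite{HP11} classification behind $(3)\Rightarrow(4)$. The one new manoeuvre is the observation that the canonical element $\omega$, which always records $\mathrm{id}_{\mcal{S}}$ in the $\min$ cone, is forced into the $\max$ cone by hypothesis $(5)$, and that unwinding any $\max$-decomposition of $\omega$ is precisely a completely positive factorisation of the identity; this is the same mechanism as in Theorem \ref{cpap}, now run against the concrete universal test object $\mcal{S}^*$ rather than against all $\mcal{T}$ simultaneously.
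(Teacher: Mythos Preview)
The paper does not give a proof of this theorem; it is simply quoted with the citation \cite[Theorem $6.11$]{KPTT10}, so there is no in-text argument to compare against.

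That said, your proposal is sound and is in fact the natural proof one would reconstruct from the surrounding material in these notes. The implications $(1)\Leftrightarrow(2)$, $(3)\Rightarrow(2)$, $(4)\Rightarrow(3)$, and $(3)\Rightarrow(5)$ are exactly as you say: direct citations or the lattice $\min\le\mathrm{el}\le\max$. The step $(3)\Rightarrow(4)$ genuinely imports the Han--Paulsen classification \cite{HP11}, as you note. Your argument for $(5)\Rightarrow(3)$ is the real content, and it is correct: it is precisely the observation that the converse half of Theorem~\ref{cpap}, as written in these notes, only ever tests against $\mcal{T}=\mcal{F}^d$ for finite-dimensional $\mcal{F}\subset\mcal{S}$; when $\mcal{S}$ itself is finite-dimensional one takes $\mcal{F}=\mcal{S}$ and the single hypothesis $(5)$ suffices to run that machine. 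Your handling of the canonical element $\omega$, Lemma~\ref{lemmas}, and the normalisation to UCP maps tracks that proof line by line.

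One small remark on your use of exactness. You invoke it twice: once for $(2)\Rightarrow(3)$ and once to convert $(5)$ from $\mathrm{el}$ to $\min$. The second use is genuinely needed, because the cone inclusions run $\mcal{C}^{\max}\subset\mcal{C}^{\mathrm{el}}\subset\mcal{C}^{\min}$, so knowing $\omega\in\mcal{C}^{\min}_1$ does not place it in $\mcal{C}^{\mathrm{el}}_1$ without $(\min,\mathrm{el})$-nuclearity. However, you could economise by routing $(2)\Rightarrow(5)\Rightarrow(3)$ instead of $(2)\Rightarrow(3)$ directly: $(2)\Rightarrow(5)$ is specialisation to $\mcal{T}=\mcal{S}^*$, and then exactness is invoked only once, inside $(5)\Rightarrow(3)$. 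This is cosmetic; your scheme is correct as written.
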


\subsection{Operator system local lifting property (OSLLP).} \index{operator system! OSLLP} Let $H$
be an infinite dimensional Hilbert space and $\mcal{S}$ an operator
system. Let $u \in \mathrm{UCP}(\mcal{S}, Q(H))$, where $Q(H)$ is the
Calkin algebra $Q(H) = B(H)/K(H)$. The operator system $\mcal{S}$ is
said to have OSLLP if, for each such $u$, every finite dimensional
operator sub-system $\mcal{F} \subset \mcal{S}$ admits a lifting
$\tilde{u} \in \mathrm{UCP}(\mcal{F}, B(H))$ such that $\pi \circ
\tilde{u} = u_{|_{\mcal{F}}}$, where $\pi : B(H) \to Q(H)$ is the
canonical quotient map.

\begin{thm}\cite[Theorems $8.1$, $8.5$]{KPTT10}
Let  $\mcal{S}$ be an operator system. Then the following are equivalent:
\begin{enumerate}
\item $\mcal{S}$ possesses $\mathrm{OSSLP}$;
\item $\mcal{S}\ot_{\min}B(H)
= \mcal{S}\ot_{\max}B(H)$ for every Hilbert space $H$; and
\item $\mcal{S}$ is $(\min, \mathrm{er})$-nuclear.
\end{enumerate}
\end{thm}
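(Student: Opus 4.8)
The plan is to treat the cleaner equivalence $(2)\Leftrightarrow(3)$ first and then to isolate the genuinely substantive content in the pair $(1)\Leftrightarrow(2)$, which is the operator-system incarnation of Kirchberg's theorem that a $C^*$-algebra has the local lifting property precisely when its minimal and maximal tensor products with $B(H)$ agree.

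For $(2)\Leftrightarrow(3)$ I would exploit two structural facts. First, $B(H)$ is injective, and the right-enveloping product is by definition the structure induced on $\mcal{S}\ot\mcal{T}$ by the inclusion $\mcal{S}\ot\mcal{T}\subset\mcal{S}\ot_{\max}B(K)$ for any embedding $\mcal{T}\subset B(K)$; taking $\mcal{T}=B(H)$ and $K=H$ this inclusion is onto, so $\mcal{S}\ot_{\mathrm{er}}B(H)=\mcal{S}\ot_{\max}B(H)$. Hence $(3)$ specialised to $\mcal{T}=B(H)$ gives $\mcal{S}\ot_{\min}B(H)=\mcal{S}\ot_{\mathrm{er}}B(H)=\mcal{S}\ot_{\max}B(H)$, i.e. $(2)$. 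Conversely, given $(2)$ and an arbitrary $\mcal{T}\subset B(K)$, the minimal product is injective (being spatial), so $\mcal{S}\ot_{\min}\mcal{T}\subset\mcal{S}\ot_{\min}B(K)$ completely order isomorphically, while $\mcal{S}\ot_{\mathrm{er}}\mcal{T}\subset\mcal{S}\ot_{\max}B(K)$ by definition. Since $(2)$ forces $\mcal{S}\ot_{\min}B(K)=\mcal{S}\ot_{\max}B(K)$, both the $\min$- and the $\mathrm{er}$-structures on $\mcal{S}\ot\mcal{T}$ are restrictions of one and the same operator system structure on $\mcal{S}\ot B(K)$ along the common inclusion $\mathrm{id}_{\mcal{S}}\ot\iota$, and therefore coincide. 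As $\mcal{T}$ was arbitrary, $\mcal{S}$ is $(\min,\mathrm{er})$-nuclear.

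For $(1)\Leftrightarrow(2)$ I would route everything through the Calkin quotient $\pi:B(H)\to Q(H)$, relying on the (separately established) fact that $\pi$ is a unital complete quotient map and that $\ot_{\max}$ is projective with respect to such quotients, so that $\mathrm{id}\ot\pi:\mcal{R}\ot_{\max}B(H)\to\mcal{R}\ot_{\max}Q(H)$ is again a complete quotient map for every operator system $\mcal{R}$. It suffices to test $(1)$ and $(2)$ for $H=\ell^2$ and, for $(1)$, on finite dimensional subsystems $\mcal{F}\subset\mcal{S}$; on such an $\mcal{F}$ the correspondence of Lemma \ref{C-min} together with the biduality of Remark \ref{bidual} identifies a UCP map $\mcal{F}\to\mcal{A}$ with a unital positive element of $\mcal{F}^d\ot_{\min}\mcal{A}$. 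This converts the requested lift of $u|_{\mcal{F}}:\mcal{F}\to Q(H)$ into the problem of lifting a positive element through $\mathrm{id}\ot\pi$, while Lemma \ref{pos-lemma} and Arveson's extension theorem (Corollary \ref{arvcpext}) supply the unital normalisation needed to promote a bare positive preimage to a genuine UCP map into $B(H)$. Assuming $(2)$ one replaces the maximal cone by the minimal (spatial) cone exactly where the Calkin quotient forces a choice, and the projectivity of $\ot_{\max}$ then produces the lift; conversely, starting from a $\min$-positive element of $\mcal{S}\ot B(\ell^2)$ one uses a local lift furnished by $(1)$ to assemble a commuting (hence maximal) representation and conclude $\max$-positivity, giving $(1)\Rightarrow(2)$.

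The main obstacle is the implication $(2)\Rightarrow(1)$: converting the abstract equality of tensor norms into an honest completely positive lifting. The delicate point is that $\ot_{\min}$ is not projective, so the lift cannot be read off directly from the minimal side; one must first pass to the maximal side, where projectivity of $\ot_{\max}$ under the Calkin quotient guarantees a positive preimage, and only then transport back using the hypothesis $\min=\max$ against $B(H)$ — taking care that the transport preserves unitality (via Arveson extension and a support-projection argument of the kind used in the proof of Theorem \ref{cpap}) and is compatible with the finite dimensional duality of Lemma \ref{C-min}. Managing this interchange of cones, and verifying that the finite dimensional lifts so obtained are the restrictions of a single UCP map as $\mcal{F}$ ranges over the finite dimensional subsystems of $\mcal{S}$, is where the real work lies.
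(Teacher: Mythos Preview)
The paper does not give a proof of this theorem; it merely states the result with a citation to \cite[Theorems~8.1, 8.5]{KPTT10}. There is therefore nothing in the paper itself against which to compare your argument.

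That said, your outline is in line with the approach of the cited reference. The treatment of $(2)\Leftrightarrow(3)$ is correct and complete: injectivity of $\ot_{\min}$ together with the defining inclusion for $\ot_{\mathrm{er}}$ and the equality $\mcal{S}\ot_{\mathrm{er}}B(K)=\mcal{S}\ot_{\max}B(K)$ is exactly what is needed. For $(1)\Leftrightarrow(2)$ you have correctly identified the mechanism (projectivity of $\ot_{\max}$ under the Calkin quotient, the Lemma~\ref{C-min} correspondence between UCP maps $\mcal{F}\to\mcal{A}$ and positive elements of $\mcal{F}^d\ot_{\min}\mcal{A}$, and Arveson extension to repair unitality) and the genuine difficulty (that $\ot_{\min}$ is not projective, so one must pass through $\ot_{\max}$ and use the hypothesis to transport back). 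Your final sentence slightly overstates what is required: OSLLP only asks for a UCP lift on each finite dimensional $\mcal{F}$ separately, not that these lifts patch together to a single UCP map on $\mcal{S}$, so there is no coherence problem to solve across varying $\mcal{F}$.
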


\subsection{Double commutant expectation property (DCEP).} \index{operator system! DCEP} An
operator system $\mcal{S}$ is said to have DCEP if for every completely
order embedding $\varphi : \mcal{S}\ra B(H)$ there exists a completely
positive mapping $E : B(H) \ra \varphi(\mcal{S})''$ fixing $\mcal{S}$,
i.e., satisfying $ E\circ \varphi = \varphi$.

\begin{thm}\cite{KPTT10}
An operator system $\mcal{S}$ possesses $\mathrm{DCEP}$ if and only if it is
$(\mathrm{el}, \mathrm{c})$-nuclear.
\end{thm}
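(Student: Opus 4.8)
An operator system $\mcal{S}$ possesses DCEP if and only if it is $(\mathrm{el}, \mathrm{c})$-nuclear, i.e. $\mcal{S} \ot_{\mathrm{el}} \mcal{T} = \mcal{S} \ot_{\mathrm{c}} \mcal{T}$ for every operator system $\mcal{T}$.

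The plan is to prove both implications by playing off two descriptions of positivity against each other: $\ot_c$-positivity is tested by \emph{all} commuting UCP pairs on a common $B(K)$, while $\ot_{\mathrm{el}}$-positivity is read off from the single $C^*$-envelope $B(H)\ot_{\max}\mcal{T}$ (after fixing a complete order embedding $\iota:\mcal{S}\hookrightarrow B(H)$; recall $\ot_{\mathrm{el}}$ is independent of this choice). Since $\mathrm{el}\le c$ in the lattice gives $\mcal{C}^c_n\subseteq\mcal{C}^{\mathrm{el}}_n$ automatically, $(\mathrm{el},c)$-nuclearity reduces to the reverse inclusion $\mcal{C}^{\mathrm{el}}_n\subseteq\mcal{C}^c_n$. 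The one elementary observation I would use throughout is that, because $\mcal{C}^{\max}_n\subseteq\mcal{C}^c_n$, any commuting UCP pair $\varphi:\mcal{U}\to B(K)$, $\psi:\mcal{T}\to B(K)$ is already CP on $\mcal{U}\ot_{\max}\mcal{T}$; hence positivity of an element of $M_n(B(H)\ot_{\max}\mcal{T})$ is detected by every commuting pair arising from $B(H)$ and $\mcal{T}$.

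For the forward implication (DCEP $\Rightarrow$ $(\mathrm{el},c)$-nuclear) I would take $[u_{ij}]\in M_n(\mcal{S}\ot_{\mathrm{el}}\mcal{T})^+$, i.e.\ $[u_{ij}]\ge 0$ in $M_n(B(H)\ot_{\max}\mcal{T})$, and a commuting UCP pair $\varphi:\mcal{S}\to B(K)$, $\psi:\mcal{T}\to B(K)$, aiming at $(\varphi\odot\psi)^{(n)}([u_{ij}])\ge 0$. The crux is to promote $\varphi$ to a UCP map on all of $B(H)$ without losing commutation with $\psi(\mcal{T})$. I would form the (faithful) complete order embedding $\Psi=\varphi\oplus\iota:\mcal{S}\to B(K\oplus H)$ and apply DCEP to $\Psi$, obtaining a UCP map $E:B(K\oplus H)\to\Psi(\mcal{S})''$ fixing $\mcal{S}$. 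As $Q=1_K\oplus 0\in\Psi(\mcal{S})'$, the compression $\theta(x)=QxQ|_K$ is a normal unital $*$-homomorphism with $\theta(\Psi(s))=\varphi(s)$, and a short commutant computation (using $y\oplus 0\in\Psi(\mcal{S})'$ for $y\in\varphi(\mcal{S})'$) shows $\theta(\Psi(\mcal{S})'')\subseteq\varphi(\mcal{S})''$. Extending $\Psi$ through $B(H)$ by Arveson to a UCP $\beta:B(H)\to B(K\oplus H)$ with $\beta|_\mcal{S}=\Psi$, I set $\Phi=\theta\circ E\circ\beta:B(H)\to\varphi(\mcal{S})''$. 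Then $\Phi$ is UCP, $\Phi|_\mcal{S}=\varphi$, and $\Phi(B(H))\subseteq\varphi(\mcal{S})''$ commutes with $\psi(\mcal{T})\subseteq\varphi(\mcal{S})'$; so $(\Phi,\psi)$ is a commuting pair on $(B(H),\mcal{T})$ and $(\Phi\odot\psi)^{(n)}([u_{ij}])\ge 0$ by the observation above, which equals $(\varphi\odot\psi)^{(n)}([u_{ij}])$ since $\Phi$ and $\varphi$ agree on $\mcal{S}$.

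For the converse I would assume $(\mathrm{el},c)$-nuclearity and verify DCEP for the fixed embedding $\mcal{S}\subset B(H)$ by choosing $\mcal{T}=\mcal{S}'$, the commutant (a von Neumann algebra, hence an operator system). The inclusions $\mcal{S},\mcal{S}'\hookrightarrow B(H)$ have commuting ranges, so the multiplication map $m:\mcal{S}\ot_c\mcal{S}'\to B(H)$, $s\ot t\mapsto st$, is UCP; by hypothesis $\mcal{S}\ot_c\mcal{S}'=\mcal{S}\ot_{\mathrm{el}}\mcal{S}'$, an operator subsystem of the $C^*$-algebra $B(H)\ot_{\max}\mcal{S}'$. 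Arveson's extension theorem then produces a UCP $\tilde m:B(H)\ot_{\max}\mcal{S}'\to B(H)$ extending $m$, and I define $E_0(x)=\tilde m(x\ot 1)$, which is UCP with $E_0|_\mcal{S}=\mathrm{id}$. Each unitary $1\ot u$ ($u\in\mcal{S}'$) satisfies $\tilde m(1\ot u)=u$ unitary, hence lies in the multiplicative domain of $\tilde m$; therefore $\tilde m(x\ot u)=uE_0(x)=E_0(x)u$, forcing $E_0(x)$ to commute with every unitary of $\mcal{S}'$, so $E_0(B(H))\subseteq\mcal{S}''$. This is precisely the expectation required by DCEP, and running the argument for each complete order embedding (with $\mcal{T}$ its commutant) yields DCEP in full.

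The hard part will be the forward direction: an arbitrary commuting pair tests $\ot_c$-positivity with the $\mcal{S}$-leg an \emph{unrestricted} UCP map into $B(K)$, whereas $\ot_{\mathrm{el}}$ only ``sees'' maps that extend to $B(H)$, and a bare Arveson extension of $\varphi$ to $B(H)$ need not commute with $\psi(\mcal{T})$. The device of embedding via $\Psi=\varphi\oplus\iota$ and compressing by a commutant projection is exactly what converts DCEP's double-commutant expectation into an extension landing in $\varphi(\mcal{S})''$, the one place where commutation with $\psi(\mcal{T})$ is automatic. The residual points—that $\theta$ is a normal $*$-homomorphism into $\varphi(\mcal{S})''$, that $1\ot u$ lies in the multiplicative domain, and that $\ot_{\mathrm{el}}$ is embedding-independent—I would treat as routine, drawing the last from the cited theory.
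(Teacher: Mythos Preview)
The paper does not supply its own proof of this theorem; it merely quotes the result from \cite{KPTT10}. So there is nothing in the present paper to compare your argument against line by line.

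That said, your proposal is correct and is essentially the argument one finds in \cite{KPTT10}. In the forward direction, the device of embedding $\mcal{S}$ via $\Psi=\varphi\oplus\iota$ to force a \emph{faithful} complete order embedding, invoking DCEP to land in $\Psi(\mcal{S})''$, and then compressing by the commutant projection $Q=1_K\oplus 0$ to produce a UCP extension $\Phi:B(H)\to\varphi(\mcal{S})''$ of $\varphi$ is precisely the standard route; the commutation with $\psi(\mcal{T})\subseteq\varphi(\mcal{S})'$ then comes for free, and $\Phi\odot\psi$ is UCP on $B(H)\ot_{\max}\mcal{T}$ because commuting UCP pairs always factor through $\ot_{\max}$. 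In the converse, taking $\mcal{T}=\mcal{S}'$ (a von Neumann algebra), using $(\mathrm{el},c)$-nuclearity to view the multiplication map as UCP on $\mcal{S}\ot_{\mathrm{el}}\mcal{S}'\subseteq B(H)\ot_{\max}\mcal{S}'$, Arveson-extending, and then pushing unitaries $1\ot u$ into the multiplicative domain is again the canonical argument. Your verification that $\theta$ is a $*$-homomorphism on $\Psi(\mcal{S})''$ (because $Q$ is central there) and that $\theta(\Psi(\mcal{S})'')\subseteq\varphi(\mcal{S})''$ (via $y\oplus 0\in\Psi(\mcal{S})'$ for $y\in\varphi(\mcal{S})'$) is accurate.

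One small point worth making explicit in a write-up: in the converse you need $B(H)\ot_{\max}\mcal{S}'$ to be a unital $C^*$-algebra for Arveson extension to apply, which holds because $\mcal{S}'$ is a $C^*$-algebra and the operator-system $\ot_{\max}$ agrees with the $C^*$-$\ot_{\max}$ on $C^*$-algebras (stated earlier in the chapter).
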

\begin{rem}
In particular, since ${\mathrm{el}} \leq {\mathrm{c}}$, an operator system $\mcal{S}$ is $1$-exact and
possesses $\mathrm{DCEP}$ if and only if it is $(\min, \mathrm{c})$-nuclear. Of
course, it will be desirable to have a better characterization for
$(\min, \mathrm{c})$-nuclearity.
\end{rem}
\begin{defn}
An unordered graph $G = (V, \mcal{E})$ is said to be a chordal graph
if every cycle in $G$ of length greater than $3$ has a chord, or,
equivalently, if $G$ has no minimal cycle of length $\geq 4$.
\end{defn}

\begin{thm}\cite{KPTT11}
If $G$ is a chordal graph, then $\mcal{S}_G$ is $(\min,
\mathrm{c})$-nuclear.
\end{thm}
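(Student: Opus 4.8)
The plan is to lean on the Remark printed just before the theorem, which reduces $(\min,\mathrm{c})$-nuclearity of $\mcal{S}_G$ to two separate properties: that $\mcal{S}_G$ is $1$-exact, and that $\mcal{S}_G$ possesses DCEP. The first is essentially free and uses no chordality. By its very construction $\mcal{S}_G$ is a finite-dimensional operator subsystem of the nuclear $C^*$-algebra $M_n$, and operator subsystems of $M_n$ are $1$-exact --- equivalently, by Theorem $5.7$ of \cite{KPTT10}, they are $(\min,\mathrm{el})$-nuclear. So all of the combinatorial content of the theorem has to enter through DCEP, which by the cited characterization is the same as $(\mathrm{el},\mathrm{c})$-nuclearity.

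To establish DCEP I would exploit the structure theory of chordal graphs rather than attack an arbitrary complete order embedding $\varphi:\mcal{S}_G \to B(H)$ directly. Recall that a graph is chordal if and only if it can be assembled from complete graphs by repeated clique-sums; equivalently, a chordal graph on at least two vertices always has a simplicial vertex $v$ (its neighbours $N(v)$ form a clique), and stripping such vertices off one at a time gives a perfect elimination ordering. This invites an induction on $|V|$. The base case is a complete graph $K_m$, where $\mcal{S}_{K_m} = M_m$; since matrix algebras are $(\min,\max)$-nuclear and $\min \leq \mathrm{el} \leq \mathrm{c} \leq \max$, such an $M_m$ trivially has DCEP. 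For the inductive step I would write $G$ as the clique-sum of $G_1 = G \setminus v$ (chordal, fewer vertices) and the clique on $\{v\}\cup N(v)$, glued along the complete graph on $N(v)$, and show that $\mcal{S}_G$ is the corresponding amalgamated sum of $\mcal{S}_{G_1}$ and $\mcal{S}_{\{v\}\cup N(v)} = M_{|N(v)|+1}$ over the full matrix algebra $\mcal{S}_{N(v)} \cong M_{|N(v)|}$.

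The heart of the argument --- and the step I expect to fight with --- is proving that this amalgamation over a matrix-algebra ``corner'' preserves DCEP (equivalently $(\mathrm{el},\mathrm{c})$-nuclearity), the point being that the base of the amalgamation is always a nuclear full matrix algebra. The combinatorial engine is the positive semidefinite matrix completion theorem for chordal graphs: a $G$-partial matrix all of whose fully specified principal submatrices are positive semidefinite admits a global positive semidefinite completion precisely when $G$ is chordal. Translated through the description of positive functionals on $\mcal{S}_G$ recorded earlier in this section --- namely $f \in (\mcal{S}_G^d)^+$ iff $B + M \in M_n^+$ for some $M \in \mcal{S}_G^{\perp}$ --- this completion property is exactly what lets one glue the data supplied by $\mcal{S}_{G_1}$ and $M_{|N(v)|+1}$ into the commuting-range UCP factorisations, or into the double-commutant expectation $E:B(H)\to \varphi(\mcal{S}_G)''$, required by DCEP.

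Concretely, I would verify that an element of $M_p(\mcal{S}_G \ot \mcal{T})$ lying in the commuting-tensor cone $\mcal{C}^{\mathrm{c}}_p$ restricts compatibly to the commuting cones of the two sub-pieces, and then use chordal completability to assemble these restrictions into a single $\mathrm{el}$-positive element; functoriality of $\ot_{\mathrm{el}}$ and $\ot_{\mathrm{c}}$ keeps the maps coherent across the amalgamation. The genuinely delicate bookkeeping is making the ``glue along a clique'' operation interact correctly with the Archimedeanization built into the definitions of $\ot_{\mathrm{c}}$ and $\ot_{\max}$, so I expect most of the effort to go into controlling those cones --- and into justifying the structural identification of $\mcal{S}_G$ as the amalgamated sum --- rather than into the graph theory, which merely organizes the induction.
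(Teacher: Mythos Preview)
The paper does not supply its own proof of this theorem; it is stated with a bare citation to \cite{KPTT11} and followed immediately by a list of open questions. So there is no in-paper argument to compare against, and any assessment has to be of your outline on its own terms (and, to the extent one can, against what the cited source actually does).

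Your reduction via the preceding Remark to ``$1$-exact $+$ DCEP'' is correct, and the observation that $1$-exactness is automatic for a finite-dimensional subsystem of $M_n$ is fine. The graph theory you invoke (simplicial vertices, perfect elimination orderings, clique-sums, and the Grone--Johnson--S\'a--Wolkowicz positive-semidefinite completion theorem for chordal patterns) is exactly the right combinatorial toolkit, and it is indeed the engine behind the result in \cite{KPTT11}. Where your plan diverges from a proof is precisely the step you flag yourself: you propose to realize $\mcal{S}_G$ as an operator-system amalgamated sum of $\mcal{S}_{G\setminus v}$ and $M_{|N(v)|+1}$ over $M_{|N(v)|}$, and then to show that DCEP is preserved under amalgamation over a full matrix algebra. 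Neither of these two claims is established in your outline, and neither is routine. The first requires identifying the correct universal object in the operator-system category and checking that the pushout cones coincide with the concrete cones of $\mcal{S}_G \subset M_n$; the second is a structural permanence result for DCEP that does not follow from functoriality alone and is not available off the shelf in the paper's framework.

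The argument in \cite{KPTT11} is more direct and avoids the amalgamation machinery entirely: one uses the chordal PSD-completion theorem at the level of matrix cones to show that the inclusion $\mcal{S}_G \hookrightarrow M_n$ interacts well with the commuting tensor product --- roughly, that positivity in $M_p(\mcal{S}_G \ot_{\mathrm{c}} \mcal{T})$ can be witnessed already in $M_p(M_n \ot_{\mathrm{c}} \mcal{T}) = M_p(M_n \ot_{\min} \mcal{T})$, from which $(\min,\mathrm{c})$-nuclearity of $\mcal{S}_G$ follows by restriction. The completion theorem is applied to the \emph{matrix levels} (partial $G$-patterned block matrices with entries in $\mcal{T}$ or $B(H)$), not packaged through an inductive amalgamation. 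If you want to salvage your approach, the concrete task is to prove the permanence statement you isolate; but it is probably less work to run the completion argument directly on the cones, as the source does.
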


\begin{ques}
\begin{enumerate}
\item For which graphs $G$ are the operator systems $\mcal{S}_G$ $(\min,
\mathrm{c})$-nuclear?
\item Consider the graph $G$ consisting of a quadrilateral. Clearly
  $G$ is not chordal. Is the graph operator system $\mcal{S}_G$
  $(\min, \mathrm{c})$-nuclear?
\item Is every graph operator system $(\min, \mathrm{c})$-nuclear?
\item For arbitrary operator system tensor products $\alpha$ and
  $\beta$, which graphs give $(\alpha, \beta)$-nuclear graph operator
  systems?
\item For a graph $G$, study the above problems for the dual operator
  system $\mcal{S}_G^d$ as well.
\item Obtain characterizations for $(\alpha, \beta)$-nuclearity of
  operator systems for the remaining cases.
\item If a graph operator system $\mcal{S}_G$ is $(\alpha,
  \beta)$-nuclear, identify the tensor products $\eta$ and $\xi$ (if
  any) such that $\mcal{S}_G^d$ is $(\eta, \xi)$-nuclear.
\end{enumerate}
\end{ques}

\section{Operator system tensor products and the conjectures of Kirchberg and Tsirelson}
\subsection{Special operator sub-systems of the free group  $C^*$-algebras}

Let $\mbb{F}_n$ be the free group on $n$ generators, say, $\{ g_1,
g_2, \ldots, g_n\}$. For any Hilbert space $H$, any choice of $n$
unitaries $\{ U_1, U_2, \ldots, U_n\}$ in $ B(H)$ gives a (unitary)
representation $\pi : \mbb{F}_n \to B(H)$ of $\mbb{F}_n$ sending $g_i$
to $U_i$ for all $1 \leq i \leq n$. Recall, the full group
$C^*$-algebra \index{group $C^*$-algebra! full} $C^*(\mbb{F}_n)$ is
the closure of the group algebra $\mbb{C}[\mbb{F}_n]$ in the norm
obtained by taking supremum over all (unitary) representations of the
group $\mbb{F}_n$. Let
\[
\mcal{S}_n = \text{span}\{{1}, g_1, \ldots, g_n, g_1^*, \ldots, g_n^*
\} \subset C^*(\mbb{F}_n).
\]
Clearly, $\mcal{S}_n$ is a $(2n+1)$-dimensional operator system.
\subsection{Kirchberg's Conjecture}

A famous conjecture of Kirchberg states that the full group $C^*$-algebra \index{Kirchberg's conjecture}
$C^*(\mbb{F}_\infty)$ has WEP - \cite{Kir94}. It attracts immense
importance from its equivalence with some other important conjectures
in the world of Operator Algebras and now thanks to the work of \cite{jnppsw} we now know that Tsirelson's attempts at determining the possible sets of density matrices for quantum outcomes is also related.

In fact, it is now known that if Tsirelson's conjectures are true then necessarily Kirchberg's conjecture is true. For a physicists perspective on these issues see, \cite{fritz}.

\begin{thm}\label{kirch}\cite{Kir94} The following statements are equivalent:
\begin{enumerate}
\item Connes' Embedding Theorem: Every $II_1$-factor with separable \index{Connes' Embedding Problem}
  predual can be embedded as a subfactor in to the free ultraproduct
  of the hyperfinite $II_1$-factor.
\item $C^*(\mbb{F}_n) \ot_{\mathrm{sp}}C^*(\mbb{F}_n) = C^*(\mbb{F}_n)
  \ot_{C^*\text{-}\max}C^*(\mbb{F}_n)$ for all $ n \geq 1$.
\item $C^*(\mbb{F}_n)$ has $\mathrm{WEP}$ for all $n \geq 1$.
\item $C^*(\mbb{F}_\infty)$ has $\mathrm{WEP}$. (Kirchberg's conjecture)
\item Every $C^*$-algebra is a quotient of a $C^*$-algebra with $\mathrm{WEP}$.
\end{enumerate}
\end{thm}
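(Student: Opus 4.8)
The plan is to establish the five conditions as a single interlocking family of implications, funnelling all of the genuinely deep content through the one nuclear-pair input already available (Theorem \ref{kirchberg}) and the operator-system dictionary for WEP, OSLLP, and DCEP assembled in the preceding sections. I would dispose first of the cheap equivalence $(3) \Leftrightarrow (4)$. The homomorphism $\mbb{F}_\infty \to \mbb{F}_n$ killing the surplus generators, together with the inclusion $\mbb{F}_n \hookrightarrow \mbb{F}_\infty$, exhibits $C^*(\mbb{F}_n)$ as a unital UCP retract of $C^*(\mbb{F}_\infty)$; since WEP (characterized via the factorization in the WEP lemma) passes to such retracts, $(4) \Rightarrow (3)$ is immediate, and the reverse follows by writing $C^*(\mbb{F}_\infty) = \overline{\bigcup_n C^*(\mbb{F}_n)}$ and checking that the weak expectation is stable under this inductive limit.

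Next I would connect $(2)$ and $(3)$ using Theorem \ref{kirchberg}, which says exactly that $C^*(\mbb{F}_n)$ has the local lifting property — equivalently, its generating operator system $\mcal{S}_n$ is $(\min,\mathrm{er})$-nuclear (OSLLP). The standard Kirchberg mechanism, that $A$ with LLP tensored against $B$ with WEP forms a nuclear pair, then yields $(3) \Rightarrow (2)$ by taking both factors equal to $C^*(\mbb{F}_n)$. For $(2) \Rightarrow (3)$ one reduces the $C^*$-tensor coincidence to the operator-system level on $\mcal{S}_n \otimes \mcal{S}_n$ and invokes the theorem WEP $\Leftrightarrow (\mathrm{el},\max)$-nuclearity: the LLP of one factor, combined with the collapse of $\min$ and $\max$, forces a weak expectation on the other.

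The passage to $(5)$ and back to $(1)$ I would route as follows. Every $C^*$-algebra is a quotient of some full free-group algebra $C^*(\mbb{F}_S)$; since $(3)$ supplies WEP for all finite free groups and WEP survives the inductive limits expressing $C^*(\mbb{F}_S)$ over finite subsets of $S$, we obtain WEP covers of arbitrary $C^*$-algebras, giving $(5)$. The arrow $(5) \Rightarrow (1)$ repackages the existence of WEP covers into the matricial (hyperfinite embedding) approximation property for tracial states, closing the loop into Connes' formulation.

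The main obstacle — the step that genuinely earns the name Connes' Embedding Problem — is $(1) \Rightarrow (2)$. This is not formal: I would translate embeddability of every separable $II_1$ factor into the free ultraproduct of the hyperfinite factor into the equality of the maximal and minimal $C^*$-norms on $C^*(\mbb{F}_n) \otimes C^*(\mbb{F}_n)$. The required ingredients are the identification of tracial states on that tensor product with pairs of commuting representations (the GNS/commuting-square picture), the reduction of the norm equality to approximability of such tracial data by finite-dimensional matricial data, and Connes' hyperfinite approximation as the source of those finite models. I would isolate this as a standalone lemma — the $C^*$-level equivalence between the tensor-factorization property and tracial approximability — prove it in full, and only then feed it into the cycle; by contrast, every remaining arrow is essentially bookkeeping with the nuclearity dictionary and Theorem \ref{kirchberg} already in hand.
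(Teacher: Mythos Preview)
The paper does not give its own proof of this theorem; it is stated with attribution to \cite{Kir94} and left unproved. The only commentary the paper offers is that the equivalence $(1)\Leftrightarrow(2)$ is ``the deepest link'', first proved by Kirchberg, and that an essential ingredient is the nuclear-pair result $C^*(\mbb{F}_n)\ot_{\mathrm{sp}}B(H)=C^*(\mbb{F}_n)\ot_{C^*\text{-}\max}B(H)$ (Theorem \ref{kirchberg} in the paper). So there is nothing to compare against beyond that remark, and on that point your outline is in agreement: you correctly isolate $(1)\Rightarrow(2)$ as the genuinely hard step and route the rest through Theorem \ref{kirchberg} and the operator-system dictionary.

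That said, one step of your outline is shakier than you make it sound. For $(3)\Rightarrow(4)$ you propose to write $C^*(\mbb{F}_\infty)=\overline{\bigcup_n C^*(\mbb{F}_n)}$ and assert that WEP is ``stable under this inductive limit''. WEP is not known to pass to arbitrary inductive limits, and you have not indicated what extra structure (e.g.\ compatible conditional expectations onto the finite stages) you would use to force it here. The standard, and much cleaner, route avoids this entirely: $\mbb{F}_\infty$ embeds as a subgroup of $\mbb{F}_2$, and the resulting conditional expectation $C^*(\mbb{F}_2)\to C^*(\mbb{F}_\infty)$ makes $C^*(\mbb{F}_\infty)$ a UCP retract of $C^*(\mbb{F}_2)$, so WEP for $C^*(\mbb{F}_2)$ alone already yields $(4)$. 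Your sketches of $(2)\Rightarrow(3)$ and $(5)\Rightarrow(1)$ are also quite compressed --- each hides a real argument (Kirchberg's LLP/WEP tensor mechanism, and the passage from WEP covers to microstate approximation of traces) --- but at the level of an outline they at least point at the right machinery.
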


To the above list, the techniques of operator system tensor products,
has contributed the following (``seemingly simpler'') equivalent
statements:

\begin{thm} \cite{KPTT10}
The following statements are equivalent:
\begin{enumerate}
\item $C^*(\mbb{F}_\infty)$ has $\mathrm{WEP}$.
\item $\mcal{S}_n$ is $(\mathrm{el}, \mathrm{c})$-nuclear for all $n \geq 1$.
\item $\mcal{S}_n \ot_{\min} \mcal{S}_n = \mcal{S}_n \ot_{\mathrm{c}}
  \mcal{S}_n$ for all $ n \geq 1$.
\item Every $(\min, \mathrm{er})$-nuclear operator system is $(\mathrm{el}, \mathrm{c})$-nuclear.
\item Every operator system possessing $\mathrm{OSLLP}$ possesses $\mathrm{DCEP}$.
\end{enumerate}
\end{thm}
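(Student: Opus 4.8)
The plan is to run the cycle $(1)\Rightarrow(5)\Leftrightarrow(4)\Rightarrow(2)\Rightarrow(3)\Rightarrow(1)$, translating everything through the dictionary already established: OSLLP $\Leftrightarrow$ $(\min,\mathrm{er})$-nuclear, DCEP $\Leftrightarrow$ $(\mathrm{el},\mathrm{c})$-nuclear, WEP $\Leftrightarrow$ $(\mathrm{el},\max)$-nuclear, and $1$-exactness $\Leftrightarrow$ $(\min,\mathrm{el})$-nuclear. Under this dictionary $(4)$ and $(5)$ are word-for-word the same assertion, so $(4)\Leftrightarrow(5)$ is a pure translation. I would first isolate two structural facts about the distinguished systems $\mcal{S}_n$, cited from \cite{KPTT10}: that $\mcal{S}_n$ is always $1$-exact (using that the minimal operator system structure on $\mcal{S}_n$ agrees whether computed inside the full or the reduced group $C^*$-algebra, the latter being exact), and that $\mcal{S}_n$ always has OSLLP (descending from the LLP of $C^*(\mbb{F}_n)$). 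Granting these, $(4)\Rightarrow(2)$ is immediate, since $\mcal{S}_n$ is $(\min,\mathrm{er})$-nuclear and $(4)$ then forces it to be $(\mathrm{el},\mathrm{c})$-nuclear; and $(2)\Rightarrow(3)$ follows because $1$-exactness gives $\mcal{S}_n\ot_{\min}\mcal{S}_n=\mcal{S}_n\ot_{\mathrm{el}}\mcal{S}_n$, which by $(2)$ equals $\mcal{S}_n\ot_{\mathrm{c}}\mcal{S}_n$.

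The two genuinely analytic implications are $(1)\Rightarrow(5)$ and $(3)\Rightarrow(1)$. For $(1)\Rightarrow(5)$ I would follow the Kirchberg template by which an LLP-type object tensored against a WEP-type object collapses $\min$ onto the larger norm. Given $\mcal{S}$ with OSLLP, the OSLLP characterization yields $\mcal{S}\ot_{\min}B(H)=\mcal{S}\ot_{\max}B(H)$ for every $H$. Fixing a faithful embedding $C^*(\mbb{F}_\infty)\subset B(H)$ together with the weak expectation provided by $(1)$, and using functoriality and injectivity of $\ot_{\min}$ together with the lattice inequalities $\min\le\mathrm{c}\le\max$, I would transfer this collapse down to the equality of the minimal and commuting tensor norms of $\mcal{S}$ against $C^*(\mbb{F}_\infty)$, which is exactly the tensorial criterion for DCEP; hence every OSLLP system has DCEP. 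The reverse half $(5)\Rightarrow(1)$, needed to close the loop, is short: $C^*(\mbb{F}_\infty)$ itself has OSLLP, so $(5)$ endows it with DCEP, and for a $C^*$-algebra DCEP coincides with WEP, giving $(1)$.

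The crux is $(3)\Rightarrow(1)$, where I expect the main obstacle. The strategy is to reinterpret $(3)$ as an equality of $C^*$-tensor norms on $C^*(\mbb{F}_n)\odot C^*(\mbb{F}_n)$ and then invoke Kirchberg's equivalence \ref{kirch}$(2)$ (recalling $\ot_{\mathrm{sp}}=\ot_{\min}$ for $C^*$-algebras). This rests on two identifications of the cones on $\mcal{S}_n\ot\mcal{S}_n$. First, $\mcal{S}_n\ot_{\min}\mcal{S}_n$ is the restriction of $C^*(\mbb{F}_n)\ot_{\min}C^*(\mbb{F}_n)$; this is immediate from injectivity of the minimal operator system tensor product. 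Second, $\mcal{S}_n\ot_{\mathrm{c}}\mcal{S}_n$ must be identified with the restriction of $C^*(\mbb{F}_n)\ot_{\max}C^*(\mbb{F}_n)$. The inclusion of the commuting cone into the restricted maximal cone is easy, since a commuting pair of $\ast$-representations of $C^*(\mbb{F}_n)$ restricts to a commuting pair of UCP maps on $\mcal{S}_n$.

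The hard content is the reverse inclusion, which requires dilating an \emph{arbitrary} commuting pair of UCP maps $\varphi,\psi:\mcal{S}_n\to B(K)$ to a commuting pair of $\ast$-representations of $C^*(\mbb{F}_n)$: I would extend $\varphi$ to $C^*(\mbb{F}_n)$ by Arveson's theorem and Stinespring-dilate it to a representation $\pi$ (whose values on the generators are automatically unitary, cf. Lemma \ref{lemma1}), and then lift $\psi$ through a commutant-lifting argument inside the minimal dilation to a representation commuting with $\pi$, recovering $\varphi\odot\psi$ as a compression and hence preserving positivity. With this identification in hand, $(3)$ says exactly that the minimal and maximal $C^*$-norms agree on the generating operator system $\mcal{S}_n\ot\mcal{S}_n$, and the remaining step is to promote this to equality on all of $C^*(\mbb{F}_n)\odot C^*(\mbb{F}_n)$; this is legitimate because, via the correlation/state-space interpretation of the two $C^*$-norms, the discrepancy between $\min$ and $\max$ is entirely encoded by states on $\mcal{S}_n\ot\mcal{S}_n$, so agreement of the matrix order there propagates to the full tensor product. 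I expect the commuting dilation of the pair $(\varphi,\psi)$ and this final propagation to be the two delicate points; every other implication is bookkeeping with the established dictionary and the lattice $\min\le\mathrm{el},\mathrm{er}\le\mathrm{c}\le\max$.
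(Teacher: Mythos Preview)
The paper does not supply a proof of this theorem; it is stated with the citation \cite{KPTT10} and no argument is given in the text. So there is no in-paper proof to compare your proposal against.

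That said, your outline is broadly consistent with the strategy in \cite{KPTT10}. The dictionary you set up is correct, the facts that $\mcal{S}_n$ is $1$-exact and has OSLLP are indeed the structural inputs used there, and you have correctly located the two substantive steps as (i) the identification of $\mcal{S}_n\ot_{\mathrm{c}}\mcal{S}_n$ with the restriction of $C^*(\mbb{F}_n)\ot_{\max}C^*(\mbb{F}_n)$ via a commutant-lifting/dilation argument, and (ii) the propagation from equality on $\mcal{S}_n\ot\mcal{S}_n$ to equality on the full $C^*$-tensor product. Two cautions, though. First, in your $(1)\Rightarrow(5)$ you assert that equality of $\min$ and $\mathrm{c}$ against $C^*(\mbb{F}_\infty)$ ``is exactly the tensorial criterion for DCEP''; this single-test characterization of DCEP is itself a theorem of \cite{KPTT10} (not the definition recorded in these notes, which is $(\mathrm{el},\mathrm{c})$-nuclearity against \emph{all} $\mcal{T}$), so you are invoking a nontrivial result there rather than proving it. Second, your justification for the propagation step in $(3)\Rightarrow(1)$ via ``the correlation/state-space interpretation'' is too vague as stated; in \cite{KPTT10} this step is handled by the observation that $\mcal{S}_n$ contains the unitary generators of $C^*(\mbb{F}_n)$ (it ``contains enough unitaries''), which is what allows a completely positive order isomorphism on the generating operator system to extend to a $\ast$-isomorphism of the enveloping $C^*$-tensor products. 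With those two points tightened, your cycle is the right one.
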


In the above list of equivalences, the equivalence $(1) \leftrightarrow
(2)$ in Theorem \ref{kirch} is the deepest link and was proved first
by Kirchberg in \cite{Kir93}. An essential part of the proof of this
equivalence involved the followng deep theorem due to Kirchberg:

\begin{thm}\cite{Kir93}
$C^*(\mbb{F}_n) \ot_{\mathrm{sp}} B(H) = C^*(\mbb{F}_n)
  \ot_{C^*\text{-}\max}B(H)$ for all $n \geq 1$ and for all Hilbert spaces $H$.
\end{thm}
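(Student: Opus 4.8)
This assertion is, up to the identification $\ot_{\mathrm{sp}} = \ot_{\min}$ for $C^*$-algebras, a verbatim restatement of Theorem \ref{kirchberg} in the special case $B = B(H)$: to say that $C^*(\F_n) \ot_{\mathrm{sp}} B(H)$ and $C^*(\F_n) \ot_{C^*\text{-}\max} B(H)$ coincide is exactly to say that $(C^*(\F_n), B(H))$ is a nuclear pair. So the plan is to re-run the proof of Theorem \ref{kirchberg}. Since $\|t\|_{\min} \le \|t\|_{\max}$ always holds, the entire content is the reverse contractivity of the identity map $\mathrm{id} : C^*(\F_n) \ot_{\min} B(H) \to C^*(\F_n) \ot_{\max} B(H)$.

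First I would fix the unitary generators $U_0 = 1, U_1, \ldots, U_n$ of $A_n = C^*(\F_n)$ and pass to the subspace $E = \mathrm{span}\{U_j \ot V : 0 \le j \le n,\ V \in B(H)\} \subset A_n \ot_{\min} B(H)$, which contains $1 \ot 1$. As in the proof of Theorem \ref{kirchberg}, it then suffices to show that $\mathrm{id}_E : E \to E \subset A_n \ot_{\max} B(H)$ is contractive; concretely, given $t = \sum_{j=0}^n U_j \ot a_j$ with $\|t\|_{\min} \le 1$, I must establish $\|t\|_{\max} \le 1$.

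The heart of the argument is the factorization supplied by Lemma \ref{Uamin}: the hypothesis $\|t\|_{\min} \le 1$ yields $b_j, c_j \in B(H)$ with $a_j = b_j^* c_j$, $\|\sum_j b_j^* b_j\| \le 1$ and $\|\sum_j c_j^* c_j\| \le 1$. With this decomposition in hand, for any pair of representations $\pi : A_n \to B(K)$ and $\sigma : B(H) \to B(K)$ with commuting ranges I would write $(\pi \cdot \sigma)(t) = \sum_j \pi(U_j)\sigma(a_j) = \sum_j \sigma(b_j)^* \pi(U_j) \sigma(c_j)$, and apply the operator Cauchy--Schwarz inequality of Lemma \ref{opcs} with the unitarity of each $\pi(U_j)$ absorbed into the left factor, obtaining $\|(\pi \cdot \sigma)(t)\| \le \|\sum_j \sigma(b_j)^*\sigma(b_j)\|^{1/2}\, \|\sum_j \sigma(c_j)^*\sigma(c_j)\|^{1/2} \le 1$. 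Taking the supremum over all such $(\pi, \sigma)$ gives $\|t\|_{\max} \le 1$, as desired.

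The step I expect to be the genuine obstacle is Lemma \ref{Uamin}, so it is worth recalling where its force comes from: one encodes the $a_j$ as a map $u : \ell^\infty_{n+1} \to B(H)$, $u(e_j) = a_j$, and must show $\|u\|_{cb} \le \|t\|_{\min}$. This rests on the universal property of the \emph{full} free group $C^*$-algebra --- every choice of unitaries $V_0 = 1, V_1, \ldots, V_n \in M_m$ is realized by a $\ast$-homomorphism $A_n \to M_m$ --- which is precisely the place where freeness is indispensable, and then on the fundamental factorisation Theorem \ref{ff} to split $u(e_j) = V^*\pi(e_j)W$ through a representation and two contractions. Everything downstream of these two inputs is routine $C^*$-algebraic manipulation.
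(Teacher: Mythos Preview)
Your proposal is correct and follows the paper's own proof essentially verbatim: the paper proves this result as Theorem~\ref{kirchberg} in Chapter~1 via exactly the reduction to $E = \mathrm{span}\{U_j \ot B(H)\}$, the factorisation Lemma~\ref{Uamin}, and the operator Cauchy--Schwarz Lemma~\ref{opcs} that you outline. The restatement in Chapter~3 carries no separate proof in the paper; it merely cites \cite{Kir93} and remarks that \cite{fp} obtains an easier argument using operator-system quotients.
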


$\bullet$ Quite surprisingly, making use of the notion of {\em
  quotient of an operator system}, a relatively much easier proof of
Kirchberg's Theorem has been obtained in \cite{fp}.

\subsection{Quotient of an operator system.}

The idea of quotient of an operator system comes from the requirement
that given operator systems $\mcal{S}$ and $ \mcal{T}$, and a UCP
$\varphi : \mcal{S} \to \mcal{T}$, we would like to have a quotient
operator system $\mcal{S} / \mathrm{ker}\varphi$ such that the
canonical quotient map $q : \mcal{S}\to \mcal{S} /
\mathrm{ker}\varphi$ is UCP and so is the factor map $\tilde{\varphi}
: \mcal{S} / \mathrm{ker}\varphi \to \mcal{T}$. \index{operator system! quotient}

It also gives a way to explain the duals of graph operator systems, they are actually quotients of the matrix algebra.

\begin{defn}[Quotient map]
Let $\mcal{S}$ and $\mcal{T}$ be operator systems. Then a UCP map
$\varphi : \mcal{S}\ra \mcal{T}$ is said to be a quotient map if
$\varphi$ is surjective and the canonical factor map $\tilde{\varphi}
: \mcal{S}/\mathrm{ker}\varphi \to \mcal{T}$ is a complete order
isomorphism. In other words, $\mcal{T}$ is a quotient of $\mcal{S}$.
\end{defn}

\begin{exam}
Let $\mcal{T}_{n+1} = \{\mathrm{tridiagonal }\, (n+1) \times (n+1)\,
\mathrm{ matrices}\} \subset M_{n+1}$ and $$\mcal{K}_{n+1} =\{\mathrm{trace}
\  0\ \mathrm{ diagonal}\  (n+1) \times (n+1)\ \mathrm{ matrices }\} \subset
M_{n+1}.$$ Consider $\varphi : \mcal{T}_{n+1} \to \mcal{S}_n$ given by
$\varphi (E_{ii}) = \frac{1}{n} 1$, $\varphi (E_{i, i+1}) = \frac{1}{n}
g_i$, $\varphi (E_{i+1, i}) = \frac{1}{n} g_i^*$. Clearly $\varphi$ is
onto and UCP. Also, $\mathrm{ker}\,\varphi = \mcal{K}_{n+1}$. It is a fact
that $\varphi$ is a quotient map in above sense. In particular,
$\mcal{S}_n$ is completely order isomorphic to $\mcal{T}_{n+1} /
\mcal{K}_{n+1}$.

\end{exam}


\chapter{Quantum Information Theory}

\bigskip \noindent
{\bf Chapter Abstract:}

\bigskip \noindent
{\it In this chapter, based on the lectures by Andreas Winter, we survey four different areas in quantum information theory in which ideas from the theory of operator systems and operator algebras play a natural role. The first is the problem of  zero-error communication over quantum channels, which makes use of concepts from operator systems and quantum error correction. Second, we study the strong subadditivity property of quantum entropy and the central role played by the structure theorem of matrix algebras in understanding its equality case. Next, we describe different norms on quantum states and the corresponding induced norms on quantum channels. Finally, we look at matrix-valued random variables, prove Hoeffding-type tail bounds  and describe the applications of such matrix tail bounds in the quantum information setting.}

\section{Zero-error Communication via Quantum Channels}\index{zero-error! communication}

A quantum channel\index{quantum channel}\index{quantum channel} $\cT$ is a completely positive, trace-preserving (CPTP)\index{CPTP} map from the states of one system ($A$) to another ($B$). Specifically, $\cT : \cB(\cH_{A})\rightarrow \cB(\cH_{B})$ is a CPTP map from the set of bounded linear operators on Hilbert space $\cH_{A}$ to operators in $\cH_{B}$. In this section we focus on the problem of zero-error communication using quantum channels. We begin with a brief review of preliminaries including the idea of purification and the Choi-Jamiolkowski isomorphism\index{Choi-Jamiolkowski! isomorphism}.

\begin{defn}[Purification]\label{def:purification}\index{purification}
Given any positive semi-definite operator $\rho \geq 0$ in $\cB(\cH_{A})$, suppose there exists a vector $|v\rangle \in \cH_{A} \otimes \cH_{A'}$, where $\cH_{A'}$ is simply an auxiliary Hilbert space, such that $\tr_{A'}[|v\rangle\langle v|] = \rho$. The vector $| v\rangle$ in the extended Hilbert space is said to be a {\bf purification} of the operator $\rho$.
\end{defn}
When $\tr[\rho] = 1$, that is, when $\rho$ is a valid quantum state, then the corresponding vector $|v\rangle$ is a pure state of the extended Hilbert space, satisfying $\langle v |v\rangle =1$.

In order to obtain a purification of a state $\rho$, it suffices to have the dimensions of the auxiliary space $\cH'$ to be equal to the rank of $\rho$. To see this, suppose $\rho$ has a spectral decomposition $\rho = \sum_{i}r_{i} |e_{i}\rangle\langle e_{i}|$, then a purification of $\rho$ is simply given by
\[ |v\rangle = \sum_{i}\sqrt{r_{i}}|e_{i}\rangle\otimes|e_{i}\rangle.\]
The purification of a given state $\rho$ is not unique. Suppose there exists another purification $|w\rangle \in \cH_{A}\otimes\cH_{B}$, via a different extension of the Hilbert space $\cH_{A}$, such that $\tr_{B}[|w\rangle\langle w|] = \rho$. Then, there exists a unique isometry $U: \cH_{A'} \rightarrow \cH_{B}$ such that~\footnote{A remark on notation: throughout this chapter we use $I$ to denote the identity operator and $\Id$ to denote the identity map, for example, $\Id_{A}:\cB(\cH_{A})\rightarrow \cB(\cH_{A})$.}
\[ |w\rangle = (\Id \otimes U)|v\rangle.\]
This isometry can in fact be obtained from the Choi uniqueness theorem (Theorem~\ref{choi2}).

We next define the Choi-Jamiolkowski matrix corresponding to a quantum channel $\mathcal{T}: \cB(\cH_{A}) \rightarrow \cB(\cH_{B})$. This provides an alternate way to obtain the Stinespring dilation\index{Stinespring dilation} of the CPTP map $\cT$, discussed earlier in Theorem 1.1.8.
\begin{defn}[Choi-Jamiolkowski Matrix]\index{Choi-Jamiolkowski! matrix}
Let $\{|i\rangle\}$ denote an orthonormal basis for $\cH_{A}$. Consider the (non-normalized) maximally entangled state: $|\Phi_{AA'}\rangle = \sum_{i}|ii\rangle \in \cH_{A} \otimes \cH_{A'}$, with $\cH_{A'}$ chosen to be isomorphic to $\cH$.
The Choi-Jamiolkowski matrix corresponding to a CPTP map $\cT: \cB(\cH_{A})\rightarrow \cB(\cH_{B})$ is then defined as
\begin{equation}
 J_{AB} := (\Id \otimes \cT)\Phi,
 \end{equation}
where, the operator $\Phi = |\Phi\rangle \langle \Phi| \in \cB(\cH_{A} \otimes \cH_{A'})$ is simply
\[ \Phi := \sum_{i,j}|i\rangle\langle j | \otimes |i\rangle\langle j| = \sum_{i,j}|ii\rangle\langle jj|. \]
\end{defn}
Complete positivity of $\cT$ implies that $J \geq 0$, and the trace-preserving condition on $\cT$ implies,
\begin{equation}
 \tr_{B}[J] = I_{A} = \tr_{A'}(\Phi). \label{eq:CJ_pure}
\end{equation}
Pick a purification $|G_{ABC}\rangle \in \cH_{A}\otimes\cH_{B}\otimes\cH_{C}$ of the matrix $J$. Thus, $|G_{ABC}\rangle\langle G_{ABC}|$ is a rank-one, positive operator satisfying $\tr_{C}[|G_{ABC}\rangle\langle G_{ABC}|] = J_{AB}$. $\cH_{C}$ is any auxiliary Hilbert space whose dimension is ${\rm dim}(\cH_{C}) \geq {\rm rank}(J)$. Then, Eq.~\eqref{eq:CJ_pure} implies that
\[ \tr_{BC}[|G_{ABC}\rangle\langle G_{ABC}|] = I_{A} = \tr[\Phi_{AA'}].\]
Therefore, there exists an isometry $U : \cH_{A'} \rightarrow \cH_{B}\otimes \cH_{C}$ such that
\[ |G_{ABC}\rangle = (\Id \otimes U)|\Phi_{AA'}\rangle. \]
Since the Choi matrix $J_{AB}$ corresponding to the map $\cT$ is unique, we have for any $X \in \cB(\cH_{A})$,
\[\cT(X) = \tr_{C}[UXU^{\dagger}].\]
The isometry thus corresponds to the Stinespring dilation of the map $\cT$. Furthermore, we also obtain the Choi-Kraus decomposition\index{Choi-Kraus! decomposition} (Theorem 1.2.1) of the map $\cT$ by noting that the isometry can be rewritten as  $U = \sum_{i}E_{i}\otimes |v_{i}\rangle$, where $\{|v_{i}\rangle\}$ is an orthonormal basis for $\cH_{C}$. Thus,
\[\cT(X) = \tr_{C}[UXU^{\dagger}] = \sum_{i}E_{i}XE_{i}^{\dagger}, \; \forall \; X \in \cB(\cH_{A}).\]
The non-uniqueness of the Kraus representation is captured by the non-uniqueness of the choice of basis $\{|v_{i}\rangle\}$.

In physical terms, this approach to the Choi-Kraus decomposition offers an important insight that CP maps can in fact be used to represent noisy interactions in physical systems. Examples of such noise processes include sending a photon through a lossy optic fibre or a spin in a random magnetic field. Any physical noise affecting a system $A$ is typically thought of as resulting from unwanted interaction with an {\it environment} which is represented by the system $C$ here. The total evolution of the system $+$ environment is always {\it unitary} (a restriction of the isometry $U$) and the noise results from the act of performing the partial trace which physically corresponds to the fact that we do not have access to complete information about the environment.

\subsection{Conditions for Zero-error Quantum Communication}

In the context of quantum communication, a quantum channel described by the CPTP map $\cT: \cB(\cH_{A})\rightarrow \cB(\cH_{B})$ represents a {\it process}: it takes as input, states $\rho \in \cB(\cH_{A})$ and produces corresponding states $\cT(\rho) \in \cB(\cH_{B})$ as output. It could model an information transmission process that transmits some set of input signals from one location to another, or, it could model a data storage scenario in which some information is input into a noisy memory at one time to be retrieved later.

A classical channel\index{classical channel} $\cN$, in the Shannon formulation, is simply characterized by a kernel or a probability transition function $N(Y|X)$. $\{N(y|x) \geq 0\}$ are the conditional probabilities of obtaining output $y\in Y$ given input $x \in X$, so that, $\sum_{y}N(y|x) = 1$. $X$ and $Y$ are often called the input and output {\it alphabets} respectively. The probabilities $\{N(y|x)\}$ thus completely describe the classical channel $\cN$.

The quantum channel formalism includes a description of such classical channels as well. For example, let $\{|x\rangle\}$ denote an orthonormal basis for $\cH_{A}$ and $\{|y\rangle\}$ denote an orthonormal basis for $\cH_{B}$, where the labels $x$ and $y$ are drawn form the alphabets $X$ and $Y$ respectively. Then, corresponding to the channel $\cN \equiv \{N(y|x)\}$, we can construct the following map on states $\rho \in \cB(\cH_{A})$:
\[ \cT(\rho) = \sum_{x,y}N(y|x) \; |y\rangle\langle x| (\rho) |x\rangle\langle y|. \]
It is easy to see that $\cT: \cB(\cH_{A})\rightarrow \cB(\cH_{B})$ is a CPTP map with Kraus operators $E_{xy} = \sqrt{N(y|x)}\; |y\rangle\langle x|$, which maps diagonal matrices to diagonal matrices. Any non-diagonal matrix in $\cB(\cH_{A})$ is also mapped on to a matrix that is diagonal in the $\{|y\rangle\}$-basis. Classical channels are thus a special case of quantum channels.

Apart from the action of the CPTP map, a quantum communication protocol also includes an {\it encoding} map at the input side and a {\it decoding} map at the output. Given a set of messages $\{m=1,2,.\ldots, q\}$, the encoding map assigns a quantum state $\rho_{m} \in \cB(\cH_{A})$ to each message $m$. The decoding map has to identify the message $m$ corresponding to the output $\cT(\rho_{m})$ of the channel $\cT$. In other words, the decoding process has to extract classical information from the output quantum state; this is done via a {\it quantum measurement}. Recall from the discussion in Sec.~\ref{sec:qstates}, that the outcome $M$ of a measurement of state $\cT(\rho_{m})$ is a random variable distributed according to some classical probability distribution. Here, we are interested in {\bf zero-error communication}, where the outcome $M$ is equal to the original message $m$ with probability $1$.

Zero-error transmission via general quantum channels was originally studied in~\cite{Med, BS} and more recently in~\cite{CCA, CLMW, Duan}. In this section we first review some of this earlier work, highlighting the role of operator systems in the study of zero-error communication. In the next section, we focus on the recent work of Duan {\it et al}~\cite{DSW} where a quantum version of the Lov\'asz $\vartheta$-function is introduced in the context of studying the zero-error capacity of quantum channels.

Firstly, note that the requirement of zero-error communication imposes the following constraint on the output states $\{\cT(\rho_{m})\}$.
\begin{exer}
There exists a quantum measurement $\cM$ in $\cH_{B}$ such that the outcome $M$ corresponding to a measurement of state $\cT(\rho_{m})$ is equal to $m$ with probability $1$, if and only if the ranges of the states $\{\cT(\rho_{m})\}$ are mutually orthogonal.
\end{exer}
Since the states $\{\cT(\rho_{m})\}$ are positive semi-definite operators, the fact their ranges are mutually orthogonal implies the following condition:
\begin{equation}
 \tr[\cT(\rho_{m})\cT(\rho_{m'})] = 0, \; \forall \; m\neq m' \, . \label{eq:orth_cond}
\end{equation}
Suppose we choose a particular Kraus representation for $\cT$, so that $\cT(\rho) = \sum_{i}E_{i}\rho E_{i}^{\dagger}$. It then follows from Eq.~\eqref{eq:orth_cond} that,
\[ \sum_{i,j} \tr[E_{i}\rho_{m}E_{i}^{\dagger}E_{j}\rho_{m}E_{j}^{\dagger}] = 0, \; \forall \; m\neq m' \, . \]
Note that the orthogonality condition on the ranges of the output states further implies that the input states $\{\rho_{m}\}$ can be chosen to be rank-one operators  ($\{\rho_{m} = |\psi_{m}\rangle\langle \psi_{m}|\}$) without loss of generality. Therefore, the condition for zero-error communication becomes
\begin{eqnarray}
 \vert \langle \psi_{m}|E_{i}^{\dagger}E_{j}|\psi_{m'}\rangle\vert^{2} &=& 0, \; \forall m \neq m' \, . \nonumber \\
\Rightarrow \langle \psi_{m}|E_{i}^{\dagger}E_{j}|\psi_{m'}\rangle &=& 0, \; \forall m \neq m', \; \forall \; i,j \, . \nonumber \\
\Rightarrow \tr[|\psi_{m'}\rangle\langle \psi_{m}| E_{i}^{\dagger} E_{j}] &=& 0, \; \forall m \neq m', \; \forall \; i,j \, .
\end{eqnarray}
We have thus obtained the following condition for zero-error communication using the quantum channel $\cT$.
\begin{lem}[Condition for Zero-error Communication]\index{zero-error communication}
Given a channel $\cT$ with a choice of Kraus operators $\{E_{i}\}$, zero-error communication via $\cT$ is possible if and only if the input states $\{|\psi_{m}\rangle \in \cH_{A}\}$ to the channel satisfy the following: $\forall \; m\neq m'$, the operators $|\psi_{m'}\rangle\langle\psi_{m}| \in \cB(\cH_{A})$ must be orthogonal to the span
\begin{equation}
 S := {\rm span}\{E_{i}^{\dagger}E_{j}, \; i,j\}, \label{eq:span}
\end{equation}
with orthogonality defined in terms of the Hilbert-Schmidt inner product~\footnote{The Hilbert-Schmidt inner product\index{Hilbert-Schmidt inner product} between two operators $A,B$ is simply the inner product defined by the trace, namely, $\langle A,B\rangle = \tr[A^{\dagger}B]$.}.
\end{lem}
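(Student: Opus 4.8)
The plan is to prove the equivalence as a chain of four reversible links, observing that the computation carried out in the paragraphs immediately preceding the lemma already establishes every link in the forward direction; the only thing left is to check that each step can be run backwards. The chain I would use is: zero-error communication is possible $\Leftrightarrow$ the output ranges of the states $\{\cT(\rho_m)\}$ are mutually orthogonal $\Leftrightarrow$ $\tr[\cT(\rho_m)\cT(\rho_{m'})]=0$ for all $m\neq m'$ $\Leftrightarrow$ $\langle\psi_m|E_i^\dagger E_j|\psi_{m'}\rangle=0$ for all $i,j$ and all $m\neq m'$ $\Leftrightarrow$ $|\psi_{m'}\rangle\langle\psi_m|\perp S$ for all $m\neq m'$. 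The first link is exactly the content of the preceding exercise, which is itself an ``if and only if'': a recovering measurement in $\cH_B$ exists precisely when the output ranges are mutually orthogonal.

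For the second link I would record the elementary fact that for positive semi-definite operators $P,Q$ one has $\tr[PQ]=\tr[P^{1/2}QP^{1/2}]\geq 0$, with equality if and only if $P^{1/2}QP^{1/2}=0$, i.e. if and only if $PQ=0$, i.e. if and only if the ranges of $P$ and $Q$ are orthogonal. Applying this to $P=\cT(\rho_m)$ and $Q=\cT(\rho_{m'})$, which are positive by complete positivity of $\cT$, converts orthogonality of ranges into the vanishing trace condition. The third and central link is the explicit expansion: fixing a Kraus representation $\cT(\rho)=\sum_i E_i\rho E_i^\dagger$ and taking rank-one inputs $\rho_m=|\psi_m\rangle\langle\psi_m|$, one computes
\[
\tr[\cT(\rho_m)\cT(\rho_{m'})]=\sum_{i,j}\tr\big[E_i|\psi_m\rangle\langle\psi_m|E_i^\dagger E_j|\psi_{m'}\rangle\langle\psi_{m'}|E_j^\dagger\big]=\sum_{i,j}\big|\langle\psi_m|E_i^\dagger E_j|\psi_{m'}\rangle\big|^2.
\]
Because this is a sum of nonnegative terms, it vanishes if and only if each summand $\langle\psi_m|E_i^\dagger E_j|\psi_{m'}\rangle$ vanishes, and it is exactly this nonnegativity that makes the implication reversible rather than one-directional.

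The fourth link is a matter of unwinding the Hilbert-Schmidt pairing. With the convention $\langle A,B\rangle=\tr[A^\dagger B]$ one has $\langle E_i^\dagger E_j,\,|\psi_{m'}\rangle\langle\psi_m|\rangle=\tr\big[E_j^\dagger E_i|\psi_{m'}\rangle\langle\psi_m|\big]=\langle\psi_m|E_j^\dagger E_i|\psi_{m'}\rangle$, so as $i,j$ range over all pairs the vanishing of all these pairings is, by linearity of the inner product, precisely the statement that $|\psi_{m'}\rangle\langle\psi_m|$ is orthogonal to $S=\mathrm{span}\{E_i^\dagger E_j\}$. Assembling the four equivalences gives the lemma.

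The step I expect to demand the most care is the reduction to rank-one encodings together with the backward direction of the first link: producing a genuine decoding measurement out of mere orthogonality of output ranges is what the cited exercise delivers, and I would want to make sure the encoding used there is consistent with the rank-one reduction already justified before the lemma. A secondary point worth flagging, to forestall the worry that the criterion depends on an arbitrary choice, is that although the individual Kraus operators $E_i$ are not unique, the span $S=\mathrm{span}\{E_i^\dagger E_j\}$ is independent of the representation: a change of Kraus operators $E_i'=\sum_k u_{ik}E_k$ by an isometry $[u_{ik}]$ satisfies $E_i'^\dagger E_j'=\sum_{k,l}\overline{u_{ik}}\,u_{jl}\,E_k^\dagger E_l$, so the span is unchanged. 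Hence the final orthogonality criterion is intrinsic to the channel $\cT$ even though the proof passes through a fixed Kraus representation.
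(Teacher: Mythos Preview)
Your proposal is correct and follows essentially the same route as the paper: the four-link chain you spell out is exactly the computation carried out in the paragraphs immediately preceding the lemma, and your added care about reversibility (the nonnegativity of each summand, and $\tr[PQ]=0\Leftrightarrow$ orthogonal ranges for $P,Q\geq 0$) just makes explicit what the paper leaves implicit behind its one-directional $\Rightarrow$ arrows. Your closing remark on Kraus-independence of $S$ also matches the paper's comment immediately after the lemma.
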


Note that $S \subset \cB(\cH_{A})$, and, $S = S^{\dagger}$. Further, since the channel $\cT$ is trace-preserving, $\sum_{i}E_{i}^{\dagger}E_{i} = I$, so that $S \ni I$. This implies that $S$ is an {\bf Operator System}\index{operator system}, as defined in Def.~\ref{def:OS}. Since all Kraus representations for $\cT$ give rise to the same subset $S$, the above condition is unaffected by the non-uniqueness of the Kraus representation.

The {\bf Complementary Channel} $\hat{\cT}$ and its dual $\hat{\cT}^{*}$ corresponding to a channel $\cT$, are defined as follows.
\begin{defn}[Complementary Channel]\index{quantum channel! complementary}
Suppose the channel $\cT$ is given by $\cT(\rho) = \tr_{C}[V\rho V^{\dagger}]$, where $V : \cH_{A} \otimes \cH_{B}\otimes\cH_{C}$ is the Stinespring isometry. Then, the Complementary Channel $\hat{\cT}: \cB(\cH_{A})\rightarrow \cB(\cH_{C})$ is defined as:
\[\hat{\cT}(\rho) = \tr_{B}[V \rho V^{\dagger}]. \]
\end{defn}
The {\bf dual}\index{quantum channel! dual} map $\hat{\cT}^{*}$ is defined via $\tr[\rho\hat{\cT}^{*}(X)] = \tr[\hat{\cT}(\rho)X^{\dagger}]$. Then, the following result was shown in~\cite{DSW}.
\begin{obs}
Given a channel $\cT$ with a complementary channel $\hat{\cT}$,
\[ S = \hat{\cT}^{*}(\cB(\cH_{C})), \; \hat{\cT}^{*}: \cB(\cH_{C}) \rightarrow \cB(\cH_{A}),\]
where $S$ is the operator system defined in Eq.~\eqref{eq:span}, and, $\hat{\cT}^{*}$ is the dual to the map $\hat{\cT}$.
\end{obs}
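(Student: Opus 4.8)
The plan is to unwind both sides of the claimed equality $S = \hat{\cT}^{*}(\cB(\cH_{C}))$ by expressing the complementary channel and its dual in terms of the Kraus operators $\{E_i\}$ of $\cT$, and then to recognize that the dual $\hat{\cT}^*$ acts precisely by sending the matrix units of $\cB(\cH_C)$ to the products $E_i^\dagger E_j$. First I would fix the Stinespring isometry $V : \cH_A \to \cH_B \otimes \cH_C$ and write it, as in the Choi-Kraus discussion above, in the form $V = \sum_i E_i \otimes |v_i\rangle$, where $\{|v_i\rangle\}$ is an orthonormal basis for $\cH_C$. With this representation, one has $\cT(\rho) = \tr_C[V\rho V^\dagger] = \sum_i E_i \rho E_i^\dagger$ and $\hat{\cT}(\rho) = \tr_B[V\rho V^\dagger]$, and I would compute the latter trace explicitly.

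The key computation is the following. Using $V = \sum_i E_i \otimes |v_i\rangle$, I would evaluate
\[
\hat{\cT}(\rho) = \tr_B\Big[\sum_{i,j} E_i \rho E_j^\dagger \otimes |v_i\rangle\langle v_j|\Big] = \sum_{i,j} \tr[E_i \rho E_j^\dagger]\, |v_i\rangle\langle v_j| = \sum_{i,j} \tr[E_j^\dagger E_i \rho]\, |v_i\rangle\langle v_j|.
\]
From the defining relation $\tr[\rho\, \hat{\cT}^{*}(X)] = \tr[\hat{\cT}(\rho)\, X^\dagger]$, I would then read off the action of $\hat{\cT}^*$ on the matrix units $|v_j\rangle\langle v_i| \in \cB(\cH_C)$. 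Taking $X = |v_j\rangle\langle v_i|$ and using the formula for $\hat{\cT}(\rho)$ gives $\tr[\hat{\cT}(\rho) X^\dagger] = \tr[E_j^\dagger E_i \rho]$, so that by the arbitrariness of $\rho$ we obtain $\hat{\cT}^{*}(|v_j\rangle\langle v_i|) = E_j^\dagger E_i$ (up to the fixed convention for adjoints, which I would track carefully). Since the matrix units $\{|v_j\rangle\langle v_i|\}$ span $\cB(\cH_C)$ and $\hat{\cT}^*$ is linear, its range is exactly the linear span of the operators $\{E_j^\dagger E_i\}$, which is precisely the operator system $S$ defined in Eq.~\eqref{eq:span}.

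The two inclusions then follow immediately: every generator $\hat{\cT}^{*}(|v_j\rangle\langle v_i|) = E_j^\dagger E_i$ lies in $S$, giving $\hat{\cT}^{*}(\cB(\cH_C)) \subseteq S$, while every spanning element $E_i^\dagger E_j$ of $S$ is manifestly in the range, giving the reverse inclusion. I would close by remarking that, as already noted in the lemma preceding this observation, the subspace $S$ is independent of the choice of Kraus representation, so the identity $S = \hat{\cT}^{*}(\cB(\cH_C))$ is well-posed despite the non-uniqueness of the $\{E_i\}$ (and hence of the basis $\{|v_i\rangle\}$).

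The main obstacle is not conceptual but bookkeeping: one must be scrupulous about the placement of adjoints and complex conjugates, since the paper's convention has the inner product conjugate-linear in the first variable, and the dual is defined through $\tr[\hat{\cT}(\rho)X^\dagger]$ rather than $\tr[\hat{\cT}(\rho)X]$. Getting the indices $i,j$ in the right order on $E_i^\dagger E_j$ versus $E_j^\dagger E_i$ requires care, but since $S = S^\dagger$ and $S$ is spanned by \emph{all} such products, the final span is unaffected by the ambiguity; the equality of the two sets holds regardless of the ordering convention one settles on.
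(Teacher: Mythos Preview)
Your argument is correct and is the standard way to establish this identity. Note, however, that the paper does not actually prove this observation: it merely states the result and attributes it to \cite{DSW}, so there is no proof in the paper to compare your approach against. Your computation of $\hat{\cT}(\rho) = \sum_{i,j} \tr[E_j^\dagger E_i \rho]\,|v_i\rangle\langle v_j|$ and the subsequent identification of $\hat{\cT}^*$ on matrix units is exactly right; there is a harmless index swap in your final line (with the paper's dual convention one gets $\hat{\cT}^*(|v_j\rangle\langle v_i|) = E_i^\dagger E_j$ rather than $E_j^\dagger E_i$), but as you yourself observe, $S = S^\dagger$ so the span is unaffected and the conclusion stands.
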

Furthermore, it turns out that every operator system can be realized in this manner~\cite{Duan, CCA}: 
\begin{obs}
Given an operator system $S\subset \cB(\cH_{A})$, there exists a CPTP map $\cT$ with a choice of Kraus operators $\{E_{i}\}$ such that $S = {\rm span}\{E_{i}^{\dagger}E_{j}, \; i,j\}$.
\end{obs}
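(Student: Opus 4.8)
The plan is to prove this constructively, producing the Kraus operators by hand. The only constraints to meet are that $\{E_i\}$ be the Kraus operators of a trace‑preserving map (i.e.\ $\sum_i E_i^{\dagger}E_i = I$) and that $\mathrm{span}\{E_i^{\dagger}E_j\}$ be \emph{exactly} $S$. The governing idea is that the collection $\{E_i^{\dagger}E_j\}$ is automatically an operator system (it is $\dagger$‑closed, and contains $I$ once the map is trace‑preserving), so the task is to engineer the products so that no extra directions appear. The device I would use to control this is block orthogonality in the output space: by sending different basis directions of $S$ into mutually orthogonal summands of $\cH_B$, all cross products $E_i^{\dagger}E_j$ between distinct blocks vanish, and each individual basis element of $S$ is then recovered as a single \emph{off‑diagonal} product inside one block, realized through an elementary isometric dilation.

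Concretely, since $S$ is an operator system (Definition~\ref{def:OS}) it is spanned by its self‑adjoint elements and contains $I$, so I would fix a self‑adjoint basis $B_1 = I, B_2, \dots, B_m$ of $S$, rescaling the $B_k$ for $k\ge 2$ so that $\|B_k\| \le 1$ (this does not change $\mathrm{span}\{B_k\} = S$). For each $k\ge 2$, note $I - B_k^2 \ge 0$, and set the two isometries $F_1^{(k)} = \begin{pmatrix} I \\ 0 \end{pmatrix}$ and $F_2^{(k)} = \begin{pmatrix} B_k \\ (I-B_k^2)^{1/2} \end{pmatrix}$ from $\cH_A$ into $\cH_A \oplus \cH_A$. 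A one‑line check gives $F_1^{(k)\dagger}F_1^{(k)} = F_2^{(k)\dagger}F_2^{(k)} = I$ and $F_1^{(k)\dagger}F_2^{(k)} = B_k$ (using $B_k = B_k^{\dagger}$). Now take $\cH_B = \cH_A \oplus \bigoplus_{k=2}^{m}(\cH_A\oplus\cH_A)$, place an isometry $E_0$ onto the first summand, and place $F_1^{(k)}, F_2^{(k)}$ into the $k$‑th summand. Orthogonality of the summands forces every product between operators in different blocks to be $0$; inside block $0$ the only product is $E_0^{\dagger}E_0 = I$; and inside block $k$ the four products span exactly $\{I, B_k\}$. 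Hence $\mathrm{span}\{E_i^{\dagger}E_j\} = \mathrm{span}\{I, B_2,\dots,B_m\} = S$, as required.

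The last step is trace preservation, and this is where one must be careful: in general one cannot rescale Kraus operators as $E_i \mapsto E_i P^{-1/2}$ with $P = \sum_i E_i^{\dagger}E_i$, since conjugating by $P^{-1/2}$ would send $S$ to the \emph{different} operator system $P^{-1/2} S P^{-1/2}$. The construction above is designed precisely to sidestep this: because every $E_0, F_1^{(k)}, F_2^{(k)}$ is an isometry, the sum collapses to $\sum_i E_i^{\dagger}E_i = (2m-1)\,I$, a positive scalar multiple of the identity. Rescaling all Kraus operators by the single scalar $(2m-1)^{-1/2}$ therefore yields $\sum_i E_i^{\dagger}E_i = I$ while multiplying every product $E_i^{\dagger}E_j$ by the same positive constant, leaving $\mathrm{span}\{E_i^{\dagger}E_j\} = S$ untouched. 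The resulting $\cT(\rho) = \sum_i E_i \rho E_i^{\dagger}$ is then CP (being in Kraus form) and trace preserving, completing the proof. The genuinely delicate point of the argument is thus the \emph{exactness} of the span — keeping self‑products inside $S$ and annihilating all cross‑products — rather than any deep structure theorem; Stinespring and Choi are not needed beyond the trivial observation that a Kraus‑form map is CPTP exactly when $\sum_i E_i^{\dagger}E_i = I$.
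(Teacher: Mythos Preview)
Your proof is correct. The construction is clean: the block-orthogonality in $\cH_B$ kills all cross-block products $E_i^{\dagger}E_j$, each block contributes only $\{I,B_k\}$ to the span, and the fact that every individual operator is an isometry makes $\sum_i E_i^{\dagger}E_i$ a scalar multiple of $I$, so the trace-preserving normalisation is harmless. The one implicit assumption is that $S$ is finite-dimensional (you take a finite basis), which is fine in the finite-dimensional setting of this chapter.

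As for comparison: the paper does not actually prove this observation. It is stated with citations to \cite{Duan} and \cite{CCA} and no argument is given in the text. Your construction is therefore not competing with anything in the paper itself; it supplies a self-contained, elementary proof where the paper defers to the literature. The approach in the cited works is somewhat different in spirit --- they typically start from the Choi--Jamio{\l}kowski side, building a channel whose Choi matrix has support tailored to $S$ --- whereas you work directly at the level of Kraus operators with an explicit dilation trick. Your route has the advantage of being entirely constructive and transparent about why no extra directions enter the span; the Choi-matrix route has the advantage of making the link between $S$ and the support of $J_{AB}$ manifest, which is useful elsewhere in the zero-error theory.
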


\subsection{Zero-error Capacity and Lovasz $\vartheta$ Function}

Using the condition for zero-error communication, we next quantify the maximum number of messages $m$ that can be transmitted reliably through the channel $\cT$. We begin with the notion of a {\it quantum independence number} originally defined in~\cite{DSW}.
\begin{defn}[Independence Number of S]\label{def:ind_no}\index{operator system! independence number}\index{independence number! quantum}
Given an operator system $S = {\rm span}\{E_{i}^{\dagger}E_{j}\}$, the independence number $\alpha (S)$ is defined as the maximum value of $q$, such that
 \begin{equation}
  \exists \quad \{|\psi_{1}\rangle, |\psi_{2}\rangle, \ldots, |\psi_{q}\rangle\} : \forall \; m \neq m', \; |\psi_{m}\rangle\langle\psi_{m'}| \perp S. \label{eq:ind_no}
 \end{equation}
\end{defn}
To understand better the motivation for this definition, consider the example of the classical channel again. If $\cT$ is classical, the Kraus operators of the channel are $\{E_{xy} = \sqrt{N(y|x)} \; |y\rangle\langle x|\}$. The operator system $S$ in this case is given by
\[ S  = {\rm span}\{E^{\dagger}_{x'y'}E_{xy}\} = {\rm span}\{\sqrt{N(y|x)}\sqrt{N(y'|x')} |x\rangle\langle y|y' \rangle\langle x' |\}. \]
Note that, $E^{\dagger}_{x'y'}E_{xy} \neq 0$ iff $x=x'$ (so that $y=y'$) or $N(y|x)N(y|x') \neq 0$, whenever $x \neq x'$. The latter condition naturally leads to the notion of the {\it Confusability Graph} associated with a classical channel, and its {\it Independence Number}.

\begin{defn}[Confusability Graph, Independence Number]\index{Confusability graph}\index{independence number! classical}
\begin{itemize}
\item[(i)] The confusability graph of a classical channel $N$ is the graph $G$ with vertices $x \in X$ and edges $x \sim x'$ iff $\exists \; y \in Y$ such that $N(y|x)N(y|x') \neq 0$.
\item[(ii)] A set of vertices $X_{0} \subset X$ such that no pair of vertices in the set is has an edge between them is said to be an independent set. The maximum size of an independent set of vertices $X_{0}$ in $G$ is called the Independence Number $\alpha(G)$ of the graph $G$.
\end{itemize}
\end{defn}
The name comes from the fact that the edges $x \sim x'$ of the graph correspond to {\it confusable} inputs, that is, inputs $x,x'$ that map to the same output $y$. Thus, the operator system $S$ corresponding to this classical quantum channel $\cT$ carries information about the structure of the underlying graph $G$:
\begin{equation}
{\rm span}\{|x\rangle\langle x'|: x = x' \; {\rm or} \; x \sim x' \} \label{eq:class_span}
\end{equation}
More generally, using this definition, every graph $G$ gives rise to an operator system $S$. Furthermore, if two such operator systems are isomorphic, then the underlying graphs are also isomorphic in a combinatorial sense. Thus, for the special case of an operator system $S$ coming from a classical channel as in Eq.~\eqref{eq:class_span},
\[ \alpha (S) = \alpha (G).\]
Def.~\ref{def:ind_no} of the independence number is therefore a generalization of the notion of the independence number of a graph.

Given a graph $G$, estimating its independence number $\alpha(G)$ is known to be an NP-complete problem. Similarly, it was shown that~\cite{BS} estimating $\alpha(S)$ for an operator system $S$ arising from a quantum channel $\cT$ is a QMA-complete problem. QMA (Quantum Merlin-Arthur) is the class of problems that can be solved by a quantum polynomial time algorithm given a quantum witness; it is the quantum generalization of probabilistic version of NP. Rather than estimating $\alpha(S)$, in what follows we seek to find upper bounds on $\alpha(S)$.

We first rewrite the condition in Eq.~\eqref{eq:ind_no} in terms of positive semi-definite operators. Note that $|\psi_{m}\rangle\langle\psi_{m'}| \perp S$ implies that
\[ M = \sum_{m \neq m'}|\psi_{m}\rangle\langle\psi_{m'}| \perp S.\]
Since $S \ni I$, the states $\{|\psi_{m}\rangle, m=1,2,\ldots,q\}$ satisfying Eq.~\eqref{eq:ind_no} are mutually orthogonal. Also, the operator $M + \sum_{m}|\psi_{m}\rangle\langle\psi_{m}|$ is positive semi-definite. Therefore,
\[ 0 \leq M + \sum_{m=1}^{q}|\psi_{m}\rangle\langle\psi_{m}| = M + I , \]
where the number $q$ is simply $q = \parallel M + \sum_{m}|\psi_{m}\rangle\langle\psi_{m}| \parallel$. Recalling that $\alpha(S)$ is simply the maximum value of $q$, we have,
\begin{equation}
 \alpha(S) \leq \max_{\substack{ \left \{M \in \cB(\cH_{A}) : M\perp S, \right. \\ \left. I + M \geq 0 \right\}}} \parallel M + I \parallel = \vartheta(S). \label{eq:lovasz_no}
\end{equation}
The quantity on the RHS, defines a {\bf quantum $\vartheta$-function}\index{Lov\'asz $\vartheta$ function}, as a straightforward generalization of a well known classical quantity. If the operator system $S$ arises from a graph $G$ as in Eq.~\eqref{eq:class_span}, then, $\vartheta(S) = \vartheta(G)$, where, $\vartheta(G)$ is the Lov\'asz number of the graph $G$~\cite{Lov79}. It was shown by Lov\'asz that $\vartheta(G)$ is an upper bound for the independence number $\alpha(G)$ and is in fact a semidefinite program~\cite{SDP}\index{semidefinite program}. In other words, for the case that $S$ arises from a graph $G$, the optimization problem that evaluates $\vartheta(S)$ has a well behaved objective function and the optimization constraints are either linear ($M \perp S$) or semi-definite ($I + M \geq 0$). However for a general operator system $S$, the optimization problem that evaluates $\vartheta(S)$ is not a semidefinite program (SDP) in general.

In order to define the {\bf zero-error capacity} of a channel, we move to the asymptotic setting, where we consider several copies ($n$) of the channel in the limit of $n\rightarrow \infty$. The operator system corresponding to the $n$-fold tensor product of a channel is simply the $n$-fold tensor product of the operator systems associated with the original channel. Then, the capacity is formally defined as follows.
\begin{defn}[Zero-error Capacity]\index{zero-error! capacity}
 The zero-error capacity ($C_{0}(S)$) of a channel with associated operator system $S$ is the maximum number of bits that can be transmitted reliably per channel use, in the asymptotic limit.
\[C_{0}(S) := \lim_{n\rightarrow \infty} \frac{1}{n}\alpha(S^{\otimes n}). \]
\end{defn}
The capacity is even harder to compute than the independence number. In fact, it is not even known if $C_{0}(S)$ is in general a computable quantity in the sense of Turing! For classical channels, where the operator system $S$ arises from the confusability graph $G$, Lov\'asz obtained an upper bound on $C_{0}(S)$. In this case, the Lov\'asz number satisfies,
\[ \vartheta(G_{1} \times G_{2}) = \vartheta(G_{1})\vartheta(G_{2}),\]
which immediately implies that $C_{0}(S) \leq \vartheta (G)$. Till date, this remains the best upper bound on the zero-error capacity of a classical channel.

To gain familiarity with the quantum $\vartheta$-function, we evaluate it for two simple examples of quantum channels.
\begin{example}\label{ex:entire}
Consider the channel whose Kraus operators span the entire space $\cB(\cH_{A})$. The corresponding operator system is given by $S = \cB(\cH_{A})$. Then, the only $M\geq 0$ satisfying $M \perp S$ is in fact $M = 0$. Therefore, by the definition in Eq.~\eqref{eq:lovasz_no}, $\vartheta(S) \equiv \vartheta(\cB(\cH_{A})) = 1$.
\end{example}
Next we consider the case where $S$ is a multiple of the identity.
\begin{example}\label{ex:identity}
Suppose $S = \mathbb{C}I$. Then, the operators $M \perp S$ have to satisfy $\tr[M] = 0$. Without loss of generality, we may assume $M = {\rm diag}(\lambda_{1}, \ldots, \lambda_{n})$ with $\sum_{i}\lambda_{i} = 1$. Suppose the eigenvalues are ordered as follows: $\lambda_{1}\geq \lambda_{2}\geq \ldots \lambda_{n}$. Furthermore, the positive semi-definiteness constraint on $(I + M)$ implies $\lambda_{i} \geq -1$. Therefore,
\begin{eqnarray}
 \parallel I + M \parallel &=& 1 + \lambda_{1} \leq n. \nonumber \\
\Rightarrow \vartheta(S) = \vartheta(\mathbb{C}I) &=& n = {\rm dim}(\cH_{A}).
\end{eqnarray}
\end{example}

We note the following interesting property of the $\vartheta$-function.
\begin{lem}\label{lem:theta_tensor}
Given two operator systems $S_{1}$ and $S_{2}$,
\begin{equation}
 \vartheta(S_{1}\otimes S_{2}) \geq \vartheta(S_{1})\vartheta(S_{2}).
\end{equation}
\end{lem}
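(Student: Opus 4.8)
The plan is to show super-multiplicativity of the quantum $\vartheta$-function under tensor products by exhibiting, from optimizers for the two individual operator systems, a feasible point for the optimization problem defining $\vartheta(S_1 \otimes S_2)$ whose objective value is exactly the product $\vartheta(S_1)\vartheta(S_2)$. Recall from Eq.~\eqref{eq:lovasz_no} that
\[
\vartheta(S) = \max\{\|M + I\| : M \perp S,\ I + M \geq 0\},
\]
so the whole argument reduces to a careful construction of the candidate operator. First I would pick optimizers $M_1 \in \cB(\cH_{A_1})$ and $M_2 \in \cB(\cH_{A_2})$ achieving $\vartheta(S_1) = \|I_1 + M_1\|$ and $\vartheta(S_2) = \|I_2 + M_2\|$, subject to $M_i \perp S_i$ and $I_i + M_i \geq 0$.

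The natural guess is to set $I + M := (I_1 + M_1) \otimes (I_2 + M_2)$, i.e.\ to define
\[
M = (I_1 + M_1)\otimes(I_2+M_2) - I_1\otimes I_2.
\]
Then the positivity constraint $I + M \geq 0$ is immediate, since the tensor product of two positive semidefinite operators is positive semidefinite. The norm constraint is equally clean: because the operator norm is multiplicative on tensor products of positive operators, $\|I + M\| = \|I_1+M_1\| \cdot \|I_2+M_2\| = \vartheta(S_1)\vartheta(S_2)$. The one genuinely substantive step is verifying the orthogonality condition $M \perp S_1 \otimes S_2$, i.e.\ $\tr[M^\dagger (s_1 \otimes s_2)] = 0$ for all $s_1 \in S_1$, $s_2 \in S_2$. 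Here I would expand
\[
M = M_1 \otimes I_2 + I_1 \otimes M_2 + M_1 \otimes M_2,
\]
and check orthogonality term by term against an elementary tensor $s_1 \otimes s_2$, using that the Hilbert--Schmidt inner product factorizes as $\tr[(a\otimes b)^\dagger(s_1 \otimes s_2)] = \tr[a^\dagger s_1]\,\tr[b^\dagger s_2]$.

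The main obstacle, and the reason the factorization of the trace is the crux, is that $M_1 \perp S_1$ and $M_2 \perp S_2$ do not by themselves obviously kill the cross terms. The key observation that makes it work is that $S_1$ and $S_2$ are operator systems, so $I_1 \in S_1$ and $I_2 \in S_2$. Thus $M_1 \otimes I_2$ tested against $s_1 \otimes s_2$ gives $\tr[M_1^\dagger s_1]\,\tr[s_2]$, but I must instead test against the full span $S_1 \otimes S_2$; the clean way is to note that each of the three summands of $M$ contains at least one tensor leg of the form $M_i$, and any elementary tensor in $S_1 \otimes S_2$ pairs that leg against an element of $S_i$, yielding a factor $\tr[M_i^\dagger s_i] = 0$. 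Concretely: the term $M_1 \otimes I_2$ contributes $\tr[M_1^\dagger s_1]\tr[s_2] = 0$; the term $I_1 \otimes M_2$ contributes $\tr[s_1]\tr[M_2^\dagger s_2] = 0$; and $M_1 \otimes M_2$ contributes $\tr[M_1^\dagger s_1]\tr[M_2^\dagger s_2] = 0$. Since the elementary tensors $s_1 \otimes s_2$ span $S_1 \otimes S_2$, linearity gives $M \perp S_1 \otimes S_2$. This establishes that $M$ is feasible for the $\vartheta(S_1\otimes S_2)$ problem, so $\vartheta(S_1 \otimes S_2) \geq \|I+M\| = \vartheta(S_1)\vartheta(S_2)$, completing the proof.
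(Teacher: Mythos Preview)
Your argument is correct and follows exactly the paper's approach: define $I+M := (I_1+M_1)\otimes(I_2+M_2)$, use multiplicativity of the operator norm on tensor products, and check feasibility. You supply more detail than the paper on the orthogonality step (the paper simply asserts $M \perp S_1\otimes S_2$ ``by definition''), and your termwise verification via the factorization of the Hilbert--Schmidt inner product is the right way to unpack that assertion.
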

\begin{proof}
 Suppose the operator $M_{1}\perp S_{1}$ achieves $\vartheta(S_{1})$ and $M_{2}\perp S_{2}$ achieves $\vartheta(S_{2})$. Then, define,
\[ I + M := (I + M_{1})\otimes (I + M_{2}). \]
By definition, $M \perp S_{1}\otimes S_{2}$. Since the norm is multiplicative under tensor product,
\[ \vartheta (S_{1}\otimes S_{2}) \geq \parallel I + M \parallel = (\parallel I + M_{1}\parallel)(\parallel I + M_{2} \parallel) = \vartheta(S_{1})\vartheta(S_{2}). \]
\end{proof}

To see that the inequality in Lemma~\ref{lem:theta_tensor} can be strict, consider the case where $S = I_{n} \otimes \cB(\mathbb{C}^{n})$. Then, it is a simple exercise to evaluate $\vartheta(S)$.
\begin{exer}
Show that \[ \vartheta (I_{n}\otimes \cB(\mathbb{C}^{n})) = n^{2}.\]
Hint: Recall dense coding (Sec.~\ref{sec:EPR_dense})!
\end{exer}
The product of the individual $\vartheta$-functions evaluated in examples~\ref{ex:entire} and~\ref{ex:identity} is much smaller:
\[ \vartheta(I_{n}) \vartheta(\cB(\mathbb{C}^{n})) = n < \vartheta (I_{n}\otimes \cB(\mathbb{C}^{n})).\]
Thus we have a simple example of the {\it non-multiplicativity} of the quantum $\vartheta$-function.

This non-multiplicativity motivates the definition of a modified $\vartheta$-function which can be thought of as a completion of $\vartheta(S)$.
\begin{defn}[Quantum Lov\'asz $\vartheta$-function]
For any operator system $S$, define the quantum Lov\'asz function as follows.
\begin{equation}
\tilde{\vartheta}(S) = \sup_{\cH_{C}}\vartheta(S \otimes \cB(\cH_{C})). \label{eq:tilde_theta}
\end{equation}
\end{defn}
Note how the definition is reminiscent of norm-completion. To clarify the the operational significance of this modified $\vartheta$-function, we define a related independence number in a modified communication scenario.
\begin{defn}[Entanglement-Assisted Independence Number]\label{def:tilde_alpha}\index{independence number! entanglement-assisted}
 For any operator system $S \subset \cB(\cH_{A})$, the quantity $\tilde{\alpha}(S)$ is the maximum value of $q$ such that $\exists$ Hilbert spaces $\cH_{A_{0}}, \cH_{C}$ and isometries $\{V_{1}, V_{2}, \ldots V_{q}\}: \cH_{A_{0}}\rightarrow \cH_{A}\otimes \cH_{C}$, such that,
\[ \forall \; m\neq m', V_{m}\rho V{m}^{\dagger} \, \perp \,  S\otimes \cB(\cH_{C}), \; \forall \; \rho \in \cS(\cH_{A_{0}}).  \]
\end{defn}
Physically, $\tilde{\alpha}(S)$ corresponds to the maximum number of messages that can be transmitted reliably in an {\it entanglement-assisted} communication problem, which allows for some entanglement to be shared beforehand between the sender and the receiver.

It is easy to see that $\alpha(S) \leq \tilde{\alpha}(S)$, since the non-multiplicativity of the $\vartheta$-function implies $\vartheta(S)\leq \tilde{\vartheta}(S)$. Furthermore, $\tilde{\vartheta}(S)$ is an upper bound for $\tilde{\alpha}(S)$, just as $\vartheta(S)$ is to $\alpha(S)$.

\begin{exer}
Given an operator system $S$, the $\tilde{\vartheta}$-function is an upper bound for the entanglement-assisted independence number:
\[ \tilde{\alpha}(S) \leq \tilde{\vartheta}(S).\]
\end{exer}
The following simple observations are left as exercises.
\begin{exer}[Larger operator systems have a smaller independence number:]
Given two operator systems $S_{1}, S_{2}$ such that $S_{1}\subset S_{2}$, $\alpha(S_{1}) \geq \alpha(S_{2})$. This also holds for the $\tilde{\alpha}$, $\vartheta$ and $\tilde{\vartheta}$ functions.
\end{exer}
\begin{exer}[Upper bounds:]
Show that for an operator system $S \subset \cB(\cH_{A})$, (a) $\vartheta(S) \leq {\rm dim}(\cH_{A})$, and, (b) $\tilde{\vartheta}(S) \leq ({\rm dim}(\cH_{A}))^{2}$. Equality holds in both cases when $S = \mathbb{C}I_{{\rm dim}(\cH_{A})}$.
\end{exer}
In fact, when $S = \mathbb{C}I_{2}$, $\tilde{\alpha}(S) = ({\rm dim}(\cH_{A}))^{2}$, which is easily proved using the idea of superdense coding (Sec.~\ref{sec:EPR_dense}).

\begin{example}[$\tilde{\alpha}(S)$ for a Qubit]
For example, consider the simplest case of a qubit, for which ${\rm dim}(\cH_{A}) = 2$ and $S = \mathbb{C}I_{2}$. Since this is the entanglement-assisted communication scenario, suppose the sender and receiver share the maximally entangled state
\[ |\psi\rangle_{AC} = \frac{1}{\sqrt{2}}(|00\rangle + | 11\rangle).\]
The sender modifies the part of the state that belongs to her subsytem via conjugation by a unitary $V_{m} \in \{\Id_{2}, X, Y , Z\}$. These four operators constitute a basis for the space of $2\times 2$ matrices, and were discussed earlier in the construction of Shor's code. It is easy to check that the states $\{(V_{m}\otimes \Id)|\psi\rangle,\; m=1, \ldots, 4\}$ are mutually orthogonal. Thus, once the sender sends across these modified states, the receiver can perfectly distinguish them, implying that  $\tilde{\alpha}(\mathbb{C}I_{2}) = 4$.

Comparing with Def.~\ref{def:tilde_alpha}, we see that $\cH_{A_{0}} = \cH_{A}$ in this case, so that the isometries $V_{m}$ become unitaries and the input state is $\rho = \tr_{C}[|\psi\rangle\langle\psi|]$.
\end{example}

Following~\cite{DSW}, we now study the $\tilde{\vartheta}$-function in greater detail and show that it is indeed a true generalization of the classical Lov\'asz number. Note that the non-multiplicativity of the $\vartheta$-function carries over to the $\tilde{\vartheta}$-function. Given two operator systems $S_{1}$ and $S_{2}$,
\[ \tilde{\vartheta}(S_{1}\otimes S_{2}) \geq \tilde{\vartheta}(S_{1})\tilde{\vartheta}(S_{2}). \]
However, unlike $\vartheta(S)$, $\tilde{\vartheta}(S)$ can be computed efficiently via a SDP, analogous to the classical Lov\'asz number $\vartheta(G)$.
\begin{thm}
For any operator system $S$, $\tilde{\vartheta}(S)$ is a semidefinite program\index{semidefinite program}.
\end{thm}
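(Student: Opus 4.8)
The plan is to show that the optimization defining $\tilde{\vartheta}(S)$ in Eq.~\eqref{eq:tilde_theta} can be rewritten so that the supremum over the auxiliary space $\cH_C$ is absorbed into a finite-dimensional semidefinite feasibility problem. The starting observation is that although $\vartheta(S)$ for a general operator system is not itself an SDP, the completion $\tilde{\vartheta}(S)$ dualizes nicely. First I would recall the classical dual formulation of the Lov\'asz number: for a graph $G$ one has $\vartheta(G) = \min \{\|T\| : T \in \cB(\cH_A), T \succeq 0, T_{xx} = 1, T_{xx'} \text{ free when } x \sim x'\}$, and its operator-system generalization $\tilde{\vartheta}(S) = \min\{\lambda : \exists\, Y \succeq 0 \text{ with } Y - J \in S^\perp \otimes \cB(\cH_A), \ \operatorname{tr}_A Y = \lambda I\}$ or an equivalent block-matrix characterization from~\cite{DSW}. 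The key is that conjugating by the maximally entangled state (dense coding, as in the hints and examples above) lets one replace the unbounded supremum over $\cH_C$ by a single extra tensor factor of dimension $\dim \cH_A$, since larger environments give no further gain once one passes to the entanglement-assisted picture.

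The concrete steps I would carry out are: (i) write down the optimization $\vartheta(S \otimes \cB(\cH_C)) = \max\{\|I + M\| : M \perp S \otimes \cB(\cH_C),\ I + M \succeq 0\}$ from Eq.~\eqref{eq:lovasz_no}; (ii) use Lemma~\ref{lem:theta_tensor} together with the structure of $\cB(\cH_C)$ to argue that the orthogonality constraint $M \perp S \otimes \cB(\cH_C)$ forces $M$ to lie in $S^\perp \otimes \cB(\cH_C)$, and that the objective is monotone, so that the optimum stabilizes once $\dim \cH_C \geq \dim \cH_A$; (iii) pass to the Lagrange dual of the resulting finite program, exhibiting explicit primal-dual variables (positive semidefinite matrices and a scalar bound $\lambda$) with linear equality constraints coming from $M \perp S$ and semidefinite constraints from $I + M \succeq 0$. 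At that point the problem has exactly the canonical SDP form: minimize a linear functional subject to a linear matrix inequality and affine equality constraints, with the feasible set a spectrahedron.

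The main obstacle I expect is step (ii): justifying rigorously that the supremum over all $\cH_C$ is attained at a finite, a priori bounded dimension, so that what is literally a supremum over an infinite family of problems collapses to one semidefinite program of fixed size. This requires a compactness or a dimension-counting argument showing no advantage accrues from enlarging the environment beyond $\cB(\cH_A)$ itself — morally the same phenomenon seen in the examples, where $\tilde{\vartheta}(\mathbb{C}I) = (\dim \cH_A)^2$ is already saturated by a single copy of the ancilla in the dense-coding construction. I would establish this by writing the constraint region as a cone generated by rank-one tensors and invoking a Carath\'eodory-type bound (analogous to the one used earlier for $\tilde{\cS}_k$) to limit the number of nonzero components. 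Once the finiteness of the effective dimension is secured, the remaining verification that the reformulated problem meets the formal definition of an SDP is routine and follows the classical Lov\'asz argument of~\cite{Lov79, SDP} essentially verbatim.
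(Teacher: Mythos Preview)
Your instinct is right that the crux is collapsing the supremum over $\cH_C$ to a fixed finite dimension, and you correctly flag this as the main obstacle. But the mechanism you propose for it---a Carath\'eodory-type bound on ``the number of nonzero components'' of something in a cone of rank-one tensors---is too vague and, as stated, not the right tool. Carath\'eodory limits the number of terms in a convex combination; it does not directly bound the dimension of the auxiliary Hilbert space $\cH_C$. What is actually needed is a Schmidt-rank argument: any unit vector $|\psi\rangle \in \cH_A \otimes \cH_C$ achieving $\|I+M\|$ has Schmidt rank at most $\dim\cH_A$, so without loss one may take $\dim\cH_C = \dim\cH_A$ and write $|\psi\rangle = (I \otimes \sqrt{\rho})|\Phi\rangle$ for a density matrix $\rho$ on $\cH_A$ and the canonical maximally entangled vector $|\Phi\rangle$. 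This is exactly the paper's move: it absorbs both the state $|\psi\rangle$ and the freedom in $\cH_C$ into the single new variable $\rho$, and simultaneously replaces $M$ by $M' = (I\otimes\sqrt{\rho})M(I\otimes\sqrt{\rho})$. The objective becomes $\langle\Phi|I\otimes\rho + M'|\Phi\rangle$ with constraints $I\otimes\rho + M' \succeq 0$, $M' \in S^\perp \otimes \cB(\cH_A)$, $\rho \succeq 0$, $\tr\rho = 1$---manifestly an SDP, with no passage to the dual required.

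Your step (iii), passing to the Lagrange dual to exhibit SDP form, is therefore unnecessary and somewhat backwards relative to the paper: the primal is already an SDP after the change of variables, and the dual program you wrote down (the $\min\lambda$ form with $Y \succeq |\Phi\rangle\langle\Phi|$) appears in the paper only afterward, cited from~\cite{DSW}, as the route to proving multiplicativity of $\tilde{\vartheta}$ rather than as the proof that $\tilde{\vartheta}$ is an SDP. So the overall architecture of your proposal---dualize first, then bound dimension by Carath\'eodory---inverts the logic and leans on a lemma that does not quite fit; the direct reparametrization via $\rho$ is both simpler and what the paper does.
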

\begin{proof}
We first show that $\vartheta(S \otimes \cB(\cH_{C}))$ is in fact a semidefinite program. For $S\subset \cB(\cH_{A})$, recall that,
\begin{equation}
 \vartheta(S\otimes \cB(\cH_{C})) = \max_{|\phi\rangle \in \cH_{A}\otimes \cH_{C}} \langle \psi | I \otimes M |\psi\rangle,  \label{eq:theta_tensor}
 \end{equation}
subject to $\parallel |\psi\rangle \parallel = 1$ , $I + M \geq 0$ and $M \perp S \otimes \cB(\cH_{C})$. The last constraint is equivalent to the constraint that $M \in S^{\perp}\otimes \cB(\cH_{C})$. The objective function in Eq.~\eqref{eq:theta_tensor} is a multi-linear function and the constraint on the norm of $|\phi\rangle$ is a non-linear one. To recast this as a SDP, we use the following trick.

Assume ${\rm dim}(\cH_{C}) = {\rm dim}(\cH_{A}) $. Then, $\exists \rho \in \cS(\cH_{C})$, such that for $|\Phi\rangle = \sum_{i}|i\rangle|i\rangle$,
\[ |\psi\rangle = (I \otimes \sqrt{\rho})|\Phi\rangle .\]
Inserting this in Eq.~\eqref{eq:theta_tensor}, the objective function becomes,
\begin{equation}
 \langle \Phi| I \otimes \rho + M' |\Phi\rangle, \; M' = (I \otimes \sqrt{\rho})M (\Id \otimes \sqrt{\rho}) \in S \otimes \cB(\cH_{C}), \label{eq:sdp1}
\end{equation}
with the constraints $I \otimes \rho + M' \geq 0$, $\rho \geq 0$ and $\tr[\rho] =1$. This defines a semidefinite program~{SDP}.

Since $\tilde{\vartheta}(S)$ involves a further maximization over $\cH_{C}$ (Eq.~\eqref{eq:tilde_theta}), we can assume without loss of generality that ${\rm dim}(\cH_{C}) \geq {\rm dim}(\cH_{A})$. If ${\rm dim}(\cH_{C}) > {\rm dim}(\cH_{A})$, we can still identify a subspace of $\cH_{C}$ where we can construct the state $|\Phi\rangle$ and set up the same optimization problem as in Eq.~\eqref{eq:sdp1}. Therefore,
\begin{eqnarray}
\tilde{\vartheta}(S) &=& \max_{\rho, M'}\langle \Phi | I \otimes \rho + M' |\Phi\rangle,  \nonumber \\
{\rm such \;  that} &:& I \otimes \rho + M' \geq 0,  \qquad M' \perp S \otimes \cB(\cH_{C}), \nonumber \\
&& \rho \geq 0, \qquad \tr[\rho] =1. \label{eq:sdp2}
\end{eqnarray}
This is indeed an SDP since the objective function is linear and the constraints are either semi-definite or linear.
\end{proof}

It was further shown by Duan {\it et al}~\cite{DSW} that the {\bf dual}\footnote{The dual of a convex program~\cite{BV1, BV2} is obtained by introducing a variable for each constraint, and a constraint for every co-efficient in the objective function. Thus, the objective function and the constraints get interchanged in the dual problem.} to the optimization problem in Eq.~\eqref{eq:sdp2} is also an SDP of the following form:
\begin{eqnarray}
\min_{Y \in S \otimes \cB(\cH_{C})} && \lambda, \nonumber \\
 {\rm such \; that} &:& \lambda I\geq \tr_{A}[Y], \; Y \geq |\Phi\rangle\langle\Phi|. \label{eq:dual}
\end{eqnarray}
Furthermore, strong duality holds here, implying that the two optimization problems in Eqs.~\eqref{eq:sdp2} and~\eqref{eq:dual} are equivalent:  $\max\langle\Phi|I\otimes \rho + M'|\Phi\rangle \equiv \min \lambda$.

The form of the dual in Eq.~\eqref{eq:dual} also implies that
\[ \tilde{\vartheta}(S_{1}\otimes S_{2}) \leq \tilde{\vartheta}(S_{1})\tilde{\vartheta}(S_{2}).\]
Therefore we see from the SDP formulation that the $\tilde{\vartheta}$-function is in fact {\it multiplicative}:
\[ \tilde{\vartheta}(S_{1}\otimes S_{2}) = \tilde{\vartheta}(S_{1})\tilde{\vartheta}(S_{2}). \]

We conclude this section with a few open questions. One important question is regarding the largest dimension  of a non-trivial operator system $S$. Consider for example $S \subset M_{n}$ such that $S^{\perp} = {\rm diag}(n-1, -1, \ldots, -1)$. It trivially follows that $\tilde{\vartheta}(S) = n = \vartheta(S)$. However, $\alpha(S) = 1$ since there does not exist a rank-$1$ operator in $S^{\perp}$. It can be shown that $\tilde{\alpha}(S) = 2$. This example already shows that there could exist a large gap between the independence numbers $\alpha(S), \tilde{\alpha}(S)$ and their upper bounds $\vartheta(S), \tilde{\vartheta}(S)$. An important open question is therefore that of finding the {\bf entanglement-assisted zero-error capacity} of the operator system $S$, which is defined as follows:
\[ \cC_{0E}(S) :=  \lim_{k \rightarrow \infty} \frac{1}{k}\log\tilde{\alpha}(S^{\otimes k}). \]
 From the values of $\tilde{\alpha}(S)$ and $\tilde{\vartheta}(S)$, it is clear that $1 \leq \cC_{0E} \leq \log n$. However, even the simple question of whether $\tilde{\alpha}(S\otimes S) = 4$ or $\tilde{\alpha} > 4$ remains unanswered for any value of $n$. Another line of investigation is to explore further the connection between graphs and operator systems that emerges naturally in this study of zero-error quantum communication.

\section{Strong Subadditivity and its Equality Case}\label{sec:SSA}


The von Neumann entropy\index{von Neumann entropy}  of a state $\rho \in \cS(\cH_{A})$  of a finite-dimensional Hilbert space $\cH_{A}$ is defined as
\begin{equation}
 S(\rho) := - \tr[\rho\log \rho] = -\sum_{i}\lambda_{i}\log \lambda_{i},
\end{equation}
where $\textrm{\bf spec}(\rho) = \{\lambda_{1},\ldots,\lambda_{|A|}\}$. It easy to see that $S(\rho) \geq 0$, with equality iff $\rho = |\psi\rangle\langle\psi|$, a pure state. By analogy with classical Shannon entropy, it also possible to define joint and conditional entropies for composite quantum systems. The von Neumann entropy of a joint state $\rho_{AB} \in \cS(\cH_{A}\otimes\cH_{B})$ (where $\cH_A$ and $\cH_B$ are finite-dimensional Hilbert spaces) is thus defined as $S(\rho_{AB}) = -\tr(\rho_{AB}\log\rho_{AB})$.

In this section we will focus on some important inequalities regarding the entropies of states of composite system, and understand the structure of the states that saturate them.

\begin{defn}[\bf Subadditivity]\index{subadditivity}
For any joint state of a bipartite system $\rho_{AB} \in \cS(\cH_A \otimes \cH_B)$  with the {\it reduced states} given by the partial traces $\rho_{A} = \tr_{B}[\rho_{AB}]$ and $\rho_{B} = \tr_{A}[\rho_{AB}]$,
\begin{equation}\label{eq:subadd}
 S(\rho_{A}) + S(\rho_{B}) - S(\rho_{AB}) \geq 0.
\end{equation}
\end{defn}
The quantity on the LHS is in fact the {\bf Quantum Mutual Information}\index{quantum! mutual information} $I(A:B)$ between systems $A$ and $B$, so that the inequality can also be restated as the positivity of the mutual information.
\begin{equation}\label{eq:mutualinfo}
 I(A:B) := S(\rho_{A}) + S(\rho_{B}) - S(\rho_{AB}) \geq 0
\end{equation}
Alternately, the subadditivity inequality can also be expressed in terms of the {\bf Relative Entropy} between the joint state $\rho_{AB}$ and the product state $\rho_{A}\otimes\rho_{B}$.
\begin{defn}[{\bf Relative Entropy}]\index{relative entropy}
The quantum relative entropy between any two states $\rho$ and $\sigma$ ($\rho, \sigma \in \cS(\cH_A)$) is defined as
\begin{equation}\label{eq:relentropy}
 D(\rho\parallel\sigma) := \tr[\rho(\log\rho -\log\sigma)],
\end{equation}
when $\textrm{\bf Range}(\rho) \subset \textrm{\bf Range}(\sigma)$, and $\infty$ otherwise~\cite{Ume}.
\end{defn}
Note that this definition generalizes the Kullback-Liebler divergence~\cite{KLdistance} of two probability distributions, just as the von Neumann entropy generalizes the Shannon entropy.

\begin{exer}
Show that $I(A:B) = D(\rho_{AB}\parallel\rho_A\otimes\rho_B)$.
\end{exer}
{\bf Solution:}
\begin{eqnarray}
I(A:B) &=& S(\rho_A) + S(\rho_{B}) - S(\rho_{AB}) \nonumber \\
&=& -\tr_{A}[\rho_A\log\rho_A] - \tr_{B}[\rho_{B}\log\rho_{B}] + \tr[\rho_{AB}\log\rho_{AB}] \nonumber \\
&=& \tr[\rho_{AB}\log\rho_{AB}] - \tr[\rho_{AB}(\log\rho_{A}+\log\rho_{B})] \nonumber \\
&=& \tr[\rho_{AB}\log\rho_{AB}] - \tr[\rho_{AB}\log(\rho_{A}\otimes\rho_{B})] \nonumber \\
&=& D(\rho_{AB}\parallel\rho_A\otimes\rho_B),
\end{eqnarray}
where the last step follows from the observation that $\log(\rho_{A}\otimes\rho_{B}) = \log\rho_{A} + \log\rho_{B} $. $\blacksquare$

Thus, subadditivity is simply the statement that the relative entropy $D(\rho_{AB}\parallel\rho_A\otimes\rho_B) \geq 0$ is positive. Positivity of relative entropy is easy to prove, using the observation that $D(\rho\parallel\sigma) \geq \tr[(\rho - \sigma)^{2}]$.

\begin{exer}
 Show that $D(\rho\parallel\sigma) \geq \tr[(\rho - \sigma)^{2}] \geq 0$, with equality iff $\rho = \sigma$.
\end{exer}

Thus the subadditivity inequality becomes an inequality iff $\rho_{AB} = \rho_A \otimes \rho_{B}$, in other words, if and only if the joint state $\rho_{AB}$ of the system is in fact a product state.

The subadditivity inequality for two quantum systems can be extended to three systems. This result is often known as {\bf Strong Subadditivity}, and is one of the most important and useful results in quantum information theory.
\begin{thm}[\bf Strong subadditivity~\cite{liebruskai}]\index{Theorem! Strong Subadditivity}
 Any tripartite state $\rho_{ABC} \in \cS(\cH_A\otimes \cH_B\otimes \cH_C)$, with reduced density operators $\rho_{AB} = \tr_{C}[\rho_{ABC}]$, $\rho_{BC} = \tr_{A}[\rho_{ABC}]$,  and $\rho_{B} = \tr_{C}[\rho_{ABC}]$, satisfies,
\begin{equation}\label{eq:SSA}
 S(\rho_{AB}) + S(\rho_{BC}) - S(\rho_{B}) - S(\rho_{ABC}) \geq 0
\end{equation}
\end{thm}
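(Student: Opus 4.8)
The plan is to derive strong subadditivity as a formal consequence of the \emph{monotonicity} (data-processing) inequality for the quantum relative entropy, isolating a single nontrivial operator-theoretic fact at the end. First I would recast \eqref{eq:SSA} in informational language. Using the identity $I(A:B)=D(\rho_{AB}\,\|\,\rho_A\ot\rho_B)$ proved above and its three-party analogue $I(A:BC)=D(\rho_{ABC}\,\|\,\rho_A\ot\rho_{BC})$, a one-line expansion of the von Neumann entropies gives
\[
I(A:BC)-I(A:B)=S(\rho_{AB})+S(\rho_{BC})-S(\rho_B)-S(\rho_{ABC}),
\]
so that \eqref{eq:SSA} is \emph{equivalent} to $I(A:BC)\ge I(A:B)$: discarding the subsystem $C$ cannot increase the correlations that $A$ shares with the remainder. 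Since the partial trace $\Phi=\Id_A\ot\tr_C$ is a CPTP map with $\Phi(\rho_{ABC})=\rho_{AB}$ and $\Phi(\rho_A\ot\rho_{BC})=\rho_A\ot\rho_B$, this inequality is exactly an instance of monotonicity, $D(\Phi(\rho)\,\|\,\Phi(\sigma))\le D(\rho\,\|\,\sigma)$.

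Second I would prove monotonicity under a partial trace from the \emph{joint convexity} of $(\rho,\sigma)\mapsto D(\rho\,\|\,\sigma)$ by a twirling argument that reuses the unitary error basis $\{W_\alpha:1\le\alpha\le d^2\}$ on $\cH_C$ (with $d=\dim\cH_C$) from the discussion of dense coding. Its defining relation $\tr[W_\alpha^\dagger W_\beta]=d\,\delta_{\alpha\beta}$ yields the depolarizing identity $\frac1{d^2}\sum_\alpha (\Id\ot W_\alpha)\,X\,(\Id\ot W_\alpha)^\dagger=(\tr_C X)\ot\frac{I_C}{d}$. Applying joint convexity to the uniform mixture over $\alpha$, together with the unitary invariance $D(UXU^\dagger\,\|\,UYU^\dagger)=D(X\,\|\,Y)$, gives
\[
D\!\left(\rho_{AB}\ot\tfrac{I_C}{d}\,\Big\|\,\sigma_{AB}\ot\tfrac{I_C}{d}\right)\le\frac1{d^2}\sum_\alpha D(\rho\,\|\,\sigma)=D(\rho\,\|\,\sigma),
\]
and the left-hand side collapses to $D(\rho_{AB}\,\|\,\sigma_{AB})$ because an adjoined maximally mixed tensor factor contributes nothing to the relative entropy. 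This is precisely the monotonicity needed in the first step (and a general CPTP map reduces to this case via its Stinespring dilation).

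Third, and this is where the real work lies, I would establish the joint convexity of the relative entropy, which is equivalent to \textbf{Lieb's concavity theorem}: for a fixed operator $K$ and $0\le t\le1$, the map $(A,B)\mapsto\tr[K^\dagger A^t K B^{1-t}]$ is jointly concave on positive operators. I expect this to be the main obstacle, since it is the only point at which a genuinely analytic, non-formal input is unavoidable; the preceding reductions are purely algebraic. I would prove it by complex interpolation --- analyticity of $z\mapsto\tr[K^\dagger A^z K B^{1-\bar z}]$ on a vertical strip, controlled by a three-lines estimate --- or alternatively extract joint convexity from the operator convexity of $x\mapsto x\log x$. One could instead follow the original Lieb--Ruskai route, differentiating Lieb's theorem in $t$ at $t=0$ to manufacture the entropic inequality directly; but the relative-entropy formulation seems cleaner and, moreover, sets up the equality analysis promised in the chapter abstract, where the convexity degenerates exactly when $\rho_{ABC}$ is a quantum Markov chain and the structure theorem forces a block decomposition of $\cH_B$.
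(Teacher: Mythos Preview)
Your approach is essentially the same as the paper's: both reduce \eqref{eq:SSA} to the monotonicity of relative entropy under the partial trace $\Id_A\otimes\tr_C$, via the identity $I(A:BC)-I(A:B)=D(\rho_{ABC}\,\|\,\rho_A\otimes\rho_{BC})-D(\tr_C\rho_{ABC}\,\|\,\tr_C(\rho_A\otimes\rho_{BC}))$. The paper stops there and simply cites Uhlmann/Lindblad for the monotonicity, whereas your second and third steps go further and sketch a proof of that cited black box (twirling to reduce monotonicity to joint convexity, then Lieb's concavity); this extra material is correct and more self-contained than what the paper actually provides.
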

\begin{proof}
To prove the strong subadditivity property, it will again prove useful to interpret the LHS as a mutual information, in particular, the {\bf Conditional Mutual Information}\index{conditional mutual information}  $I(A:C|B)$ --- the mutual information between systems $A$ and $C$ given $B$ -- which is defined as
\begin{eqnarray}\label{eq:condMI}
I(A:C|B) &=& I(A:BC) - I(A:B) \nonumber \\
&=& S(\rho_A) + S(\rho_{BC}) - S(\rho_{ABC}) - S(\rho_{A}) - S(\rho_{B}) + S(\rho_{AB}) \nonumber \\
&=& S(\rho_{AB}) + S(\rho_{BC}) - S(\rho_{B}) - S(\rho_{ABC})
\end{eqnarray}
Thus strong subadditivity is proved once we establish the positivity of the conditional mutual information. This in turn is proved by first expressing $I(A:C|B)$ in terms of relative entropies.
\begin{eqnarray}
I(A:C|B) &=& D(\rho_{ABC}\parallel \rho_A\otimes\rho_{BC}) - D(\rho_{AB}\parallel\rho_A\otimes\rho_B) \nonumber \\
&=& D(\rho_{ABC}\parallel \rho_A\otimes\rho_{BC}) \nonumber \\
&& \qquad \qquad - \; D(\tr_{C}[\rho_{ABC}]\parallel\tr_{C}[ \rho_A\otimes\rho_{BC}]) \label{eq:REtoSSA}
\end{eqnarray}
The final step is to then use Uhlmann's theorem~\cite{uhlmann} (proved earlier by Lindblad~\cite{lindblad1975} for the finite-dimensional case) on the monotonicity of the relative entropy.
\begin{thm}[{\bf Monotonicity of Relative Entropy under CPTP maps}]\index{Theorem! Uhlmann's}
 For all states $\rho$ and $\sigma$ on a space $\cH$, and quantum operations $\cT:\cL(\cH)\rightarrow\cL(\cK)$,
\begin{equation}\label{eq:uhlmann}
D(\phi\parallel\sigma) \geq D(\cT(\phi)\parallel \cT(\sigma)).
\end{equation}
\end{thm}
To obtain strong subadditivity, one simply has to make the correspondences $\phi \equiv \rho_{ABC}$, $\sigma \equiv \rho_A\otimes\rho_{BC}$, and $\cT \equiv \tr_{C}$ (recall that the partial trace is indeed a CPTP map). Then, Uhlmann's theorem implies
\begin{equation}\label{eq:monotonicity}
D(\rho_{ABC}\parallel \rho_A\otimes\rho_{BC}) - D(\tr_{C}[\rho_{ABC}]\parallel\tr_{C}[ \rho_A\otimes\rho_{BC}]) \geq 0,
\end{equation}
which in turn implies strong subadditivity through Eq.s~\eqref{eq:REtoSSA} and~\eqref{eq:condMI}. In fact, it was shown by Ruskai~\cite{ruskai2002} that the contractive property of the quantum relative entropy under CPTP maps and the positivity of conditional mutual information are equivalent statements.
\end{proof}

Finally, we note that the corresponding inequalities in the classical setting, namely, the subadditivity and the strong subadditivity of the Shannon entropy are easier to prove, since they follow almost directly from the concavity of the $\log$ function.

\subsection{Monotonicity of Relative Entropy : Petz Theorem}\index{relative entropy! monotonicity}

The above discussion on strong subadditivity implies that the question of finding the conditions under which equality holds for Eq.~\eqref{eq:SSA}, translates to that of finding the conditions under which Eq.~\eqref{eq:uhlmann} describing the monotonicity of the relative entropy (under the partial trace operation) is saturated. Note that there is a trivial case of such an equality, namely, if there exists a quantum operation $\hat{\cT}$  which maps $\cT(\phi)$ to $\phi$ and $\cT(\sigma)$ to $\sigma$. It was shown by Petz~\cite{petz1986,petz1988} that this is in fact the only case of such an equality.
\begin{thm}[{\bf Saturating Uhlmann's theorem}]\index{Theorem! Petz}
For states $\phi,\sigma\in\cH$, and the CPTP map $\cT: \cL(\cH)\rightarrow\cL(\cK)$,
$D(\phi\parallel\sigma) = D(\cT(\phi)\parallel\cT(\sigma))$ if and only if 
\begin{equation}
(\hat{\cT}\circ\cT)(\phi) = \phi, 
\end{equation}
for the map $\hat{\cT}: \cL(\cK)\rightarrow \cL(\cH)$  given by
\begin{equation}\label{eq:petzmap}
X \rightarrow \sqrt{\sigma}\cT^{\dagger}\left([\cT(\sigma)]^{-1/2}X[\cT(\sigma)]^{-1/2}\right)\sqrt{\sigma}, \quad \forall  \; X \in \cL(\cH)
\end{equation}
where $\cT^\dagger$ is the adjoint map\footnote{For any CPTP map $\cT$ in a  Kraus representation $\{T_{i}\}$, the adjoint map is another CPTP map with Kraus operators $\{T^{\dagger}_{i}\}$.} of $\cT$, and the map $\cT$ is assumed to be such that $\cT(\sigma) > 0$.
\end{thm}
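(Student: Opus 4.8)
The plan is to prove both directions of this characterization of equality in Uhlmann's monotonicity theorem, which is the Petz recovery condition. The natural strategy is to reduce the statement about relative entropy equality to the equality case of a more tractable operator inequality, and then identify the recovery map explicitly.

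First I would establish the easy direction: if $(\hat{\cT}\circ\cT)(\phi) = \phi$, then equality holds in Uhlmann's theorem. This follows quickly from the monotonicity inequality applied twice. Since $\hat{\cT}$ is itself a CPTP map (one checks that the Petz map defined in Eq.~\eqref{eq:petzmap} is completely positive, being built from conjugations and the adjoint of a CP map, and trace-preserving under the hypothesis $\cT(\sigma)>0$), we have $D(\cT(\phi)\|\cT(\sigma)) \geq D(\hat{\cT}\cT(\phi)\|\hat{\cT}\cT(\sigma))$. The key observation is that $\hat{\cT}$ is designed precisely so that $\hat{\cT}(\cT(\sigma)) = \sqrt{\sigma}\,\cT^\dagger([\cT(\sigma)]^{-1/2}\cT(\sigma)[\cT(\sigma)]^{-1/2})\sqrt{\sigma} = \sqrt{\sigma}\,\cT^\dagger(I)\sqrt{\sigma} = \sigma$, using that $\cT^\dagger$ is unital. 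Combined with the assumption $\hat{\cT}\cT(\phi)=\phi$, the right-hand side equals $D(\phi\|\sigma)$, so all inequalities in the chain $D(\phi\|\sigma) \geq D(\cT(\phi)\|\cT(\sigma)) \geq D(\phi\|\sigma)$ are equalities.

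The hard direction is showing that equality forces $(\hat{\cT}\circ\cT)(\phi)=\phi$. Here I would follow Petz's original approach via the relation between relative entropy and the relative modular operator, or alternatively reduce to the equality case of operator convexity. The cleanest route uses the integral representation of the logarithm together with the fact that equality in the joint convexity / monotonicity inequality propagates. Specifically, one can use that $D(\phi\|\sigma) = D(\cT(\phi)\|\cT(\sigma))$ implies equality in the operator inequality governing the relative modular operators $\Delta_{\phi,\sigma}$ and $\Delta_{\cT(\phi),\cT(\sigma)}$; a standard argument then shows that the adjoint $\cT^\dagger$ intertwines the modular dynamics, i.e.\ $\cT^\dagger(\cT(\sigma)^{it}\,\cT(\phi)^{-it}\,X\,\cT(\phi)^{it}\,\cT(\sigma)^{-it})$ behaves rigidly, forcing the recovery identity. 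I would present this as the technical core, deriving the fixed-point equation $\hat{\cT}\cT(\phi)=\phi$ from the intertwining relation by setting $t=0$ and using positivity.

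The main obstacle I anticipate is the hard direction's reliance on modular theory, which is genuinely nontrivial and not developed in the preceding excerpt. To keep the proof self-contained I would instead consider the finite-dimensional shortcut: use the exercise already stated that $D(\rho\|\sigma)\geq \tr[(\rho-\sigma)^2]\geq 0$ only to motivate, but for the rigidity invoke the strict concavity properties of the logarithm on positive matrices. Concretely, I would differentiate the equality condition, or appeal to the known fact that equality in the data-processing inequality for relative entropy is equivalent to the sufficiency of the channel $\cT$ for the pair $\{\phi,\sigma\}$, and sufficiency is witnessed precisely by the Petz recovery map. Since a fully elementary proof of the rigidity is lengthy, my plan is to state the modular-operator lemma as the key input, prove the forward direction in full, and cite Petz~\cite{petz1986,petz1988} for the detailed verification of the reverse implication, making clear which single analytic fact (the intertwining of modular flows under a sufficient channel) carries the argument.
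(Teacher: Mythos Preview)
Your proposal is correct and in fact goes further than the paper does. The paper does not prove this theorem at all: it merely states it, cites Petz~\cite{petz1986,petz1988}, and records the observation that $(\hat{\cT}\circ\cT)(\sigma)=\sigma$ before moving on to apply the result to strong subadditivity. Your explicit proof of the easy direction (via the chain $D(\phi\|\sigma)\geq D(\cT(\phi)\|\cT(\sigma))\geq D(\hat{\cT}\cT(\phi)\|\hat{\cT}\cT(\sigma))=D(\phi\|\sigma)$, using that $\hat{\cT}$ is CPTP and recovers $\sigma$) is correct and adds content the paper omits; your plan to cite Petz for the hard direction is exactly what the paper itself does.
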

Note that $(\hat{\cT}\circ\cT)(\sigma) = \sigma$. In other words, the CPTP map $\hat{\cT}$ is defined to be the partial inverse of $\cT$ on the state $\sigma$. Petz's theorem states that such a map saturates monotonicity if and only if  $\hat{\cT}$ is also a partial inverse for $\cT$ on $\phi$.

To apply the Petz theorem for strong subadditivity, the relevant map $\cT$ is the partial trace over system $C$. Thus, $\cT \equiv \tr_{C} : \cL(ABC) \rightarrow \cL(AB)$. The adjoint map is therefore $\cT^\dagger : \cL(AB)\rightarrow \cL(ABC)$, with $\cT^{\dagger}(X) = X \otimes I_{C}$.  The states $\phi$ and $\sigma$ are respectively, $\phi = \rho_{ABC}$ and $\sigma = \rho_A\otimes\rho_{BC}$. Thus the Petz map defined in Eq.~\eqref{eq:petzmap} is given by
\begin{equation}\label{eq:That}
\hat{\cT} \equiv \Id_{A}\otimes\cT''_{B\rightarrow BC}
\end{equation}
with the map $\cT'' : \cL(B)\rightarrow\cL(BC)$ given by,
\begin{equation}\label{eq:T''}
\cT'' (X) = \sqrt{\rho_{BC}}(\rho_{B}^{-1/2}X\rho_{B}^{-1/2}\otimes I_{C})\sqrt{\rho_{BC}}, \quad \forall \; X \in \cL(\cH_B).
\end{equation}
It is easy to verify that $(\hat{\cT}\circ\cT)(\sigma) = \hat{\cT}(\rho_A \otimes \rho_{B}) = \rho_{A}\otimes\rho_{BC}$.  Thus, Petz's theorem states that the inequality in Eq.~\eqref{eq:monotonicity} is saturated if and only if the joint state $\rho_{ABC}$ is such that
\begin{equation}\label{eq:statecondition}
(\hat{\cT}\circ\cT)(\rho_{ABC}) = \rho_{ABC} \quad \Rightarrow \hat{\cT}(\rho_{AB}) \equiv (\Id_{A}\otimes\cT'')(\rho_{AB}) = \rho_{ABC},
\end{equation}
for the map $\cT''$ defined in Eq.~\eqref{eq:T''}.

To summarize, Petz's theorem explicitly gives a condition on the states that saturate Uhlmann's theorem on the monotonicity of the quantum relative entropy under the action of a CPTP map $\cT$. This is done by constructing a map $\hat{\cT}$ corresponding to the map $\cT$ such that $(\hat{\cT}\circ\cT)$ leaves both the states in the argument of relative entropy function invariant. This in turn gives us a handle on states that saturate strong subadditivity. Specifically, a tripartite state $\rho_{ABC}$  saturates the strong subadditivity inequality in Eq.~\eqref{eq:SSA} if and only if it satisfies Eq.~\eqref{eq:statecondition}.

{\bf Remark:} Note that the condition $(\Id_{A}\otimes\cT'')(\rho_{AB}) = \rho_{ABC}$ can in fact be thought to characterize a {\it short Quantum Markov Chain}\index{quantum! Markov chain} (see~\cite{haydenpetz} for a more rigorous definition). Given three classical random variables $A,B,C$, they form a short Markov chain $A\rightarrow B\rightarrow C$ iff $P_{AC|B}(a,c|b) = P_{A|B}(a|b)P_{C|B}(c|b)$, that is, the random variable $C$ is independent of $A$.  Analogously, Eq.~\eqref{eq:statecondition} implies that the tripartite state $\rho_{ABC}$ is such that the state of system $C$ depends only on the state of $B$ (via the map $\cT''$) and is independent of system $A$. In fact, the classical conditional mutual information $I(A:C|B)$ vanishes iff $A\rightarrow B\rightarrow C$ form a Markov chain in the order. Petz's Theorem is simply a quantum analogue of this statement!

\subsection{Structure of States that Saturate Strong Subadditivity}

The structure of the tripartite states that satisfy Eq.~\eqref{eq:statecondition}, can in fact be characterized more explicitly as follows.

\begin{thm}\label{thm:structure}
A state $\rho_{ABC} \in \cL(\cH_A\otimes \cH_B\otimes \cH_C)$ satisfies strong subadditivity (Eq.~\eqref{eq:SSA}) with equality if and only if there exists a decomposition of subsystem $B$ as
\begin{equation}
\cH_B = \bigoplus_{j} \cH_{b_{j}^{L}}\otimes \cH_{b_{j}^{R}},
\end{equation}
into a direct sum of tensor products, such that,
\begin{equation}\label{eq:structure}
\rho_{ABC} = \bigoplus_{j} q_{j} \rho^{(j)}_{Ab^{L}_{j}} \otimes \rho^{(j)}_{b_{j}^{R}C},
\end{equation}
with states $\rho^{(j)}_{Ab^{L}_{j}} \in \cH_A \otimes \cH_{b_{j}^{L}}$ and $\rho^{(j)}_{b^{R}_{j}C} \in \cH_{b_{j}^{R}}\otimes \cH_C$, and a probability distribution $\{q_{j}\}$.
\end{thm}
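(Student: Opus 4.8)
The statement characterizes tripartite states saturating strong subadditivity (SSA) as having a block-sum structure on $B$. Having established via Petz's theorem that equality in SSA holds if and only if $(\Id_A \otimes \cT'')(\rho_{AB}) = \rho_{ABC}$, where $\cT''$ is the recovery map of Eq.~\eqref{eq:T''}, my plan is to convert this operational fixed-point condition into the explicit algebraic decomposition. The key realization is that the recovery map $\cT''$ only reconstructs $\rho_{ABC}$ from $\rho_{AB}$ by adjoining system $C$ through $B$ alone; this forces all correlations between $A$ and $C$ to be mediated by a classical (commutative) bottleneck inside $B$. I would first prove the "if" direction, which is the easier one: given the block form in Eq.~\eqref{eq:structure}, one verifies directly that the conditional mutual information $I(A:C|B)$ vanishes by computing each entropy block-wise and using additivity of von Neumann entropy over the direct sum and over tensor factors.

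The substance is the "only if" direction. First I would analyze the fixed-point equation $(\Id_A \otimes \cT'')(\rho_{AB}) = \rho_{ABC}$ from Eq.~\eqref{eq:statecondition}. The right way to extract structure is to study the algebra generated by the data: consider the (finite-dimensional) von Neumann algebra $\cM \subset \cB(\cH_B)$ generated by $\rho_B$, $\rho_{BC}$ (viewed appropriately after tracing), and the relevant images of the Petz map. The structure theorem for finite-dimensional $C^*$-algebras — precisely the matrix-algebra structure theorem emphasized in the chapter abstract — tells us that any such algebra is unitarily a direct sum $\bigoplus_j \cB(\cH_{b_j^L}) \otimes I_{b_j^R}$, inducing the decomposition $\cH_B = \bigoplus_j \cH_{b_j^L} \otimes \cH_{b_j^R}$. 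I would then show that the fixed-point condition forces $\rho_{BC}$ to respect this decomposition, acting as identity (up to scalars) on the $b_j^L$ factors and nontrivially only on $b_j^R \otimes C$, while $\rho_{AB}$ acts nontrivially only on $A \otimes b_j^L$. Concretely, the Petz recovery map Eq.~\eqref{eq:T''} with its $\rho_B^{-1/2}(\cdot)\rho_B^{-1/2}$ sandwich and the $\sqrt{\rho_{BC}}$ conjugation is designed exactly so that its fixed points are the states that "factor through" the commutant structure of $\rho_B$ relative to $\rho_{BC}$.

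The cleanest route for the only-if direction is to prove the following two commutation facts and then assemble them: (i) the condition $(\Id_A \otimes \cT'')(\rho_{AB}) = \rho_{ABC}$ implies $[\rho_{ABC}, \rho_{AB} \otimes I_C] = 0$ and $[\rho_{ABC}, \rho_B^{-1}\rho_{BC}]$-type relations, whence $\log \rho_{ABC} + \log \rho_B = \log \rho_{AB} + \log \rho_{BC}$ (with the tensor-with-identity embeddings understood) as operators on $\cH_A \otimes \cH_B \otimes \cH_C$; (ii) this operator identity, together with the requirement that all four operators be simultaneously manageable, forces $\rho_B$ to have a block structure whose blocks split as tensor products. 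The logarithmic identity is the analytic heart: I would derive it by differentiating or by using the equality case of Uhlmann monotonicity directly, then argue that the sum $\log\rho_{AB} \otimes I_C + I_A \otimes \log\rho_{BC} = \log\rho_{ABC} + I_A \otimes \log\rho_B \otimes I_C$ can hold only if the supports align along a common block decomposition of $B$. Within each block $j$, the identity localizes and separates the $A b_j^L$ and $b_j^R C$ variables, yielding $\rho^{(j)}_{Ab_j^L} \otimes \rho^{(j)}_{b_j^R C}$ with weights $q_j$.

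The main obstacle I anticipate is step (ii): passing from the operator logarithmic identity to the geometric direct-sum-of-tensor-products decomposition of $\cH_B$. This requires carefully handling the supports of $\rho_B$, $\rho_{AB}$, and $\rho_{BC}$ (the relative entropies are finite only on nested supports, so one must restrict to $\operatorname{supp}(\rho_B)$ and work there), and then invoking the structure theorem for the algebra $\{\rho_{AB}, I\}' \cap \{\rho_{BC}, I\}'$ or an equivalent commutant to produce the central projections indexing the blocks. I would isolate this as the central lemma, prove the commutation relations first as a self-contained algebraic consequence of the Petz fixed-point equation, and only then apply the finite-dimensional $C^*$-structure theorem to read off Eq.~\eqref{eq:structure}. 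The probabilistic interpretation via short quantum Markov chains noted after Eq.~\eqref{eq:statecondition} serves as a useful consistency check on the final form but is not needed in the proof itself.
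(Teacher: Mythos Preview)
Your route and the paper's diverge substantially. The paper does not touch the logarithmic operator identity $\log\rho_{ABC}+\log\rho_B=\log\rho_{AB}+\log\rho_{BC}$ at all. Instead, from the Petz condition $(\Id_A\otimes\cT'')(\rho_{AB})=\rho_{ABC}$ it takes $\tr_C$ to obtain $(\Id_A\otimes\Phi)(\rho_{AB})=\rho_{AB}$ with $\Phi:=\tr_C\circ\cT''$ a CPTP map on $\cB(\cH_B)$. This single equation is leveraged into a whole \emph{family} of fixed points of $\Phi$ on $\cS(\cH_B)$ by slicing on the $A$ side: for every $0\le M\le I$ on $\cH_A$, the state $\sigma=\tfrac{1}{p}\tr_A[\rho_{AB}(M\otimes I)]$ satisfies $\Phi(\sigma)=\sigma$. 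The Koashi--Imoto theorem (stated and proved immediately afterwards as Theorem~\ref{thm:KI}) then gives the decomposition $\cH_B=\bigoplus_j\cH_{b_j^L}\otimes\cH_{b_j^R}$ together with $\sigma=\bigoplus_j q_j(\sigma)\rho_j(\sigma)\otimes\omega_j$ where the $\omega_j$ are the \emph{same} for every $\sigma$ in the family. Varying $M$ recovers $\rho_{AB}=\bigoplus_j q_j\rho_{Ab_j^L}\otimes\omega_{b_j^R}$, and one more application of $\Id_A\otimes\cT''$ produces Eq.~\eqref{eq:structure}. The matrix-algebra structure theorem enters only inside the proof of Koashi--Imoto, applied to the fixed-point algebra of the \emph{dual} map $\Phi^*$.

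Your logarithmic-identity route is a legitimate alternative (and closer to other treatments in the literature), but the step you correctly flag as the obstacle is also where your sketch is least convincing. The commutant $\{\rho_{AB},I\}'\cap\{\rho_{BC},I\}'$ does not obviously yield the right central projections on $\cH_B$; what is actually needed is the algebra on $\cH_B$ generated by the slices $\tr_A[(\xi\otimes I)\rho_{AB}]$ (or its commutant against the $C$-side analogue), and the logarithmic identity by itself does not deliver the required commutation relations without an analytic-continuation argument of the form $\rho_{ABC}^{it}=\rho_{AB}^{it}\rho_B^{-it}\rho_{BC}^{it}$. Moreover, your step~(i) has the order reversed: one does not first prove $[\rho_{ABC},\rho_{AB}\otimes I_C]=0$ and then deduce the log identity; the commutation is a \emph{consequence} of the log identity (or of the imaginary-power relation), not an input to it. The paper's fixed-point-family approach avoids all of this by packaging the algebra into Koashi--Imoto, whose input is merely the invariance of a set of states under a single CPTP map.
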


This result due to Hayden et al.~\cite{haydenpetz} provides an explicit characterization of tripartite states that form a short quantum Markov chain. Eq.~\eqref{eq:structure} states that for a given $j$ in the orthogonal sum decomposition of the Hilbert space $\cH_{B}$, the state on $\cH_{A}$ is independent of the state in $\cH_{C}$.

\begin{proof}
The sufficiency of the condition is immediately obvious --- it is easy to check that states $\rho_{ABC}$ of the form given in Eq.~\eqref{eq:structure} indeed saturate the inequality in Eq.~\eqref{eq:SSA}.

To prove that such a structure is necessary, first note that Eq.~\eqref{eq:statecondition}, can be simplified to a condition on $\rho_{AB}$ alone:
\begin{eqnarray}
(\Id_{A}\otimes\cT'')(\rho_{AB}) &=& \rho_{ABC} \nonumber \\
\Rightarrow (\Id_{A}\otimes\tr_{C}\circ\cT'')(\rho_{AB}) &=& \rho_{AB}.\label{eq:fixedpt}
\end{eqnarray}
Define $\Phi \equiv \tr_{C}\circ\cT''$, a quantum operation on $\cS(\cH_{B})$. Then, states $\rho_{AB}$ satisfy a fixed point like equation: $(\Id_{A}\otimes\Phi)\rho_{AB} = \rho_{AB}$. For any operator $M$ on $\cH_{A}$ with $0\leq M \leq I$, define,
\begin{equation}
\sigma = \frac{1}{p}\tr_{A}\left(\rho_{AB}(M\otimes I)\right), \quad p = \tr\left(\rho_{AB}(M\otimes I)\right). \label{eq:sigma}
\end{equation}
Then,
\begin{equation}\label{eq:equivalence}
(\Id_{A}\otimes\Phi)\rho_{AB} = \rho_{AB}   \Longleftrightarrow  \Phi(\sigma) = \sigma.
\end{equation}
Varying the operator $M$ thus gives a family of fixed points ${\bf M} = \{\sigma\}$ for $\Phi$.


The next step is to make use of the structure theorem for the (finite-dimensional) matrix algebra\index{structure theorem} of the fixed points. More specialized to the situation on hand, is a result due to Koashi and Imoto~\ref{thm:KI} that makes use of the structure theorem to characterize the structure of a family of invariant states. See Subsection~\ref{subsec:KIthm} for a formal statement and proof of this result.

As discussed in Theorem~\ref{thm:KI} below, the family of invariant states ${\bf M}\equiv\{\sigma\}$ induces a decomposition on the Hilbert space $\cH_{B}$:
\begin{equation}\label{eq:decomposition}
 \cH_{B} = \bigoplus_{j}\cH_{b_{j}^{L}}\otimes\cH_{b_{j}^{R}},
\end{equation}
such that every state $\sigma \in \mathbf{M}$ can be written as
\begin{equation}
 \sigma = \bigoplus_{j}q_{j}(\sigma)\rho_{j}(\sigma)\otimes\omega_{j},
\end{equation}
with states $\rho_{j}(\sigma) \in \cH_{b_{j}^{L}}$ that depend on $\sigma$ and $\omega_{j}\in\cH_{b_{j}^{R}}$ that are constant for all $\sigma$. The equivalence in Eq.~\eqref{eq:equivalence} implies that $\rho_{AB} \in {\rm Span}\{\xi \otimes \sigma, \; \xi \in \cS(\cH_{A})\}$. This in turn implies the following structure for $\rho_{AB}$:
\begin{equation}\label{eq:state_structure}
 \rho_{AB} = \bigoplus_{j}q_{j}\rho_{Ab_{j}^{L}}\otimes\omega_{b_{j}^{R}}.
\end{equation}
Finally, Theorem~\ref{thm:KI} also gives the following decomposition for the map $\Phi$ (see Eq.~\eqref{eq:invariant_map2}):
\begin{equation}\label{eq:invariant_map}
 \Phi = \bigoplus_{j}\Id_{\cH_{b_{j}^{L}}}\otimes \Phi_{j}.
\end{equation}
Since $\Phi = \tr_{C}\circ\cT''$, this implies
\begin{eqnarray}
 \rho_{ABC} &=& (\Id_{A}\otimes\cT'')(\rho_{AB}) \nonumber \\
 &=& \bigoplus_{j}q_{j}\rho_{Ab_{j}^{L}}\otimes\cT''(\omega_{j}) \nonumber \\
&=& \bigoplus_{j}q_{j}\rho_{Ab_{j}^{L}}\otimes\rho_{b_{j}^{R}C} \quad (\rho_{b^{R}_{j}C} = \cT''(\omega_{j})),
\end{eqnarray}
as desired.
\end{proof}

This explicit characterization of the states saturating strong subadditivity offers an interesting insight into the physical properties of these states. In particular, it gives a neat condition for the tripartite state $\rho_{ABC}$ to be separable along the $A-C$ cut.
\begin{cor}
 For a state $\rho_{ABC}$ satisfying strong subadditivity with equality, that is,
\begin{equation}
 I(A:C|B) = S(\rho_{AB}) + S(\rho_{BC}) - S(\rho_{ABC}) - S(\rho_{B}) = 0,
\end{equation}
the marginal state $\rho_{AC}$ is separable. Conversely, for every separable state $\rho_{AC}$ there exists an extension $\rho_{ABC}$ such that $I(A:C|B)=0$.
\end{cor}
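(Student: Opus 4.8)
The plan is to read off both implications directly from the structure theorem (Theorem~\ref{thm:structure}), which already characterizes exactly the tripartite states saturating strong subadditivity; here ``separable'' is taken in its usual sense of a convex combination of product states.

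For the forward direction, I would start from a state $\rho_{ABC}$ with $I(A:C|B)=0$ and invoke Theorem~\ref{thm:structure} to obtain the decomposition $\cH_B = \bigoplus_j \cH_{b_j^L}\otimes\cH_{b_j^R}$ together with the block form
\[
\rho_{ABC} = \bigoplus_j q_j\, \rho^{(j)}_{Ab_j^L}\otimes\rho^{(j)}_{b_j^R C}.
\]
The key observation is that tracing out $B$ acts block-by-block, since the summands are supported on mutually orthogonal subspaces of $\cH_B$, and within the $j$-th block the state is a product across the $Ab_j^L$/$b_j^R C$ cut. Hence
\[
\tr_B[\rho_{ABC}] = \sum_j q_j\, \tr_{b_j^L}\!\big[\rho^{(j)}_{Ab_j^L}\big]\otimes \tr_{b_j^R}\!\big[\rho^{(j)}_{b_j^R C}\big] = \sum_j q_j\, \rho^{(j)}_A\otimes\rho^{(j)}_C,
\]
where $\rho^{(j)}_A$ and $\rho^{(j)}_C$ denote the indicated marginals. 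Since $\{q_j\}$ is a probability distribution and each summand is a product state, this exhibits $\rho_{AC}$ as a convex combination of product states, i.e.\ as separable.

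For the converse, given any separable $\rho_{AC} = \sum_j q_j\, \rho^{(j)}_A\otimes\rho^{(j)}_C$, I would build a classical-flag extension on an auxiliary space $\cH_B$ with orthonormal basis $\{|j\rangle\}$ indexing the decomposition,
\[
\rho_{ABC} := \sum_j q_j\, \rho^{(j)}_A\otimes|j\rangle\langle j|_B\otimes\rho^{(j)}_C.
\]
A one-line computation gives $\tr_B[\rho_{ABC}] = \rho_{AC}$, so this is indeed an extension. To see $I(A:C|B)=0$, I would note that $\rho_{ABC}$ already has the form of Eq.~\eqref{eq:structure}: take the $B$-decomposition into one-dimensional blocks $\cH_B = \bigoplus_j \mathbb{C}|j\rangle$ with $\cH_{b_j^L} = \cH_{b_j^R} = \mathbb{C}$, so that $\rho^{(j)}_{Ab_j^L} = \rho^{(j)}_A$ and $\rho^{(j)}_{b_j^R C} = \rho^{(j)}_C$. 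The sufficiency (easy) direction of Theorem~\ref{thm:structure} then yields equality in strong subadditivity, hence $I(A:C|B)=0$.

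The argument is short because Theorem~\ref{thm:structure} does the heavy lifting; the only genuinely substantive point is the factorization step in the forward direction. The content there is that the $A$--$C$ product structure present inside each orthogonal $B$-block survives the partial trace over $B$, with no interference between distinct blocks precisely because they are supported on orthogonal subspaces. I would therefore be careful to spell out that $\tr_{b_j^L,b_j^R}\!\big[\rho^{(j)}_{Ab_j^L}\otimes\rho^{(j)}_{b_j^R C}\big]$ factors as a tensor product of independent partial traces, since this factorization is exactly the source of the separability of $\rho_{AC}$.
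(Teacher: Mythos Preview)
Your proposal is correct and is exactly the argument the paper has in mind: the corollary is stated immediately after Theorem~\ref{thm:structure} as a direct consequence of that structural characterization, with no separate proof given. Your forward direction (trace out $B$ block-by-block to exhibit $\rho_{AC}$ as a convex combination of product states) and your converse (classical-flag extension $\rho_{ABC}=\sum_j q_j\,\rho_A^{(j)}\otimes|j\rangle\langle j|_B\otimes\rho_C^{(j)}$, which visibly has the form of Eq.~\eqref{eq:structure}) are precisely the intended steps.
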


\subsection{An Operator Algebraic Proof of the Koashi-Imoto Theorem}\label{subsec:KIthm}

Recall first the structure theorem for matrix algebras which lies at the heart of the Koashi-Imoto characterization. For notational consistency, we state the theorem for the finite-dimensional Hilbert space $\cH_{B}$.

\begin{lem}\label{lem:matrix}
Let $\cA$ be a $*$-subalgebra of $\cB(\cH_{B})$, with a finite dimensional $\cH_{B}$. Then, there exists a direct sum decomposition
\begin{equation}
\cH = \bigoplus_{j}\cH_{B_{j}} = \bigoplus_{j}\cH_{b_{j}^{L}}\otimes\cH_{b_{j}^{R}},
\end{equation}
such that
\begin{equation}
\cA = \bigoplus_{j}\cB(\cH_{b_{j}^{L}})\otimes I_{b_{j}^{R}}.
\end{equation}
For $X\in\cB(\cH)$, any completely positive and unital projection $P^{*}$ of $\cB(\cH)$ into $\cA$ is of the form
\begin{equation}\label{eq:proj_decomp}
P^{*}(X) = \bigoplus_{j}\tr_{b_{j}^{R}}(\Pi_{j}X\Pi_{j}(I_{b_{j}^{L}}\otimes \omega_{j}))\otimes I_{b_{j}^{R}},
\end{equation}
with $\Pi_{j}$ being projections onto the subspaces $\cH_{b_{j}^{L}}\otimes\cH_{b_{j}^{R}}$, and $\omega_{j}$ being states on $\cH_{b_{j}^{R}}$.
\end{lem}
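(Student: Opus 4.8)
The plan is to prove the structure theorem for a $*$-subalgebra $\cA \subset \cB(\cH_B)$ in two stages: first establish the algebraic direct-sum decomposition of $\cA$ itself, and then analyze the form of a completely positive unital projection $P^*$ onto $\cA$. For the first stage, I would invoke the classical Artin–Wedderburn / double commutant structure theory for finite-dimensional $C^*$-algebras. Since $\cA$ is a $*$-subalgebra of the finite-dimensional algebra $\cB(\cH_B)$, it contains a maximal family of orthogonal central projections $\{Z_j\}$ summing to the identity on the support of $\cA$; these decompose $\cH_B = \bigoplus_j \cH_{B_j}$ into the isotypic components of $\cA$ acting on $\cH_B$. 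On each summand, $\cA Z_j$ is a full matrix algebra acting with uniform multiplicity, which gives the tensor factorization $\cH_{B_j} = \cH_{b_j^L} \otimes \cH_{b_j^R}$, with $\cA$ acting as $\cB(\cH_{b_j^L}) \otimes I_{b_j^R}$. Concretely, I would realize this by decomposing the commutant $\cA'$ simultaneously: $\cA$ and $\cA'$ have the form $\bigoplus_j \cB(\cH_{b_j^L}) \otimes I_{b_j^R}$ and $\bigoplus_j I_{b_j^L} \otimes \cB(\cH_{b_j^R})$ respectively, which is the content of the commutation theorem for multiplicity spaces.

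For the second stage, I would take $P^* : \cB(\cH_B) \to \cA$ to be a completely positive, unital projection (a conditional-expectation-type map onto $\cA$, though not assumed faithful or trace-preserving a priori). The key structural fact I would exploit is the bimodule property: because $P^*$ is a unital CP idempotent with range the $C^*$-algebra $\cA$, Tomiyama's theorem (or a direct argument via the Cauchy–Schwarz inequality for CP maps together with $P^*|_{\cA} = \mathrm{id}$) forces $P^*$ to be an $\cA$-bimodule map, i.e. $P^*(aXb) = a\,P^*(X)\,b$ for all $a,b \in \cA$, $X \in \cB(\cH_B)$. Given this, I would compress by the central projections $Z_j$ to reduce to a single summand, where $\cA Z_j = \cB(\cH_{b_j^L}) \otimes I_{b_j^R}$. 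On that summand the bimodule condition over $\cB(\cH_{b_j^L}) \otimes I$ pins down $P^*$ up to its action on the multiplicity factor $\cH_{b_j^R}$: any $\cB(\cH_{b_j^L})\otimes I$-bimodule CP unital map into $\cB(\cH_{b_j^L})\otimes I$ must be of the form $X \mapsto (\mathrm{id}_{b_j^L} \otimes \omega_j)(X) \otimes I_{b_j^R}$ for some state $\omega_j$ on $\cH_{b_j^R}$, which is exactly the partial-trace-against-$\omega_j$ formula $\tr_{b_j^R}(\Pi_j X \Pi_j (I_{b_j^L} \otimes \omega_j)) \otimes I_{b_j^R}$ in Eq.~\eqref{eq:proj_decomp}. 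Positivity and unitality of $P^*$ guarantee that each $\omega_j$ is a genuine state.

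The main obstacle I anticipate is the second stage: showing rigorously that the bimodule property determines $P^*$ on the multiplicity spaces in precisely the stated form. The bimodule relation over $\cB(\cH_{b_j^L}) \otimes I$ immediately forces $P^*$ to respect the left tensor leg (so that it acts as the identity there), but extracting the single state $\omega_j$ on $\cH_{b_j^R}$ requires care: one must verify that the induced map on the multiplicity factor is CP, unital, and \emph{rank one as a map into scalars}, rather than merely a CP map into $\cB(\cH_{b_j^R})$. I would handle this by testing $P^*$ against elementary tensors $T \otimes S$ with $T \in \cB(\cH_{b_j^L})$ and $S \in \cB(\cH_{b_j^R})$, using the bimodule identity to slide $T$ out, and then observing that the residual functional $S \mapsto$ (coefficient) is a positive unital functional on $\cB(\cH_{b_j^R})$, hence a state $\omega_j$; complete positivity of $P^*$ upgrades this to the full matrix level. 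A secondary technical point — ensuring the decomposition is genuinely orthogonal across the index $j$ and that $P^*$ does not mix the central blocks — follows from $P^*$ being an $\cA$-bimodule map and the $Z_j$ being central in $\cA$, so $P^*(Z_j X Z_k) = Z_j P^*(X) Z_k = 0$ for $j \neq k$.
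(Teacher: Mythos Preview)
Your proposal is correct and in fact supplies far more than the paper does: the paper's ``proof'' consists solely of the citation ``See~\cite{takesaki}'', deferring the entire argument to the literature. Your two-stage sketch (Artin--Wedderburn for the block decomposition of $\cA$ and its commutant, then Tomiyama's bimodule property for the CP unital projection to pin down the state $\omega_j$ on each multiplicity factor) is exactly the standard route one finds in such references, and your handling of the block-diagonality via the central projections $Z_j = \Pi_j$ is the right way to close the argument.
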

\begin{proof}
See~\cite{takesaki}.
\end{proof}


\begin{thm}[{\bf Koashi-Imoto~\cite{koashi_imoto}}]\label{thm:KI}\index{Theorem! Koashi-Imoto}
Given a family of states ${\bf M} = \{\sigma\}$ on a finite-dimensional Hilbert space $\cH_{B}$, there exists a decomposition of $\cH_{B}$ as
\begin{equation}
 \cH_{B} = \bigoplus_{j}\cH_{B_{j}} = \bigoplus_{j}\cH_{b^{L}_{j}}\otimes \cH_{b^{R}_{j}},
\end{equation}
into direct sum of tensor products, such that:\\
(a) The states $\{\sigma\}$ decompose as
\begin{equation}
 \sigma = \bigoplus_{j}q_{j}\rho_{j}(\sigma)\otimes \omega_{j},
\end{equation}
where $\rho_{j}$ are states on $\cH_{b_{j}^{L}}$, $\omega_{j}$ are states on $\cH_{b_{j}^{R}}$ which are constant for all $\sigma$ and $q_{j}$ is a probability distribution over $j$.

(b) Every $\Phi$ which leaves the set $\{\sigma\}$ invariant, is of the form
\begin{equation}\label{eq:invariant_map2}
\Phi|_{\cH_{B_{j}}} = \Id_{b_{j}^{L}}\otimes\Phi_{j},
\end{equation}
where the $\Phi_{j}$ are CPTP maps on $\cH_{b_{j}^{R}}$ such that $\Phi_{j}(\omega_{j}) = \omega_{j}$.
\end{thm}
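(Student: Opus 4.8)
The plan is to deduce both assertions from the structure theorem for finite-dimensional $*$-algebras, Lemma~\ref{lem:matrix}, by manufacturing a single completely positive, unital, idempotent map $P^*$ on $\cB(\cH_B)$ whose range is a $*$-subalgebra $\cA$ encoding precisely the directions in which the members of $\mathbf{M}$ differ, and whose predual $P$ (acting on states) fixes every element of $\mathbf{M}$. First I would discard the irrelevant part of $\cH_B$: replacing $\cH_B$ by the linear span of the supports of the states $\sigma\in\mathbf{M}$ changes nothing in the conclusion and guarantees that some finite convex combination (or barycenter) $\tau$ of members of $\mathbf{M}$ is faithful, $\tau>0$. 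This faithful reference state is what lets conditional expectations enter.

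Next I would set $\cA$ to be the $*$-algebra generated by the Connes cocycles $\{\sigma^{it}\tau^{-it} : \sigma\in\mathbf{M},\ t\in\R\}$. These are unitaries, so $\cA$ is a concrete finite-dimensional $*$-algebra; the structural point to verify is that $\cA$ is invariant under the modular automorphism group $\sigma^\tau_s(\cdot)=\tau^{is}(\cdot)\tau^{-is}$ of $\tau$, which follows from the cocycle identity. A standard conditional-expectation existence result (the modular/structure theory underlying Lemma~\ref{lem:matrix} and its reference) then yields a $\tau$-preserving, completely positive, unital idempotent $P^*:\cB(\cH_B)\to\cA$, which is exactly the object to which Lemma~\ref{lem:matrix} applies.

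With $P^*$ in hand, Lemma~\ref{lem:matrix} supplies the decomposition $\cH_B=\bigoplus_j \cH_{b_j^L}\otimes\cH_{b_j^R}$ together with states $\omega_j$ on $\cH_{b_j^R}$ so that $P^*$ has the form of Eq.~\eqref{eq:proj_decomp}. Dualizing that formula, the predual $P$ sends an arbitrary state to $\bigoplus_j q_j\,\rho_j\otimes\omega_j$ with the \emph{same} $\omega_j$ for every input, since the $\omega_j$ originate in $P^*$ and not in the argument. Because every $\sigma\in\mathbf{M}$ lies in the fixed-point set of $P$ --- this is the content of the cocycles $\sigma^{it}\tau^{-it}$ belonging to $\cA$ for all $t$, which forces $\sigma$ to be $P$-invariant --- each $\sigma$ is already of the form $\bigoplus_j q_j(\sigma)\,\rho_j(\sigma)\otimes\omega_j$, proving (a) with $\omega_j$ independent of $\sigma$.

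For (b), let $\Phi$ leave $\mathbf{M}$ invariant (the set-wise statement reduces, after averaging over the compact convex semigroup of such maps, to $\Phi(\sigma)=\sigma$ for each $\sigma$, and hence $\Phi(\tau)=\tau$). Using that $\Phi$ is a completely positive Schwarz map fixing the faithful state $\tau$, I would show the cocycles lie in its multiplicative fixed-point domain, so that $\Phi$ restricts to the identity on each left factor $\cH_{b_j^L}$; writing $\Phi|_{\cH_{B_j}}=\Id_{b_j^L}\otimes\Phi_j$ and invoking $\Phi(\tau)=\tau$ pins down $\Phi_j(\omega_j)=\omega_j$, which is Eq.~\eqref{eq:invariant_map2}. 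The main obstacle is the second paragraph: establishing modular invariance of the cocycle algebra $\cA$ and the consequent existence and \emph{minimality} of $P^*$, so that the $\omega_j$ come out constant and correctly sized rather than over-refining the right factors. This is the genuine finite-dimensional sufficiency/modular argument, and it carries all the real work; once $P^*$ and its Lemma~\ref{lem:matrix} decomposition are secured, (a) and (b) are essentially bookkeeping.
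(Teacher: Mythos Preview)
Your approach is correct but takes a genuinely different route from the paper. You build the distinguished $*$-subalgebra $\cA$ \emph{from the states}, as the algebra generated by the Connes cocycles $\sigma^{it}\tau^{-it}$, invoke modular invariance (via the cocycle identity) and Takesaki's theorem to obtain the $\tau$-preserving conditional expectation $P^*$, and then read off (a) from Petz's sufficiency criterion (cocycles in $\cA$ $\Leftrightarrow$ $P$ fixes $\sigma$). The paper instead builds the algebra \emph{from the maps}: for each $\Phi$ fixing $\mathbf{M}$ it shows the dual fixed-point set $\cI_\Phi=\{X:\Phi^*(X)=X\}$ equals the commutant $\{B_i,B_i^*\}'$ of any Kraus family for $\Phi^*$ (hence is a $*$-algebra), produces the projection onto $\cI_\Phi$ by a mean-ergodic average $\cP^*=\lim_N N^{-1}\sum_{n=1}^N(\Phi^*)^n$, and then takes $\cI_0=\bigcap_\Phi \cI_\Phi$ (a finite intersection, realised as $\cI_{\cF_0}$ for a convex combination $\cF_0$) before applying Lemma~\ref{lem:matrix}. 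Part~(b) is then immediate, since by construction $\Phi^*|_{\cI_0}=\Id$, and a short sandwich argument with $0\le\mu\le I$ forces $\Phi^*$ to preserve each block $\cB(\cH_{b_j^L}\otimes\cH_{b_j^R})$ and act as $\Id\otimes\Phi_j^*$ there.

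Your route is the modular/sufficiency picture (Lindblad--Petz); the paper's is the Kraus-commutant/ergodic picture. They meet because the minimal sufficient subalgebra for $\mathbf{M}$ coincides with $\cI_0$, but neither proof needs to say this. Your argument is cleaner for (a) and requires no reference to any map $\Phi$; the paper's is more elementary (no modular theory) and gives (b) essentially for free, whereas in your sketch the step ``$\Phi$ fixes each $\sigma$ and $\tau$ $\Rightarrow$ $\Phi^*$ fixes the cocycles'' is the real content and deserves an explicit justification (e.g.\ via the equality case of relative-entropy monotonicity, or the multiplicative-domain/Schwarz argument you allude to).
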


Note that Statement (a) of the theorem was first proved by Lindblad~\cite{lindblad}, and the proof here closely follows his approach.

\begin{proof}
First, consider the fixed points\index{fixed points} of the dual (or adjoint) map of $\Phi$. Since $\Phi^{*}$ is a CP unital map, the fixed points form an algebra, unlike the fixed points of the CPTP map $\Phi$ which are just a set of states. Furthermore, the following Lemma holds.
\begin{lem}
If $\Phi^{*}(X) = \sum_{i}B_{i}^{*}X B_{i}$, then, $\cI_{\Phi} = \{X:\Phi^{*}(X) = X\}$ is a $*$-subalgebra and is in fact equal to the commutant\index{commutant} of the Kraus operators of $\Phi^{*}$:
\begin{equation}
 \cI_{\Phi} = \{B_{i}, B_{i}^{*}\}'  \label{eq:commutant}
\end{equation}
\end{lem}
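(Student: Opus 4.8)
The plan is to prove the two inclusions $\{B_i,B_i^*\}' \subseteq \cI_\Phi$ and $\cI_\Phi \subseteq \{B_i,B_i^*\}'$ separately; once equality is in hand, the $*$-subalgebra assertion is automatic, since a commutant is always a $*$-subalgebra of $\cB(\cH_B)$. The first inclusion is a one-line computation: if $X$ commutes with every $B_i$, then $\Phi^*(X) = \sum_i B_i^* X B_i = \sum_i B_i^* B_i X = X$, using only $XB_i = B_iX$ and the unitality $\sum_i B_i^* B_i = I$ (which holds because $\Phi$ is trace preserving, so its adjoint $\Phi^*$ is unital).

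The reverse inclusion is the substantive part, and I would base it on the Kadison--Schwarz inequality for the unital CP (hence $2$-positive) map $\Phi^*$, namely $\Phi^*(X^*X) \geq \Phi^*(X)^*\Phi^*(X)$, together with the existence of a faithful invariant state. Such a state exists: a CPTP map on a finite-dimensional space always fixes some state (compactness plus Brouwer), and after restricting $\cH_B$ to the support of an invariant state of maximal rank---an invariant subspace for $\Phi$---I may assume $\rho > 0$ with $\Phi(\rho) = \rho$, equivalently $\tr[\rho\,\Phi^*(Y)] = \tr[\rho\, Y]$ for all $Y$. Fixing $X \in \cI_\Phi$ and setting $D(X) := \Phi^*(X^*X) - \Phi^*(X)^*\Phi^*(X) \ge 0$, I would use $\Phi^*(X) = X$ to compute $\tr[\rho\,D(X)] = \tr[\rho\,\Phi^*(X^*X)] - \tr[\rho\, X^*X] = 0$ by invariance; faithfulness of $\rho$ then forces $D(X) = 0$, i.e.\ equality in Kadison--Schwarz.

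To convert this equality into a commutation relation, I would invoke the Stinespring form $\Phi^*(X) = V^*(X \ot I)V$, where $V\xi = \sum_i B_i\xi \ot e_i$ is an isometry of $\cH_B$ into $\cH_B \ot \cK$ (an isometry precisely because $\sum_i B_i^* B_i = I$). Writing $P = VV^*$, one obtains $D(X) = V^*(X^* \ot I)(I-P)(X \ot I)V = \big((I-P)(X\ot I)V\big)^*\big((I-P)(X\ot I)V\big)$, so $D(X) = 0$ is equivalent to $\mathrm{Range}\big((X\ot I)V\big) \subseteq \mathrm{Range}(V)$; composing with $V^*$ identifies the resulting operator as $\Phi^*(X) = X$, giving $(X \ot I)V = VX$, and reading off the $e_i$-components yields $XB_i = B_i X$ for every $i$. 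Applying the same reasoning to $X^*$ (also a fixed point, as $\Phi^*$ preserves adjoints) and taking adjoints gives $XB_i^* = B_i^* X$, so $X \in \{B_i,B_i^*\}'$, which completes the reverse inclusion.

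The main obstacle I anticipate is the reduction to a faithful invariant state: the family $\mathbf{M}$ need not contain a full-rank state, so I must justify passing to the support projection of a maximal invariant state, check that $\Phi$ (and hence $\Phi^*$) restricts sensibly to that invariant subspace, and verify that the restricted Kraus operators yield the same commutant condition there. The remaining ingredients---the Kadison--Schwarz inequality, the trace identity against $\rho$, and the Stinespring unpacking---are routine once the faithful state has been secured.
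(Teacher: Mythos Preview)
Your proof is correct and follows the same core strategy as the paper: show $\Phi^*(X^*X)-X^*X\ge 0$ for fixed points $X$, kill it by tracing against a faithful invariant state, and extract commutation. The paper's execution is more direct, however: rather than invoking Kadison--Schwarz for positivity and then Stinespring to unpack the equality case, it simply computes
\[
\sum_i [X,B_i]^*[X,B_i] \;=\; \Phi^*(X^*X) - \Phi^*(X^*)X - X^*\Phi^*(X) + X^*X \;=\; \Phi^*(X^*X) - X^*X
\]
for $X\in\cI_\Phi$, which gives positivity as a sum of squares and, once the expression is shown to vanish, immediately yields $[X,B_i]=0$ for each $i$ without any dilation argument. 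Your route buys a cleaner conceptual picture (equality in Kadison--Schwarz is a multiplicative-domain phenomenon), while the paper's bare-hands identity avoids naming any auxiliary theorems. On the faithful-state issue you flagged: the paper does not carry out your reduction---it simply asserts that the family $\mathbf{M}$ contains a faithful state and uses that as a hypothesis in context.
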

\begin{proof}
Using the Kraus representation, it is easy to see that $\Phi^{*}(X^{*}) = (\Phi^{*}(X))^{*} = X^{*}$. For $X\in\cI_{\Phi}$, direct computation gives:
 \begin{equation}
\sum_{i}\left[X,B_{i}\right]^{*}\left[X,B_{i}\right] = \cF^{*}(X^{*}X) - X^{*}X = 0.
 \end{equation}
 The last equality follows from the fact that the family of invariant states ${\bf M} = \{\sigma\}$ contains a faithful state. Since the LHS is a sum of positive terms, all of them must be $0$, so that $\left[X,B_{i}\right] = 0$ for all $i$. Similarly, $\left[X, B_{i}^{*}\right] = 0$ for all $i$, which immediately implies $\cI_{\cF} \subset \{B_{i}, B_{i}^{*}\}'$. The converse relation that $\{B_{i},B_{i}^{*}\}' \subset \cI_{\cF}$ is trivial.
\end{proof}

Iterating the map $\Phi^{*}$ asymptotically many times gives a projection of the full algebra onto the subalgebra of the fixed points~\footnote{This is essentially a {\it mean ergodic theorem} for the dual of a quantum operation. See~\cite{haydenpetz} for a proof.}. That is,
\begin{equation}\label{eq:projection}
\cP^{*} = \lim_{N\rightarrow\infty}\frac{1}{N}\sum_{n=1}^{N}(\Phi^{*})^{n},
\end{equation}
is such that $\cP^{*}(\sigma) = \sigma$, for $\sigma$ defined in Eq.~\eqref{eq:sigma}. And the adjoint of $\cP^{*}$,
\begin{equation}
 \cP = \lim_{N\rightarrow\infty}\frac{1}{N}\sum_{n=1}^{N}\Phi^{n},
\end{equation}
is also a map that leaves the family of states ${\bf M} \equiv \{\sigma\}$ invariant.
Using the decomposition for $\cP^{*}$ from the structure theorem above (Eq.~\eqref{eq:proj_decomp}), the map $\cP$ has a similar decomposition, as follows. For all $\xi \in \cB(\cH_{B})$,
\begin{equation}
 \cP(\xi) = \bigoplus_{j}\tr_{b_{j}^{R}}\left(\Pi_{j}\xi\Pi_{j}\right)\otimes \omega_{j},
\end{equation}
where $\Pi_{j}$ are projectors onto the subspaces $\cH_{B_{j}}$ and $\omega_{j} \in \cS(\cH_{b_{j}^{R}})$. Since $\cP_{0}(\sigma) = \sigma$ for all $\sigma \in {\bf M}$, we obtain the desired form of the states:
\begin{equation}
 \sigma = \cP_{0}^{*}(\sigma) = \bigoplus_{j}q_{j}\rho_{j}(\sigma)\otimes\omega_{j}.
\end{equation}

To characterize the properties of a general $\Phi$ that leaves a set of states ${\bf M} \equiv \{\sigma\}$ invariant, we look at the set of such maps ${\bf \Phi} \equiv \{\Phi\}$ that leave ${\bf M}$ invariant. Taking a suitable convex combination of the maps $\{\Phi\}$ such that the corresponding invariant algebra $\cI$ is the smallest gives the most stringent decomposition of $\cH_{B}$. Define $\cI_{0} = \bigcap_{\cF\in\mathbf{F}}\cI_{\cF}$ which is a $*$-subalgebra itself. Because all dimensions are finite, this is actually a finite intersection: $\cI_{0} = \cI_{\cF_{1}}\cap\ldots\cap\cI_{\cF_{M}}$. Consider for example,
\begin{equation}
 \cF_{0} = \frac{1}{M}\sum_{\mu=1}^{M}\cF_{\mu}.
\end{equation}
>From Lemma~\ref{lem:matrix}, there exists the following decomposition of $\cI_{0}$:
\begin{equation}
 \cI_{0} = \cB(\cH_{b_{j}^{L}})\otimes I_{b_{j}^{R}},
\end{equation}
Since $\cI_{0}\subset\cI_{\Phi}$, $\Phi^{*}|_{\cI_{0}} = \Id_{\cI_{0}}$. Explicitly, for $\rho \in \cS(\cH_{b_j^{L}})$,
\begin{equation}
 \Phi^{*}(\rho \otimes I_{\cH_{b_{j}^{R}}}) = \rho\otimes I_{\cH_{b_{j}^{R}}}.
\end{equation}
Now, consider $\mu \in\cB(\cH_{b_j^{R}})$ such that $0\leq \mu \leq I$. Then,
\begin{equation}\label{eq:Tstructure}
 0\leq \Phi^{*}(\rho\otimes \mu) \leq \Phi^{*}(\rho \otimes I_{\cH_{b_{j}^{L}}} ) = \rho \otimes I_{\cH_{b_{j}^{R}}} = I_{\cH_{b_{j}^{L}}}\otimes I_{\cH_{b_{j}^{R}}}
\end{equation}
This implies that $\Phi^{*}$ maps $\cB(\cH_{b_j^{L}}\otimes\cH_{b_j^{R}})$ into itself for all $j$, and hence the same applies to $\Phi$. Applying Eq.~\eqref{eq:Tstructure} to a rank-one projector $|\psi\rangle\langle\psi| \in \cS(\cH_{b_{j}^{L}})$, we get,
\begin{equation}
 \Phi^{*}(|\psi\rangle\langle\psi| \otimes \mu) = |\psi\rangle\langle\psi|\otimes \mu',
\end{equation}
with $\mu'$ depending linearly on $\mu$, and independent of $|\psi\rangle\langle\psi|$. Thus, for states $\rho \in \cS(\cH_{b_{j}^{L}})$ and $\sigma \in \cS(\cH_{b_{j}^{R}})$,
\begin{equation}
 \Phi(\rho\otimes\sigma) = \rho\otimes\Phi_{j}(\sigma).
\end{equation}
Applying this to the invariant states $\sigma$ gives the invariance of $\omega_{j}$ under $\Phi_{j}$.
\end{proof}

\section{Norms on Quantum States and Channels}

We motivate this section by recalling the mathematical formalism of quantum error correction discussed in detail in Sec.3.2. Given a {\it noise channel} (a CPTP map) $\cT : \cB(\cH_{A}) \rightarrow \cB(\cH_{B})$, the objective of error correction is to (a) identify a subspace $\cH_{A_{0}}\subset \cH_{A}$, which translates into an embedding $\cI_{0}$ of matrices on $\cH_{A_{0}}$ completing them as matrices on $\cH_{A}$, and, (b) find a CPTP decoding map $\cD : \cB(\cH_{A}) \rightarrow \cB(\cH_{A_{0}})$. Perfect error correction demands that any state $\rho \in \cB(\cH_{A_{0}})$ should therefore transform as
\[ \cD(\cT(\cI_{0}(\rho)))  = \rho, \; \forall \; \rho \in \cB(\cH_{A_{0}}). \]

Mathematically, this gives rise to a closed, algebraic theory. But in practice, the condition of perfect error correction is rather hard to achieve. The best we can hope for is that the overall map on $\rho$ (including noise and decoding) is something close to an identity map. In order to describe such an almost perfect error correction, we need an appropriate choice of norms and metrics on states and channels. In particular, for some choice of embedding $\cI_{0}$ and decoding map $\cD$, we should be able to quantify how close $\cD(\cT(\cI_{0}(\rho)))$ is to the state $\rho$, as well as, the closeness of the map $\cD\circ\cT\circ\cI_{0}$ to the identity map $\Id_{A_{0}}$ on $\cH_{A_{0}}$.

We begin with a few standard definitions.
\begin{defn}[Operator Norm]\index{operator norm}
For any operator $X$ on a finite dimensional Hilbert space, the operator norm is defined as
\begin{equation}
\parallel X \parallel = {\rm largest \; singular \; value}(X) = \parallel X \parallel_{\infty}
\end{equation}
This is a limiting case ($p\rightarrow \infty$) of the non-commutative {\bf $L^{p}$ norms}:
\begin{equation}
\parallel X \parallel_{p} = (\tr[|X|^{p}])^{1/p}, \quad |X| = \sqrt{X^{\dagger}X}.
\end{equation}
\end{defn}
We also recall the following well known inequalities. For any $X \in \cB(\cH_{A})$, and $p \leq q$,
\[ \parallel X\parallel_{\infty} \leq \parallel X\parallel_{p} \leq \parallel X \parallel_{q} \leq \parallel X\parallel_{1} \leq \sqrt{{\rm dim}(\cH_{A})}\parallel X \parallel_{2} .\]

The norm of choice to quantify distance between quantum states is the $L^{1}$-norm, which is often referred to as the trace-norm.
\begin{defn}[Distance between states]\index{trace-norm}
The distance between any two states $\rho, \sigma \in \cS(\cH_{A})$ is quantified via the $L^{1}$-norm (trace-norm) as $\parallel \rho -\sigma\parallel_{1}$.
\end{defn}
The most compelling reason for this choice of definition is that it captures exactly how well the states $\rho, \sigma$ can be distinguished in an experimental setting. To see this clearly, we briefly discuss the problem of {\bf quantum state discrimination}\index{quantum! state discrimination}.

State discrimination is an information-theoretic game which can be described as follows. Assume that some experimental procedure prepares, with equal probability, one of two states $\rho$ or $\sigma$. The goal of the game is to minimize the error in guessing which of these two states was prepared, based on a suitable measurement. Since the goal is to distinguish between two states, it suffices for the measurement process to have just two outcomes. Without loss of generality, we can characterize the measurement apparatus by a pair of positive operators $\{M, I - M\}$, with $0 \leq M \leq I$. Even if a more complex measuring apparatus were to be used, the outcomes will have to grouped together so that one set corresponds to $\rho$ and the other to $\sigma$. Say, $M$ corresponds to guessing $\rho$. Then, the error probability is given by,
\begin{equation}
p_{\rm err} = \frac{1}{2}\tr[\rho (I -M)] + \frac{1}{2}\tr[\sigma M].
\end{equation}
Clearly, $p_{\rm err} \leq \frac{1}{2}$, where the upper bound corresponds to making a random guess without really performing a measurement (that is, $M = I$). Defining the bias $\beta$ to be $\beta = 1 - 2p_{\rm err}$, we have,
\begin{equation}
\beta = \tr[(\rho - \sigma)M].
\end{equation}
The optimal measurement is the one that maximizes $\eta$. It is easy to see that the optimal $M$ is simply the projection onto the subspace where $(\rho - \sigma) \geq 0$. Therefore,
\begin{equation}
\beta_{\rm max} = \frac{1}{2}\parallel \rho - \sigma \parallel_{1}.
\end{equation}
This proves that the minimum error probability is given by
\begin{equation}
\min [p_{\rm err}] = \frac{1}{2}\left( 1 - 2\parallel \rho - \sigma \parallel_{1}\right), \label{eq:tracenorm_p}
\end{equation}
thus providing an operational meaning to the trace-norm~\cite{HH}. In mathematical terms, we have simply realized the fact that the trace-norm is the dual of the operator norm:
\begin{equation}
\parallel \xi\parallel_{1} = \max_{\parallel X \parallel \leq 1} \vert \tr[\xi X] \vert.
\end{equation}
This is easily seen once we can rewrite the bias as follows:
\begin{equation}
\beta = \frac{1}{2}\tr[(\rho - \sigma)(2M - I)].
\end{equation}
Since $(\rho - \sigma)$ is Hermitian, to maximize $\beta$, it suffices to maximize over all Hermitian operators of norm less than unity.

Note that $\parallel \rho -\sigma\parallel_{1}$ is an upper bound on the Kolmogrov distance (the $L^{1}$ distance) between the measurement statistics corresponding to a given measurement on $\rho$ and $\sigma$. The optimizing measurement that achieves equality is in fact the one that minimizes $p_{\rm err}$ in the state discrimination problem.

We move on to distances between channels, or more generally, linear maps $\cL : \cB(\cH_{A})\rightarrow \cB(\cH_{B})$. The trace-norm on states naturally induces the following norm on maps.
\begin{equation}
\parallel \cL \parallel_{1\rightarrow 1} \; = \; \max_{\parallel \xi \parallel_{1}} \parallel \cL(\xi)\parallel_{1}.
\end{equation}
The so-called {\bf diamond norm}\index{diamond norm} for quantum channels which is used often in the quantum information literature is defined as the completion of this induced norm.
 \begin{equation}
\sup_{C} \parallel \cL \otimes \cI_{C}\parallel_{1\rightarrow 1} \; := \;  \parallel\cL\parallel_{\diamond}.
\end{equation}
It follows from the duality between the trace-norm and the operator norm, the $\diamond$-norm is the dual of the CB-norm\index{completely! bounded (CB) norm} (Defn.1.1.2). 

The $\diamond$-norm also has an operational significance similar to that of the trace-norm. Consider $\cL = \cT_{1} -\cT_{2}$, where $\cT_{1}: \cB(\cH_{A}) \rightarrow \cB(\cH_{B})$ and $\cT_{2}: \cB(\cH_{A}) \rightarrow \cB(\cH_{B})$ are both CPTP maps. We seek to evaluate $\parallel \cL\parallel_{1\rightarrow 1} \equiv \parallel \cT_{1} - \cT_{2}\parallel_{1\rightarrow 1}$. Noting that it suffices to maximize over operators of unit-trace rather than operators of unit-norm, we have,
\begin{equation}
\parallel \cT_{1} - \cT_{2}\parallel_{1\rightarrow 1} \; =  \max_{\rho \in \cS(\cH_{A})} \parallel \cT_{1}(\rho) - \cT_{2}(\rho)\parallel_{1}.
\end{equation}
This quantity is simply twice the maximum bias of distinguishing between the channels\index{quantum channel! distinguishing} $\cT_{1}$ and $\cT_{2}$, with respect to a restricted set of strategies, namely, the ``prepare and measure the output'' strategies.

Similarly, when we consider the $\diamond$-norm $\parallel\cT_{1} - \cT_{2}\parallel_{\diamond}$, it again suffices to minimize over all states in the extended Hilbert space $\cH_{A} \otimes \cH_{C}$, including of course the entangled states. Therefore,
\begin{equation}
\parallel \cT_{1} - \cT_{2}\parallel_{\diamond} \; = \max_{\rho \in \cS(\cH_{A} \otimes \cH_{A'})}\parallel (\cT_{1}\otimes \cI_{A'})(\rho) - (\cT_{2}\otimes \cI_{A'})(\rho) \parallel_{1}.
\end{equation}
This is also twice the maximum bias of distinguishing $\cT_{1}$ and $\cT_{2}$, but allowing for all possible strategies.

Thus, both the simple induced norm\index{induced $L_{1}$ norm} ($\parallel.\parallel_{1}$) and the $\diamond$-norm have a nice operational interpretation in the quantum information theoretic setting. The mathematical properties of these norms have interesting physical consequences in this operational setting, and vice-versa. For example, if we take Eq.~\eqref{eq:tracenorm_p} as the definition of the trace-norm, it immediately follows that the norm must be contractive under CPTP maps. If not, we could always perform the measurement after the action of a CPTP map and distinguish the states with lesser probability of error. Similarly, it follows from the operational interpretation that the $1\rightarrow 1$-norm or the $\diamond$-norm cannot increase when the channel is preceded or succeeded by the action of another CPTP map.

We now present an example where the simple induced norm ($\parallel.\parallel_{1\rightarrow 1}$) and its norm completion (the $\diamond$-norm) behave rather differently. Let $\cT_{1} : M_{d}\rightarrow M_{d}$ and $\cT_{2} : M_{d}\rightarrow M_{d}$ be two CPTP maps acting on the space of $d\times d$ matrices. We define these maps by their action on the following state $\rho \in M_{d}\otimes M_{d}$:
\[ \rho = |\phi\rangle\langle \phi| = \frac{1}{d}\sum_{i,j=1}^{d}|ii\rangle\langle jj|.\]
Note that $|\phi\rangle = \frac{1}{\sqrt{d}}\sum_{i}|ii\rangle$ is the maximally entangled state in $M_{d}\otimes M_{d}$. Under the action of $\cT_{1}$ and $\cT_{2}$, $\rho$ transforms as:
\begin{eqnarray}
 (\cT_{1}\otimes \Id)\rho &:=& \alpha = \frac{1 - F}{d(d-1)}, \nonumber \\
(\cT_{2}\otimes \Id)\rho &:=& \sigma = \frac{1 + F}{d(d+1)}, \label{eq:example}
\end{eqnarray}
where, $F = \sum_{i,j}|ij\rangle\langle ji|$ is the so-called SWAP operator -- the unitary operator that interchanges the two Hilbert spaces. $F$ has eigenvalues $\pm 1$, and so, $1 + F$ and $1-F$ are projections onto two mutually orthogonal subspaces. Thus, upto suitable normalization, $\alpha$ is the projector onto the antisymmetric subspace and $\sigma$ is the projector onto the symmetric subspace of $M_{d}\otimes M_{d}$.

It suffices to define the maps $\cT_{1}, \cT_{2}$ via Eq.~\eqref{eq:example}, since $\alpha$ and $\sigma$ are simply the Choi-Jamiolkowski matrices corresponding to $\cT_{1}$ and $\cT_{2}$ respectively. Since we have already identified a state $\rho$ which is mapped onto orthogonal subspaces by $\cT_{1}\otimes \cI$ and $\cT_{2}\otimes \cI$, the $\diamond$-norm is easily computed.
\begin{equation}
 \parallel \cT_{2} - \cT_{1}\parallel_{\diamond}  = \parallel \sigma - \alpha\parallel_{1} = 2. \label{eq:diamond_max}
\end{equation}
This follows from the fact that trace-norm between any two density matrices cannot be larger than $2$, and the maximum value of $2$ is attained iff the states are orthogonal.

On the other hand, to compute $\parallel\cT_{1} - \cT_{2}\parallel_{1\rightarrow 1}$ we still need to perform a maximization over states:
\[ \parallel\cT_{1} - \cT_{2}\parallel_{1\rightarrow 1} = \max_{\rho}\parallel (\cT_{1} - \cT_{2})\rho \parallel_{1}.\]
Since the trace-norm is convex, the maximum will be attained for an extreme point on the set of states, namely a pure state. In fact, it suffices to evaluate $\parallel (\cT_{1} - \cT_{2})|0\rangle\langle 0|\parallel_{1}$ for some arbitrary fixed state $|0\rangle\langle 0|$, because of the following observation. The states $\alpha, \sigma$ have the property that they are left invariant under conjugation by any unitary $U\otimes U \in \cU(M_{d}\otimes M_{d})$:
\[(U\otimes U)\alpha(U\otimes U)^{\dagger} = \alpha,\] and similarly for $\sigma$. The action of the map $\cT_{1}$ on any state $X \in M_{d}$ is obtained from the corresponding Choi-Jamiolkowski matrix $\alpha$ as follows:
\[ \cT_{1}(X) = d\tr_{2}[(I\otimes X^{T})\alpha],\]
where the partial trace is over the second system. Thus, the invariance of $\alpha$ under conjugation by unitaries translates into a covaraince property for the channels, and so the norm $\parallel (\cT_{1} - \cT_{2})\rho\parallel_{1}$ is the same for $\rho$ and $U\rho U^{\dagger}$. Therefore, we have,
\begin{equation}
\parallel \cT_{1} - \cT_{2}\parallel_{1\rightarrow 1} = \parallel (\cT_{1} - \cT_{2}) |0\rangle\langle 0|\parallel_{1} = \frac{4}{d+1}.
\end{equation}
The value of the RHS follows from the fact that the states $\cT_{i}(|0\rangle\langle 0|), i=1,2$ for some arbitrary rank-$1$ projector $|0\rangle\langle 0|$, are close to the maximally mixed state $\frac{\Id}{d}$.

This example clearly highlights that the naive $\parallel .\parallel_{1\rightarrow 1}$-norm can in fact be smaller than the $\diamond$-norm by a factor that scales as the dimension of the system. Operationally, this difference between the two norms has an important consequence. Eq.~\eqref{eq:diamond_max} implies that the two channels in this example can in fact be distinguished perfectly, provided the experimenter has access to the right equipment. In particular, the experimenter needs to be able to create the maximally entangled state $|\phi\rangle = \frac{1}{\sqrt{d}}\sum_{i}|ii\rangle$, store it and in the end be able to perform measurements on the composite system. In short, the experimenter needs access to a quantum computer! On the other hand, if the experimenter is restricted to performing local measurements, the distinguishability, which is now characterized by $\parallel \cT_{1} - \cT_{2} \parallel_{1\rightarrow 1}$, is rather small.

This observation also has significance in a quantum cryptographic setting. Suppose we consider the problem of distinguishing the states $\alpha, \sigma \in M_{d}\otimes M_{d}$, when the experimenter is restricted to performing only local operations and using classical communication. This restriction defines a set of operations called {\bf LOCC}, which is in fact a convex subset of $\cB(\cH)$ whose elements lie between $[-I, I]$. Just as the trace-norm quantifies the distinguishability allowing for arbitrary possible measurements, the distinguishability of the states subject to this restricted set\footnote{Any convex, centrally symmetric subset $\mathbb{M} \subset [-\cI, \cI] \subset \cB(\cH)$ with the property $-\mathbb{M} = \mathbb{M}$, gives rise to a norm, the dual of which is a measure of the distinguishability.} can be characterized by the so-called LOCC-norm. Denoting this norm as $\parallel . \parallel_{{\rm LOCC}}$, it was shown that~\cite{DLT} for the states $\alpha, \sigma$ defined in the above
example,
\[ \parallel \alpha - \sigma\parallel_{{\rm LOCC}} \; = \; \frac{4}{d+1}.\]
It is thus possible to encode information in two perfectly distinguishable states $\alpha$ and $\sigma$, which are however close to indistinguishable under a restricted set of operations (like LOCC). This observation leads to a quantum cryptographic scheme called {\bf quantum data-hiding}~\cite{DLT}.

\section{Matrix-valued Random Variables}\index{matrix-valued! random variables}

We begin with a brief introduction to the standard theory of large deviations in classical probability theory. Recall that for some set $\{X_{1}, X_{2}, \ldots, X_{n}\}$ of independent, identically distributed (i.i.d.) random variables taking values in $\mathbb{R}$, the {\bf law of large numbers} states that
\[ \lim_{n\rightarrow \infty}\frac{1}{n}\sum_{i=1}^{n}X_{i} = \mathbb{E}[X] \equiv \mu, \]
with probability $1$. In statistics and other applications, we are often interested in how significantly the empirical mean on the LHS deviates from the expectation value $\mu$. Here, we focus on large deviations. In particular, for finite $n$, we are interested in bounding the probability that the empirical mean is larger than some $\alpha > \mu$. All known bounds of this form require some additional knowledge about the distribution of the $X$.

The simplest and most commonly encountered large deviation setting is one where the $X_{i}$'s are bounded, so that all higher moments exist. Specifically, assuming that $X_{i} \in [0,1]$, the following bound\index{large deviation bound} is known to be asymptotically optimal.
\begin{equation}
 {\rm Pr}\left\{\frac{1}{n}\sum_{i=1}^{n}X_{i} > \alpha\right\} \leq e^{-nD(\alpha\parallel \mu)} , \; (\alpha > \mu) \label{eq:large_dev}
 \end{equation}
where,
\[ D(\alpha\parallel\mu) = \alpha \ln\frac{\alpha}{\mu}+ (1-\alpha)\ln \frac{1-\alpha}{1-\mu} \]
is the relative entropy between two binary variables. Note that $D(\alpha\parallel\mu) \geq 0$ for $\alpha > \mu$ with equality holding iff $\alpha = \mu$. The upper bound in Eq.~\eqref{eq:large_dev} can further be simplified to
\[ {\rm Pr}\left\{\frac{1}{n}\sum_{i=1}^{n}X_{i} > \alpha\right\} \leq e^{-2n(\alpha - \mu)^{2}} .\]
We briefly sketch the proof of the large deviation bound in Eq.~\eqref{eq:large_dev}.

\begin{proof}
Introducing a real parameter $t>0$, it is easy to see that,
\[ {\rm Pr}\left\{\frac{1}{n}\sum_{i=1}^{n} X_{i} > \alpha \right\} = {\rm Pr}\left\{ e^{t\sum_{i}X_{i}} > e^{nt\alpha}\right\}\]
Note that we could have chosen any monotonic function of the two quantities $\frac{1}{n}\sum_{i=1}^{n}X_{i}$ and $\alpha$ on the RHS. It turns out that the choice of the exponential function is indeed asymptotically optimal.

The next step is to use Markov's inequality which states that the probability that a positive-valued random variable $Y$ is greater than some positive constant $a$ is upper bound by $\mathbb{E}[Y]/a$. This implies,
\[
{\rm Pr}\left\{\frac{1}{n}\sum_{i=1}^{n}X_{i} > \alpha\right\} \leq   \frac{\mathbb{E}[e^{t\sum_{i}X_{i}}]}{e^{nt\alpha}} = \left(\mathbb{E}[e^{tX - t\alpha}]\right)^{n},
\] where the second equality follows from the fact that the $X_{i}$ are independent and identically distributed.

>From the convexity of the exponential function, it follows that for fixed $\mu$ and $X \in [0,1]$, the expectation on the RHS is maximized when $X$ is a Bernoulli variable with ${\rm Pr}(X=0) = 1- \mu$, ${\rm Pr}(X=1) = \mu$. Therefore,
\[ {\rm Pr}\left\{\frac{1}{n}\sum_{i=1}^{n}X_{i} > \alpha\right\} \leq  \left( (1-\mu)e^{-t\alpha} + \mu e^{t(1-\alpha)}\right)^{n}.\]
Optimizing the RHS over the parameter $t$ yields the desired bound in Eq.~\eqref{eq:large_dev}.
\end{proof}

The same large deviation bound holds for real-valued vector variables, upto a dimensional constant that comes from the union bound, when we seek to quantify the deviation of each co-ordinate from its mean value. We would like to obtain similar tail bounds for matrices.

\subsection{Matrix Tail Bounds}\index{matrix-valued! tail bounds}

Consider a set of matrices $X_{1}, X_{2}, \ldots, X_{n}, \ldots \in \cB(\cH)$ that are i.i.d., where each $X_{i} \in [0, I]$ and $\cH$ is a finite dimensional Hilbert space. Note that this setting is rather different from that of random matrix theory, where the entries of the matrix are chosen in an i.i.d fashion. Here, we do not care about the distribution over the entries of the matrices, rather each $X_{i}$ is chosen independently from the same distribution. Here again, in the large-$n$ limit, the empirical mean converges to the expectation value $\mathbb{E}[X] = M \in [0, I]$ with probability $1$.
\[ \frac{1}{n}\sum_{i=1}^{n}X_{i} \rightarrow \mathbb{E}[X] \equiv M .\]

The corresponding large deviation problem seeks to bound the following probability:
\[ {\rm Pr}\left\{ \frac{1}{n}\sum_{i=1}^{n}X_{i} \nleq A \right\},\]
given a positive matrix $A > M$. In order to obtain such a bound, we first need a matrix version of Markov's inequality. The following lemma is left as an exercise; it simply follows from the proof of the standard (classical) Markov's inequality.
\begin{lem}[Matrix Markov Inequality]\index{Markov inequality}
Let $X\geq 0$ be a positive semi-definite matrix random variable with $\mathbb{E}(X) = M$. Then, for some positive definite matrix $A > 0$,
\begin{equation}
{\rm Pr}\left\{ X \nleq A \right\} \leq \tr[MA^{-1}] . \label{eq:matrix_markov}
 \end{equation}
\end{lem}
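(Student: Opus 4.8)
The plan is to reduce the matrix statement to the ordinary scalar Markov inequality by a single conjugation. Since $A > 0$, the operator $A^{-1/2}$ exists and is positive definite, and the map $Z \mapsto A^{-1/2} Z A^{-1/2}$ preserves the positive semi-definite order on self-adjoint operators. So first I would introduce the matrix random variable $Y := A^{-1/2} X A^{-1/2} \geq 0$ and observe that, because conjugation by the invertible $A^{-1/2}$ is order-preserving in both directions,
\[
X \nleq A \iff A^{-1/2}(X - A)A^{-1/2} \nleq 0 \iff Y \nleq I .
\]
Thus the event whose probability we must bound is exactly $\{Y \nleq I\}$.

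The key step is to pass from this matrix event to a scalar event. Since $Y \geq 0$, all its eigenvalues are nonnegative, and $Y \nleq I$ is precisely the statement that $\lambda_{\max}(Y) > 1$. For a positive semi-definite matrix the trace dominates the largest eigenvalue, $\tr[Y] = \sum_i \lambda_i(Y) \geq \lambda_{\max}(Y)$, so $\lambda_{\max}(Y) > 1$ forces $\tr[Y] > 1$. This gives the set inclusion $\{Y \nleq I\} \subseteq \{\tr[Y] > 1\}$, which is the crux of the argument: it replaces the order-theoretic (matrix) condition by an inequality for the nonnegative scalar random variable $\tr[Y]$.

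With that inclusion in hand the rest is bookkeeping. I would apply the classical Markov inequality to the scalar random variable $\tr[Y] \geq 0$ with threshold $1$, and then use linearity of expectation together with cyclicity of the trace to identify the mean:
\[
\mathrm{Pr}\{X \nleq A\} \leq \mathrm{Pr}\{\tr[Y] > 1\} \leq \mathbb{E}\big[\tr[Y]\big] = \tr\big[\mathbb{E}[Y]\big] = \tr\big[A^{-1/2} M A^{-1/2}\big] = \tr[M A^{-1}],
\]
where I have used $\mathbb{E}[Y] = A^{-1/2} \mathbb{E}[X] A^{-1/2} = A^{-1/2} M A^{-1/2}$.

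The only genuinely load-bearing point — and thus the one I would state most carefully — is the set inclusion $\{Y \nleq I\} \subseteq \{\tr[Y] > 1\}$, i.e. the logical chain from $X \not\leq A$ to $\tr[A^{-1/2}XA^{-1/2}] > 1$. Everything hinges on the positivity of $Y$ (so that its eigenvalues are nonnegative and the trace bounds the top eigenvalue); without $X \geq 0$ the trace could fail to exceed $1$ even when an eigenvalue does. The remaining ingredients (existence of $A^{-1/2}$, order-preservation under conjugation, linearity and cyclicity of the trace, and the scalar Markov bound assumed from the classical theory) are routine and require no new input.
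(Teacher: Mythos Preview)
Your argument is correct and is precisely the reduction the paper has in mind: the text does not actually write out a proof but leaves the lemma as an exercise, remarking only that it ``simply follows from the proof of the standard (classical) Markov's inequality.'' Your conjugation by $A^{-1/2}$ followed by the eigenvalue--trace inequality $\{Y\nleq I\}\subseteq\{\tr[Y]>1\}$ and the scalar Markov bound is exactly that reduction carried out in full.
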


We will henceforth assume that $\mathbb{E}[X] = M \geq \mu I$, for $\mu > 0$. This is a natural assumption to make in sampling problems. If the probability of a certain event is too small, then we will have to sample a very large number of times to get an estimate of this probability. Assuming that the expectation value is larger than a certain minimum value excludes such rare events. In our setting, this assumption implies that the mean does not have very small eigenvalues. Then, defining the operator $Y = \mu M^{-1/2}XM^{-1/2}$, we see that $Y \in [0, I]$, and, $\mathbb{E}[Y] = \mu I$. Further,
\[ {\rm Pr}\left\{ X \nleq A \right\} = {\rm Pr}\left\{ Y \nleq A' \right\}, \quad A' = \mu M^{-1/2}AM^{-1/2}.\]
Thus, without loss of generality, we have an operator $A'$ with the property that it is strictly larger than the expectation value $\mathbb{E}[X]$. All of the eignevalues of $A'$ are strictly larger than $\mu$, so that, $A' \geq \alpha I > \mu I$.

In this setting, assuming $\mathbb{E}[X] = M = \mu I$, the following large deviation bound was proved in~\cite{AW}.
\begin{eqnarray}
{\rm Pr}\left\{ \frac{1}{n}\sum_{i=1}^{n}X_{i} \nleq \alpha I \right\} &\leq& de^{-nD(\alpha\parallel\mu)}, \quad {\rm for} \; \alpha > \mu, \label{eq:matrix_ldev}\\
{\rm Pr}\left\{ \frac{1}{n}\sum_{i=1}^{n}X_{i} \ngeq \alpha I \right\} &\leq& de^{-nD(\alpha\parallel\mu)}, \quad {\rm for} \; \alpha < \mu . \nonumber
\end{eqnarray}
Before proceeding to the proof, we note some useful facts about the relative entropy function.
\begin{rem}
The relative entropy $D(\alpha\parallel\mu)$ satisfies:
\begin{itemize}
\item When $(\alpha - \mu)$ is fixed,
\[ D(\alpha\parallel\mu) \geq 2(\alpha-\mu)^{2} .\]
\item When $\mu$ is small, for $\alpha = (1+ \epsilon)\mu$, a stronger bound holds:
\[ D(\alpha\parallel\mu) \geq c\mu\epsilon^{2},\]
for some constant $c$.
\end{itemize}
\end{rem}
\begin{proof}
We closely follow the arguments in~\cite{AW} where the matrix tail bounds were originally proved. First, note that,
\begin{equation}
 {\rm Pr}\left\{ \frac{1}{n}\sum_{i=1}^{n}X_{i} \nleq \alpha I \right\} \leq {\rm Pr}\left\{ e^{t\sum_{i=1}^{n}X_{i}} \nleq e^{n t \alpha} I \right\}. \label{eq:ldev1}
 \end{equation}
In the simplified setting we consider, these two probabilities are in fact equal. However, equality does not hold in general, since the function $x \rightarrow e^{x}$ is not a matrix monotone. The inverse function $x \rightarrow \ln x$ is indeed a matrix monotone. This follows from two simple facts: first, for matrices $X, Y$, $0\leq X \leq Y$ implies $Y^{-1} \leq X^{-1}$. Using this, along with the well known integral representation of the log function:
\[ \ln x  = \int_{t=1}^{x}\frac{dt}{t} = \int_{t=0}^{\infty} dt \left(\frac{1}{t+1} - \frac{1}{x+t}\right),\]
it is easy to prove that $\ln x$ is a matrix monotone.

Invoking the Markov bound (Eq.~\eqref{eq:matrix_markov}) in Eq.~\eqref{eq:ldev1}, we have,
\[ {\rm Pr}\left\{ \frac{1}{n}\sum_{i=1}^{n}X_{i} \nleq \alpha \Id \right\} \leq \tr[\mathbb{E}[e^{t\sum_{i=1}^{n}X_{i}}]]e^{-nt\alpha}.\]
Note that we cannot make use of the independence argument at this stage because the sum on the exponent does not factor into a product of exponents in this case. The operators $X_{i}$ could be non-commuting in general. Instead, we use the Golden-Thompson inequality.
\begin{lem}[Golden-Thompson Inequality]\index{Golden-Thompson inequality}
\begin{equation}
\tr[e^{A + B}] \leq \tr[e^{A}e^{B}] .
\end{equation}
\end{lem}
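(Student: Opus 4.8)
The statement to prove is the Golden-Thompson inequality: $\tr[e^{A+B}] \leq \tr[e^A e^B]$ for Hermitian matrices $A, B$.

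The plan is to prove this via the Lie-Trotter product formula combined with a repeated application of a key trace inequality. First I would recall the Lie product formula, namely that $e^{A+B} = \lim_{m\to\infty}\left(e^{A/m}e^{B/m}\right)^m$, which holds for arbitrary matrices and reduces the problem to controlling the trace of products of the form $\left(e^{A/m}e^{B/m}\right)^m$. The heart of the argument is then to establish the submultiplicativity-type bound
\[
\tr\left[\left(e^{A/m}e^{B/m}\right)^m\right] \leq \tr\left[e^{A}e^{B}\right],
\]
or more precisely to show that the sequence $\tr\left[\left(e^{A/2^k}e^{B/2^k}\right)^{2^k}\right]$ is monotone in $k$, so that passing to the limit yields the result.

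The central tool I would use is the following trace inequality: for any positive semidefinite matrices $P, Q$ and any natural number $r$, one has $\tr\left[(PQ)^{2r}\right] \leq \tr\left[(P^2 Q^2)^r\right]$. This can be proven by writing $\tr\left[(PQ)^{2r}\right] = \tr\left[(QP^2Q)(QP^2Q)\cdots\right]$ after suitable cyclic rearrangement, and applying the Cauchy-Schwarz inequality for the Hilbert-Schmidt inner product together with the fact that for a positive semidefinite matrix $H$, the map $r \mapsto \tr[H^r]$ behaves well; alternatively one invokes the fact that $\tr[(PQ)^{2r}] \le \tr[(P^2Q^2)^r]$ follows from a symmetrization argument. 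Setting $P = e^{A/2^{k+1}}$ and $Q = e^{B/2^{k+1}}$, and taking $r = 2^k$, this gives exactly the monotonicity $\tr\left[\left(e^{A/2^{k+1}}e^{B/2^{k+1}}\right)^{2^{k+1}}\right] \leq \tr\left[\left(e^{A/2^{k}}e^{B/2^{k}}\right)^{2^{k}}\right]$. Since the sequence is nonincreasing and its limit (by the Lie product formula) is $\tr[e^{A+B}]$, while its first term ($k=0$) is $\tr[e^A e^B]$, the inequality follows.

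I expect the main obstacle to be a clean, self-contained proof of the auxiliary trace inequality $\tr[(PQ)^{2r}] \leq \tr[(P^2Q^2)^r]$ for positive semidefinite $P,Q$. The cleanest route is to first prove the case $r=1$, i.e. $\tr[(PQ)^2] = \tr[PQPQ] \leq \tr[P^2Q^2]$, which is immediate from $\tr[(PQ - QP)^\dagger(PQ-QP)] \geq 0$ applied carefully, or from Cauchy-Schwarz on $\langle QP, PQ\rangle_{HS}$; then bootstrap to general $r$ by an induction that groups the product and reapplies the $r=1$ estimate to the positive operators $P^2$ and $Q^2$. I would present the $r=1$ case explicitly and indicate the inductive step, since this is the only genuinely nontrivial ingredient; everything else is the Lie product formula (which I would cite or state) and the monotone-limit bookkeeping described above.
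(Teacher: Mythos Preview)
The paper does not actually prove the Golden--Thompson inequality; it is stated as a lemma without proof and simply invoked in the derivation of the matrix tail bound. So there is no argument in the paper to compare your proposal against.

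That said, your outline is the standard Lie--Trotter route and is essentially sound. One point deserves care: the bootstrap from the $r=1$ inequality $\tr[(PQ)^2]\le\tr[P^2Q^2]$ to the general inequality $\tr[(PQ)^{2r}]\le\tr[(P^2Q^2)^r]$ is not a straightforward induction of the kind you describe. Writing $(PQ)^{2r}=((PQ)^2)^r$ does not help, since $(PQ)^2$ is not of the form $P'Q'$ with $P',Q'\ge 0$, and applying the $r=1$ case to $P^2,Q^2$ yields a statement about $\tr[(P^2Q^2)^2]$ rather than about $\tr[(PQ)^4]$. The clean way to get the general $r$ is to observe that for $X=PQ$ the eigenvalues $\lambda_i$ of $X$ are nonnegative (since $PQ$ is similar to $P^{1/2}QP^{1/2}\ge 0$), so $\tr[(PQ)^{2r}]=\sum_i\lambda_i^{2r}$, while $\tr[(P^2Q^2)^r]=\tr[(QP^2Q)^r]=\tr[(X^*X)^r]=\sum_i\sigma_i^{2r}$ with $\sigma_i$ the singular values of $X$; then Weyl's log-majorization $\prod_{i\le k}|\lambda_i|\le\prod_{i\le k}\sigma_i$ gives $\sum_i\lambda_i^{2r}\le\sum_i\sigma_i^{2r}$. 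If you want a self-contained write-up, state this eigenvalue--singular-value comparison explicitly rather than gesturing at an induction.
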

Using this, we have,
\begin{eqnarray}
{\rm Pr}\left\{ \frac{1}{n}\sum_{i=1}^{n}X_{i} \nleq \alpha I \right\} &\leq& \tr[\mathbb{E}[e^{tX_{n}}]\mathbb{E}[e^{t\sum_{i=1}^{n-1}X_{i}}]] e^{-nt\alpha} \nonumber \\
&\leq & \parallel \mathbb{E}[e^{tX_{n}}] \parallel\tr[\mathbb{E}[e^{t\sum_{i=1}^{n-1}X_{i}}]] e^{-nt\alpha}, \label{eq:ldev2}
\end{eqnarray}
where we have upper bounded the operator $e^{tX_{n}}$ by its norm in the final step. Repeating these steps iteratively, we have,
\begin{equation}
{\rm Pr}\left\{ \frac{1}{n}\sum_{i=1}^{n}X_{i} \nleq \alpha I \right\} \leq d \left(\parallel \mathbb{E}[e^{tX}]\parallel e^{-t\alpha}\right)^{n}. \nonumber
\end{equation}
Finally, noting that $\mathbb{E}[e^{tX}]$ is maximized when $X$ is Bernoulli distributed over $\{0,\Id\}$, and then optimizing over $t$, we get,
\begin{equation}
{\rm Pr}\left\{ \frac{1}{n}\sum_{i=1}^{n}X_{i} \nleq \alpha I \right\} \leq d \left( [1 - \mu + \mu e^{t}]e^{-t\alpha}\right)^{n} \leq de^{-nD(\alpha\parallel \mu)}. \label{eq:ldev3}
\end{equation}
\end{proof}

A stronger tail bound is obtained by avoiding the Golden-Thompson inequality and using Lieb concavity instead~\cite{tropp}. In particular, in place of Eq.~\eqref{eq:ldev2}, we have,
\[ {\rm Pr}\left\{ \frac{1}{n}\sum_{i=1}^{n}X_{i} \nleq \alpha I \right\} \leq  e^{-nt\alpha} \tr[(\mathbb{E}[e^{tX}])^{n}] .\]
This bound is clearly better than the bound in Eq.~\eqref{eq:ldev3} when the operator $e^{tX}$ has one dominating eigenvalue and the other eigenvalues are quite small. We refer to~\cite{tropp} for further details of this approach.

Such matrix tail bounds were originally obtained by Ahlswede and Winter in the context of a specific quantum information theoretic problem~\cite{AW}. Here we will focus on two other applications, namely in destroying correlations~\cite{GPW} and in quantum state merging.

\subsection{Destroying Correlations}

An important question in quantum information theory as well as statistical physics is to quantify the amount of correlations in a bipartite quantum system. One approach to quantifying the correlations in a bipartite state $\rho_{AB} \in \cS(\cH_{A}\otimes \cH_{B})$, proposed in~\cite{GPW}, is to understand the process by which $\rho_{AB}$ can be transformed to an uncorrelated (product) state of the form $\rho_{A}\otimes \rho_{B}$. The fundamental idea is to shift from characterizing the states to characterizing processes $ \cT : \rho_{AB} \rightarrow \rho_{A}\otimes\rho_{B}$, in particular, to quantify the amount of randomness that has to be introduced to effect such a transformation.

For sufficiently large systems $\cH_{A}, \cH_{B}$, there always exists a {\it global} unitary operation, that is, a unitary operator on $\cH_{A}\otimes \cH_{B}$, that maps $\rho_{AB} \rightarrow \rho_{A}\otimes \rho_{B}$ deterministically. On the other hand, if our physical model allows only for {\it local} unitary conjugations of the type $U_{A} \otimes I_{B}$ and mixtures thereof, a natural question to ask is, how much randomness is required for such a transformation\footnote{We could equally well have considered unitaries of the type $I_{A} \otimes U_{B}$, and the same results would hold.}. Note that a single local unitary conjugation cannot change the amount of correlations in a bipartite state. However, if we insert a finite amount of randomness by constructing maps involving probabilistic mixtures of local unitaries, a bipartite correlated state can indeed get mapped to a product state.

We thus consider CP maps of the following type:
\[\cT: X \rightarrow \sum_{i}p_{i}(U_{i}\otimes I)X(U_{i}\otimes I)^{\dagger}, \]
where $\{U_{i}\}$ are drawn from the set of unitary matrices on $\cH_{A}$. To see a concrete example of such a map, consider the maximally two-qubit state:
\[ \rho_{\rm ent} = \frac{1}{2}(|00\rangle + |11\rangle)(\langle 00 | +\langle 11|).\]
Then, choosing the unitaries $U_{0} = I$, $U_{1} = X$, $U_{2} = Y$ and $U_{3} = Z$, where $X, Y, Z$  are the Pauli matrices defined in Sec 3.2.2., and choosing the probabilities $p_{i} = \frac{1}{4}$, we get,
\[ \cT(\rho_{\rm ent}) = \sum_{i}p_{i}U_{i}\rho U_{i}^{\dagger} = \frac{I}{2}\otimes\frac{I}{2}. \]
We thus have a simple example of transforming a maximally entangled state into a product state via a probabilistic mixture of unitaries.

In order to quantify the amount of noise or randomness involved in the process, one approach is to simply use the difference in the (quantum) entropies between the final and initial states. However, it might be more meaningful to consider the classical entropy of the probability distribution $\{p_{i}\}$, which really captures the thermodynamic cost associated with the process. Physically, the erasing of correlations is in fact a consequence of erasing the knowledge of the probability distribution $\{p_{i}\}$ associated with the unitaries $\{U_{i}\}$. {\it Landauer's principle} states that there is an energy cost associated with erasing information, and in the asymptotic limit, this cost is proportional to the entropy of the distribution $\{p_{i}\}$.

\begin{defn}\index{randomizing map}
Given $n$ copies of a bipartite state $\rho_{AB}$, the family $(p_{i}, U_{i})_{i=1}^{N}$ of probabilities $p_{i}$ and unitaries $U_{i}\in \cU(\cH_{A})$ is said to be $\epsilon$-randomizing for $\rho_{AB}^{\otimes n}$ if the associated map $\cT$ is such that
\begin{equation}
\parallel \sum_{i=1}^{N}p_{i}(U_{i}\otimes I)\rho_{AB}^{\otimes n}(U_{i}^{\dagger}\otimes I) - \tilde{\rho}_{A} \otimes \rho_{B}^{\otimes n} \parallel \leq \epsilon, \label{eq:ep_random}
\end{equation}
where $\tilde{\rho}_{A} = \sum_{i}p_{i}U_{i}\rho_{A}^{\otimes n}U_{i}^{\dagger} \in \cH_{A}^{\otimes n}$.
\end{defn}
We are interested in quantifying the size of the smallest such $\epsilon$-randomizing family $(p_{i}, U_{i})$ for the state $\rho_{AB}^{\otimes n}$, which we denote as $N(n, \epsilon)$.
\begin{defn}
$N(n, \epsilon)$ is defined to be the smallest $N$ such that $\exists$ an $\epsilon$-randomizing family $(p_{i}, U_{i})_{i=1}^{N}$ for $\rho_{AB}^{\otimes n}$.
\end{defn}

Specifically, we are interested in the asymptotic behavior of $\frac{\log N(n,\epsilon)}{n}$ in the limit $\epsilon\rightarrow 0$ and $n\rightarrow \infty$, as way to quantify the correlation of the bipartite state $\rho_{AB}$. Following the work of Groisman {\it et al}~\cite{GPW}, we first prove the following lower bound.
\begin{thm}
Given $n$ copies of a bipartite state $\rho_{AB} \in \cS(\cH_{A} \otimes \cH_{B})$, any $\epsilon$-randomizing set $\{U_{i}\}_{i=1}^{N(n,\epsilon)}$ must have at least $N(n,\epsilon)$ unitaries, where $N(n,\epsilon)$ satisfies,
\begin{equation}
\liminf_{\substack{n\rightarrow\infty \\ \epsilon \rightarrow 0}} \; \frac{1}{n}\log N(n,\epsilon) \geq I(A:B)_{\rho},
\end{equation}
where, $I(A:B)_{\rho} = S(\rho_{A}) + S(\rho_{B}) - S(\rho_{AB})$ is the mutual information of the state $\rho_{AB}$ (see Eq.~\eqref{eq:mutualinfo}).
\end{thm}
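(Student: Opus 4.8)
The plan is to establish the lower bound by showing that the mutual information $I(A:B)_\rho$ is essentially a lower bound on the ``compression'' achievable by any $\epsilon$-randomizing family. The key observation is that the output of the randomizing map is a product state $\tilde\rho_A \otimes \rho_B^{\otimes n}$ (up to $\epsilon$ in trace-norm), and the entropy of this product state must account for the destruction of correlations present in $\rho_{AB}^{\otimes n}$. The central tool will be the following entropic bookkeeping: the map $\cT$ mixes $N$ unitary conjugations, and each such conjugation is entropy-non-decreasing in a controlled way, while the target product state has strictly larger entropy than $\rho_{AB}^{\otimes n}$ by exactly the mutual information $nI(A:B)_\rho$ (since $S(\rho_A^{\otimes n}\otimes\rho_B^{\otimes n}) - S(\rho_{AB}^{\otimes n}) = n I(A:B)_\rho$).

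First I would make precise the entropy gain. By subadditivity of the von Neumann entropy (Eq.~\eqref{eq:subadd}) and the fact that the marginals are unchanged by local unitaries, the output state $\omega_n := \sum_i p_i (U_i\otimes I)\rho_{AB}^{\otimes n}(U_i^\dagger\otimes I)$ satisfies $S(\omega_n) \ge S(\tilde\rho_A\otimes\rho_B^{\otimes n}) - O(\epsilon \log \dim)$, using the Fannes-type continuity of entropy together with Eq.~\eqref{eq:ep_random}. On the other hand, $\omega_n$ arises from $\rho_{AB}^{\otimes n}$ by a classical mixing over at most $N$ unitaries, so a standard entropy inequality for mixtures gives $S(\omega_n) \le S(\rho_{AB}^{\otimes n}) + \log N$ (the entropy of a mixture exceeds the average entropy of its components by at most the entropy of the mixing distribution, which is at most $\log N$). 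Combining these two estimates yields
\[
\log N \ge S(\tilde\rho_A\otimes\rho_B^{\otimes n}) - S(\rho_{AB}^{\otimes n}) - O(\epsilon n),
\]
and the entropy difference on the right is $n I(A:B)_\rho$ up to the discrepancy between $\tilde\rho_A$ and $\rho_A^{\otimes n}$, which is again $\epsilon$-small in trace-norm and hence controllable by continuity.

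The remaining step is to divide by $n$, take $\liminf$ as $n\to\infty$ and $\epsilon\to 0$, and verify that the error terms $O(\epsilon \log\dim)/n$ and the correction from replacing $\tilde\rho_A$ by $\rho_A^{\otimes n}$ both vanish in the limit. Here I would invoke the Fannes inequality explicitly: $|S(\tilde\rho_A) - S(\rho_A^{\otimes n})| \le \epsilon n \log\dim\cH_A + \eta(\epsilon)$ for the appropriate continuity function $\eta$, so the per-copy correction tends to zero as $\epsilon\to0$. This gives the claimed bound $\liminf \frac1n \log N(n,\epsilon) \ge I(A:B)_\rho$.

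The main obstacle I anticipate is controlling the two $\epsilon$-dependent corrections simultaneously with the $n\to\infty$ limit, since a naive Fannes bound scales as $\epsilon n\log\dim$, which does not obviously vanish after dividing by $n$ unless one is careful about the order of limits (taking $\epsilon\to 0$ sufficiently fast relative to $n$). One must ensure the continuity estimates are applied to the \emph{normalized} quantities and that the logarithmic dimension factors do not overwhelm the linear-in-$n$ entropy gap; this is exactly why the statement takes a joint $\liminf$ over $n\to\infty$ and $\epsilon\to0$ rather than an iterated limit. A secondary technical point is justifying the mixture entropy inequality $S(\omega_n)\le S(\rho_{AB}^{\otimes n}) + \log N$ cleanly, which follows from concavity of entropy and the grouping of the mixture, but should be stated carefully since the components $(U_i\otimes I)\rho_{AB}^{\otimes n}(U_i^\dagger\otimes I)$ are isospectral to $\rho_{AB}^{\otimes n}$ and hence all have the \emph{same} entropy, making the bound tight in a useful way.
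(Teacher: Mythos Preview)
Your approach is essentially the same as the paper's: bound $S(\omega_n)$ from above by $S(\rho_{AB}^{\otimes n}) + \log N$ via the mixture/concavity inequality, bound it from below by $S(\tilde\rho_A \otimes \rho_B^{\otimes n}) - O(n\epsilon)$ via Fannes, and combine.

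There is, however, one genuine gap. Your claim that $\tilde\rho_A$ is $\epsilon$-close to $\rho_A^{\otimes n}$ in trace norm is unjustified and in general false: the $\epsilon$-randomizing condition in Eq.~\eqref{eq:ep_random} only asserts that $\omega_n$ is close to $\tilde\rho_A \otimes \rho_B^{\otimes n}$, and says nothing whatsoever about the distance between $\tilde\rho_A$ and $\rho_A^{\otimes n}$. Indeed, averaging over many local unitaries can move $\rho_A^{\otimes n}$ arbitrarily far from itself (toward the maximally mixed state, say). So your proposed Fannes step for this term would fail. The repair is immediate and requires no continuity argument at all: since $\tilde\rho_A = \sum_i p_i U_i \rho_A^{\otimes n} U_i^\dagger$ is a convex mixture of states each isospectral to $\rho_A^{\otimes n}$, concavity of the von Neumann entropy gives $S(\tilde\rho_A) \geq n S(\rho_A)$ outright, which is exactly the inequality direction you need. (The paper writes this step as an equality, which is itself a bit loose, but only the $\geq$ is used.) With this fix the error term after dividing by $n$ is simply $\epsilon \log(d_A d_B)$, uniform in $n$, so your concern about the order of limits also dissolves.
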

\begin{proof}
Let $(p_{i}, U_{i})_{i=1}^{N(n,\epsilon)}$ be an $\epsilon$-randomizing family for $\rho_{AB}^{\otimes n}$. The corresponding CP map $\cT$ is given by
\[ \cT(\rho_{AB}^{\otimes n}) = \sum_{i=1}^{N}p_{i}(U_{i}\otimes I)\rho_{AB}^{\otimes n}(U_{i}^{\dagger}\otimes I).\]
Then, it can be shown that the von Neumann entropy of the transformed state satisfies,
\begin{equation}
S(\cT(\rho_{AB}^{\otimes n})) \leq S(\rho_{AB}^{\otimes n}) + H(p_{1}, \ldots, p_{N}),  \label{eq:upbound}
\end{equation}
where $H(p_{1},\ldots,p_{N})$ is the Shannon entropy of the distribution $\{p_{i}\}$. The inequality above follows from the concavity of the von Neumann entropy, in particular,
\[ S(\sum_{i}p_{i}\rho_{i}) \leq \sum_{i}p_{i}S(\rho_{i}) + H(p_{1}, p_{2}, \ldots, p_{N}) .\]

>From the definition in Eq.~\eqref{eq:ep_random}, the final state $\cT(\rho_{AB}^{\otimes n})$ is close to the product state $\tilde{\rho}_{A}\otimes \rho_{B}^{\otimes n}$. This implies, via the Fannes inequality,
\[ S(\tilde{\rho}_{A}\otimes\rho_{B}^{\otimes n}) \leq  S(\cT(\rho_{AB}^{\otimes n})) + \epsilon\log[(d_{A})^{n}(d_{B})^{n}], \]
where $d_{A} = {\rm dim}(\cH_{A})$ and $d_{B} = {\rm dim}(\cH_{B})$. Further, from the definition of $\tilde{\rho}_{A}$, we have,
\[ S(\tilde{\rho}_{A}\otimes\rho_{B}^{\otimes n})  = nS(\rho_{A}) + n S(\rho_{B}).\]

These observations along with the inequality in Eq.~\eqref{eq:upbound} imply,
\[ nS(\rho_{A}) + n S(\rho_{B}) - n\epsilon\log(d_{a}d_{B}) \leq nS(\rho_{AB}) + \log N(n,\epsilon) .\]
Thus we have the following lower bound:
\begin{equation}
\frac{\log N(n,\epsilon)}{n} \geq  I(A:B)_{\rho} - O(\epsilon), \end{equation}
where $I(A:B)_{\rho} = S(\rho_{A}) + S(\rho_{B}) - S(\rho_{AB})$ is the mutual information of the state $\rho_{AB}$. This in turn implies the asymptotic lower bound,
\begin{equation}
\liminf_{\substack{n\rightarrow\infty \\ \epsilon \rightarrow 0}} \frac{1}{n}\log N(n,\epsilon) \geq I(A:B) .
\end{equation}
\end{proof}

To show that the mutual information is also an upper bound, we make use of the matrix sampling bound proved in the previous section. We will also need to invoke the typicality principle, which we recall here.
\begin{defn}[Typicality Principle]\index{typicality principle}
Given $n$ copies of a state $\rho_{AB}$, for all $\epsilon$ and large enough $n$, there exists a truncated state $\rho_{\hat{A}\hat{B}}^{(n)} \in \cS(\hat{\cH}_{A}^{\otimes n}\otimes\hat{\cH}_{B}^{\otimes n})$ with the following properties:
\begin{itemize}
 \item[(i)] $\parallel \rho^{(n)}_{\hat{A}\hat{B}} - \rho_{AB}^{\otimes n}\parallel_{1} \leq \epsilon$,
\item[(ii)] ${\rm Range} (\rho_{\hat{A}\hat{B}}^{(n)}) \subset \hat{\cH}_{A}\otimes\hat{\cH}_{B}$, where, $\hat{\cH}_{A} \subset \cH_{A}^{\otimes n}$ and $\hat{\cH}_{B} \subset \cH_{B}^{\otimes n}$ with the dimensions of the truncated spaces satisfying,
\[ d_{\hat{A}} \equiv {\rm dim}(\hat{\cH_{A}}) \leq 2^{nS(\rho_{A}) \pm \epsilon}, \qquad d_{\hat{B}} \equiv {\rm dim}(\hat{\cH_{B}}) \leq 2^{nS(\rho_{B}) \pm \epsilon} .\]
 \item[(iii)] Asymptotic Equipartition Property: $\rho_{AB}^{(n)} \approx 2^{-nS(\rho_{AB}) \pm \epsilon}$ on its range. Also,
     \[ \rho_{A}^{(n)} \approx 2^{-nS(\rho_{A})\pm \epsilon}\; I_{\hat{A}},  \quad \rho_{B}^{(n)} \approx 2^{-nS(\rho_{B})\pm \epsilon}\; I_{\hat{B}} .\]
In other words, in the truncated spaces, the reduced states have an almost flat spectrum. The global state also has a near flat spectrum on its range space.
\end{itemize}
\end{defn}

It now suffices to find a map that can destroy the correlations in the truncated state $\rho_{\hat{A}\hat{B}}^{(n)}$. We now prove that there always exists such a map which is a probabilistic mixture of $N(n,\epsilon)$ unitaries, where $N(n,\epsilon)$ satisfies the following upper bound:
\begin{equation}
 \limsup_{\substack{n\rightarrow\infty \\ \epsilon \rightarrow 0}} \frac{1}{n}\log N(n,\epsilon) \leq I(A:B).
\end{equation}
\begin{proof}
The proof is to essentially construct an ensemble from which the random unitaries that make up the map can be picked. First, choose a probability distribution $\mu$ over the set of all unitaries $\cU(\cH_{\hat{A}})$ acting on $\cH_{\hat{A}}$ with the property that
\begin{equation}
 \sigma \rightarrow \int d\mu(U) U\sigma U^{\dagger} = \frac{I_{\hat{A}}}{d_{\hat{A}}}, \forall \sigma \in \cS(\hat{\cH}_{A}).
\end{equation}
There are many measures $\mu$ that satisfy this property. Mathematically, the most useful measure is the {\it Haar measure}\index{Haar measure}, which is the unique unitary invariant probability measure. This makes the above integral manifestly unitary invariant. The smallest support of such a measure is $(d_{\hat{A}})^{2}$. Another choice of measure can be realized by considering the discrete Weyl-Hiesenberg group. This is the group of unitaries generated by the cyclic shift operator and the diagonal matrix with the $n^{\rm th}$ roots of unity along the diagonal. Either choice of measure would work for our purpose.

Next, we draw unitaries $U_{1}, U_{2}, \ldots, U_{N}$, independently at random from $\mu$. Define the operators
\[ Y_{i} = (U_{i}\otimes I)\rho_{AB}^{(n)}(U_{i}^{\dagger}\otimes I).\]
Note that the expectation values of these operators satisfy
\[ \mathbb{E}[Y_{i}] = \frac{1}{d_{\hat{A}}} I_{\hat{A}}\otimes\rho_{\hat{B}}^{(n)}, \; \forall\; i.\]
The $Y_{i}$s are not upper bounded by $I$ but by some exponential factor as stated in the typicality principle. Therefore, we rescale the $Y_{i}$s as follows.
\[ X_{i} = 2^{n(S(\rho_{AB}) -\epsilon)}Y_{i} \in [0, I].\]
Then, the expectation values of the $X_{i}$s satisfy,
\begin{eqnarray}
\mathbb{E}[X_{i}] &=& 2^{n(S(\rho_{AB} - \epsilon))}\frac{1}{d_{\hat{A}}} I_{\hat{A}}\otimes\rho_{\hat{B}}^{(n)} \nonumber \\
 &\geq& 2^{-n(I(A:B) - 3\epsilon)} I_{\hat{A}\hat{B}} \equiv \nu I_{\hat{A}\hat{B}} ,
\end{eqnarray}
where the second inequality comes from the typicality principle.

Once we have this lower bound on the expectation values, we can use the matrix tail bounds in Eq.~\eqref{eq:matrix_ldev} as follows. Let $M$ denote the expectation value $\mathbb{E}[Y_{i}]$. Then,
\begin{eqnarray}
{\rm Pr}\left\{ \frac{1}{N}\sum_{i=1}^{N}Y_{i} \nleq (1+\epsilon)M \right\} &\leq&  2d_{\hat{A}}d_{\hat{B}} e^{-ND((1+ \epsilon)\nu\parallel\nu)} \nonumber \\
 &\leq& 2(d_{A})^{n}(d_{B})^{n}e^{-c\epsilon^{2}\nu N} . \label{eq:tail_bound}
\end{eqnarray}
We can similarly get a lower bound by using the other tail bound in Eq.~\eqref{eq:matrix_ldev} for $(1-\epsilon)M$. If the bound on the RHS is less than one, then, there exist unitaries $U_{1}, U_{2}, \ldots, U_{N}$ with the property
\begin{equation}
 (1 - \epsilon)M \leq \frac{1}{N}\sum_{i=1}^{N}(U_{i}\otimes I)\rho_{AB}^{(n)}(U_{i}^{\dagger}\otimes I) \leq (1 + \epsilon)M . \label{eq:ep_random2}
\end{equation}
Taking the trace-norm on both sides, we see that the set $\{U_{1}, U_{2}\ldots, U_{N}\}$ is indeed $\epsilon$-randomizing as defined in Eq.~\eqref{eq:ep_random}. Note that the statement we have proved in Eq.~\eqref{eq:ep_random2} is infact a stronger one!

Thus, in order to achieve the $\epsilon$-randomizing property, the bound in Eq.~\eqref{eq:tail_bound} shows that it suffices to have
\[ N \geq \frac{1}{c\epsilon} 2^{-nI(A:B) + 3\epsilon}\log[(d_{A})^{n}(d_{B})^{n}] \approx 2^{nI(A:B) + \delta},\]
in the limit of large $n$.
\end{proof}

Finally, we revisit the example of the maximally entangled two-qubit state $\rho_{\rm ent}$ discussed earlier. The mutual information of this state is $I(A:B)_{\rho} = 2$. Correspondingly, we provided a set of four unitaries, which when applied with equal probabilities could transform $\rho_{\rm ent}$ into a tensor product of maximally mixed states on each qubit. The upper and lower bounds proved here show that this set is in fact optimal for destroying correlations in the state $\rho_{\rm ent}$.

In the following section, we show that the results discussed here provide an interesting approach to an important quantum information processing task, namely, state merging.

\subsection{State Merging}

We first introduce the notion of {\it fidelity} between quantum states, which is based on the overlap between their purifications (see Defn.~\ref{def:purification} above). Recall that for any $\rho, \sigma \in \cS(\cH_{A})$, there exist purifications  $|\psi\rangle, |\phi\rangle \in \cH_{A} \otimes \cH_{A'}$ such that $\tr_{A'}[|\psi\rangle\langle\psi|] = \rho$ and $\tr_{A'}[|\phi\rangle\langle\phi|] = \sigma$. Furthermore, two different purifications of the same state $\rho$ are related by an isometry. That is, if there exists $|\tilde{\psi}\rangle \in \cH_{A}\otimes \cH_{\tilde{A}}$ satisfying $\tr_{\tilde{A}}[|\tilde{\psi}\rangle\langle\tilde{\psi}|] = \rho$, then there exists a partial isometry $U: A' \rightarrow \tilde{A}$ such that $|\tilde{\psi}\rangle = (I \otimes U) |\psi\rangle$. This observation gives rise to the following definition of fidelity between states.
\begin{defn}[Fidelity]\index{quantum! fidelity}
Given states $\rho, \sigma \in \cS(\cH_{A})$, the fidelity $F(\rho, \sigma)$ is defined as
\begin{equation}
 F (\rho, \sigma) := \max_{|\psi\rangle,|\phi\rangle} \tr[(|\psi\rangle\langle\psi|)(|\phi\rangle\langle\phi|)] = \max_{|\psi\rangle,|\phi\rangle} |\langle\psi|\phi\rangle|^{2},
\end{equation}
where the maximization is over all purifications $|\psi\rangle, |\phi\rangle$ of $\rho$  and $\sigma$ respectively.
\end{defn}
Note that the optimization over purifications is in fact an optimization over unitaries acting on the auxiliary space. A particular choice of purification of the state $|\psi_{0}\rangle$ defined as
\[ |\psi_{0}\rangle = (\sqrt{\rho}\otimes I)|\Phi\rangle, \]
where $|\Phi\rangle = \sum_{i}|ii\rangle$ is a purification of the identity. Any other purification $|\psi\rangle$ is then of the form
\[ |\psi\rangle = (\sqrt{\rho}\otimes U)|\Phi\rangle,\]
where $U$ is a partial isometry. Therefore the expression for fidelity simplifies as stated below .

\begin{exer}[Properties of Fidelity]
\begin{itemize}
\item[(i)]The fidelity between states $\rho, \sigma \in \cS(\cH_{A})$ is given by,
 \begin{equation}
  F(\rho,\sigma) = \parallel \sqrt{\rho}\sqrt{\sigma}\parallel_{1}^{2}.
 \end{equation}
\item[(ii)] $P(\rho, \sigma) := \sqrt{1 - F(\rho,\sigma)}$ is a metric on $\cS(\cH_{A})$.
\item[(iii)] $P(\rho, \sigma)$ is contractive under CPTP maps.
\item[(iv)] Relation between fidelity and trace-distance:
\[ \frac{1}{2}\parallel \rho -\sigma\parallel_{1} \leq P(\rho, \sigma). \]
Thus, convergence in the trace-norm metric is equivalent to convergence in the $P$-metric.
\end{itemize}
\end{exer}

The above discussion on fidelity thus implies the following. Given a state $|\psi\rangle \in \cH_{A}\otimes \cH_{A'}$ with $\rho = \tr_{A'}[|\psi\rangle\langle\psi|]$, and, $|\phi\rangle \in \cH_{A}\otimes \cH_{A'}$ with $\sigma = \tr_{A'}[|\phi\rangle\langle\phi|]$, there exists a partial isometry $U$ such that the state $|\psi'\rangle = (I\otimes U)|\psi\rangle$ is as close to $|\phi\rangle$ as the fidelity between $\rho,\sigma$. In other words, $|\langle\phi|\psi\rangle|^{2} = F(\rho, \sigma)$.

We now define an important information theoretic primitive in the quantum setting.
\begin{defn}[State Merging]\index{quantum! state merging}
Consider $\rho_{AB} = \tr_{R}[|\psi\rangle_{RAB}\langle\psi|]$, where $|\psi\rangle_{RAB}$ is a joint state of $\cH_{R}\otimes\cH_{A}\otimes\cH_{B}$. System $R$ is simply a {\it reference} system which plays no active role in the protocol; all operations are performed by the two parties $A$ and $B$ alone. The goal of state merging is to transform the state $|\psi\rangle_{RAB}$ into $|\tilde{\psi}_{R\hat{A}\hat{B}}\rangle$ which is close in fidelity to the original state, where the systems $\hat{A}$ and $\hat{B}$ now correspond to party $B$. The protocol could use entanglement between the parties $A$ and $B$ (say, a state of Schmidt rank $2^{K}$), local operations and classical communication which are assumed to be a {\it free} resources.
\end{defn}

The goal of state merging is to transform a joint state of $A$ and $B$ (along with the reference) to a state of $B$ alone (and the reference), where system $B$ is now composed of systems $\hat{A}$ and $\hat{B}$. Note that {\bf quantum teleportation} is in fact a special case of state merging, where party $A$ conveys an unknown quantum state to $B$ using one maximally entangled state ($K$ = 1) and $2$ bits of classical communication.

Assuming that $A$ and $B$ share some entanglement initially, the actual initial state for the protocol is of the form $|\psi\rangle_{RAB} \otimes |\phi_{K}\rangle_{A_{0}B_{0}}$. To incorporate the local operations and classical communication (which we assume to be one-way, from $A$ to $B$), we can break down the protocol into three phases. First, $A$ performs some local operation which is modeled as a family of CP maps $\{\cT_{\alpha}\}$ on system $A$ alone:
\[\{\cT_{\alpha}\}_{\alpha}, \quad \cT_{\alpha} : \cB(\cH_{A}\otimes \cH_{A_{0}})\rightarrow \cB(\cH_{A}\otimes\cH_{A_{0}}), \; \sum_{\alpha}\cT_{\alpha} = I . \]
This is followed by classical communication from $A$ to $B$ and finally local operations on system $B$. The operations on $B$'s system are CPTP maps $\{\cD_{\alpha}\}$ that {\it decode} the state based on the classical information from $A$. Therefore,
\[ \cD_{\alpha}: \cB(\cH_{B}\otimes \cH_{B_{0}}) \rightarrow \cB(\cH_{\hat{A}}\otimes\cH_{\hat{B}}\otimes\cH_{B_{0}}) .\]
Thus the final state at the end of this protocol is given by,
\[ \sum_{\alpha}(\Id_{R}\otimes\cT_{\alpha}\otimes\cD_{\alpha})|\psi\rangle_{RAB}\otimes|\phi_{K}\rangle_{A_{0}B_{0}} \approx |\psi\rangle_{R\hat{A}\hat{B}}\otimes|\phi_{0}\rangle_{B_{0}}, \]
which should be close to the original state in fidelity.

Having formalized the model for state merging, we are interested in the following question: what is the smallest $K (\rho_{AB}, \epsilon)$ such that there exists a state merging protocol that achieves a fidelity greater than $1-\epsilon$ for the state $\rho_{AB}$? As before, we focus to the asymptotic, multiple copy setting. Taking $n$ copies of $\rho_{AB}$, we would like to study the quantity $\frac{K(\rho_{AB}^{\otimes n}, \epsilon)}{n}$ in the asymptotic limit of $n \rightarrow \infty$ and $\epsilon \rightarrow 0$. This is simply another way of asking, how much quantum communication is required to effect this transformation. A special case of this problem is the case where $A$ and $B$ share no entanglement; only $A$ and $R$ are entangled. This is the same as the question of quantum data compression originally studied by Schumacher~\cite{Sch}. The asymptotic bound in the special case turns out to be the entropy of system $A$.

Here, we prove the following asymptotic bound for state merging, originally shown in~\cite{HOW1, HOW2}.
\begin{thm}\index{Theorem! Quantum State Merging}
Given $n$ copies of a state $\rho_{AB}$, any state merging protocol that achieves a fidelity larger than $(1-\epsilon)$ must use an entangled state of Schmit rank at least $K(n,\epsilon)$, with $K(n,\epsilon)$ bounded by,
\begin{equation}
 \liminf_{\substack{n\rightarrow \infty \\ \epsilon\rightarrow 0}}\; \frac{1}{n}K(\rho_{AB}^{\otimes n}, \epsilon) \geq S(A|B)_{\rho}, \label{eq:s_merge1}
\end{equation}
where $S(A|B)_{\rho} = S(\rho_{AB}) - S(\rho_{B})$ is the conditional entropy of system $A$ given $B$. Conversely, there always exists a state merging protocol that that achieves fidelity greater $1 - \epsilon$ for $n$ copies of $\rho_{AB}$, with the shared entanglement between $A$ and $B$ bounded by
\begin{equation}
 \limsup_{\substack{n\rightarrow \infty \\ \epsilon\rightarrow 0}}\; \frac{1}{n}K(\rho_{AB}^{\otimes n}, \epsilon) \leq S(A|B)_{\rho}, \label{eq:s_merge2}
\end{equation}
\end{thm}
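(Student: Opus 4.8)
The plan is to prove the two bounds separately: the converse (\ref{eq:s_merge1}) by an entropic monotonicity argument, and the direct part (\ref{eq:s_merge2}) by the decoupling method built directly on the matrix tail bounds of the Destroying Correlations subsection.

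For the converse, I would track an appropriate entropic monotone across the cut separating the two laboratories, namely the entanglement that can be present relative to the reference $R$. Since the only quantum resource consumed is the shared state $|\phi_K\rangle_{A_0B_0}$ of Schmidt rank $2^K$, and one-way LOCC cannot increase entanglement across the lab cut, the correlations present in the (approximate) final state $|\psi\rangle_{R\hat A\hat B}\otimes|\phi_0\rangle_{B_0}$ must be accounted for by $K$ together with what was present initially. Concretely, I would write out the entropy balance of the reference before and after: at the start $R$ is correlated with both Alice's $A$ and Bob's $B$, whereas at the end $R$ is purified entirely by Bob's systems $\hat A\hat B$. Combining subadditivity with the non-increase of entanglement under LOCC yields $K \ge S(\rho_{AB}) - S(\rho_B) = S(A|B)_\rho$. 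Because the final state is only $(1-\epsilon)$-close in fidelity, I would invoke the Fannes inequality (using the fidelity--trace-distance relation (iv) of the preceding fidelity Exercise to pass from fidelity to trace norm) to control the entropy differences, producing an $O(\epsilon)$ correction; dividing by $n$ and taking $n\to\infty$, $\epsilon\to 0$ gives (\ref{eq:s_merge1}).

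For the direct part the central idea is \emph{decoupling}: Alice destroys the correlation between the reference $R$ and the portion of her system she retains, using exactly the randomization technique of the Destroying Correlations subsection. First I would pass to the typical subspace, so that $\rho_A^{(n)}$, $\rho_B^{(n)}$ and $\rho_{AB}^{(n)}$ all have approximately flat spectra with the dimensions dictated by the Typicality Principle. Alice then applies a Haar-random unitary $U$ to her typical system $\hat A$ and performs a projective measurement splitting $\hat A$ into a piece of dimension $\approx 2^{nS(A|B)}$ that she communicates and a residual piece she keeps; the matrix tail bound (\ref{eq:matrix_ldev}) shows that with probability tending to one the residual marginal on $R^n$ lies within $\epsilon$ of the decoupled product $\rho_R^{\otimes n}\otimes\tau$, with $\tau$ maximally mixed. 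Once $R$ is decoupled from everything Alice retains, the global pure state has $R$ purified by the systems in Bob's possession together with the transmitted register, so two purifications of $\rho_R^{\otimes n}$ are available; by the uniqueness of purifications up to a partial isometry (recalled in the discussion of fidelity above) Bob can apply a local recovery isometry, conditioned on Alice's classical message, carrying the state to within fidelity $1-O(\epsilon)$ of the target $|\psi\rangle_{R\hat A\hat B}$. A dimension count of the entanglement set-up then gives $\frac{1}{n}\log K(\rho_{AB}^{\otimes n},\epsilon)\to S(A|B)_\rho$, establishing (\ref{eq:s_merge2}).

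I expect the decoupling step to be the main obstacle. The delicate points are choosing the random measurement so that the discarded dimension is exactly $2^{nS(A|B)}$ while keeping the expected deviation small, upgrading the expectation estimate to a high-probability (indeed existence) statement via (\ref{eq:matrix_ldev}), and then correctly converting the decoupling condition into a recovery isometry through uniqueness of purification and the fidelity estimates of the last Exercise. Managing the interplay between the typical-subspace approximations and the trace-norm/fidelity error terms, so that all $o(n)$ corrections vanish in the double limit $n\to\infty,\ \epsilon\to 0$, is where most of the care will be required.
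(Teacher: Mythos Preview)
The paper only proves the achievability bound (\ref{eq:s_merge2}); your sketch for the optimality bound (\ref{eq:s_merge1}) via entanglement monotonicity and Fannes is reasonable and standard, but there is nothing in the paper to compare it against.

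For the achievability, your route and the paper's diverge. The paper does \emph{not} use a single Haar unitary plus a projective split of $\hat A$. Instead it recycles the Destroying Correlations result verbatim: take the discrete $\epsilon$-randomizing family $\{U_i\}_{i=1}^{N}$ for $\rho_{RA}^{\otimes n}$ with $N\sim 2^{nI(A:R)}$, have Alice prepare an ancilla $C$ in the uniform superposition $\frac{1}{\sqrt N}\sum_i|i\rangle$, apply the controlled unitary $\sum_i (U_i)_{A^n}\otimes|i\rangle\langle i|_C$, and then \emph{teleport} the register $C$ to Bob at a cost of $\log N\approx nI(A:R)$ ebits. Tracing $C$ reproduces exactly the randomizing map, so $\phi_{R^nA^n}$ is $\epsilon$-close to the product $\phi_{R^n}\otimes\phi_{A^n}$; uniqueness of purifications then supplies Bob's recovery isometry $V$. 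The punchline is an entanglement \emph{accounting}: $\phi_{A^n}$ is (by typicality) maximally mixed of rank $\approx 2^{nS(\rho_A)}$, so its purification with $B_0$ is a fresh maximally entangled state of that rank, and the net consumption is $I(A:R)_\rho - S(\rho_A)=S(A|B)_\rho$. Your protocol instead tries to hit $S(A|B)$ directly by sizing the communicated block to $2^{nS(A|B)}$; that is closer to the original HOW/FQSW decoupling argument and is a legitimate alternative, but note two mismatches with what the chapter provides: first, the tail bound (\ref{eq:matrix_ldev}) is for i.i.d.\ \emph{sums} of matrix variables and is exactly what powers the $N$-unitary randomizing family---it does not directly give concentration for a single Haar unitary followed by a partial trace, which is what your split needs; second, your ``piece of dimension $\approx 2^{nS(A|B)}$'' is awkward when $S(A|B)<0$, whereas the paper's in/out bookkeeping handles both signs uniformly. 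The paper's construction buys a clean reuse of the previous subsection; yours would require importing a separate one-shot decoupling lemma.
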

Note that the conditional entropy satisfies $S(A|B) \leq S(\rho_{A})$ with equality holding iff $\rho_{AB} = \rho_{A}\otimes \rho_{B}$ is a product state. Thus, the state merging protocol in general achieves something beyond teleportation or data compression. It is also worth noting here that state merging is in some sense a quantum version of the classical {\it Slepian-Wolf protocol}. Loosely speaking, both protocols consider a setting where the second party $B$ has some partial information about $A$, and then ask what is the minimum amount of information that has to be transmitted by $A$ so that $B$ has complete knowledge of $A$. In the classical Slepian-Wolf also, the amount of information that has to be transmitted from $A$ to $B$ to achieve complete transfer of information is indeed the conditional entropy $H(A|B)$.

Here we only prove the converse statement by explicitly constructing the state merging protocol using the $\epsilon$-randomizing maps defined in the last section.

\begin{proof}
Assume there exists an $\epsilon$-randomizing family of unitaries $\{U_{i}\}_{i=1}^{N}$ on $(\cH_{A})^{\otimes n}$, for the state $\rho_{AR} = \tr_{B}[|\psi\rangle_{RAB}\langle\psi|]$. Then,
\[ \frac{1}{N} \sum_{i=1}^{N}(I \otimes U_{i})\rho_{RA}^{\otimes n}(I\otimes U_{i}^{\dagger}) \approx \rho_{R}^{\otimes n}\otimes \rho_{A}^{\otimes n}, \]
where, $N \sim 2^{nI(A:R)}$, and, the closeness to the product state is assumed to be in the fidelity sense. This holds because of the relation between the trace-distance and the fidelity stated earlier.

The state merging protocol can now be constructed as follows. Party $A$ first prepares a uniform superposition of basis states on an auxiliary space $\cH_{C}$:
\[ |\Xi\rangle = \frac{1}{\sqrt{N}}\sum_{i=1}^{N}|i\rangle \; \in \cH_{C} .\]
Party $A$ uses this state as the source of randomness and picks the unitaries from the $\epsilon$-randomizing set as per this superposition. That is, $A$ applies the global unitary,
\[ U = \sum_{i=1}^{N} (U_{i})_{A^{n}}\otimes |i\rangle\langle i|_{C} \; \in \cU(\cH_{A}^{\otimes n}\otimes \cH_{C}) .\]
 Assuming the systems start with $n$ copies of the state $|\psi\rangle_{RAB}$, the global state after the action of this unitary is given by
\[ |\phi^{(n)}\rangle_{R^{n}A^{n}CB^{n}} = (I_{R^{n}B^{n}} \otimes U)|\psi\rangle_{RAB}^{\otimes n} . \]
Tracing out over system $C$ amounts to an averaging over the unitaries $\{U_{i}\}$. Therefore, the reduced state $\phi_{R^{n}A^{n}}$ is $\epsilon$-close to a product state: $\phi_{R^{n}A^{n}} \approx \phi_{R^{n}}\otimes \phi_{A^{n}}$.

The next step is for $A$ to teleport system $C$ to party $B$. This requires $A$ and $B$ to share an entangled state of Schmidt rank $K$, given by,
\[ K = |C| = 2^{nI(A:R)}.\]
Thus, the global state is simply a purification of $\phi_{R^{n}A^{n}}$, and party $B$ now holds the remainder of the state, namely the reduced state on systems $C$ and $B^{n}$.

Consider the product state $\phi_{R^{n}}\otimes \phi_{A^{n}}$, which we know is $\epsilon$-close to $\rho_{R^{n}A^{n}}$ in fidelity. This product state can be purified by taking a tensor product of purifications of $\phi_{R^{n}}$ and $\phi_{A^{n}}$. Since the reference system is unaffected by the protocol, the reduced state $\phi_{R^{n}}$ is still the same as the initial reduced state on the reference system, and is therefore purified by the original state $|\psi\rangle_{R\hat{A}\hat{B}}^{\otimes n}$. Further, from the typicality principle, we can assume without loss of generality that  $\phi_{A^{n}} = \frac{I_{A_{0}^{n}}}{d_{A_{0}^{n}}}$, which is purified by a maximally entangled state of rank $L = d_{A_{0}^{n}} = 2^{nS(\rho_{A})}$. Thus, we have,
\[ \phi_{R^{n}}\otimes\phi_{A^{n}} = \tr_{\hat{A}^{n}\hat{B}^{n}}[(|\psi\rangle_{R\hat{A}\hat{B}}\langle\psi|)^{\otimes n}] \otimes \tr_{B_{0}}[|\Phi_{L}\rangle_{A_{0}B_{0}}\langle\Phi_{L}|] . \]

Since the reduced states $\phi_{R^{n}A^{n}}$ and $\phi_{R^{n}}\otimes\phi_{A^{n}}$ are $\epsilon$-close in fidelity, there exists an isometry $V: B^{n}C \rightarrow \hat{A}^{n}\hat{B}^{n}B_{0}$ such that the corresponding purifications are $\epsilon$-close in fidelity. Therefore, the final step of the protocol is that $B$ applies the isometry $V$, to obtain the final state
\[ (I\otimes V) |\phi\rangle^{\otimes n}_{R^{n}A^{n}CB^{n}} \approx |\psi\rangle^{\otimes n}_{R\hat{A}\hat{B}}\otimes |\Phi_{L}\rangle_{A_{0}B_{0}}, \]
whose fidelity with the desired final state is greater than $1-\epsilon$.
\end{proof}

Thus, the protocol achieves state-merging using entangled states of Schmidt rank $I(A:R)$, and leaves behind an entangled state between $A_{0}$ and $B_{0}$ of rank $S(\rho_{A})$. The net entanglement used up in the protocol is thus
\begin{eqnarray}
 I(A:R)_{\rho} - S(\rho_{A}) &=& S(\rho_{R}) - S(\rho_{AR}) \nonumber \\
 &=& S(\rho_{AB}) - S(\rho_{B}) \equiv S(A|B)_{\rho},
 \end{eqnarray}
where the second equality follows from the total state $|\psi\rangle_{RAB}$ being a pure state. In situations where $I(A:R)_{\rho} > S(\rho_{A})$, the conditional entropy $S(A|B)_{\rho} >0$ and there is a net loss of entanglement. However, when $I(A:R)_{\rho} < S(\rho_{A})$, the conditional entropy\index{quantum! conditional entropy} $S(A|B)$ is negative and there is a net gain in entanglement! Thus, if we start with a certain amount of entanglement initially, the protocol eventually creates some entanglement via local operations and classical communication only~\cite{HOW2}.

\cleardoublepage
\phantomsection
\addcontentsline{toc}{chapter}{Index}
\printindex

\bibliographystyle{alpha}

\begin{thebibliography}{ABCD99}
\addcontentsline{toc}{chapter}{Bibliography}

\bibitem[AW]{AW} R.~Ahlswede and A.~Winter, {\em Strong converse for identification via quantum channels}, IEEE Transactions on Information Theory, 48(3):569--579, 2002.

\bibitem[Ble]{Ble92} D.~Blecher, {\em Tensor products of operator
  spaces $II$}. Canadian J. Math., 44, 75-90, 1992.
\bibitem[BP]{BP91} D.~Blecher and V.~Paulsen, {\em Tensor products of operator spaces}. J. Funct. Anal., 99, 262-292, 1991.
\bibitem[BS]{BS} S.~Beigi and P.~Shor, {\em On the complexity of computing zero-error and Holevo capacity of quantum channels}, ArXiv preprint arXiv:0709.2090, 2007.
\bibitem[BV1]{BV1} S.~Boyd and L.~Vandenberghe, {\em Convex Optimization}, Cambridge University Press (2004).
\bibitem[BV2]{BV2} S.~Boyd and L.~Vandenberghe, {\em Semidefinite programming}, SIAM review, 38(1): 49-95, 1996.

\bibitem[BW]{BW92} C.~H.~Bennett and S.~J.Wiesner, {\em Communication via one-and two-particle operators on Einstein-Podolsky-Rosen states}. Physical Review Letters, 69, 2881, 1992.
\bibitem[CCA]{CCA} T.~S.~Cubitt, J.~Chen and A.~W.~Harrow, {\em Superactivation of the Asymptotic Zero-Error Classical Capacity of a Quantum Channel}, IEEE Transactions on Information Theory, 57(12):8114--8126, 2011.
\bibitem[CE1]{CE77} M-D Choi and E.~Effros, {\em Injectivity and
  operator spaces}.  J. Funct. Anal., 24, 156-209, 1977.


\bibitem[CE2]{CE78} M-D Choi and E.~Effros, {\em Nuclear
  $C^*$-algebras and the approximation property}. Amer. J. Math, 100,  61-79, 1978.
 \bibitem[CLMW]{CLMW} T.~S.~Cubitt, D.~Leung, W.~Matthews, and A.~Winter, {\em Improving zero-error classical communication with entanglement}, Physical Review Letters, 104, 230503 (2010).

\bibitem[DLT]{DLT} D.~P.~Divincenzo, D.~W.~Leung and B~.M~.Terhal, {\em Quantum data hiding}, IEEE Transactions on Information Theory, 48(3):580--598, 2002.
\bibitem[DMR]{DMR} K.~R.~Davidson, L.~W.~Marcoux and H.~Radjavi, {\em Transitive spaces of operators}, Integral Equations and Operator Theory, 61:187--210, 2008.
\bibitem[Duan]{Duan} R. Duan, {\em Super-activation of zero-error capacity of noisy quantum channels}, ArXiv preprint arXiv:0906.2527, 2009.
\bibitem[DSW]{DSW} R.~Duan, S.~Severini and A.~Winter, {Zero-error communication via quantum channels, non-commutative graphs, and a quantum Lov\'asz number}, IEEE Transactions on Information Theory, 59:1164--1174, 2013.


\bibitem[ER1]{ER88} E.~Effros and Z-J  Ruan, {\em On matricially
  normed spaces}. Pacific J. Math., 132, 243-264, 1988.
\bibitem[ER2]{ER91} E.~Effros and Z-J  Ruan, {\em A new approach
  to operator spaces}. Canadian Math. Bull., 34, 329-337, 1991.
\bibitem[ER3]{ER93} E.~Effros and Z-J  Ruan, {\em On the abstract
  characterization of operator spaces}. Proc. Amer. Math. Soc., 119,
  579-584, 1993.
\bibitem[ER4]{ER00} E.~Effros and Z-J  Ruan, {\em Operator
  Spaces}. LMS Monographs, New Series 23. Oxford University Press,
  2000.


\bibitem[FKPT1]{fkpt_dg} D. Farenick, A.~S. Kavruk, V.~I. Paulsen and I.~G. Todorov, {\em Operator systems from discrete groups}, arXiv:1209.1152 (2012).

\bibitem[FKPT2]{fkpt} D. Farenick, A.~S. Kavruk, V.~I. Paulsen and I.~G. Todorov, {Characterisations of the weak expectation property}, arXiv:1307.1055 (2013).


\bibitem[FP]{fp} D.~Farenick and V.~I.~Paulsen, {\em Operator
  system quotients of matrix algebras and their tensor
  products}, Math. Scand. 111, 210-243, 2012. 

\bibitem[Fritz]{fritz} T. Fritz, {\em Operator system structures on the unital direct sum of C*-algebras}, arXiv:1011.1247 (2010).


\bibitem[GPW]{GPW}B.~Groisman, S.~Popescu and A.~Winter, {\em Quantum, classical, and total amount of correlations in a quantum state}, Physical Review A, 72:032317 (2005).

\bibitem[Han]{Han} K.~H.~Han, {\em On maximal tenor products and
  quotient maps of operator systems}. J. Math. Anal. Appl., 384(2), 375–386 (2011).

\bibitem[Harris]{Harris} J.~Harris, {\em Algebraic geometry: a first course}, Springer-Verlag, New York, 1992.
\bibitem[HH]{HH} A~.S~.Holevo, {\em Information-theoretical aspects of quantum measurement}, Problemy Peredachi Informatsii, 9(2):31-42, 1973;  C.~W.~Helstrom, {\em Quantum detection and estimation theory}, Academic Press, New York, 1976.

\bibitem[HLSW]{HLSW} P.~Hayden, D.~Leung, P.~W.~Shor and A.~Winter, {\em Randomizing quantum states: Constructions and applications}, Communications in Mathematical Physics, 250:371-391, 2004.

\bibitem[HP1]{HP11} K.~H.~Han and V.~Paulsen,  {\em An approximation
  theorem for nuclear operator systems}. J. Funct. Anal., 261,
  999-1009, 2011.


\bibitem[Hor1]{Hor1} M.~Horodecki, P.~Horodecki and R.~Horodecki, {\em Separability of mixed states: necessary and sufficient conditions}, Physics Letters A, 223(1), 1--8, 1996.
\bibitem[Hor2]{Hor2} R.~Horodecki, P.~Horodecki, M.~Horodecki and K.~Horodecki, {\em Quantum Entanglement}, Reviews of Modern Physics, 81, 865, 2009.
\bibitem[HOW1]{HOW1} M.~Horodecki, J.~Openheim and A.~Winter, {\em Partial Quantum Information}, Nature, 436:673--676, 2005.
\bibitem[HOW2]{HOW2} M.~Horodecki, J.~Openheim and A.~Winter, {\em Quantum state merging and negative information}, Communications in Mathematical Physics, 269:107--136, 2007.


\bibitem[HJPW]{haydenpetz} P.~Hayden, R.~Jozsa, D.~Petz, and A.~Winter,
{\em Structure of states which satisfy strong subadditivity of quantum
  entropy with equality}, Communications in Mathematical Physics, 246(2):359--374, 2004.

\bibitem[JNPPSW]{jnppsw} M.~Junge, M.~ Navascues, C.~Palazuelos, D.~Perez-Garcia, V.~B.~Scholtz and R.~F. Werner, {\em Connes' embedding problem and Tsirelson's problem}, J. Math. Physics 52, 012102, 2011.

\bibitem[Kav]{kavruk2011}A.~S. Kavruk, {\em Nuclearity related properties in operator systems}, preprint arXiv:1107.2133 (2011).

\bibitem[KPTT1]{KPTT11} A.~Kavruk,  V.~Paulsen, I.~Todorov and M.~Tomforde. {\em
  Tensor Products of Operator Systems}. J. Funct. Anal., 261, 267-299,
  2011.
\bibitem[KPTT2]{KPTT10} A.~Kavruk, V.~Paulsen, I.~Todorov and
  M.~Tomforde. {\em Quotients, Exactness and Nuclearity in the
    Operator System Category}, Adv. Math. 235, 321-360, 2013.



\bibitem[Kir1]{Kir93} E.~Kirchberg, {\em On non-semisplit extensions,
  tensor products and exactness of group
  $C^*$-algebras}. Invent. Math., 112, 449-489, 1993.
\bibitem[Kir2]{Kir94} E.~Kirchberg, {\em Commutants of unitaries in
  UHF algebras and functorial properties of exactness}. J. Reine Angew. Math., 452, 39-77. 1994.

\bibitem[KI]{koashi_imoto} M.~Koashi and N.~Imoto, {\em Operations that do not disturb partially known quantum states},
Phys. Rev. A, 66:022318, Aug 2002.

\bibitem[KL]{KLdistance} S.~Kullback and R.~A. Leibler, {\em On information and sufficiency}, The Annals of Mathematical Statistics, 22(1):79--86, 1951.

\bibitem[KRP1]{KRPbook1} K.~R.~Parthasarathy, {\em Coding theorems of classical and quantum Information theory}, Hindustan Book Agency, 2013.
\bibitem[KRP2] {KRPbook2} K.~R.~Parthasarathy, {\em Lectures on Quantum Computation, Quantum Error Correcting Codes and Information Theory}, Tata Institute of Fundamental Research, 2006.

\bibitem[KRP05]{KRP05} K.~R.~Parthasarathy, {\em Extremal quantum states in coupled systems},
Annales de l'Institut Henri Poincare (B) Probability and Statistics, 41(3),
257--268
(2005).


\bibitem[LR]{liebruskai} E.~H. Lieb and M.~B. Ruskai,
{\em Proof of the strong subadditivity of quantum-mechanical entropy},
Journal of Mathematical Physics, 14:1938--1941, 1973.

\bibitem[Lin75]{lindblad1975} G.~Lindblad,
{\em Completely positive maps and entropy inequalities},
\newblock {\em Communications in Mathematical Physics}, 40(2):147--151, 1975.

\bibitem[Lin99]{lindblad} G.~Lindblad,
{\em A general no-cloning theorem},
Letters in Mathematical Physics, 47(2):189--196, 1999.

\bibitem[Lov79]{Lov79} L.~Lov{\'a}sz, {\em On the Shannon capacity of a graph}, IEEE Transactions on Information Theory, 25:1, 1--7, 1979.

\bibitem[Med]{Med}R.~.A.~C.~ Medeiros, R.~Alleaume, G.~Cohen and F.~M.~de Assis, {\em Quantum states characterization for the zero-error capacity}, ArXiv pre-print: quant-ph/0611042 (2006).

\bibitem[NC00]{NCbook} M.~A. Nielsen and I.~L. Chuang, {\it Quantum Computation and Quantum Information}, Cambridge University Press, (2000).

\bibitem[Pau]{Pau03} V.~Paulsen, {\em Completely Bounded Maps and
  Operator Algebras}. Cambridge University Press, February 2003.

\bibitem[Peres]{Peres} A.~Peres, {\em Separability criterion for density matrices}, Physical Review Letters, 77, 1413, 1996.

\bibitem[PPS]{PPS} V.~Paulsen, S.~C.~Power and R.~Smith, {\em Schur
  products and matrix completions}. J. Funct. Anal., 85, 151-178, 1989.

\bibitem[Petz86]{petz1986} D.~Petz, {\em Sufficient subalgebras and the relative entropy of states of a von
  Neumann algebra}, Communications in mathematical physics, 105(1):123--131, 1986.

\bibitem[Petz88]{petz1988} D.~Petz, {\em Sufficiency of channels over von Neumann algebras},
 The Quarterly Journal of Mathematics, 39(1):97, 1988.

\bibitem[Pis1]{Pis96} G.~Pisier, {\em A simple proof of a theorem of
  Kirchberg and related results on $C^∗$ -norms}, J. Operator Theory,
  35, 317–335, 1996.

\bibitem[Pis2]{Pis03} G.~Pisier,  {\em Introduction to Operator
  Space Theory}. LMS Lecture Note Series 294, Cambridge University
  Press, 2003.
\bibitem[Pis3]{Pis11} G.~Pisier, {\em Grothendieck's Theore, past and
  present}, arXiv:1101.4195v2 [math.FA]

\bibitem[PS]{PS} G.~Pisier and D.~Shlyakhtenko, {\em Grothendieck's
  theorem for operator spaces}, Invent. Math., 150 (1), 185-217, 2002.

\bibitem[PS2]{PS2} V.~I.~Paulsen and F.~Shultz, {\em Complete positivity of the map from a basis to its dual basis}, Journal of Mathematical Physics 54, 072201, 2013.

\bibitem[PTT]{PTT09} V.~I. Paulsen, I.~G. Todorov, and M.~Tomforde, {\em Operator system structures on ordered spaces}, 
 Proc. Lond. Math. Soc., 102(1), 25-49 (2011).  

\bibitem[Rua1]{Rua87} Z-J Ruan,  PhD Thesis, UCLA, 1987.
\bibitem[Rua2]{Rua88} Z-J Ruan, Subspaces of $C^*$-algebras,
  J. Funct. Anal., 76, 217--230, 1988.

\bibitem[Rus]{ruskai2002} M.~B.~Ruskai, {\em Inequalities for quantum entropy: A review with conditions for
  equality}, Journal of Mathematical Physics, 43:4358, 2002.
\bibitem[SDP]{SDP} L.~Vandenberghe and S.~Boyd, {\rm Semidefinite Programming}, SIAM Review, 38(1):49-95 (1996).
\bibitem[SW]{SW} B.~Schumacher and M.~.D.~Westmoreland, {\em Approximate quantum error correction}, Quantum Information Processing, 1, 5-12 (2002).
\bibitem[Sch]{Sch} B.~Schumacher, {\em Quantum coding}, Physical Review A, 51(4):2738, 1995.

\bibitem[Tak]{takesaki}M.~Takesaki, {\em Theory of operator algebras I}, Springer Verlag, 1979.
\bibitem[TH]{TH} B.~M.~Terhal and P.~Horodecki, {\em Schmidt number for density matrices}, Physical Review A, 61(4), 040301, 2000.
\bibitem[Tro]{tropp} J.~A.~Tropp, {\em User-friendly tail bounds for sums of random matrices}, Foundations of Computational Mathematics, 12(4):389--434, 2012.

\bibitem[Tsi1]{tsirelson1980} B.~S.~Tsirelson, {\em Quantum generalizations of {B}ell's inequality}, Lett. Math. Phys., 4:4, 93-100, 1980.

\bibitem[Tsi2]{tsirelson1993} B.~S.~Tsirelson, {\em Some results and problems on quantum {B}ell-type inequalities}, Hadronic J. Suppl., 8:4, 329-345, 1993.




\bibitem[Uhl]{uhlmann} A.~Uhlmann, {\em Relative entropy and the Wigner-Yanase-Dyson-Lieb concavity in an
  interpolation theory}, Communications in Mathematical Physics, 54(1):21--32, 1977.

\bibitem[Ume]{Ume} H.~Umegaki, {\em Conditional expectation in an operator algebra. IV. Entropy and
  information}, Kodai Mathematical Journal, 14(2):59--85, 1962.
\bibitem[Wer]{Wer} R.~F.~Werner, {\em Quantum states with Einstein-Podolsky-Rosen correlations admitting a hidden-variable model}, Phys. Rev. A 40, 4277, 1989.

\bibitem[VPBook]{vpbook} V.~I. Paulsen, {\em Completely Bounded Maps and Operator  Algebras}, Cambridge University Press, 2002.


\end{thebibliography}

\end{document}